\newcommand{\defeq}{\vcentcolon=}
\newcommand{\twist}[3]{\langle #1\,\, #2\,\, #3 \rangle}
\newcommand{\Twist}[1]{\langle #1 \rangle}
\newcommand{\orth}[1]{#1^\perp}
\newcommand{\Ext}{\mathrm{Ext}}
\newcommand{\BCFW}{\mathrm{BCFW}}
\newcommand{\Gr}[2]{\mathrm{Gr}_{#1, #2 }}
\newcommand{\Grnon}[2]{\mathrm{Gr}^{\geq }_{#1, #2 }}
\newcommand{\OG}[2]{\mathrm{OG}_{#1, #2 }}
\newcommand{\OGnon}[2]{\mathrm{OG}^{\geq }_{#1, #2 }}
\newcommand{\fracless}[2]{\genfrac{}{}{0pt}{}{#1}{#2}}
\newcommand*{\vertbar}{\rule[-1ex]{0.5pt}{2.5ex}}
\newcommand*{\horzbar}{\rule[.5ex]{2.5ex}{0.5pt}}
\NewDocumentCommand{\evalat}{sO{\big}mm}{
  \IfBooleanTF{#1}
   {\mleft. #3 \mright|_{#4}}
   {#3#2|_{#4}}
}
\def\moverlay{\mathpalette\mov@rlay}
\def\mov@rlay#1#2{\leavevmode\vtop{
   \baselineskip\z@skip \lineskiplimit-\maxdimen
   \ialign{\hfil$\m@th#1##$\hfil\cr#2\crcr}}}
\newcommand{\charfusion}[3][\mathord]{
    #1{\ifx#1\mathop\vphantom{#2}\fi
        \mathpalette\mov@rlay{#2\cr#3}
      }
    \ifx#1\mathop\expandafter\displaylimits\fi}
\newcommand{\bigcupdot}{\charfusion[\mathop]{\bigcup}{\cdot}}
\begin{document}

\title{The BCFW Tiling of the ABJM Amplituhedron}
\author{Michael Oren Perlstein\thanks{Department of Mathematics, Weizmann Institute of Science, \texttt{michael.oren-perlstein@weizmann.ac.il}}
\and Ran J.~Tessler\thanks{Department of Mathematics, Weizmann Institute of Science, \texttt{ran.tessler@weizmann.ac.il}}}
\date{December 2025}

\maketitle
\newtheorem{thm}{Theorem}[subsection]
\newtheorem{lem}[thm]{Lemma}
\newtheorem{claim}[thm]{Claim}
\newtheorem{prop}[thm]{Proposition}
\newtheorem{obs}[thm]{Observation}
\newtheorem{coro}[thm]{Corollary}
\newtheorem{rmk}[thm]{Remark}
\newtheorem{con}[thm]{Cconjecture}
\newtheorem*{con*}{Cconjecture}

\theoremstyle{definition}

\newtheorem{dfn}[thm]{Definition}

\newtheorem{exm}[thm]{Example}

\begin{abstract}
The orthogonal momentum amplituhedron \(\mathcal{O}_k\) was introduced simultaneously in 2021 by Huang, Kojima, Wen, and Zhang in \cite{Huang_2022}, and by He, Kuo, Zhang in \cite{he2022momentum}, in the study of scattering amplitudes of ABJM theory. It was conjectured that it admits a decomposition into BCFW cells. We prove this conjecture.  
\end{abstract}

\tableofcontents

\section{Introduction}
The \emph{(tree) amplituhedron} is a geometric space that was introduced in 2013 by Arkani-Hamed and Trnka \cite{Arkani_Hamed_2014} in their study of scattering amplitudes in quantum field theories, specifically planar \(\mathcal N = 4\) supersymmetric Yang-Mills theory (SYM). It was conjectured to admit a decomposition into images of \emph{BCFW positroid cells}, a conjecture proven by \cite{amptriag,even2023cluster}. The motivation for this conjecture, and to some extent the motivation for defining the amplituhedron itself, came from physics: This decomposition is the geometric manifestation of the BCFW recursions \cite{britto2005direct,britto2005new} for planar $\mathcal{N}=4$ SYM.

The \emph{orthogonal momentum amplituhedron}, or the \emph{ABJM amplituhedron} was introduced  simultaneously in 2021 by Huang, Kojima, Wen, and Zhang in \cite{Huang_2022}, and by He, Kuo, and Zhang in \cite{he2022momentum},
as a space that should encode the scattering amplitudes for $\mathcal{N}=6$ Aharony-Bergman-Jafferis-Maldacena (ABJM) \cite{abjm1}, following an earlier Grassmannian picture \cite{PosOG,BCFWrec}.  Based on the earlier picture, it was conjectured that the ABJM amplituhedron also admits a decomposition into images of BCFW orthitroid cells, which are the ABJM analogue of BCFW positroid cells \cite{BCFWrec}. 
This manuscript reviews the definition of the latter objects and proves the conjecture.

\subsection{A Speed of Light Review of Positive Grassmannians, the Positive Orthogonal Grassmannian and the ABJM Amplituhedron}
We start with a concise review of the non-negative Grassmannian, its orthogonal cousin, and the ABJM amplituhedron. We refer the reader to Appendix~\ref{app:OG} for a more complete review.

The (real) Grassmannian $\Gr{k}{n}$ is the space of $k$-dimensional linear subspaces of $\mathbb{R}^n.$ We represent an element $V\in \Gr{k}{n}$ as the row-span of a \(k\times n\) non-degenerate matrix, and will often identify them in writing for brevity. A natural coordinate system on this space is given by the \emph{Pl\"ucker coordinates}. If $C$ is a matrix representative of $V\in\Gr{k}{n},$ that is, a $k\times n$ matrix whose rows span $V,$ we define $\Delta_I(C),$ for $I\in\binom{[n]}{k}$, as the minor whose columns are indexed $I.$ While each coordinate separately depends on the choice of $C,$ the collection of all coordinates, for $I\in\binom{[n]}{k},$ depends on $C$ only up to a common scaling, thus giving rise to \emph{projective coordinates}. We refer to them as Pl\"ucker coordinates, and sometimes abuse notation and refer also to the maximal minors of the matrices as the matrices' Pl\"ucker coordinates. The \emph{non-negative Grassmannian} \cite{postnikov} $\Grnon{k}{n}$ is the subspace of $\Gr{k}{n}$ made of vector spaces with a representative with all non-negative Pl\"ucker coordinates. In his seminal work, Postnikov \cite{postnikov} had proved that this space is a stratified space, where each stratum, called a \emph{positroid cell}, is the subspace defined by the vanishing of a certain collection of Pl\"ucker coordinates. He showed that the strata are homeomorphic to open balls. He also found several ways to label the different strata by various combinatorial objects, which include \emph{plabic graphs} and \emph{decorated permutations}. 

The theory of the non-negative Grassmannian is a particularly nice instance of Lusztig's theory of positivity for algebraic groups and partial flag varieties \cite{lusztig1994total}. It was further developed by Rietsch, Marsh, Fomin, Zelevinsky, Postnikov, and others \cite{rietsch1998total,rietsch2006closure,marsh2004parametrizations,fomin1999double, fomin2002cluster, fomin2003cluster,postnikov}.
The positive Grassmannian was the subject of many researches in cluster algebras, tropical geometry, integrable systems, and recently also scattering amplitudes  \cite{speyer2005tropical, kodama2013combinatorics, kodama2014kp, lukowski2023positive, speyer2021positive,arkanihamed2014scattering,Arkani_Hamed_2014}. 

In order to define the ABJM amplituhedron, we need to define the orthogonal group analog of the positive Grassmannian. There are various equivalent definitions, we will use the one given by Galashin and Pylyavskyy \cite{OG_source}: the orthogonal Grassmannian $\OG{k}{2k}$ is the space of is the space of $k$-dimensional linear subspaces of $\mathbb{R}^{2k}$ that are self-orthogonal with respect to the inner product defined by \(\eta\), a \(2k\times 2k\) diagonal matrix with alternating \(\pm 1\) on the diagonal. The non-negative orthogonal Grassmannian \(\OGnon{k}{2k}\) is then similarly defined as the subset of \(\OGnon{k}{2k}\) with all non-negative Pl\"ucker coordinates. This object is less studied than its $GL_n$ cousin, and we refer the reader to \cite{OG_source,PosOG,kim_lee,companion} and Section~\ref{sec:background} for further reading. Importantly, this is also a stratified space, and will refer to its strata as \emph{orthitroid cells}. 

Orthitroid cells are specified by a collection of vanishing Pl\"ucker coordinates and are in bijection with several combinatorial objects, e.g. fixed-point–free involutions on \([2k]\) and \(k\)-OG graphs, that is four-regular graphs embedded in a disc with \(2k\) cyclically-labeled external legs (see Appendix~\ref{app:OGgraphs} for details). The orthitroid cell corresponding to an OG graph \(\Gamma\) will be labeled \(\Omega_\Gamma\). Moreover, an orientation of the OG graph gives rise to a parameterization of the orthitroid cell (\ref{def:param}). Such graphs can be built inductively using \emph{local moves}:
\emph{The \(\mathrm{Inc}\) move}, which adds two new external vertices to the boundary of the disk and connects them with an edge, and  \emph{the \(\mathrm{Rot}\) move}, which adds an internal vertex by braiding two adjacent external legs connecting to the boundary of the disk. Another useful operation is \emph{the \(\mathrm{Cyc}\) move}, which rotates the graph inside the disk (see Appendix~\ref{app:local moves} for details).

We now briefly define the ABJM amplituhedron (\cite{Huang_2022,he2022momentum}). For a fixed \(2k\times (2k+2)\) matrix \(\Lambda\) the \emph{orthogonal momentum (ABJM) amplituhedron} \(\mathcal{O}_k (\Lambda)\subset \Gr{k}{k+2}\) is defined as the image of \(\OGnon{k}{2k}\) under the map \(C \mapsto C\Lambda\), when we again identify an element of the Grassmannian with its matrix representative. See Section~\ref{sec:The ABJM Amplituhedron and Its Natural Coordinates} for a precise definition. The geometry of the ABJM amplituhedron is conjectured to be independent of \(\Lambda\). For this reason we we will occasionally omit \(\Lambda\) from the notations. 
The work \cite{companion} will survey and summarize many basic properties of the non-negative orthogonal Grassmannian and the ABJM amplituhedron.

In Section~\ref{sec:bcfw} we define a particularly nice collection of orthitroid cells in $\OGnon{k}{2k}$. These cells can be defined in a recursive manner, and can be labeled nicely via \emph{trees of triangles}. We denote this collection by $\BCFW_k,$ and refer to them as \emph{($k$-)BCFW cells}. We refer the reader to Section~\ref{sec:bcfw} for their graphical and recursive descriptions.

\begin{figure}[H]
    \centering
                \begin{center}
\begin{tikzpicture}[scale = 0.8]
\draw (0,0) circle (2cm);

\draw(0.517638, -1.93185)node[anchor=north]{\(1\)}--(0.517638, 1.93185)node[anchor=south]{\(6\)};

\draw(1.41421, -1.41421)node[anchor=north]{\(2\)}--(-1.93185, 0.517638)node[anchor=east]{\(9\)};

\draw(1.93185, -0.517638)node[anchor=west]{\(3\)}--(-0.517638, -1.93185)node[anchor=north]{\(12\)};

\draw(1.93185, 0.517638)node[anchor=west]{\(4\)}--(-0.517638, 1.93185)node[anchor=south]{\(7\)};

\draw(1.41421, 1.41421)node[anchor=south]{\(5\)}--(-1.93185, -0.517638)node[anchor=east]{\(10\)};

\draw(-1.41421, 1.41421)node[anchor=south]{\(8\)}--(-1.41421, -1.41421)node[anchor=north]{\(11\)};

\filldraw[blue] (0,0) circle (1pt);

\draw[blue](0.643951, -1.11536)--(0,0);
\filldraw[blue] (0.643951, -1.11536) circle (1pt);

\draw[blue](0.135868, -1.99538)--(0.643951, -1.11536);
\filldraw[blue] (0.135868, -1.99538) circle (1pt);

\draw[blue](1.66012, -1.11536)--(0.643951, -1.11536);
\filldraw[blue] (1.66012, -1.11536) circle (1pt);

\draw[blue](-1.2879,0)--(0,0);
\filldraw[blue] (-1.2879,0) circle (1pt);

\draw[blue](-1.79598, 0.880025)--(-1.2879,0);
\filldraw[blue] (-1.79598, 0.880025) circle (1pt);

\draw[blue](-1.79598, -0.880025)--(-1.2879,0);
\filldraw[blue] (-1.79598, -0.880025) circle (1pt);

\draw[blue](0.643951, 1.11536)--(0,0);
\filldraw[blue] (0.643951, 1.11536) circle (1pt);

\draw[blue](0.135868, 1.99538)--(0.643951, 1.11536);
\filldraw[blue] (0.135868, 1.99538) circle (1pt);

\draw[blue](1.66012, 1.11536)--(0.643951, 1.11536);
\filldraw[blue] (1.66012, 1.11536) circle (1pt);

\end{tikzpicture}
\end{center}
    \caption{a BCFW graph with its triangle tree superimposed in blue}
    \label{fig:tirag tree}
\end{figure}

\subsection{Tilings and BCFW Tilings}
We follow the definition in \cite{m2triangulation} of \emph{triangulations} or \emph{tilings}, generalized to meet our needs.
\begin{dfn}
Let $X,Y,M$ be topological spaces, and let $F:X\times M\to Y$ be a continuous function. For $m\in M,$ write $F_m$ for $F(-,m).$
We say that the images of a collection of subspaces $S_1,\ldots, S_N\subseteq X$ \emph{tile}, or \emph{triangulate} $Y$ for a given $m,$ if the following conditions are met
\begin{itemize}
\item \emph{Injectivity}: $S_i \to F_m(S_i)$ is injective for all $i.$
\item \emph{Separation}: $F_m(S_i)$ and $F_m(S_j)$ are disjoint for every two $i\neq j.$
\item \emph{Surjectivity}: $\bigcup_{i\in[N]} F_m(S_i)$ is an open dense subset of $\text{Im}(F_m)$.
\end{itemize}
We say that the images of $S_1,\ldots, S_N$ tile $Y_m$ for all $m$ if...well, if they tile $Y_m$ for all $m\in M.$
\end{dfn}
Arkani-Hamed and Trnka \cite{Arkani_Hamed_2014} conjectured, following the Grassmannian picture of planar $\mathcal{N}=4$ SYM \cite{arkanihamed2014scattering}, that the images of a certain collection of positroid cells, the (SYM) BCFW cells tile $\mathcal{A}_{n,k,4}(Z)$ for every positive $Z,$ a proof was later found in \cite{amptriag,even2023cluster}.

The work \cite{BCFWrec} suggests a Grassmannian realization of the BCFW recursion for calculating ABJM amplitudes. In analogy to $\mathcal{N}=4$ SYM case, this realization was translated in ABJM-amplituhedron means to the following conjecture.
\begin{con*}
    For every $k\geq 3,$ the images of orthitroid BCFW cells $\Omega_\Gamma,~\Gamma\in\BCFW_k$ (see Section~\ref{sec:bcfw}), tile $\mathcal{O}_k(\Lambda)$, for every $2k\times(k+2)$ matrix $\Lambda$ with all positive maximal minors.
\end{con*} 

\subsection{Main Results}
Our main results are the following two theorems.
\begin{thm}\label{thm:inj}
For every positive $\Lambda\in \mathrm{Mat}^>_{2k \times (k+2)}$ the BCFW cells $\Omega\in\BCFW_k$ map injectively to the amplituhedron. Moreover, two BCFW cells whose closures share a common codimension $1$ boundary are locally separated near that boundary.    
\end{thm}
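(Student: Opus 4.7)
The plan is to argue by induction on the number of triangles in the triangle tree of $\Gamma$, equivalently on $k$. Every $\Gamma\in\BCFW_k$ is obtained from a smaller $\Gamma'\in\BCFW_{k'}$ (with $k'<k$) by attaching a new triangle via a fixed sequence of local moves $\mathrm{Inc}$, $\mathrm{Rot}$, and, if needed, $\mathrm{Cyc}$. These moves introduce a bounded number of new ``BCFW parameters'' $\mathbf{t}$ into the orientation-parametrization (Definition~\ref{def:param}) of $\Omega_{\Gamma}$, on top of the parameters inherited from $\Omega_{\Gamma'}$. The goal is to show that $\mathbf{t}$ is uniquely determined by the image $Y=C\Lambda$, so that recovering $C$ from $Y$ reduces to recovering the smaller matrix $C'$ from an auxiliary image $Y'=C'\tilde\Lambda$ for an appropriately shifted $\tilde\Lambda$.

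First I would compute, move by move, the explicit effect of $\mathrm{Inc}$, $\mathrm{Rot}$, $\mathrm{Cyc}$ on the matrix $C$. Composing these effects produces a factorization $C=\sigma\bigl(C'\cdot U(\mathbf{t})\bigr)$, where $U(\mathbf{t})$ is an explicit $2k'\times 2k$ ``BCFW matrix'' depending polynomially on $\mathbf t$ and $\sigma$ is the permutation induced by the Cyc moves. Multiplying by $\Lambda$ yields
\[
C\Lambda \;=\; C' \cdot \bigl(U(\mathbf t)\,\sigma^{-1}\Lambda\bigr) \;=\; C' \tilde\Lambda(\mathbf t;\Lambda),
\]
which identifies the image point as coming from the smaller cell $\Omega_{\Gamma'}$ with a shifted matrix $\tilde\Lambda$. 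The new parameters $\mathbf t$ are then recovered from $Y=C\Lambda$ by evaluating a small family of functionaries $\phi_j$ built from Plücker brackets of the form $\Twist{\cdots}$ or $\twist{\cdot}{\cdot}{\cdot}$ on $Y$. The target identity is $\phi_j(Y)=P_j(C')\cdot Q_j(\Lambda)\cdot t_j$, with $P_j$ and $Q_j$ explicit expressions that are strictly positive on $\Omega_{\Gamma'}$ and on the positive cone of $\Lambda$ respectively. Thus $\mathbf t$ is determined as an explicit rational function of $Y$. Once the key positivity lemma that $\tilde\Lambda(\mathbf t;\Lambda)$ still has strictly positive maximal minors is in place, the inductive hypothesis applied to $\Gamma'$ and $\tilde\Lambda$ recovers $C'$, and hence $C$.

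For the local separation claim, two BCFW cells $\Omega_{\Gamma_1},\Omega_{\Gamma_2}$ whose closures meet in codimension one correspond to trees of triangles related by a single flip. Up to a cyclic relabeling this flip can be placed at the ``last triangle'' in the induction, so it alters exactly one BCFW parameter $t_\ast$, and the corresponding functionary $\phi_\ast$ vanishes on the common boundary. On the interiors of $\Omega_{\Gamma_1}$ and $\Omega_{\Gamma_2}$ the value $\phi_\ast(Y)$ has strictly opposite signs, which forbids a point in the image of one cell from coinciding with a point in the image of the other arbitrarily close to the boundary.

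The main technical obstacle is the positivity statement for the BCFW shift $\Lambda\mapsto\tilde\Lambda(\mathbf t;\Lambda)$, on which the induction rests. In the $\mathcal{N}=4$ SYM amplituhedron the analogous statement is already delicate; here, additionally, the orthogonality constraint $C\eta C^T=0$ couples the rows of $C$ and forces quadratic identities among the functionaries $\phi_j$, so the shift must be engineered to be compatible with the non-negative orthogonal structure. I expect the cleanest route is to factor the shift into elementary positivity-preserving steps---a hyperbolic rotation pairing the two newly introduced columns of $\Lambda$, together with column swaps and simple eliminations---each of whose effect on maximal minors can be controlled by explicit $\eta$-weighted sign rules.
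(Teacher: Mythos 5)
Your high-level strategy -- induction via the recursive structure of BCFW graphs, absorbing the new parameters into a shifted $\Lambda$, and recovering those parameters from functionals of $Y$ built from twistor brackets -- is indeed the paper's approach (the $\mathrm{Arc}$-move recursion of Proposition~\ref{prop:BCFW arcs} together with the promotion machinery of Section~\ref{sec:prom}). But there are two places where your sketch misidentifies where the real difficulty lies, and one of these is a genuine gap.

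The decisive gap is your assertion that the new parameters $\mathbf t$ are recovered as \emph{rational} functions of $Y$, via $\phi_j(Y)=P_j(C')\,Q_j(\Lambda)\,t_j$. This cannot work in the ABJM setting. The orthogonality constraint $C\eta C^\intercal=0$ means that when you peel off a new triangle, the associated vector in $C$ supported on a length-$4$ window is pinned down by two linear conditions from $\mathbf v\,\lambda^\intercal=0$ together with one \emph{quadratic} condition $\mathbf v\,\eta\,\mathbf v^\intercal=0$. Solving this system produces two candidate vectors $\mathbf v_\pm$ differing by the sign of $\sqrt{S_I}$ for a Mandelstam variable $S_I$; globally the inversion has degree $2^{k-2}$, not $1$, and becomes single-valued only after one establishes (i) that the radicand is strictly positive on the relevant cell, and (ii) which branch lands in $\OGnon{k}{2k}$. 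The paper does (ii) by an explicit computation in the $k=3$ base case (Lemma~\ref{triag}, comparing $\Delta_{\{1,3,5\}}$ against $\Delta_{\{2,4,6\}}$), and then shows this choice is preserved under $\mathrm{Arc}$-promotion; it does (i) through the entire ``positively-radical'' analysis in Section~\ref{sec:positively rad}, which reduces to a non-vanishing lemma for $\twist{Y}{i}{i+1}$ on $\OGnon{3}{6}$ and then promotes. Your remark that the quadratic identities ``couple the rows'' and that the shift must be ``compatible with the orthogonal structure'' gestures at the issue but does not contain the two essential ingredients: the sign-selection argument and the positivity of the radicands.

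Conversely, the item you flag as the main technical obstacle -- showing that the shifted matrix $\tilde\Lambda(\mathbf t;\Lambda)$ stays in $\mathrm{Mat}^>$ -- turns out to be the easy part. The shift is a composition of $\mathrm{Inc}^{-1}$ and $\mathrm{Rot}^{-1}$ moves; $\mathrm{Rot}^{-1}_{i,i+1}(\alpha)$ with $\alpha>0$ acts on $\Lambda$ as multiplication by a positive hyperbolic rotation and clearly preserves positivity of maximal minors (Definition~\ref{def:rot of lambda}), while $\mathrm{Inc}^{-1}$ preserves positivity by a one-line Plücker expansion (Proposition~\ref{prop:inc plu}, Corollary~\ref{coro:inc pos}). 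Finally, your local-separation sketch -- one functionary $\phi_\ast$ with strictly opposite signs on the two cells -- is in the right spirit but again underestimates the ABJM subtlety: boundary-defining functions here do \emph{not} have a fixed sign on the whole image of a BCFW cell (this is flagged explicitly in the introduction as a departure from the SYM case). The paper's Theorem~\ref{thm:local sep} therefore works with the derivatives of \emph{two} vertex-separators $S_\pm$ transverse to the boundary, proves they have opposite ratio via a connectedness-and-base-case argument (Lemma~\ref{lem:sign is const}, Proposition~\ref{prop:sep sign}), and derives separation only on a neighborhood of the boundary from Lemma~\ref{lem:local sep form connectedness}.
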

For the accurate statements see Theorem~\ref{BCFW inj} and Theorem~\ref{thm:local sep}. 
In Section~\ref{sec:immanant_pos} we define, following an idea of Galashin \cite{origami}, a collection \emph{strongly positive matrices} $\Lambda,$ which form an open subset of the set of all $2k\times(k+2)$ matrices. For them we can say much more.
\begin{thm}\label{thm:tilings}
    For every $k$, and every strongly positive $\Lambda$ the images of the BCFW cells $\Omega\in\BCFW_k$ tile the ABJM amplituhedron. 
\end{thm}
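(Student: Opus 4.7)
The conditions required for tiling are injectivity, separation, and surjectivity. Injectivity of each BCFW cell under $C\mapsto C\Lambda$ is provided by Theorem~\ref{thm:inj}, so the substantive tasks are global separation of the cell images and density of their union in $\mathcal{O}_k(\Lambda)$. I would approach both by induction on $k$, exploiting the recursive construction of $\BCFW_k$ via the local moves $\mathrm{Inc}$, $\mathrm{Rot}$, and $\mathrm{Cyc}$. The base case (small $k$) should be a low-dimensional verification by hand; the inductive step would rest on an amplituhedron-level version of the BCFW recursion, namely a decomposition of $\mathcal{O}_k(\Lambda)$ into strata whose preimages correspond to the recursive building blocks of BCFW trees of triangles. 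Applying the inductive hypothesis on strictly smaller orthogonal amplituhedra requires that the effective matrices $\Lambda'$ arising in the recursion remain strongly positive; this is precisely where the origami-style construction referenced in Section~\ref{sec:immanant_pos} (following Galashin) is expected to do the heavy lifting.

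For global separation I would upgrade the local statement of Theorem~\ref{thm:inj} via a dimension argument: each BCFW cell is top-dimensional in $\mathcal{O}_k(\Lambda)$, so its image is open, and any pair of overlapping images would, by a limit argument along the overlap, approach a shared codimension-one boundary, contradicting local separation. For surjectivity I would run a boundary-cancellation argument. For each $\Gamma\in\BCFW_k$ I would classify the codimension-one boundary strata of $\overline{\Omega_\Gamma}$ into \emph{external} boundaries, whose image lies in $\partial\mathcal{O}_k(\Lambda)$, and \emph{internal} boundaries, each paired with the closure of exactly one other BCFW cell via an exchange move on the tree of triangles. Once this pairing is verified, the union $U=\bigcup_{\Gamma}F_\Lambda(\overline{\Omega_\Gamma})$ is closed in $\overline{\mathcal{O}_k(\Lambda)}$ with topological boundary contained in $\partial\mathcal{O}_k(\Lambda)$; combined with the fact that $U$ already contains an open subset of $\mathcal{O}_k(\Lambda)$ by the dimension count, the connectedness of $\mathcal{O}_k(\Lambda)$ forces $U=\overline{\mathcal{O}_k(\Lambda)}$, establishing density.

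The main obstacle will be the combinatorial bookkeeping in the boundary-cancellation step: classifying each codimension-one boundary as external or internal, exhibiting the exchange-move pairing on internal boundaries, and — crucially using strong positivity of $\Lambda$ — ruling out the possibility that an "external" boundary migrates into the interior of $\mathcal{O}_k(\Lambda)$. A secondary but equally delicate difficulty is ensuring that the recursive decomposition of $\mathcal{O}_k(\Lambda)$ yields smaller effective matrices that are themselves strongly positive, so that the induction is genuinely available. I expect this is where the origami-positivity technology carries most of the weight and where the real technical core of the proof resides.
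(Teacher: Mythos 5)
Your overall architecture — induction on $k$, the internal/external dichotomy of codimension-one boundaries, boundary cancellation plus connectedness for density, and preservation of strong positivity under the recursion — matches the paper's (see Corollary~\ref{coro: codim 1 boudnary classification}, Theorem~\ref{thm external boundaries}, Corollary~\ref{coro:imm pos L closed}). But your argument for global separation has a genuine gap. Local separation (Theorem~\ref{thm:local sep}) only constrains behavior near a shared codimension-one boundary; it cannot rule out a globally consistent multiple cover. Concretely, the glued space $S=\bigcup_{\Gamma\in\BCFW_k}\Omega_\Gamma\cup(\text{internal boundaries})$ could map to $\mathcal{O}_k(\Lambda)$ as a degree-$2$ covering: locally injective everywhere, including across every internal boundary, with overlapping images that never "approach" a common codimension-one stratum in the sense your limit argument requires. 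So the dimension/limit argument never produces a contradiction, and the proposal never actually proves separation.

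The paper resolves this by splitting the question. Proposition~\ref{prop:const_deg}, via the connectedness Lemma~\ref{lem:connectedness_in_complement_stratification} and the degree criterion Proposition~\ref{prop:criterion_for_positive_const_deg} (which also supplies the Morse–Sard handling of codimension-$\ge 2$ corners that your "$\partial U\subset\partial\mathcal{O}_k(\Lambda)$" claim quietly needs), shows the amplituhedron map has \emph{constant degree} on the union of BCFW cells and their internal boundaries. Proposition~\ref{prop:deg_1} then shows this constant is $1$ by a concrete induction, not a generic "recursive decomposition of $\mathcal{O}_k$": restrict to the zero locus of $S_{\{3,4,5\}}$, note by Proposition~\ref{prop:S345 arc} that this locus is swept out by the $\Ext_k$ cells with an external $\{3,4,5\}$ arc, identify each such cell as $\mathrm{Arc}_{3,3}$ of a $\BCFW_{k-1}$ cell (Lemma~\ref{lem:deg_1_promotes_deg_1}) so that disjointness at level $k-1$ forces a unique preimage there, and finally use Lemma~\ref{lem:corners_dont cover_codim1} to ensure the count on that stratum is not polluted by corners. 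Your proposal needs the degree machinery, or a substitute for it, to turn local separation into a global statement; as written it does not close.
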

This theorem is proven in Section~\ref{sec:tiling}.
In addition, there are several other central results concerning the the structure of boundary strata of BCFW cells which can be found is Section~\ref{sec:BCFW Cells and their Boundaries}. These results allow us to prove that all orthitroid cells that are boundary strata of BCFW cells map injectively to the amplituhedron in Section~\ref{sec:inj}  (see Theorem~\ref{BCFW inj} for the precise statement).
\subsection{Relation with Existing Literature}
The motivation for the definition of the ABJM amplituhedron came from its two cousins, the original amplituhedron \cite{Arkani_Hamed_2014} of Arkani-Hamed and Trnka, and the momentum amplituhedron \cite{momAmpli} defined by Damgaard, Ferro, Lukowski, Tomasz and Parisi. The ABJM amplituhedron was also studied in \cite{lukowski2022geometry,lukowski2023momentum,he2023abjm}.
The BCFW tiling conjecture for the original amplituhedron was proven in \cite{amptriag,even2023cluster}, by by Even-Zohar, Lakrec, and the second named author, and Even-Zohar, Lakrec, Parisi, Sherman-Bennet, Williams and the second named author. For the momentum amplituhedron it was proven by Galashin in \cite{origami}.
In \cite{origami}, as in this paper, the conjecture is proven under the slight simplification of requiring stronger positivity requirements on the external data, which is in our case $\Lambda.$ In Galashin's work and in this work this extra condition has the same origin, the need to guarantee that Mandelstam variables will not change sign. Still, our approach is closer to that of \cite{amptriag}, which relies on the notion of \emph{promotions}. Both this work and \cite{amptriag} works follow a similar high-level strategy, but the different geometries require disparate treatments. 
The common, to some extent, strategy is:
\begin{itemize}
\item \emph{Injectivity}: Both works construct the BCFW cells using iterations of simple operations, and use promotions to show that the resulting cells map injectively. 
\item \emph{Separation}: In \cite{amptriag} the separation was proven by showing it in simple base cases, and then showing it is preserved under the promotions. Here we do the same, but for \emph{local separation}, showing that BCFW cells which share a common codimension $1$ boundary are locally separated along this boundary. 
\item \emph{Surjectivity} then follows from a topological argument. In our case it is more entangled, since the separation is only known to be local at the time we apply the (refined) topological argument.
\end{itemize}
The main differences between the proofs come from the different geometry and the weaker positivity in the ABJM setting.
\begin{itemize}
\item The injectivity in \cite{amptriag} morally comes from the fact that the algebraic degree of the map from (the complexified) BCFW cells to the amplituhedron is of degree $1.$ In this work the map is of degree $2^{k-2},$ and non-trivially becomes $1$ only upon restricting to the positive real part.
\item In \cite{amptriag,even2023cluster}, promotion involved only rational functions, while in the ABJM has functions involving square roots (see Appendix~\ref{app:local moves} and Definition~\ref{def:arc}).
\item The recursive structure of the cells is different.
\item Boundary defining functions, that is, functions whose zero loci define the boundaries of BCFW cells or the whole amplituhedron, may not have a definite sign on the image of the whole cell or the amplituhedron, respectively. For this we first show local separation, and also the topological argument for surjectivity becomes more complex.
\item It is more intricate to treat the external boundaries of the whole amplituhedron. For this we introduce the strongly positive matrices, and carefully study them.
\end{itemize}
An additional source of difficulties is that the orthogonal Grassmannian is less studied than its $GL_n$ cousin, which sometimes requires finding or developing technical bypasses.

An amusing fact regarding the different promotions appearing in the $\mathcal{N}=4$ SYM picture and here, is that in SYM they are related to intersection of planes, while here we need to intersect planes and spheres. One can think of these two scenarios as constructions which use only straightedge and constructions which use straightedge and compass.
\subsection{Plan of the Paper}
This paper is organized as follows. Section~\ref{sec:background} reviews the basic definitions and results regarding the BCFW graphs and the ABJM amplituhedron.

A key technical tool we develop here, following an $\mathcal{N}=4$ Super Yang Mills analog, called \emph{promotion}, which is roughly speaking an amplituhedron-friendly way to manipulate orthitroid cells and functions. This is the subject of Section~\ref{sec:prom}. 

In Section~\ref{sec:BCFW Cells and their Boundaries} we examine BCFW cells and their codimension \(1\) boundary cells in greater detail. We describe their structure and combinatorial properties, and provide a characterization of the different types of codimension \(1\) boundary cells associated with BCFW cells. 

Section~\ref{sec:inj} proves that BCFW cells map injectively to the ABJM amplituhedron, and Section~\ref{separation section} shows that they are locally separated in the sense of Theorem~\ref{thm:inj}. Thus, these two sections together provide a proof for this theorem.

In Section~\ref{sec:immanant_pos} we restrict to strongly positive matrices, and prove that they give rise to \emph{non-negative Mandelstam variables}. We also study boundaries of the ABJM amplituhedron for strongly positive $\Lambda.$ Finally, in Section~\ref{sec:tiling} we prove Theorem~\ref{thm:tilings}.

A review of the basic notions of the theory of the non-negative orthogonal Grassmannian can be found in Appendix~\ref{app:OG}. Appendix~\ref{apx:calcs} contains a few necessary technical calculations as well as some practical examples of the tools and notions developed in this paper.  
\subsection*{Acknowledgments}
The authors were supported by the ISF grants no.~335/19 and 1729/23. 

R.T.~thanks Evgeniya Akhmedova, Chaim Even-Zohar, Yoel Groman, Song He, Tsviqa Lakrec, Matteo Parisi, Melissa Sherman-Bennett, and Lauren Williams for helpful discussions related to this work.

\section{BCFW Graphs and the ABJM Amplituhedron}

\label{sec:background}
Here we review key background on the ABJM amplituhedron and the BCFW graphs relevant to ABJM theory.

\subsection{BCFW Graphs}\label{sec:bcfw}
\begin{dfn}[\cite{BCFWrec}, Section~4.4]
    The \emph{BCFW graphs} are defined recursively as follows:

    \underline{\textbf{Base case:}}
    The only BCFW graph for \(k=2\) is:

    \begin{center}
\begin{tikzpicture}[scale = 0.5]
\draw (0,0) circle (2cm);
\draw(-1.41421, 1.41421)node[anchor=south]{\(3\)}--(1.41421, -1.41421)node[anchor=north]{\(1\)};
\draw(-1.41421, -1.41421)node[anchor=north]{\(4\)}--(1.41421, 1.41421)node[anchor=south]{\(2\)};
\end{tikzpicture}
\end{center}

\underline{\textbf{Recursion:}}

For \(k>1\) the BCFW graphs are constructed from two previous BCFW graphs, \(L\) and \(R\) with \(2k_L\) and \(2k_R\) external vertices resp., such that \(k=k_L+k_R -1\). The new graph is as follows:

\begin{center}
\begin{tikzpicture}[scale = 0.6,every node/.style={scale=0.8}]
\draw (0,0) circle (3cm);

\filldraw[lightgray] (-1.25,1/2) circle (1cm);
\node[scale=3] (c) at (-1.25,1/2)  {\(L\)};
\draw (-1.25,1/2) circle (1cm);

\filldraw[lightgray] (1.25,1/2) circle (1cm);
\node[scale=3] (c) at (1.25,1/2)  {\(R\)};
\draw (1.25,1/2) circle (1cm);

\draw(0.25,1/2)--(-0.25,1/2);

\draw (1.75, -0.366025)--(2.59808, -1.5);
\draw (1.25, 1.5)--(1.25, 2.72718);

\draw (-1.75, -0.366025)--(-2.59808, -1.5);
\draw (-1.25, 1.5)--(-1.25, 2.72718);

\filldraw[black] (2.44828, 1.01411) circle (2pt);
\filldraw[black] (2.02305, 1.55234) circle (2pt);
\filldraw[black] (2.57776, 0.339368) circle (2pt);

\filldraw[black] (-2.44828, 1.01411) circle (2pt);
\filldraw[black] (-2.02305, 1.55234) circle (2pt);
\filldraw[black] (-2.57776, 0.339368) circle (2pt);

\draw (-0.75, -0.366025)node[anchor=north]{\(2k_L\)} --(1.5, -2.59808)node[anchor=north]{\(1\)};
;

\draw (0.75, -0.366025)  --(-1.5, -2.59808)node[anchor=north]{\(2k\)};
\draw (0.75, -0.366025)node[anchor=north]{\(1\)};

\end{tikzpicture}
\end{center}

Denote the set of cells represented by BCFW graphs with \(2k\) external vertices as \(\BCFW_k\). A BCFW graph with \(2k\) external vertices will be called a \(k\)-BCFW graph.

\end{dfn}
 
\begin{dfn}
   Let \(G\) be a planar graph embedded in a disk with \(k\) external vertices. Consider external edges are composed of two external half edges, and supose the half edges are numbered \(1\) to \(2k\) counter-clockwise. We will call \(G\) a \emph{disk graph}. We define the \emph{medial graph} \(M(G)\) of \(G\) as follows (this is essentially the same as a regular medial graph but we take specific care of the embedding disk):
   \begin{enumerate}
        
       \item The internal vertices of \(M(G)\) are in bijection with the internal edges of \(G\).
       \item On an external half edge of \(G\) that is numbered \(i\) add a vertex of \(M(G)\) and number number it \(i\).
       \item Let \(f\) be a face of \(G\), and \(v\) vertex of \(G\) in \(f\). \(v\) is incident to two edges \(e_1\) and \(e_2\) of \(G\) in \(f\). For each such \(f\) and \(v\) in \(f\), add an edge in \(M(G)\) between the vertices corresponding to \(e_1\) and \(e_2\) in \(M(G)\). If \(e_1\) or \(e_2\) are external and thus correspond to two vertices in \(M(G)\), use the vertices that correspond to the half edges incident to \(v\) (see Figure~\ref{fig:tirag tree}).
       
   \end{enumerate}
\end{dfn}
\begin{prop}[\cite{companion}]
\label{prop:trigonometric from disk graph}
    For \(\Gamma\) an OG graph with \(2k\) external vertices, a choice of of trigonometric orientation is equivalent to a choice of a disk graph \(G\) with \(k\) external vertices such that \(M(G)=  \Gamma\).
\end{prop}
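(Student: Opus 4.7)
The plan is to construct mutually inverse explicit maps between the two sides, using the observation that a 4-regular planar graph admits a canonical 2-coloring of its faces up to swap, and that a trigonometric orientation at a 4-valent vertex amounts to a pairing of the four incident edges into two ``straight-through'' pairs.

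For the forward direction, I would start with a disk graph $G$ satisfying $M(G) = \Gamma$ and build the trigonometric orientation vertex-by-vertex. Each internal vertex $v$ of $\Gamma$ arises, by the medial construction, from a unique internal edge $e = uw$ of $G$; the four medial edges at $v$ split naturally into two pairs, one pair lying in the small neighborhood of $u$ and the other in the neighborhood of $w$ (equivalently, one pair per face of $G$ incident to $e$). This local pairing is precisely the data of an orientation at $v$, and at external legs the orientation is forced by the cyclic boundary labeling. One has to verify that the resulting assignment satisfies whatever global compatibility the trigonometric condition demands, which follows from the planarity of $G$ and the fact that the medial construction respects the embedding in the disk.

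For the backward direction, given a trigonometric orientation of $\Gamma$, I would read off at each internal vertex the induced pairing of its four incident edges and use these pairings to define a 2-coloring of the faces of $\Gamma$: two faces meeting at a vertex receive the \emph{same} color exactly when they are in the same pair of that vertex's splitting, and two faces sharing an edge receive opposite colors. Fixing one color class to be the class of a specific boundary face pins down the coloring; then let the vertices of $G$ be the faces of that color, the edges of $G$ be the vertices of $\Gamma$, and transfer the labels $1,\dots,2k$ from the external legs of $\Gamma$ to the external half-edges of $G$. A direct check shows $M(G) = \Gamma$ and that the two constructions are inverse to each other, since each determines and is determined by the same local pairing at every 4-valent vertex.

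The main obstacle will be verifying that the local pairings supplied by a trigonometric orientation propagate to a \emph{globally} consistent 2-coloring of the faces of $\Gamma$ — i.e., that as one traverses a closed path of faces, the prescribed colorings agree when one returns. This requires unpacking what ``trigonometric orientation'' demands at adjacent vertices and checking that the axioms force the two local splittings along a common edge to be compatible; once this is established, the remainder is bookkeeping to ensure that the cyclic ordering of external half-edges of $G$ matches the cyclic labeling of the boundary of $\Gamma$ and that $G$ is indeed a disk graph in the sense required.
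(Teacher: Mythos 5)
The paper does not print a proof of this proposition --- it is attributed to \cite{companion} --- so there is no in-text argument to compare your attempt against. Judged on its own, the high-level plan (trigonometric orientations $\leftrightarrow$ face $2$-colorings of $\Gamma$ $\leftrightarrow$ choices of disk graph $G$) is the right one, but the local step at each vertex is misidentified, and that is exactly where the proof has to do its work. At the medial vertex $v_e$ arising from an internal edge $e=uw$ of $G$, the four incident medial half-edges correspond to the flags $(F_1,u),(F_1,w),(F_2,w),(F_2,u)$ in cyclic order, where $F_1,F_2$ are the two faces of $G$ bordering $e$. The ``near-$u$ / near-$w$'' partition and the ``one pair per face'' partition are the two \emph{different} adjacent-pair decompositions of these four, so the parenthetical ``equivalently'' in your forward direction is false; more to the point, a trigonometric orientation at $v_e$ selects neither of them, since alternation of in/out forces the in-pair to be one of the two \emph{opposite} pairs $\{(F_1,u),(F_2,w)\}$ or $\{(F_1,w),(F_2,u)\}$. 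Asserting that your adjacent pairing ``is precisely the data of an orientation'' therefore skips the actual translation rule --- for instance, orient each edge of $\Gamma$ so that the face of $\Gamma$ containing a $G$-vertex lies on a fixed side --- and that rule is what would need to be stated and verified.

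The backward direction has the mirror image of the same gap. The sentence ``two faces meeting at a vertex receive the same color exactly when they are in the same pair of that vertex's splitting'' does not parse: the splitting you built partitions the four \emph{edges} at $v$, while each of the four \emph{faces} at $v$ borders one edge from each opposite pair, so no face lies ``in a pair.'' The coherent rule is to color a face according to whether the trigonometric orientation runs clockwise or counterclockwise around its boundary. Once that is written down, the consistency you flag as ``the main obstacle'' is genuinely short --- alternation at each vertex together with one-head-one-tail on each edge forces the sense of rotation to persist as you walk around a face, and coherence of the resulting $2$-coloring then follows because the disk is simply connected --- but as it stands you have neither stated the correct coloring rule nor carried out that check, so the proposal does not yet produce a well-defined $G$ with $M(G)=\Gamma$.
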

\begin{dfn}
 \label{def:ToT}

    An OG graph $\Gamma$ is a \emph{tree of triangles (or ToT)} if there exist a disk graph \(G\) with \(\Gamma = M(G)\), such that \(G\) is a tree contained in a disk with all 3-regular vertices. We will call \(G\) the \emph{tree of triangles} or \emph{triangle tree} of \(\Gamma\) and write \(\Delta(\Gamma)\) (see Figure 1).

    Note as a graph and its dual both have the same medial graph this is not necessarily well-defined. However in all but one case at most one of them would be a tree.

    An orthitroid cell corresponding to a ToT graph (rep. BCFW graph) will be called a \emph{ToT cell} (resp. \emph{BCFW cell}).
\end{dfn}
\begin{thm}[\cite{PosOG}, Section~4.4, \cite{companion}] 
\label{thm:BCFW are trees}
    For \(k\geq 3\) the BCFW graphs are ToT graphs such that the tree of triangles has a vertex between the \(1\) and \(2k\) external half edges.

\end{thm}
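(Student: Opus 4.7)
I would proceed by strong induction on $k$, exhibiting the triangle tree $\Delta(\Gamma)$ of each BCFW graph $\Gamma$ explicitly from the recursive construction of $\Gamma$. The claim has two parts: (i) $\Gamma$ is a ToT, i.e.\ there exists a disk graph $\Delta(\Gamma)$ which is a tree with all $3$-regular internal vertices and with $M(\Delta(\Gamma)) = \Gamma$; and (ii) $\Delta(\Gamma)$ has an internal vertex adjacent to the arc of the disk boundary between the external half-edges $1$ and $2k$.

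For the base case $k=3$, the unique $3$-BCFW graph is produced by gluing two copies of the $k=2$ graph via the recursion. I would draw it out, exhibit a candidate triangle tree in the shape of a ``Y'' with a single $3$-regular internal vertex and three edges exiting to the disk boundary, check directly that its medial graph is $\Gamma_3$, and observe that the single internal vertex lies between external half-edges $1$ and $6$.

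For the inductive step with $k\geq 4$, suppose $\Gamma$ is built from $L\in \BCFW_{k_L}$ and $R\in \BCFW_{k_R}$ with $k_L+k_R-1=k$; we handle the $k_L=2$ (or $k_R=2$) sub-case by treating the ``triangle tree'' of a $2$-BCFW graph as a single edge across the sub-disk. By induction we have triangle trees $\Delta(L)$ and $\Delta(R)$ with $M(\Delta(L))=L$ and $M(\Delta(R))=R$. I would construct $\Delta(\Gamma)$ by placing $\Delta(L)$ and $\Delta(R)$ in the left and right sub-regions of the big disk, mirroring the placement of $L$ and $R$ in the recursion picture, and inserting a new internal vertex $v_0$ in the lower central region connected by three edges: one leading into $\Delta(L)$ at the internal vertex which previously carried the boundary half-edge of $\Delta(L)$ corresponding to the external edge of $L$ that is consumed by the BCFW combining; one leading into $\Delta(R)$ analogously; and a third edge exiting through the arc of the big disk boundary lying between the labels $1$ and $2k$. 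Gluing two trees through a single new vertex produces a tree, all existing internal vertices remain $3$-regular, $v_0$ itself is $3$-regular, and $v_0$ is adjacent to the arc between $1$ and $2k$ by construction, giving (ii) immediately.

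The hard part is the verification $M(\Delta(\Gamma))=\Gamma$. Because the medial graph construction is local --- its internal vertices correspond to the edges of the disk graph, and its edges come from the corners at each (face, vertex) pair --- the restrictions of $M(\Delta(\Gamma))$ to the sub-regions containing $\Delta(L)$ and $\Delta(R)$ reproduce $L$ and $R$ by induction. The delicate piece is the neighbourhood of $v_0$: the three edges incident to $v_0$ must, via the medial construction, reproduce exactly the extra structure of the BCFW recursion (the single new crossing of the two ``extending'' edges and the horizontal internal edge between $L$ and $R$, together with the appropriate incidences with the pre-existing pieces of $L$ and $R$). I would carry out this check by drawing the local picture at $v_0$ and comparing it corner by corner with the recursion figure; alternatively, using Proposition~\ref{prop:trigonometric from disk graph}, I would rephrase the problem as exhibiting a trigonometric orientation of $\Gamma$ whose corresponding disk graph is the candidate $\Delta(\Gamma)$, and track how trigonometric orientations behave under the BCFW gluing --- a combinatorially cleaner route, since trigonometric orientations pull apart neatly under the recursion.
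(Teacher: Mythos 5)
The paper cites \cite{PosOG} and \cite{companion} for this statement and gives no inline proof, so I can only evaluate your argument on its own terms. Your strategy — strong induction on $k$ tracking the recursive BCFW construction, with the triangle tree built by gluing $\Delta(L)$ and $\Delta(R)$ through a new trivalent vertex $v_0$ — is the natural one, and the counting checks out: $\Delta(\Gamma)$ acquires $(k_L-2)+(k_R-2)+1=k-2$ trivalent vertices, $(2k_L-3)+(2k_R-3)+1=2k-3$ edges, and $(k_L-1)+(k_R-1)+1=k$ leaves, exactly as required for a $k$-ToT. Gluing two trees at a single new vertex gives a tree, and $v_0$'s third leg lands on the boundary arc between $1$ and $2k$, so part~(ii) is immediate. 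Your handling of $k_L=2$ (the degenerate single-edge ``tree'') is also correct and does yield the base-case $Y$ when $k_L=k_R=2$.

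Where I would press you for more precision: your description of which leaf of $\Delta(L)$ gets rerouted to $v_0$ (``the boundary half-edge of $\Delta(L)$ corresponding to the external edge of $L$ that is consumed by the BCFW combining'') hides the essential use of the inductive hypothesis. In the BCFW recursion the consumed half-edge of $L$ is $L$'s leg $1$ (the one routed to the horizontal bridge) and $L$'s leg $2k_L$ is the one routed to $\Gamma$'s new external $1$; these two are cyclically adjacent, and the leaf of $\Delta(L)$ that must be rerouted to $v_0$ is precisely the one lying between $L$'s half-edges $2k_L$ and $1$. That this leaf \emph{exists} is exactly clause~(ii) of the inductive hypothesis for $L$ (and likewise for $R$). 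Without it, there would be no leaf in the right boundary arc to absorb into $v_0$ and the gluing would not produce a valid disk graph with $M(\Delta(\Gamma))=\Gamma$. So~(ii) is not a cosmetic normalization carried along by the induction; it is load-bearing, and you should say so explicitly. Once that is in place, the medial-graph check at $v_0$ you defer to a ``corner-by-corner'' verification does go through: the three crossings surrounding $v_0$ are the new bottom crossing (on $e_3$) and the two pre-existing crossings of $L$ and $R$ on the rerouted leaf edges (on $e_1,e_2$), connected by the three strands $L$'s leg $2k_L$, $R$'s leg $1$, and the strand through the horizontal bridge — reproducing exactly the local BCFW gluing picture. Your alternative route via Proposition~\ref{prop:trigonometric from disk graph} is also reasonable, though it trades one bookkeeping exercise for another.
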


\begin{prop}[\cite{companion}]
\label{prop:ToT reduced}
    ToT graphs with \(2k\) external vertices are reduced and contains no arcs of the form \(\{i,i+1\}\) (considered mod \(2k\)).
\end{prop}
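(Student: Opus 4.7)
The plan is to prove both parts by induction on $k$, using the recursive structure of the underlying tree of triangles $G$. The base case $k=3$ is the unique $Y$-shaped tree with one $3$-regular internal vertex and three leaves; direct computation shows $M(G)$ has three internal vertices and involution $(1\,4)(2\,5)(3\,6)$ up to cyclic relabeling, confirming both reducedness (matching the expected dimension $2k-3=3$) and the absence of arcs $\{i,i+1\}$.

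For the inductive step with $k\geq 4$, I would observe that the subgraph of $G$ spanned by its non-leaf vertices is a tree with at least two vertices and hence has a leaf of its own: an internal vertex $v_0$ of $G$ adjacent to exactly two leaves of $G$. Contracting the ``cherry'' at $v_0$ --- removing $v_0$ with its two adjacent leaves and extending the remaining edge at $v_0$ into a new external edge ending at a new leaf --- produces a smaller tree of triangles $G'$ with $k-1$ leaves, to which the inductive hypothesis applies, yielding that $\Gamma'=M(G')$ is reduced with no adjacent arc.

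Next I would identify $\Gamma=M(G)$ as obtained from $\Gamma'$ by an $\mathrm{Inc}$ followed by a $\mathrm{Rot}$ applied at the external vertex of $\Gamma'$ corresponding to the new leaf of $G'$, up to a $\mathrm{Cyc}$ rotation. Since this composition adds exactly one internal vertex and two external vertices, the internal-vertex count of $M(G)$ is $(2(k-1)-3)+1=2k-3$, which matches the expected dimension of the associated orthitroid cell and so preserves reducedness. For the no-adjacent-arcs claim, a bare $\mathrm{Inc}$ would create a direct arc between its two inserted external vertices, but the subsequent $\mathrm{Rot}$ braids them with adjacent legs inherited from $\Gamma'$, re-routing the two new strands to endpoints that are non-adjacent on the boundary. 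Combined with the inductive absence of adjacent arcs in $\Gamma'$, this gives the no-adjacent-arcs property for $\Gamma$.

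The main obstacle will be precisely carrying out the last verification: one must track how the $\mathrm{Inc}$ and $\mathrm{Rot}$ moves shift the cyclic labels of the external vertices and how they transform the involution $\pi_{\Gamma'}\to\pi_\Gamma$, then run a case analysis showing $\pi_\Gamma(i)\neq i\pm 1$ for every $i$. This bookkeeping, together with the correct identification of the cherry-expansion as $\mathrm{Inc}+\mathrm{Rot}$, is the technical heart of the argument.
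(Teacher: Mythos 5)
The paper only cites \cite{companion} for this proposition, so there is no in-text proof to compare against; I evaluate your argument on its merits. Your high-level strategy — induct on $k$ by contracting a cherry of the trivalent tree $G$ — is exactly the right idea and matches the structure the paper itself exploits (Figure~\ref{fig:arc triag} and Proposition~\ref{prop:BCFW arcs}). However, there is a genuine gap in the inductive step.

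You identify the cherry expansion with ``$\mathrm{Inc}$ followed by a $\mathrm{Rot}$'', which adds one internal vertex, and you then write $(2(k-1)-3)+1=2k-3$. That identity is false: the left side is $2k-4$. The discrepancy is the symptom of a wrong identification. A ToT graph on $2k$ external vertices has $2k-3$ internal vertices, so passing from $k-1$ to $k$ leaves must add \emph{two} internal vertices, and indeed the cherry expansion is $\mathrm{Arc}_{4,\ell}=\mathrm{Rot}_{\ell+2,\ell+3}\mathrm{Rot}_{\ell+1,\ell+2}\mathrm{Inc}_{\ell}$ (one $\mathrm{Inc}$ and \emph{two} $\mathrm{Rot}$'s), as in Definition~\ref{def:arc} and Figure~\ref{fig:arc triag}, not $\mathrm{Arc}_{3,\cdot}$. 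Beyond the arithmetic, the reducedness argument you give is circular as stated: the number of internal vertices of a graph is an \emph{upper bound} for the cell dimension, with equality precisely when the graph is reduced, so ``matching the expected dimension'' presupposes the conclusion unless you independently compute the crossing number of the induced involution. The clean route is Observation~\ref{obs:reduce arc move} (or Corollary~\ref{reduced move coro}): if $\Gamma'=M(G')$ is reduced and the set $\{\ell,\ell+1\}$ contains no arc of $\Gamma'$, then $\mathrm{Arc}_{4,\ell}(\Gamma')$ is reduced. Notice the hypothesis needed there is exactly your inductive no-adjacent-arcs claim, which is why the two halves of the proposition should be run as a single simultaneous induction rather than reducedness being read off from a vertex count. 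The remaining ``bookkeeping'' — tracking the involution under $\mathrm{Inc}_\ell$ followed by two rotations and verifying $\tau(i)\ne i\pm1$ — is doable and you correctly identify it as the technical core, but the step you set up to do it with is the wrong one and must be replaced by $\mathrm{Arc}_{4,\ell}$ before the verification can go through.
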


\begin{prop}[\cite{companion}]
\label{prop:uniquness of trees}
    Different BCFW graphs are never equivalent. BCFW cells with \(2k\) external vertices for \(k\geq 3\) are in bijection with disk graphs  with \(k\) external vertices and a vertex between the \(2k\) and \(1\) external half edges. 
\end{prop}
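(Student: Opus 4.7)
The plan is to prove both parts of the proposition through an inductive bijection argument on $k$. The forward direction is supplied by Theorem~\ref{thm:BCFW are trees}: a $k$-BCFW graph $\Gamma$ is the medial graph $M(\Delta(\Gamma))$ of its triangle tree, and $\Delta(\Gamma)$ is a disk graph with $k$ external vertices, trivalent at every internal vertex, and with an external vertex between half-edges $2k$ and $1$. For the reverse direction, given such a disk graph $G$, I would single out the external vertex $v$ between $2k$ and $1$ together with its unique internal neighbor $w$; trivalence at $w$ allows one to remove $\{v,w\}$ and decompose $G$ into two subtrees $G_L$ and $G_R$ with $k_L + k_R - 1 = k$, each inheriting a disk graph structure of the same form relative to its boundary. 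Inductively these are the triangle trees of BCFW graphs $L$ and $R$, and assembling them via the BCFW recursion yields a BCFW graph $\Gamma$ with $\Delta(\Gamma)=G$, so the map $G\mapsto M(G)$ is a bijection between the specified class of disk graphs and BCFW graphs.

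For the inequivalence of distinct BCFW graphs, the starting point is Proposition~\ref{prop:ToT reduced}: all ToT graphs are reduced, so two BCFW graphs representing the same orthitroid cell must be related by the move-equivalence of reduced OG graphs (the orthogonal analog of Postnikov's theorem). The plan is to show that the triangle tree $\Delta(\Gamma)$ is an invariant of this equivalence within the ToT class. I would verify this by examining each allowed local move and using Proposition~\ref{prop:trigonometric from disk graph}, which identifies $\Delta$ with a trigonometric orientation: moves preserving the reduced OG graph up to equivalence also preserve this orientation data (or transport it compatibly). Combined with the bijection from the first stage, this implies that distinct BCFW graphs produce distinct cells, and hence BCFW cells are in bijection with the specified disk graphs.

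The main obstacle is this invariance claim: a careful case analysis of the moves on reduced OG graphs is needed, and it may be delicate to confirm that no sequence of moves can interchange two different triangle trees while keeping one inside the ToT class. A cleaner alternative, should the case analysis become unwieldy, is to recover the BCFW recursion intrinsically from the orthitroid cell: from the fixed-point-free involution on $[2k]$ labeling the cell, identify the partition $[2k_L]\sqcup [2k_R]$ produced by the $L/R$ split (equivalently, the root edge of $\Delta(\Gamma)$), and conclude by induction on $k$, with the base case $k=3$ checked by direct enumeration of the (essentially unique) BCFW graph.
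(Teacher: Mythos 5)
This proposition is cited from the companion paper \cite{companion} and is not proved in the present manuscript, so there is no paper proof here to compare your argument against. Judging the proposal on its own terms: the forward half of your bijection (Theorem~\ref{thm:BCFW are trees}) and the recursive reverse construction (split at the leaf between half-edges $2k$ and $1$, apply the BCFW recursion to $G_L$, $G_R$, induct on $k$) are sound, and the "cleaner alternative" you mention at the end — recovering the split $[2k_L]\sqcup[2k_R]$ directly from the fixed-point-free involution $\tau$ and inducting — is in fact the more robust route, since by Proposition~\ref{prop:OG graphs to perm bij} orthitroid cells of reduced graphs are exactly equivalence classes labeled by $\tau$, so "distinct BCFW graphs give distinct cells" reduces cleanly to "distinct triangle trees give distinct permutations."

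The one place to be careful is the main route's invariance claim. As you half-suspect, $\Delta(\Gamma)$ is \emph{not} a naive invariant of the move-equivalence class: applying equivalence move~3 at the medial triangle surrounding a $3$-valent vertex of $G$ produces a distinct reduced graph $\Gamma'$ that is $M$ of a \emph{different} disk graph — the move performs a local primal/dual swap, so the tree changes (already visible for $k=3$ in Figure~\ref{fig:top cell}, where the two equivalent graphs are $M(G)$ and $M(G^*)$). What saves the argument is precisely the side condition "a vertex between half-edges $2k$ and $1$": one must show that among all tree-type disk graphs $G$ with $M(G)$ in a given equivalence class, at most one satisfies this condition. Your sketch ("moves preserving the reduced OG graph up to equivalence also preserve this orientation data") glosses over this and, read literally, is false; you would need to track explicitly how the condition fails after a move~3 flip, which is the case analysis you flag as potentially unwieldy. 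This is exactly why the permutation-based alternative is preferable — it sidesteps move~3 entirely by working at the level of $\tau$ and Proposition~\ref{prop:OG graphs to perm bij}.
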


\subsection{The ABJM Amplituhedron and Its Natural Coordinates}
\label{sec:The ABJM Amplituhedron and Its Natural Coordinates}
We define the ABJM amplituhedron.
    \begin{thm}[\cite{Arkani_Hamed_2014}, Section~4]
    \label{amp map well-defined}   
    Let \(\mathrm{Mat}_{k\times n}^>\) denote the set of \(k\times n\) matrices with all maximal minors being positive.  Let  \(C\) be a matrix representative of an element of \( \Grnon{k}{n}\) and \(\Lambda\in \mathrm{Mat}_{n\times(k+m)}^>\). Then \(C \Lambda\) is of full rank.
    \end{thm}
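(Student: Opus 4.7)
The plan is to reduce the full-rank claim to the triviality of an intersection of two subspaces of \(\mathbb{R}^n\), and then to conclude via Karp's sign-variation criterion for the non-negative Grassmannian.

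First I would observe that \(C\Lambda\) fails to have full rank \(k\) if and only if some nonzero \(a\in\mathbb{R}^k\) satisfies \(aC\Lambda = 0\). Since \(C\) represents an element of \(\Grnon{k}{n}\) it has rank \(k\), so \(v := aC\) is nonzero, and the relation \(v\Lambda = 0\) says that \(v\) lies in the left null space of \(\Lambda\). Thus the statement reduces to
\[\mathrm{rowspan}(C) \cap \mathrm{leftnull}(\Lambda) = \{0\}.\]
Setting \(V = \mathrm{rowspan}(C) \in \Grnon{k}{n}\) and \(W = \mathrm{colspan}(\Lambda)\), this is the same as \(V \cap W^{\perp} = \{0\}\). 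The assumption \(\Lambda \in \mathrm{Mat}^>_{n\times(k+m)}\) is precisely that \(W\) is a totally positive \((k+m)\)-dimensional subspace of \(\mathbb{R}^n\).

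The main input is then Karp's sign-variation criterion. On the one hand, every nonzero \(v \in V\) with \(V \in \Grnon{k}{n}\) has at most \(k-1\) sign changes (in the robust sense needed for vectors with zero coordinates). On the other hand, \(W^{\perp}\) is \emph{alternately} totally positive: after rescaling coordinates by the alternating signs \((+,-,+,-,\ldots)\) it becomes a totally positive element of the Grassmannian of \((n-k-m)\)-planes, so every nonzero vector in \(W^{\perp}\) has at least \(k+m\) sign changes in the original coordinates. Since \(k-1 < k+m\) the two bounds are incompatible, forcing \(V \cap W^{\perp} = \{0\}\), which yields the theorem.

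The main obstacle, if one prefers not to simply cite Karp, is stating the duality between \(W\) and \(W^{\perp}\) precisely: the identity \(\Delta_I(W^{\perp}) = \pm\,\Delta_{[n]\setminus I}(W)\) under the canonical sign convention encodes the alternating twist that converts strict total positivity of \(W\) into the lower sign-variation bound for \(W^{\perp}\). Once this twist is pinned down, the two sign bounds collide and the argument concludes; the only delicate point is handling vectors with zero entries, but Karp's definitions of sign variation are designed exactly for this.
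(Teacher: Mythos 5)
The proof is correct, and it is the standard rigorous argument for this statement. Note that the paper does not prove Theorem~\ref{amp map well-defined} itself — it cites it from Arkani-Hamed–Trnka — so there is no in-paper proof to compare against; your route via Karp's sign-variation criterion is the one that appears in the positive-Grassmannian literature (e.g.~\cite{Karp_2017}, which the paper also cites in a different context). The reduction to $\mathrm{rowspan}(C)\cap\mathrm{colspan}(\Lambda)^\perp=\{0\}$ is exactly right, and the collision of the two sign-variation bounds does the job. This is cleaner than a Cauchy–Binet approach, which is awkward here because the relevant $k\times k$ minors of $\Lambda$ need not have a definite sign even though the maximal $(k+m)\times(k+m)$ minors do.

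One point to tighten: be precise about \emph{which} sign-variation count you use, since you must use the same one on both sides of the collision. Writing $\mathrm{var}(v)$ for the number of actual sign changes in the nonzero entries of $v$, and $\overline{\mathrm{var}}(v)$ for the robust version (zeros assigned adversarial signs), Karp's characterizations are: $V\in\Grnon{k}{n}$ iff $\mathrm{var}(v)\le k-1$ for all nonzero $v\in V$, while the analogous bound with $\overline{\mathrm{var}}$ characterizes the \emph{strictly} positive Grassmannian and can fail for boundary $V$ (e.g.\ $V=\mathrm{span}(1,0)\in\mathrm{Gr}^{\geq}_{1,2}$ has $\overline{\mathrm{var}}(1,0)=1>k-1=0$). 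Your phrase ``in the robust sense'' therefore points at the wrong bound for $V$. The argument that works uses $\mathrm{var}$ throughout: $\mathrm{var}(v)\le k-1$ for nonzero $v\in V$ because $V\in\Grnon{k}{n}$, and $\mathrm{var}(w)\ge k+m$ for nonzero $w\in W^\perp$ because $W\in\mathrm{Gr}^{>0}_{k+m,n}$ (via $\mathrm{var}(w)+\overline{\mathrm{var}}(\mathrm{alt}(w))=n-1$ and $\mathrm{alt}(W^\perp)\in\mathrm{Gr}^{>0}_{n-k-m,n}$, which is where the alternating twist you mention enters). Then $k-1\ge\mathrm{var}(v)\ge k+m$ for a nonzero $v\in V\cap W^\perp$ is the desired contradiction. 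With this adjustment your proof is complete.
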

\begin{dfn}[\cite{Huang_2022}, Section~3.1,\cite{he2022momentum}, Section~3.1]
        For \(\Lambda \in \mathrm{Mat}_{2k\times(k+2)}\),   the \emph{amplituhedron map} \(\widetilde\Lambda:\OGnon{k}{2k}\rightarrow \Gr{k}{k+2}\) is defined by  \(C \mapsto C\Lambda\), which represents a proper element of \(\Gr{k}{k+2}\) by Theorem~\ref{amp map well-defined}.
        The \emph{orthogonal momentum (ABJM) amplituhedron} \(\mathcal{O}_k (\Lambda)\) for \(\Lambda \in \mathrm{Mat}_{2k\times(k+2)}^>\) is defined as the image of \(\OGnon{k}{2k}\) under the amplituhedron map.
\end{dfn}
\begin{thm}[\cite{Huang_2022}, Section~3.1,\cite{he2022momentum}, Section~3.1,\cite{companion}]
    \label{dimension thm}
    The amplituhedron \(\mathcal O_k(\Lambda)\) is of dimension \(2k-3\).
\end{thm}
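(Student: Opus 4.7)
The plan is to bracket $\dim \mathcal{O}_k(\Lambda) = 2k-3$ from both sides, using the BCFW combinatorics for the lower bound and a parameter (or equivalently, rank) count for the upper bound.

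For the lower bound, I would first verify that every BCFW cell $\Omega_\Gamma \subset \OGnon{k}{2k}$ has dimension exactly $2k-3$ by induction on $k$. The base case $k=2$ is the one-dimensional cell pictured in the definition, matching $2\cdot 2 - 3 = 1$. In the inductive step, gluing a $k_L$-BCFW and a $k_R$-BCFW graph (with $k = k_L+k_R-1$) along a single new edge adds exactly one new parameter on top of $(2k_L-3)+(2k_R-3)$, yielding $2k-3$. Combined with Theorem~\ref{thm:inj}, which states that BCFW cells map injectively under $\widetilde\Lambda$, this gives $\dim \mathcal{O}_k(\Lambda) \geq 2k-3$.

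For the upper bound, the strategy is to exhibit explicit codimension-$3$ constraints cutting out the image inside $\Gr{k}{k+2}$ (which has ambient dimension $2k$). The natural source of such constraints is the defining orthogonality condition $C\eta C^T = 0$ on $\OGnon{k}{2k}$: pairing it against the fixed $\Lambda$ yields bilinear relations among the Pl\"ucker coordinates of $C\Lambda$ that are the geometric manifestation of three-dimensional momentum conservation in the ABJM physics picture. Alternatively, one can compute the generic rank of $d\widetilde\Lambda$ at a point of the top orthitroid cell (of dimension $\binom{k}{2}$) and show its kernel has dimension $\binom{k}{2}-(2k-3)=\binom{k-2}{2}$, consisting of infinitesimal isotropic deformations of $C$ whose row-changes are absorbed by the column span of $\Lambda$.

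The main obstacle I expect is the upper bound: the lower bound is a clean consequence of the recursive BCFW structure together with injectivity, whereas the upper bound requires either an explicit algebraic analysis of the orthogonality-induced relations on $C\Lambda$, or a delicate linear-algebraic rank computation whose genericity must be argued carefully. The cited sources resolve this via a direct parameter count in spinor-helicity coordinates adapted to the ABJM kinematics, which I expect to be the cleanest route to formalize.
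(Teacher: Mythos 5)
This theorem is cited from \cite{Huang_2022,he2022momentum,companion}; the present paper gives no internal proof and the fact is treated as a prerequisite. Your proposal is therefore an alternative proof, so let me assess it on its own merits.

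Your upper bound is on the right track and matches the standard approach. Momentum conservation $\lambda\eta\lambda^\intercal = 0$ (Proposition~\ref{prop:mom cons}) is three equations (the $2\times 2$ symmetric matrix), cutting out the locus $\mathbf Y^k_\Lambda$ of dimension $2k-3$ inside $\Gr{k}{k+2}$ (dimension $2k$); this is Proposition~\ref{prop:zerolocus submfld}. Combined with the observation that $\widetilde\Lambda$ is a submersion onto this locus on a neighborhood of $\OGnon{k}{2k}$ (Proposition~\ref{prop:extends to a nbhd}), the dimension follows immediately. Your alternative formulation via the rank of $d\widetilde\Lambda$ (kernel of dimension $\binom{k-2}{2}$ on the top cell of dimension $\binom{k}{2}$) is the same computation and already yields both inequalities at once, making the BCFW detour unnecessary.

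The lower bound via Theorem~\ref{thm:inj} has two problems. First, it risks circularity: the paper uses $\dim \mathcal O_k(\Lambda) = 2k-3$ (explicitly, in the proof of Theorem~\ref{thm:local sep}, which is the separation half of Theorem~\ref{thm:inj}) as well as the closely related Propositions~\ref{prop:extends to a nbhd} and~\ref{prop:zerolocus submfld} (both cited prerequisites asserting the $2k-3$ count). A result that is logically upstream of the paper's entire injectivity and separation machinery should not be derived from that machinery. Second, even granting the logic, set-theoretic injectivity of a smooth map does not by itself force the image of a $(2k-3)$-dimensional cell to have dimension $2k-3$; you would need an immersion, i.e.\ Theorem~\ref{thm:ampli diffeo}, which sits even deeper in the development. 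The direct rank/submersion computation you sketch as an ``alternative'' avoids both issues and is in fact the route the cited references and the companion paper take.
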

    \begin{prop}[\cite{companion}]
    \label{prop:extends to a nbhd}
        For \(\Lambda \in \mathrm{Mat}_{2k\times(k+2)}^>\)  The amplituhedron map \(\widetilde\Lambda\) extends to an open neighborhood \(U\subset \OG{k}{2k}\) of \(\OGnon{k}{2k}\), as a submersion. Thus, the interior of the \(\mathcal O_k(\Lambda)\) and also \(\widetilde\Lambda(U)\) are both submanifolds of the respective spaces of dimension \(2k-3\).
    \end{prop}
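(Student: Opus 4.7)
The plan is to establish the extension claim first, then compute the rank of the differential of $\widetilde\Lambda$, and finally appeal to the constant-rank theorem.

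\textbf{Extension to a neighborhood.} The amplituhedron map, viewed on matrix representatives, is the rational map $C \mapsto C\Lambda$, well-defined precisely on the Zariski-open locus $U_0 \subseteq \OG{k}{2k}$ where $C\Lambda$ has rank $k$. By Theorem~\ref{amp map well-defined} we have $\OGnon{k}{2k} \subseteq U_0$, and since $\OGnon{k}{2k}$ is compact and $U_0$ is open we may choose an open neighborhood $U$ with $\OGnon{k}{2k} \subseteq U \subseteq U_0$, on which $\widetilde\Lambda$ is smooth (indeed algebraic).

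\textbf{Differential computation.} For $V \in U$, identify $T_V \Gr{k}{2k} \cong \mathrm{Hom}(V, \mathbb{R}^{2k}/V)$. Since $V$ is maximal isotropic one has $V = V^{\perp_\eta}$, so $\eta$ induces an isomorphism $\mathbb{R}^{2k}/V \cong V^*$, and tangent vectors at $V$ correspond to bilinear forms $B$ on $V$ (a $k^2$-dim space); the first-order expansion of the condition $V'^{\perp_\eta} \supseteq V'$ identifies $T_V \OG{k}{2k}$ with the \emph{skew-symmetric} forms on $V$, of dimension $\binom{k}{2}$. A short computation (expanding $\Lambda(V(t))$ for a path $V(t) = (I + t\tilde\phi) V$) shows that the kernel of $d\widetilde\Lambda_V$ on $T_V \Gr{k}{2k}$ consists of those $\phi$ whose image lies in $(V + \ker \Lambda)/V$; on $U$ this target has dimension $k - 2$ and corresponds under the $\eta$-pairing to a $2$-dimensional annihilator $W \subseteq V$, so the kernel is precisely the bilinear forms $B$ that vanish on $V \times W$. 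Intersecting with skew-symmetric forms gives the forms descending to $\bigwedge^{2}(V/W)$, of dimension $\binom{k-2}{2}$. Consequently the rank of $d\widetilde\Lambda_V$ restricted to $T_V \OG{k}{2k}$ equals $\binom{k}{2} - \binom{k-2}{2} = 2k - 3$, constantly on $U$.

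\textbf{Conclusion and main obstacle.} Since the rank is constant $2k - 3$ on $U$, the constant-rank theorem yields that $\widetilde\Lambda(U)$ is a smooth $(2k-3)$-dimensional submanifold of $\Gr{k}{k+2}$ and that $\widetilde\Lambda \colon U \to \widetilde\Lambda(U)$ is a submersion onto its image. Because the interior of $\mathcal{O}_k(\Lambda) = \widetilde\Lambda(\OGnon{k}{2k})$ is an open subset of $\widetilde\Lambda(U)$, it is likewise a $(2k-3)$-dimensional submanifold of $\Gr{k}{k+2}$. The central technical obstacle is the rank computation: one must carefully implement the $\eta$-duality $\mathbb{R}^{2k}/V \cong V^*$ uniformly in $V$ and verify, via a dimension count for skew-symmetric forms with a prescribed vanishing subspace, that the two $\binom{\cdot}{2}$ terms combine to give exactly the amplituhedron dimension $2k - 3$; the remaining steps — compactness, the constant-rank theorem, and the fact that the interior is open in the image — are routine.
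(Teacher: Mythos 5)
Your rank computation is correct and is the real content here: identifying $T_V\OG{k}{2k}$ with skew-symmetric forms on $V$ via $\eta$, and showing that the kernel of $d\widetilde\Lambda_V$ is the space of skew forms vanishing on $V\times W$ for the $2$-dimensional annihilator $W$ of $(V+\ker\Lambda)/V$, does give rank $\binom{k}{2}-\binom{k-2}{2}=2k-3$ constantly on the full-rank locus. (The paper cites this proposition from the companion manuscript, so there is no in-paper proof to compare against.)

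However, the last step has a genuine gap. The constant-rank theorem alone does \emph{not} imply that $\widetilde\Lambda(U)$ is a global submanifold of $\Gr{k}{k+2}$; it only says that each point of $U$ has a neighborhood whose image is an embedded $(2k-3)$-dimensional disk. A constant-rank map can still have a non-manifold image globally (the standard figure-eight immersion $\mathbb{R}\to\mathbb{R}^2$ has constant rank $1$). The missing ingredient is that $\widetilde\Lambda$ lands inside $\mathbf Y^k_\Lambda$ (Definition~\ref{def:Ylk}): momentum conservation $\lambda\eta\lambda^\intercal=0$ holds for \emph{any} isotropic $C$ with $C\Lambda$ of full rank, not only for $C\in\OGnon{k}{2k}$, so $\widetilde\Lambda(U)\subseteq\mathbf Y^k_\Lambda$. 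Since $\mathbf Y^k_\Lambda$ is a smooth submanifold of dimension exactly $2k-3$ (Proposition~\ref{prop:zerolocus submfld}), your constant-rank computation says $\widetilde\Lambda\colon U\to\mathbf Y^k_\Lambda$ is a submersion. Submersions are open maps, so $\widetilde\Lambda(U)$ is open in $\mathbf Y^k_\Lambda$ and hence a $(2k-3)$-dimensional submanifold, and the interior of $\mathcal O_k(\Lambda)$ inherits the same structure as an open subset of it. This detour through $\mathbf Y^k_\Lambda$ is also what makes the phrase ``as a submersion'' in the statement meaningful: since $2k-3<\dim\Gr{k}{k+2}=2k$, the extension is certainly not a submersion onto the ambient Grassmannian, and ``submersion onto its image'' is not well-posed until you already know the image is a manifold.

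Two smaller remarks. You do not actually need compactness of $\OGnon{k}{2k}$ to produce $U$: the full-rank locus $U_0$ is itself open and contains $\OGnon{k}{2k}$ by Theorem~\ref{amp map well-defined}, so you can take $U=U_0$ (intersected with the connected component of $\OG{k}{2k}$ containing $\OGnon{k}{2k}$). And when you write ``the interior of $\mathcal O_k(\Lambda)$ is an open subset of $\widetilde\Lambda(U)$,'' the interior must be taken relative to $\mathbf Y^k_\Lambda$ (equivalently, relative to $\widetilde\Lambda(U)$), not relative to $\Gr{k}{k+2}$, where it would be empty for dimension reasons; once that is made explicit, the inclusion $\mathcal O_k(\Lambda)\subseteq\widetilde\Lambda(U)$ makes the openness immediate.
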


\subsubsection{Twistors Variables}
An important set of coordinates on \(\mathcal O_k(\Lambda)\) are the twistor variables.
    \begin{dfn}
    \label{def twist}
        Let \(Y \in \Gr{k}{k+2},\) and pick a matrix representative $M$ for it. For every $i\neq j\in[2k],$ we define the \emph{twistor variable} $\twist{Y}{i}{j}_\Lambda$ by appending to  $M$ of $Y$ the two row vectors $\Lambda_i,\Lambda_j,$ and calculating the determinant.
        The subscript \(\Lambda\) will be omitted when \(\Lambda\) is clear from context. While this definition depends on the choice of $M$, different choices of $M$ differ by an element \(g\in\mathrm{GL(k)}\). Changing $M$ to $gM$ changes the above determinant by $\deg(g).$ Thus, the \emph{projective vector} $(\twist{Y}{i}{j}_\Lambda)_{i< j\in[2k]}\in\mathbb{P}^{\binom{[2k]}{2}-1}$ is well-defined.
        Note also that the twistors are anti-symmetric.       
        
    \end{dfn}
Since we wish to regard points in the amplituhedron through their twistor coordinates, we introduce a space that is invariant under the transformations preserving these coordinates.
\begin{dfn}
\label{def:mathcal u}
Write
\(
W_k \defeq \mathrm{Mat}_{k\times 2k}\times \mathrm{Mat}_{2k\times(k+2)}^> \times \mathrm{Mat}_{k\times (k+2)},
\)
\[
U_k \defeq \{ (C,\Lambda,Y) \in W_k : C \,\eta\,C^\intercal = 0,\, \mathrm{rank}(C) = k,\,C\,\Lambda =Y\},
\]
and
\(
U_{k,\,\Lambda} \defeq \{(C,\Lambda',Y) \in U_k : \Lambda' = \Lambda\}.
\)

Define the following group actions on those spaces:
\begin{enumerate}
    \item  Left \(\mathrm{GL}_k (\mathbb{R})\) action: \(g(C,\Lambda,Y) \defeq (g C,\Lambda,g Y) \).

    \item Right \(\mathrm{GL}_{k+2}(\mathbb{R})\) action: \((C,\Lambda,Y)g \defeq (C,\Lambda g,Y g) \).
 \end{enumerate}   
We can now write
\[
\mathcal{U}_{k}\defeq\raisebox{-0.5ex}{\({\mathrm{GL}_k}(\mathbb R)\)} \backslash \raisebox{0.5ex}{\(U_{k}\)} / \raisebox{-0.5ex}{\({\mathrm{GL}_{k+2}}(\mathbb R)\)}, \qquad
\mathcal{U}_{k,\,\Lambda}\defeq\raisebox{-0.5ex}{\({\mathrm{GL}_k}(\mathbb R)\)} \backslash \raisebox{0.5ex}{\(U_{k,\Lambda}\)} / \raisebox{-0.5ex}{\({\mathrm{GL}_{k+2}}(\mathbb R)\)}.
\]
Let \(\mathcal{U}_{k,\,\Lambda}^\geq \) be the points of \(\mathcal{U}_{k,\,\Lambda} \) with \(C\in \OGnon{k}{2k}\), and define \(\mathcal{U}_{k}^\geq \) similarly.
\end{dfn}
This construction imitates the universal amplituhedron defined in \cite{universal}.
Note that the twistor coordinates, as a projective vector, are invariant under those actions.

Consider the projection on the third coordinate,
\(
\mathcal{U}_{k,\,\Lambda}^\geq \xrightarrow{\pi_Y} \Gr{k}{k+2},\quad
[C,\Lambda,Y] \mapsto Y
\)
which is well-defined as the choice of \(\Lambda\) fixes the right action. The image of the projection is clearly \(\mathcal{O}_k(\Lambda)\). Indeed, we have a natural bijection
\(
\mathcal{O}_k(\Lambda) \cong \faktor{\mathcal{U}_{k,\,\Lambda}^\geq}{\sim}
\)
where two triplets are considered equivalent if they have the same \(Y\). 

A canonical choice of representatives for these equivalence classes is those  with the second matrix being \(\Lambda\). Since \(Y\) is of full rank, we may use the right action to find a representative with \(Y_{k \times (k+2)} = \left(0_{k\times 2} \,\,\, \mathrm{id}_{k\times k }\right)\).
More concretely, for \((C, \Lambda, Y)\in U_k\) we may choose  \(g\in\mathrm{GL}_{k+2}\) such that \((Yg)^{\{1,\,2\}} = 0\) and \((Yg)^{\{3,\,4,\,...,\,k+2\}}\) is a full-rank square matrix. \(g\) is uniquely determined up to \(\mathrm{GL_2}\) on the first two columns, and \(\mathrm{GL_k}\) on the remaining \(k\). Since \(GL_k\) acts freely and transitively on \(k\times k\) full-rank matrices both from the left and from the right, choosing a representative of \(C\in \OGnon{k}{2k}\) and setting \((Yg)^{\{3,\,4,\,...,\,k+2\}} = \mathrm{id}_k\) fixes this \(\mathrm{GL}_k\) freedom. That is, by fixing \(Y = Y_0 \defeq \left(0_{k\times 2} \,\,\, \mathrm{id}_{k\times k }\right)\), we are left with:

\begin{enumerate}
    \item  Left \(\mathrm{GL}_k (\mathbb{R})\) action: \(g(C,\Lambda,Y_0) = (g C,\Lambda,g Y_0) \sim  (g C,\Lambda g^{\prime-1}, Y_0) \) where \(g^\prime\) is the square \(k+2\) block matrix with identity for the \(\{1,\,2\}\) block and \(g\) for the \(\{3,\,...,\,k+2\}\) block.

    \item Right \(\mathrm{GL}_{2}(\mathbb{R})\) action: \((C,\Lambda,Y_0)g \defeq (C,\Lambda g^\prime,Y_0 \,g^\prime) = (C,\Lambda g^\prime,Y_0 )\) where \(g^\prime\) is the square \(k+2\) block matrix with \(g\)  for the \(\{1,\,2\}\) block and identity for the \(\{3,\,...,\,k+2\}\) block.
\end{enumerate}
This gives rise to the following definition:
\begin{dfn}
\label{def lambda}
    For \([C,\Lambda,Y]\in \mathcal{U}_k^\geq\), define
    \(\lambda \defeq C^\perp\cap\Lambda^\intercal,\)
    as linear spaces.
\end{dfn}
This view of the amplituhedron is based on the projection through \(Y\) for the momentum twistor (standard) amplituhedron in \cite{Unwinding}, and a similar definition employed in \cite{Karp_2017}.
\begin{prop} [\cite{Unwinding}, Section~2]
\label{prop:twist pluckers}
    For \([C,\Lambda,Y]\in \mathcal{U}_k^\geq\), one has
    \(
    \Delta_{\{i,\,j\}}(\lambda) = \twist{Y}{i}{j}.
    \)

    As a consequence, by Proposition~\ref{prop:plucker relations}, the twistors satisfy the Pl\"ucker relations:
    \[\forall i,\,j_1<j_2<j_3 \in [2k],~
     \twist{Y}{i}{j_1} \twist{Y}{j_2}{j_3}+\twist{Y}{i}{j_3} \twist{Y}{j_1}{j_2} = \twist{Y}{i}{j_2} \twist{Y}{j_1}{j_3}
    \]
    
\end{prop}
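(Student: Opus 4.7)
The plan is to realize $\lambda$ geometrically, express its Pl\"ucker coordinates via a chosen basis of $\ker Y$, and then match them to the twistor determinants through a single block--matrix identity; the Pl\"ucker relations will then come for free.

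The starting point I would use is the identification $\lambda = \Lambda(\ker Y)$ as a $2$-dimensional subspace of $\mathbb R^{2k}$. Any $v\in C^\perp\cap \mathrm{Im}(\Lambda)$ can be written as $v=\Lambda w$, and the condition $Cv=0$ becomes $Yw = C\Lambda w = 0$; the converse is immediate. Since $\Lambda$ has full column rank $k+2$ and $\dim \ker Y = 2$, this is genuinely a $2$-plane. I would then pick any basis $w_1,w_2$ of $\ker Y$, assembled into a $(k+2)\times 2$ matrix $W$, so that $\Lambda W$ is a $2k\times 2$ matrix whose column span is $\lambda$. Writing $\Lambda_i$ for the $i$th row of $\Lambda$, this gives
\[
\Delta_{\{i,j\}}(\lambda) \;=\; \det\begin{pmatrix} \Lambda_i W \\ \Lambda_j W \end{pmatrix}.
\]

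The crux is then to complete $W$ to an invertible $(k+2)\times(k+2)$ matrix $[W\mid Z]$. Observe that $YZ$ is automatically invertible, since $Y$ restricted to any complement of its kernel is an isomorphism onto $\mathbb R^k$. Right-multiplying the matrix defining the twistor by $[W\mid Z]$ and using $YW=0$ produces
\[
\begin{pmatrix} Y \\ \Lambda_i \\ \Lambda_j \end{pmatrix}[W\mid Z] \;=\; \begin{pmatrix} 0_{k\times 2} & YZ \\ \Lambda_i W & \Lambda_i Z \\ \Lambda_j W & \Lambda_j Z \end{pmatrix}.
\]
The right-hand side has a zero top-left block, so its determinant factors as $(-1)^{2k}\det(YZ)\cdot \Delta_{\{i,j\}}(\lambda) = \det(YZ)\cdot\Delta_{\{i,j\}}(\lambda)$. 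Taking determinants on both sides of the display yields the key identity
\[
\twist{Y}{i}{j}\cdot\det[W\mid Z] \;=\; \det(YZ)\cdot \Delta_{\{i,j\}}(\lambda),
\]
and since the multiplier $\det(YZ)/\det[W\mid Z]$ is a fixed nonzero scalar independent of $i,j$, the two projective vectors coincide --- which is exactly the content of the proposition.

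For the Pl\"ucker relations on twistors nothing further is required: they are the standard three-term relations for the $2$-plane $\lambda\subset\mathbb R^{2k}$, supplied by Proposition~\ref{prop:plucker relations}. I do not expect a real obstacle in this proof; the only care needed is with conventions --- identifying $C^\perp$ with $\ker C$ under the standard inner product on $\mathbb R^{2k}$, keeping straight the distinction between row-spans and column-spans, and tracking the block-determinant sign in the key calculation.
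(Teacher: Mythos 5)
Your proof is correct and is essentially the standard argument that the cited reference \cite{Unwinding} uses for this kind of identification: realize $\lambda$ as $\Lambda(\ker Y)$, compute its Plücker coordinates via a basis matrix $W$, and then use the invertible completion $[W\mid Z]$ to produce a block anti-triangular matrix whose determinant splits as $\det(YZ)\cdot\Delta_{\{i,j\}}(\lambda)$ while equaling $\twist{Y}{i}{j}\cdot\det[W\mid Z]$. The bookkeeping is all right — in particular the block-determinant sign $(-1)^{k\cdot 2}=1$ and the observation that the scalar $\det(YZ)/\det[W\mid Z]$ is independent of $i,j$, which is exactly what equality of projective vectors requires; for completeness one could note that under the normalization $Y = (0_{k\times 2}\mid \mathrm{id}_k)$ used later in the paper, the natural choices $W = \bigl(\begin{smallmatrix}\mathrm{id}_2\\0\end{smallmatrix}\bigr)$ and $Z = \bigl(\begin{smallmatrix}0\\\mathrm{id}_k\end{smallmatrix}\bigr)$ make this scalar equal to $1$, so the equality holds on the nose.
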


Denote by \(\Twist{Y}\) the real \(2k \times 2k\) matrix defined by 
    \(\Twist{Y}_{i,\,j} \defeq \twist{Y}{i}{j} .\)

\begin{prop}[\cite{companion}]
\label{prop:twistor lambda}
    \(\mathrm{Span}(\Twist{Y})=\lambda\) and both are two dimensional. Specifically, for any \(i,\,j\) with \(\twist{Y}{i}{j} \neq 0\) we have that \(\lambda\) is spanned by the \(i\) and \(j\) rows of \(\Twist{Y}\).
\end{prop}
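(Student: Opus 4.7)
The plan is to reduce the statement to a standard feature of the Plücker embedding of $\mathrm{Gr}(2,2k)$: once we know that $\lambda$ is $2$-dimensional and that its Plücker coordinates are exactly the $\twist{Y}{i}{j}$ (Proposition~\ref{prop:twist pluckers}), the matrix $\Twist{Y}$ becomes the familiar antisymmetric Plücker matrix of a $2$-plane in $\mathbb{R}^{2k}$, whose rows lie in and span that plane.

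First I would verify $\dim \lambda = 2$. Since $\Lambda$ has full column rank $k+2$, the map $\Lambda\colon \mathbb{R}^{k+2}\to\mathbb{R}^{2k}$ is injective with image the $(k+2)$-dimensional subspace $\Lambda^\intercal$, and composing with $C$ yields a linear map $\Lambda^\intercal\to\mathbb{R}^k$ whose kernel is exactly $C^\perp\cap\Lambda^\intercal=\lambda$ and whose image is the column span of $Y=C\Lambda$. By Theorem~\ref{amp map well-defined}, $Y$ has rank $k$, so this map is surjective and rank-nullity gives $\dim\lambda = (k+2)-k = 2$.

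Next I would fix a $2\times 2k$ representative $\binom{v_1}{v_2}$ of $\lambda$. Proposition~\ref{prop:twist pluckers} tells us that the $(i,j)$-entry of $\Twist{Y}$ equals $v_{1,i}v_{2,j} - v_{1,j}v_{2,i}$, from which a one-line computation shows that the $i$-th row of $\Twist{Y}$ equals $v_{1,i}v_2 - v_{2,i}v_1\in\lambda$. This immediately gives $\mathrm{Span}(\Twist{Y})\subseteq\lambda$. Expressing the $i$-th and $j$-th rows in the basis $(v_1,v_2)$ of $\lambda$ produces the coordinate vectors $(-v_{2,i},v_{1,i})$ and $(-v_{2,j},v_{1,j})$, whose $2\times 2$ determinant is precisely $\twist{Y}{i}{j}$. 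Hence these two rows are linearly independent (and therefore span the $2$-dimensional space $\lambda$) exactly when $\twist{Y}{i}{j}\neq 0$. Since $\lambda$ is nonzero at least one Plücker coordinate is nonvanishing, so this also upgrades the inclusion to the equality $\mathrm{Span}(\Twist{Y})=\lambda$.

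There is no real obstacle here: the proposition is a specialization to $\lambda$ of a classical identity for the Plücker embedding of $\mathrm{Gr}(2,2k)$. The one point that requires care is the dimension count, which leans on the full-rank property of $Y$ supplied by Theorem~\ref{amp map well-defined}.
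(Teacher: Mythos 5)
Your argument is correct. The dimension count via rank–nullity is exactly right: the map $\Lambda^\intercal \to \mathbb{R}^k$, $v \mapsto Cv$, has kernel $\lambda$ and image the column span of $Y$, and the full rank of $Y$ (Theorem~\ref{amp map well-defined}) forces $\dim\lambda = 2$. The identification of the $i$-th row of $\Twist{Y}$ with $v_{1,i}v_2 - v_{2,i}v_1$ via Proposition~\ref{prop:twist pluckers}, the observation that the $2\times 2$ change-of-basis determinant of rows $i,j$ equals $\twist{Y}{i}{j}$, and the upgrade from inclusion to equality using the nonvanishing of some Pl\"ucker coordinate, are all correct.

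One caveat on comparison: the paper cites this proposition from the companion paper \cite{companion} rather than proving it in-text, so there is no in-paper proof to measure against. That said, what you have written is the standard and natural argument — reduce to a linear-algebra fact about the Pl\"ucker embedding of $\mathrm{Gr}(2,2k)$, namely that the antisymmetric Pl\"ucker matrix of a $2$-plane has its rows lying in and spanning that plane — and it is a complete proof.
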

\begin{prop}[\cite{companion}]
\label{prop:Consecutive Twistors}
For \([C,\Lambda,Y]\in \mathcal{U}_k^\geq\), and \(i  = 1,\,...,\, 2k-1\) we have
 \(\twist{Y}{i}{i+1} \geq 0,\)
and
\((-1)^k \twist{Y}{1}{\left(2k\right)} \geq 0.\)
Furthermore, all of the inequalities above are strict when \(C\) is in the interior of the positive Orthogonal Grassmannian.
\end{prop}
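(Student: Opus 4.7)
The strategy is to expand each twistor $\twist{Y}{i}{j}$ as a Cauchy--Binet sum of products of Plücker coordinates of $C$ and $\Lambda$, and then check that for the two kinds of indices in the statement every term in the sum carries the same sign.

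I would choose the representative with $Y = C\Lambda$ and with $C \in \OGnon{k}{2k}$ having non-negative Plücker coordinates. Since $\Lambda_i = e_i^\intercal \Lambda$,
\[
\twist{Y}{i}{j} \;=\; \det\!\begin{pmatrix} C\Lambda\\ \Lambda_i\\ \Lambda_j \end{pmatrix} \;=\; \det\!\left[\begin{pmatrix} C\\ e_i^\intercal\\ e_j^\intercal \end{pmatrix}\Lambda\right],
\]
and Cauchy--Binet turns the right-hand side into a sum over $(k+2)$-subsets $I\subseteq[2k]$. The contribution of $I$ vanishes unless $\{i,j\}\subseteq I$, since otherwise one of the rows $e_i^\intercal,e_j^\intercal$ is identically zero on the columns indexed by $I$. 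Writing $I = J \cupdot \{i,j\}$ with $J\in\binom{[2k]\setminus\{i,j\}}{k}$ and applying Laplace expansion along the last two rows gives
\[
\twist{Y}{i}{j} \;=\; \sum_{J} \varepsilon(i,j,J)\,\Delta_J(C)\,\Delta_{J\cup\{i,j\}}(\Lambda),
\]
where $\varepsilon(i,j,J) = (-1)^{p_i+p_j+1}$ and $p_i, p_j$ are the positions of $i$ and $j$ inside the sorted $I$.

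A short sign check then finishes the proof. For $j = i+1$ the two indices are consecutive inside any sorted $I$, so $p_i + p_{i+1}$ is odd and $\varepsilon \equiv +1$; together with $\Delta_J(C) \geq 0$ and $\Delta_{J\cup\{i,i+1\}}(\Lambda)>0$, this yields $\twist{Y}{i}{i+1}\geq 0$. For $\{i,j\} = \{1,2k\}$ the positions are $p_1 = 1$ and $p_{2k} = k+2$, so $\varepsilon \equiv (-1)^{k+4} = (-1)^k$, giving $(-1)^k \twist{Y}{1}{2k}\geq 0$. When $C$ lies in the interior of $\OGnon{k}{2k}$, every $\Delta_J(C)$ is strictly positive, and both inequalities become strict. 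The only place that warrants care is the sign-tracking in the Laplace expansion and the minor issue of choosing the representative of $C$ so that its Plücker coordinates are non-negative (rather than all non-positive) when $k$ is odd; neither constitutes a real obstacle.
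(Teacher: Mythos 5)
Your proof is correct and uses the same approach the paper relies on: the Cauchy--Binet expansion of $\twist{Y}{i}{j}$ into products $\Delta_J(C)\,\Delta_{J\cup\{i,j\}}(\Lambda)$ with a sign $\varepsilon(i,j,J)$ determined by the two-row Laplace expansion, which collapses to a constant sign precisely because $i,i+1$ are adjacent (resp.\ $1,2k$ are the two extremes) of the sorted index set. Indeed the paper itself refers to ``the Cauchy--Binet formula as seen in the proof of Proposition~\ref{prop:Consecutive Twistors}'' in Appendix~\ref{apx:sep k=3}, confirming this is the intended argument.
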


\begin{prop}[\cite{Huang_2022}, Section~3.1,\cite{he2022momentum}, Section~3.1,\cite{companion}]
    \label{prop:mom cons}
    For \([C,\Lambda,Y]\in \mathcal{U}^\geq_k\), we have
    \(
    \lambda \, \eta \, \lambda^\intercal = 0
    \)
    This equation will be referred to as \emph{momentum conservation}.
\end{prop}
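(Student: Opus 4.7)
The plan is to reduce momentum conservation to the orthogonality relation \(C\eta C^{\intercal}=0\) via a direct identification of \(C^{\perp}\) with an \(\eta\)-translate of \(C\) itself. Since \(\lambda\subseteq C^{\perp}\) by Definition~\ref{def lambda}, any \(\eta\)-isotropy statement proved for \(C^{\perp}\) will transfer to \(\lambda\) for free; in particular I do not need to use the \(\Lambda^{\intercal}\) factor in the definition of \(\lambda\) at all.

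The first step I would carry out is the observation that, because \(C\in\OGnon{k}{2k}\) satisfies \(C\eta C^{\intercal}=0\) with \(\dim C=k=\tfrac12\dim\mathbb R^{2k}\), the subspace \(C\) coincides with its own \(\eta\)-orthogonal complement, \(C=C^{\perp_{\eta}}\). Using \(\eta^{\intercal}=\eta\) and \(\eta^{2}=I\), unwinding the definition of \(\eta\)-perp shows at the level of subspaces that
\[
C^{\perp}=\{\,c\eta \;:\; c\in C\,\}=C\eta,
\]
where \(C^{\perp}\) denotes the standard Euclidean orthogonal complement. This identification is the one non-cosmetic linear-algebraic input of the argument.

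With that in hand, I would take arbitrary \(v,w\in\lambda\), write them as \(v=c_{1}\eta\) and \(w=c_{2}\eta\) for suitable \(c_{1},c_{2}\in C\), and compute
\[
v\,\eta\,w^{\intercal}=(c_{1}\eta)\,\eta\,(c_{2}\eta)^{\intercal}=c_{1}\,\eta\, c_{2}^{\intercal},
\]
which vanishes by \(C\eta C^{\intercal}=0\). This yields \(\lambda\,\eta\,\lambda^{\intercal}=0\), as required.

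I do not anticipate a serious obstacle: the argument is essentially formal once \(C^{\perp}=C\eta\) is established, and it uses neither the positivity of \(\Lambda\) nor the intersection with \(\Lambda^{\intercal}\). In fact the same computation proves the stronger statement that the full standard orthogonal complement \(C^{\perp}\) is \(\eta\)-isotropic, of which momentum conservation for \(\lambda\) is an immediate consequence.
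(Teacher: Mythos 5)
Your argument is correct and self-contained. The single nontrivial ingredient is the identification \(C^{\perp}=C\eta\), which follows exactly as you say from \(C\eta C^{\intercal}=0\), \(\eta^{\intercal}=\eta\), \(\eta^2=I\), and the dimension count \(\dim C = k = \tfrac12\cdot 2k\); once \(\lambda\subseteq C^{\perp}=C\eta\) is established, the computation \(v\eta w^{\intercal}=c_1\eta\eta\eta c_2^{\intercal}=c_1\eta c_2^{\intercal}=0\) closes the argument. The paper cites this proposition from external references rather than proving it inline, so a word-for-word comparison is not possible, but your route is the standard one: it is the same identity \(C\eta = C^{\perp}\) that the paper itself invokes later in the proof of Theorem~\ref{nonneg mandelstam thm}, and your observation that the \(\Lambda^{\intercal}\) intersection and the positivity of \(\Lambda\) play no role is correct. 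As a small remark, note that your containment \(\lambda\eta\subseteq C\) is precisely the second assertion of the later Lemma~\ref{lem:twist in C}, so your argument in effect reproves that containment en route.
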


\begin{lem}[\cite{companion}]
\label{lem:twist in C}
We have \(C \subset \orth {\lambda},\) \(\lambda\eta \subset C,\) and \(\mathrm{dim}(\lambda \eta) = 2\).
\end{lem}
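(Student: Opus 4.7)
The plan is to reduce all three claims to the single structural fact that $C$ is a \emph{maximal isotropic} subspace for the $\eta$-inner product. Concretely, $C\in\OGnon{k}{2k}$ satisfies $C\eta C^\intercal=0$ with $\dim C=k=\tfrac12\dim\mathbb{R}^{2k}$, so $C=C^{\perp_\eta}$, where I write $\perp_\eta$ for the orthogonal complement with respect to the bilinear form $\langle x,y\rangle_\eta = x\,\eta\,y^\intercal$. The complementary ingredient is that $\eta^2=I$, since the diagonal entries of $\eta$ are $\pm 1$.

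First I would handle $C\subset\lambda^\perp$, which is immediate after unwinding Definition~\ref{def lambda}: by construction $\lambda = C^\perp \cap \Lambda^\intercal \subset C^\perp$, and taking standard orthogonal complements reverses inclusions, giving $C=(C^\perp)^\perp\subset\lambda^\perp$.

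Next, for $\lambda\eta\subset C$, I would instead verify $\lambda\eta\subset C^{\perp_\eta}$, which suffices by maximal isotropy. For $u\in\lambda$ and $c\in C$, the $\eta$-pairing of $u\eta$ with $c$ is $(u\eta)\,\eta\,c^\intercal = u\,\eta^2\,c^\intercal = u\,c^\intercal$, which vanishes because $u\in\lambda\subset C^\perp$. So $u\eta\in C^{\perp_\eta}=C$.

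Finally, for $\dim(\lambda\eta)=2$, invertibility of $\eta$ gives $\dim(\lambda\eta)=\dim\lambda$, and $\dim\lambda=2$ is already recorded by Proposition~\ref{prop:twistor lambda}. If one wishes to avoid citing that proposition, a quick direct argument is available: a vector $\Lambda v\in\Lambda^\intercal$ lies in $C^\perp$ iff $C\Lambda v = Yv = 0$, so $\lambda$ is the image of $\ker Y$ under the injection $v\mapsto\Lambda v$; since $Y$ is $k\times(k+2)$ of full rank, $\ker Y$ is two-dimensional. I do not expect a real obstacle here — the lemma is essentially a compatibility check, and the only conceptual move is recognizing that both $C\subset\lambda^\perp$ and $\lambda\eta\subset C$ are two sides of the same coin once one passes between standard and $\eta$-orthogonality via the involution $\eta$.
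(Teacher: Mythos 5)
Your proof is correct. Note that the paper itself does not contain a proof of this lemma; it is cited to the companion work \cite{companion}, so there is no in-paper argument to compare against. Your argument is self-contained and sound: the reduction of $\lambda\eta\subset C$ to the maximal-isotropy identity $C=C^{\perp_\eta}$ (valid because $C\eta C^\intercal=0$ and $\dim C=k$ is half the ambient dimension, with $\eta$ non-degenerate) together with $\eta^2=I$ is clean, and the first inclusion $C\subset\lambda^\perp$ follows immediately from $\lambda\subset C^\perp$ by taking complements. The dimension count $\dim(\lambda\eta)=\dim\lambda=2$ is handled correctly both via Proposition~\ref{prop:twistor lambda} and via the self-contained identification of $\lambda$ with the image of $\ker Y$ under the (injective, since $\Lambda$ has full column rank) map $v\mapsto\Lambda v$; the latter is essentially the observation underlying the $\mathcal{B}$-amplituhedron picture of \cite{Karp_2017}, and is a nice way to make the lemma independent of the earlier proposition.
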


\begin{prop}[\cite{companion}]
\label{prop:Bdiffeo} Define
        \(
    \mathrm{OG}_2( \Lambda^\intercal) \defeq  \{\lambda\subset \Lambda^\intercal\,:\, \mathrm{dim}(\lambda) = 2,\,\lambda \eta \lambda^\intercal=0 \}
    .\)
    We have an injection
    \(\mathcal{O}_k(\Lambda) \xhookrightarrow{} \mathrm{OG}_2( \Lambda^\intercal),\)
    by \(Y \mapsto \lambda,\)
    which is a diffeomorphism onto its image.
\end{prop}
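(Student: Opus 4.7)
The plan is to verify four things in turn: well-definedness of the map $Y \mapsto \lambda$, that the image lies in $\mathrm{OG}_2(\Lambda^\intercal)$, injectivity, and smoothness of the inverse. The first two should be immediate from what precedes. Since $C^\perp \cap \Lambda^\intercal$ depends only on the rowspan of $C$, and for fixed $\Lambda$ the equivalence class $[C, \Lambda, Y] \in \mathcal{U}_{k,\Lambda}^\geq$ determines the rowspan of $C$ up to left $\mathrm{GL}_k$-action (which preserves rowspan), $\lambda$ is a function of $Y$ alone. The image lies in $\mathrm{OG}_2(\Lambda^\intercal)$ because $\lambda \subseteq \Lambda^\intercal$ by construction, $\dim \lambda = 2$ by Lemma~\ref{lem:twist in C}, and $\lambda\eta\lambda^\intercal = 0$ by Proposition~\ref{prop:mom cons}.

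For injectivity I would use Proposition~\ref{prop:twist pluckers}, which identifies the projective Pl\"ucker vector of $\lambda$ with the projective twistor vector $(\twist{Y}{i}{j})_{i<j}$. Thus it suffices to show that the projective twistor map $\mathrm{Gr}_{k,k+2} \to \mathbb{P}^{\binom{2k}{2}-1}$, $Y \mapsto (\twist{Y}{i}{j})_{i<j}$, is injective. To see this, for $Y \in \mathrm{Gr}_{k,k+2}$ pick $\phi \in \mathrm{Mat}_{2\times(k+2)}$ whose rows span the annihilator $Y^\perp \subseteq (\mathbb{R}^{k+2})^*$. A short Cramer's rule computation yields
\[
\twist{Y}{i}{j} \;=\; c\cdot \det\bigl(\phi\,\Lambda_i^\intercal,\ \phi\,\Lambda_j^\intercal\bigr),
\]
with $c$ a nonzero scalar independent of $i,j$. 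So the projective twistor vector coincides with the projective Pl\"ucker vector of $\phi\Lambda^\intercal \in \mathrm{Mat}_{2\times 2k}$, which is the same as of its rowspan $\tilde\lambda \subseteq \mathbb{R}^{2k}$. Because $\Lambda$ has full column rank $k+2$, the matrix $\Lambda^\intercal$ has trivial left null space and admits a right inverse, hence the map $\phi \mapsto \phi\Lambda^\intercal$ is injective on $\mathrm{Mat}_{2\times(k+2)}$. Consequently $\phi$ is determined up to the $\mathrm{GL}_2$ action on rows (i.e., choice of basis of $Y^\perp$) by $\tilde\lambda$, and hence $Y = \ker\phi$ is recovered.

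For the diffeomorphism onto its image, the inverse map is constructed explicitly in the same way: given $\lambda$, pick a smooth local $2\times 2k$ matrix $L$ representing $\lambda$, solve the linear system $L = \phi\Lambda^\intercal$ for $\phi$ using a smooth right inverse of $\Lambda^\intercal$ (which exists since $\Lambda^\intercal$ has full row rank), and recover $Y$ as $\ker\phi$. Each step is polynomial in the entries, so the inverse is smooth where defined. Since the forward map is likewise polynomial and injective, and $\mathcal{O}_k(\Lambda)$ carries its smooth structure from Proposition~\ref{prop:extends to a nbhd}, this gives a diffeomorphism onto its image.

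The main conceptual step is the identification, via Proposition~\ref{prop:twist pluckers}, of $\lambda$ with the Pl\"ucker embedding of $Y^\perp$ pulled back through $\Lambda^\intercal$; once this identification is in hand, both injectivity and smoothness of the inverse become reflections of the fact that $\Lambda$ has full column rank. No single step seems to be a real obstacle, but one must be careful throughout to track projective vs.\ linear equivalence and to choose bases consistently with the quotient by $\mathrm{GL}_k \times \mathrm{GL}_{k+2}$ defining $\mathcal{U}_{k,\Lambda}^\geq$.
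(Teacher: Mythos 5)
The proposal takes a sensible approach—reduce injectivity to the injectivity of the Pl\"ucker/twistor map, then invert explicitly using a right inverse of $\Lambda^\intercal$—and the injectivity and smoothness arguments are sound. (The Cramer's-rule computation is a re-derivation of Proposition~\ref{prop:twist pluckers}, which you had already cited, so this part is redundant but not wrong.) However, the well-definedness argument has a genuine gap.

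You argue: the class $[C,\Lambda,Y]\in\mathcal{U}_{k,\Lambda}^\geq$ determines the rowspan of $C$ up to left $\mathrm{GL}_k$; $\lambda=C^\perp\cap\Lambda^\intercal$ depends only on that rowspan; therefore $\lambda$ is a function of $Y$ alone. The last step does not follow. The class $[C,\Lambda,Y]$ carries strictly more data than $Y$: by construction $\mathcal{O}_k(\Lambda)\cong\mathcal{U}_{k,\Lambda}^\geq/\!\sim$, where $\sim$ identifies triplets sharing the same $Y$, and a single $Y$ can a priori come from several distinct rowspans of $C$ (indeed, the non-injectivity of $\widetilde\Lambda$ on all of $\OGnon{k}{2k}$ is exactly what makes the tiling problem nontrivial). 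So showing that $\lambda$ is a function of the class does not show it is a function of $Y$. The fix is cheap and already in the paper: Proposition~\ref{prop:twistor lambda} gives $\lambda=\mathrm{Span}(\Twist{Y})$, and $\Twist{Y}$ is built from $Y$ and $\Lambda$ alone. Alternatively, your own Cramer's-rule identity shows the projective Pl\"ucker vector of $\lambda$ equals the projective twistor vector of $Y$, which pins down $\lambda\in\mathrm{Gr}_{2,2k}$ from $Y$—but as written, that observation appears only in the injectivity section and is not connected back to the well-definedness claim.

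One smaller point on the inverse: when you write $\phi=LR$ with $\Lambda^\intercal R=I_{k+2}$, you should note that $\phi\Lambda^\intercal=L$ holds only because $\lambda\subset\Lambda^\intercal$ guarantees $L$ is actually in the image of $\psi\mapsto\psi\Lambda^\intercal$; for arbitrary $L$ the formula produces some $\phi$, but not one satisfying $\phi\Lambda^\intercal=L$. This does not break the argument (you only need a smooth left inverse of the forward map, defined on a neighborhood of the image), but it is worth stating that the constraint $\lambda\subset\Lambda^\intercal$ is precisely what makes the recovery exact.
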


Following the definition of the \(\mathcal{ B} \)-amplituhedron in \cite{Karp_2017}, we define
    The \emph{orthogonal momentum B-amplituhedron} is the image of \(\mathcal{O}_k(\Lambda)\) under the injection above.

\begin{dfn}
\label{def:Ylk}
    For \(\Lambda\in \mathrm{Mat}^>_{2k\times(k+2)}\) write \(\lambda_\Lambda:\Gr{k}{k+2}\rightarrow \Gr{2}{2k}\), by \(Y\mapsto \lambda\) as defined in Definition~\ref{def lambda}, which is well-defined as by Proposition~\ref{prop:twistor lambda}. Define 
    \[\mathbf Y^k_\Lambda \defeq  \Bigl\{Y\in \Gr{k}{k+2}\Big| \lambda_\Lambda(Y)\eta\lambda_\Lambda(Y)^\intercal = 0 \Bigl\}.\]
 \end{dfn}
\begin{prop}[\cite{companion}]
\label{prop:zerolocus submfld}
    \(\mathbf Y^k_\Lambda\subset \Gr{k}{k+2}\) is a smooth submanifold of dimension \(2k-3\).
\end{prop}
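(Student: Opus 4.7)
The plan is a standard regular-value-theorem argument: realize $\mathbf Y^k_\Lambda$ as the zero locus of a smooth map $F$ from $\Gr{k}{k+2}$ (of dimension $2k$) into a three-dimensional target, and show that $0$ is a regular value, yielding codimension $3$ and hence dimension $2k-3$. To set up $F$, fix $Y_0\in\mathbf Y^k_\Lambda$ and, using Proposition~\ref{prop:twistor lambda}, choose indices $i_0<j_0\in[2k]$ with $\twist{Y_0}{i_0}{j_0}\neq 0$. On a neighborhood $U$ of $Y_0$ in $\Gr{k}{k+2}$, the $i_0$-th and $j_0$-th rows $v_1(Y),v_2(Y)$ of $\Twist{Y}$ provide a smooth local frame for $\lambda_\Lambda(Y)$. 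Define
\[
F\colon U\longrightarrow\mathrm{Sym}^2(\mathbb{R}^2),\qquad F(Y)_{ab}\defeq v_a(Y)\,\eta\,v_b(Y)^{\intercal}.
\]
Each entry is a polynomial in the Pl\"ucker coordinates of $Y$ with coefficients from $\Lambda$, so $F$ is smooth, and by construction $F^{-1}(0)=\mathbf Y^k_\Lambda\cap U$.

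The main step is verifying surjectivity of $dF_Y$ at every $Y\in F^{-1}(0)$. A first piece of evidence is that by Proposition~\ref{prop:extends to a nbhd} and Proposition~\ref{prop:mom cons}, the image $\widetilde\Lambda(U')$ of an open neighborhood of $\OGnon{k}{2k}$ is already a $(2k-3)$-dimensional smooth submanifold contained in $F^{-1}(0)$, forcing $dF$ to have full rank $3$ near this chunk. To promote this to surjectivity on all of $\mathbf Y^k_\Lambda$, I would pass to the ``$\lambda$-side'': the twistor map $\lambda_\Lambda:\Gr{k}{k+2}\dashrightarrow\Gr{2}{\Lambda^\intercal}$ is a birational equivalence between two $2k$-dimensional Grassmannians, coming from the identification of twistors with Pl\"ucker coordinates of $\lambda$ (Proposition~\ref{prop:twist pluckers}). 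Under it, $F(Y)=0$ becomes the classical isotropy condition cutting out the orthogonal Grassmannian $\mathrm{OG}_2(\Lambda^\intercal,\eta|_{\Lambda^\intercal})$ inside $\Gr{2}{\Lambda^\intercal}$, and when $\eta|_{\Lambda^\intercal}$ is non-degenerate the standard theory of isotropic Grassmannians guarantees that it is smooth of codimension exactly $3$.

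The principal obstacle is verifying that $\eta|_{\Lambda^\intercal}$ is non-degenerate for every positive $\Lambda$: since $\eta$ has split signature $(k,k)$, nothing a priori prevents a $(k+2)$-dimensional $\Lambda^\intercal$ from meeting its own $\eta$-annihilator. I would attack this either by a Cauchy--Binet expansion of $\det(\Lambda^{\intercal}\eta\Lambda)$, using the positivity of the maximal minors of $\Lambda$ together with the sign pattern of the principal minors of $\eta$ to rule out total cancellation; or by a connectedness argument, starting from an explicit model positive $\Lambda$ where non-degeneracy is manifest and transporting the smooth-submanifold conclusion along a path of positive matrices, exploiting that $F$ depends continuously on $\Lambda$ and that the rank of $dF$ is lower-semicontinuous. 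Once smoothness on the $\lambda$-side is secured, pulling back along $\lambda_\Lambda$ completes the argument and identifies the dimension as the expected $2k-3$, consistent with Proposition~\ref{prop:extends to a nbhd} and Theorem~\ref{dimension thm}.
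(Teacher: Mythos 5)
You correctly reduce the problem: through the map $Y\mapsto\lambda_\Lambda(Y)=\Lambda\cdot Y^\perp$ (which is a \emph{global} diffeomorphism of $\Gr{k}{k+2}$ onto $\Gr{2}{\Lambda^\intercal}$, not merely birational), the condition $\lambda\eta\lambda^\intercal=0$ becomes isotropy for the symmetric form $\eta|_{\Lambda^\intercal}$, and the regular-value theorem would finish the job once that form is non-degenerate. But the non-degeneracy of $\eta|_{\Lambda^\intercal}$—which you correctly flag as the crux—is not established, and neither of your two proposed completions can establish it. Cauchy--Binet gives
\[
\det(\Lambda^\intercal\eta\Lambda)\;=\;\sum_{I\in\binom{[2k]}{\,k+2}}\det(\eta_I)\,\Delta_I(\Lambda)^2,
\]
and for $k\ge 3$ the signs $\det(\eta_I)=\pm 1$ genuinely take both values, so positivity of the $\Delta_I(\Lambda)$ does not prevent cancellation. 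In fact cancellation occurs: for $k=3$ and the positive Vandermonde $\Lambda_{ij}=i^{\,j-1}$ on nodes $1,\dots,6$, the palindromic symmetry forces $\Delta_{[6]\setminus\{i\}}(\Lambda)=\Delta_{[6]\setminus\{7-i\}}(\Lambda)$ while $\det(\eta_{[6]\setminus\{i\}})=-\det(\eta_{[6]\setminus\{7-i\}})$, so the sum vanishes exactly and $\eta|_{\Lambda^\intercal}$ acquires a one-dimensional radical at a bona fide positive $\Lambda$.

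The same example defeats the connectedness route: lower semicontinuity of $\mathrm{rank}\,dF$ only shows that non-degeneracy is an \emph{open} condition on $\Lambda$, which cannot propagate it across the whole connected cell $\mathrm{Mat}^>_{2k\times(k+2)}$ when the degeneracy locus $\{\det(\Lambda^\intercal\eta\Lambda)=0\}$ actually meets that cell. And the degenerate case is not harmless: at a $Y$ with $\lambda_\Lambda(Y)$ containing the radical $\mathbf r$ of $\eta|_{\Lambda^\intercal}$, the entry $F_{11}$ of your local map loses its linear part (since $\eta|_{\Lambda^\intercal}\mathbf r=0$), so $dF_Y$ drops rank and the regular-value theorem does not apply at that point. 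As written, your argument proves the statement only for those positive $\Lambda$ with $\eta|_{\Lambda^\intercal}$ non-degenerate; the remaining case requires a genuinely separate analysis of the radical direction (or an argument that the offending isotropic planes are avoided), which your proposal does not supply.
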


\section{Promotion}\label{sec:prom}
The central constructions of \cite{amptriag,even2023cluster} involve an operation termed \emph{promotion} at the level of functions, whose geometric counterparts are the BCFW map and the upper BCFW map. Interestingly, analogous maps appear in the orthogonal amplituhedron setting, although their explicit forms differ substantially. We now turn to these analogues.

Our approach is to analyze points in \(\mathcal{O}_k(\Lambda)\) inductively by applying local moves on OG graphs. To this end, we must first specify how these local moves act on \(\mathcal{U}_k\) and determine their induced transformations on the twistor variables. A precise description of the local moves on the orthogonal Grassmannian and the associated combinatorial objects is provided in Appendix~\ref{app:local moves}.

\subsection{The Basic Moves}
\subsubsection{The \texorpdfstring{\(\mathrm{Rot}\)}{Rot} Move}

\begin{dfn}
\label{def:rot of lambda}
For \(\Lambda \in \mathrm{Mat}^>_{2k\times(k+2)}\) define \(\mathrm{Rot}_{i,i+1}(\alpha)(\Lambda) \defeq R_{i,i+1}(\alpha)^{-1}\Lambda  \) for \(i<2k\) and  \(\mathrm{Rot}_{2k,1}(\alpha)(\Lambda) = R_{1,2k}(-(-1)^k\alpha) ^{-1}\Lambda \). Notice we have that for \(C\in \OGnon{k}{2k}\), 
\[C\Lambda =\mathrm{Rot}_{i,i+1}(\alpha)(C)\mathrm{Rot}_{i,i+1}(\alpha)(\Lambda).\] 
    For \([C,\Lambda,Y]\in \mathcal{U}_k\), define
    \(
    \mathrm{Rot}_{i,i+1}^{-1} (\alpha) [C,\Lambda,Y] = [\mathrm{Rot}_{i,i+1}^{-1}(\alpha) C,\mathrm{Rot}_{i,i+1}^{-1}(\alpha)\Lambda,Y].
    \)
\end{dfn}

Just as \(\mathrm{Rot}_{i,i+1}\) preserves the positivity of \(C\), \(\mathrm{Rot}_{i,i+1}^{-1}\) preserves the positivity of \(\Lambda\). It clearly commutes with the right and left actions so it is well-defined.

\begin{prop}[\cite{companion}]
\label{prop:promoting twistors}
For \([C,\Lambda,Y]\in 
\mathcal{U}_k\), \([C^\prime,\Lambda^\prime,Y^\prime] = \mathrm{Rot}_{i,i+1}^{-1}(\alpha)[C,\Lambda,Y]\) we have
    \(
  \Twist{Y^\prime}_{\Lambda^\prime} =  \Twist{Y}_\Lambda\,R^{-1},
    \)
    where \(R = R_{i,i+1}(\alpha)\) for \(i<2k\) and  \(R = R_{1,2k}(-(-1)^k\alpha)\) for \(i=2k\).
\end{prop}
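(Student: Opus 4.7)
The proof is a direct computation at the level of the twistor matrix. The first step is to unwind the Rot action: since $\mathrm{Rot}_{i,i+1}^{-1}(\alpha)$ sends $C$ to $CR^{-1}$ and $\Lambda$ to $R\Lambda$, the product satisfies $Y' = C'\Lambda' = CR^{-1}\cdot R\Lambda = C\Lambda = Y$, so the same matrix representative of $Y$ may be used before and after the move. In particular, the ``$Y$-slot'' of the $(k+2)\times(k+2)$ matrix whose determinant defines each twistor is literally unchanged.

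Next, by Definition~\ref{def twist}, $\langle Y'\,a\,b\rangle_{\Lambda'}$ is the determinant of the matrix obtained by appending the rows $\Lambda'_a$ and $\Lambda'_b$ to $Y$. Writing $\Lambda'_a = \sum_c R_{ac}\Lambda_c$ and applying multilinearity and antisymmetry of the determinant in those last two rows yields
\[
\langle Y'\,a\,b\rangle_{\Lambda'} \;=\; \sum_{c,d} R_{ac}\,R_{bd}\,\langle Y\,c\,d\rangle_\Lambda .
\]
This is already a bilinear expression in $R$ and the original twistor matrix. Because $R_{i,i+1}(\alpha)$ differs from the identity only in the $2\times 2$ block indexed by $\{i,i+1\}$, only a handful of terms are nontrivial: entries of $\langle Y\rangle$ indexed outside of $\{i,i+1\}$ are unaffected, while the rest combine the original twistors $\langle Y\,i\,\cdot\rangle$ and $\langle Y\,i{+}1\,\cdot\rangle$ with the hyperbolic coefficients appearing in $R$. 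The concrete form of $R$ produced by the local-move parameterization of Appendix~\ref{app:local moves}, together with its $\eta$-orthogonality and the rank-$2$/antisymmetry structure of $\langle Y\rangle$ recorded in Propositions~\ref{prop:twist pluckers} and~\ref{prop:twistor lambda}, then lets one collapse this bilinear form to the single right-multiplication $\langle Y\rangle_\Lambda R^{-1}$.

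The case $i=2k$ is handled by the same calculation, with the extra wrinkle that the cyclic seam between columns $2k$ and $1$ produces a sign factor $(-1)^k$, of the same origin as in Proposition~\ref{prop:Consecutive Twistors}. Absorbing this sign into the argument of the rotation, i.e.~taking $R=R_{1,2k}(-(-1)^k\alpha)$, restores the identical matrix identity.

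The main obstacle is the final simplification: the naive output of the multilinear expansion is a symmetric bilinear pairing of $\langle Y\rangle_\Lambda$ with two copies of $R$, and getting this to collapse to a clean right-multiplication by $R^{-1}$ requires using the $\eta$-orthogonality of $R$ in tandem with the Pl\"ucker relations of the twistor matrix (Proposition~\ref{prop:twist pluckers}) to absorb the left-hand copy of $R$. Keeping the signs straight here, particularly in the $i=2k$ case where the cyclic twist interacts with both $\eta$ and the orientation of the Pfaffian structure on $\langle Y\rangle$, is the most delicate part of the argument.
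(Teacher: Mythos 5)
Your multilinear expansion is correct: since $Y'=Y$ and $\Lambda'_a = \sum_c R_{ac}\Lambda_c$, multilinearity and antisymmetry of the determinant in the last two rows give
\[
\twist{Y'}{a}{b}_{\Lambda'} \;=\; \sum_{c,d} R_{ac}\,R_{bd}\,\twist{Y}{c}{d}_{\Lambda},
\qquad\text{i.e.}\qquad
\Twist{Y'}_{\Lambda'} \;=\; R\,\Twist{Y}_{\Lambda}\,R^{\intercal}
\]
as an honest matrix identity. The gap is the ``collapse'' step, which you yourself single out as ``the most delicate part'' and then assert away. The conjugation $R\Twist{Y}R^{\intercal}$ is \emph{not} equal to the one-sided product $\Twist{Y}R^{-1}$, and no combination of $R\eta R^{\intercal}=\eta$, the rank-$2$ antisymmetry of $\Twist{Y}$, or the Pl\"ucker relations produces such an equality. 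The quickest diagnostic: $R\Twist{Y}R^{\intercal}$ is again antisymmetric, as any twistor matrix must be, whereas $\Twist{Y}R^{-1}$ fails to be antisymmetric for $\alpha\neq 0$. Explicitly, for $m\notin\{i,i+1\}$,
\[
(\Twist{Y}R^{-1})_{im} = \twist{Y}{i}{m},\qquad
(\Twist{Y}R^{-1})_{mi} = -\cosh\alpha\,\twist{Y}{i}{m} + \sinh\alpha\,\twist{Y}{i{+}1}{m},
\]
so $(\Twist{Y}R^{-1})_{mi}\neq -(\Twist{Y}R^{-1})_{im}$ in general, while the conjugation correctly gives $(R\Twist{Y}R^{\intercal})_{im} = \cosh\alpha\,\twist{Y}{i}{m} + \sinh\alpha\,\twist{Y}{i{+}1}{m} = -(R\Twist{Y}R^{\intercal})_{mi}$. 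These two matrices already disagree in their $(i,m)$ entries, so the discrepancy is not a scaling or projective artefact either.

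In short: the $\eta$-orthogonality of $R$ does let you shuffle transposes and inverses ($R^{\intercal}=\eta R^{-1}\eta$), and the Pl\"ucker/rank-$2$ structure gives $\Twist{Y}\eta\Twist{Y}=0$, but neither fact absorbs one copy of $R$ in a conjugation into the other side. Whatever precise form the companion reference establishes, the assertion that the bilinear form ``collapses'' to a right-multiplication has not been proven by your argument, and as a literal matrix statement it is false. Note also that the transformation law your multilinearity step \emph{does} establish, $\Twist{Y'} = R\Twist{Y}R^{\intercal}$, is exactly the one encoded (as a conjugation on the two indices) by the abstract $\mathrm{Rot}$ action of Definition~\ref{abs substit def}, in the sense that makes Proposition~\ref{prop:com diag} work. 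If that conjugation is the intended content, your multilinearity computation already finishes the proof and the ``collapse'' is both unnecessary and incorrect.
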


\subsubsection{The \texorpdfstring{\(\mathrm{Cyc}\)}{Cyc} Move}

\begin{obs}
\label{cyc plu}
By definition it is clear that we have
    \(\Delta_I(C) = \Delta_{\mathrm{Cyc}(I)}(\mathrm{Cyc}(C)).\)
    
\end{obs}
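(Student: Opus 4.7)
The observation is stated as immediate from the definition, so my plan is simply to unpack how $\mathrm{Cyc}$ acts on a matrix representative, as given in Appendix~\ref{app:local moves}, and verify the Pl\"ucker identity directly by a case split on whether $2k \in I$.

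First I would recall that, at the matrix level, column $j$ of $\mathrm{Cyc}(C)$ equals column $j-1$ of $C$ for $j=2,\ldots,2k$, while column $1$ of $\mathrm{Cyc}(C)$ equals $(-1)^{k-1}$ times column $2k$ of $C$. The sign $(-1)^{k-1}$ on the wrap-around column is forced by the requirement that non-negativity of Pl\"ucker coordinates be preserved; the orthogonality $C\eta C^{\intercal}=0$ is preserved independently of this sign, since $\eta$ has period $2$ in its alternating diagonal entries and $s^2=1$ for $s=\pm 1$.

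The identity $\Delta_I(C) = \Delta_{\mathrm{Cyc}(I)}(\mathrm{Cyc}(C))$, with $\mathrm{Cyc}(I) \equiv I+1 \pmod{2k}$, then splits into two cases. When $2k \notin I$, the set $\mathrm{Cyc}(I)$ is already in increasing order and does not involve the wrap-around column, so the two $k \times k$ submatrices agree column-for-column and the determinants match trivially. When $2k \in I$, write $I = \{j_1 < \cdots < j_{k-1} < 2k\}$, so that $\mathrm{Cyc}(I) = \{1 < j_1+1 < \cdots < j_{k-1}+1\}$. The submatrix of $\mathrm{Cyc}(C)$ on $\mathrm{Cyc}(I)$, taken in the standard increasing order, equals $\bigl((-1)^{k-1} c_{2k},\, c_{j_1},\, \ldots,\, c_{j_{k-1}}\bigr)$, whose determinant is $(-1)^{k-1} \cdot (-1)^{k-1} \Delta_I(C) = \Delta_I(C)$. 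The two factors of $(-1)^{k-1}$ arise, respectively, from the wrap-around sign in the definition of $\mathrm{Cyc}$ and from the $k-1$ adjacent transpositions needed to move column $2k$ from the first slot to its standard last slot.

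There is no genuine obstacle here; the observation really just records that the sign convention hard-coded into the definition of $\mathrm{Cyc}$ in the appendix is exactly the one that makes Pl\"ucker coordinates transform as a pure permutation of indices, which is both its motivation and its justification.
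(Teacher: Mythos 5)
Your proof is correct and takes the only natural approach: the paper presents this as an immediate consequence of the definition and gives no written proof, so your case split on whether $2k \in I$ is exactly the verification the authors leave implicit. The sign bookkeeping is right: column $1$ of $\mathrm{Cyc}(C)$ is $-(-1)^k C^{2k} = (-1)^{k-1}C^{2k}$, and the two $(-1)^{k-1}$ factors (one from that wrap-around sign, one from the $k-1$ adjacent column swaps needed to reorder) cancel exactly as you say.
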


\begin{dfn}
    Define the action of \(\mathrm{Cyc}\) on \(\Lambda\) to be the same as on \(C\) but cycling the rows instead of the columns. That is, for \(\Lambda\in\mathrm{Mat}_{2k\times(k+2)}\) with rows \(\Lambda_i\)
    define
        \[
     \mathrm{Cyc}_k(\Lambda) =\scalebox{0.8}{\(
         \left(
    \begin{array}{ccc}
    \horzbar&-(-1)^k\Lambda_{2k}&\horzbar\\
         \horzbar&\Lambda_1&\horzbar  \\
         \horzbar&\Lambda_2&\horzbar\\
         &\vdots&\\
         \horzbar&\Lambda_{2k-1}&\horzbar\\
    \end{array}
    \right)\)}.
    \]
     The resulting matrix is positive by the previous observation.
\end{dfn}
\begin{prop}[\cite{companion}]
\label{cyc twist}
    We have
    \[
    \twist{(C\,\Lambda)}{i}{j}_\Lambda = (-(-1)^{k})^{\delta_{i,\,2k}+\delta_{j,\,2k}}\twist{(\mathrm{Cyc}(C)\,\mathrm{Cyc}(\Lambda))}{i+1}{j+1}_{\mathrm{Cyc}(\Lambda)},
    \]
    when the indices are considered modulo \(2k\). Thus we can write
    \(
    \mathrm{Cyc}(\Twist{C\,\Lambda}) = \Twist{\mathrm{Cyc}(C)\,\mathrm{Cyc}(\Lambda)}
    \).
\end{prop}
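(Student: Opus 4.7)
The plan is to reduce the proposition to a direct verification from the definitions of $\mathrm{Cyc}$ on $C$ and on $\Lambda$ and the definition of the twistor variable. The key conceptual observation is that the sign $-(-1)^k$ appearing in the definition of $\mathrm{Cyc}(\Lambda)$ is precisely the sign needed so that $\mathrm{Cyc}$ is compatible with matrix multiplication (i.e., the actions on the two factors cancel out), plus the sign demanded by cyclically relabeling the ``attached'' $\Lambda_i, \Lambda_j$ rows in the $(k+2)\times(k+2)$ determinant defining the twistor.

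First I would record the cyclic action on $C$ implicit in Observation~\ref{cyc plu}: an easy determinant sign computation (moving column $2k$ to the front costs $(-1)^{k-1}$) forces
\[(\mathrm{Cyc}(C))^{i+1} = C^{i} \quad (i<2k), \qquad (\mathrm{Cyc}(C))^{1} = -(-1)^{k} C^{2k},\]
mirroring the definition of $\mathrm{Cyc}$ on $\Lambda$. Second I would verify the identity $\mathrm{Cyc}(C)\,\mathrm{Cyc}(\Lambda) = C\,\Lambda$: in the matrix product the $j=1$ summand contributes $(-(-1)^k)^2 C_{a,2k}\Lambda_{2k,b} = C_{a,2k}\Lambda_{2k,b}$, while every other $j=i+1$ term contributes $C_{a,i}\Lambda_{i,b}$, summing to $(C\Lambda)_{ab}$. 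This already collapses the ``body'' of the two determinants in the twistor expressions to the same $k\times(k+2)$ matrix.

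Third, I would plug in the definition
\[\twist{\mathrm{Cyc}(C)\mathrm{Cyc}(\Lambda)}{i+1}{j+1}_{\mathrm{Cyc}(\Lambda)} = \det\begin{pmatrix} C\Lambda \\ \mathrm{Cyc}(\Lambda)_{i+1} \\ \mathrm{Cyc}(\Lambda)_{j+1} \end{pmatrix}\]
and use that $\mathrm{Cyc}(\Lambda)_{\ell+1} = \Lambda_\ell$ whenever $\ell\neq 2k$, while $\mathrm{Cyc}(\Lambda)_1 = -(-1)^k \Lambda_{2k}$. Pulling each such sign out of the determinant yields exactly the factor $(-(-1)^k)^{\delta_{i,2k}+\delta_{j,2k}}$ multiplying $\twist{C\Lambda}{i}{j}_\Lambda$, establishing the claimed identity. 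The final matrix reformulation is then immediate from the definition of $\mathrm{Cyc}$ on $2k\times 2k$ matrices: applying $\mathrm{Cyc}$ to $\Twist{C\Lambda}$ permutes rows and columns by $i\mapsto i+1$ with the sign $-(-1)^k$ whenever the row or column $2k$ is moved to position $1$, which is precisely the sign just extracted.

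The only step requiring genuine care is the bookkeeping of signs: the $-(-1)^k$ in the $C$-direction and in the $\Lambda$-direction must be shown to cancel inside the product (so that $\mathrm{Cyc}$ is a true ``rotation'' of the pair) while the analogous sign in the two ``appended'' $\Lambda$-rows of the twistor determinant does not cancel and instead is what produces the $\delta_{i,2k}+\delta_{j,2k}$ exponent. This is straightforward but is the substantive content of the proof.
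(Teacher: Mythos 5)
Your proposal is correct, and the argument is the natural one: the sign $-(-1)^k$ built into $\mathrm{Cyc}$ on the columns of $C$ and on the rows of $\Lambda$ cancels in the product $\mathrm{Cyc}(C)\,\mathrm{Cyc}(\Lambda)=C\Lambda$, while the analogous sign reappears (uncancelled) when the appended row $\mathrm{Cyc}(\Lambda)_{j+1}$ equals $-(-1)^k\Lambda_{2k}$, producing exactly the factor $(-(-1)^k)^{\delta_{i,2k}+\delta_{j,2k}}$. The paper cites this proposition to the companion paper rather than proving it here, so there is nothing to compare it against, but your computation checks out.

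One small note on exposition: the formula $(\mathrm{Cyc}(C))^1=-(-1)^kC^{2k}$, $(\mathrm{Cyc}(C))^{i+1}=C^i$ is the paper's \emph{definition} of the $\mathrm{Cyc}$ move on matrices (given in Appendix~\ref{app:local moves}), with Observation~\ref{cyc plu} a consequence of it; your proof infers the action on columns \emph{from} Observation~\ref{cyc plu}. The two are of course equivalent (and your $(-1)^{k-1}$ sign count is right), but it reads more cleanly to simply invoke the definition rather than re-derive it.
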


\begin{dfn}
        For \([C,\Lambda, Y]  \in \mathcal{U}_k\) define
    \(
    \mathrm{Cyc}[C,\Lambda, Y] = [\mathrm{Cyc}(C),\mathrm{Cyc}(\Lambda), \mathrm{Cyc}(C)\mathrm{Cyc}(\Lambda)]
    \).
    As the move permutes columns of \(C\) and rows of \(\Lambda\), it is clearly invariant under the left and right actions. 
\end{dfn}

\subsubsection{The \texorpdfstring{\(\mathrm{Inc}\)}{Inc} Move}
We now describe how the \(\mathrm{Inc}_i\) move, defined in~\ref{inc subsubsection}, acts on \(\mathcal{U}_k\). To do so, we must first specify its action on \(\Lambda\). Throughout this section, the indices will be viewed as elements of an arbitrary set of integers, not necessarily consecutive.

While the inverse of the \(\mathrm{Inc}\) move is straightforward to obtain when applied to matrices of the form \(\mathrm{Inc}_i(C)\), it is less clear how to modify the corresponding \(\Lambda\) matrices so as to preserve the twistor variables while simultaneously changing the value of \(k\). We therefore introduce the \(\mathrm{Inc}_i\) operation on \(\Lambda\) matrices as follows:

Since we are discussing this for the purposes of \(\mathcal{U}_k\), using the right action, we can assume without loss of generality that \(\Lambda_i = e_i\) and \(\Lambda_{i+1} = e_{i+1}\),  where \(e_j\) are the standard basis vectors.

For sets of indices \(A\) and \(B\) let \( \mathrm{Mat}_{A\times B}\) the set of \(A\times B\) indexed real matrices, that is, a choice of real number of each pair in \(A\times B\). Let \(N = [2k]\), \(K = [k+2]\), \(I = \{i,\,i+1\}\), and \(\Lambda^\prime\in \mathrm{Mat}_{(N\setminus I )\times (K \setminus \{i+1\}}\) be such that
\begin{align*}
    &\Lambda^{\prime j} = \Lambda^{j}_{N\setminus I} \,\, &\text{for} \, j<i\\
    &\Lambda^{\prime i} = \Lambda^{i}_{N \setminus I} - \Lambda^{i+1}_{N \setminus I} \,\, &\text{for} \, j=i\\
    &\Lambda^{\prime j} = \Lambda^{j}_{N \setminus I} \,\, &\text{for} \, j>i+1.
\end{align*}
\begin{dfn}
Let us now define
    \(\mathrm{Inc}_i^{-1}(\Lambda) = \Lambda^\prime\)
\end{dfn}
\begin{prop}[\cite{companion}]
    The action above is well-defined for general \(\Lambda\in \mathrm{Mat}^>_{2k\times(k+2)}\).
\end{prop}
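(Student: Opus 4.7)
The plan is to extend the definition of $\mathrm{Inc}_i^{-1}$, which as written assumes $\Lambda$ is in the standard form $\Lambda_i = e_i, \Lambda_{i+1} = e_{i+1}$, to arbitrary $\Lambda \in \mathrm{Mat}^>_{2k \times (k+2)}$ by first bringing $\Lambda$ to standard form via the right $\mathrm{GL}_{k+2}$ action. Two things must be checked: such a reduction always exists, and the resulting $\Lambda'$ does not depend on the reduction, up to the right $\mathrm{GL}_{k+1}$ action on $\mathrm{Mat}_{(N \setminus I) \times (K \setminus \{i+1\})}$.

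For existence, observe that for a positive $\Lambda$ the rows $\Lambda_i$ and $\Lambda_{i+1}$ must be linearly independent: otherwise any $(k+2)\times(k+2)$ submatrix of $\Lambda$ containing both rows would have rank at most $k+1$ and hence vanishing determinant, contradicting $\Lambda \in \mathrm{Mat}^>_{2k \times (k+2)}$. Linear independence lets us extend $\{\Lambda_i, \Lambda_{i+1}\}$ to a basis of $\mathbb{R}^{k+2}$ and construct $g \in \mathrm{GL}_{k+2}$ with $\Lambda_i g = e_i$ and $\Lambda_{i+1} g = e_{i+1}$; by adjusting $\det(g)$ with a sign if needed, we may further arrange $\Lambda g \in \mathrm{Mat}^>$.

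For uniqueness, any two such reductions $g_1, g_2$ differ by $h := g_1^{-1} g_2$ lying in the stabilizer $\{h \in \mathrm{GL}_{k+2} : h_i = e_i,\, h_{i+1} = e_{i+1}\}$ of the standard-form constraint. A column-by-column multilinearity computation, splitting on whether $j < i$, $j = i$, or $j > i+1$ and using that rows $i, i+1$ of $h$ are standard, produces $\mathrm{Inc}_i^{-1}(\Lambda g_2) = \mathrm{Inc}_i^{-1}(\Lambda g_1) \cdot h'$ for some $h' \in \mathrm{GL}_{k+1}$ whose entries are simple combinations of those of $h$: specifically $h'|_{K\setminus I} = h|_{K \setminus I}$, $h'_{i,i}=1$, $h'_{i, j}=0$ for $j\in K \setminus I$, and $h'_{l,i} = h_{l,i} - h_{l,i+1}$ for $l \in K\setminus I$. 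Successive row expansion of $\det(h)$ along the rows $e_i, e_{i+1}$ (and of $\det(h')$ along the row $e_i$) yields $\det(h') = \det(h|_{K\setminus I}) = \det(h)\neq 0$, so $h' \in \mathrm{GL}_{k+1}$, establishing well-definedness of $\mathrm{Inc}_i^{-1}(\Lambda)$ modulo the right $\mathrm{GL}_{k+1}$ action.

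The main source of technical friction is the overloaded notation—$i$ and $i+1$ label both rows in $N = [2k]$ and columns in $K = [k+2]$—together with the case analysis on the three types of columns of $\Lambda'$; once the formula for $h'$ is identified, the verifications are elementary. If one additionally wants the image to lie in $\mathrm{Mat}^>_{(2k-2)\times(k+1)}$, a column expansion gives $\Delta_S(\Lambda') = \pm\bigl(\Delta_{S\cup\{i\}}(\Lambda) + \Delta_{S\cup\{i+1\}}(\Lambda)\bigr)$, a positive combination of two maximal minors of $\Lambda$ up to a sign that can be absorbed by a uniform choice of representative in the $\mathrm{GL}_{k+1}$ orbit.
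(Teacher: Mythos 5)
Your core argument is correct and is the natural way to establish the claim: reduce to the standard form via the right $\mathrm{GL}_{k+2}$ action, characterize the stabilizer, and produce an explicit $h'\in\mathrm{GL}_{k+1}$ intertwining the two outputs. I verified the stabilizer condition ($h_i=e_i$, $h_{i+1}=e_{i+1}$), the formula for $h'$ (which does follow from multilinearity once one uses $h_{ij}=\delta_{ij}$ and $h_{i+1,j}=\delta_{i+1,j}$ to kill the cross terms in columns $j\in K\setminus I$ and to produce $h'_{i,i}=1$), and the determinant identity $\det(h')=\det\bigl(h|_{K\setminus I}\bigr)=\det(h)$ by successive Laplace expansions. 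The existence step (linear independence of $\Lambda_i,\Lambda_{i+1}$ from positivity of the maximal minors, valid since $k\geq2$) is also fine, as is the observation that one can flip the sign of a column $j\notin I$ of $g$ to arrange $\det(g)>0$. Since the paper defers this proof to the companion paper, a direct comparison is not possible, but this is exactly the argument one would expect.

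One caveat on your final sentence. The sign in $\Delta_J(\Lambda')=\pm\bigl(\Delta_{J\cup\{i\}}(\Lambda)+\Delta_{J\cup\{i+1\}}(\Lambda)\bigr)$ is \emph{not} independent of $J$ under the definition as printed: carrying out the Laplace expansions carefully gives a factor $(-1)^{|\{j\in J:\,j<i\}|+i}$, which varies with $J$ once $i\geq 2$ (e.g.\ for $k=4$, $i=2$, the signs for $J=\{1,4,5,6,7\}$ and $J=\{4,5,6,7,8\}$ are opposite). A $J$-dependent sign cannot be absorbed by a single right $\mathrm{GL}_{k+1}$ action, which rescales all maximal minors by the same $\det(h')$; one would need a row sign pattern (a diagonal left factor) as in the stated $\mathrm{Inc}_i$ action on $C$, which the printed definition of $\mathrm{Inc}_i^{-1}$ on $\Lambda$ does not include. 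This is really an issue with the conventions in the paper's Proposition~\ref{prop:inc plu} and Corollary~\ref{coro:inc pos} rather than with your well-definedness argument, which does not depend on the precise sign. It would be cleaner to drop the final claim, or restrict it to the observation that the minors of $\Lambda'$ are all nonzero (hence $\Lambda'$ is full rank), and refer to the later Corollary for positivity.
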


\begin{prop}[\cite{companion}]
\label{prop:inc plu}
For every \(J \in \binom{N\setminus I}{k+1}  \),

\[\mathrm{det}(\mathrm{Inc}_i^{-1}(\Lambda)_J) = \mathrm{det}(\Lambda_{J\cup \{i\}}) + \mathrm{det}(\Lambda_{J\cup \{i+1\}})
\]
\end{prop}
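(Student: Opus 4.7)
The plan is to reduce the identity to a direct linear-algebra manipulation in a convenient gauge. By the preceding proposition, $\mathrm{Inc}_i^{-1}$ descends to a well-defined operation modulo the right $\mathrm{GL}_{k+2}$ action, so I may first pick a representative with $\Lambda_i = e_i$ and $\Lambda_{i+1} = e_{i+1}$, the standard basis vectors of $\mathbb{R}^{k+2}$. This is precisely the normalization in which $\Lambda' \defeq \mathrm{Inc}_i^{-1}(\Lambda)$ is defined in the text, and the identity to be proven is polynomial in the entries of $\Lambda$, so it suffices to verify it in this gauge.

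The next step is to encode the column operation defining $\Lambda'$ as a right multiplication. Let $T$ be the auxiliary $(k+1)\times(k+2)$ matrix with rows indexed by $K\setminus\{i+1\}$: row $j$ equals $e_j$ for $j\neq i$, while row $i$ equals $e_i - e_{i+1}$. Then $\Lambda' = (\Lambda\, T^{\intercal})_{N\setminus I}$, and for any $J\in\binom{N\setminus I}{k+1}$ one has $\Lambda'_J = \Lambda_J\, T^{\intercal}$. Applying Cauchy--Binet,
\[
\det(\Lambda'_J)\;=\;\sum_{S\in\binom{K}{k+1}} \det(\Lambda_{J,S})\,\det(T_S).
\]
A quick inspection of $T$ shows that $\det(T_{K\setminus\{q\}}) = 0$ whenever $q\notin\{i,i+1\}$, since row $q$ of $T$ is $e_q$ and the submatrix obtained by removing column $q$ then has a zero row. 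The two surviving subsets $S=K\setminus\{i\}$ and $S=K\setminus\{i+1\}$ yield signed permutation matrices whose determinants are trivial to read off.

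On the right-hand side, I would Laplace-expand $\det(\Lambda_{J\cup\{i\}})$ along its row indexed by $i$ (equal to $e_i$) and $\det(\Lambda_{J\cup\{i+1\}})$ along its row indexed by $i+1$ (equal to $e_{i+1}$). In each expansion exactly one cofactor survives, producing $\det(\Lambda_{J,K\setminus\{i\}})$ and $\det(\Lambda_{J,K\setminus\{i+1\}})$ up to signs $(-1)^{p+i}$ and $(-1)^{p+i+1}$. The crucial observation is that $i$ and $i+1$ occupy the \emph{same} row-position $p = |\{j\in J:j<i\}|+1$ in $J\cup\{i\}$ and $J\cup\{i+1\}$, because $J$ avoids both $i$ and $i+1$. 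Consequently the two Laplace signs differ by exactly one factor of $-1$, which matches the sign discrepancy between the two surviving terms on the Cauchy--Binet side.

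No conceptual obstacle arises: the whole argument is Cauchy--Binet plus a single-row Laplace expansion, and the assumption $\Lambda_i = e_i, \Lambda_{i+1} = e_{i+1}$ collapses both sides onto the same pair of $(k+1)\times(k+1)$ minors of $\Lambda$. The only delicate part is keeping track of signs — in particular the equality of row-positions of $i$ and $i+1$ in their respective augmented sets, together with the signs of the two surviving $T$-minors — after which the identity falls out immediately.
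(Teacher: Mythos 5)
Your structure — multilinearity/Cauchy--Binet on the left, single-row Laplace expansion on the right — is the right tool, and you correctly observe that both sides collapse onto the pair of minors $\det(\Lambda_{J,K\setminus\{i\}})$ and $\det(\Lambda_{J,K\setminus\{i+1\}})$, with $i$ and $i+1$ occupying the same row-position $p$ so that the two Laplace signs differ by one factor of $-1$. The gap is the last sentence: matching the \emph{relative} sign is necessary but not sufficient, and you never actually check the absolute sign. Carrying it through, the Cauchy--Binet side gives $\det(\Lambda'_J)=\det(\Lambda_{J,K\setminus\{i+1\}})-\det(\Lambda_{J,K\setminus\{i\}})$ (with $\det(T_{K\setminus\{i\}})=-1$ and $\det(T_{K\setminus\{i+1\}})=+1$), whereas the Laplace side gives
\[
\det(\Lambda_{J\cup\{i\}})+\det(\Lambda_{J\cup\{i+1\}})=(-1)^{p+i}\bigl[\det(\Lambda_{J,K\setminus\{i\}})-\det(\Lambda_{J,K\setminus\{i+1\}})\bigr],
\qquad p=\left|\{j\in J:j<i\}\right|+1.
\]
These agree only when $p+i$ is odd, and $p$ depends on $J$, so under the most natural reading (submatrices ordered increasingly in rows and columns) the two sides differ by the $J$-dependent sign $(-1)^{p+i+1}$.

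A small concrete test makes the issue visible. Take $k=3$, $i=2$, $\Lambda_2=e_2$, $\Lambda_3=e_3$, and generic remaining rows; for $J=\{1,4,5,6\}$ (the only choice), a short numerical computation gives $\det(\Lambda'_J)=-\bigl(\det(\Lambda_{J\cup\{2\}})+\det(\Lambda_{J\cup\{3\}})\bigr)$, not $+$. So either there is an index/sign convention for sub-indexed determinants being used in the cited companion that your argument does not track, in which case you should find and invoke it, or the displayed identity needs the compensating sign; but in either case ``the identity falls out immediately'' is precisely where the argument needs to be made explicit, not elided. As a practical fix, compute the two $T$-minor signs and the two Laplace signs once and for all as functions of $p$ and $i$ and write out the resulting overall prefactor; that forces the discrepancy (or the resolving convention) into the open.
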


\begin{coro}
\label{coro:inc pos}
    Thus if \(\Lambda\in \mathrm{Mat}^>_{2k \times (k+2)}\), then \(\mathrm{Inc}_i^{-1}(\Lambda)\in \mathrm{Mat}^>_{2k \times (k+2)}\). 
\end{coro}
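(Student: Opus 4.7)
The corollary is an immediate consequence of Proposition~\ref{prop:inc plu}: once the determinantal identity there is in hand, positivity follows from the fact that a sum of two positive numbers is positive. My plan is therefore to unpack that identity and check that both terms on its right-hand side are genuine maximal minors of the original $\Lambda$.

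First I would pin down the dimensional bookkeeping. The input $\Lambda$ is of size $2k \times (k+2)$, so its maximal minors are $(k+2)\times(k+2)$ determinants $\det(\Lambda_S)$ indexed by $(k+2)$-subsets $S \subseteq N = [2k]$. After applying $\mathrm{Inc}_i^{-1}$, the output matrix is indexed by $(N\setminus I)\times (K\setminus\{i+1\})$, hence of size $(2k-2)\times(k+1)$, so its maximal minors are $(k+1)\times(k+1)$ determinants $\det(\mathrm{Inc}_i^{-1}(\Lambda)_J)$ indexed by $(k+1)$-subsets $J \subseteq N \setminus I$.

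Second, for any such $J$, Proposition~\ref{prop:inc plu} gives
$$\det\bigl(\mathrm{Inc}_i^{-1}(\Lambda)_J\bigr) \;=\; \det(\Lambda_{J \cup \{i\}}) \;+\; \det(\Lambda_{J \cup \{i+1\}}).$$
Since $J \subseteq N \setminus \{i,i+1\}$ has size $k+1$, both $J \cup \{i\}$ and $J \cup \{i+1\}$ are $(k+2)$-subsets of $N = [2k]$. The two determinants on the right-hand side are therefore honest maximal minors of $\Lambda$, and the hypothesis $\Lambda \in \mathrm{Mat}^>_{2k\times(k+2)}$ says they are strictly positive. Their sum is then strictly positive, and since this holds for every admissible $J$, every maximal minor of $\mathrm{Inc}_i^{-1}(\Lambda)$ is strictly positive, which is the claim.

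There is no real obstacle: all the work is absorbed into Proposition~\ref{prop:inc plu}, and the only additional ingredient is that $\mathbb{R}_{>0}$ is closed under addition. The only subtlety worth flagging is verifying that the two index sets in the sum above are $(k+2)$-subsets of $[2k]$ (rather than of the restricted index set $N \setminus I$), so that the positivity hypothesis applies to them directly; once this is noted the proof is a one-liner.
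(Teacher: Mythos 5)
Your proof is correct and matches the argument the paper intends by the word ``Thus'': Corollary~\ref{coro:inc pos} is a direct unpacking of Proposition~\ref{prop:inc plu}, each maximal minor of $\mathrm{Inc}_i^{-1}(\Lambda)$ being a sum of two (strictly positive) maximal minors of $\Lambda$. You correctly flag the one subtlety worth checking, namely that $J\cup\{i\}$ and $J\cup\{i+1\}$ are genuine $(k+2)$-subsets of $[2k]$ so the positivity hypothesis applies to them.
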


\begin{prop}[\cite{companion}]
\label{prop:TaylorTwist}
For \(C\in\OGnon{k}{2k}\), \(\Lambda^\prime\in\mathrm{Mat}^>_{(2k+2)\times(k+3)}\), \( C^\prime = \mathrm{Inc}_i(C)\), \( \Lambda = \mathrm{Inc}_i^{-1} (\Lambda^\prime)\), \(Y = C\Lambda\), \(Y^\prime = C^\prime \Lambda^\prime\) and \(j_1,j_2\in[2k+2]\), we have that 
\[
\twist{Y^\prime}{j_1}{j_2}_{\Lambda^\prime}=
\begin{cases}
    \twist{Y}{j_1}{j_2}_{\Lambda}&j_1,j_2<i\\
    -\twist{Y}{j_1}{j_2-2}_{\Lambda}& j_1<i,i+1< j_2\\
    -\twist{Y}{j_1-2}{j_2}_{\Lambda}& j_2<i,i+1< j_1\\
\twist{Y}{j_1-2}{j_2-2}_{\Lambda}& i+1< j_1,j_2.
\end{cases}
\]
We also have that for \(j\in[2k+2]\),
\(
\twist{Y^\prime }{j}{i}_{\Lambda^\prime} =-(-1)^{\delta_{i,\,2k}}\twist{Y^\prime }{j}{i+1}_{\Lambda^\prime}
\) and
\(
\twist{Y^\prime}{i}{i+1}_{\Lambda^\prime} = 0.
\)
\end{prop}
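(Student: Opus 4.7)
The plan is to compute the twistor \(\twist{Y'}{j_1}{j_2}_{\Lambda'}\) directly as a \((k+3)\times(k+3)\) determinant and reduce it, via the explicit form of the \(\mathrm{Inc}_i\) move, to the \((k+2)\times(k+2)\) determinant defining \(\twist{Y}{\cdot}{\cdot}_\Lambda\). First I would use the right \(\mathrm{GL}_{k+3}\)-action on \(\Lambda'\) to pick the convenient representative \(\Lambda'_i = e_i\), \(\Lambda'_{i+1} = e_{i+1}\in\mathbb{R}^{k+3}\). Under this normalization, the explicit matrix form of \(C' = \mathrm{Inc}_i(C)\) from Appendix~\ref{app:local moves} has \(k\) rows agreeing with \(C\) outside positions \(\{i,i+1\}\) and vanishing there, together with a single ``new'' row \(\rho\) supported only on \(\{i,i+1\}\) whose nonzero entries \((1,\pm1)\) are pinned by the OG isotropy condition, with the sign flipping at the wrap-around \(i=2k\). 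Consequently \(Y'=C'\Lambda'\) contains a single distinguished row \(Y'_{\mathrm{new}} = \rho\Lambda' = e_i\pm e_{i+1}\), while the remaining \(k\) rows of \(Y'\) involve only columns of \(\Lambda'\) outside \(\{i,i+1\}\).

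The two special identities follow immediately from the linear relation \(Y'_{\mathrm{new}}=\Lambda'_i\pm\Lambda'_{i+1}\): for \(\twist{Y'}{i}{i+1}_{\Lambda'}\), three rows of the defining determinant become linearly dependent; for \(\twist{Y'}{j}{i}_{\Lambda'}\) with \(j\notin\{i,i+1\}\), substituting the relation and using that \(Y'_{\mathrm{new}}\) is already a row of \(Y'\) leaves \(\mp\twist{Y'}{j}{i+1}_{\Lambda'}\), matching the stated \(-(-1)^{\delta_{i,2k}}\). For the four generic cases \(j_1,j_2\notin\{i,i+1\}\), I would cofactor-expand the twistor determinant along the row \(Y'_{\mathrm{new}}\). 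This reduces it to \((-1)^{k+1+i}\bigl[\Delta_i(A)\mp\Delta_{i+1}(A)\bigr]\), where \(A\) is the \((k+2)\times(k+3)\) matrix with rows \(Y'_1,\ldots,Y'_k,\Lambda'_{j_1},\Lambda'_{j_2}\) and \(\Delta_l(A)\) denotes the minor of \(A\) with column \(l\) removed. By multilinearity of the determinant, this combination equals the \((k+2)\times(k+2)\) determinant of the matrix obtained from \(A\) by deleting column \(i+1\) and replacing column \(i\) with (column \(i\) of \(A\))\(-\)(column \(i+1\) of \(A\)). This is precisely the column operation defining \(\mathrm{Inc}_i^{-1}\) on \(\Lambda\); applied row-by-row it converts \(Y'_r\) into \(Y_r\) (since \(Y=C\Lambda\), and the operation passes linearly through \(C\)) and \(\Lambda'_{j_m}\) into \(\Lambda_{j_m}\). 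The resulting determinant is thus \(\twist{Y}{\sigma(j_1)}{\sigma(j_2)}_\Lambda\), where \(\sigma\colon[2k+2]\setminus\{i,i+1\}\to[2k]\) is the order-preserving relabeling.

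The main obstacle will be the careful combinatorial bookkeeping of signs so that the three contributions---the cofactor sign \((-1)^{k+1+i}\); the sign from identifying \(\Delta_i(A)-\Delta_{i+1}(A)\) with \(-\det M_Y\), where \(M_Y\) is the matrix with rows \(Y,\Lambda_{j_1},\Lambda_{j_2}\); and the sign in the new row \(\rho\) of \(C'\) (including its flip at \(i=2k\) and the row-position convention of Appendix~\ref{app:local moves})---reconcile to produce the stated case-dependent \(\pm\) pattern, namely \(+\) when \(j_1,j_2\) lie on the same side of \(\{i,i+1\}\) and \(-\) when they straddle it. A subtlety is that while several of the individual ingredients depend only on \((i,k)\), the interaction between the cofactor sign and the positivity-preserving row-placement of \(\rho\) inside \(C'\) is what ultimately introduces the case-dependence; once this is traced through, together with a routine verification that the right \(\mathrm{GL}_{k+3}\)-action leaves both sides invariant (which follows from the projective nature of twistors, see Definition~\ref{def twist}), the proposition is established.
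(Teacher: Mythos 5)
This proposition is cited from the companion paper \cite{companion}; the present paper gives no proof of it, so there is no ``paper's own proof'' to compare yours against. Evaluating your argument on its merits, there is a genuine gap in the central step.

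Your description of \(C' = \mathrm{Inc}_i(C)\) --- that it ``has \(k\) rows agreeing with \(C\) outside positions \(\{i,i+1\}\) and vanishing there'' --- misstates the matrix form in Appendix~\ref{app:local moves}. That form introduces \emph{sign flips} in the off-diagonal blocks: an old row whose source lies to the left of the insertion point carries a minus sign on the columns to the right of it, and vice versa. These signs are not a cosmetic representative choice; they genuinely change the Grassmannian element (one can check on the top cell of \(\OGnon{2}{4}\) with \(\mathrm{Inc}_3\) that dropping the signs produces negative Pl\"ucker coordinates, hence a different point of the Grassmannian). As a consequence, your claim that the column operation \(\mathrm{col}_i \mapsto \mathrm{col}_i - \mathrm{col}_{i+1}\), \(\mathrm{col}_{i+1}\mapsto\) delete, ``applied row-by-row converts \(Y'_r\) into \(Y_r\) since the operation passes linearly through \(C\)'' fails. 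Tracing it through, the old rows of \(Y'\) after the column operation are not \(Y_r=(C\Lambda)_r\) but \(\pm(CD\Lambda)_r\), where \(D\) is the diagonal sign matrix with \(+1\) on indices \(<i\) and \(-1\) on indices \(\ge i\). Showing that \(\det(CD\Lambda;\Lambda_{j_1'},\Lambda_{j_2'})\) is proportional to \(D_{j_1'}D_{j_2'}\cdot\det(C\Lambda;\Lambda_{j_1'},\Lambda_{j_2'})\) is exactly where the case-dependent sign of the proposition comes from, and establishing it requires using the orthogonality constraint \(C\eta C^\intercal=0\) (equivalently, the complementary Pl\"ucker identity \(\Delta_T(C)=\Delta_{T^c}(C)\)); it is not a generic determinantal fact.

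Relatedly, the closing paragraph attributes the case-dependent \(\pm\) to ``the interaction between the cofactor sign and the positivity-preserving row-placement of \(\rho\) inside \(C'\).'' Both of those contributions are global (they depend only on \(i\) and \(k\), not on \(j_1,j_2\)), so neither can be the source of the \(j\)-dependent sign. The \(j\)-dependence enters precisely through the block-sign structure you omitted. Your overall plan --- normalize \(\Lambda'\), expand along the new row, and relate the resulting minor difference to the column operation of \(\mathrm{Inc}_i^{-1}\), as in Proposition~\ref{prop:inc plu} --- is sound, but the write-up needs to (a) correctly record the sign structure of the \(\mathrm{Inc}_i(C)\) representative from Appendix~\ref{app:local moves}, and (b) supply the argument, using the OG condition on \(C\), that converts the resulting \(CD\Lambda\)-twistor into the claimed \(\pm\twist{Y}{j_1'}{j_2'}_\Lambda\). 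The two short identities \(\twist{Y'}{i}{i+1}=0\) and \(\twist{Y'}{j}{i}=-(-1)^{\delta_{i,2k}}\twist{Y'}{j}{i+1}\) do follow as you say from \(Y'_{\mathrm{new}}=\Lambda'_i\pm\Lambda'_{i+1}\); it is only the four generic cases that are affected by the gap.
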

\begin{coro}
\label{coro:inv inc inj}
    Let \(C_1^k, C_2^k\in \OGnon{k}{2k}\) and \(\Lambda^{k+1}\in\mathrm{Mat}^>_{(2k+2)\times(k+3)}\) be such that 
    \[\widetilde{\Lambda}^{k+1}(\mathrm{Inc}_i(C_1^k))=\widetilde{\Lambda}^{k+1}(\mathrm{Inc}_i(C_2^k)),\]
     for some \(i\in[2k]\). Then we have that \(\widetilde{\Lambda}^{k}(C_1^k)=\widetilde{\Lambda}^{k}(C_2^k),\)
     where \(\Lambda^k\defeq \mathrm{Inc}_i^{-1}(\Lambda^{k+1})\).
\end{coro}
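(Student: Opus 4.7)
The plan is to deduce this corollary directly from Proposition~\ref{prop:TaylorTwist} together with the injection of Proposition~\ref{prop:Bdiffeo}. Set $Y_r' \defeq \widetilde{\Lambda}^{k+1}(\mathrm{Inc}_i(C_r^k)) \in \Gr{k+1}{k+3}$ and $Y_r \defeq \widetilde{\Lambda}^k(C_r^k)\in \Gr{k}{k+2}$ for $r\in\{1,2\}$. The hypothesis is that $Y_1' = Y_2'$ as elements of $\Gr{k+1}{k+3}$, which is equivalent to saying that the projective Pl\"ucker vectors of any two representatives coincide, and hence so do the projective twistor vectors: there exists a single nonzero scalar $c$ such that $\twist{Y_1'}{j_1}{j_2}_{\Lambda^{k+1}} = c\,\twist{Y_2'}{j_1}{j_2}_{\Lambda^{k+1}}$ for every $j_1 < j_2 \in [2k+2]$.

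Next I would invoke Proposition~\ref{prop:TaylorTwist} applied to both $C_1^k$ and $C_2^k$ at the same index $i$. For every pair $a<b$ in $[2k]$, let $j_1 < j_2$ be the corresponding indices in $[2k+2]\setminus\{i,i+1\}$ obtained by the shift sending $j\mapsto j$ for $j<i$ and $j\mapsto j+2$ for $j\geq i$. The proposition asserts that $\twist{Y'_r}{j_1}{j_2}_{\Lambda^{k+1}} = \varepsilon_{a,b}\,\twist{Y_r}{a}{b}_{\Lambda^k}$ with a sign $\varepsilon_{a,b}\in\{\pm 1\}$ that depends only on the positions of $a,b$ relative to $i$, not on $r$. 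Combined with the previous paragraph, this gives $\twist{Y_1}{a}{b}_{\Lambda^k} = c\,\twist{Y_2}{a}{b}_{\Lambda^k}$ for all $a<b\in[2k]$, so the projective twistor vectors of $Y_1$ and $Y_2$ agree.

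Finally, by Proposition~\ref{prop:twistor lambda}, the two-plane $\lambda_{\Lambda^k}(Y_r)$ is the row span of $\Twist{Y_r}$, so equality of the projective twistor vectors forces $\lambda_{\Lambda^k}(Y_1) = \lambda_{\Lambda^k}(Y_2)$. Proposition~\ref{prop:Bdiffeo} says that $Y\mapsto \lambda$ is injective on $\mathcal{O}_k(\Lambda^k)$, and by Corollary~\ref{coro:inc pos} the matrix $\Lambda^k = \mathrm{Inc}_i^{-1}(\Lambda^{k+1})$ lies in $\mathrm{Mat}^>_{2k\times(k+2)}$ so Proposition~\ref{prop:Bdiffeo} applies; therefore $Y_1 = Y_2$, which is precisely the claim.

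There is no substantial obstacle here: the corollary is essentially a formal consequence of the twistor-level statement of Proposition~\ref{prop:TaylorTwist} once one also has the B-amplituhedron injectivity. The only minor bookkeeping is verifying that the index shift $[2k]\to [2k+2]\setminus\{i,i+1\}$ is a bijection and that the signs $\varepsilon_{a,b}$ in the four cases of the piecewise formula are independent of $r$, both of which are immediate.
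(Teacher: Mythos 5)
Your proposal is correct and follows essentially the same route as the paper: use Proposition~\ref{prop:TaylorTwist} to transfer equality of twistors from $Y'_1 = Y'_2$ down to $Y_1, Y_2$, then conclude $Y_1 = Y_2$ via the injectivity of $Y\mapsto\lambda$ from Proposition~\ref{prop:Bdiffeo}. You have simply spelled out the bookkeeping (the common scalar, the signs $\varepsilon_{a,b}$, and the positivity of $\Lambda^k$ via Corollary~\ref{coro:inc pos}) that the paper's two-line argument leaves implicit, and you cite Proposition~\ref{prop:twistor lambda} where the paper cites the closely related Proposition~\ref{prop:twist pluckers}.
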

\begin{proof}
    Write \(Y_1 \defeq \widetilde{\Lambda}^{k}(C_1^k)\) and \(Y_1 \defeq \widetilde{\Lambda}^{k}(C_2^k)\). By Proposition
   ~\ref{prop:TaylorTwist} we have that \(\twist{Y_1}{n}{m}=\twist{Y_2}{n}{m}\) for all \(n,m\in[2k]\). By Propositions~\ref{prop:Bdiffeo} and~\ref{prop:twist pluckers}, this means \(Y_1=Y_2\).
\end{proof}

\begin{dfn}
    For \([\mathrm{\mathrm{Inc}}_i(C),\Lambda,Y]\in \mathcal{U}_{k+1}^\geq\), define \(
    \mathrm{\mathrm{Inc}}_i^{-1} [\mathrm{\mathrm{Inc}}_i(C),\Lambda,Y] = [C,\,\mathrm{\mathrm{Inc}}_i^{-1}(\Lambda),Y^\prime],
    \)
    where \(Y^\prime = C \,\mathrm{\mathrm{Inc}}_i^{-1}(\Lambda)\).
\end{dfn}
\begin{rmk}
    When we write
    \( \mathrm{Inc}_i[C,\Lambda, Y] = [C^\prime,\Lambda^\prime, Y^\prime]\) or \( [C,\Lambda, Y] = \mathrm{Inc}_i^{-1}[C^\prime,\Lambda^\prime, Y^\prime],\)
    we mean to say that \(C^\prime = \mathrm{Inc}_i (C)\), \(\Lambda =\mathrm{Inc}_i^{-1}(\Lambda^\prime)\), \(C^\prime\Lambda ^\prime = Y^\prime\) and \(C \Lambda = Y\).
\end{rmk}
\begin{rmk}
\label{rmk:inv inc abuse}
    It is important to note that there is an abuse of notation at play. Note that \(\mathrm{Inc}_i^{-1}\) is not explicitly defined for all of \(\OGnon{k}{2k}\), but only on those matrices that are the in the image of \(\mathrm{Inc}_i\). As such, \(\mathrm{Inc}_i^{-1}\) is not the true inverse of \(\mathrm{Inc}_i\), but only a left inverse.
\end{rmk}

\subsubsection{Summary}

We have defined the \(\mathrm{Rot}\), \(\mathrm{Inc}\), and \(\mathrm{Cyc}\) moves on \(\mathcal{U}_k\) in a way that corresponds to their action on \(\OGnon{k}{2k}\) and OG graphs, and have seen their effect on the twistors. To summarize:
\begin{prop}
\label{prop:prom rules}
    The effects of the \(\mathrm{Rot}\), \(\mathrm{Cyc}\), and \(\mathrm{Inc}\) moves are as follows:

\begin{itemize}

    \item  The \(\mathrm{Rot}_{i,i+1}(\alpha)\) move adds a vertex with the angle \(\alpha\) between the vertices \(i\) and \(i+1\), and acts both on \(C\) and \(\lambda\), and the \(\Twist{Y}\) by the corresponding hyperbolic rotation.
    \item  The \(\mathrm{Cyc}\) move rotates the index labels on the graph anticlockwise and keeps the same angle for the internal vertices. It cycles the columns of \(C\) to the left and then adds a sign to the last  column if \(k\) is even. It cycles the columns (and rows) of \(\Twist{Y}\) to the left (top) and a sign to the last  column (row) if \(k\) is even. 
    \item The \(\mathrm{Inc}_i\) move adds two new vertices between the \(i-1\) and \(i\) vertices and connects them with an edge. If we have an edge going form \(i\) to \(i+1\) that is disconnected from the rest of the graph, \(\mathrm{Inc}_i^{-1}\) removes the edge, keeps the same angles for the vertices, and acts appropriately on \(C\). The twistors that do not contain \(i,\,i+1\) remain the same except for the indices greater then \(i+1\) going down by 2, and a minus sign added for each such index. In the remaining twistors, the \(i\) and \(i+1\) indices are equivalent (except up to a \((-1)^k\) sign when \(i=2k\)), and the \(i,\,i+1\) twistor is zero.
\end{itemize}
\end{prop}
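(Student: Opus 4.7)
The plan is to recognize that this proposition is essentially a consolidation statement: each of the three bullet points collects together a geometric description (the action on the graph / on $C$) together with an algebraic description (the action on $\Lambda$ and on the twistor matrix $\Twist{Y}$), and all of these ingredients have already been established in the preceding subsections. So the proof is less an argument and more a bookkeeping exercise, verifying that the summary statements agree with the more detailed propositions.

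First I would dispatch the $\mathrm{Rot}$ bullet. The graph-theoretic description (adding a vertex with angle $\alpha$ between the external half-edges $i$ and $i+1$) is built into the definition of the $\mathrm{Rot}$ move on OG graphs reviewed in Appendix~\ref{app:local moves}, and its action on $C$ is the standard one. The action on $\Lambda$ is Definition~\ref{def:rot of lambda}, which is set up precisely so that $C\Lambda = \mathrm{Rot}_{i,i+1}(\alpha)(C)\cdot \mathrm{Rot}_{i,i+1}(\alpha)(\Lambda)$. The hyperbolic rotation of $\Twist{Y}$ is then Proposition~\ref{prop:promoting twistors}, with the sign twist for $i = 2k$ absorbed into the definition as stated there.

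Next I would handle $\mathrm{Cyc}$. The description of the action on the graph is again built into the combinatorial definition. The action on $C$ is standard, and Observation~\ref{cyc plu} records the corresponding relabelling of Pl\"ucker coordinates, with the $(-1)^k$ sign appearing because the last column picks up a sign to preserve orthogonality/positivity. The compatible action on $\Lambda$ is given in the definition preceding Proposition~\ref{cyc twist}, and that proposition yields precisely the summary statement about cycling rows and columns of $\Twist{Y}$, up to the claimed sign for even $k$.

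Finally I would address $\mathrm{Inc}$, which is the most involved case. The geometric description of $\mathrm{Inc}_i$ on the graph (adding a disjoint edge between two new external vertices inserted between $i-1$ and $i$) is part of its definition, and the matrix description of $\mathrm{Inc}_i^{-1}$ on $\Lambda$ is Definition~3.1.6 (justified as well-defined on $\mathrm{Mat}^>_{2k\times(k+2)}$ by the subsequent proposition and Corollary~\ref{coro:inc pos}). The claimed behaviour of the twistors, namely that those not involving $i$ or $i+1$ are preserved up to a sign and an index shift, that $\twist{Y'}{i}{i+1} = 0$, and that $\twist{Y'}{j}{i}$ and $\twist{Y'}{j}{i+1}$ agree up to the $(-1)^{\delta_{i,2k}}$ sign, is exactly the content of Proposition~\ref{prop:TaylorTwist}.

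There is no real obstacle here; the only subtlety worth flagging explicitly is the consistent treatment of the $(-1)^k$ and $(-1)^{\delta_{i,2k}}$ signs coming from the cyclic boundary case $i = 2k$, which is why each of the cited propositions carries the corresponding sign factor and why the summary statement must be read with those conventions in mind. The proof therefore amounts to the single sentence: each bullet is an immediate restatement of, respectively, Proposition~\ref{prop:promoting twistors}, Proposition~\ref{cyc twist}, and Proposition~\ref{prop:TaylorTwist}, combined with the definitions of the three moves on $\mathcal{U}_k$.
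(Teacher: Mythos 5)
Your proposal is correct and matches the paper's intent: the paper presents Proposition~\ref{prop:prom rules} under a ``Summary'' heading without a separate proof, precisely because each bullet is a restatement of the definitions and results already established (Definition~\ref{def:rot of lambda} and Proposition~\ref{prop:promoting twistors} for $\mathrm{Rot}$; Observation~\ref{cyc plu} and Proposition~\ref{cyc twist} for $\mathrm{Cyc}$; the $\mathrm{Inc}_i^{-1}$ definition, Corollary~\ref{coro:inc pos}, and Proposition~\ref{prop:TaylorTwist} for $\mathrm{Inc}$). Your bookkeeping and your flag on the $(-1)^k$ / $(-1)^{\delta_{i,2k}}$ sign conventions are exactly the right reading of the statement.
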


\subsection{Using the Moves}

\subsubsection{Mandelstam Variables}
Another set of useful variables on the amplituhedron are the Mandelstam variables, which in connection to the ABJM amplituhedron are discussed in \cite{Huang_2022,he2022momentum}.
\begin{dfn}
\label{def mand}
    The \emph{Mandelstam variables} for \(I \subset [2k]\) on \(\mathcal{U}_k\) are defined as

    \[
    S_I(\Lambda,Y) \defeq \sum_{\{i,\,j\} \subset I} (-1)^{i-j+1} \twist{Y}{i}{j}^2
    \]
\end{dfn}

\begin{prop}[\cite{companion}]
\label{s rot}
    For \([C^\prime,\Lambda ^\prime, Y ^\prime] = \mathrm{\mathrm{Rot}_{i,i+1}(\alpha)}[C,\Lambda,Y]\), with \(\alpha>0\) and \\\([C,\Lambda,Y]\in\mathcal{U}_k^\geq\), \(I\subset [2k]\), we have that if \(|\{i,\,i+1\}\cap I |=0,\,2\) then 
    \(
    S_I(\Lambda^\prime,Y^\prime) = S_I(\Lambda,Y).
    \)
\end{prop}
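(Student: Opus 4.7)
Proof proposal for Proposition \ref{s rot}.

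The plan is to directly compute how each twistor $\twist{Y}{j}{l}$ appearing in $S_I$ transforms under $\mathrm{Rot}_{i,i+1}(\alpha)$, and then exploit the pseudo-orthogonality of $R=R_{i,i+1}(\alpha)$ with respect to the $\{i,i+1\}$-block of $\eta$. Since $\Lambda'=R^{-1}\Lambda$ and $Y'=Y$, and $R^{-1}$ acts as the identity on all rows of $\Lambda$ other than $\Lambda_i$ and $\Lambda_{i+1}$, multilinearity of the determinant defining the twistor gives $\twist{Y'}{j}{l}_{\Lambda'}=\twist{Y}{j}{l}_{\Lambda}$ whenever $\{j,l\}\cap\{i,i+1\}=\emptyset$. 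Thus in the case $|\{i,i+1\}\cap I|=0$ every pair in the Mandelstam sum is of this form, and the conclusion is immediate.

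For the case $\{i,i+1\}\subset I$, I would partition the sum into three groups: (a) pairs $\{j,l\}\subset I$ disjoint from $\{i,i+1\}$, (b) the single pair $\{i,i+1\}$, and (c) pairs $\{j,i\}$ or $\{j,i+1\}$ with $j\in I\setminus\{i,i+1\}$. Group (a) is unchanged by the previous remark. For (b), the $\{i,i+1\}$-block of $R^{-1}$ has determinant one, so multilinearity gives $\twist{Y'}{i}{i+1}_{\Lambda'}=\twist{Y}{i}{i+1}_{\Lambda}$.

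The content is in (c). Writing the $\{i,i+1\}$-block of $R^{-1}$ in the hyperbolic form $\bigl(\begin{smallmatrix}\cosh\alpha & -\sinh\alpha \\ -\sinh\alpha & \cosh\alpha\end{smallmatrix}\bigr)$, multilinearity gives
\[
\twist{Y'}{j}{i}=\cosh\alpha\,\twist{Y}{j}{i}-\sinh\alpha\,\twist{Y}{j}{i+1},\qquad \twist{Y'}{j}{i+1}=-\sinh\alpha\,\twist{Y}{j}{i}+\cosh\alpha\,\twist{Y}{j}{i+1},
\]
and $\cosh^2\alpha-\sinh^2\alpha=1$ yields $\twist{Y'}{j}{i}^2-\twist{Y'}{j}{i+1}^2=\twist{Y}{j}{i}^2-\twist{Y}{j}{i+1}^2$. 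The Mandelstam signs satisfy $(-1)^{j-i+1}=-(-1)^{j-(i+1)+1}$, so the combined contribution of $\{j,i\}$ and $\{j,i+1\}$ equals $(-1)^{j-i+1}\bigl(\twist{Y}{j}{i}^2-\twist{Y}{j}{i+1}^2\bigr)$, which is preserved. Summing over $j\in I\setminus\{i,i+1\}$ completes this case.

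The main obstacle is the sign bookkeeping in (c): one must verify that the alternating Mandelstam signs $(-1)^{j-l+1}$ align with the alternating diagonal of $\eta$ so that the relevant paired contribution reduces exactly to the indefinite quadratic form on the $\{i,i+1\}$-block that is preserved by $R^{-1}$. A cleaner conceptual variant, which I would keep in mind as a sanity check, is to combine Propositions \ref{prop:twist pluckers} and \ref{prop:twistor lambda} to write $S_I=-\det(M\eta_I M^\intercal)$ for a representative $M$ of $\lambda$; since $\mathrm{Rot}$ acts by $M\mapsto MR^{-1}$ and the identity $R^{-1}\eta_I R^{-\intercal}=\eta_I$ holds precisely when $|I\cap\{i,i+1\}|\in\{0,2\}$, the invariance then follows in one line.
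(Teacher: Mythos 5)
Your argument is correct. Note that the paper itself does not prove this statement — it is cited from \cite{companion} — and the nearest in-paper analogue is Observation~\ref{obs:s rot} (the abstract-twistor version), whose proof is just a one-line appeal to Definition~\ref{abs substit def}. Your write-up is exactly the computation that one-liner is gesturing at: the $|\{i,i+1\}\cap I|=0$ case is immediate, and in the $|\{i,i+1\}\cap I|=2$ case you correctly identify that the adjacent Mandelstam signs $(-1)^{j-i+1}$ and $(-1)^{j-(i+1)+1}$ differ by a factor of $-1$, so the paired contributions assemble into the indefinite form $\twist{Y}{j}{i}^2-\twist{Y}{j}{i+1}^2$, which is preserved by the hyperbolic rotation since $\cosh^2\alpha-\sinh^2\alpha=1$. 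Two small remarks. First, you should explicitly note the wraparound case $\{i,i+1\}=\{2k,1\}$ (taken mod $2k$): the paper's convention there is $\mathrm{Rot}_{2k,1}(\alpha)(\Lambda)=R_{1,2k}(-(-1)^k\alpha)^{-1}\Lambda$, so the $\sinh$ term acquires a sign $-(-1)^k$; your computation still goes through unchanged because only $\cosh^2$ and $\sinh^2$ appear, and $(-1)^{j-2k+1}$ versus $(-1)^{j-1+1}$ still differ by exactly $-1$. Second, your closing variant via $S_I=-\det(M\eta_I M^\intercal)$ (which follows from Propositions~\ref{prop:twist pluckers} and~\ref{prop:twistor lambda} plus Cauchy–Binet) is correct and arguably the cleanest route; just replace ``holds precisely when'' by ``holds when,'' since you only need sufficiency, and the failure of the identity for $|I\cap\{i,i+1\}|=1$ is an extra claim you are not using.
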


\begin{prop}[\cite{companion}]
\label{s inc}
    For \(\mathrm{Inc}_i^{-1}[C^\prime,\Lambda ^\prime, Y ^\prime] = [C,\Lambda,Y]\), with \(C^\prime =\mathrm{Inc}_i( C)\) and \\\([C,\Lambda,Y]\in\mathcal{U}_{k+1}^\geq\), \(I\subset [2k]\), \(I^\prime = I\setminus\{i,\,i+1\}\) 
    for \(|\{i,\,i+1\}\cap I |=0,\,2,\)
    we have \(
    S_I(\Lambda^\prime,Y^\prime) = S_{I^\prime}(\Lambda^\prime,Y^\prime)= S_{I^\prime}(\Lambda,Y).
    \)
   
\end{prop}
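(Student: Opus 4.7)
The plan is to handle the two asserted equalities in turn; both should reduce to straightforward combinatorial manipulations of the defining sum of $S_I$ once Proposition~\ref{prop:TaylorTwist} is in hand.

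For the first equality $S_I(\Lambda',Y') = S_{I'}(\Lambda',Y')$, I would first dispose of the case $\{i,i+1\}\cap I = \emptyset$, where $I'=I$ and there is nothing to prove. In the remaining case $\{i,i+1\}\subseteq I$, I would split the sum defining $S_I(\Lambda',Y')$ into four pieces: pairs $\{a,b\}\subseteq I'$, pairs $\{j,i\}$ with $j\in I'$, pairs $\{j,i+1\}$ with $j\in I'$, and the singleton pair $\{i,i+1\}$. The last contribution vanishes because $\twist{Y'}{i}{i+1}_{\Lambda'}=0$ by Proposition~\ref{prop:TaylorTwist}. For each $j\in I'$, the same proposition gives $\twist{Y'}{j}{i}_{\Lambda'}=\pm\twist{Y'}{j}{i+1}_{\Lambda'}$, so the squared twistors agree, while the alternating weights $(-1)^{j-i+1}$ and $(-1)^{j-i}$ are opposite, producing pairwise cancellation. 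Only the first piece, which is exactly $S_{I'}(\Lambda',Y')$, survives.

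For the second equality $S_{I'}(\Lambda',Y') = S_{I'}(\Lambda,Y)$, I would use the natural reindexing $\phi\colon [2k+2]\setminus\{i,i+1\}\to [2k]$ defined by $\phi(j)=j$ for $j<i$ and $\phi(j)=j-2$ for $j>i+1$, under which $I'$ is implicitly identified with a subset of $[2k]$. Proposition~\ref{prop:TaylorTwist} gives $\twist{Y'}{a}{b}_{\Lambda'} = \pm\twist{Y}{\phi(a)}{\phi(b)}_{\Lambda}$ for $a,b\in I'$, so the squared twistors agree. The alternating weights also match, since $\phi$ shifts each index by either $0$ or $2$ and hence preserves the parity of $a-b$. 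Summing over pairs in $I'$ should then yield the claim.

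The main obstacle, if any, is purely notational bookkeeping, especially around the wraparound sign $-(-1)^{\delta_{i,2k}}$ appearing in Proposition~\ref{prop:TaylorTwist} when $i=2k$; since my arguments only ever use squared twistors, I expect this sign to drop out and not affect either cancellation.
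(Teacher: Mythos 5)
Your proof is correct. The paper cites this result to the companion paper \cite{companion} and gives no internal proof, so there is nothing to compare against directly; nevertheless your argument is the natural and essentially forced one. The decomposition of $S_I(\Lambda',Y')$ into the sub-sum over $I'$, the two cross-sums against $i$ and $i+1$, and the singleton $\{i,i+1\}$ is the right split; the vanishing of the $\{i,i+1\}$ term via $\twist{Y'}{i}{i+1}_{\Lambda'}=0$ and the pairwise cancellation of the cross-sums from $\twist{Y'}{j}{i}_{\Lambda'}^2=\twist{Y'}{j}{i+1}_{\Lambda'}^2$ combined with the opposite parities $(-1)^{j-i+1}$ versus $(-1)^{j-i}$ give the first equality cleanly. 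For the second equality, the reindexing $\phi$ dropping $\{i,i+1\}$ and shifting higher indices down by $2$, together with the sign-only discrepancies in Proposition~\ref{prop:TaylorTwist} (which disappear upon squaring) and the parity-preservation of $a-b$ under $\phi$, does the job. Your remark that the $-(-1)^{\delta_{i,2k}}$ sign is harmless because only squared twistors enter is exactly the right disposal of the wraparound case. The only thing worth flagging is a likely typo in the statement: as written it says $[C,\Lambda,Y]\in\mathcal{U}_{k+1}^\geq$ and $I\subset[2k]$, whereas the formula $S_I(\Lambda',Y')$ and the definitions require the primed triple to live in $\mathcal{U}_{k+1}^\geq$ and $I\subset[2k+2]$; you implicitly adopted the sensible reading, which is the right call.
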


\begin{prop}[\cite{companion}]
\label{s cyc}
    For \([C^\prime,\Lambda ^\prime, Y ^\prime] = \mathrm{Cyc}[C,\Lambda,Y]\), and \([C,\Lambda,Y]\in\mathcal{U}_{k+1}^\geq\), \(I\subset [2k]\), we have that
    for \(|\{i,\,i+1\}\cap I |=0,\,2,\)
    \(
    S_I(\Lambda^\prime,Y^\prime) = S_{\mathrm{Cyc}(I)}(\Lambda,Y).
    \)
\end{prop}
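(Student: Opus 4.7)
The plan is to deduce this from Proposition~\ref{cyc twist}, which describes how individual twistors transform under \(\mathrm{Cyc}\), together with the fact that the Mandelstam \(S_I\) depends only on the squared twistors. The sign factor \((-(-1)^k)^{\delta_{i,2k}+\delta_{j,2k}}\) appearing in Proposition~\ref{cyc twist} therefore disappears after squaring, and the argument reduces to a careful bookkeeping of the \((-1)^{i-j+1}\) sign in the definition of \(S_I\).

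First I would record the squared form of Proposition~\ref{cyc twist}: for any distinct \(i',j'\),
\[
\twist{Y'}{i'}{j'}_{\Lambda'}^2 \;=\; \twist{Y}{\mathrm{Cyc}(i')}{\mathrm{Cyc}(j')}_{\Lambda}^2,
\]
where \(\mathrm{Cyc}(i')=i'-1\) for \(i'>1\) and \(\mathrm{Cyc}(1)=2k\). Because \(\mathrm{Cyc}\) is a bijection on \([2k]\), substituting this into
\[
S_I(\Lambda',Y') \;=\; \sum_{\{i',j'\}\subset I}(-1)^{i'-j'+1}\twist{Y'}{i'}{j'}_{\Lambda'}^2
\]
puts the sum in a term-by-term correspondence with the sum defining \(S_{\mathrm{Cyc}(I)}(\Lambda,Y)\); the only thing that could fail is the matching of the sign factors.

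The only nontrivial case is when the pair \(\{i',j'\}\) contains \(1\), where the cyclic shift wraps to \(2k\) and the natural ordering of the pair is reversed. For such a pair \(\{1,j'\}\subset I\) with \(1<j'\leq 2k\), the contribution to \(S_I(\Lambda',Y')\) carries the sign \((-1)^{1-j'+1}=(-1)^{j'}\), while its image \(\{j'-1,2k\}\subset\mathrm{Cyc}(I)\), ordered so that \(j'-1<2k\), contributes with sign \((-1)^{(j'-1)-2k+1}=(-1)^{j'-2k}\). These are equal because \(2k\) is even, and the squared twistors agree by anti-symmetry. For any pair not containing \(1\), both indices decrease rigidly by one, so \(i'-j'\) is preserved and the signs match automatically. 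Summing up yields the claimed equality. The main obstacle, such as it is, is the wrap-around sign check, and it is painless precisely because \(2k\) is even.
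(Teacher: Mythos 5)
Your approach is the natural one and the substance is correct: squaring the identity of Proposition~\ref{cyc twist} eliminates the \((-(-1)^k)^{\delta_{i,2k}+\delta_{j,2k}}\) prefactor, the sum over pairs re-indexes bijectively, and the only thing to verify is that the \((-1)^{i-j+1}\) prefactor survives the wrap around \(1\leftrightarrow 2k\), which it does because \(2k\) is even. The anti-symmetry remark handles the reordering of the wrapped pair. This is in all likelihood the same argument as in \cite{companion}.

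One caveat worth flagging: you write \(\mathrm{Cyc}(i')=i'-1\) (so \(\mathrm{Cyc}(1)=2k\)), but the paper's own convention (Appendix A, the \(\mathrm{Cyc}\) move) is the opposite, \(\mathrm{Cyc}(\ell)=\ell+1\). With that convention, the algebra you carry out — which correctly extracts \(\twist{Y'}{i'}{j'}_{\Lambda'}^2=\twist{Y}{i'-1}{j'-1}_\Lambda^2\) from Proposition~\ref{cyc twist} — yields \(S_I(\Lambda',Y')=S_{I-1}(\Lambda,Y)=S_{\mathrm{Cyc}^{-1}(I)}(\Lambda,Y)\), i.e.\ the index set shifts the opposite way from the statement as printed. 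A small sanity check supports this: take \(k=3\), \(I=\{1,2\}\). Then \(S_{\{1,2\}}(\Lambda',Y')=\twist{Y'}{1}{2}_{\Lambda'}^2\), and applying Proposition~\ref{cyc twist} with \(i=2k,\,j=1\) gives \(\twist{Y'}{1}{2}_{\Lambda'}^2=\twist{Y}{2k}{1}_\Lambda^2\), so \(S_{\{1,2\}}(\Lambda',Y')=S_{\{2k,1\}}(\Lambda,Y)=S_{I-1}(\Lambda,Y)\), not \(S_{I+1}\). So either the proposition as quoted from \cite{companion} is using the inverse \(\mathrm{Cyc}\) convention for index sets, or there is a small misprint; your proof is internally consistent with the former. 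The math is fine — just be deliberate about which direction \(\mathrm{Cyc}\) shifts when you invoke this in later arguments such as the reductions in Propositions~\ref{prop:external radicand doen't vanish} and~\ref{prop:sep external}.
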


\begin{prop}[\cite{companion}]
    \label{prop:S orth}
    For \([C,\Lambda,Y]\in\mathcal{U}_k^\geq\), \(I\subset [2k]\),  we have \(
    S_I(\Lambda,Y) = S_{[2k]\setminus I}(\Lambda,Y).
    \)
    Morever, if \(C\) belongs to the orthitroid cell \(\Omega_\Gamma\) which corresponds to the permutation \(\tau\), and if \(|I\setminus \tau(I)|<2\), then
    \(
    S_I(\Lambda,Y) = 0.
    \)
\end{prop}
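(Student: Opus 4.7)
The plan is to reduce the whole statement to the closed-form identity
\[
-S_I(\Lambda,Y) \;=\; \det\!\bigl(\lambda_I\,\eta_I\,\lambda_I^\intercal\bigr),
\]
where $\lambda_I$ is the $2\times|I|$ submatrix of $\lambda$ with columns in $I$ and $\eta_I$ is the corresponding diagonal block of $\eta$. To derive this, I would substitute $\twist{Y}{i}{j} = \Delta_{\{i,j\}}(\lambda) = \lambda_1^i\lambda_2^j - \lambda_2^i\lambda_1^j$ from Proposition~\ref{prop:twist pluckers} into Definition~\ref{def mand}, use $\eta_i = (-1)^{i+1}$ to rewrite $(-1)^{i-j+1} = -\eta_i\eta_j$, expand the square and regroup into the $2\times 2$ Gram form $\bigl(\sum_{i\in I}\eta_i(\lambda_1^i)^2\bigr)\bigl(\sum_{i\in I}\eta_i(\lambda_2^i)^2\bigr) - \bigl(\sum_{i\in I}\eta_i\lambda_1^i\lambda_2^i\bigr)^2$, which is exactly the displayed determinant.

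Given this identity, the first assertion $S_I = S_{[2k]\setminus I}$ is immediate from momentum conservation (Proposition~\ref{prop:mom cons}): $\lambda\eta\lambda^\intercal = 0$ says that the analogous $2\times 2$ matrix taken over the full index set $[2k]$ is zero, so $\lambda_I\eta_I\lambda_I^\intercal = -\lambda_{\bar I}\eta_{\bar I}\lambda_{\bar I}^\intercal$ with $\bar I = [2k]\setminus I$, and for $2\times 2$ matrices $\det(-A) = \det(A)$.

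For the vanishing clause, the strategy is to show that the orthitroid hypothesis $|I\setminus\tau(I)|<2$ forces $\lambda_I$ to have rank at most $1$, so every $2\times 2$ minor $\Delta_{\{i,j\}}(\lambda)$ with $\{i,j\}\subset I$ vanishes and the Cauchy--Binet expansion $\det(\lambda_I\eta_I\lambda_I^\intercal) = \sum_{\{i,j\}\subset I}\eta_i\eta_j\Delta_{\{i,j\}}(\lambda)^2$ kills $S_I$ term by term. The inclusion $\lambda\eta\subset C$ from Lemma~\ref{lem:twist in C} supplies the link: writing $\lambda\eta = MC$ for a $2\times k$ matrix $M$ yields $\lambda_I\eta_I = MC_I$, so $\mathrm{rank}(\lambda_I) \le \mathrm{rank}(C_I)$. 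It then remains to show that the combinatorial condition $|I\setminus\tau(I)|<2$ on the labeling involution $\tau$ of $\Omega_\Gamma$ forces $\mathrm{rank}(C_I)\le 1$ on the cell, which is a statement about the orthitroid rank function borrowed from the companion paper.

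The hard part is exactly this last implication --- translating the cyclic-combinatorial condition on $\tau$ into a concrete rank statement on the matroid of $\Omega_\Gamma$. A cleaner alternative I would pursue if the rank statement turns out to be unwieldy is induction on the graph $\Gamma$ via the local moves $\mathrm{Inc}$, $\mathrm{Rot}$, and $\mathrm{Cyc}$, using Propositions~\ref{s rot}, \ref{s inc}, and~\ref{s cyc} to transport $S_I$ compatibly and checking that the statistic $|I\setminus\tau(I)|$ is preserved under the corresponding modifications of $I$ and $\tau$. The delicate case would be $\mathrm{Inc}_i$ with $|\{i,i+1\}\cap I|=1$, which is not directly covered by Proposition~\ref{s inc}; here the first assertion $S_I = S_{\bar I}$ just proven lets one flip $I\leftrightarrow\bar I$ and reduce to the covered parity case.
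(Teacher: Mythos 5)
The paper does not actually prove this statement; it is cited from the companion paper \cite{companion}, so there is no in-text proof to compare against. I will therefore evaluate the proposal on its own terms.

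Your Gram-determinant identity $-S_I(\Lambda,Y)=\det(\lambda_I\eta_I\lambda_I^\intercal)$ is correct: the sign computation $(-1)^{i-j+1}=-\eta_i\eta_j$ and the Cauchy--Binet expansion into squared $2\times2$ minors both check out. The first assertion then follows exactly as you say, since momentum conservation gives $\lambda_I\eta_I\lambda_I^\intercal=-\lambda_{\bar I}\eta_{\bar I}\lambda_{\bar I}^\intercal$ and $\det(-A)=\det(A)$ for a $2\times2$ matrix. This part of your argument is clean and I believe it is the intended mechanism.

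The vanishing clause, however, cannot be established by the rank reduction you propose, and the gap is not just ``hard to pin down'' but a falsifiable claim that fails. You are asserting that $|I\setminus\tau(I)|<2$ forces $\mathrm{rank}(\lambda_I)\le 1$, which in turn would force $\twist{Y}{i}{j}=\Delta_{\{i,j\}}(\lambda)=0$ for \emph{every} pair $\{i,j\}\subset I$. But Proposition~\ref{prop:Consecutive Twistors} guarantees that consecutive twistors $\twist{Y}{i}{i+1}$ are strictly positive whenever $C$ lies in the interior of $\OGnon{k}{2k}$. Take $k=3$ with $\Gamma$ the top cell, so $\tau=(14)(25)(36)$, and $I=\{1,2,4\}$: then $I\setminus\tau(I)=\{2\}$, so the hypothesis $|I\setminus\tau(I)|<2$ is satisfied, yet $I$ contains the consecutive pair $\{1,2\}$, whence $\twist{Y}{1}{2}\neq 0$ and $\mathrm{rank}(\lambda_I)=2$ on the interior of the cell. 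For the same reason your intermediate claim that $\mathrm{rank}(C_I)\le 1$ fails: on the top cell every Pl\"ucker of $C$ is positive, so $\mathrm{rank}(C_I)=\min(k,|I|)$. The actual mechanism for $\det(\lambda_I\eta_I\lambda_I^\intercal)=0$ in such cases must be the other branch of singularity — the restriction of $\eta_I$ to the full-rank two-dimensional row span of $\lambda_I$ being degenerate — and the rank route is constitutionally blind to it, so it cannot be patched.

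Your fallback (induction via $\mathrm{Inc}$/$\mathrm{Rot}$/$\mathrm{Cyc}$ using Propositions~\ref{s rot}, \ref{s inc}, \ref{s cyc}) is the more plausible strategy, but the last step as written also does not work: if $|\{i,i+1\}\cap I|=1$ then $|\{i,i+1\}\cap\bar I|=1$ as well, so passing from $I$ to $\bar I$ does not land you in the parity cases covered by Proposition~\ref{s inc}. You would need a separate transport rule for the ``straddling'' case, or a different normal form for $I$ relative to the arc structure, before the induction closes.
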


\subsubsection{Abstract Twistors}
Let us first set up notations that would help us discus the algebra of twistor variables without referring to a specific point in the amplituhedron.
\begin{dfn}

Let 
\(\Gr{k}{\mathbb N} := \,_{\mathrm{GL}_{k}(\mathbb R )}\mkern-.5mu\backslash\mkern-2mu^{\mathrm{Mat}_{[k] \times \mathbb N}^*(\mathbb R )}, \)
where \(\mathrm{Mat}_{[k] \times \mathbb N}^*(\mathbb R )\) is the space of full rank real matrices with \(k\) rows and countably many columns. For an element \(\widetilde\lambda\in \Gr{2}{\mathbb N}\) write \(\langle i \, j\rangle(\widetilde\lambda) \defeq \Delta_{\{i,j\}}(\widetilde\lambda)\), the \(i,j\)-th Pl\"ucker coordinate. The functions \(\langle i \, j\rangle\) would be referred to as \emph{abstract twistors}.
\end{dfn}
\begin{obs}
The abstract twistors  form a commutative algebra satisfying the following relations:
\begin{itemize}
    \item \textbf{Anti-symmetry:} \(\langle i \, j \rangle = -\langle j \, i \rangle\).

    \item  \textbf{Pl\"ucker relations:} For any \(i,\,j_1,\,j_2,\,j_3 \in \mathbb N\) with \(j_1<j_2<j_3\),

    \[
    \langle i \, j_1 \rangle \langle j_2 \, j_3 \rangle - \langle i \, j_2 \rangle \langle j_1 \, j_3 \rangle + \langle i \, j_3 \rangle \langle j_1 \, j_2 \rangle = 0 .
    \]
\end{itemize}
\end{obs}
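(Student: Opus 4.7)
The plan is to verify the two algebraic relations by unpacking the definition: an abstract twistor $\langle i\,j\rangle(\widetilde\lambda)$ is the $2\times 2$ minor $\Delta_{\{i,j\}}$ of any matrix representative of $\widetilde\lambda \in \Gr{2}{\mathbb N}$. First I would check that both relations are well-defined on $\Gr{2}{\mathbb N}$ despite the $\mathrm{GL}_2(\mathbb R)$ ambiguity: replacing a representative $M$ by $gM$ multiplies every $\langle i\,j\rangle$ by $\det(g)$, and since anti-symmetry is homogeneous of degree $1$ and the Plücker relation is homogeneous of degree $2$ in the twistors, both rescale uniformly and hence are preserved. The commutative algebra structure itself is nothing to prove, since we are simply looking at real-valued functions (or rather, elements of the homogeneous coordinate ring) under pointwise operations.

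Next I would verify anti-symmetry: for any $2\times\mathbb N$ matrix representative, swapping columns $i$ and $j$ negates the $2\times 2$ determinant, so $\langle i\,j\rangle = -\langle j\,i\rangle$ as functions on representatives, and this descends to $\Gr{2}{\mathbb N}$.

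For the Plücker relation, I would fix a representative $M$ with columns $v_\ell \in \mathbb R^2$ and prove the three-term identity
\[
\det(v_i, v_{j_1})\det(v_{j_2}, v_{j_3}) - \det(v_i, v_{j_2})\det(v_{j_1}, v_{j_3}) + \det(v_i, v_{j_3})\det(v_{j_1}, v_{j_2}) = 0.
\]
The cleanest approach is to use the fact that any three vectors in $\mathbb R^2$ are linearly dependent: if $v_{j_1}, v_{j_2}$ are linearly independent I would write $v_{j_3} = \alpha v_{j_1} + \beta v_{j_2}$ with $\alpha, \beta$ given by Cramer's rule in terms of the $\langle\cdot\,\cdot\rangle$'s, substitute into $\det(v_i, v_{j_3})$, and rearrange to obtain exactly the claimed identity. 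The degenerate case where $v_{j_1}, v_{j_2}$ are dependent forces $\langle j_1\,j_2\rangle = 0$ and reduces the identity to $\langle i\,j_1\rangle\langle j_2\,j_3\rangle = \langle i\,j_2\rangle\langle j_1\,j_3\rangle$, which is immediate from the proportionality $v_{j_2} = c v_{j_1}$.

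There is no real obstacle here; the observation is a direct instance of the classical Plücker relations for $\Gr{2}{n}$ (which were already cited in Proposition~\ref{prop:twist pluckers} via \ref{prop:plucker relations}), applied formally to an arbitrary-column-set Grassmannian. The only mild subtlety worth spelling out is the $\mathrm{GL}_2$-invariance argument in the first paragraph, which is what makes the relations sensible identities on the quotient $\Gr{2}{\mathbb N}$ rather than merely on matrix representatives.
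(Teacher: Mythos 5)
Your proposal is correct and matches the paper's (implicit) approach: the paper gives no proof for this Observation because it is an immediate specialization of the classical Pl\"ucker relations (stated earlier as Proposition~\ref{prop:plucker relations}) to $\Gr{2}{\mathbb N}$, which is exactly what you spell out via the Cramer's-rule computation in $\mathbb{R}^2$. The only microscopic gap is in your degenerate case: if $v_{j_1}=0$ you cannot write $v_{j_2}=cv_{j_1}$, but then all three terms vanish outright since $\langle j_1\,\ell\rangle=0$ for every $\ell$, so the identity still holds; this is trivial to patch and does not affect correctness.
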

Notice there is a natural embedding \(\iota:\mathrm{OG}_2( \Lambda^\intercal) \xhookrightarrow{} \Gr{2}{\mathbb N}\) that is induced by the embedding of \(\mathrm{Span }(\Lambda^\intercal)\subset \mathbb R^{2k}\subset \mathbb R^{\mathbb N}\). By Proposition~\ref{prop:mom cons} we have an injection  \(\varphi_\Lambda :\mathcal{O}_k(\Lambda) \xhookrightarrow{} \mathrm{OG}_2( \Lambda^\intercal),\) by \(Y \mapsto \lambda\).
\begin{obs}
Those combine into an injection \(\iota\circ\varphi_\Lambda : \mathcal{O}_k(\Lambda)\rightarrow \Gr{2}{\mathbb N}\) by \(Y\mapsto \widetilde\lambda\), where \(\widetilde\lambda\) is the unique element \(\widetilde\lambda\in\Gr{2}{\mathbb N}\) defined by
\(
\langle i\,j\rangle (\widetilde\lambda) =  
\twist{Y}{i}{j}_\Lambda\) for \(i,j\in[2k]\) and \(0\) otherwise.
\end{obs}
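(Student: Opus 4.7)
The plan is to verify the two assertions of the observation by unwinding definitions and invoking the two nontrivial ingredients already established: the injectivity of $\varphi_\Lambda$ (Proposition~\ref{prop:Bdiffeo}) and the identification of the Pl\"ucker coordinates of $\lambda$ with the twistor variables (Proposition~\ref{prop:twist pluckers}).

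First I would address injectivity of the composition. By Proposition~\ref{prop:Bdiffeo} the map $\varphi_\Lambda$ is already known to be an injection (in fact a diffeomorphism onto its image). For $\iota$, the fixed linear embedding $\mathrm{Span}(\Lambda^\intercal) \hookrightarrow \mathbb R^{2k} \hookrightarrow \mathbb R^{\mathbb N}$ from which $\iota$ is induced is a linear injection, so it sends distinct two-planes to distinct two-planes; hence $\iota$ is an injection, and the composition of two injections is an injection.

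Next I would establish the explicit formula. For $Y \in \mathcal O_k(\Lambda)$, fix a $2 \times 2k$ matrix representative $M_\lambda$ of $\lambda = \varphi_\Lambda(Y)$. Then $\iota(\lambda)$ is represented by the $2 \times \mathbb N$ matrix $\widetilde M$ obtained by padding $M_\lambda$ with zero columns in every position outside $[2k]$. For any pair $i,j$ with at least one index outside $[2k]$ the corresponding column of $\widetilde M$ vanishes, so $\langle i\,j\rangle(\widetilde\lambda) = \Delta_{\{i,j\}}(\widetilde M) = 0$. For $i,j \in [2k]$ we have
\[
\langle i\,j\rangle(\widetilde\lambda) \;=\; \Delta_{\{i,j\}}(\widetilde M) \;=\; \Delta_{\{i,j\}}(M_\lambda) \;=\; \twist{Y}{i}{j}_\Lambda,
\]
where the last equality is Proposition~\ref{prop:twist pluckers}. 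Uniqueness of $\widetilde\lambda$ is automatic from the fact that the projective vector of Pl\"ucker coordinates determines an element of $\Gr{2}{\mathbb N}$.

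There is no real obstacle here; the statement is an assembly of results already proved. The only subtlety worth flagging is projectivity: both $\twist{Y}{i}{j}_\Lambda$ and $\langle i\,j\rangle(\widetilde\lambda)$ are defined only up to a common rescaling coming from the choice of matrix representative. Once a representative $M_\lambda$ is fixed consistently on both sides the displayed equalities hold literally, and the induced map $Y \mapsto \widetilde\lambda$ into the projective object $\Gr{2}{\mathbb N}$ is well defined.
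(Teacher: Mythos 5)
Your proof is correct and fills in exactly what the paper leaves implicit, since the observation is stated without proof: injectivity follows from Proposition~\ref{prop:Bdiffeo} together with the linearity of the inclusion defining $\iota$, and the Pl\"ucker formula follows from Proposition~\ref{prop:twist pluckers} once a representative of $\lambda$ is zero-padded to a $2\times\mathbb N$ matrix. The remark on projective well-definedness is a worthwhile flag but not a gap, so no further work is needed.
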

\begin{dfn}
Let  let \(\mathcal F\) be the space of functions \(U\rightarrow\widehat{\mathbb C }\) where \(\widehat{\mathbb C }\) denotes the Riemann sphere, for some  \(U\subset\Gr{k}{\mathbb N}\). Those can be expressed as formal expressions in abstract twistors.

For \(f\in \mathcal F\), the \emph{index-support} \(\mathcal I(f)\) of \(f\), is defined to be the set of indices appearing in the abstract twistors on which $f$ depends. The index-support of \( \mathcal F\)-valued matrices and vectors is defined as the union of the index-supports of their entries.  Let  \(\mathcal F_I\) for \(I\subset \mathbb N\) be the space of such functions with index-support contained in \(I\).

\begin{dfn}
\label{abs substit def}
    For \(f\in \mathcal F _{[2k]}\) and \([C,\Lambda,Y]\in \mathcal U _k ^\geq \) write \(f(\Lambda, Y)\defeq f(\iota\circ\varphi_\Lambda(Y))\in\widehat{\mathbb C }\) . We say that \(f\in\mathcal F\) is defined on \([C,\Lambda,Y]\in\mathcal U_k^\geq\) if \(f(\Lambda, Y) \) is defined and finite.
\end{dfn}

We define the action of the three moves on \(\mathcal F\), in analogy with the action on the twistors:
\begin{itemize}
    \item \underline{\(\mathrm{Rot}\):}
    
    \[
\mathrm{Rot}_{i,i+1}(\alpha)\langle{n\,m}\rangle =
\begin{cases} 
     \langle{n\,m}\rangle & n,\,m\neq i,\,i+1 \\
     \langle{i\,m}\rangle \cosh \alpha -\epsilon_i\,  \langle{i+1\,m}\rangle \sinh \alpha & n=i,\, m\neq i+1\\
     \langle{i+1\,m}\rangle \cosh \alpha -\epsilon_i\,  \langle{i\,m}\rangle \sinh \alpha & n=i+1,\, m\neq i\\
     \langle{n\,i+1}\rangle \cosh \alpha -\epsilon_i\,  \langle{n\,i}\rangle \sinh \alpha  & n\neq i, \, m =i+1\\
     \langle{n\,i}\rangle \cosh \alpha -\epsilon_i\,  \langle{n\,i+1}\rangle \sinh \alpha  & n\neq i+1, \, m =i\\
    \langle{n\,m}\rangle & n,\,m= i,\,i+1 \\
   \end{cases}
\]
where \(\epsilon_i \defeq -(-1)^k\) if \(i=2k\) and \(1\) otherwise.

\item \underline{\(\mathrm{Cyc}\):}
\(
\mathrm{Cyc}_k \langle n\,m\rangle = 
\epsilon_n\epsilon_m \langle n+1\,m+1\rangle
\)
where \( n+1,\,m+1\) are considered mod \(2k\). 
\item \underline{\(\mathrm{Inc}\):}
\begin{align*}
    &\mathrm{Inc}_i \langle n\,m\rangle = 
\begin{cases} 
     -\langle{n\,m+2}\rangle & n<i \leq m \\
     -\langle{n+2\,m}\rangle  & m<i \leq n\\
     \langle{n\,m}\rangle  & n,m<i\\
     \langle{n+2\,m+2}\rangle  & i\leq n,m,\\
   \end{cases}\\
   &\mathrm{Inc}_i^{-1} \langle n\,m\rangle = 
\begin{cases} 
     0 & \{n,m\}\cap \{i,i+1\} \neq\emptyset\\
     -\langle{n\,m-2}\rangle & n<i,i+1 < m \\
     -\langle{n-2\,m}\rangle  & n<i,i+1 < m\\
     \langle{n\,m}\rangle  & n,m<i\\
     \langle{n-2\,m-2}\rangle  & i+1< n,m.\\
   \end{cases}
\end{align*}
\end{itemize}
\end{dfn}
\begin{rmk}
\label{rmk:abuse of notation 2}
    In the spirit of Remark~\ref{rmk:inv inc abuse}, notice \(\mathrm{Inc}_i^{-1}\) not the true inverse of \(\mathrm{Inc}_i\), but only a left inverse.
\end{rmk}
\begin{obs}
    For \(\Lambda \in \mathrm{Mat}_{k\times2k}\), \(C\in \OGnon{k}{2k}\), and \(Y = C\Lambda\), \(\langle i \, j \rangle(\Lambda ,Y)\) are smooth function in \(\Lambda\), \(C\), and \(Y\).
\end{obs}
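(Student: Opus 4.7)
The plan is to exhibit $\langle i\,j\rangle(\Lambda,Y)$ as a polynomial expression in the entries of $C$, $\Lambda$, and $Y$, from which smoothness is immediate. The observation is essentially a bookkeeping remark rather than a substantial claim, so the proof should be very short; the only thing to check is that tracing the string of identifications $\langle i\,j\rangle \leftrightarrow \Delta_{\{i,j\}}(\lambda) \leftrightarrow \twist{Y}{i}{j}_\Lambda$ ends at a determinantal expression in the three matrix arguments.

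First, I would unwind the definitions. By Definition~\ref{abs substit def}, evaluating the abstract twistor on the triple $(C,\Lambda,Y)$ means computing $\langle i\,j\rangle(\iota\circ\varphi_\Lambda(Y))$, which, by construction of $\iota$ and $\varphi_\Lambda$, is the Pl\"ucker coordinate $\Delta_{\{i,j\}}(\lambda)$ with $\lambda=C^\perp\cap\Lambda^\intercal$. Proposition~\ref{prop:twist pluckers} then identifies this with $\twist{Y}{i}{j}_\Lambda$, which by Definition~\ref{def twist} is the determinant obtained by augmenting a matrix representative of $Y$ with the two rows $\Lambda_i$ and $\Lambda_j$. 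Taking $Y = C\Lambda$ itself as a matrix representative yields
\[
\twist{Y}{i}{j}_\Lambda \;=\; \det\begin{pmatrix} C\Lambda \\ \Lambda_i \\ \Lambda_j\end{pmatrix},
\]
which is manifestly polynomial, and hence smooth, in the entries of $C$ and $\Lambda$. Keeping $Y$ as an independent matrix representative, the same determinant with $C\Lambda$ replaced by $Y$ is polynomial in the entries of $Y$ and $\Lambda$. In either case we obtain a smooth function of the matrix variables.

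The only point that requires a word is projective ambiguity: the twistor, as an element of projective space, is only well-defined up to rescaling under a change of matrix representative of $Y$. Here, however, the observation is phrased at the level of matrix representatives $(C,\Lambda,Y)$ rather than equivalence classes, so no ambiguity arises and the determinantal formula above gives a genuine smooth function. I do not foresee any real obstacle, since the entire content reduces to the smoothness of a polynomial.
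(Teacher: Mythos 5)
Your proof is correct, and since the paper treats this as a self-evident observation (no proof is supplied), unwinding the definitions to a determinantal formula is exactly the right move. The chain $\langle i\,j\rangle(\Lambda,Y) = \Delta_{\{i,j\}}(\lambda) = \twist{Y}{i}{j}_\Lambda = \det\bigl(\begin{smallmatrix} Y \\ \Lambda_i \\ \Lambda_j\end{smallmatrix}\bigr)$ is polynomial in the matrix entries, hence smooth, and your remark that the projective scaling ambiguity is immaterial once one works with fixed matrix representatives is the right thing to point out.
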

\begin{dfn}
    Let \(a\in \mathcal F_{[2k]}\), \(\Lambda \in \mathrm{Mat}_{2k\times(k+2)}\), \(C\in \OGnon{k}{2k}\),\(Y = C\Lambda\). If \(a(\Lambda ,Y)\) is a smooth (positive) function in \(\Lambda\), \(C\), and \(Y\)  we would say \(a\) is \emph{smooth (positive)}.

    Let \(\Gamma\) be a \(k\)-OG graph, and  \(C\in \Omega_\Gamma \subset  \OGnon{k}{2k}\). If \(a(\Lambda ,Y)\) is a smooth function in \(\Lambda\), \(C\), and \(Y\), then we would say \(a\) is \emph{smooth (positive)} on \(\Omega_\Gamma\) or \(\Gamma\).
\end{dfn}
\begin{obs}
    \label{obs:somth prom of smooth is smooth}
    Let \(a\in \mathcal F\) is smooth on \(\Gamma\), an OG graph. Suppose \(\Gamma^\prime = G(\Gamma)\) where \(G = \mathrm{Cyc},\mathrm{Inc_i}\), or \(\mathrm{Rot}_{i,i+1}(b)\) where \(b\in\mathcal{F}\) is smooth and positive on \(\Gamma^\prime\). Then \(G(a)\) is smooth on \(\Gamma^\prime\). 
\end{obs}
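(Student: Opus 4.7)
The plan is to reduce the smoothness of $G(a)$ on $\Gamma'$ to the smoothness of $a$ on $\Gamma$ by exhibiting $G(a)$ as the composition of $a$ with a smooth change of variables that sends $\Omega_{\Gamma'}$ into $\Omega_\Gamma$.

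First I would establish the substitution identity
\[
(G(a))(\Lambda',Y') \;=\; a(\Lambda,Y)
\]
whenever $[C',\Lambda',Y'] = G\,[C,\Lambda,Y]$. Because $G(a)$ is defined by termwise substitution $\langle n\,m \rangle \mapsto G\langle n\,m \rangle$ on abstract twistors, it suffices to check this identity one twistor at a time, and this is exactly what Propositions~\ref{cyc twist}, \ref{prop:TaylorTwist} and~\ref{prop:promoting twistors} do: the $\mathcal F$-action on abstract twistors was defined precisely to match the transformation of the actual twistor variables $\twist{Y}{n}{m}_\Lambda$ under $\mathrm{Cyc}$, $\mathrm{Inc}_i$, and $\mathrm{Rot}_{i,i+1}$, respectively. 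Given the identity for individual twistors, it propagates through any formal expression $a$, verbatim.

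Next I would check that the corresponding map $(\Lambda',C',Y') \mapsto (\Lambda,C,Y)$ on the level of $\mathcal U_k$ is smooth on the locus where $C' \in \Omega_{\Gamma'}$. For $\mathrm{Cyc}$ this is a linear permutation of rows/columns with sign changes; for $\mathrm{Inc}_i^{-1}$ it is also linear in the matrix entries (Definition~\ref{def:rot of lambda} and the $\mathrm{Inc}_i^{-1}$ construction); and for $\mathrm{Rot}_{i,i+1}(b)$ it consists of multiplication by $R_{i,i+1}(b)^{\pm 1}$ whose only nonzero entries are $1$, $\pm\cosh b$, $\pm\sinh b$, so the map is smooth as a function of $(\Lambda',C',Y')$ as long as $b$ is smooth there, which is precisely the hypothesis. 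Moreover, since $\Gamma' = G(\Gamma)$, for $C' \in \Omega_{\Gamma'}$ the preimage $C$ under the move lies in $\Omega_\Gamma$ (cf.~Appendix~\ref{app:local moves}), so the image of this smooth change of variables lands inside the locus on which $a$ is assumed smooth.

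Combining the two steps, $G(a)$ restricted to the locus $C' \in \Omega_{\Gamma'}$ coincides with $a$ composed with the smooth change of variables above, hence it is smooth on $\Gamma'$. I expect the main obstacle, really the main bookkeeping, to be in the first step: each of the three moves uses slightly different sign conventions (the $\epsilon_i = -(-1)^k$ at the cyclic boundary in $\mathrm{Cyc}$, the $(-1)$'s in $\mathrm{Inc}$, the hyperbolic-rotation structure in $\mathrm{Rot}$), and one must verify that the $\mathcal F$-formulas and the matrix-level formulas really match term by term. The only genuinely non-algebraic ingredient appears in $\mathrm{Rot}$, where the $\cosh$ and $\sinh$ of a smooth parameter $b$ enter, but because these are themselves smooth, they pose no difficulty once $b$ is assumed smooth on $\Gamma'$, which is exactly the reason this hypothesis was built into the statement.
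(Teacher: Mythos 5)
Your proof is correct and takes essentially the same route as the paper, which gives the observation a one-line proof: the paper simply notes that $G(C)$ is a smooth function of $C$ and $b$, implicitly relying on the substitution identity from Proposition~\ref{prop:com diag} that you spell out as step one. Your write-up is a more detailed unpacking of the same idea, making explicit both the pointwise twistor-substitution identity and the smoothness of the change of variables $(\Lambda',C',Y')\mapsto(\Lambda,C,Y)$ on the relevant cell.
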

\begin{proof}
    Immediate from the fact that for \(C\in \OGnon{k}{2k}\), \(G(C)\) is a smooth function in \(C\) and \(b\).
\end{proof}
\begin{dfn}
    For brevity, we will define for \(i,j\in[2k]\) and \(n,m\in\mathbb Z\),
    \[
    \langle (i + 2kn)  \,(j+2km)\rangle_{(2k)} \defeq (-(-1)^k)^{n+m}\langle i  \,j\rangle,
    \]
    and
    \[
    \twist{Y}{(i + 2kn) }{(j+2km)}_{(2k)}\defeq (-(-1)^k)^{n+m}\twist{Y}{i}{j}.
    \]
We also define, for \(I\subset\mathbb N\),
    \(
    S_I \defeq \sum_{\{i,\,j\} \subset I} (-1)^{i-j+1}\langle i\,j\rangle^2
    \).
\end{dfn}
\begin{obs}
\label{obs:s rot}
    For  \(I\subset[2k]\) and \(\left|\{i,i+1\}\cap I\right| =0,2\) considered mod \(2k\), we have that
    \(
    \mathrm{Rot}_{i,i+1} S_I = S_{I}.
    \)
\end{obs}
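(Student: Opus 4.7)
The plan is to verify invariance term by term in the defining sum $S_I = \sum_{\{n,m\}\subset I}(-1)^{n-m+1}\langle n\,m\rangle^2$, exploiting the hyperbolic-rotation form of the $\mathrm{Rot}_{i,i+1}$ action. First consider the easy case $|\{i,i+1\}\cap I|=0$: every twistor $\langle n\,m\rangle$ appearing in $S_I$ has $n,m\notin\{i,i+1\}$, so by the first branch of the $\mathrm{Rot}$ formulas each such twistor is fixed, and hence so is each summand.

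For the nontrivial case $|\{i,i+1\}\cap I|=2$, write $I'=I\setminus\{i,i+1\}$ and split
\[
S_I \;=\; S_{I'} \;+\; \langle i\,(i+1)\rangle^2 \;+\; \sum_{n\in I'}\Bigl[(-1)^{|i-n|+1}\langle i\,n\rangle^2 + (-1)^{|(i+1)-n|+1}\langle (i+1)\,n\rangle^2\Bigr].
\]
By the previous paragraph $S_{I'}$ is fixed, and the last clause of the $\mathrm{Rot}$ formulas says $\langle i\,(i+1)\rangle$ itself is fixed. For each $n\in I'$ the parities of $|i-n|$ and $|(i+1)-n|$ differ, so the bracketed expression reduces (up to an overall sign) to $\langle i\,n\rangle^2 - \langle (i+1)\,n\rangle^2$. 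Plugging in the transformation with $c=\cosh\alpha$, $s=\sinh\alpha$, $\epsilon=\epsilon_i$, a direct expansion gives
\[
(c\,\langle i\,n\rangle - \epsilon s\,\langle (i+1)\,n\rangle)^2 - (c\,\langle (i+1)\,n\rangle - \epsilon s\,\langle i\,n\rangle)^2 \;=\; (c^2 - \epsilon^2 s^2)\bigl(\langle i\,n\rangle^2 - \langle (i+1)\,n\rangle^2\bigr),
\]
and since $\epsilon_i^2 = 1$ and $\cosh^2\alpha - \sinh^2\alpha = 1$ the prefactor is $1$. Thus each bracket is preserved, and summing yields $\mathrm{Rot}_{i,i+1}(\alpha)\,S_I = S_I$.

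The case $i=2k$ is handled identically with indices read modulo $2k$ and $\epsilon_{2k}=-(-1)^k$; the algebraic identity above only uses $\epsilon_i^2=1$, so it still applies, and a quick parity check shows that for $n\in I\setminus\{1,2k\}$ the signs $(-1)^{|1-n|+1}$ and $(-1)^{|n-2k|+1}$ are still opposite, so the pairing into differences $\langle 1\,n\rangle^2 - \langle (2k)\,n\rangle^2$ goes through unchanged.

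The whole argument is essentially a one-line computation once the sum is regrouped, and the only place that requires any care is the sign bookkeeping in the cyclic case $i=2k$. The conceptual point — which is all there is to it — is that $\mathrm{Rot}_{i,i+1}$ acts as a hyperbolic rotation in the $(i,i+1)$ plane of twistors, and $S_I$ restricts on each such pair to the Minkowski-type quadratic form $\langle i\,n\rangle^2 - \langle (i+1)\,n\rangle^2$, which is exactly the invariant of that hyperbolic rotation.
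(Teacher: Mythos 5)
Your proof is correct and is exactly the explicit computation the paper has in mind — the paper's own proof simply reads ``This evident by a trivial computation from Definition~\ref{abs substit def},'' and your write-up supplies that computation, including the correct splitting off of $S_{I'}$ and $\langle i\,(i+1)\rangle^2$, the pairing into hyperbolic-invariant differences $\langle i\,n\rangle^2-\langle(i+1)\,n\rangle^2$, and the sign and $\epsilon_i^2=1$ bookkeeping for the cyclic case $i=2k$.
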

\begin{proof}
    This evident by a trivial computation from~\ref{abs substit def}.
\end{proof}
\begin{dfn}
\label{mat vec abs subst def}
    We write \(\mathcal F^{n}\) for the set of \(\mathcal F_I\)-valued \(n\)-vectors
    and \(\mathcal F^{j\times n}\) for the set of \(\mathcal F_I\)-valued \(j\times n\) matrices.

    For elements \(v\in\mathcal F^{n}\) and \(M\in\mathcal F^{j\times n}\) and a point \([C,\Lambda, Y]\in\mathcal U^\geq_k\) define \(v(\Lambda, Y)\) and \(M(\Lambda,Y)\) to be the vector or matrix achieved by preforming the evaluation entry-wise as in Definition~\ref{abs substit def}. Those may interchangeably refer to projective vectors or elements of the Grassmannian. \(v(\Lambda, Y)\) and \(M(\Lambda,Y)\) are said to be defined on \([C,\Lambda, Y]\in\mathcal U^\geq_k\) if all of their entries are defined on \([C,\Lambda, Y]\).
\end{dfn}
\begin{dfn}
   On \(\mathrm{Mat}_{j\times2k}\),  we define the action of the moves as on \(C\in \OGnon{k}{2k}\). On \(\mathbb R^{2k}\),  we define the action of the moves as acting on rows of elements from \(\mathrm{Mat}_{j\times2k}\), that is, like on \(\mathrm{Mat}_{1\times2k}\)  except for the \(\mathrm{Inc_i}\) move where we define \(\mathrm{Inc_i}:\mathbb R^{2k} \rightarrow \mathbb R^{2k+2}\) by adding two entries of padding zeros after the \(i\)-th position. For the sake of completion we will define the moves as acting on \(\mathbb R \) as the identity.

   On \(\mathcal F_I^{j\times 2k}\),  we define the action of the moves as follows: first act on the entries, then act on the resulting matrix as on \(\mathrm{Mat}_{j\times2k}\). On \(\mathcal F_I^{2k}\),  we define the action of the moves as follows: first act on the entries, then act on the resulting vector as elements of \(\mathbb R^{2k}\).
\end{dfn}

\begin{obs}
      \label{prom index supp obs}
      Let \(f\) be an element of \(\mathcal F\), a \(\mathcal F\)-valued matrix or a \(\mathcal F\)-valued vector, and consider a series \(G\) of moves, such that for any rotation in the series the angles themselves have index-support contained in \(J\). Then \(\mathcal{I}(G(f))\subseteq G(\mathcal I( f))\cup J\).
\end{obs}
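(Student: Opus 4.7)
The plan is to induct on the length $\ell$ of the series $G = g_\ell \circ \cdots \circ g_1$. The case $\ell = 0$ is trivial since $G$ is the identity, so the substance of the argument is the one-move case, which I would obtain by direct inspection of the three formulas in Definition~\ref{abs substit def}, followed by an entry-wise extension to $\mathcal{F}$-valued vectors and matrices.

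For a single move there are three cases. For $\mathrm{Cyc}_k$, each abstract twistor is sent to a scalar multiple of a single abstract twistor whose indices are cyclically shifted, so $\mathcal{I}(\mathrm{Cyc}(f)) = \mathrm{Cyc}(\mathcal{I}(f))$. For $\mathrm{Inc}_i^{\pm 1}$, each abstract twistor is either sent to zero or to a single abstract twistor with its indices shifted as prescribed by the formulas, so $\mathcal{I}(\mathrm{Inc}_i^{\pm 1}(f)) \subseteq \mathrm{Inc}_i^{\pm 1}(\mathcal{I}(f))$. For $\mathrm{Rot}_{i,i+1}(\alpha)$, each abstract twistor becomes a linear combination of at most two abstract twistors whose indices lie in the input's index support together with $\{i,i+1\}$ (the latter appearing only when the input twistor meets $\{i,i+1\}$), with coefficients in $\cosh\alpha,\sinh\alpha$; these angular coefficients contribute $\mathcal{I}(\alpha) \subseteq J$, so $\mathcal{I}(\mathrm{Rot}(f)) \subseteq \mathrm{Rot}(\mathcal{I}(f)) \cup J$. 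The vector/matrix case follows by taking unions of index-supports over entries, since the ambient action of each move on $\mathbb{R}^{2k}$ or $\mathrm{Mat}_{j\times 2k}$ does not introduce any new abstract twistor indices beyond those already present in the entries.

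For the inductive step, I would split $G = G' \circ g$ with $g$ a single move of angle index-support $J_g \subseteq J$, and combine the one-move bound with the inductive hypothesis applied to $G'$:
\[
\mathcal{I}(G(f)) \subseteq G'(\mathcal{I}(g(f))) \cup J_{G'} \subseteq G'\bigl(g(\mathcal{I}(f)) \cup J_g\bigr) \cup J_{G'} \subseteq G(\mathcal{I}(f)) \cup J,
\]
where in the last inclusion I am reading $J$ in the final indexing scheme of the series, i.e., each angle's index-support is understood to be carried forward through any subsequent $\mathrm{Cyc}/\mathrm{Inc}$ shifts. This bookkeeping interpretation of $J$ is the only (mild) subtlety in the proof; there is no genuine obstacle, and the observation is a direct consequence of the explicit formulas in Definition~\ref{abs substit def}.
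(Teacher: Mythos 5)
The paper states this as an \emph{Observation} without a proof environment, so there is no explicit argument in the text to compare against; the authors evidently regard it as immediate from Definition~\ref{abs substit def}. Your proposal is a correct and natural filling-in of that omitted verification: the three single-move cases follow directly by inspecting the substitution rules (noting that $\mathrm{Rot}_{i,i+1}$ applied to a set of indices in the sense of Definition~\ref{def:rot} adds $\{i,i+1\}$ exactly when the set meets $\{i,i+1\}$, matching your observation about which twistors pick up the rotated indices), the ambient column/row action for vectors and matrices only contributes coefficients with index-support in $J$, and the induction closes once $J$ is read, as you note, in the final indexing scheme with angle supports carried forward through subsequent $\mathrm{Inc}/\mathrm{Cyc}$ shifts — an interpretation consistent with the paper's own later uses (e.g.\ in Proposition~\ref{prop:sol commute for arcs}).
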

\begin{prop}
\label{prop:com diag}
     Let \(G\) be a series of moves, \(I\subset [2k]\), and \([C,\Lambda,Y] \in U^\geq_k\). Write \(G [C,\Lambda,Y] = [C_0,\Lambda_0,Y_0]\). For \(f \in \mathcal F _{I},   \mathcal F_I^{ 2k},\) or \(\mathcal F_I^{j\times 2k}\), we have   \((G f)(\Lambda_0,Y_0) = G (f(\Lambda,Y)).\)
     
\end{prop}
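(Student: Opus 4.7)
My plan is a three-step reduction to the behavior of a single abstract twistor under a single elementary move, followed by a direct appeal to the twistor-transformation propositions already proved in the paper.

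First I would decompose \(G = G_n \circ \cdots \circ G_1\) as a composition of elementary moves (each of \(\mathrm{Rot}^{\pm 1}_{i,i+1}(\alpha)\), \(\mathrm{Cyc}\), or \(\mathrm{Inc}_i^{\pm 1}\)) and argue by induction on \(n\). Setting \([C_j,\Lambda_j,Y_j] \defeq G_j\circ \cdots\circ G_1[C,\Lambda,Y]\), the inductive step applies the single-move case of \(G_n\) to the function \(G_{n-1}\cdots G_1 f\) evaluated at \([C_{n-1},\Lambda_{n-1},Y_{n-1}]\):
\[
(Gf)(\Lambda_0,Y_0) \;=\; G_n\bigl((G_{n-1}\cdots G_1 f)(\Lambda_{n-1},Y_{n-1})\bigr) \;=\; G_n G_{n-1}\cdots G_1\bigl(f(\Lambda,Y)\bigr).
\]

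Second, the matrix/vector version reduces to the scalar version. By definition, for \(f \in \mathcal F^{2k}_I\) or \(\mathcal F^{j\times 2k}_I\), \(Gf\) is obtained by first applying \(G\) entry-wise to the \(\mathcal F_I\)-valued entries and then applying the appropriate \(\mathbb R\)-linear transformation coming from the vector/matrix action (right multiplication by a hyperbolic rotation, the signed cyclic shift of columns, or insertion of two zero columns at positions \(i, i+1\)). Since every such \(\mathbb R\)-linear operation commutes with point-wise evaluation of its entries, the claim reduces to the entry-wise scalar statement.

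Third, for a scalar \(f \in \mathcal F_I\), the move \(G\) acts by substituting \(G\langle n\,m\rangle\) for each occurrence of the abstract twistor \(\langle n\,m\rangle\) in the formal expression of \(f\). Since evaluation at \((\Lambda,Y)\) is a homomorphism from the algebra of formal twistor expressions into \(\widehat{\mathbb C}\), and the elementary moves act as the identity on the resulting scalars, the scalar statement reduces to the single-twistor identity
\[
(G\langle n\,m\rangle)(\Lambda_0,Y_0) = \twist{Y}{n}{m}_\Lambda.
\]
This is then a direct case analysis: for \(G = \mathrm{Rot}_{i,i+1}^{-1}(\alpha)\) it is precisely the entrywise unpacking of Proposition~\ref{prop:promoting twistors} (\(\Twist{Y_0}_{\Lambda_0}=\Twist{Y}_\Lambda R^{-1}\)) versus the piecewise \(\cosh/\sinh\) formula defining \(\mathrm{Rot}\) on \(\mathcal F\); for \(G = \mathrm{Cyc}\) it is Proposition~\ref{cyc twist}; for \(G = \mathrm{Inc}_i^{-1}\) it is Proposition~\ref{prop:TaylorTwist}, whose four index-range cases and its statements about twistors involving \(\{i,i+1\}\) match the four cases and zero-cases in the definition of \(\mathrm{Inc}_i^{-1}\) on \(\mathcal F\). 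The forward directions (\(\mathrm{Rot}\), \(\mathrm{Inc}_i\)) follow either by composing with inverses or by running the same argument on the locus where the move is genuinely invertible, keeping in mind the one-sided nature of \(\mathrm{Inc}_i^{-1}\) (Remarks~\ref{rmk:inv inc abuse} and~\ref{rmk:abuse of notation 2}).

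The main obstacle is purely bookkeeping. One must carefully track the sign \(\epsilon_i = -(-1)^k\) appearing at the seam \(i = 2k\) in \(\mathrm{Cyc}\) and \(\mathrm{Rot}_{2k,1}\), as well as the sign flips in \(\mathrm{Inc}_i^{-1}\) arising from the index shifts past \(i\). Since both the action on abstract twistors and the action on genuine twistors from \(\mathcal U_k\) were defined compatibly, the verification amounts to matching these piecewise specifications on each index pair \((n,m)\).
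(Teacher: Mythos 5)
Your proof is correct and takes essentially the same route as the paper's, whose own proof is a one-liner — ``Immediate from Proposition~\ref{prop:prom rules}'' — the latter being the summary statement of the same Propositions~\ref{prop:promoting twistors}, \ref{cyc twist}, and \ref{prop:TaylorTwist} that you invoke directly in your case analysis. Your three-step reduction (single elementary moves, then scalar case, then single abstract twistor) is a faithful and more careful unpacking of that ``immediate'' step; in particular your second step, commuting the matrix/vector action with pointwise evaluation, is a detail the paper leaves implicit, since Proposition~\ref{prop:prom rules} speaks only of $C$, $\lambda$, and $\Twist{Y}$ and not of $\mathcal{F}_I^{j\times 2k}$-valued objects.
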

\begin{proof}
    Immediate form Proposition~\ref{prop:prom rules}.
\end{proof}

\begin{coro}
    If \(M\in \mathcal F_I^{j\times 2k}\) is defined on \([C,\Lambda,Y]\in\mathcal U_k^\geq\), then \(G M\) is defined on \(G[C,\Lambda,Y]\). If \(\mathbf v\in \mathcal F_I^{ 2k}\) is defined on \([C,\Lambda,Y]\in\mathcal U_k^\geq\), then \(G \mathbf v\) is defined on \(G[C,\Lambda,Y]\).
\end{coro}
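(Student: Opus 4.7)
The plan is to derive this as a direct corollary of Proposition~\ref{prop:com diag} together with the observation that each of the three basic moves, viewed as an operation on ordinary real matrices and vectors, preserves finiteness of entries. By Definitions~\ref{abs substit def} and~\ref{mat vec abs subst def}, the hypothesis that $M$ (resp.\ $\mathbf v$) is defined on $[C,\Lambda,Y]$ simply says that every entry of $M(\Lambda,Y)$ (resp.\ $\mathbf v(\Lambda,Y)$) lies in $\mathbb{C}\subset\widehat{\mathbb{C}}$.

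First I would invoke Proposition~\ref{prop:com diag} entrywise to obtain
\[
(GM)(\Lambda_0,Y_0) \;=\; G\bigl(M(\Lambda,Y)\bigr),
\]
and similarly for $\mathbf v$. So it suffices to verify that whenever $N$ is an ordinary real matrix (resp.\ $w$ a real vector) with all finite entries, applying $G$ via its action on $\mathrm{Mat}_{j\times 2k}$ (resp.\ on rows of such matrices) yields a result with all finite entries. This reduces, by induction on the length of the sequence $G$, to checking the three elementary moves separately.

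For $\mathrm{Cyc}$, the action is a signed cyclic permutation of columns, which trivially preserves finiteness. For $\mathrm{Inc}_i$, the action inserts two columns (resp.\ entries) of zeros at position $i$, again trivially preserving finiteness. The only case that warrants a comment is $\mathrm{Rot}_{i,i+1}(\alpha)$, which acts by right multiplication by $R_{i,i+1}(\alpha)$, a matrix whose only nontrivial entries are $\cosh\alpha$ and $\pm\sinh\alpha$. The finiteness of these entries is precisely what is packaged into the standing assumption that $G[C,\Lambda,Y]=[C_0,\Lambda_0,Y_0]$ is a bona fide element of $\mathcal U_k^\geq$: this forces every Rot angle $\alpha$ appearing in $G$ to be finite when evaluated at its intermediate triple. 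Given this, the product of a finite matrix with $R_{i,i+1}(\alpha)$ is finite, and composing the elementary moves in order gives the result. I expect no real obstacle here; the only thing to be careful about is tracing through the induction to make sure that the well-definedness of $G[C,\Lambda,Y]$ supplies finiteness of the Rot angles at each intermediate step, so that Proposition~\ref{prop:com diag} is legitimately applicable throughout. The argument for vectors is identical.
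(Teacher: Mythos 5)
Your proposal is correct and matches the paper's (implicit) reasoning: the corollary is stated without proof because it is regarded as an immediate consequence of Proposition~\ref{prop:com diag} together with the fact that $\mathrm{Cyc}$, $\mathrm{Inc}_i$, and $\mathrm{Rot}_{i,i+1}(\alpha)$, as operations on ordinary real matrices and vectors, send finite input to finite output. Your extra care about the $\mathrm{Rot}$ angles being finite at the intermediate triples (guaranteed by the well-definedness of $G[C,\Lambda,Y]$) is a reasonable and correct point to spell out, and your description of $\mathrm{Inc}_i$ on matrices is slightly simplified (it also inserts a row with $\pm 1$ entries, not just zero columns), but this does not affect the finiteness conclusion.
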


Our strategy for inverting the amplituhedron map on orthitroid cells is to associate to each cell an element \(M \in \mathcal{F}_{[2k]}^{k \times 2k}\) such that for any \([C,\Lambda,Y] \in \mathcal{U}_k^\geq\) with \(C\) lying in that cell, we have
\(
C = M(\Lambda,Y)
\)
as elements of \(\Gr{k}{2k}\). The space \(\mathcal{F}_{[2k]}\) denotes a natural collection of functions on the amplituhedron that arise in the inversion of the amplituhedron map on BCFW cells and their boundaries. In practice, we do not require all such functions, but only those that can be expressed using quotients and square roots. See Section~\ref{sec:inj} for further details.

\subsubsection{The \texorpdfstring{\(\mathrm{Arc}\)}{Arc} Move}
Having introduced the basic local moves in Appendix~\ref{app:local moves}, we now define a compound move that plays a central role in the study of the amplituhedron and the BCFW cells. The significance of this move will become clear in Section~\ref{sec:inj}. This operation can be viewed as the ABJM analogue of the \emph{upper embedding} introduced in \cite{amptriag} for the planar \(\mathcal{N}=4\) SYM amplituhedron. We begin by introducing some useful expressions.

\begin{dfn}
\label{def:vec angle exp}
For \(n =2,3,4\) and \(\ell\in[2k]\), define \(\mathrm v_{n,\ell,k}\in\mathcal F^{2k}\) as follows (with indices considered mod \(2k\)):
\begin{align*}
    \mathbf{v}_{2,\ell,k}^i& \defeq \begin{cases}
        1&i=\ell,\ell+1\\
        0 &\mathrm{otherwise}\\
    \end{cases}\\
    \mathbf{v}_{3,\ell,k}^i& \defeq \begin{cases}
        \langle i_2\,i_3\rangle_{(2k)}&i=i_1\\
        -\varepsilon_{i}\langle i_1\,i_3\rangle_{(2k)}&i=i_2\\
        \varepsilon_{i}\langle i_1\,i_2\rangle_{(2k)}&i=i_3\\
        0 &\mathrm{otherwise}\\
    \end{cases}\\
      \mathbf{v}_{4,\ell,k}^i& \defeq
       \begin{cases}
        S_{\{i_2,i_3,i_4\}}&i=i_1\\
        \varepsilon_{i}\left(\langle i_1\,i_4\rangle_{(2k)}\langle i_2\,i_4\rangle_{(2k)} - \langle i_1\,i_3\rangle_{(2k)}\langle i_2\,i_3\rangle_{(2k)}+\langle i_3\,i_4\rangle_{(2k)}S\right)&i=i_2\\
        \varepsilon_{i}\left(\langle i_1\,i_2\rangle_{(2k)}\langle i_2\,i_3\rangle_{(2k)} - \langle i_1\,i_4\rangle_{(2k)}\langle i_3\,i_4\rangle_{(2k)}-\langle i_2\,i_4\rangle_{(2k)}S\right)&i=i_3\\
        \varepsilon_{i}\left(\langle i_1\,i_3\rangle_{(2k)}\langle i_3\,i_4\rangle_{(2k)} - \langle i_1\,i_2\rangle_{(2k)}\langle i_2\,i_4\rangle_{(2k)}+\langle i_2\,i_3\rangle_{(2k)}S\right)&i=i_4\\
        0 &\mathrm{otherwise},\\
    \end{cases} 
\end{align*}
    where \(i_j = \ell+j-1\), \(\varepsilon_{i}= 1\) if \(\ell\leq i\) and \(-1\) otherwise, and \(S = \sqrt{S_{\{i_1,i_2,i_3,i_4\}}}\). Additionally, define \(\alpha_{n,\ell,k,i}\in\mathcal F\) recursively as follows (with the empty product equaling \(1\)):

    \[
     \alpha_{n,\ell,k,i}= \mathrm{arccosh}\left(\frac{\mathrm{Cyc}^{\ell-1}_k (\mathbf v_{n,1,k}^{i+1})}{\mathbf \mathrm{Cyc}^{\ell-1}_k (\mathbf v_{n,1,k}^{1})\prod_{j=1}^{i-1}\mathcal \sinh(\alpha_{n,\ell,k,j})}\right)
    .\]
    \end{dfn}
\begin{dfn}
\label{def:arc}

     For \(\Gamma_0\) a reduced \(k\)-OG graph, define \(\Gamma = \mathrm{Arc}_{n,\ell,k}(\Gamma_0)\) to be the \((k+1)\)-OG graph resulting from adding an external arc starting at \(\ell\)  going anti-clockwise with support of length \(n\) (for \(n>1\)). That is, for \(I=\{\ell=i_1,i_2,..,i_n\}\) consecutive mod \(2k+2\), define
    \[
    \Gamma = \mathrm{Arc}_{n,\ell,k}(\Gamma_0)\defeq \mathrm{Rot}_{i_{n-1},i_n}\cdot...\cdot\mathrm{Rot}_{i_3,i_4}\mathrm{Rot}_{i_2,i_3}\mathrm{Inc}_{\ell}(\Gamma_0).
    \]
    and define the actions on arcs, indices, and sets of indices in the same way (see Figure~\ref{fig:ext arc} for an example of \(\mathrm{Arc}_{4,i_1}(\Gamma_0)\)).  If \(\tau_\ell\) is an external arc, we can choose a \(\tau_\ell\)-proper orientation for \(\Gamma\) (see Definition~\ref{prop orientation def}). We get that \(i_2,i_3,...,i_{n}\) are sinks. Thus,  we can label the new oriented vertices oriented with the \(\tau_\ell\)-proper orientation:
    \[
    \Gamma = \mathrm{Arc}_{n,\ell,k}(v^\omega)(\Gamma_0)\defeq \mathrm{Rot}_{i_{n-1},i_n}(v^\omega_{n-2})\cdot...\cdot\mathrm{Rot}_{i_3,i_4}(v^\omega_{2})\mathrm{Rot}_{i_2,i_3}(v^\omega_{1})\mathrm{Inc}_{\ell}(\Gamma_0).
    \]
     
     Similarly, for \(n\leq 4\), and  define
    \[
    \mathrm{Arc}_{n,\ell,k}(M)\\
    \defeq \mathrm{Rot}_{i_{n-1},i_n}(\alpha_{n,\ell,k,n -2})\cdot...\cdot\mathrm{Rot}_{i_3,i_4}(\alpha_{n,\ell,k,2})\mathrm{Rot}_{i_2,i_3}(\alpha_{n,\ell,k,1})\mathrm{Inc}_{l}(M),
    \]
    and for \(\theta\in\mathbb R^{n-2}\), write
    \[
    \mathrm{Arc}_{n,\ell,k}(\theta)(M)\\
    \defeq \mathrm{Rot}_{i_{n-1},i_n}(\theta_{n-2})\cdot...\cdot\mathrm{Rot}_{i_3,i_4}(\theta_2)\mathrm{Rot}_{i_2,i_3}(\theta_1)\mathrm{Inc}_{\ell}(M),
    \]
    for the induced action on \(M \in \mathcal F^{j\times 2k}\), \(M \in \mathcal F^{2k}\), or \(M\in \mathcal{F}\). We will omit \(k\) when it is clear from context.
\end{dfn}
\begin{obs}
\label{obs:inv arc inj}
    Let \(C_1^k, C_2^k\in \OGnon{k}{2k}\) and \(\Lambda^{k+1}\in\mathrm{Mat}^>_{(2k+2)\times(k+3)}\) be such that 
    \[\widetilde{\Lambda}^{k+1}(\mathrm{Arc}_{n,\ell}(\theta)(C_1^k))=\widetilde{\Lambda}^{k+1}(\mathrm{Arc}_{n,\ell}(\theta)(C_2^k)),\]
     for some \(n\geq 2\), \(\theta \in \mathbb R^{n-2}\), and \(i\in[2k]\). Then we have that \(\widetilde{\Lambda}^{k}(C_1^k)=\widetilde{\Lambda}^{k}(C_2^k),\)
     where \(\Lambda^k\defeq \mathrm{Arc}_{n,\ell}(\theta)^{-1}(\Lambda^{k+1})\).
\end{obs}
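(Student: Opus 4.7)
My plan is to reduce the Arc claim to the \(\mathrm{Inc}\) case, which is handled by Corollary~\ref{coro:inv inc inj}. Unfolding the definition,
\[
\mathrm{Arc}_{n,\ell,k}(\theta) \;=\; \mathrm{Rot}_{i_{n-1},i_n}(\theta_{n-2})\cdots\mathrm{Rot}_{i_3,i_4}(\theta_{2})\mathrm{Rot}_{i_2,i_3}(\theta_1)\,\mathrm{Inc}_\ell,
\]
so the strategy is to ``push'' the \(n-2\) rotation factors across from the \(C\)-side of the amplituhedron map to the \(\Lambda\)-side, leaving only the \(\mathrm{Inc}\) acting on \(C\) and a modified \(\Lambda\) on the other.

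The key identity is \(\mathrm{Rot}_{j,j+1}(\alpha)(C)\cdot\Lambda = C\cdot\mathrm{Rot}_{j,j+1}^{-1}(\alpha)(\Lambda)\), which is immediate from Definition~\ref{def:rot of lambda}: there \(\mathrm{Rot}(C) = C R\) while \(\mathrm{Rot}^{-1}(\Lambda) = R\Lambda\), so the two copies of \(R\) cancel. Iterating this identity through all \(n-2\) rotation factors of the Arc gives, for \(i=1,2\),
\[
\widetilde{\Lambda^{k+1}}\bigl(\mathrm{Arc}_{n,\ell}(\theta)(C_i^k)\bigr) \;=\; \mathrm{Inc}_\ell(C_i^k)\cdot\Lambda',
\]
where
\[
\Lambda' \defeq \mathrm{Rot}_{i_2,i_3}^{-1}(\theta_1)\,\mathrm{Rot}_{i_3,i_4}^{-1}(\theta_2)\cdots\mathrm{Rot}_{i_{n-1},i_n}^{-1}(\theta_{n-2})(\Lambda^{k+1}).
\]
Since each \(\mathrm{Rot}^{-1}\) preserves positivity of \(\Lambda\) (as noted in the paragraph following Definition~\ref{def:rot of lambda}), we have \(\Lambda'\in\mathrm{Mat}^>_{(2k+2)\times(k+3)}\), so it is a legitimate piece of external data.

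The hypothesis now reads \(\widetilde{\Lambda'}(\mathrm{Inc}_\ell(C_1^k)) = \widetilde{\Lambda'}(\mathrm{Inc}_\ell(C_2^k))\), which is precisely the input of Corollary~\ref{coro:inv inc inj}. Applying that corollary yields \(\widetilde{\Lambda^k}(C_1^k) = \widetilde{\Lambda^k}(C_2^k)\) with \(\Lambda^k = \mathrm{Inc}_\ell^{-1}(\Lambda')\), and unpacking the definitions shows
\[
\mathrm{Inc}_\ell^{-1}(\Lambda') \;=\; \mathrm{Inc}_\ell^{-1}\mathrm{Rot}_{i_2,i_3}^{-1}(\theta_1)\cdots\mathrm{Rot}_{i_{n-1},i_n}^{-1}(\theta_{n-2})(\Lambda^{k+1}) \;=\; \mathrm{Arc}_{n,\ell}(\theta)^{-1}(\Lambda^{k+1}),
\]
as required. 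I do not foresee a real obstacle here: the argument is essentially bookkeeping that reduces the claim to a previously established case. The only sanity check is the strict positivity of \(\Lambda'\) needed to invoke Corollary~\ref{coro:inv inc inj}, and this is guaranteed by the preservation of \(\Lambda\)-positivity under \(\mathrm{Rot}^{-1}\).
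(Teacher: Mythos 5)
Your proof is correct and is essentially the paper's own argument, only collapsed into a single pass: the paper phrases the same reduction as an induction on $n$, peeling off one $\mathrm{Rot}$ factor per step via the identity $R(C)\Lambda = C\,R^{-1}(\Lambda)$ and then invoking Corollary~\ref{coro:inv inc inj} at the base case $n=2$, whereas you transfer all $n-2$ rotation factors across at once before applying the same corollary. Both rest on the identity in Definition~\ref{def:rot of lambda} and the remark on $\Lambda$-positivity under $\mathrm{Rot}^{-1}$, so there is no meaningful difference in substance.
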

\begin{proof}
    Write \(Y_1 \defeq \widetilde{\Lambda}^{k}(C_1^k)\) and \(Y_1 \defeq \widetilde{\Lambda}^{k}(C_2^k)\). We will prove by induction on \(n\). For the base case of \(n=2\), we have that \(\mathrm{Arc}_{2,\ell}=\mathrm{Inc}_\ell\) and the claim reduces to Corollary~\ref{coro:inv inc inj}. For the step, it enough to prove the stronger statement that for \(R = \mathrm{Rot}_{i,i+1}(\alpha)\), \(\alpha\in\mathbb R\), \(C_1,C_2\in\OGnon{k}{2k}\), and \(\Lambda\in\mathrm{Mat}^>_{2k\times(k+2)}\), we have that \(R(C_1) \Lambda = R(C_2) \Lambda\) iff \(C_1 R^{-1}(\Lambda)=C_2 R^{-1}(\Lambda)\). But that is trivially true by Definition~\ref{def:rot of lambda}.
\end{proof}
\begin{dfn}
\label{def:O graph}
   Let us denote by \(O\) the \((k=0)\)-OG graph. That is, the graph with no internal or external vertices. We will consider \(\Omega_{O}\) to be containing a single element, which is represented by the empty \(0\times0\) matrix.
\end{dfn}
\begin{rmk}
We can now write complex graphs algebraically, as seen in Figure~\ref{fig:building with arcs}.
\begin{figure}[H]
    \centering
                \begin{center}
\begin{tikzpicture}[scale = 0.85]

\draw(-1.1,0)node[anchor=east]{\(\textcolor{blue}{\mathrm{Arc}_{2,1}}(O) =\)};
\draw (0,0) circle (1cm);
\draw[blue](0,-1)--(0,1);
\draw(0,-1)node[anchor=north]{\(1\)};
\draw(0,1)node[anchor=south]{\(2\)};

\draw(-1.1+6,0)node[anchor=east]{\(\textcolor{blue}{\mathrm{Arc}_{3,2}}\mathrm{Arc}_{2,1}(O) =\) };
\draw (0+6,0) circle (1cm);
\draw[blue](-1.41421/2+6, -1.41421/2)--(1.41421/2+6, 1.41421/2);
\draw(-1.41421/2+6, -1.41421/2)node[anchor=north]{\(4\)};
\draw(1.41421/2+6, 1.41421/2)node[anchor=south]{\(2\)};
\draw(-1.41421/2+6, 1.41421/2)node[anchor=south]{\(3\)}--(1.41421/2+6, -1.41421/2)node[anchor=north]{\(1\)};

\draw(-1.1+3,0-3)node[anchor=east]{\(\textcolor{blue}{\mathrm{Arc}_{4,2}}\mathrm{Arc}_{3,2}\mathrm{Arc}_{2,1}(O) =\)};
\draw (0+3,0-3) circle (1cm);
\draw({-0.517638/2+3, -1.93185/2-3})node[anchor=north]{\(6\)}--(1.41421/2+3, 1.41421/2-3)node[anchor=south]{\(3\)};
\draw[blue](-1.93185/2+3, 0.517638/2-3)--(1.93185/2+3, 0.517638/2-3);
\draw(-1.93185/2+3, 0.517638/2-3)node[anchor=east]{\(5\)};
\draw(1.93185/2+3, 0.517638/2-3)node[anchor=west]{\(2\)};
\draw(-1.41421/2+3, 1.41421/2-3)node[anchor=south]{\(4\)}--(0.517638/2+3, -1.93185/2-3)node[anchor=north]{\(1\)};
\end{tikzpicture}
\end{center}
    \caption{building OG graphs using the \(\mathrm{Arc}\) move\\ (the added arcs for each step are in blue)}
    \label{fig:building with arcs}
\end{figure}

\end{rmk}
\begin{rmk}
     For the sake of completion, let us write the effect of \(\mathrm{Arc}_{4,2}(\alpha_1,\alpha_2)\)  on matrices explicitly (the effect for other \(i\) is the same up to some \(\mathrm{Cyc}\) moves):

\begin{align*}
&\left(
\begin{array}{cccc}
\vertbar&\vertbar&&\vertbar\\
C^1&C^2&\cdots&C^{2k}\\
\vertbar&\vertbar&&\vertbar
\end{array}
\right)
\mapsto\\
&\mapsto 
\left(
\begin{array}{cccccccc}
\vertbar&\vertbar&\vertbar&\vertbar&\vertbar&\vertbar&&\vertbar\\
C^1&0&s_1C^{2}&c_1c_2C^{2}+s_2C^{3}&c_1s_2C^{2}+c_2C^{3}&C^{4}&\cdots&C^{2k}\\
\vertbar&\vertbar&\vertbar&\vertbar&\vertbar&\vertbar&&\vertbar\\
0&1&c_1&s_1c_2&s_1s_2&0&\cdots&0\\
\end{array}
\right),
\end{align*}
where \(s_i = \mathrm{sinh}(\alpha_i)\), and \(c_i = \mathrm{cosh}(\alpha_i)\) for \(i=1,2\).
The effect of the \(\mathrm{Arc}_{4,2}\) move on twistor variables is as follows.
For brevity, we will write the effect on indices, which then can be expanded linearly for the effect on twistors.
\[
i\mapsto
\begin{cases}
 1&i=1\\
 c_14+s_13&i=2\\
 c_25+s_2c_14+s_2s_13&i=3\\
 i+2&\text{otherwise}.\\
\end{cases}
\]
\end{rmk}

See Example~\ref{exm:s234} for a calculation of an explicit expression for \(\mathrm{Arc_{4,4}}(S_{\{2,3,4\}})\).

\section{BCFW Cells and Their Boundaries} \label{sec:BCFW Cells and their Boundaries}
The goal of the following section is to describe the BCFW cells and their boundaries in the language of the \(\mathrm{Arc}\) move, defined in Definition~\ref{def:arc}. We have seen that BCFW cells are three-regular trees of triangles with a triangle between the \(1\) and \(2k\) vertex (Theorem~\ref{thm:BCFW are trees}). We will show these graphs can be built by repeatedly adding external arcs of length \(4\) starting on even indices, to the top cell of \(\OGnon{3}{6}\) (which is represented by the graph \(\Gamma = \mathrm{Arc}_{4,2}\mathrm{Arc}_{3,2}\mathrm{Arc}_{2,1}(O)\)).

Start by considering the effect of the \(\mathrm{Arc}_{4,2i}\) move on the triangle tree (recall Definition~\ref{def:ToT}). Applying an \(\mathrm{Arc}_{4,2i}\) will add two additional triangle leaves on the \(2i\), \(2i+1\) triangle leaf (see Figures~\ref{fig:tirag tree} and~\ref{fig:arc triag}).
\begin{figure}[H]
    \centering
\begin{center}
\begin{tikzpicture}[scale = 0.8]

\draw[very thick, -latex, shorten <= 0] (2.3,0)   --(3.7,0);
\draw (3,0)node[anchor=south]{\(\mathrm{Arc}_{4,2i}\)};

\draw (0,0) circle (2cm);
\filldraw[lightgray] (0,1/2) circle (1cm);
\node[scale=3] (c) at (0,1/2)  {\(\Gamma_0\)};
\draw (0,1/2) circle (1cm);
\draw (1.96962/2, 1.3473/2) --(3.64697/2, 1.64306/2);
\draw (-1.96962/2, 1.3473/2) --(-3.64697/2, 1.64306/2);
\filldraw[black] (0,3.5/2) circle (2pt);

\filldraw[black] (1.27565/2, 3.20949/2) circle (2pt);
\filldraw[black] (-1.27565/2, 3.20949/2) circle (2pt);

\draw (0.68404/2, -0.879385/2) --(1.68446/2, -3.62803/2);
\draw (1.68446/2+0.3, -3.62803/2)node[anchor=north]{\(2i+1\)};

\draw (-0.68404/2, -0.879385/2)  --(-1.68446/2, -3.62803/2);
\draw (-1.68446/2-0.3, -3.62803/2)node[anchor=north]{\(2i\)};

\draw[blue](0, -1/2)-- (0,-2);
\filldraw[blue] (0,-2) circle (1pt);

\draw (0+6,0) circle (2cm);
\filldraw[lightgray] (0+6,1/2) circle (1cm);
\node[scale=3] (c) at (0+6,1/2)  {\(\Gamma_0\)};
\draw (0+6,1/2) circle (1cm);
\draw (1.96962/2+6, 1.3473/2) --(3.64697/2+6, 1.64306/2);
\draw (-1.96962/2+6, 1.3473/2) --(-3.64697/2+6, 1.64306/2);
\filldraw[black] (0+6,3.5/2) circle (2pt);

\filldraw[black] (1.27565/2+6, 3.20949/2) circle (2pt);
\filldraw[black] (-1.27565/2+6, 3.20949/2) circle (2pt);

\draw (0.68404/2+6, -0.879385/2) --(1.68446/2+6, -3.62803/2);
\draw (1.68446/2+0.3+6, -3.62803/2)node[anchor=north]{\(2i+2\)};

\draw (-0.68404/2+6, -0.879385/2)  --(-1.68446/2+6, -3.62803/2);
\draw (-1.68446/2-0.3+6, -3.62803/2)node[anchor=north]{\(2i+1\)};

\draw (-3.27261/2+6, -2.3/2) node[anchor=north east]{\(2i\)}-- (3.27261/2+6, -2.3/2)node[anchor=north west ]{\(2i+3\)} ;

\draw[blue](0+6, -1/2)-- (0+6,-0.75);
\filldraw[blue] (0+6,-0.75) circle (1pt);
\filldraw[blue] (-1.23539+6, -1.57284) circle (1pt);
\draw[blue](-1.23539+6, -1.57284)-- (0+6,-0.75);
\filldraw[blue] (1.23539+6, -1.57284) circle (1pt);
\draw[blue](1.23539+6, -1.57284)-- (0+6,-0.75);
 
\end{tikzpicture}
\end{center}
    \caption{The effect of \(\mathrm{Arc}_{4,2i}\) on the triangle graph}
    \label{fig:arc triag}
\end{figure}

The conditions necessary for getting a reduced graph after applying the \(\mathrm{Arc}_{4,2i}\) are that the original graph is reduced (true by Proposition~\ref{prop:ToT reduced}), and that there is no arc contained in \(\{2i,2i+1\}\) in the original graph (by Observation~\ref{obs:reduce arc move}), that is, that \(\{2i,2i+1\}\) is not an external \(2\)-arc (see Definition~\ref{def:arc sup}). This is true for any ToT graph because ToT graphs have no external \(2\)-arcs, only external \(4\)-arcs. 

\begin{prop}
\label{prop:BCFW arcs}
    An OG graph \(\Gamma\) is a BCFW graphs iff it can be expressed in the form 
    \[
    \mathrm{Arc}_{4,2i_{n}}\mathrm{Arc}_{4,2i_{n-1}}\cdot...\cdot\mathrm{Arc}_{4,2i_3}\mathrm{Arc}_{4,2i_2}\mathrm{Arc}_{3,2i_1}\mathrm{Arc}_{2,1}(O) ,
    \]
    for \(n\geq 1 \), where \(i_j \in [j]\), with the result being reduced after each \(\mathrm{Arc}\) move.
\end{prop}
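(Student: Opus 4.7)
The plan is to prove both directions simultaneously by induction on $k$. The base cases $k = 2$ and $k = 3$ are verified directly: the unique $k = 2$ BCFW graph equals $\mathrm{Arc}_{3,2}\mathrm{Arc}_{2,1}(O)$, and the unique $k = 3$ BCFW graph (the top cell of $\OGnon{3}{6}$) equals $\mathrm{Arc}_{4,2}\mathrm{Arc}_{3,2}\mathrm{Arc}_{2,1}(O)$, as illustrated in Figure~\ref{fig:building with arcs}.

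For the forward ($\Leftarrow$) direction in the inductive step ($k \ge 4$), suppose $\Gamma_0$ is a BCFW graph with $2(k-1)$ external vertices and set $\Gamma = \mathrm{Arc}_{4,2i}(\Gamma_0)$ for some $i \in [k-1]$ with $\Gamma$ reduced. By Theorem~\ref{thm:BCFW are trees} and Proposition~\ref{prop:ToT reduced}, $\Gamma_0$ is a ToT whose triangle tree $\Delta(\Gamma_0)$ has its root between positions $1$ and $2(k-1)$, and whose underlying OG graph contains no external $2$-arcs. As described around Figure~\ref{fig:arc triag}, $\mathrm{Arc}_{4,2i}$ then acts on $\Delta(\Gamma_0)$ by grafting a subtree with two new leaves onto the triangle at the boundary pair $(2i, 2i+1)$, yielding another tree. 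Since $2i \le 2(k-1) < 2k$, the root is undisturbed, so $\Gamma$ is again a ToT with root between $1$ and $2k$, i.e., a BCFW graph.

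For the backward ($\Rightarrow$) direction in the inductive step, let $\Gamma$ be a BCFW graph with $k \ge 4$. Its triangle tree $\Delta(\Gamma)$ has $k - 2 \ge 2$ internal vertices and hence at least two leaves (any tree on $\ge 2$ vertices does), so at least one leaf is not the root. I would maintain as part of the induction the invariant that every triangle leaf sits at an even-odd boundary pair $(2i, 2i+1)$ in cyclic order; in particular, the chosen non-root leaf is at position $(2i, 2i+1)$ with $i \in [k-1]$. Applying $\mathrm{Arc}_{4,2i}^{-1}$ to such a leaf yields a graph $\Gamma_0$ with $2(k-1)$ external vertices that is still a ToT with root between $1$ and $2(k-1)$, i.e., a BCFW graph. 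The inductive hypothesis then expresses $\Gamma_0$ in the required form, and prepending $\mathrm{Arc}_{4,2i}$ completes the expression for $\Gamma$.

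The main obstacle is justifying the parity invariant that triangle leaves always lie on even-odd boundary pairs $(2i, 2i+1)$, since this is exactly what identifies leaves with valid indices $i \in [k-1]$ (excluding the root position $(2k, 1)$). I would fold this into the double induction: the base cases satisfy it by inspection (the $k = 3$ root is at $(6, 1)$), and each forward step $\mathrm{Arc}_{4,2i}$ creates both new leaves at the even-odd pairs $(2i, 2i+1)$ and $(2i+2, 2i+3)$ while shifting all pre-existing triangle positions by $0$ or $2$, preserving parity. A secondary technical check, that $\mathrm{Arc}_{4,2i}^{-1}$ genuinely inverts $\mathrm{Arc}_{4,2i}$ on the cells in question, follows from the invertibility of $\mathrm{Rot}$, the left-invertibility of $\mathrm{Inc}$, and Proposition~\ref{prop:uniquness of trees}.
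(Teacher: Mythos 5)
Your overall approach is the same as the paper's: prove it by induction on $k$, using the observation (illustrated around Figure~\ref{fig:arc triag}) that $\mathrm{Arc}_{4,2i}$ acts on the triangle tree by replacing a leaf with a $3$-regular vertex carrying two new leaves, together with the combinatorial fact that every $3$-regular tree is built this way. Two remarks on the details. First, the parity invariant you single out as the main obstacle is actually automatic and does not need to be carried in the induction: in the medial graph construction the two external half-edges flanking a given external vertex of the disk graph $G$ are cyclically consecutive, so the $k$ leaves of $\Delta(\Gamma)$ occupy $k$ consecutive pairs that partition $[2k]$; since Theorem~\ref{thm:BCFW are trees} fixes one leaf (the root) at $(2k,1)$, these pairs are forced to be exactly $(2i,2i+1)$ for $i\in\{0,\ldots,k-1\}$ modulo $2k$.

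Second, and more importantly, there is a gap in your backward direction. You pick ``a non-root leaf'' at $(2i,2i+1)$ and ``apply $\mathrm{Arc}_{4,2i}^{-1}$ to such a leaf.'' But $\mathrm{Arc}_{4,2i}^{-1}$ removes an external $4$-arc with support $\{2i,2i+1,2i+2,2i+3\}$, which requires a \emph{pair} of sibling leaves at $(2i,2i+1)$ and $(2i+2,2i+3)$, i.e.\ an internal vertex of $\Delta(\Gamma)$ two of whose neighbors are leaves. A random non-root leaf need not have such a sibling: in a ``caterpillar'' triangle tree, most leaves hang off internal vertices whose other two neighbors are both internal, so no $\mathrm{Arc}_4$-inverse is available there. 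The correct existence statement is: for $k\ge4$, deleting all leaves from $\Delta(\Gamma)$ leaves a tree on $k-2\ge2$ vertices, which therefore has at least two leaves; each such leaf is, in $\Delta(\Gamma)$, a $3$-regular vertex with two leaf-children, hence gives an external $4$-arc of $\Gamma$. At most one of these $4$-arcs has support containing the pair $(2k,1)$, so at least one has support $\{2i,\ldots,2i+3\}$ with $i\in[k-1]$, and applying $\mathrm{Arc}_{4,2i}^{-1}$ there yields a BCFW graph on $2(k-1)$ external vertices. With that one-line repair your induction closes.
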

\begin{proof}
    This follows from the previous discussion and the fact that any 3-regular tree can be created by taking the graph with a single 3-regular vertex and repeatedly replacing leaves with 3-regular vertices.
\end{proof}

This kind of description will allow us to invert the amplituhedron map on BCFW cells in Section~\ref{sec:inj}. The rest of this section will be devoted showing a similar description for the boundary cells of BCFW cells . We will start by describing boundary cells of BCFW cells in Section~\ref{sec:boundary cells}, and then prove an analogous result for them in Section~\ref{sec:arc seq} (Proposition~\ref{prop:arc limits boundary sub BCFW}).

\subsection{Boundary Cells}
\label{sec:boundary cells}
In this section we aim to characterize all of the boundary cells of BCFW cells and study their properties. The structure we describe for such cells will be instrumental to the rest of the paper.

Consider \(\Gamma\) a BCFW graph with a trigonometric orientation \(\omega\) and angles \(\{\alpha_i\}_{i=1}^n\). By Theorem~\ref{param bij} we have a bijection \((0,\frac{\pi}{2})^n \rightarrow \Omega_\Gamma\) by 
\(
\alpha =(\alpha_1,...,\alpha_{n+m})\mapsto \Gamma^\omega(\alpha) 
\) 
according to the parametrization we defined earlier. If \(I\) is the set of sources in \(\omega\) we have that \(\left|I\right| =k\) and \((\Gamma^\omega(\alpha))^I =\mathrm{Id}\) for any \(\alpha\in\mathbb R^n\). As the entries are bounded by Corollary~\ref{boundary measurements bounded coro}, we can conclude that \(\lim_{\alpha_i\rightarrow0}\Gamma^\omega(\alpha)\) and \(\lim_{\alpha_i\rightarrow\frac{\pi}{2}}\Gamma^\omega(\alpha)\) are well-defined elements of \(\OGnon{k}{2k}\). As \(\Gamma^\omega(\alpha)\) is continuous, we actually just have a map \([0,\frac{\pi}{2}]^n \rightarrow \overline\Omega_\Gamma\), the closure of \(\Omega_\Gamma\) in \(\OGnon{k}{2k}\), although this need not be a bijection. Thus, we can study the boundaries of BCFW cells by taking limits of angles in their parameterizations.

Consider now \(\Gamma\) a BCFW graph with some perfect orientation and angles \(\alpha_i\), with \(\alpha_1,...,\alpha_n\) corresponding to vertices with trigonometric orientation and \(\alpha_{n+1},...,\alpha_{n+m}\) corresponding to vertices with hyperbolic orientations. By Theorem~\ref{param bij} we have a bijection \((0,\frac{\pi}{2})^n\times(0,\infty)^m \rightarrow \Omega_\Gamma\) by \((\alpha_1,...,\alpha_{n+m})\mapsto \Gamma^\omega(\alpha)\) according to the parametrization we defined earlier. 

The boundaries of \(\Omega_\Gamma\) again will be achieved by taking the limit of some angle to \(0\), \(\frac{\pi}{2}\) if it is trigonometric, or \(\infty\) if it is hyperbolic (that is, an angle that corresponds to a vertex with a trigonometric/hyperbolic orientation resp.). However, as the entries are not bounded, taking the limit is a more delicate process that requires some care.

We can solve this in the following way: By Proposition~\ref{prop:trigonometric existence} there exist a trigonometric orientation \(\omega^\prime\) for \(\Gamma\), with a corresponding parametrization \((0,\frac{\pi}{2})^{n+m} \rightarrow \Omega_\Gamma\) by \((\alpha^\prime_1,...,\alpha^\prime_{n+m})\mapsto \Gamma^{\omega^\prime}(\alpha^\prime)\), where the angles \(\alpha^\prime_i\) and \(\alpha_i\) correspond to the same vertex. Now all of the boundaries are of the form \(\lim_{\alpha_i^\prime\rightarrow0} \Gamma^{\omega^\prime}(\alpha^\prime)\) and \(\lim_{\alpha^\prime_i\rightarrow\frac{\pi}{2}} \Gamma^{\omega^\prime}(\alpha^\prime)\). 

By Proposition~\ref{prop:angles repara} we are not missing any of the boundaries by changing the parametrization. Indeed, we get that for a hyperbolic angle \(\alpha_i\), the boundaries \(\lim_{\alpha_i\rightarrow0}\Gamma^\omega(\alpha)\) and  \(\lim_{\alpha_i\rightarrow\infty}\Gamma^\omega(\alpha)\) just become \(\lim_{\alpha_i^\prime\rightarrow0} \Gamma^{\omega^\prime}(\alpha^\prime)\) and \(\lim_{\alpha^\prime_i\rightarrow\frac{\pi}{2}} \Gamma^{\omega^\prime}(\alpha^\prime)\) in the new parameterization. By changing parameterization all boundaries are still manifest, and the infinite boundary becames finite. 

This is the correct way of taking limits of hyperbolic angles going to infinity, and indeed any perfect orientation: when we write \(\lim_{\alpha_i\rightarrow L}\Gamma^\omega(\alpha)\) for \(\alpha_i\) corresponding to some internal vertex \(v_i\) with \(\omega\) a non-trigonometric perfect orientation, we mean that we first re-parametrize to a trigonometric orientation (possible by Proposition~\ref{prop:trigonometric existence}), get a new angle \(\alpha^\prime_i\) corresponding \(v_i\), and then take the new angle to whichever finite boundary corresponds to \(\alpha_i \rightarrow L\), be it \(\alpha^\prime_i\rightarrow0\) or \(\alpha^\prime_i\rightarrow\frac{\pi}{2}\), in accordance with Proposition~\ref{prop:angles repara}. This limit is a well-defined element of \(\OGnon{k}{2k}\).

Now we will discuss what happens to the OG graphs. Let us consider a vertex with a trigonometric orientation. The boundaries of the corresponding cell will correspond to taking some angles to \(0\) or \(\frac{\pi}{2}\) degrees. Consider a vertex with its associated angle \(\alpha\):

    \begin{center}
    \scalebox{0.8}{
            \begin{tikzpicture}

            \draw (-1,1)--(1, -1);
            \draw (1, 1)--(-1, -1);

            \draw [very thick, -latex, shorten <= 10] (-0.71967, 0.71967)--(-0.4, 0.4);
            \draw [very thick, -latex, shorten <= 10] (0.71967, -0.71967)--(0.4,-0.4 );
            \draw [very thick, -latex, shorten <= 20] (0,0)--(0.6, 0.6);
            \draw [very thick, -latex, shorten <= 20] (0, 0)--(-0.6,-0.6);

            \draw (0,-0.5)node[anchor = south]{\(\alpha\)};
        \end{tikzpicture}
        }
        \end{center}
By the definition of the weights on paths, \(\alpha = 0\) corresponds to a left turn having weight 1 and right turn having weight 0, and \(\alpha = \frac{\pi}{2}\) corresponds to the opposite. We thus get the same weights for paths by replacing the vertex with the following configurations for \(\alpha = 0\) and \(\frac{\pi}{2}\) respectively:

\begin{center}
\scalebox{0.8}{
            \begin{tikzpicture}

            \draw (0,2)--(0.5,1.5);
            \draw (1.5,0.5)--(2,0);
            \draw (2,2)--(1.5,1.5);
            \draw (0.5,0.5)--(0,0);

            \draw[black] (0.5,1.5) arc (-45-90:-45:1.41421/2);
            \draw[black] (1.5,0.5) arc (45:135:1.41421/2);

            \draw [very thick, -latex, shorten <= 2] (0.6-0.001, 1.4+0.001)--(0.6, 1.4);
            \draw [very thick, -latex, shorten <= 2] (1.4+0.001,0.6-0.001 )--(1.4,0.6 );
            \draw [very thick, -latex, shorten <= 2] (1.6-0.001, 1.6-0.001)--(1.6, 1.6);
            \draw [very thick, -latex, shorten <= 2] (0.4+0.001,0.4+0.001)--(0.4,0.4);

            \draw (0+4,2)--(0.5+4,1.5);
            \draw (1.5+4,0.5)--(2+4,0);
            \draw (2+4,2)--(1.5+4,1.5);
            \draw (0.5+4,0.5)--(0+4,0);

            \draw[black] (1.5+4,1.5) arc (135:360-135:1.41421/2);
            \draw[black] (0.5+4,0.5) arc (-45:45:1.41421/2);

            \draw [very thick, -latex, shorten <= 2] (0.6-0.001+4, 1.4+0.001)--(0.6+4, 1.4);
            \draw [very thick, -latex, shorten <= 2] (1.4+0.001+4,0.6-0.001 )--(1.4+4,0.6 );
            \draw [very thick, -latex, shorten <= 2] (1.6-0.001+4, 1.6-0.001)--(1.6+4, 1.6);
            \draw [very thick, -latex, shorten <= 2] (0.4+0.001+4,0.4+0.001)--(0.4+4,0.4);

        \end{tikzpicture}
        }
        \end{center}
\begin{dfn}
\label{def opening the vertex}
We would refer to this as \emph{opening the vertex}.   
\end{dfn}
This allows us to extend the extend the talk of limits to OG graphs.
\begin{dfn}
\label{def:boundary graph}
    For an OG graph \(\Gamma\)  with an oriented vertex \(v^\omega\) with a trigonometric orientation, define \(\lim_{v^\omega\rightarrow0} \Gamma\) and \(\lim_{v^\omega\rightarrow\frac{\pi}{2}} \Gamma\) to be graph \(\Gamma\) with the vertex \(v\) replaced by the configurations as seen in Figure~\ref{fig:opening a vertex}. Define the inherited orientations \(\lim_{v^\omega\rightarrow0}\omega\) and \(\lim_{v^\omega\rightarrow\frac{\pi}{2}} \omega\) to be the orientations on the respective graphs in Figure~\ref{fig:opening a vertex}. 

      For \(\Gamma\) an OG graph with an oriented vertex \(v^{\omega^\prime}\) with hyperbolic orientation, define \(\lim_{v^{\omega^\prime}\rightarrow0} \Gamma\) and \(\lim_{v^{\omega^\prime}\rightarrow\infty} \Gamma\) to be graph \(\Gamma\) with the vertex \(v\) replaced by the configurations as seen in Figure~\ref{fig:opening a vertex}.
    
    \begin{figure}[H]
    \centering
                \begin{center}
                \scalebox{0.8}{
\begin{tikzpicture}

                \draw (0,2+3)--(2, 0+3);
            \draw (2, 2+3)--(0, 0+3);

            \draw [very thick, -latex, shorten <= 10] (1-0.71967, 1+0.71967+3)--(1-0.4,1+ 0.4+3);
            \draw [very thick, -latex, shorten <= 10] (1+0.71967, 1-0.71967+3)--(1+0.4,1-0.4 +3);
            \draw [very thick, -latex, shorten <= 20] (1,1+3)--(1.6, 1.6+3);
            \draw [very thick, -latex, shorten <= 20] (1, 1+3)--(1-0.6,1-0.6+3);
            \draw (1,.4+3)node[anchor = south]{\(v^\omega\)};

                            \draw (0+4,2+3)--(2+4, 0+3);
            \draw (2+4, 2+3)--(0+4, 0+3);

            \draw [very thick, -latex, shorten <= 10] (1-0.71967+4, 1+0.71967+3)--(1-0.4+4,1+ 0.4+3);
            \draw [very thick, -latex, shorten <= 20] (1+4,1+3)--(1.6+4, 1.6+3);

             \draw [very thick, -latex, shorten <= 10] (1-0.71967+4, 1-0.71967+3)--(1-0.4+4,1- 0.4+3);
            \draw [very thick, -latex, shorten <= 20] (1+4,1+3)--(1.6+4, 1-.6+3);
            
            \draw (1+4,.3+3)node[anchor = south]{\(v^{\omega^\prime}\)};

            \draw (0,2)--(0.5,1.5);
            \draw (1.5,0.5)--(2,0);
            \draw (2,2)--(1.5,1.5);
            \draw (0.5,0.5)--(0,0);

            \draw[black] (0.5,1.5) arc (-45-90:-45:1.41421/2);
            \draw[black] (1.5,0.5) arc (45:135:1.41421/2);

            \draw (1,0.4)node[anchor = north]{\(v^\omega\rightarrow 0
           \)};
            \draw (1,0.15)node[anchor = north]{\( v^{\omega^\prime}\rightarrow 0\)};

            \draw [very thick, -latex, shorten <= 2] (0.6-0.001, 1.4+0.001)--(0.6, 1.4);
            \draw [very thick, -latex, shorten <= 2] (1.4+0.001,0.6-0.001 )--(1.4,0.6 );
            \draw [very thick, -latex, shorten <= 2] (1.6-0.001, 1.6-0.001)--(1.6, 1.6);
            \draw [very thick, -latex, shorten <= 2] (0.4+0.001,0.4+0.001)--(0.4,0.4);

            \draw (1+4,0.4)node[anchor = north]{\(v^\omega\rightarrow \frac{\pi}{2}
           \)};
            \draw (1+4,0.15)node[anchor = north]{\( v^{\omega^\prime}\rightarrow \infty\)};

            \draw (0+4,2)--(0.5+4,1.5);
            \draw (1.5+4,0.5)--(2+4,0);
            \draw (2+4,2)--(1.5+4,1.5);
            \draw (0.5+4,0.5)--(0+4,0);

            \draw[black] (1.5+4,1.5) arc (135:360-135:1.41421/2);
            \draw[black] (0.5+4,0.5) arc (-45:45:1.41421/2);

            \draw [very thick, -latex, shorten <= 2] (0.6-0.001+4, 1.4+0.001)--(0.6+4, 1.4);
            \draw [very thick, -latex, shorten <= 2] (1.4+0.001+4,0.6-0.001 )--(1.4+4,0.6 );
            \draw [very thick, -latex, shorten <= 2] (1.6-0.001+4, 1.6-0.001)--(1.6+4, 1.6);
            \draw [very thick, -latex, shorten <= 2] (0.4+0.001+4,0.4+0.001)--(0.4+4,0.4);

\end{tikzpicture}
}
\end{center}
    \caption{opening a trigonometric or hyperbolic oriented vertex}
    \label{fig:opening a vertex}
\end{figure}

    When we write \(\lim_{v^\omega\rightarrow L}\Gamma\) we implicitly assume that if \(v^\omega\) has a trigonometric (hyperbolic) orientation we have that \(L\in\{0,\frac{\pi}{2}\}\) (resp.  \(L\in\{0,\infty\}\)).

    Given an OG graph \(\Gamma\), a graph resulting from a series of limit operations on the graph \(\Gamma\) will be called a \emph{boundary graph} of \(\Gamma\). If an OG graph correspond to a codimension \(n\) boundary cell of \(\Omega_\Gamma\), it will be called a codimension \(n\) boundary graph of \(\Gamma\).
\end{dfn}

    \begin{prop}[\cite{companion}]
\label{prop:limits commute with para}
    Let \(\Gamma\) be an OG graph with a trigonometric orientation \(\omega\), and \\ \(\Gamma^{\omega}(\alpha_1,...,\alpha_n)\) be the corresponding parametrization of \(\Omega_\Gamma\), with an enumeration of its oriented internal vertices \(v_1^\omega,...,v_n^\omega\). Let \(L\in\{0,\frac{\pi}{2}\}\) and write \(\Gamma_0 = \lim_{v_{n}^\omega\rightarrow L}\Gamma\). Then
    \[
    \lim_{\alpha_{n}\rightarrow L} \Gamma^{\omega}(\alpha)= \Gamma^{\omega}(\alpha_1,...,\alpha_{n-1},\ell) \in\Omega_{\Gamma_0}.
    \]

    Furthermore, let \(\omega_0\) be the inherited orientation from \(\Gamma^\omega\), and \(\Gamma_0^{\omega_0}(\alpha_1,...,\alpha_{n-1})\) the corresponding parameterization. If \(\Gamma_0\) is reduced, then \(\Gamma^\omega:(0,\frac{\pi}{2})^n\rightarrow\Omega_\Gamma\) can be extended to a map \( \Gamma^\omega:U\rightarrow\OG{k}{2k}\) which is a diffeomorphism onto its image, where \(U\subset \mathbb R^n\) is an open neighborhood containing \((0,\frac{\pi}{2})^{n-1}\times\left((0,\frac{\pi}{2})\cup \{L\}\right)\), with \(
    \Gamma^\omega(\alpha_1,...,\alpha_{n-1},\ell)=\Gamma_0^{\omega_0}(\alpha_1,...,\alpha_{n-1}).\)

\end{prop}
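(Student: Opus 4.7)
The plan is to exploit the explicit form of the boundary-measurement parametrization. In the trigonometric case every entry of $\Gamma^\omega(\alpha)$ is a polynomial in $\{\sin\alpha_i,\cos\alpha_i\}_{i=1}^n$ obtained as a sum of weights of paths in $\Gamma$, where each time a path meets the vertex $v_i^\omega$ it contributes either a $\cos\alpha_i$ (``straight/left'' turn) or a $\sin\alpha_i$ (``right/cross'' turn), possibly with a sign. This formula makes perfect sense on all of $\mathbb R^n$, so smooth extendability to any open neighborhood $U$ of $(0,\frac{\pi}{2})^{n-1}\times((0,\frac{\pi}{2})\cup\{L\})$ is automatic. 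The whole content of the proposition is therefore (a) that the naive limit matches the parametrization of the opened graph $\Gamma_0$, and (b) that, assuming $\Gamma_0$ is reduced, the extended map is still an embedding near the boundary face.

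For (a), I would first fix $\alpha_1,\dots,\alpha_{n-1}\in(0,\frac{\pi}{2})$ and let $\alpha_n\to L$. When $L=0$ (resp.\ $L=\frac{\pi}{2}$), $\sin\alpha_n\to 0$ (resp.\ $\cos\alpha_n\to 0$), so in the path-sum every path using the ``right/cross'' contribution at $v_n$ (resp.\ the ``straight/left'' contribution) is killed. The surviving paths are exactly those that at $v_n$ use the other local rule, which is precisely the pairing of the four local half-edges displayed in Figure~\ref{fig:opening a vertex}. Thus the surviving path-sum is the path-sum for $\Gamma_0$ with the inherited orientation $\omega_0$ and angles $\alpha_1,\dots,\alpha_{n-1}$, giving
\[
\lim_{\alpha_n\to L}\Gamma^\omega(\alpha_1,\dots,\alpha_n)\;=\;\Gamma_0^{\omega_0}(\alpha_1,\dots,\alpha_{n-1})\;\in\;\Omega_{\Gamma_0},
\]
which is the first assertion.

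For (b), I would then extend $\Gamma^\omega$ to a small open neighborhood $U$ of the face as above and verify the two pieces of ``diffeomorphism onto image''. For injectivity, on $(0,\frac{\pi}{2})^n$ it follows from Theorem~\ref{param bij} applied to $\Gamma$; on the face $(0,\frac{\pi}{2})^{n-1}\times\{L\}$ it follows from the same theorem applied to $\Gamma_0$, now using the hypothesis that $\Gamma_0$ is reduced; and points on the interior and on the face cannot be identified because their images lie in the disjoint orthitroid cells $\Omega_\Gamma$ and $\Omega_{\Gamma_0}$. By continuity of the parametrization and compactness of small neighborhoods of face points, this global injectivity on the two pieces upgrades to injectivity on a (possibly smaller) $U$. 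For the immersion property at a face point $(\alpha_1,\dots,\alpha_{n-1},L)$, the $n-1$ partial derivatives $\partial_{\alpha_i}\Gamma^\omega$ with $i<n$ equal the partial derivatives of $\Gamma_0^{\omega_0}$, which are linearly independent because $\Gamma_0^{\omega_0}$ is a diffeomorphism onto $\Omega_{\Gamma_0}$; and $\partial_{\alpha_n}\Gamma^\omega$ must be transverse to $T\Omega_{\Gamma_0}$, because otherwise curves $\alpha_n\mapsto\Gamma^\omega(\alpha_1,\dots,\alpha_{n-1},\alpha_n)$ would stay tangent to $\Omega_{\Gamma_0}$ to first order while their images for $\alpha_n\neq L$ lie in the disjoint stratum $\Omega_\Gamma$, contradicting the local stratified geometry of $\OGnon{k}{2k}$ and Proposition~\ref{prop:extends to a nbhd}-type openness statements.

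The main obstacle I anticipate is precisely this last point: making the transversality of $\partial_{\alpha_n}\Gamma^\omega$ to $T\Omega_{\Gamma_0}$ rigorous. A clean route is to perform the check in Pl\"ucker coordinates, using that opening $v_n$ corresponds to setting a specific Pl\"ucker coordinate $\Delta_I(\Gamma^\omega(\alpha))$ to zero (with $\Delta_I>0$ on $\Omega_\Gamma$), and then showing that $\partial_{\alpha_n}\Delta_I$ at $\alpha_n=L$ is nonzero; this, together with the $n-1$ tangent directions of $\Omega_{\Gamma_0}$ already accounted for, gives rank $n$. Once the rank is full and injectivity holds, the extended $\Gamma^\omega$ is an injective immersion from an open subset of $\mathbb R^n$ into $\OG{k}{2k}$, hence a diffeomorphism onto its image, and its value at $(\alpha_1,\dots,\alpha_{n-1},L)$ agrees with $\Gamma_0^{\omega_0}(\alpha_1,\dots,\alpha_{n-1})$ by (a).
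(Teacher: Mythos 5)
This proposition is tagged \textup{[\cite{companion}]}; the paper only cites it and does not reproduce a proof, so there is no in-paper argument to compare against. Judged on its own terms, your proposal is on the right track for the first assertion but has a genuine gap in the second.

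Part (a), matching the limit with the opened-graph parametrization, is essentially correct: the boundary-measurement weights of paths through $v_n$ carry a factor $\cos\alpha_n$ or $\sin\alpha_n$ per visit, and at $\alpha_n=L$ exactly one type survives, which is by construction the path set for $\lim_{v_n^\omega\to L}\Gamma$ with the inherited orientation. One small inaccuracy: the entries need not be polynomial in $\{\sin\alpha_i,\cos\alpha_i\}$ --- with directed cycles the boundary measurements are rational, with denominators of the form $1+f_1$ --- but on a neighborhood of the closed box those denominators are bounded away from zero, so smooth extendability still holds after shrinking $U$.

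The real issue is in part (b). You correctly identify that the content is full rank (immersion) at the face, and then give two candidate arguments. The first, that $\partial_{\alpha_n}\Gamma^\omega$ tangent to $T\Omega_{\Gamma_0}$ would ``contradict the local stratified geometry,'' does not hold: a curve can have velocity tangent to a stratum at a point while immediately leaving it (e.g.\ $t\mapsto(t,t^2)$ relative to the $x$-axis). There is no contradiction between first-order tangency and the image for $\alpha_n\neq L$ lying in the open stratum $\Omega_\Gamma$, so this reasoning cannot close the gap. The second route --- exhibit a Pl\"ucker $\Delta_I$ with $\Delta_I>0$ on $\Omega_\Gamma$, $\Delta_I=0$ on $\Omega_{\Gamma_0}$, and $\partial_{\alpha_n}\Delta_I\big|_{\alpha_n=L}\neq 0$ --- is the right idea, but it is exactly the non-trivial part and you do not carry it out; one must rule out higher-order vanishing of $\Delta_I$ in $\alpha_n$ at $L$, which requires a Lindstr\"om--Gessel--Viennot type analysis of the first-order terms (the path families using the killed turn exactly once) and the hypothesis that $\Gamma_0$ is reduced. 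Without that, the rank-$n$ claim is unproved, and with it the ``diffeomorphism onto its image'' conclusion. Relatedly, your upgrade of injectivity from the box-plus-face to an open $U$ (injectivity for $\alpha_n$ slightly beyond $L$, and no identifications with interior points) also rests on the local immersion you have not established.
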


    \begin{coro}
    \label{colosure continuous coro}
    For \(\Gamma\) an OG graph with \(n\) internal vertices and \(\omega\) a trigonometric orientation, the parametrization of \(\Omega_\Gamma\) is a continuous map
    \(
    \Gamma^\omega:[0,\frac{\pi}{2}]^n \rightarrow\ \overline\Omega_\Gamma.
    \)
\end{coro}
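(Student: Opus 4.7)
The plan is to verify continuity directly from the explicit polynomial form of the trigonometric parametrization, and then to place the image in \(\overline{\Omega_\Gamma}\) by approximation. First I would recall that for a trigonometric orientation the parametrization \(\Gamma^\omega(\alpha)\) is obtained from a fixed base matrix (with the source columns equal to an identity block) by iteratively applying the moves \(\mathrm{Rot}_{i,i+1}(\alpha_j)\), each of which modifies two columns by linear combinations whose coefficients are \(\sin(\alpha_j)\) and \(\cos(\alpha_j)\). Consequently, every entry of \(\Gamma^\omega(\alpha)\) is a polynomial in \(\{\sin(\alpha_j), \cos(\alpha_j)\}_{j=1}^n\). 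Since \(\sin\) and \(\cos\) are continuous on all of \(\mathbb{R}\), this polynomial expression defines a continuous map \(\widetilde{\Gamma^\omega}:[0,\tfrac{\pi}{2}]^n\to\mathrm{Mat}_{k\times 2k}\) extending the parametrization on the open cube.

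Next I would show the image lies in \(\overline{\Omega_\Gamma}\). Given any \(\alpha^{*}\in[0,\tfrac{\pi}{2}]^n\), pick a sequence \(\alpha^{(m)}\in(0,\tfrac{\pi}{2})^n\) with \(\alpha^{(m)}\to\alpha^{*}\). Then \(\widetilde{\Gamma^\omega}(\alpha^{(m)})=\Gamma^\omega(\alpha^{(m)})\in\Omega_\Gamma\) for each \(m\), and by continuity \(\widetilde{\Gamma^\omega}(\alpha^{*})\in\overline{\Omega_\Gamma}\). Moreover, iterating Proposition~\ref{prop:limits commute with para} on the coordinates \(j\) for which \(\alpha^{*}_j\in\{0,\tfrac{\pi}{2}\}\) identifies \(\widetilde{\Gamma^\omega}(\alpha^{*})\) concretely with an element of the orthitroid cell of a boundary graph of \(\Gamma\) (obtained by opening the corresponding vertices), which in particular sits inside \(\overline{\Omega_\Gamma}\).

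There is no substantial obstacle here: the corollary is a fairly direct consequence of the polynomial form of the trigonometric parametrization combined with the single-coordinate statement of Proposition~\ref{prop:limits commute with para}. The only minor subtlety worth checking is consistency between the polynomial extension and iterated applications of the proposition, but this is immediate because both views simply substitute \(\alpha_j=0\) or \(\tfrac{\pi}{2}\) into the same product formula, and the boundedness of the entries (guaranteed by Corollary~\ref{boundary measurements bounded coro}) ensures no blow-up occurs on the closed cube.
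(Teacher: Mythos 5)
There is a genuine gap in the first step. You assert that the trigonometric parametrization \(\Gamma^\omega(\alpha)\) is obtained by iterating the moves \(\mathrm{Rot}_{i,i+1}(\alpha_j)\), and that each entry is therefore a \emph{polynomial} in \(\sin(\alpha_j),\cos(\alpha_j)\). Neither claim holds. The \(\mathrm{Rot}\) move (Definition~\ref{def:rot}) multiplies by the \emph{hyperbolic} rotation matrix \(R_{i,i+1}(\alpha)\) with entries \(\cosh\alpha,\sinh\alpha\), and the inherited orientation on the new vertex is hyperbolic, not trigonometric; so a sequence of \(\mathrm{Rot}\) moves does not produce the trigonometric parametrization. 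Moreover, not every OG graph can be built by \(\mathrm{Rot}\) moves alone (e.g.\ the spider graph has an internal vertex not adjacent to any external vertex). Finally, even with the correct definition via boundary measurements (Definition~\ref{def:param}), the entries need not be polynomials: when the oriented graph has directed cycles, the weight of a path sums to a geometric series and the boundary measurement takes the form \(f_0/(1+f_1)\), a rational function. So the polynomial structure you invoke simply isn't there.

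The right justification for continuity is the one the paper gives directly: for a trigonometric orientation the boundary measurements are continuous functions of the angles, bounded by \(1\) on \([0,\tfrac{\pi}{2}]^n\) (Corollary~\ref{boundary measurements bounded coro}). Continuity is available because the monomial weights \(\sin\alpha_j\), \(\cos\alpha_j\) are nonnegative there, so any denominator \(1+f_1\) is \(\ge 1\) and never vanishes on the closed cube. Together with the fact that the source columns of \(\Gamma^\omega(\alpha)\) form an identity block (so the Grassmannian class is well defined for all \(\alpha\)), this gives a continuous map into \(\OGnon{k}{2k}\). Your second paragraph (that the image lies in \(\overline{\Omega_\Gamma}\) via density of \((0,\tfrac{\pi}{2})^n\)) is fine, and the appeal to Proposition~\ref{prop:limits commute with para} to identify where corners land is a valid addendum, but the corollary as stated only needs the image to be in the closure, which the density argument already gives. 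Replace the polynomial/\(\mathrm{Rot}\) reasoning with the boundary-measurement argument and the proof is correct.
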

\begin{proof}
    Let \(I\) be the set of sources. By definition we have that \((\Gamma^\omega(\alpha))^I =\mathrm{Id_{k\times k}}\) for any \(\alpha\), and the entries of the matrix are continuous bounded functions of the angles by Corollary~\ref{boundary measurements bounded coro}. Therefore, we have a well defined continuous map to \(\OGnon{k}{2k}\).
\end{proof}

\begin{coro}[\cite{companion}]
    \label{order of limits coro}
    Let \(\Gamma^\omega\) be an OG graph with a trigonometric orientation, and \(\{v^\omega_i\}_{i=1}^n\)  enumeration of its internal vertices. Select some \(L_i\in \{0,\frac{\pi}{2}\}\) for some \(i\in[n]\).
    Then the equivalence class of the OG graph
    \(
    \lim_{v^\omega_1\rightarrow L_1}\lim_{v^\omega_2\rightarrow L_2}...\lim_{v^\omega_m\rightarrow L_m}\Gamma
    \)
    does not depend on the ordering of the limit operations.
\end{coro}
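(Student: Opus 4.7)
The plan is to translate each graph-level limit into substitution of an angle in the matrix parameterization $\Gamma^\omega(\alpha)$, where commutativity of substitution is manifest, and then invoke the correspondence between orthitroid cells and equivalence classes of OG graphs.

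It suffices to prove commutativity of two consecutive limits, since every permutation of $[m]$ is a product of adjacent transpositions. Fix distinct $i,j\in[m]$ and set
\[
\Gamma_{ij} \defeq \lim_{v_i^\omega \to L_i}\lim_{v_j^\omega \to L_j}\Gamma, \qquad
\Gamma_{ji} \defeq \lim_{v_j^\omega \to L_j}\lim_{v_i^\omega \to L_i}\Gamma.
\]
By Definition~\ref{def opening the vertex} and Figure~\ref{fig:opening a vertex}, opening a vertex is a strictly local operation: only that vertex and its four incident edges are affected, while every other internal vertex retains its trigonometric orientation. Hence the inherited orientation on $\lim_{v_j^\omega \to L_j}\Gamma$ is still trigonometric at $v_i$, so Proposition~\ref{prop:limits commute with para} can be iterated. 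Two applications yield that $\Omega_{\Gamma_{ji}}$ contains the matrix $\Gamma^\omega(\alpha)\big|_{\alpha_i=\ell_i,\,\alpha_j=\ell_j}$ for every choice of the remaining $\alpha_k\in(0,\pi/2)$, where $\ell_i,\ell_j\in\{0,\pi/2\}$ are the plug-in values corresponding to $L_i,L_j$; the symmetric argument gives the same conclusion for $\Omega_{\Gamma_{ij}}$.

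Since the entries of $\Gamma^\omega(\alpha)$ are continuous in the angles, the matrix $\Gamma^\omega(\alpha)\big|_{\alpha_i=\ell_i,\,\alpha_j=\ell_j}$ is manifestly independent of the order in which the two substitutions are carried out. Thus a common element of $\OGnon{k}{2k}$ lies in both $\Omega_{\Gamma_{ij}}$ and $\Omega_{\Gamma_{ji}}$; since orthitroid cells form a stratification and are pairwise disjoint, this forces $\Omega_{\Gamma_{ij}} = \Omega_{\Gamma_{ji}}$, and the bijection between orthitroid cells and equivalence classes of OG graphs yields that $\Gamma_{ij}$ and $\Gamma_{ji}$ are equivalent.

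The main delicacy is to legitimize the iterated use of Proposition~\ref{prop:limits commute with para}, namely to verify that the inherited orientation after the first limit is still trigonometric as required by that proposition's hypotheses. This is exactly what the locality of vertex opening guarantees: since the two vertices $v_i$ and $v_j$ have disjoint neighborhoods in the disk (even when joined by a single edge, the opening step acts on each endpoint independently), the inherited orientation at any surviving internal vertex is identical to the original orientation at that vertex, in particular trigonometric.
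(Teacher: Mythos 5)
Your argument is correct; the paper itself cites the companion article for this corollary rather than proving it, so there is no in-text proof to compare against. Two remarks.

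First, a shorter route exists. By Definition~\ref{def:boundary graph}, the operation $\lim_{v^\omega\to L}$ replaces the vertex $v$ and its four incident half-edges by the local configuration of Figure~\ref{fig:opening a vertex} inside a small disk around $v$, leaving every other internal vertex and the orientation of every other edge untouched. Two such surgeries at distinct internal vertices act on disjoint disks inside the ambient disk and therefore commute on the nose as operations on embedded graphs; there is no need to route the argument through $\Gamma^\omega$, Proposition~\ref{prop:limits commute with para}, or the stratification of $\OGnon{k}{2k}$ at all. Your proof is not wrong, just more indirect than the combinatorics demands.

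Second, a small imprecision in the argument as you did write it. Iterating Proposition~\ref{prop:limits commute with para} requires not only that the inherited orientation on $\Gamma_0 := \lim_{v_j^\omega\to L_j}\Gamma$ is trigonometric at $v_i$ (which you verify), but also that the restricted matrix $\Gamma^\omega(\alpha)\big|_{\alpha_j=\ell_j}$ coincides with the parametrization $\Gamma_0^{\omega_0}$ of $\Gamma_0$ in the inherited orientation, evaluated at the remaining angles. The proposition asserts this identification only under the hypothesis that $\Gamma_0$ is reduced, which an intermediate boundary graph need not be; its first assertion only places the matrix in $\Omega_{\Gamma_0}$ without identifying it with $\Gamma_0^{\omega_0}$ at a specific point. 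The identification does hold unconditionally---the path-weight discussion preceding Definition~\ref{def opening the vertex} is precisely the reason, since setting $\alpha_j=0$ or $\pi/2$ reproduces the boundary measurements of the opened graph regardless of reducedness---so your argument closes, but this step should be spelled out rather than inferred from the reduced case of the cited proposition, and it, not the trigonometric-orientation check you single out, is the more substantive point to legitimize.
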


\begin{prop}[\cite{companion}]
\label{prop:limits are boundaries}
    Let \(\Gamma\) be a reduced OG graph. Denote by \(\partial \Gamma\) be the set of all graphs resulting from applying some limit operations to \(\Gamma\), that is, the boundary graphs of \(\Gamma\). Then
    \(
    \partial \Omega_\Gamma = \bigcupdot_{\Gamma^\prime \in \partial \Gamma} \Omega_{\Gamma^\prime}
    \)
\end{prop}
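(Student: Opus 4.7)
The plan is to use a trigonometric parametrization of $\Omega_\Gamma$ and analyze its behavior on the topological boundary of the parameter cube. Concretely, I would fix a trigonometric orientation $\omega$ on $\Gamma$ (which exists by Proposition~\ref{prop:trigonometric existence}) and invoke Corollary~\ref{colosure continuous coro} to obtain a continuous extension $\Gamma^\omega : [0,\pi/2]^n \to \overline{\Omega_\Gamma}$ of the parametrization of $\Omega_\Gamma$ on the open cube $(0,\pi/2)^n$ given by Theorem~\ref{param bij}. The topological boundary of $[0,\pi/2]^n$ decomposes into open faces $F_{S,L}$ indexed by a nonempty subset $S \subseteq [n]$ of internal vertices sent to a limit and an assignment $L \colon S \to \{0,\pi/2\}$, with the remaining coordinates ranging over $(0,\pi/2)$; one then has $\partial \Omega_\Gamma = \Gamma^\omega(\partial [0,\pi/2]^n) = \bigcup_{(S,L)} \Gamma^\omega(F_{S,L})$.

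With this setup I would prove both inclusions. For $\supseteq$: for each $(S,L)$, apply Proposition~\ref{prop:limits commute with para} successively to the vertices in $S$, taking one limit at a time in any fixed order (legitimate by Corollary~\ref{order of limits coro}); this identifies the restriction of $\Gamma^\omega$ to $F_{S,L}$ with the trigonometric parametrization of $\Omega_{\Gamma_{S,L}}$, where $\Gamma_{S,L}$ is the boundary graph produced by these limits. Hence $\Omega_{\Gamma_{S,L}} \subseteq \overline{\Omega_\Gamma}$, and since some limit was taken its dimension is strictly smaller (again by Theorem~\ref{param bij}), putting $\Omega_{\Gamma_{S,L}}$ in $\partial \Omega_\Gamma$. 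For $\subseteq$: any $x \in \partial \Omega_\Gamma$ equals $\Gamma^\omega(\alpha)$ for some $\alpha \in \partial [0,\pi/2]^n$; letting $S$ be the coordinates of $\alpha$ at the boundary and $L$ the corresponding boundary values, the same iterated application places $x$ in $\Omega_{\Gamma_{S,L}}$.

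Disjointness is then a general stratification fact: the orthitroid cells of $\OGnon{k}{2k}$ are in bijection with equivalence classes of OG graphs (equivalently, with fixed-point-free involutions on $[2k]$), so distinct equivalence classes give disjoint cells. Since $\partial \Gamma$ as written in the statement is naturally indexed by equivalence classes of boundary graphs, the union $\bigcupdot_{\Gamma' \in \partial \Gamma} \Omega_{\Gamma'}$ is automatically disjoint, and different $(S,L)$-faces that happen to produce equivalent boundary graphs simply collapse onto the same cell.

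The main obstacle is that Proposition~\ref{prop:limits commute with para} only provides a clean diffeomorphism extension when each intermediate limit graph is reduced, whereas an iterated limit may in principle pass through non-reduced graphs. I would address this by ordering the vertices of $S$ so that intermediate graphs remain reduced whenever possible, and otherwise replacing each non-reduced intermediate by an equivalent reduced representative — justified because the orthitroid cell $\Omega_{\Gamma'}$ depends only on the equivalence class of $\Gamma'$. A secondary, benign subtlety is that $\Gamma^\omega$ on the closed cube is not injective; different faces $F_{S,L}$ may map onto the same $\Omega_{\Gamma'}$, but this is precisely why the disjoint union in the statement is indexed by equivalence classes of boundary graphs rather than by faces of the cube.
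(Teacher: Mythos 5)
Your overall strategy — parametrize $\Omega_\Gamma$ over the closed cube $[0,\pi/2]^n$ via a trigonometric orientation, decompose the cube's boundary into faces $F_{S,L}$, and identify each face's image with the cell of the corresponding boundary graph — is the natural approach and almost certainly mirrors the companion paper's argument (the statement is cited there, and the surrounding text already sets up exactly this limit-of-angles machinery). Two of your asserted steps deserve explicit justification but are easily supplied: the equality $\Gamma^\omega([0,\pi/2]^n)=\overline{\Omega_\Gamma}$ follows because the continuous image of the compact cube is compact (hence closed) and contains $\Omega_\Gamma$, and the disjointness of $\Gamma^\omega(\partial[0,\pi/2]^n)$ from $\Omega_\Gamma$ follows because on any face at least one vertex has been opened, so the resulting graph's reduced form has strictly fewer internal vertices, hence lands in a lower-dimensional cell disjoint from $\Omega_\Gamma$.

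The genuine gap is in how you handle iterated limits through non-reduced intermediates. Your proposed fix — replace each non-reduced intermediate $\Gamma_0$ by an equivalent reduced representative and continue taking limits — does not work as stated: the equivalence moves that reduce $\Gamma_0$ delete or rearrange internal vertices, so the remaining elements of $S$ (which index vertices of the \emph{original} $\Gamma$) need not correspond to any vertices of the reduced representative, and you lose the ability to continue the induction. The correct observation is that no replacement is needed. First, the opening-a-vertex construction (Definition~\ref{def opening the vertex}) preserves all path weights by design, so the identity $\Gamma^\omega(\alpha_1,\ldots,\alpha_{n-1},L)=\Gamma_0^{\omega_0}(\alpha_1,\ldots,\alpha_{n-1})$ holds \emph{regardless} of whether $\Gamma_0$ is reduced — the "if $\Gamma_0$ is reduced" hypothesis in Proposition~\ref{prop:limits commute with para} is only needed for the diffeomorphism claim in the "Furthermore," not for identifying the two parametrizations. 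Second, the first conclusion of Proposition~\ref{prop:limits commute with para} (that the limit lands in $\Omega_{\Gamma_0}$) is stated for an arbitrary trigonometrically oriented OG graph, so it applies verbatim to $\Gamma_0$ with its inherited orientation $\omega_0$, which remains trigonometric after opening. You can therefore iterate directly on the non-reduced intermediates and only pass to equivalence classes at the very end, via Theorem~\ref{param bij}, when concluding that the parametrization of $\Gamma_{S,L}$ (reduced or not) covers $\Omega_{\Gamma_{S,L}}$. With that correction, the rest of your argument, including the disjointness via Propositions~\ref{prop:orth to graphs bij} and~\ref{prop:OG graphs to perm bij}, is fine.
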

We see that \(\Omega_{\Gamma^\prime}\) for \(\Gamma^\prime \in \partial \Gamma\) are the boundary strata of \(\Omega_{\Gamma}\). This justifies thinking of graphs of the form \(\partial \Gamma\) as boundaries of \(\Gamma\).

In Proposition~\ref{prop:BCFW arcs} we saw that we can represent BCFW graphs a sequence of \(\mathrm{Arc}\) moves with certain properties. We would now like to study the effect of the limit operations on \(\mathrm{Arc}\)-moves, with the aim of showing a similar result for boundaries of BCFW graphs.

Consider the graph \(\mathrm{Arc}_{3,\ell}(\Gamma)\):

\begin{center}
\begin{tikzpicture}[scale = 0.6]
\draw (0,0) circle (2cm);
\filldraw[lightgray] (0,1/2) circle (1cm);
\node[scale=3] (c) at (0,1/2)  {\(\Gamma\)};
\draw (0,1/2) circle (1cm);
\draw (1.96962/2, 1.3473/2) --(3.64697/2, 1.64306/2);
\draw (-1.96962/2, 1.3473/2) --(-3.64697/2, 1.64306/2);
\filldraw[black] (0,3.5/2) circle (2pt);

\filldraw[black] (1.27565/2, 3.20949/2) circle (2pt);
\filldraw[black] (-1.27565/2, 3.20949/2) circle (2pt);

\draw  (0, -1/2)--(0,-2);
\draw (0,-2)node[anchor=north]{\(\ell+1\)};

\draw (-3.27261/2, -2.3/2) node[anchor=north east]{\(\ell\)}-- (3.27261/2, -2.3/2)node[anchor=north west ]{\(\ell+2\)} ;
\end{tikzpicture}
\end{center}

The two boundaries the are achieved by taking opening the internal vertex correspond to the following graphs:
\begin{center}
\begin{tikzpicture}[scale = 0.6]
\draw (0,0) circle (2cm);
\filldraw[lightgray] (0,1/2) circle (1cm);
\node[scale=3] (c) at (0,1/2)  {\(\Gamma\)};
\draw (0,1/2) circle (1cm);
\draw (-1.96962/2, 1.3473/2) --(-3.64697/2, 1.64306/2);
\draw (1.96962/2, 1.3473/2) --(3.64697/2, 1.64306/2);
\filldraw[black] (-0,3.5/2) circle (2pt);

\filldraw[black] (-1.27565/2, 3.20949/2) circle (2pt);
\filldraw[black] (1.27565/2, 3.20949/2) circle (2pt);

\draw (-0.382683, -0.42388) --(-1.68446/2, -3.62803/2);
\draw (-1.68446/2-0.3, -3.62803/2)node[anchor=north]{\(\ell\)};
\draw (1.68446/2+0.3, -3.62803/2)node[anchor=north]{\(\ell+1\)};
\draw (3.27261/2, -2.3/2)node[anchor=north west]{\(\ell+2\)};

\draw (1.68446/2, -3.62803/2) arc (180+28.7936 :180+28.7936-157.5:0.53254);

\draw (0+6,0) circle (2cm);
\filldraw[lightgray] (0+6,1/2) circle (1cm);
\node[scale=3] (c) at (0+6,1/2)  {\(\Gamma\)};
\draw (0+6,1/2) circle (1cm);
\draw (1.96962/2+6, 1.3473/2) --(3.64697/2+6, 1.64306/2);
\draw (-1.96962/2+6, 1.3473/2) --(-3.64697/2+6, 1.64306/2);
\filldraw[black] (0+6,3.5/2) circle (2pt);

\filldraw[black] (1.27565/2+6, 3.20949/2) circle (2pt);
\filldraw[black] (-1.27565/2+6, 3.20949/2) circle (2pt);

\draw (0.382683+6, -0.42388) --(1.68446/2+6, -3.62803/2);
\draw (1.68446/2+0.3+6, -3.62803/2)node[anchor=north]{\(\ell+2\)};
\draw (-1.68446/2-0.3+6, -3.62803/2)node[anchor=north]{\(\ell+1\)};
\draw (-3.27261/2+6, -2.3/2)node[anchor=north east]{\(\ell\)};

\draw (-1.68446/2+6, -3.62803/2) arc (-28.7936 :-28.7936+157.5:0.53254);

\end{tikzpicture}
\end{center}

Since the original graph was reduced those are reduced as well by Proposition~\ref{prop:reduceability}. It is thus evident that 
\begin{coro}
\label{arc 3 boundary coro}
For a reduced graph $\Gamma$,
\[
\lim_{v^\omega\rightarrow \infty}\mathrm{Arc}_{3,\ell}(v^\omega)(\Gamma) = \mathrm{Arc}_{2,\ell+1}(\Gamma),\quad\quad\lim_{v^\omega\rightarrow 0}\mathrm{Arc}_{3,\ell}(v^\omega)(\Gamma) = \mathrm{Arc}_{2,\ell}(\Gamma),
\]
and the resulting graphs are reduced as the result of the \(\mathrm{Arc}_{\ell,n}\) on a reduced graph are always reduced for \(n<4\) by Observation~\ref{obs:reduce arc move}.
\end{coro}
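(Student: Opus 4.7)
The plan is to unwind the definitions and match them against the pictures that precede the corollary. By Definition~\ref{def:arc}, we have
\[
\mathrm{Arc}_{3,\ell}(v^\omega)(\Gamma) \;=\; \mathrm{Rot}_{\ell+1,\ell+2}(v^\omega)\,\mathrm{Inc}_{\ell}(\Gamma),
\]
so the arc move produces exactly one new internal vertex beyond those already in $\Gamma$, namely the crossing introduced by the $\mathrm{Rot}$ move. This crossing is the vertex $v^\omega$ to which the limit operation is applied. I would first point out that $v^\omega$ is the unique internal vertex affected by opening, and the three external half-edges adjacent to it (in the picture displayed before the corollary) are the ones labeled $\ell$, $\ell+1$, $\ell+2$, with the fourth strand going back into $\Gamma$.

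Next, I would invoke Definition~\ref{def:boundary graph} and the two opened configurations shown in Figure~\ref{fig:opening a vertex}. Applied to the concrete position of $v^\omega$ in $\mathrm{Arc}_{3,\ell}(v^\omega)(\Gamma)$, one of the two openings detaches the $\ell$–$\ell+1$ pair of half-edges as a small arc, while the other opening detaches the $\ell+1$–$\ell+2$ pair as a small arc; in each case the remaining strand is reabsorbed into $\Gamma$ unchanged. Matching this with the two pictures drawn immediately before the corollary, the first opening yields precisely the diagram obtained from $\Gamma$ by inserting a bare edge between two new external legs placed between the old $\ell-1$ and $\ell$ (i.e.\ $\mathrm{Arc}_{2,\ell}(\Gamma)$), and the second opening yields the diagram obtained from $\Gamma$ by inserting a bare edge between two new external legs placed between the old $\ell$ and $\ell+1$ (i.e.\ $\mathrm{Arc}_{2,\ell+1}(\Gamma)$). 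The correspondence between $v^\omega \to 0$ versus $v^\omega \to \infty$ and the two openings is fixed by the orientation conventions in Figure~\ref{fig:opening a vertex}, which determines which limit gives which picture.

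Finally, I would invoke reducedness. By hypothesis $\Gamma$ is reduced, and by Observation~\ref{obs:reduce arc move} the $\mathrm{Arc}_{n,\ell}$ move preserves reducedness for $n<4$ (since the obstruction to reducedness would be the creation of a forbidden short external $2$-arc, which cannot happen in $\mathrm{Arc}_{2,\ell}$ or $\mathrm{Arc}_{3,\ell}$ applied to a reduced graph). Therefore both boundary graphs $\mathrm{Arc}_{2,\ell}(\Gamma)$ and $\mathrm{Arc}_{2,\ell+1}(\Gamma)$ are reduced, completing the corollary.

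I do not expect any genuine obstacle here: the content is entirely a bookkeeping check between the definition of $\mathrm{Arc}_{3,\ell}$ as a composition $\mathrm{Rot}\circ\mathrm{Inc}$, the definition of limit/opening applied to the unique new internal vertex, and the pictures already drawn. The only mildly nontrivial point is keeping the orientation bookkeeping straight so that the $0$-limit matches $\mathrm{Arc}_{2,\ell}$ rather than $\mathrm{Arc}_{2,\ell+1}$ and vice versa; this is forced by comparing the trigonometric orientation on $v^\omega$ coming from the $\mathrm{Rot}_{\ell+1,\ell+2}$ factor with the opening conventions of Figure~\ref{fig:opening a vertex}.
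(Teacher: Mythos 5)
Your argument is correct and follows the same route as the paper: unwind $\mathrm{Arc}_{3,\ell}=\mathrm{Rot}_{\ell+1,\ell+2}\circ\mathrm{Inc}_\ell$, identify the single new internal vertex $v^\omega$ with its four strands to $\ell$, $\ell+1$, $\ell+2$, and $\Gamma$, open it in the two ways of Figure~\ref{fig:opening a vertex} to read off $\mathrm{Arc}_{2,\ell}(\Gamma)$ and $\mathrm{Arc}_{2,\ell+1}(\Gamma)$, and invoke Observation~\ref{obs:reduce arc move} (equivalently Proposition~\ref{prop:reduceability}) for reducedness. The paper simply draws the two opened diagrams and declares the matching evident, whereas you spell out the same bookkeeping in prose; the content is identical.
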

Let \(\Gamma\) be a BCFW graph. By Theorem~\ref{thm:BCFW are trees}, the triangle graph has a leaf between the external vertices \(2i\) and \(2i+1\) considered mod \(2k\).

\begin{center}
\begin{tikzpicture}[scale = 0.6]
\draw (0,0) circle (2cm);
\filldraw[lightgray] (0,1/2) circle (1cm);
\node[scale=3] (c) at (0,1/2)  {\(\Gamma\)};
\draw (0,1/2) circle (1cm);
\draw (1.96962/2, 1.3473/2) --(3.64697/2, 1.64306/2);
\draw (-1.96962/2, 1.3473/2) --(-3.64697/2, 1.64306/2);
\filldraw[black] (0,3.5/2) circle (2pt);

\filldraw[black] (1.27565/2, 3.20949/2) circle (2pt);
\filldraw[black] (-1.27565/2, 3.20949/2) circle (2pt);

\draw  (-0.68404/2, -0.879385/2)--(1.68446/2, -3.62803/2);
\draw (1.68446/2+0.3, -3.62803/2)node[anchor=north]{\(2i+1\)};

\draw (0.68404/2, -0.879385/2)  --(-1.68446/2, -3.62803/2);
\draw (-1.68446/2-0.3, -3.62803/2)node[anchor=north]{\(2i\)};
\end{tikzpicture}
\end{center}
Now consider the graph \(\mathrm{Arc}_{4,2i}(\Gamma)\):
\begin{center}
\begin{tikzpicture}[scale = 0.6]
\draw (0,0) circle (2cm);
\filldraw[lightgray] (0,1/2) circle (1cm);
\node[scale=3] (c) at (0,1/2)  {\(\Gamma\)};
\draw (0,1/2) circle (1cm);
\draw (1.96962/2, 1.3473/2) --(3.64697/2, 1.64306/2);
\draw (-1.96962/2, 1.3473/2) --(-3.64697/2, 1.64306/2);
\filldraw[black] (0,3.5/2) circle (2pt);

\filldraw[black] (1.27565/2, 3.20949/2) circle (2pt);
\filldraw[black] (-1.27565/2, 3.20949/2) circle (2pt);

\draw  (-0.68404/2, -0.879385/2)--(1.68446/2, -3.62803/2);
\draw (1.68446/2+0.3, -3.62803/2)node[anchor=north]{\(2i+2\)};

\draw (0.68404/2, -0.879385/2)  --(-1.68446/2, -3.62803/2);
\draw (-1.68446/2-0.3, -3.62803/2)node[anchor=north]{\(2i+1\)};

\draw (-3.27261/2, -2.3/2) node[anchor=north east]{\(2i\)}-- (3.27261/2, -2.3/2)node[anchor=north west ]{\(2i+3\)} ;
\end{tikzpicture}
\end{center}

Looking at the bottom-left internal vertex, corresponding to \(\alpha_1\), the two boundaries correspond to the following graphs after reduction: 

\begin{center}
\begin{tikzpicture}[scale = 0.6]
\draw (0,0) circle (2cm);
\filldraw[lightgray] (0,1/2) circle (1cm);
\node[scale=3] (c) at (0,1/2)  {\(\Gamma\)};
\draw (0,1/2) circle (1cm);
\draw (1.96962/2, 1.3473/2) --(3.64697/2, 1.64306/2);
\draw (-1.96962/2, 1.3473/2) --(-3.64697/2, 1.64306/2);
\filldraw[black] (0,3.5/2) circle (2pt);

\filldraw[black] (1.27565/2, 3.20949/2) circle (2pt);
\filldraw[black] (-1.27565/2, 3.20949/2) circle (2pt);

\draw (-0.68404/2, -0.879385/2) --(1.68446/2, -3.62803/2);
\draw (1.68446/2+0.3, -3.62803/2)node[anchor=north]{\(2i+2\)};

\draw   (3.27261/2, -2.3/2)node[anchor=north west ]{\(2i+3\)} --(-1.68446/2, -3.62803/2);
\draw (-1.68446/2-0.3, -3.62803/2)node[anchor=north]{\(2i+1\)};

\draw (-3.27261/2, -2.3/2)node[anchor=north east]{\(2i\)}--(0.68404/2, -0.879385/2);

\draw (0+10,0) circle (2cm);
\filldraw[lightgray] (0+10,1/2) circle (1cm);
\node[scale=3] (c) at (0+10,1/2)  {\(\Gamma\)};
\draw (0+10,1/2) circle (1cm);
\draw (1.96962/2+10, 1.3473/2) --(3.64697/2+10, 1.64306/2);
\draw (-1.96962/2+10, 1.3473/2) --(-3.64697/2+10, 1.64306/2);
\filldraw[black] (0+10,3.5/2) circle (2pt);

\filldraw[black] (1.27565/2+10, 3.20949/2) circle (2pt);
\filldraw[black] (-1.27565/2+10, 3.20949/2) circle (2pt);

\draw ({1.41421/2+10, -0.414214/2}) --(1.68446/2+10, -3.62803/2);
\draw (1.68446/2+0.3+10, -3.62803/2)node[anchor=north]{\(2i+2\)};
\draw (-1.68446/2-0.3+10, -3.62803/2)node[anchor=north]{\(2i+1\)};
\draw (-3.27261/2+10, -2.3/2)node[anchor=north east]{\(2i\)};
\draw   (3.27261/2+10, -2.3/2)node[anchor=north west ]{\(2i+3\)};

\draw  (0.68404/2+10, -0.879385/2)  --({3.28388/2+10, -2.28388/2});

\draw (-1.68446/2+10, -3.62803/2) arc (-28.7936 :-28.7936+157.5:0.53254);

\end{tikzpicture}
\end{center}

Since the original graph was reduced these are reduced as well by Proposition~\ref{prop:reduceability}. The picture is of course the mirror image for \(\alpha_2\). It is thus evident that 
\begin{coro}
\label{arc 4 boundary coro}
for \(\Gamma\) a reduced BCFW graph,

\begin{align*}
    &\lim_{v^\omega_1\rightarrow \infty}\mathrm{Arc}_{4,2i}(v^\omega_1,v^\omega_2)(\Gamma) = \mathrm{Arc}_{3,2i+1}(\Gamma), &
    &\lim_{v^\omega_1\rightarrow 0}\mathrm{Arc}_{4,2i}(v^\omega_1,v^\omega_2)(\Gamma) = \mathrm{Arc}_{2,2i}(\Gamma), \\
    &\lim_{v^\omega_2\rightarrow 0}\mathrm{Arc}_{4,2i}(v^\omega_1,v^\omega_2)(\Gamma) = \mathrm{Arc}_{2i,3}(\Gamma), &
    &\lim_{v^\omega_2\rightarrow \infty}\mathrm{Arc}_{4,2i}(v^\omega_1,v^\omega_2)(\Gamma) = \mathrm{Arc}_{2,2i+2}(\Gamma),
\end{align*}
and the resulting graphs are reduced as the result of the \(\mathrm{Arc_{\ell,n}}\) on a reduced graph are always reduced for \(n<4\) by Observation~\ref{obs:reduce arc move}.
\end{coro}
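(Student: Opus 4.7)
The plan is to verify the four claimed identities by systematically tracking how each of the two internal vertices of $\mathrm{Arc}_{4,2i}(v_1^\omega, v_2^\omega)(\Gamma)$ behaves under opening, and then applying reductions. I begin by expanding $\mathrm{Arc}_{4,2i}(v_1^\omega, v_2^\omega)(\Gamma) = \mathrm{Rot}_{2i+2,2i+3}(v_2^\omega) \circ \mathrm{Rot}_{2i+1,2i+2}(v_1^\omega) \circ \mathrm{Inc}_{2i}(\Gamma)$ per Definition~\ref{def:arc}. Let $V_1, V_2$ denote the internal vertices introduced by the first and second rotations respectively. The key structural input from $\Gamma$ being a BCFW graph is Theorem~\ref{thm:BCFW are trees}: the associated triangle tree has a triangle leaf between the external legs $2i$ and $2i+1$ of $\Gamma$; in OG-graph terms this means the two external legs of $\Gamma$ attached to positions $2i$ and $2i+1$ meet at a common internal $4$-valent vertex $T$ (with the remaining two legs of $T$ lying in the bulk of $\Gamma$). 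After $\mathrm{Inc}_{2i}$, the vertex $T$ is attached to external positions $2i+2$ and $2i+3$ of the enlarged graph.

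For the limit $v_1^\omega \to 0$, Definition~\ref{def:boundary graph} prescribes the opening of $V_1$ that retains the disconnected edge between new positions $2i$ and $2i+1$ introduced by $\mathrm{Inc}_{2i}$ and lets the bulk strand originally going to $2i+2$ pass straight through. After this opening, $V_2$ is a $4$-valent vertex whose two internal legs now both attach directly to $T$; this is a digon between $V_2$ and $T$, and the bubble reduction from Proposition~\ref{prop:reduceability} collapses $V_2$, leaving $\mathrm{Inc}_{2i}(\Gamma) = \mathrm{Arc}_{2,2i}(\Gamma)$. For the limit $v_1^\omega \to \infty$, the opposite opening of $V_1$ swaps the two strands: the disconnected arc end at new position $2i+1$ is rerouted past $V_2$ to external position $2i+3$, while the bulk strand of $\Gamma$ is rerouted to new position $2i+1$, yielding, after the corresponding smoothing, the graph $\mathrm{Arc}_{3,2i+1}(\Gamma) = \mathrm{Rot}_{2i+2,2i+3}(\cdot) \circ \mathrm{Inc}_{2i+1}(\Gamma)$. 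Both reductions match the two pictures drawn just before the corollary.

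For the two $v_2^\omega$ limits, I apply a symmetric argument obtained by reflecting the role of $V_1$ and $V_2$ (i.e.~reading the pictures in mirror image across the vertical axis through the middle of the $4$-arc). The limit $v_2^\omega \to 0$ opens $V_2$ trivially and leaves $\mathrm{Rot}_{2i+1,2i+2}(v_1^\omega) \circ \mathrm{Inc}_{2i}(\Gamma) = \mathrm{Arc}_{3,2i}(\Gamma)$, interpreting the typographic expression $\mathrm{Arc}_{2i,3}$ appearing in the statement as $\mathrm{Arc}_{3,2i}$. The limit $v_2^\omega \to \infty$ swaps the two strands at $V_2$ and produces a digon at $V_1$; after bubble reduction this collapses to $\mathrm{Inc}_{2i+2}(\Gamma) = \mathrm{Arc}_{2,2i+2}(\Gamma)$. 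In all four cases, reducedness of the resulting graph follows from Observation~\ref{obs:reduce arc move} applied to an $\mathrm{Arc}_{n,\ell}$ form with $n \in \{2,3\}$, using that $\Gamma$ is reduced by Proposition~\ref{prop:ToT reduced}.

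The main obstacle, and the reason one cannot simply invoke Corollary~\ref{arc 3 boundary coro}, is the bubble reduction occurring in the two cases ($v_1^\omega \to 0$ and $v_2^\omega \to \infty$) that collapse an entire surviving rotation vertex: it is precisely here that the BCFW hypothesis on $\Gamma$, via the existence of the triangle vertex $T$ adjacent to both positions $2i$ and $2i+1$, becomes essential, since for a generic reduced OG graph the two internal legs of the surviving rotation vertex would not coincidentally meet at a common internal vertex and no such collapse would occur.
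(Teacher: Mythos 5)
Your proof is correct and takes essentially the same route as the paper: the paper presents the argument graphically (drawing $\mathrm{Arc}_{4,2i}(\Gamma)$ near the triangle leaf of the ToT and displaying the two $\alpha_1$-boundaries ``after reduction'', then invoking mirror symmetry for $\alpha_2$), while you spell out the same combinatorics verbally. Your description of the two bubble collapses for the ``inner'' openings ($v_1^\omega\to 0$ and $v_2^\omega\to\infty$) is a helpful explicitation of what the paper subsumes under the phrase ``after reduction'', and your closing observation about why the BCFW triangle-leaf hypothesis is essential (without the common vertex $T$ no digon forms) is the correct reading of the paper's use of Theorem~\ref{thm:BCFW are trees}. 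Two small caveats: the digon collapse you invoke is equivalence move~\ref{eqmove:2} from the definition of OG-graph equivalence (in Appendix~\ref{app:OGgraphs}), not Proposition~\ref{prop:reduceability}, which is only a reducedness \emph{criterion}; and, like the paper itself, you do not explicitly verify which orientation convention assigns $0$ versus $\infty$ to each of the two openings of a given vertex, so both treatments implicitly rely on the orientation inherited from the $\mathrm{Arc}$ move and the compatibility of Definition~\ref{def:boundary graph}. Neither of these affects the correctness of the argument; your identification of $\mathrm{Arc}_{2i,3}$ as a typo for $\mathrm{Arc}_{3,2i}$ is also right.
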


Since moves \(\mathrm{Rot}\), \(\mathrm{Inc}\), and thus \(\mathrm{Arc}\), always act on OG graphs right next to the boundary disc and never interact with internal vertices, they commute with the taking of a limit.
\begin{coro}
    \label{limit comute for graphs coro}
    Let \(\Gamma\) be an OG graph and  \(v^\omega\) an oriented vertex in \(\Gamma\) and let \(G\) be a move, we have that
    \(
    \lim_{v^\omega\rightarrow L}G(\Gamma) = G(\lim_{v^\omega\rightarrow L}\Gamma).
    \)
\end{coro}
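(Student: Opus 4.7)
The plan is to prove the corollary by a locality argument, then a short case analysis on the move $G\in\{\mathrm{Rot}_{i,i+1}(\alpha),\mathrm{Inc}_i,\mathrm{Cyc}\}$; the case $G=\mathrm{Arc}_{n,\ell}$ then follows for free from Definition~\ref{def:arc}, since $\mathrm{Arc}$ is defined as a composition of $\mathrm{Inc}$'s and $\mathrm{Rot}$'s. The key conceptual point is that all three basic moves modify the graph only in a neighborhood of the boundary disk (adding or relabeling external half-edges, or inserting a new vertex adjacent to two external legs), whereas the limit operation $\lim_{v^\omega\to L}$ surgically modifies the graph only in a small neighborhood of the \emph{internal} vertex $v$, replacing the crossing at $v$ by one of the two local configurations depicted in Figure~\ref{fig:opening a vertex}. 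Since these two modifications have disjoint support in the embedded graph, they must commute.

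Concretely, I would fix an embedding of $\Gamma$ in the disk such that $v$ lies in the interior and pick disjoint open neighborhoods $U_v$ of $v$ and $U_{\partial}$ of the portion of $\partial D$ that $G$ acts on. By the definitions in Appendix~\ref{app:local moves}, $G(\Gamma)$ and $\Gamma$ agree outside $U_{\partial}$, while $\lim_{v^\omega\to L}\Gamma$ and $\Gamma$ agree outside $U_v$. Therefore both $G(\lim_{v^\omega\to L}\Gamma)$ and $\lim_{v^\omega\to L}G(\Gamma)$ are obtained from $\Gamma$ by applying the $G$-surgery inside $U_\partial$ and the $v$-opening inside $U_v$, and these two operations can be performed in either order yielding the same embedded graph. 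This handles $\mathrm{Rot}$ and $\mathrm{Inc}$ directly, and for $\mathrm{Cyc}$ (which only cyclically relabels the external half-edges) it is clear that the label shift commutes with a surgery that never touches an external label.

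The only step requiring any thought is verifying that the \emph{equivalence class} of the resulting OG graph, and not merely the drawn picture, is preserved — i.e., that one does not need to perform reduction moves in between. But by Proposition~\ref{prop:reduceability} (invoked already in Corollaries~\ref{arc 3 boundary coro} and \ref{arc 4 boundary coro} above) the moves $\mathrm{Rot}$, $\mathrm{Inc}$, $\mathrm{Cyc}$ preserve reducedness, and opening a vertex of a reduced graph is again reduced, so no reduction needs to be interleaved. The $\mathrm{Arc}$ case is now immediate: writing $\mathrm{Arc}_{n,\ell}=\mathrm{Rot}_{i_{n-1},i_n}\circ\cdots\circ\mathrm{Rot}_{i_2,i_3}\circ\mathrm{Inc}_\ell$ and commuting $\lim_{v^\omega\to L}$ past each factor in turn using the basic cases yields the desired equality. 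I do not anticipate a serious obstacle here; the corollary is essentially a bookkeeping statement, and its role in the paper is to justify the interchange of limits and arc-surgeries in the subsequent analysis of BCFW boundary cells in Section~\ref{sec:arc seq}.
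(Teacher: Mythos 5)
Your core argument is exactly the paper's: the paper proves this corollary by the single observation, stated in the sentence immediately preceding it, that $\mathrm{Rot}$, $\mathrm{Inc}$, and $\mathrm{Cyc}$ modify the graph only in an annular neighborhood of $\partial D$, while opening $v^\omega$ modifies it only in a small disk around the internal vertex $v$, so the two surgeries have disjoint support and commute. You spelled this out in more detail but the route is identical, down to handling $\mathrm{Arc}$ by factoring it into $\mathrm{Inc}$ and $\mathrm{Rot}$'s.

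One flaw in the final paragraph is worth correcting. You assert that ``opening a vertex of a reduced graph is again reduced,'' which is false: the paper itself exhibits counterexamples (the graph $\Gamma_R$ in Figure~\ref{fig:opening ext int v gr} is explicitly not reduced, and Proposition~\ref{prop:codim 1 boundaries iff reduced} is precisely the statement that only \emph{some} limit operations on a reduced graph stay reduced). You also assert that the three moves preserve reducedness, but Corollary~\ref{reduced move coro} excludes the case $\mathrm{Rot}_{i,i+1}$ with $\tau_i,\tau_{i+1}$ equal or crossing. Fortunately neither claim is needed: the statement $\lim_{v^\omega\to L}G(\Gamma)=G(\lim_{v^\omega\to L}\Gamma)$ is an equality of the two \emph{syntactic} graphs produced by performing the surgeries in the two orders, and the locality argument already gives literal equality of the drawn graphs (hence a fortiori of their equivalence classes) without any intermediate reduction being performed. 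The worry you raise does not arise, so that paragraph should simply be deleted rather than ``resolved'' with false premises.
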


 We will show that external arcs with sufficiently small support exhibit particularly good behavior (see Corollary~\ref{n<=4 twist solv coro}). Observe that taking limits can only decrease the support of external arcs. This suggests that boundaries of BCFW graphs—much like the BCFW graphs themselves—can be obtained through a sequence of \(\mathrm{Arc}\) moves, albeit with simpler types of \(\mathrm{Arc}\) moves. In the remainder of this section we make this intuition precise.

 \subsection{Arc-Sequences}
 \label{sec:arc seq}
Having understood the boundary cells of BCFW cells, our goal in this section is to establish an analogue of Proposition~\ref{prop:BCFW arcs}. Specifically, we show that OG graphs representing boundaries of BCFW cells can be constructed by repeatedly applying \(\mathrm{Arc}_{n,\ell}\) moves with \(n \leq 4\). A precise formulation is given in Proposition~\ref{prop:arc limits boundary sub BCFW}. Readers willing to accept this statement on faith may wish to skip this (relatively technical) section on a first pass.

\begin{dfn}
\label{def:sequences}
    Consider the following representation for an OG graph:
    \[
    \Gamma =\mathrm{Arc}_{n_m,i_{m}}\mathrm{Arc}_{n_{m-1},i_{m-1}}\cdot...\cdot\mathrm{Arc}_{n_3,i_3}\mathrm{Arc}_{n_2,i_2}\mathrm{Arc}_{n_1,i_1}\mathrm{Arc}_{2,1}(O) ,
    \]
    for \(m\geq1 \), where \(i_j \in [2j]\). Write 
    \[
    \Xi =\mathrm{Arc}_{n_m,i_{m}}\mathrm{Arc}_{n_{m-1},i_{m-1}}\cdot...\cdot\mathrm{Arc}_{n_3,i_3}\mathrm{Arc}_{n_2,i_2}\mathrm{Arc}_{n_1,i_1}\mathrm{Arc}_{2,1}.
    \]
    We call \(\Xi\) an \emph{arc-sequence of \(\Gamma\) of length \(m\) with index-sequence} \(\{(n_j,i_j)\}_{j=1}^m\). We will say that \(\Gamma = \Xi (O)\) is an arc-sequence representation of \(\Gamma\). We will refer to the \(\mathrm{Arc}\) move \(\mathrm{Arc}_{n_r,i_r}\) as the \(r\)-th numbered move. If the graph is reduced after each move in the sequence, we will call such \(\Xi\) a \emph{reduced} arc-sequence. 

    Recall that by Proposition~\ref{prop:BCFW arcs}, we have that an OG graph \(\Gamma\) is a BCFW graphs iff it has a representation as a reduced arc-sequence with index-sequence \(\{(n_j,i_j)\}_{j=1}^m\) with  all \(i_j\) being even and with \(n_j = 3, 4, 4,...,4\). We will call such an arc-sequence a \emph{BCFW arc-sequence}.

    We define the following partial orders on arc-sequences: If \(\Xi\) is an arc-sequence with an index-sequence \(\{(n_j,i_j)\}_{j=1}^m\), and \(\Xi^\prime\) is a different arc-sequence with an index-sequence  \(\{(i_j^\prime,n_j^\prime)\}_{j=1}^m\), such that \(n_j\leq n_j^\prime\) for every \(j\in[m]\) we say that \(\Xi^\prime\) is smaller or equal to $\Xi$ and write \(\Xi^\prime \leq \Xi\). $\Xi^\prime$ is smaller than $\Xi,$ denoted $\Xi^\prime<\Xi$ or $\Xi^\prime,$ if $\Xi^\prime\leq\Xi$ and for at least one $j,~n^\prime_j<n_j.$
    If for \(\Xi^\prime \leq \Xi\) there exists \(r\geq 0\) such that for every \(j<r\) it holds that \((n_j,i_j) = (i_j^\prime,n_j^\prime)\), we write \(\Xi^\prime \leq_r \Xi\).

    If \(\Xi\) is an arc-sequence such that there exist a BCFW arc-sequence \(\Xi^\prime\) with \(\Xi\leq \Xi^\prime\), we would call \(\Xi\) a \emph{sub-BCFW} sequence. This is equivalent to saying that \(n_1\leq3\) and for \(j>1\) we have \(n_j\leq 4\). If  \(\Xi\leq_r \Xi^\prime\) we call \(\Xi\) an \emph{\(r\)-sub-BCFW} sequence, which is equivalent to \(n_1\leq3\), for \(j>1\) we have \(n_j\leq 4\), for \(j<r\) the previous inequalities are equalities, and also \(i_j\) are all even.
    \end{dfn}

    This will allow us to represent graphs that are “simpler than BCFW’’ in a form amenable to twistor-solving as well. We will show that all limits of BCFW graphs—namely, the graphs corresponding to boundaries of BCFW cells—are indeed sub-BCFW. Some obstacles remain, however: opening a vertex may produce a non-reduced graph. We will therefore need to show how to systematically reduce such graphs so that Proposition~\ref{prop:sol induction prop} can be applied to obtain their twistor-solutions.

    \begin{prop}
    \label{prop:sequence reduction}
        If \(\Gamma\) an OG graph has a sub-BCFW sequence \(\Xi\) of length \(m\) such that the graph is reduced after each move up to the \(r\)-th move (after which the graph is no longer reduced), then there exists  a reduced arc-sequence \(\hat\Xi\) such that \(\hat \Xi\leq_r\Xi\) with the resulting graph being equivalent to \(\Gamma\). 
    \end{prop}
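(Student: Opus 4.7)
My plan is to locate the first offending move at step $r$, use the structural constraint that forces its form, replace it with a smaller arc move that yields a reduced graph, propagate the change through the remaining moves, and then iterate inductively on $m - r$.

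By Observation~\ref{obs:reduce arc move}, an $\mathrm{Arc}_{n,\ell}$ move applied to a reduced OG graph always produces a reduced graph when $n \leq 3$; it can fail only when $n = 4$ and the input graph already contains an external 2-arc at $\{\ell, \ell+1\}$. Since $\Xi$ is sub-BCFW we have $n_r \leq 4$, so failure at step $r$ forces $n_r = 4$, and the reduced graph $\Gamma_{r-1}$ produced by the first $r-1$ moves must contain an external 2-arc precisely at $\{i_r, i_r+1\}$.

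In this configuration, the unreduced graph $\mathrm{Arc}_{4, i_r}(\Gamma_{r-1})$ carries a length-4 external arc stacked directly over the pre-existing 2-arc. The key local step of the proof is to show that, via standard local moves on OG graphs, this configuration is equivalent to $\mathrm{Arc}_{n', i'_r}(\Gamma_{r-1})$ for some $n' \leq 3$: opening the new internal vertices inside the length-4 arc exposes a bubble with the 2-arc which can be collapsed. Since the subsequent arc moves $\mathrm{Arc}_{n_j, i_j}$ for $j > r$ all act next to the boundary disk, they commute with this local reduction up to a shift of external labels (Corollary~\ref{limit comute for graphs coro}). I therefore obtain a new arc-sequence $\Xi'$ of the same length satisfying $n'_j \leq n_j$ for every $j$, with $(n'_j, i'_j) = (n_j, i_j)$ for $j < r$, and whose terminal graph is equivalent to $\Gamma$; in particular $\Xi' \leq_r \Xi$. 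If $\Xi'$ is reduced at every step we set $\hat\Xi = \Xi'$ and finish; otherwise, the first bad step in $\Xi'$ occurs at some $r' > r$, and the inductive hypothesis applied to $\Xi'$ produces the desired reduced $\hat\Xi \leq_{r'} \Xi' \leq_r \Xi$.

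The main obstacle will be the local reduction step: a careful graph-theoretic case analysis tracking the four external half-edges created by $\mathrm{Arc}_{4, i_r}$, the two medial crossings introduced by its two hyperbolic rotations, and their interaction with the pre-existing 2-arc, in order to verify both that the collapse really yields an arc move with $n' \leq 3$ and that the induced relabeling of later moves preserves $n'_j \leq n_j$. The latter should be automatic since arc moves only interact with the boundary of the disk, but the former requires explicit checking via OG-graph reduction rules.
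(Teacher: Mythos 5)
Your high-level plan matches the paper's: force $n_r=4$, identify the external $2$-arc at $\{i_r,i_r+1\}$, shrink that move, propagate, and iterate. The gap you flag in the ``local reduction step'' is exactly where the paper's argument earns its keep, and it is resolved more cleanly than a case analysis of the four new half-edges. The paper first observes that since $\{i_r,i_r+1\}$ is an external $2$-arc of $\Gamma_{r-1}$, no arc of $\Gamma_{r-1}$ can cross it (any such arc would have to cross it twice, which is forbidden in a reduced graph by Proposition~\ref{prop:reduceability}), so $\Gamma_{r-1}$ \emph{factors} as $\mathrm{Arc}_{2,i_r}(\Gamma_0')$ for some smaller graph $\Gamma_0'$. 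Then the offending composition is $\mathrm{Arc}_{4,i_r}\mathrm{Arc}_{2,i_r}(\Gamma_0')$, which is a single application of equivalence move~\ref{eqmove:2} away from $\mathrm{Arc}_{3,i_r}\mathrm{Arc}_{2,i_r}(\Gamma_0')$ --- this is reduced by Proposition~\ref{prop:reduceability}, giving $n'_r=3$ directly with no case-tracking. Commuting the replacement past the later moves is then handled by Corollary~\ref{moves commute with equiv coro} (equivalence of inputs implies equivalence of outputs under a further move), not Corollary~\ref{limit comute for graphs coro} which concerns limits rather than equivalence moves; your citation there should be adjusted. With the factorization step supplied, your inductive scheme goes through as stated.
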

    \begin{proof}

    Consider the sequence \(\Xi\) with index-sequence \(\{i_j,n_j\}_{j=1}^{m}\)
         \[
     \Gamma = \mathrm{Arc}_{n_m,i_{m}}...\mathrm{Arc}_{n_r,i_{r}} \mathrm{Arc}_{n_{r-1},i_{r-1}}...\mathrm{Arc}_{n_1,i_1}\mathrm{Arc}_{2,1}(O),
    \]
     In order to obtain a reduced graph after applying \(\mathrm{Arc}_{n,i}\) on a \(k\)-OG graph, the original graph has to be reduced, and there must not be an arc contained in \(\{i,i+1,...,i+n-1\}\) (considered mod \(2k\)) in the original graph, by Observation~\ref{obs:reduce arc move}. Since it is a sub-BCFW sequence, we know that \(2 \leq n_j \leq 4\). An arc is composed of two indices, thus the only case in which \(\{i,i+1,...,i+n-1\}\) may contain an arc is \(n_j = 4\).Therefore \(n_r=4\).

    Consider the \((k=r)\)-OG graph defined by the arc-sequence truncated at \(r\):
    \[
    \Gamma_0 =  \mathrm{Arc}_{n_{r-1},i_{r-1}}...\mathrm{Arc}_{n_1,i_1}\mathrm{Arc}_{2,1}(O).
    \]
     By assumptions \(\Gamma_0\) is reduced. Moreover, the above arc-sequence for \(\Gamma_0\) is a reduced sub-BCFW of length \(r\), by definition.

     By assumptions we also have that the graph
     \(
     \mathrm{Arc}_{4,i_{r}}(\Gamma_0)
     \)
     is not reduced. Thus, \(\{i_r,i_r+1\}\) (mod \(2r\)) is an external \(2\)-arc in \(\Gamma_0\), a reduced graph.

     Since there are no external vertices between \(\{i_r,i_r+1\}\), we have that any arcs that cross the arc \(\{i_r,i_r+1\}\) must cross it twice; by Proposition~\ref{prop:reduceability}, that is impossible. Hence, no arcs cross the arc \(\{i_r,i_r+1\}\) in \(\Gamma_0\). This means that \(
     \Gamma_0 = \mathrm{Inc}_{i_r}(\Gamma_0^\prime) = \mathrm{Arc}_{2,i_{r}}(\Gamma_0^\prime),
     \)
     which implies
     \(
     \mathrm{Arc}_{4,i_{r}}(\Gamma_0) =\mathrm{Arc}_{4,i_{r}}\mathrm{Arc}_{2,i_{r}}(\Gamma_0^\prime).
     \)

     We can now apply equivalence move~\ref{eqmove:2}  on \( \mathrm{Arc}_{4,i_{r}}\mathrm{Arc}_{2,i_{r}}(\Gamma_0^\prime)\), to obtain \( \mathrm{Arc}_{3,i_{r}}\mathrm{Arc}_{2,i_{r}}(\Gamma_0^\prime)\):
    \begin{center}
\begin{tikzpicture}[scale = 0.8]
\draw (0,0) circle (2cm);
\filldraw[lightgray] (0,1/2) circle (1cm);
\node[scale=3] (c) at (0,1/2)  {\(\Gamma_0^\prime\)};
\draw (0,1/2) circle (1cm);
\draw (1.96962/2, 1.3473/2) --(3.64697/2, 1.64306/2);
\draw (-1.96962/2, 1.3473/2) --(-3.64697/2, 1.64306/2);
\filldraw[black] (0,3.5/2) circle (2pt);

\filldraw[black] (1.27565/2, 3.20949/2) circle (2pt);
\filldraw[black] (-1.27565/2, 3.20949/2) circle (2pt);

 \draw [black] plot [smooth, tension=1.5] coordinates { (1.68446/2, -3.62803/2) (0, -0.75) (-1.68446/2, -3.62803/2)};

\draw (1.68446/2+0.3, -3.62803/2)node[anchor=north]{\(i_r+2\)};

\draw (-1.68446/2-0.3, -3.62803/2)node[anchor=north]{\(i_r+1\)};

\draw (-3.27261/2, -2.3/2) node[anchor=north east]{\(i_r\)}-- (3.27261/2, -2.3/2)node[anchor=north west ]{\(i_r+3\)} ;

\draw[very thick, -latex, shorten <= 0] (2.3,0)   --(3.7,0);

\draw (0+6,0) circle (2cm);
\filldraw[lightgray] (0+6,1/2) circle (1cm);
\node[scale=3] (c) at (0+6,1/2)  {\(\Gamma_0^\prime\)};
\draw (0+6,1/2) circle (1cm);
\draw (1.96962/2+6, 1.3473/2) --(3.64697/2+6 ,1.64306/2);
\draw (-1.96962/2+6, 1.3473/2) --(-3.64697/2+6 ,1.64306/2);
\filldraw[black] (0+6,3.5/2) circle (2pt);

\filldraw[black] (1.27565/2+6, 3.20949/2) circle (2pt);
\filldraw[black] (-1.27565/2+6, 3.20949/2) circle (2pt);

\draw (1.68446/2+0.3+6, -3.62803/2)node[anchor=north]{\(i_r+2\)};

\draw (-1.68446/2-0.3+6, -3.62803/2)node[anchor=north]{\(i_r+1\)};

\draw (-3.27261/2+6, -2.3/2) node[anchor=north east]{\(i_r\)}-- (1.68446/2+6, -3.62803/2) ;

\draw  (-1.68446/2+6, -3.62803/2)-- (3.27261/2+6, -2.3/2)node[anchor=north west ]{\(i_r+3\)} ;

\end{tikzpicture}
\end{center}
By Proposition~\ref{prop:reduceability} the resulting graph \( \mathrm{Arc}_{3,i_{r}}\mathrm{Arc}_{2,i_{r}}(\Gamma_0^\prime)\) is reduced. We can now write that the following graphs \(
     \mathrm{Arc}_{4,i_{r}}(\Gamma_0) \) and \(\mathrm{Arc}_{3,i_{r}}(\Gamma_0)
\) are equivalent. Now we replace the problematic \(\mathrm{Arc}_{4,i_r}\) with \(\mathrm{Arc}_{3,i_r}\) and then continue the sequence as usual. We will have defined a new arc-sequence \(\Xi^1\), with index-sequence \(\{(i_j^1,n_j^1)\}_{j=1}^m\) such that \(i_j^1 = i_j\) for any \(j\), and \(n_j^1 = n_j\) for any \(j\neq r\) and \(n_r^1 = 3 < 4 =n_r\), thus we have  \(\Xi^1\leq_r\Xi\). By Corollary~\ref{moves commute with equiv coro} preforming the same moves on equivalent graphs result in equivalent graphs, and thus the resulting graphs are equivalent. 

Since \(\Xi\) is reduced after each move up to the \(r\)-th move, and we have that \(\Xi^1\) is reduced after the \(r\)-th move, we have that \(\Xi^1\) is reduced after each move at least up to the \(r+1\)-th move. Suppose it is reduced up to the \(r^1\)-th move, with \(r^1>r\). 

We will now continue by induction, and use \(\Xi^0 = \Xi\) as our base case.

For the step, take \(\Xi^q\), a sub-BCFW sequence with \(\Xi^q \leq_r \Xi\) (meaning the sequences are the same up to the \(r^q\) move, and afterwards \(n_j^{q}\leq n_j\)) that is also reduced up to the \(r^q\)-th move. Apply the same algorithm described above to get \(\Xi^{q+1}\), a sub-BCFW sequence that is reduced up to the \(r^{q+1}\)-th move (with \(r^{q+1}>r^q\)), whose graph is equivalent to the graph of \(\Xi^q\), and that  \(\Xi^{q+1}\leq_{r^q}\Xi^q\) (meaning the sequences are the same up to the \(r^q\) move, and afterwards \(n_j^{q+1}\leq n_j^q\)). 

Since \(r^{q+1}>r^q>...>r^1>r\), we also have \(\Xi^{q+1}\leq_r\Xi^q\) and thus \(\Xi^{q+1}\leq_r\Xi\).

Continue applying this algorithm by induction until \(r^q=m\). This will result in \(\hat \Xi \) --  a sequence that is reduced after each move and that has \(\hat \Xi\leq_r\Xi \), and whose graph is equivalent to that of \(\Xi\), finishing the proof.
    \end{proof}

Recall how we label the vertices in a graph added by the \(\mathrm{Arc}\) move in Definitions~\ref{def:arc} and~\ref{vertex label}.
    \begin{dfn}
        An arc-sequence \(\Xi\)  of a graph \(\Gamma\)  gives us a natural ordering of the internal vertices in the graph. Let us label them \(\{v_\ell\}_{\ell=1}^{2m-1}\) in the following way:
    \[
    \Gamma =\mathrm{Arc}_{n_m,i_{n_m}}(v_{2m-1},v_{2m-2})\cdot...\cdot\mathrm{Arc}_{n_2,i_2}(v_{3},v_2)\mathrm{Arc}_{n_1,i_1}(v_1)\mathrm{Arc}_{2,1}(O) ,
    \]
    and call \(\{v_\ell\}_{\ell=1}^{2m-1}\) \emph{vertex-sequence} associated to \(\Xi\). 
    
    Recall that by Definition~\ref{def:arc} the vertices the arc-sequence induces a natural orientation for each vertex; this orientation often does not combine into a well-defined orientation on the whole graph. When referring to a vertex in a vertex-sequence that is associated to an arc-sequence, we implicitly mean for them to be oriented with the orientation induced by the arc-sequence.
    \end{dfn}
\begin{rmk}
  Notice the vertices added by move numbered \(r\) in the sequence are numbered \(v_{2r-2}\) and \(v_{2r-1}\).
  \end{rmk}
We will now work up to show that graphs that correspond to limits of BCFW graphs, that is, to cells that are boundaries of BCFW cells, are indeed simpler then BCFW cells.
\begin{prop}

    For \(\Gamma\) a perfectly oriented BCFW graph with a reduced BCFW sequence \(\Xi\) and an associated vertex-sequence \(\{v_\ell\}_{\ell=1}^{2m-1}\). Let \(v\) some internal vertex \(v_{2r-2}\) or \(v_{2r-1}\) (that is, a vertex added by move numbered \(r\) in the sequence \(\Xi\)). 
    Then in the equivalence class of the graph
    \(
    \lim_{v\rightarrow L}\Gamma
    \)
    there is a representative with a reduced sub-BCFW arc-sequence \(\hat\Xi\), and \(\hat\Xi \leq_r \Xi\).
\end{prop}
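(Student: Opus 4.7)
The plan is to commute the limit $\lim_{v\to L}$ past every arc move numbered $j>r$ so that it acts only on the $r$-th move, apply the previously established boundary formulas for $\mathrm{Arc}_{3,\ell}$ and $\mathrm{Arc}_{4,2i}$ to shrink that move strictly, and then invoke Proposition~\ref{prop:sequence reduction} to repair any loss of reducedness in the tail.

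First, write
\[
\Gamma \;=\; \mathrm{Arc}_{n_m,i_m}\cdots\mathrm{Arc}_{n_{r+1},i_{r+1}}\mathrm{Arc}_{n_r,i_r}(v_{2r-1},v_{2r-2})(\Gamma_{r-1}),
\]
where $\Gamma_{r-1}$ is the graph after the first $r-1$ moves of $\Xi$. Since $v\in\{v_{2r-2},v_{2r-1}\}$ is untouched by any move numbered $j>r$, Corollary~\ref{limit comute for graphs coro} lets $\lim_{v\to L}$ pass through each of $\mathrm{Arc}_{n_{r+1},i_{r+1}},\ldots,\mathrm{Arc}_{n_m,i_m}$, giving
\[
\lim_{v\to L}\Gamma \;=\; \mathrm{Arc}_{n_m,i_m}\cdots\mathrm{Arc}_{n_{r+1},i_{r+1}}\Bigl[\lim_{v\to L}\mathrm{Arc}_{n_r,i_r}(v_{2r-1},v_{2r-2})(\Gamma_{r-1})\Bigr].
\]

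Next I would compute the bracketed inner limit. If $r=1$ then $n_r=3$ and Corollary~\ref{arc 3 boundary coro} yields $\mathrm{Arc}_{2,i'_1}(\Gamma_0)$ with $i'_1\in\{i_1,i_1+1\}$; if $r>1$ then $n_r=4$ and Corollary~\ref{arc 4 boundary coro} yields $\mathrm{Arc}_{n'_r,i'_r}(\Gamma_{r-1})$ with $n'_r\in\{2,3\}$, the exact case being determined by matching $v$ with the orientation labels $v^\omega_1,v^\omega_2$ of Definition~\ref{def:arc}. In every case $n'_r<n_r$, and the cited corollaries already assert that the output graph is reduced. Let $\Xi'$ be the arc-sequence obtained from $\Xi$ by replacing only its $r$-th entry $(n_r,i_r)$ with $(n'_r,i'_r)$; then $\Xi'(O)$ represents the same equivalence class as $\lim_{v\to L}\Gamma$, the first $r-1$ entries of $\Xi'$ coincide with those of $\Xi$, and $n_j^{\Xi'}\leq n_j^{\Xi}$ at every position, so $\Xi'\leq_r\Xi$ and $\Xi'$ is sub-BCFW.

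The graph after the first $r$ moves of $\Xi'$ is reduced, so if $\Xi'$ itself is reduced we set $\hat\Xi=\Xi'$ and are done. Otherwise reducedness first fails at some position $r^*>r$; Proposition~\ref{prop:sequence reduction} applied to $\Xi'$ then produces a reduced sub-BCFW sequence $\hat\Xi\leq_{r^*}\Xi'$ whose graph is equivalent to $\Xi'(O)$, and hence to $\lim_{v\to L}\Gamma$. Since $r^*>r$, the first $r-1$ entries of $\hat\Xi$ agree with those of $\Xi'$ and therefore with those of $\Xi$; combined with $n_j^{\hat\Xi}\leq n_j^{\Xi'}\leq n_j^{\Xi}$, this yields $\hat\Xi\leq_r\Xi$ as required.

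The main point that demands care is the bookkeeping rather than the limit computation itself: one must check that both the commutation step and the algorithm of Proposition~\ref{prop:sequence reduction} leave the first $r-1$ entries of the arc-sequence—together with their (even) indices—completely untouched, so that the $r$-sub-BCFW property is preserved under both operations. This follows directly from the respective constructions, since the commutation only alters the $r$-th entry and Proposition~\ref{prop:sequence reduction} modifies no entry below the first position at which reducedness fails.
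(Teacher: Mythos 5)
Your proof is correct and follows the same route as the paper's: commute the limit past the later arc moves via Corollary~\ref{limit comute for graphs coro}, replace the $r$-th move by the smaller $\mathrm{Arc}_{n',i'}$ given by Corollaries~\ref{arc 3 boundary coro} and~\ref{arc 4 boundary coro}, obtaining $\Xi'\leq_r\Xi$, and then repair the tail with Proposition~\ref{prop:sequence reduction} to get a reduced $\hat\Xi\leq_r\Xi$. Your handling of the case where $\Xi'$ is already reduced (so Proposition~\ref{prop:sequence reduction} is vacuous) and your tracking of the exact position $r^*$ where reducedness first fails are slightly more pedantic than the paper, but substantively identical.
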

Here we have kept track of graphs that are equivalent but not identical. Since constructing graphs explicitly via sequences of moves makes the distinction between equivalent representatives especially visible, it is convenient to maintain this distinction for the purposes of this proof, while we are still becoming comfortable with these techniques. However, as we ultimately care only about equivalence classes of OG graphs, we will later revert to not distinguishing between equivalent representatives.

\begin{proof}

    By Proposition~\ref{prop:BCFW arcs} \(\Gamma\) has a reduced BCFW arc-sequence representation
     \[
    \Gamma =\mathrm{Arc}_{n_m,i_{m}}\mathrm{Arc}_{n_m,i_{m-1}}\cdot...\cdot\mathrm{Arc}_{n_3,i_3}\mathrm{Arc}_{n_2,i_2}\mathrm{Arc}_{n_1,i_1}\mathrm{Arc}_{2,1}(O) ,
    \]
    for \(m\geq1 \), where \(i_j \in [2j]\) and are even, \(n_1 = 3\) and \(n_j = 4\) for \(j>1\), with the result being reduced after each \(\mathrm{Arc}\) move.

    By Corollary~\ref{limit comute for graphs coro} we can commute the limit operation up to the \(\mathrm{Arc}\) move that adds the vertex \(v\), which is the \(r\)-th one. By Proposition~\ref{prop:BCFW arcs}, we now have an expression of one of the following forms:
     \[
    \lim_{v_{2r-1}\rightarrow L}\Gamma = \mathrm{Arc}_{4,2i_{m}}\cdot...\cdot\mathrm{Arc}_{n_{r+1},i_{r+1}} \lim_{v_{2r-1}\rightarrow L}\mathrm{Arc}_{n_r,i_r}(v_{2r-1},v_{2r-2})(\Gamma_0),
    \]
    for some \(1<r\leq m\) with \(\Gamma_0\) a BCFW graph,
      \[
    \lim_{v_{2r-2}\rightarrow L}\Gamma = \mathrm{Arc}_{4,2i_{m}}\cdot...\cdot\mathrm{Arc}_{n_{r+1},i_{r+1}} \lim_{v_{2r-2}\rightarrow L}\mathrm{Arc}_{n_r,i_r}(v_{2r-1},v_{2r-2})(\Gamma_0),
    \]
    for some \(1<r\leq m\) with \(\Gamma_0\) a BCFW graph, or if \(r=1\)
    \[
    \lim_{v_1\rightarrow L}\Gamma = \mathrm{Arc}_{n_m,i_{m}}\cdot...\cdot\mathrm{Arc}_{n_{1},i_1} \lim_{v_1\rightarrow L}\mathrm{Arc}_{3,2}(v_1)\mathrm{Arc}_{2,1}(O).
    \]
    Since \(i_j\) are even and \(n_j = 4\) for \(j>1\), by Corollaries~\ref{arc 3 boundary coro} and~\ref{arc 4 boundary coro}, we can replace \\\(\lim_{v_{2r-1}\rightarrow L}\mathrm{Arc}_{n_r,i_r}(v_{2r-1},v_{2r-2}),\) \(\lim_{v_{2r-2}\rightarrow L}\mathrm{Arc}_{n_r,i_r}(v_{2r-1},v_{2r-2}),\) or \(\lim_{v_1\rightarrow L}\mathrm{Arc}_{3,i_r}(v_1)\), by \(\mathrm{Arc}_{n,\ell}\) with \(n< n_r\) and some \(l \in [2j]\). This defines a new arc-sequence \(\Xi^\prime\) with index-sequence \(\{(i_j^\prime,n_j^\prime)\}_{j=1}^m\) such that \(i_j^\prime = i_j\) and \(n_j^\prime = n_j\) for \(j\neq r\) and \(n_r^\prime \leq n_r\), thus  \(\Xi^\prime\leq_r\Xi\). 
    
    Since \(\Xi\) is reduced and the sequence are identical up to the \(r\)-th move, we know that \(\Xi^\prime\) is reduced up to the \(r\)-th move. Now we can apply Proposition~\ref{prop:sequence reduction} to get \(\hat \Xi\) a reduced arc-sequence with \(\hat \Xi\leq_r\Xi^\prime\) and thus \(\hat \Xi\leq_r\Xi\). Since the graph of \(\hat \Xi\) is equivalent to that of \(\Xi^\prime\), we get that it is equivalent to \(
    \lim_{v\rightarrow L}\Gamma\). Finishing the proof.
\end{proof}

\begin{lem}
    Let \(\Gamma\) be a perfectly oriented graph with a reduced \(q\)-sub-BCFW sequence \(\Xi\)  and vertex-sequence \(\{v_l\}_{l=1}^{2m-1}\). Let \(v\) be the vertex \(v_{2r-2}\) or \(v_{2r-1}\) (that is, a vertex added by move numbered \(r\) in the sequence \(\Xi\)) with \(r\leq q\). Then 
    \(
    \lim_{v\rightarrow L}\Gamma
    \)
    has a representation as a reduced \(r\)-sub-BCFW arc-sequence \(\hat\Xi\)  with \(\hat\Xi \leq_r \Xi\).
\end{lem}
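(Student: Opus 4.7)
The plan is to run essentially the same three-step argument used in the preceding proposition, but to keep careful track of which prefix of $\Xi$ is preserved. Since $\Xi$ is $q$-sub-BCFW and $r\leq q$, the first $r$ moves of $\Xi$ satisfy the BCFW conditions: $n_1=3$, $n_j=4$ for $1<j\leq r$, and $i_j$ is even for $j\leq r$. In particular, the $r$-th numbered move is either $\mathrm{Arc}_{3,i_1}$ (if $r=1$) or $\mathrm{Arc}_{4,i_r}$ (if $r>1$) with $i_r$ even, so Corollaries~\ref{arc 3 boundary coro} and~\ref{arc 4 boundary coro} are directly applicable to it.

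First I would invoke Corollary~\ref{limit comute for graphs coro} to commute $\lim_{v\to L}$ past the moves $\mathrm{Arc}_{n_{r+1},i_{r+1}},\ldots,\mathrm{Arc}_{n_m,i_m}$, so that the limit acts only on the $r$-th move. Applying the appropriate arc-boundary corollary then replaces $\lim_{v\to L}\mathrm{Arc}_{n_r,i_r}(v_{2r-1},v_{2r-2})$ (or $\lim_{v_1\to L}\mathrm{Arc}_{3,i_1}(v_1)$ when $r=1$) by a single $\mathrm{Arc}_{n_r',i_r'}$ with $n_r'<n_r$. This produces an arc-sequence $\Xi'$ whose index-sequence $\{(n_j',i_j')\}_{j=1}^m$ agrees with that of $\Xi$ except at position $r$, so $\Xi'\leq_r\Xi$ and $\Xi'$ is sub-BCFW. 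Moreover, $\Xi'$ is reduced after each of its first $r-1$ moves, because these are literally the first $r-1$ moves of the reduced sequence $\Xi$; the only possible failures of reducibility of $\Xi'$ occur from position $r$ onward.

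Finally I would apply Proposition~\ref{prop:sequence reduction} to $\Xi'$ to obtain a reduced arc-sequence $\hat\Xi\leq_r\Xi'$ whose resulting graph is equivalent to that of $\Xi'$, hence to $\lim_{v\to L}\Gamma$. Chaining the two relations gives $\hat\Xi\leq_r\Xi$, and since the first $r-1$ entries of $\hat\Xi$ coincide with those of $\Xi$ and are BCFW (namely $n_1=3$, $n_j=4$ for $1<j<r$, and $i_j$ even), $\hat\Xi$ is by definition an $r$-sub-BCFW sequence. The argument is mostly bookkeeping on top of the previous proposition; the one point that really needs the $\leq_r$ refinement of Proposition~\ref{prop:sequence reduction} (rather than just $\leq$) is the guarantee that the inductive reduction steps never touch the first $r-1$ moves, which is precisely what preserves the $r$-sub-BCFW structure. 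The degenerate case $r=1$ is handled automatically since ``$1$-sub-BCFW'' imposes no equalities on the prefix and merely means ``sub-BCFW''.
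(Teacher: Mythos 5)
Your overall strategy — commute the limit past all moves down to position $r$, apply an arc-boundary corollary there, and then invoke Proposition~\ref{prop:sequence reduction} — is in effect the argument of the preceding (unnumbered) proposition applied in place, rather than as a black box. The paper instead truncates $\Xi$ at position $q$, treats the resulting graph $\Gamma^0$ as a BCFW graph, invokes the earlier proposition on $\Gamma^0$, reattaches the tail $\mathrm{Arc}_{n_m,i_m}\cdots\mathrm{Arc}_{n_{q+1},i_{q+1}}$, and only then reduces. Your route is more direct and saves the intermediate object $\Gamma^0$; both then finish with the same appeal to the $\leq_r$-preserving statement of Proposition~\ref{prop:sequence reduction}, which is the load-bearing technical point.

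One claim as you state it is not quite right at the boundary case $r=q$. By Definition~\ref{def:sequences}, a $q$-sub-BCFW sequence only has its entries for $j<q$ matching a BCFW sequence; the $q$-th entry merely satisfies $n_q\leq 4$ (and $i_q$ need not be even). So your assertion that ``the first $r$ moves of $\Xi$ satisfy the BCFW conditions: $n_1=3$, $n_j=4$ for $1<j\leq r$, and $i_j$ is even for $j\leq r$'' fails when $r=q$: the $q$-th move may for instance be $\mathrm{Arc}_{3,i_q}$, contradicting your claim that it is $\mathrm{Arc}_{4,i_r}$ with $i_r$ even. This mostly does not break the argument — when $n_q\leq 3$, Corollary~\ref{arc 3 boundary coro} applies and needs only that $\Gamma_{r-1}$ is reduced, not that it is BCFW — but you should make the case-split explicit, since Corollary~\ref{arc 4 boundary coro} genuinely requires both the BCFW hypothesis on $\Gamma_{r-1}$ and an even starting index. (The paper's own proof asserts without comment that the truncation $\Xi^0$ is a reduced BCFW sequence, which has the same $q$-th-entry ambiguity, so your gap here mirrors an imprecision already present in the source; in the intended inductive use within Proposition~\ref{prop:arc limits boundary sub BCFW} the $q$-th entry always has $n_q<4$, which is why the edge case never actually fires.)
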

\begin{proof}
    Consider the sequence \(\Xi\) 
     \[
     \Gamma = \mathrm{Arc}_{n_m,i_{m}}...\mathrm{Arc}_{n_q+1,i_{q+1}} \mathrm{Arc}_{n_{q},i_{q}}...\mathrm{Arc}_{n_1,i_1}\mathrm{Arc}_{2,1}(O),
    \]
    and define 
    \(
    \Gamma^0 = \mathrm{Arc}_{n_{q},i_{q}}...\mathrm{Arc}_{n_1,i_1}\mathrm{Arc}_{2,1}(O).
    \)
    As \(\Xi\) is a reduced \(q\) sub-BCFW sequence, we have that the above sequence \(\Xi^0\) for \(\Gamma^0\) is a reduced BCFW sequence. Thus, by Proposition~\ref{prop:BCFW arcs} \(\Gamma^0\) is a BCFW graph.

    As \(r\leq q\),  by Corollary~\ref{limit comute for graphs coro}  we have that
    \begin{align*}
        \lim_{v\rightarrow L}\Gamma &= \lim_{v\rightarrow L}\mathrm{Arc}_{n_m,i_{m}}...\mathrm{Arc}_{n_{q+1},i_{q+1}} \mathrm{Arc}_{n_{q},i_{q}}...\mathrm{Arc}_{n_1,i_1}\mathrm{Arc}_{2,1}(O)\\
        &= \mathrm{Arc}_{n_m,i_{m}}...\mathrm{Arc}_{n_{q+1},i_{q+1}} \lim_{v\rightarrow L}\mathrm{Arc}_{n_{q},i_{q}}...\mathrm{Arc}_{n_1,i_1}\mathrm{Arc}_{2,1}(O)\\
        &= \mathrm{Arc}_{n_m,i_{m}}...\mathrm{Arc}_{n_{q+1},i_{q+1}} \lim_{v\rightarrow L}\Gamma^0.
    \end{align*}

    By the previous claim, as we saw \(\Gamma^0\) is a BCFW graph with a reduced BCFW sequence \(\Xi^0\), we have a reduced \(r\)-sub-BCFW sequence \(\Xi^\prime\) with \(\Xi^\prime\leq_r\Xi^0\) such that
    \[
    \lim_{v\rightarrow L}\Gamma^0 = \mathrm{Arc}_{n_{q}^\prime,i_{q}^\prime}...\mathrm{Arc}_{n_1^\prime,i_1^\prime}\mathrm{Arc}_{2,1}(O).
    \]
    Therefore,
\begin{align*}
        \lim_{v\rightarrow L}\Gamma &= \mathrm{Arc}_{n_m,i_{m}}...\mathrm{Arc}_{n_{q+1},i_q+1} \lim_{\alpha\rightarrow L}\Gamma^0\\
        &=\mathrm{Arc}_{n_m,i_{m}}...\mathrm{Arc}_{n_{q+1},i_{q+1}} \mathrm{Arc}_{n_{q}^\prime,i_{q}^\prime}...\mathrm{Arc}_{n_1^\prime,i_1^\prime}\mathrm{Arc}_{2,1}(O)
    \end{align*}
is an arc-sequence for \(\lim_{v\rightarrow L}\Gamma\). Call it \(\Xi^1\).

We have \(\Xi^\prime\leq_r\Xi^0\), that is, the former has smaller or equal \(n^\prime_j\) and they are identical up to the \(r\)-the move. \(\Xi^0\) is just  \(\Xi\) truncated at the \(q\)-th move, thus \(\Xi^\prime\) is identical to \(\Xi\) up to the \(r\)-th move, and \(n^\prime_j\leq n_j\) afterwards. 

\(\Xi^1\) is defined as \(\Xi^\prime\) up to the \(q\)-th move and then continuing as \(\Xi\). Thus, \(\Xi^1\) is identical to \(\Xi\) before the \(r\)-th move and after the \(q\)-th move, with smaller or equal \(n_j\) in between. This means that by definition \(\Xi^1\leq_r\Xi\).

As \(\Xi^\prime\) is reduced, we have that the sequence \(\Xi^1\) is reduced up to the \(q+1\) move. By Proposition~\ref{prop:sequence reduction} we have \(\hat \Xi\) an equivalent reduced arc-sequence with with \(\hat \Xi \leq_{q+1} \Xi^1\). As \(r\leq q\), we have that  \(\hat \Xi\leq_r\Xi\).

We have thus found \(\hat \Xi\), a reduced arc-sequence for \(\lim_{v\rightarrow L}\Gamma\) with \(\hat \Xi\leq_r\Xi\).  As  \(\Xi\) is a reduced \(q\) sub-BCFW arc-sequence, we have that \(\hat \Xi\) is a reduced \(r\)-sub-BCFW sequence.
\end{proof}

\begin{prop}
\label{prop:arc limits boundary sub BCFW}

    Let \(\Gamma\) be a perfectly oriented BCFW graph with (by Proposition~\ref{prop:BCFW arcs}) a representation as a reduced BCFW arc-sequence \(\Xi\) with vertex-sequence \(\{v_\ell\}_{\ell=1}^{2m-1}\).
    
     Let \(\{v_{\ell_p}\}_{p=1}^q\) be some vertices added by moves numbered \(r_p\) (that is, \(\ell_p \in \{2r_p-1,2r_p-2\}\)), and let \(r = \min \{r_p\}_{p=1}^q \).
 Then
    \(
    \lim_{v_{\ell_p}\rightarrow L_q}... \lim_{v_{\ell_1}\rightarrow L_1}\Gamma 
    \)
    has  a representation as a reduced arc-sequence \(\hat\Xi\) with \(\hat\Xi \leq_r \Xi\), and thus \(\hat\Xi\) is an \(r\)-sub-BCFW sequence.
\end{prop}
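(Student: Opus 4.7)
The plan is to induct on $q$. The base case $q=1$ is immediate from the preceding lemma, applied to $\Gamma$ with its BCFW sequence $\Xi$ (which is an $m$-sub-BCFW sequence) and the vertex $v_{\ell_1}$ at move $r_1 = r \leq m$. For the inductive step ($q \geq 2$), I will exploit Corollary~\ref{order of limits coro} to reorder the limits freely and arrange for the final limit to be taken at a vertex $v_{\ell_q}$ with $r_q = r$. Then I apply the inductive hypothesis to $\Gamma$ with the first $q - 1$ limits, producing an intermediate graph $\Gamma^{(q-1)}$ equipped with a reduced $r_\star$-sub-BCFW representation $\Xi^{(q-1)} \leq_{r_\star} \Xi$, where $r_\star = \min_{p < q} r_p \geq r$.

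In the generic case $r_\star > r$, the sequence $\Xi^{(q-1)}$ agrees with $\Xi$ on all moves $j < r_\star$, and in particular on move $r$; consequently $v_{\ell_q}$ persists as a vertex of move $r$ in $\Xi^{(q-1)}$. I will then apply the preceding lemma—crucially in its general sub-BCFW form, not only its BCFW specialization—to $(\Gamma^{(q-1)}, \Xi^{(q-1)})$ and the vertex $v_{\ell_q}$ at move $r \leq r_\star$, obtaining a reduced $r$-sub-BCFW sequence $\hat\Xi \leq_r \Xi^{(q-1)}$. Since $r < r_\star$, transitivity of the prefix-agreement relation yields $\hat\Xi \leq_r \Xi$, as desired.

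The hard part will be the tied case $r_\star = r$, which arises exactly when two distinct limits share the minimum move number $r$. Since move $r$ of the BCFW sequence $\Xi$ has $n_r \leq 4$ and hence at most two internal vertices, at most two of the $r_p$'s can equal $r$; this forces $n_r = 4$, so $r > 1$. To handle this, I will reorder once more so that both tied limits come last, and apply the induction hypothesis to the first $q - 2$ limits (whose move numbers are all strictly $> r$), producing $\Xi^{(q-2)} \leq_{r_\star'} \Xi$ with $r_\star' > r$. The two remaining limits then act on the still-unchanged $\mathrm{Arc}_{4, 2i_r}$ factor at move $r$. Iterating Corollaries~\ref{arc 3 boundary coro} and~\ref{arc 4 boundary coro}, whose order-independence is guaranteed by Corollary~\ref{order of limits coro}, collapses this factor to an $\mathrm{Arc}_{2, \ell'}$ for an appropriate $\ell'$, yielding a (possibly non-reduced) $r$-sub-BCFW sequence bounded by $\leq_r \Xi$. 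Finally, Proposition~\ref{prop:sequence reduction} converts this to a reduced arc-sequence $\hat\Xi$ while preserving the bound, completing the induction.
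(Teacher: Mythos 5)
Your plan follows the same overall strategy as the paper's proof: reorder the limits using Corollary~\ref{order of limits coro} so that the minimal move number is processed last, apply the preceding Lemma inductively, and conclude with transitivity of $\leq_r$. Concretely, the paper reorders so that the $\ell_p$'s (hence $r_p$'s) are monotone and applies the inductive hypothesis to the first $q$ limits to obtain $\Xi'\leq_r\Xi$ with $r=\min_{p\leq q}r_p$, then applies the Lemma once more to the final limit; you reorder so the minimum is last, apply the IH to the first $q-1$, and then apply the Lemma. These are the same argument up to re-indexing of the induction.

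The genuine difference is your explicit case split on $r_\star = r$ versus $r_\star > r$. Your concern is legitimate: after one vertex of the move-$r$ factor $\mathrm{Arc}_{4,2i_r}$ has been opened in the codimension-$\geq 2$ direction, the reduced intermediate sequence $\Xi'$ may have $n'_r = 2$, in which case the second vertex of move $r$ is no longer represented as a vertex of $\Xi'$, and the Lemma's hypotheses are not literally met. The paper's argument applies the Lemma uniformly without addressing this, whereas you notice it and propose to handle the two tied limits jointly. This is a reasonable and arguably more careful rendering of the same strategy.

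However, your tied-case handling is itself not fully worked out in one spot. You propose to "iterate Corollaries~\ref{arc 3 boundary coro} and~\ref{arc 4 boundary coro}" to collapse $\mathrm{Arc}_{4,2i_r}$ to $\mathrm{Arc}_{2,\ell'}$. But those corollaries describe \emph{single} vertex openings, and if the two limit directions are both the codimension-$\geq 2$ ones (namely $v^\omega_1\to 0$ and $v^\omega_2\to\infty$), then \emph{neither} single opening yields $\mathrm{Arc}_{3,\cdot}$: both yield $\mathrm{Arc}_{2,\cdot}$ immediately, and the second opening is on a vertex that has already been absorbed during reduction. In that sub-case, commuting via Corollary~\ref{order of limits coro} does not let you route through an $\mathrm{Arc}_{3,\cdot}$ intermediate, so the iteration as described is not a clean reading of the cited corollaries. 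You would need a separate computation of the joint boundary (or an argument that opening an absorbed vertex does not change the equivalence class, and then picking up the $\leq_r$ bound from $\Xi^{(q-2)}$ directly), followed as you say by Proposition~\ref{prop:sequence reduction}. With that gap filled, the proposal is a correct proof taking essentially the paper's route.
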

\begin{proof}

     By Corollary \(~\ref{order of limits coro}\) limit operations commute for \(OG\) graphs. This means that we can reorder the limits in the expression as we wish. We will reorder the limits such that we take the limits of angles corresponding to the vertices \(\{v_{\ell_q}\}_{p=1}^q\) with \(\ell_p\) in decreasing order. Since \(\ell_p\in\{2r_p-1,2r_p-2\}\) that means that \(r_p\) are in non-increasing order.

     We will prove the claim by induction on \(q\). The case of \(q=1\) reduces to our previous claim.
     For the induction step consider
     \(
    \lim_{v_{\ell_{q+1}}\rightarrow L_{q+1}} \Gamma^\prime \defeq \lim_{v_{\ell_{q+1}}\rightarrow L_{q+1}}\lim_{v_{\ell_{q}}\rightarrow L_{q}}... \lim_{v_{\ell_{1}}\rightarrow L_1}\Gamma,
    \)
    for \(q>1\).
    By the induction hypothesis  \(\Gamma^\prime\) has a representation  as a reduced \(r\)-sub-BCFW sequence \(\Xi^\prime\) and  \(\Xi^\prime \leq_r \Xi\) , where \(r = \min \{r_p\}_{p=1}^q \).
    As we reordered \(r_p\) to be non-increasing, we have that \(r_{q+1}\leq r\). Thus \(r_{q+1} = \min \{r_p\}_{p=1}^{q+1} \). By the previous Lemma we have a representation for \( \lim_{v_{\ell_{q+1}}\rightarrow L_{q+1}} \Gamma^\prime\) as a reduced \(r_{q+1}\) sub-BCFW arc-sequence \(\hat\Xi\) with \(\hat\Xi \leq_{r_{q+1}} \Xi\). This implies it is also \({r_{q+1}}\) sub-BCFW, finishing the proof.
    \end{proof}

    \subsection{Classifying Codimension One Boundaries}
    
In this section we will classify codimension $1$ boundaries of orthitroid cells in \(\OGnon{k}{2k}\). We start by seeing when boundaries of graphs correspond to codimension \(1\) boundary cells.

Consider a BCFW graph \(\Gamma\). By Proposition~\ref{prop:limits are boundaries} all the boundary cells of \(\Omega_\Gamma\) correspond to taking some series of limit operations on \(\Gamma\). By Theorem~\ref{param bij} we know that the dimension of \(\Omega_{\Gamma_0}\) is upper bounded by the number of internal vertices of \(\Gamma_0\), with equality if and only $\Gamma_0$ is reduced.

The limit operation can only reduce the number of internal vertices in a reduced graph. Thus, all codimension \(1\) boundaries of \(\Omega_\Gamma\) are of the form \(\Omega_{\Gamma_0}\) with \(\Gamma_0 = \lim_{v\rightarrow L}\Gamma\) for a \emph{single} internal vertex in some orientation, and \(L\) being one of the two possible limits (be it \(0\) or \(\frac{\pi}{2}\) if the vertex is with trigonometric orientation, and \(0\) or \(\infty\) if the vertex is with hyperbolic orientation). 

It is important to note that performing a limit operation on a non-reduced graph might not reduce the dimension of the corresponding cell. That is precisely because some limit operations on non-reduced graphs can result in a graph in the same equivalence class as the original, as equivalence moves~\ref{eqmove:1} and~\ref{eqmove:2} are equivalent to opening an internal vertex. However, preforming a limit operation on a reduced graph will always reduce the dimension.

What still requires a resolution is which limit operations reduce the dimension by exactly one. Since taking a limit of a single vertex removes that vertex alone from the graph, this question is equivalent to whether the resulting graph is reduced. To summarize, we have
\begin{prop}[\cite{companion}]
    \label{prop:codim 1 boundaries iff reduced}
    Let \(\Gamma\) be a reduced OG graph.  \(\Omega_{\Gamma_0}\) is a codimension \(1\) boundary cell of \(\Omega_\Gamma\) iff \(\Gamma_0 = \lim_{v\rightarrow L}\Gamma\) such that \(\Gamma_0\) is immediately reduced after the limit operation.
\end{prop}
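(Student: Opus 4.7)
The plan is a straightforward dimension count, leveraging the two facts already in play: (a) by Proposition~\ref{prop:limits are boundaries}, every boundary stratum of $\Omega_\Gamma$ is of the form $\Omega_{\Gamma'}$ with $\Gamma'$ obtained from $\Gamma$ by a finite sequence of limit operations; and (b) by Theorem~\ref{param bij}, for any OG graph $\Delta$ with $N$ internal vertices one has $\dim \Omega_\Delta \le N$, with equality precisely when $\Delta$ is reduced (equivalently, $\Omega_\Delta$ depends only on the equivalence class of $\Delta$, and the reduced representative of that class has dimension-many internal vertices).

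Let $n$ denote the number of internal vertices of $\Gamma$, so $\dim\Omega_\Gamma = n$ since $\Gamma$ is reduced. A single limit operation $\lim_{v\to L}$ strictly decreases the number of internal vertices by exactly one, since it replaces the chosen vertex by one of the two ``opened'' configurations from Definition~\ref{def:boundary graph}.

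For the forward direction, suppose $\Omega_{\Gamma_0}$ is a codimension $1$ boundary cell, so $\dim\Omega_{\Gamma_0}=n-1$. By (a), we may write $\Gamma_0 = \lim_{v_s\to L_s}\cdots\lim_{v_1\to L_1}\Gamma$ with $s\ge 1$; the resulting graph has exactly $n-s$ internal vertices. By (b), $\dim\Omega_{\Gamma_0}\le n-s$, so $s\le 1$, forcing $s=1$. Moreover, if the graph $\Gamma_0=\lim_{v\to L}\Gamma$ were not immediately reduced then $\Omega_{\Gamma_0}$ would coincide with the cell of its reduced representative, which has strictly fewer than $n-1$ internal vertices, again contradicting $\dim\Omega_{\Gamma_0}=n-1$. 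Hence $s=1$ and $\Gamma_0$ is immediately reduced, as required.

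For the backward direction, suppose $\Gamma_0=\lim_{v\to L}\Gamma$ is immediately reduced. Then $\Gamma_0$ has $n-1$ internal vertices, so by (b) we have $\dim\Omega_{\Gamma_0}=n-1 = \dim\Omega_\Gamma - 1$. By (a), $\Omega_{\Gamma_0}$ is a boundary stratum of $\Omega_\Gamma$, hence a codimension $1$ boundary cell.

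The only subtlety is bookkeeping the interaction between the two possible orientations (trigonometric vs.\ hyperbolic) at $v$, which is handled by the re-parametrization trick preceding Definition~\ref{def:boundary graph}: after switching to a trigonometric orientation via Proposition~\ref{prop:trigonometric existence}, every ``limit'' is a finite one-parameter boundary, and Proposition~\ref{prop:limits commute with para} ensures it corresponds to removing a single internal vertex in the graph. Thus the dimension-counting argument is insensitive to the orientation chosen on~$v$.
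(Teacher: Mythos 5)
Your proof is correct and follows essentially the same dimension-counting argument that the paper sketches in the paragraphs immediately preceding Proposition~\ref{prop:codim 1 boundaries iff reduced} (the paper itself cites \cite{companion} for a formal proof): a limit operation removes exactly one internal vertex, a reduced graph realizes its vertex count as the cell dimension, and a non-reduced limit drops to an equivalence class with strictly fewer vertices. Nothing is missing, and the concluding remark about trigonometric vs.\ hyperbolic orientations correctly addresses the one point that could look like a gap.
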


\begin{prop}[\cite{companion}]
\label{prop:closure of orthit}
    Closures of orthitroid cells are stratified by orthitroid cells, and are compact. The union of an orthitroid cell with one of its codimension \(1\) boundary cells form a manifold with a boundary.
\end{prop}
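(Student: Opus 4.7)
The plan is to prove the three assertions (stratification, compactness, and the manifold-with-boundary structure near a codimension-$1$ face) in turn, leaning heavily on the machinery already built up in this section.

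\textbf{Stratification.} Proposition~\ref{prop:limits are boundaries} already gives the decomposition $\overline{\Omega_\Gamma}=\Omega_\Gamma\sqcup\bigsqcup_{\Gamma'\in\partial\Gamma}\Omega_{\Gamma'}$ into orthitroid cells. To upgrade this disjoint decomposition to a stratification, I need the frontier condition $\overline{\Omega_{\Gamma'}}\subseteq\overline{\Omega_\Gamma}$ for each $\Gamma'\in\partial\Gamma$. This follows from Corollary~\ref{order of limits coro}: any boundary of $\Gamma'$ arises from a further sequence of limit operations which, composed with the limits producing $\Gamma'$ from $\Gamma$, gives an element of $\partial\Gamma$. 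Hence $\partial\Gamma'\subseteq\partial\Gamma$ as sets of equivalence classes, and applying Proposition~\ref{prop:limits are boundaries} once more yields the desired inclusion of closures.

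\textbf{Compactness.} Using Proposition~\ref{prop:trigonometric existence}, pick a trigonometric orientation $\omega$ of $\Gamma$, and let $\Gamma^\omega:[0,\tfrac{\pi}{2}]^n\to\overline{\Omega_\Gamma}$ be the continuous extension provided by Corollary~\ref{colosure continuous coro}. I would prove this map is surjective onto the closure: given $p\in\Omega_{\Gamma'}$ with $\Gamma'=\lim_{v_{i_1}\to L_1}\cdots\lim_{v_{i_q}\to L_q}\Gamma$, set the coordinates $\alpha_{i_j}=L_j\in\{0,\tfrac{\pi}{2}\}$ and let the remaining coordinates range over $(0,\tfrac{\pi}{2})$; iterating Proposition~\ref{prop:limits commute with para} identifies the restriction of $\Gamma^\omega$ to this face with the parameterization of $\Omega_{\Gamma'}$ (after reducing $\Gamma'$), so $p$ lies in the image. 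Since $[0,\tfrac{\pi}{2}]^n$ is compact and the ambient space $\OGnon{k}{2k}$ is Hausdorff, $\overline{\Omega_\Gamma}$ is compact.

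\textbf{Manifold with boundary.} Let $\Omega_{\Gamma_0}$ be a codimension-$1$ boundary cell. By Proposition~\ref{prop:codim 1 boundaries iff reduced}, $\Gamma_0=\lim_{v\to L}\Gamma$ for a single vertex $v$ in some orientation, and $\Gamma_0$ is already reduced. Fix a trigonometric orientation $\omega$ of $\Gamma$ in which $v$ is the last-indexed vertex $v_n$ (this is possible by Proposition~\ref{prop:trigonometric existence}, and by Proposition~\ref{prop:angles repara} the boundary $v_n\to L$ in the new orientation corresponds to the original boundary). Applying Proposition~\ref{prop:limits commute with para} directly furnishes an open neighborhood $U$ of $(0,\tfrac{\pi}{2})^{n-1}\times\bigl((0,\tfrac{\pi}{2})\cup\{L\}\bigr)$ and an extension $\Gamma^\omega:U\to\OG{k}{2k}$ that is a diffeomorphism onto its image. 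Restricting to the half-open box (say $(0,\tfrac{\pi}{2})^{n-1}\times[0,\tfrac{\pi}{2})$ if $L=0$), this diffeomorphism sends the interior onto $\Omega_\Gamma$ and the boundary face $(0,\tfrac{\pi}{2})^{n-1}\times\{L\}$ onto $\Omega_{\Gamma_0}$. The half-open box is a manifold with boundary, so $\Omega_\Gamma\cup\Omega_{\Gamma_0}$ inherits this structure, with $\Omega_{\Gamma_0}$ as the boundary component.

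\textbf{Main obstacle.} The principal delicate point is the surjectivity of $\Gamma^\omega$ onto the full closure in the compactness step: a boundary cell is produced by an iterated sequence of limits, and we must realize it as a single face of the cube. The tool that makes this work is Proposition~\ref{prop:limits commute with para}, but applying it iteratively requires that at each stage the intermediate graph obtained after partial limits admits a compatible trigonometric orientation—something ultimately guaranteed by Proposition~\ref{prop:trigonometric existence} and the naturality of the parameterization under passing to boundary graphs. Once surjectivity is established, the remainder of the proof is largely bookkeeping atop the results already in this section.
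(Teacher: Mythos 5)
This proposition is stated in the paper with a citation to \cite{companion}; the manuscript itself contains no proof, so there is nothing to compare your argument against. Evaluated on its own terms, your proof is essentially correct, with one part substantially over-engineered.

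Your treatment of the manifold-with-boundary assertion is the most substantive part and is right: reindex a trigonometric orientation so that the opened vertex is $v_n$, invoke the ``Furthermore'' clause of Proposition~\ref{prop:limits commute with para} (applicable because $\Gamma_0$ is reduced by Proposition~\ref{prop:codim 1 boundaries iff reduced}), and restrict the resulting diffeomorphism to a half-open box. The stratification argument is also sound, though phrased a bit imprecisely: the frontier inclusion $\overline{\Omega_{\Gamma'}}\subseteq\overline{\Omega_\Gamma}$ is automatic from topology; what actually closes the argument is that, applying Proposition~\ref{prop:limits are boundaries} to (a reduced representative of) each $\Gamma'\in\partial\Gamma$, the closure $\overline{\Omega_{\Gamma'}}$ is again a disjoint union of orthitroid cells, and since distinct orthitroid cells are disjoint and partition $\OGnon{k}{2k}$, those cells must be among the strata of $\overline{\Omega_\Gamma}$. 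Your citation of Corollary~\ref{order of limits coro} is not really what carries the weight here.

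The compactness step is where you have created an obstacle that does not exist. You flag the surjectivity of $\Gamma^\omega:[0,\tfrac{\pi}{2}]^n\to\overline{\Omega_\Gamma}$ as the ``principal delicate point,'' and your route through iterated applications of Proposition~\ref{prop:limits commute with para} runs into exactly the issue you sense: intermediate graphs need not be reduced, so the induced face parameterizations need not be bijections, and tracking this is fiddly. But none of it is needed. $\OGnon{k}{2k}$ is a closed subset of the compact Grassmannian $\Gr{k}{2k}$ (closed by the non-negativity and orthogonality conditions), hence compact; $\overline{\Omega_\Gamma}$ is a closed subset of a compact space, hence compact. Alternatively, if you want to stay with the parameterization picture, Corollary~\ref{colosure continuous coro} gives a continuous map from the compact cube into the Hausdorff space $\OGnon{k}{2k}$, so its image is compact hence closed; since the image contains $\Omega_\Gamma$, it contains $\overline{\Omega_\Gamma}$, and by continuity it is contained in $\overline{\Omega_\Gamma}$. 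Either way, surjectivity of the boundary faces onto individual boundary cells is a by-product, not a prerequisite, and the iterated-limits bookkeeping can be dropped entirely.
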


\subsubsection{External Boundaries}

Consider a (reduced) BCFW graph \(\Gamma\), with an internal vertex \(v\) that is adjacent to an external vertex. By Theorem~\ref{thm:BCFW are trees}, \(\Gamma\) is a tree of triangles. As \(v\) is part of a graph that is a tree of triangles, we get a configuration as in Figure~\ref{fig:ext int v}.
\begin{figure}[H]
    \centering
\begin{center}
\begin{tikzpicture}[scale = 0.8, every node/.style={scale=0.8}]]
\draw (0,0) circle (2cm);

\filldraw[lightgray] (-.9,1/2) circle (0.6cm);
\node[scale=2.5] (c) at (-.9,1/2)  {\(\Gamma_1\)};
\draw (-.9,1/2) circle (.6cm);

\draw (-1.38541, 0.147329) --(-1.97538, -0.312869);
\draw (-0.806139, 1.09261) --(-0.618034, 1.90211);
\filldraw[black] (-1.49522, 0.988483) circle (1.75pt);
\filldraw[black] (-1.14384, 1.30383) circle (1.75pt);
\filldraw[black] (-1.69938, 0.531408) circle (1.75pt);

\draw  (-0.475736, 0.0757359)--(0.415823, -1.9563);

\filldraw[lightgray] (.9,1/2) circle (0.6cm);
\node[scale=2.5] (c) at (.9,1/2)  {\(\Gamma_2\)};
\draw (.9,1/2) circle (.6cm);

\draw (1.38541, 0.147329) --(1.97538, -0.312869);
\draw (0.806139, 1.09261) --(0.618034, 1.90211);
\filldraw[black] (1.49522, 0.988483) circle (1.75pt);
\filldraw[black] (1.14384, 1.30383) circle (1.75pt);
\filldraw[black] (1.69938, 0.531408) circle (1.75pt);

\draw  (0.475736, 0.0757359)--(-0.415823, -1.9563);

\draw  (-0.67039, -0.0543277)--(0.67039, -0.0543277);
\draw (0, -0.8)node[anchor=north east]{\(v\)};

\end{tikzpicture}
\end{center}
    \caption{ the BCFW graph \(\Gamma\) with internal vertex \(v\) that is adjacent to an external vertex}
    \label{fig:ext int v}
\end{figure}
As we saw in the Section~\ref{sec:boundary cells}, 
 the two boundaries obtained by opening the vertex \(v\) correspond to graphs as in Figure~\ref{fig:opening ext int v}.
 \begin{figure}[H]
     \centering
              
     \begin{subfigure}{0.4\textwidth}
\begin{center}
\begin{tikzpicture}[scale = 0.8, every node/.style={scale=0.8}]]
\draw (0,0) circle (2cm);

\filldraw[lightgray] (-.9,1/2) circle (0.6cm);
\node[scale=2.5] (c) at (-.9,1/2)  {\(\Gamma_1\)};
\draw (-.9,1/2) circle (.6cm);

\draw (-1.38541, 0.147329) --(-1.97538, -0.312869);
\draw (-0.806139, 1.09261) --(-0.618034, 1.90211);
\filldraw[black] (-1.49522, 0.988483) circle (1.75pt);
\filldraw[black] (-1.14384, 1.30383) circle (1.75pt);
\filldraw[black] (-1.69938, 0.531408) circle (1.75pt);

\draw  (-0.415823, 0.145638)--(-0.415823, -1.9563);

\filldraw[lightgray] (.9,1/2) circle (0.6cm);
\node[scale=2.5] (c) at (.9,1/2)  {\(\Gamma_2\)};
\draw (.9,1/2) circle (.6cm);

\draw (1.38541, 0.147329) --(1.97538, -0.312869);
\draw (0.806139, 1.09261) --(0.618034, 1.90211);
\filldraw[black] (1.49522, 0.988483) circle (1.75pt);
\filldraw[black] (1.14384, 1.30383) circle (1.75pt);
\filldraw[black] (1.69938, 0.531408) circle (1.75pt);

\draw  (0.415823, 0.145638)--(0.415823, -1.9563);

\draw  (-0.67039, -0.0543277)--(0.67039, -0.0543277);

 \end{tikzpicture}
\end{center}
\caption{the boundary graph \(\Gamma_L\)}
     \label{fig:opening ext int v gl}
     \end{subfigure}
     \begin{subfigure}{0.4\textwidth}
\begin{center}
\begin{tikzpicture}[scale = 0.8, every node/.style={scale=0.8}]]

\draw (0+6,0) circle (2cm);

\filldraw[lightgray] (-.9+6,1/2) circle (0.6cm);
\node[scale=2.5] (c) at (-.9+6,1/2)  {\(\Gamma_1\)};
\draw (-.9+6,1/2) circle (.6cm);

\draw (-1.38541+6, 0.147329) --(-1.97538+6, -0.312869);
\draw (-0.806139+6, 1.09261) --(-0.618034+6, 1.90211);
\filldraw[black] (-1.49522+6, 0.988483) circle (1.75pt);
\filldraw[black] (-1.14384+6, 1.30383) circle (1.75pt);
\filldraw[black] (-1.69938+6, 0.531408) circle (1.75pt);

\filldraw[lightgray] (.9+6,1/2) circle (0.6cm);
\node[scale=2.5] (c) at (.9+6,1/2)  {\(\Gamma_2\)};
\draw (.9+6,1/2) circle (.6cm);

\draw (1.38541+6, 0.147329) --(1.97538+6, -0.312869);
\draw (0.806139+6, 1.09261) --(0.618034+6, 1.90211);
\filldraw[black] (1.49522+6, 0.988483) circle (1.75pt);
\filldraw[black] (1.14384+6, 1.30383) circle (1.75pt);
\filldraw[black] (1.69938+6, 0.531408) circle (1.75pt);

\draw  (-0.67039+6, -0.0543277)--(0.67039+6, -0.0543277);

\draw [black] plot [smooth, tension=1.2] coordinates {  (-0.415823+6, 0.145638)(6,-0.5) (0.415823+6, 0.145638)};
\draw [black] plot [smooth, tension=1.2] coordinates {  (-0.618034+6, -1.90211)(6,-1.4) (0.618034+6, -1.90211)};

\end{tikzpicture}
\end{center}
\caption{the boundary graph \(\Gamma_R\)}
     \label{fig:opening ext int v gr}
     \end{subfigure}
     \caption{the two ways of opening an internal vertex that is adjacent to an external vertex}
     \label{fig:opening ext int v}
 \end{figure}
Recall that by Proposition~\ref{prop:reduceability} a graph is reduced iff every arc does not cross itself and no pair of arcs crosses more than once. Since \(\Gamma\) is reduced, so is \(\Gamma_L\) as if two arcs cross in \(\Gamma_L\) they also cross in \(\Gamma\). On the other hand, \(\Gamma_R\) is clearly not. By Proposition~\ref{prop:codim 1 boundaries iff reduced}, this implies that \(\Gamma_L\) alone corresponds to a codimension \(1\) boundary. It is also clear that there is a single BCFW cells such that opening a single vertex would result in the graph \(\Gamma_L\). Thus,
\begin{obs}
\label{co-dime 1 external obs}
    \(\Omega_{\Gamma_L}\) is the only codimension \(1\) boundary of \(\Omega_\Gamma\) and \(\Omega_{\Gamma_R}\) is not.  \(\Omega_\Gamma\) is the only BCFW cell with \(\Omega_{\Gamma_L}\) as a boundary.
\end{obs}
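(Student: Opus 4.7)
The plan is to prove each clause by invoking Proposition~\ref{prop:codim 1 boundaries iff reduced}, which reduces everything to a reducedness check on the boundary graphs $\Gamma_L,\Gamma_R$ and a uniqueness argument about reversing the vertex-opening operation.

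First I would show $\Omega_{\Gamma_L}$ is a codimension $1$ boundary. Since $\Gamma$ is a BCFW graph it is a ToT by Theorem~\ref{thm:BCFW are trees}, hence reduced by Proposition~\ref{prop:ToT reduced}. Opening $v$ in the ``$L$ way'' turns each arc of $\Gamma$ passing through $v$ into an arc of $\Gamma_L$ with the same underlying trace; in particular, the set of crossings among arcs of $\Gamma_L$ is contained in the set of crossings among arcs of $\Gamma$. So by Proposition~\ref{prop:reduceability} (no arc crosses itself; no two arcs cross more than once), $\Gamma_L$ is reduced, and Proposition~\ref{prop:codim 1 boundaries iff reduced} then gives that $\Omega_{\Gamma_L}$ is a codimension $1$ boundary of $\Omega_\Gamma$. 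For $\Gamma_R$, opening in the opposite way turns the two arcs of $\Gamma$ that previously crossed at $v$ into two arcs in $\Gamma_R$ that cross twice in succession without any other crossings or endpoints separating them (the ``bigon'' visible in Figure~\ref{fig:opening ext int v gr}); this violates Proposition~\ref{prop:reduceability}, so $\Gamma_R$ is not reduced and hence $\Omega_{\Gamma_R}$ is not a codimension $1$ boundary by Proposition~\ref{prop:codim 1 boundaries iff reduced}.

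For the uniqueness claim, suppose $\Gamma'$ is a BCFW graph such that $\Omega_{\Gamma_L}$ is a codimension $1$ boundary of $\Omega_{\Gamma'}$. By Proposition~\ref{prop:codim 1 boundaries iff reduced} there exists a single internal vertex $v'$ of $\Gamma'$ and a limit $L'$ with $\lim_{v'\to L'}\Gamma' \equiv \Gamma_L$, and moreover $\Gamma_L$ is reduced after this opening. Reversing the opening means identifying two arcs of $\Gamma_L$ (the two ``parallel'' strands created by opening $v'$) and recombining them into a $4$-valent vertex. The key observation is that $\Gamma_L$ has a distinguished pair of strands — the two middle diagonal edges between the subgraphs $\Gamma_1$ and $\Gamma_2$ in Figure~\ref{fig:opening ext int v gl} — which is the only pair whose recombination yields a reduced graph whose triangle tree has a vertex between positions $2k$ and $1$ as required by Theorem~\ref{thm:BCFW are trees}. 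Any other recombination either fails to produce a ToT, fails to be reduced, or places the required triangle in the wrong location.

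To make the last step precise I would argue via the triangle tree correspondence of Proposition~\ref{prop:uniquness of trees}: a BCFW cell is uniquely specified by its triangle tree. The opening of the vertex $v$ in $\Gamma$ that produces $\Gamma_L$ corresponds, at the level of the triangle tree, to the unique way of attaching a triangle leaf to the triangle tree of $\Gamma_L$ in the position dictated by the bottom-most external edges; any other closure produces a tree of triangles violating the BCFW property (evenness of arc starting points and the triangle between $2k$ and $1$), or reproduces a non-BCFW graph. The main obstacle I anticipate is this final combinatorial uniqueness step, where one must rule out alternative closings; I expect this to follow cleanly from the explicit description of BCFW graphs in Proposition~\ref{prop:BCFW arcs} together with the restriction on the position of the base triangle from Theorem~\ref{thm:BCFW are trees}, but it is the one place where the argument is not immediate from earlier propositions.
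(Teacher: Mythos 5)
Your argument matches the paper's in all essentials: $\Gamma_L$ is reduced because its crossing structure is contained in that of $\Gamma$, $\Gamma_R$ fails reducedness by a double crossing, Proposition~\ref{prop:codim 1 boundaries iff reduced} then settles the first claim, and the uniqueness of the ``closing'' back to a BCFW graph is argued (as in the paper, which just calls it ``clear'') informally via the triangle-tree correspondence. The gap you flag at the end is real but is the same step the paper also leaves unspelled out, so your proposal is at the same level of rigor; one small inaccuracy worth noting is that opening $v$ does \emph{rewire} the two arcs through $v$ rather than preserving their ``underlying traces,'' and the reason no new double crossings appear in $\Gamma_L$ is that the newly spliced-on arc segments are single edges running directly to the boundary, so they cannot contribute crossings.
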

\subsubsection{Internal Boundaries}
 Consider a (reduced) BCFW graph \(\Gamma_+\), with an internal vertex \(v\) that is not adjacent to any external vertex. By  Theorem~\ref{thm:BCFW are trees}, \(\Gamma_+\) is a tree of triangles. It is easy to see we must have the configuration as in Figure~\ref{fig:int int v}.
\begin{figure}[H]
    \centering
 \begin{center}
            \begin{tikzpicture}[scale = 0.7, every node/.style={scale=0.7}]]

            \draw (0,0) circle (3cm);

            \filldraw[lightgray] (-1.25,1.25) circle (0.75cm);
            \draw (-1.25,1.25) circle (0.75cm);
            \node[scale=3] (c) at (-1.25,1.25)  {\(\Gamma_1\)};

            \filldraw[lightgray] (1.25,1.25) circle (0.75cm);
            \draw (1.25,1.25) circle (0.75cm);
            \node[scale=3] (c) at (1.25,1.25)  {\(\Gamma_2\)};

            \filldraw[lightgray] (1.25,-1.25) circle (0.75cm);
            \draw (1.25,-1.25) circle (0.75cm);
            \node[scale=3] (c) at (1.25,-1.25)  {\(\Gamma_3\)};

            \filldraw[lightgray] (-1.25,-1.25) circle (0.75cm);
            \draw (-1.25,-1.25) circle (0.75cm);
            \node[scale=3] (c) at (-1.25,-1.25)  {\(\Gamma_4\)};

            \draw (0.875, 0.600481)node[anchor = north]{\(a_2\)} -- (-0.875, -0.600481)node[anchor =south]{\(a_4\)};
            \draw (0.875, -0.600481)node[anchor = south]{\(a_3\)} -- (-0.875, 0.600481)node[anchor =north]{\(a_1\)};

            \draw ( 0.600481,0.875)node[anchor = east]{\(b_2\)} -- (0.600481,-0.875 )node[anchor = east]{\(b_3\)};
            \draw (-0.600481,0.875)node[anchor = west]{\(b_1\)} -- ( -0.600481,-0.875)node[anchor = west]{\(b_4\)};

            \draw (0,0)node[anchor = north]{\(v\)};

            \draw(1.89952, 0.875)--(2.94236, 0.585271);
            \draw(1.89952, -0.875)--(2.94236, -0.585271);
            \draw(-1.89952, 0.875)--(-2.94236, 0.585271);
            \draw(-1.89952, -0.875)--(-2.94236, -0.585271);
            \draw(0.875, 1.89952)--(0.585271, 2.94236);
            \draw(0.875, -1.89952)--(0.585271, -2.94236);
            \draw(-0.875, 1.89952)--(-0.585271, 2.94236);
            \draw(-0.875, -1.89952)--(-0.585271, -2.94236);

            \filldraw[black] (1.94454, 1.94454) circle (2pt);
            \filldraw[black] (2.21171, 1.54866) circle (2pt);
            \filldraw[black] (1.54866, 2.21171) circle (2pt);

            \filldraw[black] (1.94454, -1.94454) circle (2pt);
            \filldraw[black] (2.21171, -1.54866) circle (2pt);
            \filldraw[black] (1.54866, -2.21171) circle (2pt);

            \filldraw[black] (-1.94454, -1.94454) circle (2pt);
            \filldraw[black] (-2.21171, -1.54866) circle (2pt);
            \filldraw[black] (-1.54866, -2.21171) circle (2pt);

            \filldraw[black] (-1.94454, 1.94454) circle (2pt);
            \filldraw[black] (-2.21171, 1.54866) circle (2pt);
            \filldraw[black] (-1.54866, 2.21171) circle (2pt);

        \end{tikzpicture}
        \end{center}
    \caption{the BCFW graph \(\Gamma_+\) with internal vertex \(v\)}
    \label{fig:int int v}
\end{figure}

Recall that by Proposition~\ref{prop:reduceability} a graph is reduced iff every arc does not cross itself and no pair of arcs crosses more than once. we thus get that the sugraphs \(\Gamma_{1,2,3,4}\) are reduced as well. 
There are two ways of taking limits of the angle associated to \(v\) which correspond to the two different ways of opening the vertex \(v\) seen in Figure~\ref{fig:opening int int v} by Defintion~\ref{def opening the vertex}.

\begin{figure}[H]
    \centering
    \begin{subfigure}[t]{0.4\textwidth}
        \begin{center}
            \begin{tikzpicture}[scale = 0.8, every node/.style={scale=0.8}]]

            \draw (0,0) circle (3*0.8cm);

            \filldraw[lightgray] (-1.25*0.8,1.25*0.8) circle (0.75*0.8cm);
            \draw (-1.25*0.8,1.25*0.8) circle (0.75*0.8cm);
            \node[scale=3*0.8] (c) at (-1.25*0.8,1.25*0.8)  {\(\Gamma_1\)};

            \filldraw[lightgray] (1.25*0.8,1.25*0.8) circle (0.75*0.8cm);
            \draw (1.25*0.8,1.25*0.8) circle (0.75*0.8cm);
            \node[scale=3*0.8] (c) at (1.25*0.8,1.25*0.8)  {\(\Gamma_2\)};

            \filldraw[lightgray] (1.25*0.8,-1.25*0.8) circle (0.75*0.8cm);
            \draw (1.25*0.8,-1.25*0.8) circle (0.75*0.8cm);
            \node[scale=3*0.8] (c) at (1.25*0.8,-1.25*0.8)  {\(\Gamma_3\)};

            \filldraw[lightgray] (-1.25*0.8,-1.25*0.8) circle (0.75*0.8cm);
            \draw (-1.25*0.8,-1.25*0.8) circle (0.75*0.8cm);
            \node[scale=3*0.8] (c) at (-1.25*0.8,-1.25*0.8)  {\(\Gamma_4\)};

            \draw (0.875*0.8, -0.600481*0.8)node[anchor = south]{\(a_3\)} -- (-0.875*0.8, -0.600481*0.8)node[anchor =south]{\(a_4\)};
            \draw  (0.875*0.8, 0.600481*0.8)node[anchor = north]{\(a_2\)}-- (-0.875*0.8, 0.600481*0.8)node[anchor =north]{\(a_1\)};

            \draw ( 0.600481*0.8,0.875*0.8)node[anchor = east]{\(b_2\)} -- (0.600481*0.8,-0.875 *0.8)node[anchor = east]{\(b_3\)};
            \draw (-0.600481*0.8,0.875*0.8)node[anchor = west]{\(b_1\)} -- ( -0.600481*0.8,-0.875*0.8)node[anchor = west]{\(b_4\)};

            \draw(1.89952*0.8, 0.875*0.8)--(2.94236*0.8, 0.585271*0.8);
            \draw(1.89952*0.8, -0.875*0.8)--(2.94236*0.8, -0.585271*0.8);
            \draw(-1.89952*0.8, 0.875*0.8)--(-2.94236*0.8, 0.585271*0.8);
            \draw(-1.89952*0.8, -0.875*0.8)--(-2.94236*0.8, -0.585271*0.8);
            \draw(0.875*0.8, 1.89952*0.8)--(0.585271*0.8, 2.94236*0.8);
            \draw(0.875*0.8, -1.89952*0.8)--(0.585271*0.8, -2.94236*0.8);
            \draw(-0.875*0.8, 1.89952*0.8)--(-0.585271*0.8, 2.94236*0.8);
            \draw(-0.875*0.8, -1.89952*0.8)--(-0.585271*0.8, -2.94236*0.8);

            \filldraw[black] (1.94454*0.8, 1.94454*0.8) circle (2*0.8pt);
            \filldraw[black] (2.21171*0.8, 1.54866*0.8) circle (2*0.8pt);
            \filldraw[black] (1.54866*0.8, 2.21171*0.8) circle (2*0.8pt);

            \filldraw[black] (1.94454*0.8, -1.94454*0.8) circle (2*0.8pt);
            \filldraw[black] (2.21171*0.8, -1.54866*0.8) circle (2*0.8pt);
            \filldraw[black] (1.54866*0.8, -2.21171*0.8) circle (2*0.8pt);

            \filldraw[black] (-1.94454*0.8, -1.94454*0.8) circle (2*0.8pt);
            \filldraw[black] (-2.21171*0.8, -1.54866*0.8) circle (2*0.8pt);
            \filldraw[black] (-1.54866*0.8, -2.21171*0.8) circle (2*0.8pt);

            \filldraw[black] (-1.94454*0.8, 1.94454*0.8) circle (2*0.8pt);
            \filldraw[black] (-2.21171*0.8, 1.54866*0.8) circle (2*0.8pt);
            \filldraw[black] (-1.54866*0.8, 2.21171*0.8) circle (2*0.8pt);

         \end{tikzpicture}
        \end{center}
         \caption{the boundary graph \(\Gamma_0\)}
    \label{fig:opening int int v g0}        
    \end{subfigure}
    \begin{subfigure}[t]{0.4\textwidth}
            \begin{center}
            \begin{tikzpicture}[scale = 0.8, every node/.style={scale=0.8}]]

             \draw (0,0) circle (3*0.8cm);

            \filldraw[lightgray] (-1.25*0.8,1.25*0.8) circle (0.75*0.8cm);
            \draw (-1.25*0.8,1.25*0.8) circle (0.75*0.8cm);
            \node[scale=3*0.8] (c) at (-1.25*0.8,1.25*0.8)  {\(\Gamma_1\)};

            \filldraw[lightgray] (1.25*0.8,1.25*0.8) circle (0.75*0.8cm);
            \draw (1.25*0.8,1.25*0.8) circle (0.75*0.8cm);
            \node[scale=3*0.8] (c) at (1.25*0.8,1.25*0.8)  {\(\Gamma_2\)};

            \filldraw[lightgray] (1.25*0.8,-1.25*0.8) circle (0.75*0.8cm);
            \draw (1.25*0.8,-1.25*0.8) circle (0.75*0.8cm);
            \node[scale=3*0.8] (c) at (1.25*0.8,-1.25*0.8)  {\(\Gamma_3\)};

            \filldraw[lightgray] (-1.25*0.8,-1.25*0.8) circle (0.75*0.8cm);
            \draw (-1.25*0.8,-1.25*0.8) circle (0.75*0.8cm);
            \node[scale=3*0.8] (c) at (-1.25*0.8,-1.25*0.8)  {\(\Gamma_4\)};

            \draw (0.875*0.8, -0.600481*0.8)node[anchor = south]{\(a_3\)};
            \draw (-0.875*0.8, -0.600481*0.8)node[anchor =south]{\(a_4\)};
            \draw  (0.875*0.8, 0.600481*0.8)node[anchor = north]{\(a_2\)};
            \draw (-0.875*0.8, 0.600481*0.8)node[anchor =north]{\(a_1\)};

            \draw ( 0.600481*0.8,0.875*0.8)node[anchor = east]{\(b_2\)} -- (0.600481*0.8,-0.875*0.8 )node[anchor = east]{\(b_3\)};
            \draw (-0.600481*0.8,0.875*0.8)node[anchor = west]{\(b_1\)} -- ( -0.600481*0.8,-0.875*0.8)node[anchor = west]{\(b_4\)};

            \draw [black] plot [smooth, tension=1.5] coordinates { (0.875*0.8, -0.600481*0.8) (0.3*0.8,0) (0.875*0.8, 0.600481*0.8)};
            \draw [black] plot [smooth, tension=1.5] coordinates { (-0.875*0.8, -0.600481*0.8) (-0.3*0.8,0) (-0.875*0.8, 0.600481*0.8)};

            \draw(1.89952*0.8, 0.875*0.8)--(2.94236*0.8, 0.585271*0.8);
            \draw(1.89952*0.8, -0.875*0.8)--(2.94236*0.8, -0.585271*0.8);
            \draw(-1.89952*0.8, 0.875*0.8)--(-2.94236*0.8, 0.585271*0.8);
            \draw(-1.89952*0.8, -0.875*0.8)--(-2.94236*0.8, -0.585271*0.8);
            \draw(0.875*0.8, 1.89952*0.8)--(0.585271*0.8, 2.94236*0.8);
            \draw(0.875*0.8, -1.89952*0.8)--(0.585271*0.8, -2.94236*0.8);
            \draw(-0.875*0.8, 1.89952*0.8)--(-0.585271*0.8, 2.94236*0.8);
            \draw(-0.875*0.8, -1.89952*0.8)--(-0.585271*0.8, -2.94236*0.8);

            \filldraw[black] (1.94454*0.8, 1.94454*0.8) circle (2*0.8pt);
            \filldraw[black] (2.21171*0.8, 1.54866*0.8) circle (2*0.8pt);
            \filldraw[black] (1.54866*0.8, 2.21171*0.8) circle (2*0.8pt);

            \filldraw[black] (1.94454*0.8, -1.94454*0.8) circle (2*0.8pt);
            \filldraw[black] (2.21171*0.8, -1.54866*0.8) circle (2*0.8pt);
            \filldraw[black] (1.54866*0.8, -2.21171*0.8) circle (2*0.8pt);

            \filldraw[black] (-1.94454*0.8, -1.94454*0.8) circle (2*0.8pt);
            \filldraw[black] (-2.21171*0.8, -1.54866*0.8) circle (2*0.8pt);
            \filldraw[black] (-1.54866*0.8, -2.21171*0.8) circle (2*0.8pt);

            \filldraw[black] (-1.94454*0.8, 1.94454*0.8) circle (2*0.8pt);
            \filldraw[black] (-2.21171*0.8, 1.54866*0.8) circle (2*0.8pt);
            \filldraw[black] (-1.54866*0.8, 2.21171*0.8) circle (2*0.8pt);

        \end{tikzpicture}
        \end{center}
        \caption{the boundary graph \(\Gamma_R\)}
    \label{fig:opening int int v gr}        
    \end{subfigure}
    \caption{the two was of opening the vertex \(v\) in Figure~\ref{fig:int int v}}
    \label{fig:opening int int v}
\end{figure}

\begin{prop}
    \label{prop:one internal boundary is codim 1}
    \(\Gamma_R\) is not reduced and \(\Gamma_0\) is. In particular, \(\Omega_{\Gamma_R}\) is not a codimensions one boundary of \(\Omega_{\Gamma_+}\), while \(\Omega_{\Gamma_0}\) is.
\end{prop}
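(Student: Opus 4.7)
The plan is to apply Proposition~\ref{prop:codim 1 boundaries iff reduced}, which reduces the proposition to the purely combinatorial statements that $\Gamma_0$ is reduced and $\Gamma_R$ is not. For both, I would use the characterization of reducedness by arc crossings provided by Proposition~\ref{prop:reduceability}: a graph is reduced iff no arc crosses itself and no pair of arcs crosses more than once. The starting point in both cases is that $\Gamma_+$, being a BCFW graph, is reduced by Proposition~\ref{prop:ToT reduced}, and that both openings affect only the arcs passing through the vertex $v$: let $\alpha_1,\alpha_2$ be the two arcs of $\Gamma_+$ that meet (and, by reducedness of $\Gamma_+$, cross exactly once) at $v$. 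Each opening leaves all other arcs untouched and re-pairs the four half-arcs of $\alpha_1,\alpha_2$ in one of two possible ways, so the analysis is entirely local.

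For $\Gamma_0$, the re-pairing joins the top two half-arcs into a single new arc and the bottom two into another, thus eliminating the crossing at $v$ without producing any new crossing in the opened region. Hence a hypothetical self-crossing arc or a pair of arcs crossing twice in $\Gamma_0$ would, when re-interpreted in $\Gamma_+$ via the inverse re-pairing, produce a corresponding violation in $\Gamma_+$ (with the crossing at $v$ restored), contradicting the reducedness of $\Gamma_+$. So $\Gamma_0$ is reduced, and by Proposition~\ref{prop:codim 1 boundaries iff reduced} the cell $\Omega_{\Gamma_0}$ is a codimension one boundary of $\Omega_{\Gamma_+}$.

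For $\Gamma_R$, the opposite re-pairing is applied. The key structural input comes from Theorem~\ref{thm:BCFW are trees}: since $\Gamma_+$ is the medial graph of a triangle tree, the medial vertex $v$ lies in two medial triangles (one on each side), and each of these medial triangles contains a third edge joining the two other medial vertices of that triangle. In the notation of Figure~\ref{fig:int int v}, these are the edges $b_1b_4$ and $b_2b_3$, already joining the pair $\Gamma_1,\Gamma_4$ and the pair $\Gamma_2,\Gamma_3$ respectively. The opening to $\Gamma_R$ introduces a U-shaped edge on each side that joins exactly the same pairs, so on the left side one obtains a second path between $u_1$ (the endpoint of the $a_1$-edge in $\Gamma_1$) and $u_4$ (the endpoint of $a_4$ in $\Gamma_4$), running parallel to the $b_1b_4$ edge. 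I would then trace the two arcs traversing this local configuration: the pairing of opposite edges at the $4$-valent vertices $u_1$ and $u_4$ forces the arc running through the U to cross the arc running along $b_1b_4$ both at $u_1$ and at $u_4$, yielding a bigon and violating the crossing condition of Proposition~\ref{prop:reduceability}. Hence $\Gamma_R$ is not reduced, and $\Omega_{\Gamma_R}$ fails to be a codimension one boundary by Proposition~\ref{prop:codim 1 boundaries iff reduced}.

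The main obstacle is the non-reducedness part: one needs to identify the bigon explicitly, which requires a careful local combinatorial analysis of how the opening of $v$ interacts with the pairing of edges at the adjacent medial vertices $u_1$ and $u_4$ (and symmetrically $u_2,u_3$). The reducedness of $\Gamma_0$, by contrast, is a short contradiction argument since its opening only decreases the number of crossings.
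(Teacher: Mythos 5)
Your proposal follows essentially the same route as the paper: reduce to checking that $\Gamma_0$ is reduced and $\Gamma_R$ is not (Proposition~\ref{prop:codim 1 boundaries iff reduced}), then apply the arc-crossing criterion of Proposition~\ref{prop:reduceability}, observing a bigon in $\Gamma_R$ and arguing for $\Gamma_0$ by contradiction via $\Gamma_+$.

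Two points of imprecision, neither fatal, are worth correcting. For $\Gamma_R$, the double crossing is between the U-shaped $a$-arc and the straight $b$-arc at the two $4$-valent vertices in the \emph{opened middle region} where the U meets the $b$-line, not at the endpoints $u_1,u_4$ of the $a$-edges inside $\Gamma_1,\Gamma_4$ as you locate them; the paper just observes that ``the straight paths going from $b_2$ to $b_3$ and from $a_2$ to $a_3$ cross twice.'' For $\Gamma_0$, the parenthetical ``(with the crossing at $v$ restored)'' misdescribes the mechanism: when a hypothetical double crossing of $\Gamma_0$-arcs is traced back to $\Gamma_+$, the two crossings of the corresponding $\Gamma_+$-arcs are one at a middle crossing vertex (where an $a$-line meets a $b$-line) and one inside a subgraph $\Gamma_i$, and neither is at $v$. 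The paper spells out this case split explicitly: for instance, if the $\Gamma_0$-arcs continuing $a_1$ to $a_2$ and $b_2$ to $b_3$ crossed again inside $\Gamma_2$, then the $\Gamma_+$-arcs continuing $a_2$ to $a_4$ and $b_2$ to $b_3$ would cross twice, once at a middle vertex and once inside $\Gamma_2$. Tightening these vertex identifications would make your argument precise.
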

\begin{proof}
    It is straightforward to see that \(\Gamma_R\) is not reduced, since the straight paths going from \(b_2\) to \(b_3\) and from \(a_2\) to \(a_3\) cross twice. 
    
    We now argue that \(\Gamma_0\) is reduced.     
    By Proposition~\ref{prop:reduceability} a graph is reduced iff every arc does not cross itself, and no pair of arcs crosses more than once. Thus it is easy to see that if \(\Gamma_+\) is reduced the subgraphs \(\Gamma_{1},\ldots,\Gamma_{4}\) are also reduced.

    First as the straight paths in the middle do not cross themselves and go from one subgraph to another, it is clear that no arc in \(\Gamma_0\) crosses itself. Consider now a pair of arcs in \(\Gamma_0\), \(\tau_\ell\) and \(\tau_r\). If one is contained in one of the subgraphs \(\Gamma_1,\ldots,\Gamma_4\), then as they are all reduced, they cross each other at most once. If none is contained in a subgraph, then they correspond to the continuation of two of the straight paths in the middle.

    Without loss of generality, \(\tau_\ell\) is the continuation of the path from \(a_1\) to \(a_2\). Now, if  \(\tau_r\) is the continuation of the path from \(a_3\) to \(a_4\), they clearly do not cross at all. If however \(\tau_r\) is the continuation of the path from \(b_2\) to \(b_3\), they do cross at least once in the middle, and might cross again inside \(\Gamma_2\). We argue they cannot cross inside \(\Gamma_2\). 

    Indeed, if they cross in \(\Gamma_2\), this means the arcs \(\tau^\prime_{a_2}\) and \(\tau^\prime_{b_2}\) of \(\Gamma_2\) cross inside \(\Gamma_2\). However, this means in \(\Gamma_+\), before opening the vertex \(v\), the straight paths continuing the paths form \(a_2\) to \(a_4\) and form \(b_2\) to \(b_3\) cross each other twice as well. This is impossible as the original graph was reduced.

    A similar argument works for when \(\tau_r\) is the continuation of the path from \(b_1\) to \(b_4\). We thus conclude that no pair of arcs more then once in \(\Gamma_0\), hence it is reduced.

    The 'In particular' part follows from Proposition~\ref{prop:codim 1 boundaries iff reduced}
\end{proof}

Consider again \(\Gamma_+\) (Figure~\ref{fig:int int v}) and \(\Gamma_0\) (Figure~\ref{fig:opening int int v g0}), and notice what the opening of the vertex did to the tree of triangles. To triangles contracted into a square. It is easy to see that precisely two ToT graphs result in \(\Gamma_0\) after opening of a single vertex (see Figure~\ref{fig:two sides}).
\begin{figure}[H]
    \centering
    \begin{subfigure}{0.4\textwidth}
\begin{center}
            \begin{tikzpicture}[scale = 0.7, every node/.style={scale=0.7}]]

            \draw (0,0) circle (3*0.8cm);

            \filldraw[lightgray] (-1.25*0.8,1.25*0.8) circle (0.75*0.8cm);
            \draw (-1.25*0.8,1.25*0.8) circle (0.75*0.8cm);
            \node[scale=3*0.8] (c) at (-1.25*0.8,1.25*0.8)  {\(\Gamma_1\)};

            \filldraw[lightgray] (1.25*0.8,1.25*0.8) circle (0.75*0.8cm);
            \draw (1.25*0.8,1.25*0.8) circle (0.75*0.8cm);
            \node[scale=3*0.8] (c) at (1.25*0.8,1.25*0.8)  {\(\Gamma_2\)};

            \filldraw[lightgray] (1.25*0.8,-1.25*0.8) circle (0.75*0.8cm);
            \draw (1.25*0.8,-1.25*0.8) circle (0.75*0.8cm);
            \node[scale=3*0.8] (c) at (1.25*0.8,-1.25*0.8)  {\(\Gamma_3\)};

            \filldraw[lightgray] (-1.25*0.8,-1.25*0.8) circle (0.75*0.8cm);
            \draw (-1.25*0.8,-1.25*0.8) circle (0.75*0.8cm);
            \node[scale=3*0.8] (c) at (-1.25*0.8,-1.25*0.8)  {\(\Gamma_4\)};

       );

            \draw (0.875*0.8, 0.600481*0.8)node[anchor = north]{\(a_2\)} -- (-0.875*0.8, -0.600481*0.8)node[anchor =south]{\(a_4\)};
            \draw (0.875*0.8, -0.600481*0.8)node[anchor = south]{\(a_3\)} -- (-0.875*0.8, 0.600481*0.8)node[anchor =north]{\(a_1\)};

            \draw ( 0.600481*0.8,0.875*0.8)node[anchor = east]{\(b_2\)} -- (0.600481*0.8,-0.875*0.8 )node[anchor = east]{\(b_3\)};
            \draw (-0.600481*0.8,0.875*0.8)node[anchor = west]{\(b_1\)} -- ( -0.600481*0.8,-0.875*0.8)node[anchor = west]{\(b_4\)};

            \draw(1.89952*0.8, 0.875*0.8)--(2.94236*0.8, 0.585271*0.8);
            \draw(1.89952*0.8, -0.875*0.8)--(2.94236*0.8, -0.585271*0.8);
            \draw(-1.89952*0.8, 0.875*0.8)--(-2.94236*0.8, 0.585271*0.8);
            \draw(-1.89952*0.8, -0.875*0.8)--(-2.94236*0.8, -0.585271*0.8);
            \draw(0.875*0.8, 1.89952*0.8)--(0.585271*0.8, 2.94236*0.8);
            \draw(0.875*0.8, -1.89952*0.8)--(0.585271*0.8, -2.94236*0.8);
            \draw(-0.875*0.8, 1.89952*0.8)--(-0.585271*0.8, 2.94236*0.8);
            \draw(-0.875*0.8, -1.89952*0.8)--(-0.585271*0.8, -2.94236*0.8);

            \filldraw[black] (1.94454*0.8, 1.94454*0.8) circle (2*0.8pt);
            \filldraw[black] (2.21171*0.8, 1.54866*0.8) circle (2*0.8pt);
            \filldraw[black] (1.54866*0.8, 2.21171*0.8) circle (2*0.8pt);

            \filldraw[black] (1.94454*0.8, -1.94454*0.8) circle (2*0.8pt);
            \filldraw[black] (2.21171*0.8, -1.54866*0.8) circle (2*0.8pt);
            \filldraw[black] (1.54866*0.8, -2.21171*0.8) circle (2*0.8pt);

            \filldraw[black] (-1.94454*0.8, -1.94454*0.8) circle (2*0.8pt);
            \filldraw[black] (-2.21171*0.8, -1.54866*0.8) circle (2*0.8pt);
            \filldraw[black] (-1.54866*0.8, -2.21171*0.8) circle (2*0.8pt);

            \filldraw[black] (-1.94454*0.8, 1.94454*0.8) circle (2*0.8pt);
            \filldraw[black] (-2.21171*0.8, 1.54866*0.8) circle (2*0.8pt);
            \filldraw[black] (-1.54866*0.8, 2.21171*0.8) circle (2*0.8pt);

                    \end{tikzpicture}
        \end{center}
                    \caption{\(\Gamma_+\) again}
    \label{fig:two sides g+}
    \end{subfigure}
    \begin{subfigure}{0.4\textwidth}
          \begin{center}
            \begin{tikzpicture}[scale = 0.7, every node/.style={scale=0.7}]]

             \draw (0+6,0) circle (3*0.8cm);

            \filldraw[lightgray] (-1.25*0.8+6,1.25*0.8) circle (0.75*0.8cm);
            \draw (-1.25*0.8+6,1.25*0.8) circle (0.75*0.8cm);
            \node[scale=3*0.8] (c) at (-1.25*0.8+6,1.25*0.8)  {\(\Gamma_1\)};

            \filldraw[lightgray] (1.25*0.8+6,1.25*0.8) circle (0.75*0.8cm);
            \draw (1.25*0.8+6,1.25*0.8) circle (0.75*0.8cm);
            \node[scale=3*0.8] (c) at (1.25*0.8+6,1.25*0.8)  {\(\Gamma_2\)};

            \filldraw[lightgray] (1.25*0.8+6,-1.25*0.8) circle (0.75*0.8cm);
            \draw (1.25*0.8+6,-1.25*0.8) circle (0.75*0.8cm);
            \node[scale=3*0.8] (c) at (1.25*0.8+6,-1.25*0.8)  {\(\Gamma_3\)};

            \filldraw[lightgray] (-1.25*0.8+6,-1.25*0.8) circle (0.75*0.8cm);
            \draw (-1.25*0.8+6,-1.25*0.8) circle (0.75*0.8cm);
            \node[scale=3*0.8] (c) at (-1.25*0.8+6,-1.25*0.8)  {\(\Gamma_4\)};

            \draw (0.875*0.8+6, 0.600481*0.8)node[anchor = north]{\(a_2\)} --(-0.875*0.8+6, 0.600481*0.8)node[anchor =north]{\(a_1\)} ;
            \draw (0.875*0.8+6, -0.600481*0.8)node[anchor = south]{\(a_3\)} -- (-0.875*0.8+6, -0.600481*0.8)node[anchor =south]{\(a_4\)};

            \draw  ( -0.600481*0.8+6,-0.875*0.8)node[anchor = west]{\(b_4\)}-- ( 0.600481*0.8+6,0.875*0.8)node[anchor = east]{\(b_2\)};
            \draw (-0.600481*0.8+6,0.875*0.8)node[anchor = west]{\(b_1\)} --(0.600481*0.8+6,-0.875*0.8 )node[anchor = east]{\(b_3\)} ;

            \draw(1.89952*0.8+6, 0.875*0.8)--(2.94236*0.8+6, 0.585271*0.8);
            \draw(1.89952*0.8+6, -0.875*0.8)--(2.94236*0.8+6, -0.585271*0.8);
            \draw(-1.89952*0.8+6, 0.875*0.8)--(-2.94236*0.8+6, 0.585271*0.8);
            \draw(-1.89952*0.8+6, -0.875*0.8)--(-2.94236*0.8+6, -0.585271*0.8);
            \draw(0.875*0.8+6, 1.89952*0.8)--(0.585271*0.8+6, 2.94236*0.8);
            \draw(0.875*0.8+6, -1.89952*0.8)--(0.585271*0.8+6, -2.94236*0.8);
            \draw(-0.875*0.8+6, 1.89952*0.8)--(-0.585271*0.8+6, 2.94236*0.8);
            \draw(-0.875*0.8+6, -1.89952*0.8)--(-0.585271*0.8+6, -2.94236*0.8);

            \filldraw[black] (1.94454*0.8+6, 1.94454*0.8) circle (2*0.8pt);
            \filldraw[black] (2.21171*0.8+6, 1.54866*0.8) circle (2*0.8pt);
            \filldraw[black] (1.54866*0.8+6, 2.21171*0.8) circle (2*0.8pt);

            \filldraw[black] (1.94454*0.8+6, -1.94454*0.8) circle (2*0.8pt);
            \filldraw[black] (2.21171*0.8+6, -1.54866*0.8) circle (2*0.8pt);
            \filldraw[black] (1.54866*0.8+6, -2.21171*0.8) circle (2*0.8pt);

            \filldraw[black] (-1.94454*0.8+6, -1.94454*0.8) circle (2*0.8pt);
            \filldraw[black] (-2.21171*0.8+6, -1.54866*0.8) circle (2*0.8pt);
            \filldraw[black] (-1.54866*0.8+6, -2.21171*0.8) circle (2*0.8pt);

            \filldraw[black] (-1.94454*0.8+6, 1.94454*0.8) circle (2*0.8pt);
            \filldraw[black] (-2.21171*0.8+6, 1.54866*0.8) circle (2*0.8pt);
            \filldraw[black] (-1.54866*0.8+6, 2.21171*0.8) circle (2*0.8pt);

        \end{tikzpicture}
        \end{center}
            \caption{\(\Gamma_-\), the other side of \(\Gamma_0\)}
    \label{fig:two sides g-}
            \end{subfigure}
    \caption{two bordering BCFW cells}
    \label{fig:two sides}
\end{figure}
Clearly both are ToT graphs, and since their triangle leaves are the same we have that \(\Gamma_-\) has a triangle between the \(1\) and \(2k\) external vertices as well. Thus both are BCFW graphs by Theorem~\ref{thm:BCFW are trees}. As they are clearly different, we have by Proposition~\ref{prop:uniquness of trees} that they are not equivalent. Thus those two graphs correspond to two different BCFW cells, \(\Omega_{\Gamma_+}\) and \(\Omega_{\Gamma^\prime}\) sharing the common codimension \(1\) boundary cell \(\Omega_{\Gamma_0}\).

\begin{coro}
    \label{the two sides coro}
    Orthitroid cells of the form \(\Omega_{\Gamma_0}\) are codimension \(1\) boundaries of exactly two BCFW cells of the form of  \(\Omega_{\Gamma_+}\) and \(\Omega_{\Gamma_-}\), and it is the only codimension \(1\) boundary cell they share.
\end{coro}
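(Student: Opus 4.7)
My plan is to prove the corollary in two stages matching its two assertions: first, that $\Omega_{\Gamma_+}$ and $\Omega_{\Gamma_-}$ exhaust the BCFW cells having $\Omega_{\Gamma_0}$ as a codimension-$1$ boundary; and second, that $\Omega_{\Gamma_0}$ is the unique codim-$1$ cell they share. The central technique throughout is to work in the triangle-tree (medial-graph) picture from Definition~\ref{def:ToT} and Theorem~\ref{thm:BCFW are trees}, so that opening an internal vertex of a ToT graph corresponds combinatorially to contracting one internal edge of the triangle tree, thereby merging two triangular leaves into a single square (a $4$-valent disk-graph vertex).

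For the first stage, suppose $\Gamma'$ is a BCFW graph with $\Omega_{\Gamma_0}$ as a codim-$1$ boundary of $\Omega_{\Gamma'}$. By Proposition~\ref{prop:codim 1 boundaries iff reduced}, $\Gamma_0$ is equivalent to $\lim_{v\to L}\Gamma'$ for some single internal vertex $v$, with the result immediately reduced. If $v$ were adjacent to an external vertex, the reduced limit would be of the $\Gamma_L$-type shape in Figure~\ref{fig:opening ext int v gl}, whose combinatorial type differs from $\Gamma_0$ (no interior $4$-valent medial vertex), so $v$ must be an internal-internal vertex. Then $\Delta(\Gamma')$ contracts to $\Delta(\Gamma_0)$ by collapsing the tree edge corresponding to $v$; inverting this contraction amounts to resolving the central square of $\Delta(\Gamma_0)$ by one of its two diagonals, producing exactly $\Gamma_+$ or $\Gamma_-$. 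Both are ToT graphs whose leaf structure — in particular, the triangle between external half-edges $1$ and $2k$ — is inherited from $\Gamma_0$, hence they are BCFW by Theorem~\ref{thm:BCFW are trees}, and they are inequivalent as distinct disk graphs by Proposition~\ref{prop:uniquness of trees}. That each really does have $\Omega_{\Gamma_0}$ as codim-$1$ boundary follows from Proposition~\ref{prop:one internal boundary is codim 1} applied to each (the argument for $\Gamma_-$ is identical by the diagonal-flip symmetry).

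For the second stage, suppose $\Omega_{\Gamma'}$ is any codim-$1$ boundary common to $\Omega_{\Gamma_+}$ and $\Omega_{\Gamma_-}$. Running the above argument for each parent, $\Delta(\Gamma')$ is obtained by contracting a single internal edge in both $\Delta(\Gamma_+)$ and $\Delta(\Gamma_-)$. These two triangle trees agree outside the central square $\Gamma_1\Gamma_2\Gamma_3\Gamma_4$ and differ only in their choice of diagonal. If the edge contracted in $\Delta(\Gamma_+)$ were any edge other than its diagonal, that diagonal would remain in $\Delta(\Gamma')$; symmetrically the other diagonal would remain in $\Delta(\Gamma')$ from the $\Gamma_-$ side, forcing $\Delta(\Gamma')$ to simultaneously contain both diagonals of the square — impossible. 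Hence both contractions must collapse the respective diagonal, and therefore $\Gamma' = \Gamma_0$. The remaining case where one of the relevant vertices is external-adjacent is ruled out by Observation~\ref{co-dime 1 external obs}, since such a boundary has a unique BCFW parent and thus cannot be common to two distinct BCFW cells.

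The main obstacle I expect is the bookkeeping in the second stage: one must rule out the possibility that contractions of \emph{different} edges in $\Delta(\Gamma_+)$ and $\Delta(\Gamma_-)$ produce only equivalent (rather than literally identical) disk graphs. This is controlled precisely by Proposition~\ref{prop:uniquness of trees}, which pins down the disk graph of a ToT graph uniquely up to equivalence and so reduces the question to the purely combinatorial observation about the two diagonals above. The rest of the argument is structural, using only Proposition~\ref{prop:codim 1 boundaries iff reduced} and the ToT characterization of BCFW graphs.
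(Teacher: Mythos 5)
Your proposal takes essentially the same route as the paper: both arguments hinge on the triangle‐tree (disk‐graph) picture, the observation that opening an internal‐internal vertex contracts a tree edge and merges two triangles into a square, and the combinatorial fact that a planar $4$-valent vertex can be resolved into a pair of $3$-valent vertices joined by an edge in exactly two ways. The paper treats both halves of the statement quite informally ("It is easy to see that precisely two ToT graphs result in $\Gamma_0$..."), so your more careful bookkeeping — especially for the "only shared boundary" claim — fills in genuine detail that the paper elides.

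Two small points worth tightening. First, the phrase "no interior $4$-valent medial vertex" is a misnomer: every internal vertex of an OG (medial) graph is $4$-valent. What you mean is that $\Delta(\Gamma_L)$ has no interior $4$-valent \emph{disk-graph} vertex whereas $\Delta(\Gamma_0)$ does; restating this in terms of the triangle tree $\Delta(\cdot)$ rather than the medial graph removes the confusion and aligns with the paper's language. Second, your appeal to Proposition~\ref{prop:uniquness of trees} to control $\Delta(\Gamma')$ in the second stage is a slight overreach: that proposition establishes a bijection between \emph{BCFW} graphs and their disk graphs, whereas $\Gamma'$ is a codimension-$1$ boundary, hence not BCFW (its disk graph has a $4$-valent vertex and so is not a ToT tree in the strict sense of Definition~\ref{def:ToT}). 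What actually makes the argument work is the remark in Definition~\ref{def:ToT} that a disk graph and its planar dual yield the same medial graph but at most one of them is a tree; since both $\Delta(\Gamma_+)/e_+$ and $\Delta(\Gamma_-)/e_-$ are trees, identifying the associated orthitroid cells forces the trees themselves to coincide, and from there your diagonal argument applies verbatim. With that substitution the proof is airtight and does indeed reduce the whole statement to the purely combinatorial observation about the two diagonals.
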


\begin{coro}
\label{coro: codim 1 boudnary classification}
    Let \(\Gamma\) be a BCFW graph with oriented internal vertices \(v_i\). The codimension \(1\) boundary cells of  \(\Omega_\Gamma\) are of the form \(\Omega_{\lim_{v_i\rightarrow L}\Gamma}\), for \(L\in\{0,\frac{\pi}{2}\}\) if \(v_i\) is with trigonometric orientation, or \(L\in\{0,\infty\}\) if \(v_i\) is with hyperbolic orientation. For each internal vertex \(v_i\), one of those limits would result in a codimension \(1\) boundary, and the other in a boundary of a  codimension higher than one. In fact, we have a bijection between codimension \(1\) boundary cells \(\Omega_{\lim_{v_i\rightarrow L}\Gamma}\) and internal boundary vertices \(v_i\).

    The codimension \(1\) boundary cells \(\Omega_\Gamma\) are divided into two types which we term \emph{external} and \emph{internal} boundaries.
    \begin{itemize}
        \item \textbf{External Boundaries:} Those correspond to \(v_i\) that are adjacent to an external vertex. Those boundary cells are not boundaries of any other BCFW cell. They correspond to opening of a vertex as seen in Figure~\ref{fig:opening ext int v gl} or its mirror image. 
        
        \item \textbf{Internal Boundaries:} Those correspond to \(v_i\) that are not adjacent to an external vertex. They correspond to opening of a vertex as seen in Figure 6. Those boundary cells are boundaries of precisely two BCFW cells, corresponding to BCFW graphs \(\Gamma_+\), and \(\Gamma_-\) as seen in Figure~\ref{fig:two sides}.  
    \end{itemize}
\end{coro}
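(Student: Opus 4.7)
The plan is to assemble the classification directly from the structural results of this subsection, combining them with the general theory of boundaries developed in Sections~\ref{sec:boundary cells}. I would organize the argument around a single internal vertex at a time, split into the external/internal dichotomy.

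First, I would invoke Proposition~\ref{prop:limits are boundaries} to write every boundary stratum of $\Omega_\Gamma$ as $\Omega_{\Gamma'}$ for some $\Gamma'$ obtained from $\Gamma$ by a sequence of limit operations. Since the codimension of a boundary stratum equals the number of internal vertices removed (each limit removes exactly one vertex, and a reduced graph has $\dim = $ number of internal vertices by Theorem~\ref{param bij}), codimension $1$ boundaries must come from opening exactly one vertex $v_i$, with the resulting graph required to be reduced by Proposition~\ref{prop:codim 1 boundaries iff reduced}. This reduces the problem to analyzing, for each internal vertex $v_i$ of $\Gamma$, which of its two limits $L\in\{0,\tfrac{\pi}{2}\}$ (resp.\ $\{0,\infty\}$) produces a reduced graph.

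Next, I would split into the two structural cases already treated above. In the external case (where $v_i$ is adjacent to an external vertex of $\Gamma$), Observation~\ref{co-dime 1 external obs} directly supplies the conclusion: exactly one of the two possible openings of $v_i$ yields a reduced graph $\Gamma_L$, and $\Omega_{\Gamma_L}$ is a codimension $1$ boundary of $\Omega_\Gamma$ alone among BCFW cells. In the internal case (where $v_i$ is not adjacent to an external vertex), since $\Gamma$ is a tree of triangles by Theorem~\ref{thm:BCFW are trees}, the local picture around $v_i$ is forced to be the one of Figure~\ref{fig:int int v}, and Proposition~\ref{prop:one internal boundary is codim 1} then singles out $\Gamma_0$ as the unique reduced opening. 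Corollary~\ref{the two sides coro} identifies $\Omega_{\Gamma_0}$ as a codimension $1$ boundary of exactly two BCFW cells.

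Finally, I would establish the bijection between internal vertices and codimension $1$ boundaries. Injectivity follows because the internal vertex removed is recoverable from $\Gamma'$ (it is the unique vertex that was opened), and surjectivity is exactly what the two cases above give. I would conclude by observing that the labelling into external vs.\ internal boundaries is then an immediate consequence of which case the vertex $v_i$ falls into, matching the descriptions in Figures~\ref{fig:opening ext int v gl} and~\ref{fig:opening int int v g0} respectively.

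I do not expect a serious obstacle here, since the corollary is largely a bookkeeping step: every ingredient (reducedness criterion, dimension counting, both opening lemmas, the two-sidedness of internal boundaries) has already been proved earlier in this section. The only subtlety worth stating carefully is the choice of $L$ for non-trigonometric orientations, which is handled by the reparametrization discussion following Proposition~\ref{prop:trigonometric existence} and by Proposition~\ref{prop:angles repara}, so that ``$v^\omega\to L$'' is unambiguous regardless of which perfect orientation was fixed.
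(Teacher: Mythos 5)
Your proposal is correct and takes essentially the same route as the paper: reduce to single-vertex openings via Proposition~\ref{prop:codim 1 boundaries iff reduced}, then split on whether $v_i$ is adjacent to an external vertex and invoke Observation~\ref{co-dime 1 external obs} for the external case and Proposition~\ref{prop:one internal boundary is codim 1} together with Corollary~\ref{the two sides coro} for the internal case. Your write-up merely makes explicit a few steps the paper's short proof leaves implicit (the dimension count and the reparametrization subtlety), but the decomposition and the key lemmas are identical.
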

\begin{proof}
 Proposition~\ref{prop:codim 1 boundaries iff reduced} shows that codimension \(1\) boundary cells of  \(\Omega_\Gamma\) have the form \(\Omega_{\lim_{\alpha_i\rightarrow L}\Gamma}\). By Proposition~\ref{prop:one internal boundary is codim 1} and Observation~\ref{co-dime 1 external obs}, there is a bijection between internal vertices and codimension \(1\) boundary strata. The rest of the corollary follows from those claims together with Corollary~\ref{the two sides coro}.
 \end{proof}
We call them external and internal boundaries because the external ones map to boundaries of the amplituhedron and the internal ones map to the interior.
\begin{dfn}
\label{def:graph vert bound}
    For \(\Gamma\) a BCFW, and \(v\) an internal vertex, let \(\partial_v\Gamma\) denote the codimension \(1\) boundary graph corresponding the the vertex \(v\).
\end{dfn}
It is useful to note that the codimension \(1\) boundaries are well-behaved under the \(\mathrm {Arc}\) move:
\begin{obs}
\label{obs:arc of codim 1 is codim 1}
      If \(\widehat  \Gamma\) and \(\Gamma=\mathrm {Arc}_{4,r} \widehat \Gamma\) are reduced graphs, \(\widehat v\) is an internal vertex of \(\widehat \Gamma\), and \(v =\mathrm {Arc}_{4,r}(\widehat v)\) is the corresponding vertex in \(\Gamma\), then \(\partial_v \Gamma = \mathrm {Arc}_{4,r} (\partial_{\widehat v} \widehat \Gamma) \).
\end{obs}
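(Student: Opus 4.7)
The plan is to derive this commutation from two ingredients already established in the paper: the fact that the $\mathrm{Arc}_{4,r}$ move commutes with limit operations on OG graphs (Corollary~\ref{limit comute for graphs coro}), and the characterization of the codimension one boundary at an internal vertex as the unique opening direction yielding a reduced graph (Corollary~\ref{coro: codim 1 boudnary classification} together with Proposition~\ref{prop:codim 1 boundaries iff reduced}).

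Concretely, for every admissible opening direction $L$ at $v$, which corresponds under $\mathrm{Arc}_{4,r}$ to the analogous direction at $\widehat v$, Corollary~\ref{limit comute for graphs coro} gives
\[
\lim_{v\to L}\Gamma \;=\; \lim_{v\to L}\mathrm{Arc}_{4,r}\widehat\Gamma \;=\; \mathrm{Arc}_{4,r}\bigl(\lim_{\widehat v\to L}\widehat\Gamma\bigr).
\]
Let $L_0$ denote the unique direction for which $\lim_{v\to L_0}\Gamma$ is reduced; by Proposition~\ref{prop:codim 1 boundaries iff reduced} this limit is exactly $\partial_v\Gamma$. Substituting into the displayed identity reduces the desired equality to the assertion that $\lim_{\widehat v\to L_0}\widehat\Gamma = \partial_{\widehat v}\widehat\Gamma$, which, by the same characterization applied this time to $\widehat\Gamma$, is equivalent to showing that $\lim_{\widehat v\to L_0}\widehat\Gamma$ is itself a reduced graph.

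For this last step I would invoke the monotonicity principle that $\mathrm{Arc}_{4,r}$ cannot remove non-reducedness: the move only introduces two new external vertices and one new external $4$-arc near the boundary of the disk while leaving every preexisting arc in place (up to relabeling of external indices). Consequently, any self-crossing of an arc, or any pair of arcs crossing more than once, that already exists in a graph $\Gamma'$ persists in $\mathrm{Arc}_{4,r}\Gamma'$. By Proposition~\ref{prop:reduceability}, this gives the contrapositive we need: if $\mathrm{Arc}_{4,r}\Gamma'$ is reduced then so is $\Gamma'$. Applying it with $\Gamma' = \lim_{\widehat v\to L_0}\widehat\Gamma$, whose image under $\mathrm{Arc}_{4,r}$ is the reduced graph $\partial_v\Gamma$, yields the required reducedness of $\Gamma'$ and thereby closes the proof.

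I do not anticipate a serious obstacle here. The only argument that is not purely a bookkeeping of earlier results is the monotonicity claim for $\mathrm{Arc}_{4,r}$, and that is really an inspection of how the move transports the arcs and their crossings; the rest is a short chain of substitutions using uniqueness of the reduced opening direction.
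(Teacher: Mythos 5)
Your proof is correct, but the crucial final step is resolved by a different argument than the paper's, even though both ultimately rest on the same underlying fact. The common opening is identical: commute the limit past $\mathrm{Arc}_{4,r}$ via Corollary~\ref{limit comute for graphs coro} and invoke the uniqueness, for each internal vertex, of the opening direction that yields a codimension-one boundary (Corollary~\ref{coro: codim 1 boudnary classification} together with Proposition~\ref{prop:codim 1 boundaries iff reduced}). You then argue \emph{downward}, from $\Gamma$ to $\widehat\Gamma$: since $\mathrm{Arc}_{4,r}\bigl(\lim_{\widehat v\to L_0}\widehat\Gamma\bigr)=\partial_v\Gamma$ is reduced, you deduce that $\lim_{\widehat v\to L_0}\widehat\Gamma$ is itself reduced, via a converse-monotonicity lemma for $\mathrm{Arc}_{4,r}$: the move only braids two legs near the boundary and leaves the internal crossing pattern of preexisting arcs untouched, so by Proposition~\ref{prop:reduceability} it cannot turn a non-reduced graph into a reduced one. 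You correctly flag this converse as the one step not already in the paper (Observation~\ref{obs:reduce arc move} gives only the forward implication and under a side hypothesis). The paper works \emph{upward}, from $\widehat\Gamma$ to $\Gamma$: it takes the other opening direction $\widehat R$ at $\widehat v$, whose limit drops the cell dimension by $p>1$, and shows by a dimension count that $\lim_{v\to\widehat R}\Gamma$ drops dimension by at least $p$ as well (using the inequality $\dim\Omega_{\mathrm{Arc}_{4,r}G}\leq\dim\Omega_G+2$), so $\widehat R$ cannot be the codimension-one direction in $\Gamma$ either. That dimension inequality, once unpacked, is equivalent to the converse monotonicity you state, so the two proofs hinge on the same lemma phrased differently; your combinatorial phrasing (crossings persist under the three constituent moves) is more explicit and is straightforward to verify by inspection, whereas the paper's dimensional phrasing is stated compactly with its justification partly left to the reader. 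Both routes are valid.
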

\begin{proof}
    Write \(\widehat\Gamma_0 = \partial_{\widehat v}\widehat \Gamma\) and \(\Gamma_0 = \partial_v \Gamma\).
     By Definitions~\ref{def:graph vert bound} and~\ref{def:boundary graph}, we have that \(\widehat\Gamma_0= \lim_{\widehat v ^{\widehat\omega}\rightarrow L}\widehat\Gamma\), for \(\widehat v ^{\widehat\omega}\) an oriented vertex. Write \(\mathrm {Arc}_{4,r} (\widehat v ^{\widehat\omega}) = v^\omega \) for the corresponding vertex in \(\Gamma\). By Corollary~\ref{limit comute for graphs coro} we have that 
    \[
    \mathrm {Arc}_{4,r} \widehat \Gamma_0 =\mathrm {Arc}_{4,r}  \lim_{\widehat v^{\widehat \omega}\rightarrow L} \widehat \Gamma
    =  \lim_{v^{ \omega}\rightarrow L} \mathrm {Arc}_{4,r} \widehat \Gamma
    = \lim_{ v^{ \omega}\rightarrow L}  \Gamma.
    \]
      We now need to show that \(\lim_{ v^{ \omega}\rightarrow L}  \Gamma = \Gamma_0\). By Corollary~\ref{coro: codim 1 boudnary classification} we have that there is exactly one limit operation on the vertex \(v^\omega\) that would result in a codimension \(1\) boundary. Thus, if \(\lim_{ v^{ \omega}\rightarrow L}  \Gamma\) is a codimension \(1\) boundary of \(\Gamma\), it must be equal to \(\Gamma_0\).  Meaning it is enough to show \(\lim_{ v^{ \omega}\rightarrow L}  \Gamma\) is a codimension \(1\) boundary of \(\Gamma\).

     By Corollary~\ref{coro: codim 1 boudnary classification} we have that there is exactly one limit operation on the either vertices \( \widehat v^{\widehat \omega}\) and \(v^{ \omega}\) that would result in a codimension \(1\) boundary of \(\widehat \Gamma\) and \(\Gamma\) respectively. Let \(\widehat \Gamma_1\) be the other boundary of \(\widehat \Gamma\), that is, \(\widehat \Gamma_1 =\lim_{\widehat v^{\widehat \omega}\rightarrow \widehat R} \widehat \Gamma\) is a boundary graph of \(\widehat \Gamma\) of codimension \(p>1\), meaning that \(\mathrm{dim} (\Omega_{\widehat \Gamma}) - \mathrm{dim} (\Omega_{\widehat \Gamma_1}) = p\). Recall that the number of internal vertices of a reduced graph is the dimension of the corresponding cell. By Corollary~\ref{limit comute for graphs coro} we have that 
    \[
    \mathrm {Arc}_{4,r} \widehat \Gamma_1 =\mathrm {Arc}_{4,r}  \lim_{\widehat v^{\widehat \omega}\rightarrow \widehat R} \widehat \Gamma
    =  \lim_{v^{ \omega}\rightarrow \widehat R} \mathrm {Arc}_{4,r} \widehat \Gamma
    = \lim_{ v^{ \omega}\rightarrow \widehat R}  \Gamma.
    \]
     As reduction cannot increase the number of internal vertices, and that by Definition~\ref{def:arc}, we have that the the number of internal vertices in \(\mathrm {Arc}_{4,r} \widehat \Gamma_1\) is at most two more then of \( \widehat \Gamma_1\), we have that 
    \(
    \mathrm{dim} (\Omega_{\lim_{ v^{ \omega}\rightarrow \widehat R}  \Gamma})\leq\mathrm{dim} (\Omega_{\widehat \Gamma_1 })+2.
    \)
   As \(\Gamma\) is reduced, it has exactly two more internal vertices than \(\widehat \Gamma\). Meaning \(\mathrm{dim} (\Omega_{\Gamma })=\mathrm{dim} (\Omega_{\widehat \Gamma })+2\), and thus \(\mathrm{dim} (\Omega_{\Gamma }) - \mathrm{dim} (\Omega_{\lim_{ v^{ \omega}\rightarrow \widehat R} \Gamma})\geq p >1\). This means that \(\lim_{ v^{ \omega}\rightarrow \widehat R} \Gamma\) is not a boundary of codimension \(1\). Therefore, \(\lim_{ v^{ \omega}\rightarrow \widehat L} \Gamma\) must be of codimension \(1\), and \(\lim_{ v^{ \omega}\rightarrow \widehat L} \Gamma = \Gamma_0\).
\end{proof}
\subsection{Miscellanea}
In this section we will set up some additional simple notions about BCFW cells and their boundaries. These notions will be used primarily in Section~\ref{separation section}.
\begin{dfn}
\label{def:boundary triplet}
        Given \(\Gamma_+\) and \(\Gamma_-\) be BCFW graphs as in Figure~\ref{fig:two sides} with \(2k\) external vertices,  and \(\Gamma_0\) their common codimension \(1\) boundary as in Figure 6.  The triplet \(\Delta = (\Gamma_\epsilon)_{\epsilon\in\{+,0,-\}}\) will be called a \emph{boundary triplet} of \(\OGnon{k}{2k}\). We will call a an external arc that is a common to all three graphs, a \emph{external arc} of \(\Delta\).
        
        Define \(\Omega_\Delta \defeq \bigsqcup_{\epsilon\in\{+,0,-\}}\Omega_{\Gamma_\epsilon}\), and for \(G\) a series of moves, write \(G \Delta \defeq (G \Gamma)_{\epsilon\in\{+,0,-\}}\).

\end{dfn}
\begin{dfn}
\label{def:boundary vertex}
    Let \(\Delta = (\Gamma_+,\Gamma_0,\Gamma_-)\) be a boundary triplet in \(\OGnon{k}{2k}\) and \(\mathcal V(\Gamma_{\epsilon}),~\epsilon\in\{+,0,-\}\) be their respective internal vertices. The graph \(\Gamma_0\) was obtained from \(\Gamma_{+,-}\) by opening a vertex. Let that vertex be referred to as the \emph{boundary vertex} \(v_\pm\in\mathcal V(\Gamma_\pm)\) respectively.

\end{dfn}

\begin{obs}
\label{obs:promomting triplets}
    If \(\tau_\ell\) is a external common \(n\)-arc of a boundary triplet of \(k\)-OG graphs \(\Delta\), then \(\Delta = \mathrm{Arc}_{n,\ell}\Delta_0\), where \(\Delta_0\) is a boundary triplets of \((k-1)\)-OG graphs.
\end{obs}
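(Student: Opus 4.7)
The plan is to exhibit $\Delta_0$ directly by inverting $\mathrm{Arc}_{n,\ell}$ componentwise, and then verify each ingredient of the definition of a boundary triplet (Definition~\ref{def:boundary triplet}).

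First, since $\tau_\ell$ is a common external $n$-arc of all three graphs in $\Delta = (\Gamma_+,\Gamma_0,\Gamma_-)$, the $n$-arc in each $\Gamma_\epsilon$ is generated, at the level of OG graphs, by an application of $\mathrm{Inc}_\ell$ followed by the $n-2$ rotations in Definition~\ref{def:arc}. I would define $\Gamma_\epsilon^0$ as the reduced $(k-1)$-OG graph obtained from $\Gamma_\epsilon$ by inverting these moves (which is possible precisely because $\tau_\ell$ is an external $n$-arc), so that $\Gamma_\epsilon = \mathrm{Arc}_{n,\ell}(\Gamma_\epsilon^0)$ for each $\epsilon \in \{+,0,-\}$.

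Next I would argue that $\Gamma_\pm^0$ are BCFW graphs. By Theorem~\ref{thm:BCFW are trees} each $\Gamma_\pm$ is a tree of triangles containing a triangle leaf between the external legs $1$ and $2k$. The external arc $\tau_\ell$ corresponds to a subtree of the triangle tree that is ``cut off'' from the rest of $\Gamma_\pm$ by the two boundary endpoints of $\tau_\ell$; removing it via $\mathrm{Arc}_{n,\ell}^{-1}$ deletes exactly these leaf triangles and preserves the tree-of-triangles property. Since the arc is also external in $\Gamma_0$, the boundary vertex $v_\pm$ of $\Gamma_\pm$ (Definition~\ref{def:boundary vertex}) does not lie inside the subtree that gets removed, so in particular the critical triangle leaf between the new labels $1$ and $2k-2$ survives, and $\Gamma_\pm^0$ is BCFW.

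Because $v_\pm$ is not affected by the inverse $\mathrm{Arc}_{n,\ell}$ operation, it descends canonically to an internal vertex $v_\pm^0$ of $\Gamma_\pm^0$. I would then apply the (straightforward $n$-analogue of) Observation~\ref{obs:arc of codim 1 is codim 1}: the proof given there only uses that $\mathrm{Arc}_{n,\ell}$ adds a fixed number of internal vertices and preserves reducedness, and so it extends verbatim from $\mathrm{Arc}_{4,r}$ to $\mathrm{Arc}_{n,\ell}$. This yields
\[
\Gamma_0 \;=\; \partial_{v_\pm}\Gamma_\pm \;=\; \mathrm{Arc}_{n,\ell}\bigl(\partial_{v_\pm^0}\Gamma_\pm^0\bigr),
\]
so setting $\Gamma_0^0 := \partial_{v_+^0}\Gamma_+^0 = \partial_{v_-^0}\Gamma_-^0$ exhibits $\Gamma_0^0$ as a common codimension $1$ boundary of $\Gamma_+^0$ and $\Gamma_-^0$, and hence $\Delta_0 := (\Gamma_+^0,\Gamma_0^0,\Gamma_-^0)$ is a boundary triplet of $(k-1)$-OG graphs with $\Delta = \mathrm{Arc}_{n,\ell}(\Delta_0)$.

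The main obstacle I expect is Step~2, namely checking that factoring out the arc really preserves the BCFW property rather than merely the property of being a reduced OG graph. This hinges on a careful local picture of the triangle tree near $\tau_\ell$ (the fact that the arc isolates a subtree of triangle leaves, rather than cutting across the body of the tree), together with the observation that the boundary vertex $v_\pm$ is separated from the arc in all three graphs. Both facts follow from Corollary~\ref{coro: codim 1 boudnary classification} and the common-arc hypothesis, but they require the bookkeeping above rather than an immediate appeal; the promotion of the codimension $1$ boundary via the extended Observation~\ref{obs:arc of codim 1 is codim 1} is then routine.
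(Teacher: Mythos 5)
The paper states Observation~\ref{obs:promomting triplets} without proof, so there is no paper argument to compare against. But the proposal has a real gap, and in fact the statement as written (for arbitrary $n$) appears to be false, which the proposal does not notice. Concretely, take the $k=6$ boundary triplet $\Delta$ obtained by applying $\mathrm{Arc}_{4,2}$ twice to the unique $k=4$ boundary triplet of Figures~\ref{fig:BCFWk4}--\ref{fig:spider}; then $\Gamma_+$ has permutation $(1,9)(2,5)(3,7)(4,10)(6,12)(8,11)$, and $\tau_3=\{3,7\}$, with support $\{3,4,5,6,7\}$, is a common external $5$-arc of all three graphs in $\Delta$. Inverting $\mathrm{Arc}_{5,3}$ on $\Gamma_+$ gives the permutation $(1,7)(2,4)(3,8)(5,10)(6,9)$, which has the external $3$-arc $\{2,4\}$ and is not one of the five $k=5$ BCFW graphs, so it is not a tree-of-triangles and certainly not BCFW. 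Hence $\Delta_0$ in this case is not a boundary triplet. The paper only ever invokes the observation with $n=4$ (in the proof of Observation~\ref{obs:boundary triplet back prom}), and the proposal should have flagged this and restricted to that case.

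Even when $n=4$, the second step of the proposal does not hang together. The sentence ``Since the arc is also external in $\Gamma_0$, the boundary vertex $v_\pm$ of $\Gamma_\pm$ does not lie inside the subtree that gets removed, so in particular the critical triangle leaf between the new labels $1$ and $2k-2$ survives'' fuses two unrelated facts. That $v_\pm$ lies outside the removed cherry does follow from the arc being common to all three graphs, but it says nothing about the leaf between positions $1$ and $2k$. What is actually needed, and what the proposal omits, is that the support of the common $4$-arc cannot contain $1$ or $2k$ (so that the relabeling $\mathrm{Arc}_{4,\ell}^{-1}$ preserves the distinguished triangle at that gap); this is the substantive content of the observation and requires its own argument. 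The remainder of the proposal, the factoring through $\mathrm{Arc}_{n,\ell}^{-1}$ and the appeal to Observation~\ref{obs:arc of codim 1 is codim 1} to identify $\Gamma_0^0$ as a common codimension-one boundary, is sound.
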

\begin{obs}
\label{obs:boundary triplet back prom}
    Every boundary triplet of \(k\)-OG graphs for \(k>4\) is of the form of \(\Delta = \mathrm{Arc}_{4,2l}\Delta_0\), where \(\Delta_0\) is a boundary triplets of \((k-1)\)-OG graphs. If \(v_\pm\) are the boundary vertices of \(\Delta_0\), then the boundary vertices of \(\Delta \) are \(\mathrm{Arc}_{4,2l}(v_\pm)\) 
\end{obs}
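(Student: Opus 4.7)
The plan is to reduce the claim to Observation \ref{obs:promomting triplets} by exhibiting an external common $4$-arc $\tau_{2l}$ of $\Delta$ that is disjoint from the local region on which the three graphs in $\Delta$ differ, and then to verify the statement about boundary vertices via Observation \ref{obs:arc of codim 1 is codim 1}.

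First I would translate the problem into the language of trees of triangles. By Theorem \ref{thm:BCFW are trees}, each of $\Gamma_+, \Gamma_-$ is encoded by a $3$-regular tree of triangles with $k$ leaves, and by Corollary \ref{coro: codim 1 boudnary classification} the boundary vertex of $\Gamma_\pm$ corresponds to an internal edge $e$ of this tree joining two non-leaf vertices. Away from the local diamond at $e$, all three graphs of $\Delta$ coincide (Figure \ref{fig:two sides}). Moreover, by inspection of the effect of $\mathrm{Arc}_{4,2l}$ on the tree of triangles (Figure \ref{fig:arc triag}) together with Proposition \ref{prop:BCFW arcs}, the external $4$-arcs $\tau_{2l}$ of a BCFW graph $\Gamma$ are in bijection with the cherries of its tree of triangles, meaning pairs of sibling leaves sharing a common internal parent.

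Second I would prove the following combinatorial fact: for any $3$-regular tree $T$ with $k \geq 5$ leaves and any internal edge $e$ of $T$, there is a cherry of $T$ whose common parent is not an endpoint of $e$. Let $T^\circ$ denote the subtree obtained from $T$ by deleting all leaves. Then $T^\circ$ is a tree on $k - 2 \geq 3$ vertices, and its degree-$1$ vertices are exactly the common parents of cherries of $T$. If $T^\circ$ is not a path then it has at least $3$ leaves, one of which must avoid the two endpoints of $e$; if $T^\circ$ is a path on $\geq 3$ vertices, the two endpoints of the path cannot both be incident to a single edge $e$, so at least one of them avoids $e$. The resulting cherry yields a common external $4$-arc $\tau_{2l}$ of the three graphs in $\Delta$.

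Applying Observation \ref{obs:promomting triplets} to this common arc produces a triplet $\Delta_0 \defeq \mathrm{Arc}_{4,2l}^{-1} \Delta$ of $(k-1)$-OG graphs, whose two side graphs are BCFW by Proposition \ref{prop:BCFW arcs} (peeling off the final $\mathrm{Arc}_{4,2l}$ from an arc-sequence representation). For the boundary-vertex statement, let $w_\pm$ denote the boundary vertex of $\Gamma_\pm$ in $\Delta$ and set $v_\pm \defeq \mathrm{Arc}_{4,2l}^{-1}(w_\pm)$, which is well-defined since $w_\pm$ lies outside the $4$-arc region. Then Observation \ref{obs:arc of codim 1 is codim 1} gives $\mathrm{Arc}_{4,2l}\bigl(\partial_{v_\pm} \mathrm{Arc}_{4,2l}^{-1}(\Gamma_\pm)\bigr) = \partial_{w_\pm}\Gamma_\pm = \Gamma_0$, whence $\partial_{v_\pm} \mathrm{Arc}_{4,2l}^{-1}(\Gamma_\pm) = \mathrm{Arc}_{4,2l}^{-1}(\Gamma_0)$, so $\Delta_0$ is a boundary triplet whose boundary vertices $v_\pm$ satisfy $w_\pm = \mathrm{Arc}_{4,2l}(v_\pm)$, as claimed. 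The principal obstacle is the combinatorial claim, and the hypothesis $k > 4$ is sharp: for $k=4$ one has $T^\circ$ a single edge whose two endpoints are both incident to $e$, leaving no suitable cherry outside the diamond.
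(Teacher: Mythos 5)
Your proof is correct, and it takes a genuinely different route to the key existence step. The paper's proof works directly with the geometry of the opened-vertex picture (Figures \ref{fig:opening int int v g0}, \ref{fig:two sides}): it argues that at least one of the four subgraphs $\Gamma_1,\ldots,\Gamma_4$ flanking the boundary vertex must contain an external arc of $\Gamma_0$, on the grounds that otherwise $\Gamma_0$ would be the spider graph, forcing $k=4$. You instead encode the BCFW graph by its tree of triangles, identify external $4$-arcs with cherries of that tree, and establish the existence of a suitable cherry via a clean self-contained lemma about the reduced tree $T^{\circ}$ (delete the leaves, count degree-one vertices, case-split on path versus non-path). Both approaches ultimately land on Observation~\ref{obs:promomting triplets}, and the two arguments are morally the same fact seen from two sides: the paper's "or else it is the spider graph" claim is exactly what your $T^{\circ}$ analysis makes precise. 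Your version has the advantage of being fully rigorous where the paper's one-sentence "or else" step is somewhat terse, and it also explicitly verifies the statement about boundary vertices via Observation~\ref{obs:arc of codim 1 is codim 1}, which the paper's proof does not address at all (the paper establishes only the first sentence of the observation). One small presentational point: you invoke "$\mathrm{Arc}_{4,2l}^{-1}$" on graphs and vertices, which (as Remark~\ref{rmk:inv inc abuse} warns for the underlying $\mathrm{Inc}$ move) is a one-sided inverse; your usage is sound because the cherry is chosen away from the boundary vertex, so the preimage is indeed well-defined, but it is worth saying so explicitly as you do.
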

\begin{proof}
     Let \((\Gamma_{\epsilon }^{k+1})_{\epsilon \in\{0,\pm\}}\) be a boundary triplet of \(\OGnon{k+1}{2k+2}\). By definition, they must be of the form of the graphs in Figures~\ref{fig:two sides g+},~\ref{fig:opening int int v g0}, and~\ref{fig:two sides g-} respectively.  As these graph have \(2k+2>8\) external vertices, it must hold that a least one of \(\Gamma_1\), \(\Gamma_2\), \(\Gamma_3\), or \(\Gamma_4\) contains an external arc of the graph \(\Gamma_0^{k+1}\), or else \(\Gamma_0^{k+1}\) is the spider graph (Figure~\ref{fig:spider}) but then \(k+1 = 4\). This external arc must also be an external arc of graphs \(\Gamma_{+,-}^{k+1}\) which are BCFW graphs. By Proposition~\ref{prop:BCFW arcs} it must be an external \(4\)-arc starting at an even index. Applying Observation~\ref{obs:promomting triplets}, we get that \(\Delta = \mathrm{Arc}_{4,2l}\Delta_0\), where \(\Delta_0\) is a boundary triplets of \((k-1)\)-OG graphs.
\end{proof}

\begin{figure}[H]
    \centering
   \begin{center}
\begin{tikzpicture}[scale = 0.7]
\draw (0,0) circle (2cm);

\draw(-1.84776, 0.765367)node[anchor=east]{\(6\)}--(1.84776, -0.765367)node[anchor=west]{\(2\)};

\draw({-0.765367, 1.84776})node[anchor=south]{\(5\)}--(0.765367, -1.84776)node[anchor=north]{\(1\)};

\draw(-0.765367, -1.84776)node[anchor=north]{\(8\)}--(1.84776, 0.765367)node[anchor=west]{\(3\)};

\draw(-1.84776, -0.765367)node[anchor=east]{\(7\)}--(0.765367, 1.84776)node[anchor=south]{\(4\)};

\draw (0+5,0) circle (2cm);

\draw({-0.765367+5, 1.84776})node[anchor=south]{\(5\)}--(1.84776+5, -0.765367)node[anchor=west]{\(2\)};

\draw(-1.84776+5, 0.765367)node[anchor=east]{\(6\)}--(0.765367+5, -1.84776)node[anchor=north]{\(1\)};

\draw(-0.765367+5, -1.84776)node[anchor=north]{\(8\)}--(0.765367+5, 1.84776)node[anchor=south]{\(4\)};

\draw(-1.84776+5, -0.765367)node[anchor=east]{\(7\)}--(1.84776+5, 0.765367)node[anchor=west]{\(3\)};

\end{tikzpicture}
\end{center}
    \caption{\(\Gamma_+\)  and \(\Gamma_-\), the two BCFW cells for \(k=4\)}
    \label{fig:BCFWk4}
\end{figure}
 \begin{figure}[H]
     \centering
   \begin{center}
\begin{tikzpicture}[scale = 0.7]
\draw (0,0) circle (2cm);

\draw({-0.765367, 1.84776})node[anchor=south]{\(5\)}--(1.84776, -0.765367)node[anchor=west]{\(2\)};

\draw(-1.84776, 0.765367)node[anchor=east]{\(6\)}--(0.765367, -1.84776)node[anchor=north]{\(1\)};

\draw(-0.765367, -1.84776)node[anchor=north]{\(8\)}--(1.84776, 0.765367)node[anchor=west]{\(3\)};

\draw(-1.84776, -0.765367)node[anchor=east]{\(7\)}--(0.765367, 1.84776)node[anchor=south]{\(4\)};

\end{tikzpicture}
\end{center}
     \caption{the only internal boundary for \(k=4\), also known as the 'spider' graph}
     \label{fig:spider}
 \end{figure}
 Notice that when we take the trigonometric orientation where the edges around each triangle are oriented clockwise for \(\Gamma_+\) and \(\Gamma_-\), the inherited orientation for \(\Gamma_0\) is the same (see Figure~\ref{fig:opening orient}). 

\begin{figure}[H]
    \centering
\begin{center}
\scalebox{0.7}{
            \begin{tikzpicture}

             \draw(-1.25, 0.75)--(1.25, -0.75)
              node[pos=0](a1){}node[pos=0.01](a2){} node[pos=0.44](b1){}node[pos=0.45](b2){} node[pos=0.55](c1){}node[pos=0.56](c2){} node[pos=0.99](d1){}node[pos=1](d2){};
            \draw [very thick, -latex, shorten <= 5] (a2)--(a1);
            \draw [very thick, -latex, shorten <= 5] (b1)--(b2);
            \draw [very thick, -latex, shorten <= 5] (c2)--(c1);
            \draw [very thick, -latex, shorten <= 5] (d1)--(d2);
               
            \draw (-1.25, -0.75)-- (1.25, 0.75)            
            node[pos=0.19](a1){}node[pos=0.2](a2){} node[pos=0.25](b1){}node[pos=0.26](b2){} node[pos=0.74](c1){}node[pos=0.75](c2){} node[pos=0.8](d1){}node[pos=0.81](d2){};
            \draw [very thick, -latex, shorten <= 5] (a1)--(a2);
            \draw [very thick, -latex, shorten <= 5] (b2)--(b1);
            \draw [very thick, -latex, shorten <= 5] (c1)--(c2);
            \draw [very thick, -latex, shorten <= 5] (d2)--(d1);
            \draw (0.75,1.25)--( 0.75,-1.25)
             node[pos=0](a1){}node[pos=0.01](a2){} node[pos=0.59](b1){}node[pos=0.6](b2){} node[pos=0.8](c1){}node[pos=0.81](c2){};
            \draw [very thick, -latex, shorten <= 5] (a2)--(a1);
            \draw [very thick, -latex, shorten <= 5] (b1)--(b2);
            \draw [very thick, -latex, shorten <= 5] (c2)--(c1);
            \draw ( -0.75,-1.25)--(-0.75,1.25)
             node[pos=0](a1){}node[pos=0.01](a2){} node[pos=0.59](b1){}node[pos=0.6](b2){} node[pos=0.8](c1){}node[pos=0.81](c2){};
            \draw [very thick, -latex, shorten <= 5] (a2)--(a1);
            \draw [very thick, -latex, shorten <= 5] (b1)--(b2);
            \draw [very thick, -latex, shorten <= 5] (c2)--(c1);

            \draw (0,0) node[anchor=north]{\(v\)};

            \draw (1.25+5, -0.75) -- (-1.25+5, -0.75)
             node[pos=0](a1){}node[pos=0.01](a2){} node[pos=0.59](b1){}node[pos=0.6](b2){} node[pos=0.8](c1){}node[pos=0.81](c2){};
            \draw [very thick, -latex, shorten <= 5] (a2)--(a1);
            \draw [very thick, -latex, shorten <= 5] (b1)--(b2);
            \draw [very thick, -latex, shorten <= 5] (c2)--(c1);
            \draw  (-1.25+5, 0.75)-- (1.25+5, 0.75)
             node[pos=0](a1){}node[pos=0.01](a2){} node[pos=0.59](b1){}node[pos=0.6](b2){} node[pos=0.8](c1){}node[pos=0.81](c2){};
            \draw [very thick, -latex, shorten <= 5] (a2)--(a1);
            \draw [very thick, -latex, shorten <= 5] (b1)--(b2);
            \draw [very thick, -latex, shorten <= 5] (c2)--(c1);

            \draw ( 0.75+5,-1.25) --(-0.75+5,1.25)
            node[pos=0.19](a1){}node[pos=0.2](a2){} node[pos=0.25](b1){}node[pos=0.26](b2){} node[pos=0.74](c1){}node[pos=0.75](c2){} node[pos=0.8](d1){}node[pos=0.81](d2){};
            \draw [very thick, -latex, shorten <= 5] (a1)--(a2);
            \draw [very thick, -latex, shorten <= 5] (b2)--(b1);
            \draw [very thick, -latex, shorten <= 5] (c1)--(c2);
            \draw [very thick, -latex, shorten <= 5] (d2)--(d1);
             
            \draw (0.75+5,1.25) --  ( -0.75+5,-1.25)
            node[pos=0](a1){}node[pos=0.01](a2){} node[pos=0.44](b1){}node[pos=0.45](b2){} node[pos=0.55](c1){}node[pos=0.56](c2){} node[pos=0.99](d1){}node[pos=1](d2){};
            \draw [very thick, -latex, shorten <= 5] (a2)--(a1);
            \draw [very thick, -latex, shorten <= 5] (b1)--(b2);
            \draw [very thick, -latex, shorten <= 5] (c2)--(c1);
            \draw [very thick, -latex, shorten <= 5] (d1)--(d2);

        \draw (5,0) node[anchor=east]{\(v\)};

            \draw (1.25+2.5, -0.5-5)--(-1.25+2.5, -0.5-5)
             node[pos=0](a1){}node[pos=0.01](a2){} node[pos=0.59](b1){}node[pos=0.6](b2){} node[pos=0.8](c1){}node[pos=0.81](c2){};
            \draw [very thick, -latex, shorten <= 5] (a2)--(a1);
            \draw [very thick, -latex, shorten <= 5] (b1)--(b2);
            \draw [very thick, -latex, shorten <= 5] (c2)--(c1);
            \draw  (-1.25+2.5, 0.5-5)--(1.25+2.5, 0.5-5)
             node[pos=0](a1){}node[pos=0.01](a2){} node[pos=0.59](b1){}node[pos=0.6](b2){} node[pos=0.8](c1){}node[pos=0.81](c2){};
            \draw [very thick, -latex, shorten <= 5] (a2)--(a1);
            \draw [very thick, -latex, shorten <= 5] (b1)--(b2);
            \draw [very thick, -latex, shorten <= 5] (c2)--(c1);

            \draw (0.5+2.5,1.25-5)--( 0.5+2.5,-1.25-5)
             node[pos=0](a1){}node[pos=0.01](a2){} node[pos=0.59](b1){}node[pos=0.6](b2){} node[pos=0.8](c1){}node[pos=0.81](c2){};
            \draw [very thick, -latex, shorten <= 5] (a2)--(a1);
            \draw [very thick, -latex, shorten <= 5] (b1)--(b2);
            \draw [very thick, -latex, shorten <= 5] (c2)--(c1);
            \draw ( -0.5+2.5,-1.25-5)--(-0.5+2.5,1.25-5) 
             node[pos=0](a1){}node[pos=0.01](a2){} node[pos=0.59](b1){}node[pos=0.6](b2){} node[pos=0.8](c1){}node[pos=0.81](c2){};
            \draw [very thick, -latex, shorten <= 5] (a2)--(a1);
            \draw [very thick, -latex, shorten <= 5] (b1)--(b2);
            \draw [very thick, -latex, shorten <= 5] (c2)--(c1);

             \draw [very thick, -latex, shorten <= 0] (2.5*0.3,-5*0.3)--(2.5*0.7,-5*0.7);
             \draw [very thick, -latex, shorten <= 0] (-2.5*0.3+5,-5*0.3)--(-2.5*0.7+5,-5*0.7);

        \draw (2.5,-2.5) node{\(v\rightarrow 0\)};
       
        \end{tikzpicture}
        }
        \end{center}
    \caption{the cells on the side on an internal boundary with orientations}
    \label{fig:opening orient}
\end{figure}
This allows us assign a canonical orientation to BCFW graphs and their boundaries.

\begin{dfn}
    Given BCFW graph, the \emph{canonical orientation} is defined orientation where the edges around each triangle, that is, a vertex in its triangle graph, are oriented clockwise. 
\end{dfn}
\begin{prop}
    The canonical orientation of a BCFW graph is a trigonometric orientation.
\end{prop}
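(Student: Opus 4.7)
The plan is to apply Proposition~\ref{prop:trigonometric from disk graph} with the disk graph $G = \Delta(\Gamma)$. By Theorem~\ref{thm:BCFW are trees}, a BCFW graph $\Gamma$ is a tree of triangles, so $M(\Delta(\Gamma)) = \Gamma$, and Proposition~\ref{prop:trigonometric from disk graph} produces a trigonometric orientation $\omega_0$ on $\Gamma$ associated with this disk graph. The remaining task is to identify $\omega_0$ with the canonical orientation, i.e. the orientation obtained by sending the edges around each triangle of $\Gamma$ clockwise.

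First I would verify that the canonical orientation is well-defined. Each triangle of $\Gamma$ corresponds, via the medial-graph construction, to a unique vertex $v$ of $\Delta(\Gamma)$: its three edges join consecutive pairs of edges of $\Delta(\Gamma)$ at $v$. Distinct vertices of $\Delta(\Gamma)$ produce edge-disjoint triangles of $\Gamma$, and every edge of $\Gamma$ lies in exactly one such triangle, so orienting each triangle clockwise (with respect to the disk's fixed orientation) assigns an unambiguous direction to every edge.

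Next I would check the trigonometric condition locally. An internal vertex $u$ of $\Gamma$ corresponds to an internal edge $e_u$ of $\Delta(\Gamma)$, and $u$ lies in precisely the two triangles $T_1, T_2$ of $\Gamma$ attached to the endpoints of $e_u$. Its four incident edges split as two from $T_1$ and two from $T_2$, and in the planar embedding they alternate between $T_1$ and $T_2$ in cyclic order around $u$, since the alternation corresponds to the two sides of $e_u$. Orienting both $T_1$ and $T_2$ clockwise then forces the in/out pattern at $u$ to alternate, with the two ``in'' edges opposite each other and the two ``out'' edges opposite each other---this is precisely the trigonometric configuration depicted in Figure~\ref{fig:opening a vertex}. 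Combining this local check with the bijection in Proposition~\ref{prop:trigonometric from disk graph} identifies the canonical orientation with $\omega_0$.

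The step I expect to require the most care is the sign check that ``clockwise on both $T_1$ and $T_2$'' yields the alternating pattern rather than the ``in, in, out, out'' pattern. This boils down to the planar fact that $T_1$ and $T_2$ lie on opposite sides of $e_u$, so that a uniform clockwise convention around each triangle appears with reversed rotational sense relative to the other when read from $u$; once this sign is checked by a short planarity argument, one concludes that the canonical orientation coincides with $\omega_0$ and is therefore trigonometric.
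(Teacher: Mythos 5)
Your overall strategy matches the paper's: reduce to Proposition~\ref{prop:trigonometric from disk graph} via the tree-of-triangles structure (Theorem~\ref{thm:BCFW are trees} and Proposition~\ref{prop:uniquness of trees}), and identify the canonical orientation with the one corresponding to $G=\Delta(\Gamma)$. The paper asserts this identification in one sentence; you attempt to verify it locally, which is a reasonable thing to spell out. However, the local verification contains a genuine geometric error.

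You claim that at an internal vertex $u$ of $\Gamma$, the two edges from $T_1$ and the two from $T_2$ \emph{alternate} in cyclic order, i.e.\ sit in opposite pairs. That is backwards. The medial vertex $u$ sits on the internal edge $e_u = \{v_1,v_2\}$ of $G$, and its four medial edges correspond to the pairs (endpoint of $e_u$, side of $e_u$): $(v_1,\text{above})$, $(v_2,\text{above})$, $(v_2,\text{below})$, $(v_1,\text{below})$ in cyclic order. So the two $T_1$-edges (the $v_1$ ones) are \emph{adjacent}, not opposite, and likewise for $T_2$; the cyclic pattern is $T_1,T_2,T_2,T_1$. This matters: if the $T_1$-edges really were opposite, then orienting $T_1$ as a cycle would make the incoming and outgoing $T_1$-edges opposite each other, and the same for $T_2$, and the resulting in/out pattern around $u$ would be $\text{in},\text{in},\text{out},\text{out}$ (or $\text{in},\text{out},\text{out},\text{in}$), which is \emph{hyperbolic}, not trigonometric. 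So the argument as written, followed consistently, yields the wrong conclusion.

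Your final paragraph does contain the correct key observation --- that $v_1$ and $v_2$ lie on opposite sides of $u$, so a uniform clockwise convention on $T_1$ and $T_2$ has reversed rotational sense as seen from $u$ --- but this is actually in tension with the alternation claim, not a refinement of it. The correct picture is: $T_1$-edges adjacent, $T_2$-edges adjacent, and the reversal of sense (incoming from above for $T_1$, from below for $T_2$) is what makes the in/out pattern alternate $\text{in},\text{out},\text{in},\text{out}$. If you replace the alternation claim with the correct adjacency picture and keep the sign argument from your last paragraph, the proof closes.
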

\begin{proof}
    Let \(\Gamma\) be a BCFW graph with \(2k\) external vertices. By Proposition~\ref{thm:BCFW are trees} there exist a disk graph \(G\) with \(k\) external vertices such that \(M(G)=\Gamma\). By Proposition~\ref{prop:uniquness of trees}, we have a unique such \(G\) that is a three-regular tree with a vertex between the \(1\) and \(2k\)  half edges. The canonical orientation is precisely the orientation corresponding to \(G\) according to Proposition~\ref{prop:trigonometric from disk graph}, and thus it is trigonometric.
    \end{proof}

\begin{obs}
\label{obs:codim 1 boundary canonical 0}
    Let \(\Gamma^\omega\) be a BCFW graph with its canonical orientation, and let \(v\) be an internal vertex. Then \(\lim_{v^\omega\rightarrow 0} \Gamma\) is a codimension \(1\) boundary of \(\Gamma\), and the other boundary associated to that \(v\), \(\lim_{v^\omega\rightarrow \frac{\pi}{2}} \Gamma\) is not a codimension \(1\) boundary of \(\Gamma\) .
\end{obs}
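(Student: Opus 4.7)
The plan is to combine Corollary~\ref{coro: codim 1 boudnary classification} with a direct orientation-level check against the pictures already in the text. By that corollary, for every internal vertex $v$ exactly one of the two limits $\lim_{v^\omega\rightarrow 0}\Gamma$ and $\lim_{v^\omega\rightarrow \frac{\pi}{2}}\Gamma$ is a codimension $1$ boundary (the one that is reduced), and the other is of strictly higher codimension. So the whole content of the observation is to identify, under the canonical orientation $\omega$, which of the two openings of $v$ at the level of OG graphs (Figure~\ref{fig:opening a vertex}) is the reduced one. It suffices to check this locally at $v$ in the two cases of Corollary~\ref{coro: codim 1 boudnary classification}.

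\textbf{External boundaries.} Here $v$ is adjacent to an external vertex, and in the triangle tree $\Delta(\Gamma)$ the vertex $v$ sits in a single triangular face. The canonical orientation orients the three edges of that triangle clockwise, which dictates the clockwise/counter-clockwise orientations of the four edges meeting at $v$ in the OG graph $\Gamma$ (recall Proposition~\ref{prop:trigonometric from disk graph} and Figure~\ref{fig:ext int v}). I will match this local picture with the top row of Figure~\ref{fig:opening a vertex}: the ``$\to 0$'' opening produces the configuration in which the two strands at $v$ reconnect as parallel strands -- that is, the graph $\Gamma_L$ of Figure~\ref{fig:opening ext int v gl} -- while the ``$\to \frac{\pi}{2}$'' opening produces $\Gamma_R$ of Figure~\ref{fig:opening ext int v gr}. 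Observation~\ref{co-dime 1 external obs} says that $\Gamma_L$ is reduced and $\Gamma_R$ is not, so $\lim_{v^\omega \to 0}\Gamma$ is the codimension $1$ boundary.

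\textbf{Internal boundaries.} Here $v$ sits between two adjacent triangles of $\Delta(\Gamma)$, both oriented clockwise under the canonical orientation, and the local picture at $v$ is exactly Figure~\ref{fig:int int v}. In this case the identification of the ``$v^\omega\rightarrow 0$'' opening with the reduced graph is already drawn in Figure~\ref{fig:opening orient}: the two top pictures show the canonical orientations at $v$ on the two BCFW sides $\Gamma_+$ and $\Gamma_-$, and the bottom picture shows that taking $v\to 0$ yields the graph $\Gamma_0$ of Figure~\ref{fig:opening int int v g0}, with an inherited orientation consistent with both sides. Proposition~\ref{prop:one internal boundary is codim 1} says $\Gamma_0$ is reduced and $\Gamma_R$ of Figure~\ref{fig:opening int int v gr} is not, so again $\lim_{v^\omega \to 0}\Gamma$ is the codimension $1$ boundary and $\lim_{v^\omega \to \pi/2}\Gamma$ is not.

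There is no serious obstacle; the entire argument is pictorial bookkeeping. The one point that deserves care is the translation, in the external case, from ``the triangle of $\Delta(\Gamma)$ containing $v$ is oriented clockwise'' to the orientations on the four half-edges at $v$, and then to the matching of these orientations with the ``$\to 0$'' side of Figure~\ref{fig:opening a vertex}; for the internal case this verification is already performed in Figure~\ref{fig:opening orient}, and the external case is fully analogous (and can be included as a small figure if desired).
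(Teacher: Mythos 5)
Your proof is correct and follows essentially the same route as the paper, which simply writes ``immediate from Corollary~\ref{coro: codim 1 boudnary classification}''; the paper is relying on the dichotomy from that corollary together with the orientation check already carried out in Figure~\ref{fig:opening orient} and its surrounding discussion. You correctly identify that the only content of the observation beyond the corollary is determining, under the canonical (clockwise-triangle) orientation, which of the two openings of $v$ in Figure~\ref{fig:opening a vertex} produces the reduced graph, and you cite the right ingredients for both cases (Observation~\ref{co-dime 1 external obs} for external vertices, Proposition~\ref{prop:one internal boundary is codim 1} and Figure~\ref{fig:opening orient} for internal ones). The one place you are slightly hand-wavy -- declaring the external-case orientation match ``fully analogous'' without drawing the local picture -- is no worse than the paper, which does not draw it at all and relies on the reader to see that a triangle oriented clockwise, when one of its corner vertices degenerates in the $\to 0$ direction, opens into the parallel-strand configuration $\Gamma_L$ of Figure~\ref{fig:opening ext int v gl}. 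In short, your write-up is a more explicit version of the paper's argument rather than a different one.
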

\begin{proof}
    This is immediate from Corollary~\ref{coro: codim 1 boudnary classification}.
\end{proof}
\begin{dfn}
Let \(\Gamma\) be a be a BCFW graph and \(\Gamma_0\) the result of some limit operation on \(\Gamma\) such that \(\Omega_{\Gamma_0}\) is a codimension \(1\) boundary of \(\Omega_{\Gamma}\). Define the \emph{canonical orientation} on \(\Gamma_0\) to be the one inherited form the canonical orientation on \(\Gamma\).
\end{dfn}
This is well defined as an external boundary is a boundary of only one BCFW graph, and for an internal boundary the two inherited orientations agree as we have seen in Figure~\ref{fig:opening orient}.

\begin{dfn}
    Let $\Gamma$ be a BCFW graph or a codimension \(1\) boundary of BCFW graph, with internal vertices \(\mathcal V(\Gamma)\). The \emph{canonical parametrization} 
    \(
    \varphi_{\Gamma}:(0,\frac{\pi}{2})^{\mathcal V(\Gamma)}\rightarrow\Omega_{\Gamma}
    \)
    is the parameterization associated with the canonical orientation, where we index the angles by the corresponding internal vertex.
\end{dfn}
\begin{obs}
    \label{obs: canonical para codim one}
    Set \(\Lambda \in \mathrm{Mat}_{2k\times(k+2)}^>\) and let \(\Gamma\) be a \(k\)-BCFW graph. Then there exists an open neighborhood \(U\supset[0, \frac{\pi}{2})^{\mathcal V(\Gamma)}\) such that the canonical parameterization \(\varphi_\Gamma\) can be extended to a diffeomorphism \(\varphi_\Gamma:U\rightarrow \OGnon{k}{2k}\), such that for any \(v\in \mathcal V(\Gamma)\), we have that \( \evalat{\varphi_\Gamma}{v\mapsto0} = \varphi_{\partial _v \Gamma}\), and \(\widetilde \Lambda\) is well-defined on \(\varphi_\Gamma(U)\). 
\end{obs}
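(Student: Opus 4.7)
The plan is to extend the canonical parameterization $\varphi_\Gamma$ inductively, using Proposition~\ref{prop:limits commute with para} to glue extensions across each codimension-one face of $[0,\pi/2)^{\mathcal{V}(\Gamma)}$, and then to handle the corners by iteration.

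First, for each internal vertex $v \in \mathcal{V}(\Gamma)$, Observation~\ref{obs:codim 1 boundary canonical 0} says that sending the angle at $v$ to $0$ along the canonical orientation produces the codimension-one boundary graph $\partial_v \Gamma$, which is reduced by Proposition~\ref{prop:codim 1 boundaries iff reduced}. Since the canonical orientation on $\Gamma$ restricts to the canonical orientation on $\partial_v\Gamma$, Proposition~\ref{prop:limits commute with para} (with $L=0$) supplies an open neighborhood $U_v \subset \mathbb{R}^{\mathcal{V}(\Gamma)}$ of $(0,\pi/2)^{\mathcal{V}(\Gamma)\setminus\{v\}} \times [0,\pi/2)$ on which $\varphi_\Gamma$ extends smoothly, is a diffeomorphism onto its image, and satisfies $\evalat{\varphi_\Gamma}{v\mapsto0} = \varphi_{\partial_v\Gamma}$.

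Next, to cover the corners of $[0,\pi/2)^{\mathcal{V}(\Gamma)}$ at which several angles vanish simultaneously, I would argue inductively on the number of vanishing coordinates. The explicit trigonometric/polynomial formulas defining $\varphi_\Gamma$ (as a composition of rotation moves in the angles $\alpha_v$) extend to a smooth map on all of $\mathbb{R}^{\mathcal{V}(\Gamma)}$, so smoothness of the extension is automatic; the substantive content is that $d\varphi_\Gamma$ is injective, hence a local diffeomorphism onto its image, at every corner. This would follow by iterating Proposition~\ref{prop:limits commute with para}: having extended $\varphi_\Gamma$ across $\{\alpha_{v_1}=0\}$ to agree with $\varphi_{\partial_{v_1}\Gamma}$, one applies the proposition again to $\partial_{v_1}\Gamma$ to extend across $\{\alpha_{v_2}=0\}$, and so on. At each step the iterated boundary graph is reduced, thanks to the combinatorial rigidity of BCFW graphs and the classification of codimension-one boundaries in Corollary~\ref{coro: codim 1 boudnary classification} together with the $\mathrm{Arc}$-sequence descriptions of Propositions~\ref{prop:BCFW arcs} and~\ref{prop:arc limits boundary sub BCFW}.

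I would then take $U$ to be the union of the $U_v$ together with small open neighborhoods of the corner points constructed above, and shrink $U$ if necessary so that (i) $\varphi_\Gamma$ is globally injective on $U$ — combining the local diffeomorphism property with the bijectivity of $\varphi_\Gamma$ on the interior $(0,\pi/2)^{\mathcal{V}(\Gamma)}$ from Theorem~\ref{param bij} — and (ii) $\varphi_\Gamma(U)$ lies inside the open neighborhood of $\OGnon{k}{2k}$ in $\OG{k}{2k}$ on which $\widetilde{\Lambda}$ extends as a submersion by Proposition~\ref{prop:extends to a nbhd}, so that $\widetilde{\Lambda}$ is well-defined on $\varphi_\Gamma(U)$.

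The hard part is verifying injectivity of $d\varphi_\Gamma$ at the deeper corners of $[0,\pi/2)^{\mathcal{V}(\Gamma)}$: naive iteration of Proposition~\ref{prop:limits commute with para} requires the higher-codimension iterated boundary graphs $\partial_{v_s}\!\cdots\partial_{v_1}\Gamma$ to remain reduced at every stage, which is not automatic for arbitrary orthitroid cells but does hold for BCFW graphs due to the rigid tree-of-triangles structure. Any residual global (as opposed to local) injectivity issue can be handled by shrinking $U$, and the compatibility of the iterated extensions with each other on overlaps follows from the uniqueness of smooth extension on dense open subsets.
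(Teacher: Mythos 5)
Your proof follows the same strategy as the paper's: extend across each codimension-one face via Proposition~\ref{prop:limits commute with para} together with Proposition~\ref{prop:codim 1 boundaries iff reduced}, then intersect with the domain where $\widetilde{\Lambda}$ extends (Proposition~\ref{prop:extends to a nbhd}). The paper's own proof is terser and does not explicitly address the corners of $[0,\pi/2)^{\mathcal V(\Gamma)}$, so your added observation — that the iteration only works because iterated codimension-one boundary graphs of a BCFW graph under the canonical orientation remain reduced, a fact traceable to Proposition~\ref{prop:arc limits boundary sub BCFW} and the tree-of-triangles rigidity — is a useful gloss on the same argument rather than a different route.
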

\begin{proof}
    By Propositions~\ref{prop:limits commute with para} and~\ref{prop:codim 1 boundaries iff reduced} we can extend \(\varphi_\Gamma\) to \(V_1 \supset[0,\frac{\pi}{2})^{\mathcal V(\Gamma)} \) an open neighborhood such that for any \(v\in \mathcal V(\Gamma)\), we have that \( \evalat{\varphi_\Gamma}{v\mapsto0} = \varphi_{\partial_v \Gamma}\). 
    By Proposition~\ref{prop:extends to a nbhd} , \(\widetilde \Lambda\) can be extended to \(V_2\supset\OGnon{k}{2k}\). Set \(U = \varphi_\Gamma^{-1}(\varphi_\Gamma(V_1)\cap V_2)\) and we have that \(\evalat{\varphi_\Gamma}{U}\) is a diffeomorphism and \(\widetilde \Lambda\) is well-defined on \(\varphi_\Gamma(U)\).
\end{proof}

\section{Injectivity}
\label{sec:inj}

In the following section, our goal is to prove that the BCFW cells are mapped injectively under the amplituhedron map. See Theorem~\ref{BCFW inj} for the precise and stronger statement. Our strategy is to construct an explicit solution to the inverse problem in terms of the natural coordinates on the target space, namely the twistor variables. Recall Definition~\ref{mat vec abs subst def} for the definition of \(\mathcal{F}_{[2k]}^{k \times 2k}\).

\begin{dfn}
    \label{def:twistor sol}
    Let \(\Gamma\) be an OG graph, and \(M \in \mathcal F_{[2k]}^{k\times 2k}\). We will call \(M\) a \emph{twistor-solution of \(\Gamma\) (or \(\Omega_\Gamma\))}, and write \(M \in \mathcal F(\Gamma)\) (or \(M = \mathcal F(\Omega_\Gamma)\)),
    if for every \([C,\Lambda,Y]\in \mathcal U_k^\geq\) with \( C \in \Omega_{\Gamma}\), we have that  \(C = M(\Lambda,Y)\) as elements of \(\Gr{k}{2k}\). If such \(M\) exists, \(\Gamma\) (or \(\Omega_\Gamma\)) is said to be \emph{twistor-solvable}. Just like the matrix representing \(C\), twistor-solutions are defined up to row operations. 
\end{dfn}

\begin{rmk}
    \label{rmk:uniqueness of solution}
    The twistor variables on an orthitroid cell often satisfy some identities, thus the twistor-solution is not unique. It is clear however that multiple twistor-solutions for the same graph are unique up to those identities as they must be equal on the cell corresponding to that graph. Additionally, we give an algorithm to generate twistor-solutions algebraically using the arc-sequence representation of a graph. Such a representation defines a unique twistor-solution.
\end{rmk}

A twistor-solution is, in effect, an inverse of the amplituhedron map on a given cell. In other words, if a cell is twistor-solvable, it is necessarily mapped injectively under the amplituhedron map. We begin by constructing twistor-solutions for some basic cases, and then develop tools that allow us to build twistor-solutions for more complex graphs by induction on the moves described in Section~\ref{sec:prom}. Central among these tools is Proposition~\ref{prop:sol induction prop}, which enables the inversion of the amplituhedron map on BCFW cells and their boundaries, using the characterization from Section~\ref{sec:BCFW Cells and their Boundaries}. The goal of this section is to establish the following result, whose proof is provided in Section~\ref{sec:solving by arcs}:

\begin{thm}
\label{BCFW inj}
    All BCFW graphs and their boundaries are twistor-solvable, and thus the BCFW cells and their boundary cells map injectively by the amplituhedron map.
\end{thm}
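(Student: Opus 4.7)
The plan is to prove the theorem by induction along the arc-sequence representation of the graph, using the promotion machinery of Section~\ref{sec:prom}. By Proposition~\ref{prop:BCFW arcs} every BCFW graph \(\Gamma\) admits a reduced arc-sequence representation
\[
\Gamma \;=\; \mathrm{Arc}_{n_m,i_m}\cdots\mathrm{Arc}_{n_2,i_2}\mathrm{Arc}_{n_1,i_1}\mathrm{Arc}_{2,1}(O),
\]
with every \(n_j\leq 4\); and by Proposition~\ref{prop:arc limits boundary sub BCFW} every boundary graph of a BCFW graph has a reduced sub-BCFW sequence, again with all \(n_j\leq 4\). It therefore suffices to prove two statements: (i) the empty graph \(O\) is (trivially) twistor-solvable, via the empty matrix; and (ii) if \(\Gamma_0\) is twistor-solvable and \(\Gamma=\mathrm{Arc}_{n,\ell}(\Gamma_0)\) is reduced for some \(n\leq 4\) and admissible \(\ell\), then \(\Gamma\) is twistor-solvable as well. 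Iterating (ii) along the arc-sequence produces explicit twistor-solutions for all BCFW graphs and all their boundary graphs simultaneously.

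For the inductive step, fix \(M_0\in\mathcal{F}(\Gamma_0)\) and pick \([C,\Lambda,Y]\in\mathcal{U}_{k+1}^\geq\) with \(C\in\Omega_\Gamma\). By Definition~\ref{def:arc} there exist unique \(C'\in\Omega_{\Gamma_0}\) and \(\theta\in(0,\pi/2)^{n-2}\) with \(C=\mathrm{Arc}_{n,\ell}(\theta)(C')\), and Observation~\ref{obs:inv arc inj} guarantees that if one can recover \(\theta\) from the twistors of \((\Lambda,Y)\), then the inversion of the amplituhedron map on \(\Omega_\Gamma\) reduces to its inversion on \(\Omega_{\Gamma_0}\). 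The explicit vectors \(\mathbf{v}_{n,\ell,k}\) and angles \(\alpha_{n,\ell,k,i}\) of Definition~\ref{def:vec angle exp} are built for exactly this purpose: the rotation angles are read off as \(\mathrm{arccosh}\) of ratios of abstract-twistor quantities. Once \(\theta\) is recovered, applying \(\mathrm{Arc}_{n,\ell}\) formally to \(M_0\) (as a matrix in \(\mathcal{F}^{(k+1)\times 2(k+1)}\)) and evaluating, one has by Proposition~\ref{prop:com diag}
\[
(\mathrm{Arc}_{n,\ell}M_0)(\Lambda,Y)\;=\;\mathrm{Arc}_{n,\ell}(\theta)\bigl(M_0(\mathrm{Arc}_{n,\ell}(\theta)^{-1}(\Lambda),Y')\bigr)\;=\;\mathrm{Arc}_{n,\ell}(\theta)(C')\;=\;C,
\]
so that \(\mathrm{Arc}_{n,\ell}M_0\in\mathcal{F}(\Gamma)\) as required. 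That the index-support condition \(\mathrm{Arc}_{n,\ell}M_0\in\mathcal{F}_{[2(k+1)]}^{(k+1)\times 2(k+1)}\) is preserved follows from Observation~\ref{prom index supp obs}.

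The main obstacle is the case \(n=4\), where two new angles appear and are entangled nonlinearly. The formula for \(\mathbf{v}_{4,\ell,k}\) contains the radical \(S=\sqrt{S_{\{i_1,i_2,i_3,i_4\}}}\), so to make the induction go through one must verify: \textbf{(a)} the Mandelstam variable \(S_{\{i_1,i_2,i_3,i_4\}}\) is non-negative on \(\Omega_\Gamma\), so that \(S\) is real; \textbf{(b)} the correct branch of \(\mathrm{arccosh}\) is selected by the cell-positivity, so that \(\alpha_{4,\ell,k,i}\) indeed recovers the genuine rotation angle rather than its negative; and \textbf{(c)} the formal promotion and the geometric promotion agree on the cell, i.e.\ the ratio inside the \(\mathrm{arccosh}\) lies in \([1,\infty)\). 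Points (a) and (b) should follow from the Mandelstam-preservation Propositions~\ref{s rot}, \ref{s inc}, \ref{s cyc}, the orthogonality identity of Proposition~\ref{prop:S orth}, together with the strict positivity of consecutive twistors on the interior of an orthitroid cell (Proposition~\ref{prop:Consecutive Twistors}). Concretely, reducing back along the arc-sequence to a base case on which \(S\) is manifestly a sum of non-negative terms, one propagates the positivity outward through (a)--(c) move by move. This is precisely what is encapsulated in the (forward-referenced) Corollary~\ref{n<=4 twist solv coro} and Proposition~\ref{prop:sol induction prop}.

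Once (i) and (ii) are established, induction on the length of the arc-sequence proves that every BCFW graph admits a twistor-solution. The same induction, applied to the sub-BCFW representations of boundary graphs furnished by Proposition~\ref{prop:arc limits boundary sub BCFW}, proves that every boundary graph of a BCFW graph is twistor-solvable. Since any twistor-solution \(M\in\mathcal{F}(\Gamma)\) is, by definition, a one-sided inverse to \(\widetilde{\Lambda}\) expressed purely in terms of the twistors of the image point, the amplituhedron map is injective on \(\Omega_\Gamma\) for every such \(\Gamma\), completing the proof.
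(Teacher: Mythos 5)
Your proposal follows essentially the same route as the paper: reduce to the arc-sequence representation (Propositions~\ref{prop:BCFW arcs} and~\ref{prop:arc limits boundary sub BCFW}), promote twistor-solutions step by step via Proposition~\ref{prop:sol induction prop}, and conclude injectivity from the definition of a twistor-solution. The only slips are cosmetic: the angles \(\theta\) added by the \(\mathrm{Arc}\) move carry hyperbolic orientations, so \(\theta\in(0,\infty)^{n-2}\) rather than \((0,\pi/2)^{n-2}\), and the displayed chain of equalities compresses the content of Propositions~\ref{prop:sol arc to angle}, \ref{prop:sol commute for graphs}, and~\ref{prop:com diag} (the abstract angles \(\alpha_{n,\ell,k,i}\) must first be shown to evaluate to \(\theta\)) — but these are exactly the tools the paper uses, so the argument stands.
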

\begin{rmk}
\label{rmk:square roots}
Inverting the amplituhedron map for the Orthogonal Momentum amplituhedron requires the use of square roots, whereas in the Momentum Twistor (standard) amplituhedron, inverting the amplituhedron map on BCFW cells involves only rational functions in twistors. In inverting the amplituhedron map on BCFW cells, \cite{amptriag,even2023cluster} found the solution to the inverse problem was expressed as matrices with entries in polynomial rings of twistors. In contrast, the orthogonal momentum amplituhedron relies on more general functions, specifically those found in towers of quadratic extensions of the corresponding polynomial ring.

    We are, however, faced with the uncomfortable fact that the expressions for twistor-solutions often contain square roots. As is well-attested in literature, those are not defined on the whole real line. Alarmed readers can rest assured we will prove the expressions appearing in the square roots are positive on the images of the relevant orthitroid cells in Section~\ref{sec:positively rad}.
\end{rmk}

\begin{rmk}
\label{rmk inj strategy}
    Our strategy for finding twistor-solutions for graphs would be as follows: First find an external arc in \(\Gamma\). Suppose it has support \(I\) of length \(n\), This external arc corresponds to a unique vector in \(C\) by Proposition~\ref{prop:uniqueness of supp}. This means \(C\in \Omega_\Gamma\) has a unique vector \(\mathbf v\) with support \(I\). By its support we have \(n-1\) degrees of freedom for the vector \(\mathbf v\). We also have that \(\mathbf v\cdot\lambda = 0\), which further restricts to \(n-3\) degrees of freedom by Proposition~\ref{prop:twistor lambda}, which is sufficient if \(n \leq 3\). However, we have one additional closed constraint, that \(\mathbf v\,\eta\, \mathbf v^\intercal = 0\), which further restricts to \(n-4\) degrees of freedom. Meaning we do not have enough constraints to find our vector if \(n>4\), we can find exactly one for \(n<4\), but for \(n=4\) we have just enough using the orthogonality constraint.

As one of the constraint is a quadratic equation, it gives us two possible solutions for the case of \(n=4\). We claim that only one of these solutions will correspond to a positive \(C\). Furthermore, by directly solving the case of the top cell of \(\OGnon{3}{6}\) (see Lemma~\ref{triag}), we can identify the correct choice directly for the case of \(k=3\). We will then use promotion to find the correct choice of associated vector for any  external \(4\)-arc for any graph. 

Now, having found a single vector in \(C\in\Omega_\Gamma\), we will use it to reduce the question to a one of smaller \(k\) by removing the corresponding arc from the graph and finding its twistor-solution. This would allow us to use promotion to inductively generate a twistor-solution for any graph that can be built by recursively adding external arcs with support of length \(n\leq 4\) (see Proposition~\ref{prop:sol induction prop}). Recall we have seen in Section~\ref{sec:BCFW Cells and their Boundaries} that any graph that corresponds to a BCFW cell or their boundaries is indeed such a graph (recall Propositions~\ref{prop:BCFW arcs} and~\ref{prop:arc limits boundary sub BCFW}). This will allow us to invert the amplituhedron map on BCFW cells and their boundaries, proving injectivity.
\end{rmk}

\subsection{Basic Twistor-Solutions}
\label{sec:basic twist solution}

By Lemma~\ref{lem:twist in C}, given a point \(Y\) in the amplituhedron we can easily find two rows from its preimage in the positive Orthogonal Grassmannian. In particular, when $k=2$, we can easily invert the amplituhedron map:

\begin{prop}
    Let \(\Gamma\) be a \((k=2)\)-OG graph. If there exists a perfect orientation where \(i,j\in[4]\) are sources, then  
    \(
    \mathcal F (\Gamma) =\scalebox{0.8}{\( 
    \begin{pmatrix}
        \langle{i\,1}\rangle&-\langle{i\,2}\rangle&\langle{i\,3}\rangle&-\langle{i\,4}\rangle\\
        \langle{j\,1}\rangle&-\langle{j\,2}\rangle&\langle{j\,3}\rangle&-\langle{j\,4}\rangle\\
    \end{pmatrix}
    \)}\).
\end{prop}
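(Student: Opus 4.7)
The plan is to exploit the exceptional simplicity of the case $k=2$: both $C$ and $\lambda\eta$ are two-dimensional, and the second is contained in the first by Lemma~\ref{lem:twist in C}, so they must coincide. The entire task then reduces to writing down a matrix representative of $\lambda\eta$ in twistor coordinates.

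First I would fix some $[C,\Lambda,Y]\in\mathcal U_2^\geq$ with $C\in\Omega_\Gamma$ and verify that $\twist{Y}{i}{j}$ does not vanish. Since $\{i,j\}$ is the source set of a perfect orientation on $\Gamma$, the associated boundary measurement parametrization exhibits $C$ as the row-span of a matrix whose $\{i,j\}$-columns form the identity, so $\Delta_{\{i,j\}}(C)\neq 0$ on all of $\Omega_\Gamma$. Using $C=\lambda\eta$ together with the observation that right multiplication by the diagonal matrix $\eta$ changes each $2\times 2$ Pl\"ucker coordinate only by a $\pm 1$ sign, I obtain $\Delta_{\{i,j\}}(\lambda)\neq 0$, which by Proposition~\ref{prop:twist pluckers} gives $\twist{Y}{i}{j}\neq 0$.

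Next, Proposition~\ref{prop:twistor lambda} says that whenever $\twist{Y}{i}{j}\neq 0$, the $i$-th and $j$-th rows of $\Twist{Y}$ span $\lambda$; that is, $\lambda$ admits the representative
\[
\begin{pmatrix}
\twist{Y}{i}{1} & \twist{Y}{i}{2} & \twist{Y}{i}{3} & \twist{Y}{i}{4}\\
\twist{Y}{j}{1} & \twist{Y}{j}{2} & \twist{Y}{j}{3} & \twist{Y}{j}{4}
\end{pmatrix}.
\]
Right-multiplying by $\eta=\mathrm{diag}(1,-1,1,-1)$ yields a representative of $\lambda\eta=C$, and this is exactly the matrix $M$ displayed in the statement. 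Hence $C=M(\Lambda,Y)$ in $\Gr{2}{4}$, showing $M\in\mathcal F(\Gamma)$, as required.

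There is no serious obstacle here: this is the $k=2$ base case, and the argument is essentially dimension counting combined with the dictionary between twistors and Pl\"ucker coordinates of $\lambda$ developed in Section~\ref{sec:The ABJM Amplituhedron and Its Natural Coordinates}. The only mildly delicate point is ruling out the degenerate case $\twist{Y}{i}{j}=0$, and this is handled by the source property of $\{i,j\}$ in the chosen perfect orientation.
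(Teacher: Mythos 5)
Your proof is correct and takes essentially the same route as the paper's: dimension counting ($\lambda\eta\subset C$ with both two-dimensional forces $\lambda\eta=C$ when $k=2$), combined with the source property of $\{i,j\}$ to deduce $\Delta_{\{i,j\}}(C)\neq 0$ and hence $\twist{Y}{i}{j}\neq 0$, so that Proposition~\ref{prop:twistor lambda} identifies the displayed matrix as a representative of $\lambda\eta$. The only cosmetic difference is the ordering — you establish the non-vanishing of $\twist{Y}{i}{j}$ up front before invoking Proposition~\ref{prop:twistor lambda}, whereas the paper verifies the linear independence of the two rows afterwards — but the underlying argument is identical.
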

\begin{proof}
\label{prop ext arc}
    Let \(M\) be the above matrix. For \([C,\Lambda,Y]\in \mathcal U_2^\geq\) we have that \(M(Y,\Lambda) = (\lambda\eta)_{\{i,j\}}\) by Proposition~\ref{prop:twistor lambda}. By Lemma~\ref{lem:twist in C}, \(\mathrm{dim} (\lambda\eta) = 2 \) and \(\lambda\eta\subset C\), therefore \(\lambda\eta = C\) for \(k=2\). It now is enough to show that the \(i,j\) rows are linearly independent. Since \(\Delta_{\{i,j\}}(M) = \pm \langle i \,j\rangle^2\), it is enough to show that \(\Delta_{\{i,j\}}(C) \neq 0\). Since we have an orientation in which  \(i,j\) are sources, the corresponding representation of \(C\) has \(C^{\{i,j\}} = \mathrm{Id}_{2\times 2}\), thus \(\Delta_{\{i,j\}}(C) \neq 0\).
\end{proof}

In the case where \(k = 3 \), the top cell of the positive Orthogonal Grassmannian is also a BCFW cell. 
Albeit more involved, we can still find the twistor-solution for the that cell. Using the fact that \(\mathrm{dim} (\lambda\eta) = 2 \) and \(\lambda\eta\subset C\), together with the additional orthogonality constraint \(C\eta C^\intercal =0\),  we can find exactly two preimages in the Orthogonal Grassmannian:
\begin{dfn}
\label{def:triag}
Let \(\Delta_\pm \in \mathcal{F}_{[6]}^{3\times 6}\) be defined as
\[
\Delta_\pm :=\scalebox{0.8}{\(
\begin{pmatrix}
\langle3\,5\rangle&\pm \langle4\,6\rangle&-\langle1\,5\rangle&\mp \langle2\,6\rangle&\langle1\,3\rangle&\pm \langle2\,4\rangle\\
\langle1\,2\rangle&0&-\langle2\,3\rangle&\langle2\,4\rangle&-\langle2\,5\rangle&\langle2\,6\rangle\\
0&-\langle1\,2\rangle&\langle1\,3\rangle&-\langle1\,4\rangle&\langle1\,5\rangle&-\langle1\,6\rangle\\
\end{pmatrix}
\)}\]
\end{dfn}
Using a direct calculation of the sign of their minors, we can show only one of them lies in the positive part of the Orthogonal Grassmannian. See Appendix~\ref{sec:triag} for the complete proof.
\begin{lem}
\label{triag}
Let \(\Gamma\), be the OG graph representing the top cell of  \(\OGnon{3}{6}\). We have that the twistor-solution \(\mathcal F(\Gamma) =\Delta_+ \).
\end{lem}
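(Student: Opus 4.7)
The plan is to prove the lemma by first showing that any preimage of $Y$ in the top cell of $\OGnon{3}{6}$ must belong to a finite set of at most two candidates, then identifying these candidates with $\Delta_+(\Lambda,Y)$ and $\Delta_-(\Lambda,Y)$, and finally ruling out $\Delta_-$ by a direct sign calculation on Pl\"ucker coordinates.

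For the finite-candidate argument: by Lemma~\ref{lem:twist in C}, every $C\in\OGnon{3}{6}$ in the preimage of $Y$ contains the $2$-dimensional subspace $\lambda\eta$, so $C$ is determined by one additional row $w$ modulo $\lambda\eta$. The condition $C\Lambda=Y$ fixes $w\Lambda$ up to $\lambda\eta\cdot\Lambda$, and since $\Lambda\in\mathrm{Mat}^>_{6\times 5}$ has rank $5$, the kernel of multiplication by $\Lambda$ on the left is $1$-dimensional. So the admissible $w$ form a $1$-parameter affine family. The self-orthogonality $C\eta C^\intercal=0$, combined with momentum conservation $\lambda\eta\lambda^\intercal=0$ (Proposition~\ref{prop:mom cons}), reduces to a single quadratic equation in this parameter, hence at most two solutions.

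Next I would verify algebraically that both $\Delta_+$ and $\Delta_-$ are indeed solutions. Rows $2$ and $3$ of $\Delta_\pm$ are the second and first rows of $\Twist{Y}\eta$, so by Proposition~\ref{prop:twistor lambda} they span $\lambda\eta$; in particular both matrices contain $\lambda\eta$ and, when projected by $\Lambda$, yield a subspace with the correct associated $\lambda$, which by Proposition~\ref{prop:Bdiffeo} forces $\Delta_\pm\Lambda=Y$ in $\Gr{3}{5}$. The self-orthogonality $\Delta_\pm\eta\Delta_\pm^\intercal=0$ reduces via the Pl\"ucker relations (Proposition~\ref{prop:twist pluckers}) to $\lambda\eta\lambda^\intercal=0$, which holds. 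Rank $3$ follows from the observation that the $\{2,4,6\}$ or $\{1,3,5\}$ minors are (up to sign) $\langle 1\,2\rangle^2$ times an expression in twistors that is nonzero on the interior of the top cell. By the upper bound from the previous paragraph, $\Delta_+(\Lambda,Y)$ and $\Delta_-(\Lambda,Y)$ together exhaust the candidates.

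Finally, to show $\mathcal F(\Gamma)=\Delta_+$ rather than $\Delta_-$, I would compute the Pl\"ucker coordinates $\Delta_I(\Delta_\pm)$ for each $I\in\binom{[6]}{3}$. Using the twistor Pl\"ucker relations, each such minor factors, after simplification, as a product (or sum of products) of consecutive twistors $\langle i,i+1\rangle$, whose signs are pinned down by Proposition~\ref{prop:Consecutive Twistors}. The claim is that $\Delta_I(\Delta_+)\geq 0$ for every $I$, whereas at least one $\Delta_I(\Delta_-)$ is strictly negative on the interior; since the top cell is connected and these are continuous nonzero functions there, one negative value on one fiber suffices to rule out $\Delta_-$. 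The main obstacle is the bookkeeping in this step: there are $\binom{6}{3}=20$ minors, and each requires an application of the Pl\"ucker identities to recognize the resulting expression as a signed product of consecutive twistors. However, the twistor antisymmetry and the $i\leftrightarrow 7-i$ symmetry visible in $\Delta_\pm$ collapse many of these into equivalent computations, making the verification systematic; the full case-by-case calculation is the content of Appendix~\ref{sec:triag}.
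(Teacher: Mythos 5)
Your broad strategy — reduce to two candidates $\Delta_\pm$, then rule out $\Delta_-$ by positivity — matches the paper's, but both the reduction and the final step differ in ways that matter, and the final step contains a genuine gap.

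On the reduction, the paper works entirely inside $\orth{\lambda}=\orth{\Twist{Y}}$: since $C\subset\orth{\lambda}$ and $\lambda\eta\subset C$, a preimage is a $3$-plane between the fixed $2$-plane $\lambda\eta$ and the fixed $4$-plane $\orth{\lambda}$, i.e.\ a $\mathbb{P}^1$ of candidates. It then diagonalizes $\eta$ restricted to $\orth{\lambda}$ to make the single quadratic constraint manifest. Your route, via $C\Lambda=Y$ and the $1$-dimensional kernel of $w\mapsto w\Lambda$, can be pushed through, but as written it glosses over the interplay of constraints: you impose only $C\Lambda=Y$ to cut down the family, never explicitly invoke $C\subset\orth\lambda$ (which is what kills the off-diagonal entries of $C\eta C^\intercal$), and don't note that the two candidates are distinct precisely because $S_{\{1,3,5\}}\neq 0$. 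Your middle paragraph verifying that both $\Delta_\pm$ are preimages is logically unnecessary once the two-candidate bound is in place: $C$ is already known to be in $\OGnon{3}{6}$ and is in the candidate set, so the only work left is to exclude $\Delta_-$.

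The genuine gap is in the last step. You claim each minor $\Delta_I(\Delta_\pm)$ simplifies to a signed product of consecutive twistors $\langle i,i+1\rangle$, whose signs follow from Proposition~\ref{prop:Consecutive Twistors}, and that this should be checked over all twenty $I\in\binom{[6]}{3}$. This is not what happens. The minor $\Delta_{\{1,3,5\}}(\Delta_\pm)$ is $-\langle 1\,2\rangle\, S_{\{1,3,5\}}(\Lambda,Y)$, and $\Delta_{\{2,4,6\}}(\Delta_\pm)$ is $\mp\langle 1\,2\rangle\, S_{\{1,3,5\}}(\Lambda,Y)$; the Mandelstam variable $S_{\{1,3,5\}}$ is a sum of squares, not a product of consecutive twistors, and its strict negativity on the interior (Proposition~\ref{prop:135} in the appendix) is a separate nontrivial input you never invoke. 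Once one has these two complementary minors with opposite signs for $\Delta_-$, the paper invokes Theorem~\ref{thm:comp pluckers} — that complementary Pl\"ucker coordinates of any element of $\OGnon{k}{2k}$ must be \emph{equal} — to conclude $\Delta_-\notin\OGnon{3}{6}$ from a single sign comparison, with no need to examine the other eighteen minors or to prove positivity of $\Delta_+$. Your plan would require a factorization that does not exist and omits the complementary-minor constraint that is the actual engine of the argument.
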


\subsection{Promoting Twistor-Solutions}
Having found some basic examples for twistor-solutions, we will now develop the tools needed to construct twistor-solutions inductively using the moves we defined in Sections~\ref{sec:prom}.
\begin{dfn}
\label{twistor sol arc def}
    For \(\tau_\ell\) an external arc of \(\Gamma\) an OG graph with support \(I\), we say that \(\mathbf{v} \in\mathcal F_{I}^{2k}\) is the \emph{twistor-solution} for \(\tau_\ell\) and write
    \(
    \mathbf v = \mathcal F(\tau_\ell,\Gamma),
    \)
    if for any \([C,\Lambda,Y]\in\mathcal U ^\geq_k\) with \(C\) in the interior of \(\Omega_\Gamma\), we have that the vector associated to \(\tau_\ell\) in \(C\) equals \(\mathbf v (\Lambda,Y)\). As the support of the associated vector is \(I\), the support of \(\mathbf v\) is \(I\). Notice that we also require the index-support of the twistor-solution of the  arc to be in \(I\). An arc is said to be \emph{twistor-solvable} if it has a twistor-solution. 
\end{dfn}
Note that just like the associated vector to an arc, the twistor-solution is defined up to scaling, and that Remark~\ref{rmk:uniqueness of solution} applies.

\begin{dfn}
\label{def:twist sol ang}
    Let \(\Gamma^\omega\) be a perfectly oriented \(k\)-OG graph. Label the internal vertices \(\{v_i^\omega\}_{i=1}^n\). Set \(v^\omega\in\{v_i^\omega\}_{i=1}^n\), and recall Definition~\ref{def:C as a func on vert}.

    Suppose that there exists \(a\in \mathcal F_{[2k]}\)  such that for any \([C,\Lambda,Y]\in\mathcal U^\geq_k\) and \(C\in \Omega_\Gamma\), we have that 
    \(
    C(v^\omega,\Gamma)   = a(\Lambda,Y).
    \)
    When such a solution exists we say that the oriented vertex \(v^\omega\) and the corresponding vertex \(v\) are \emph{twistor-solvable} and write \(a = \mathcal F(v^\omega,\Gamma)\) and call it a \emph{twistor-solution} for the oriented \(v\).
\end{dfn}
Note that Remark~\ref{rmk:uniqueness of solution} applies. 
We will now show that twistor-solutions behave nicely under promotion. This will allow us to inductively reduce the problem of finding twistor-solutions.
\begin{prop}
    \label{prop:sol commute for graphs}
    For  a twistor-solvable OG graph  \(\Gamma\), we have that
    
\begin{enumerate}
  \item \(\mathcal F(\mathrm{Cyc}(\Gamma))=\mathrm{Cyc}(\mathcal F(\Gamma)).\)
  \item \(\mathcal F( \mathrm{Inc}_i(\Gamma)) = \mathrm{Inc}_i(\mathcal F(\Gamma))\).
\end{enumerate}
    Furthermore, suppose \(\Gamma\) is reduced and \(\tau_i\) and \(\tau_{i+1}\) (considered mod \(2k\)) are two non-crossing arcs. Suppose there exist a twistor-solution \(a = \mathcal F(v^\omega,\mathrm{Rot}_{i,i+1}(v^\omega) (\Gamma))\). Since the above arcs are non-crossing in \(\Gamma\), we have that \(\mathrm{Rot}_{i,i+1} (\Gamma)\) is reduced by Corollary~\ref{reduced move coro}. We have that
\begin{enumerate}[resume]
  \item \(\mathcal F( \mathrm{Rot}_{i,i+1}(v^\omega)(\Gamma)) = \mathrm{Rot}_{i,i+1}(a)(\mathcal F(\Gamma)).\)
\end{enumerate}
\end{prop}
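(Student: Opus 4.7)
The plan is to prove all three statements by a common lifting argument. Fix one of the moves $G\in\{\mathrm{Cyc},\mathrm{Inc}_i,\mathrm{Rot}_{i,i+1}(v^\omega)\}$ and let $\Gamma'=G(\Gamma)$ denote the resulting graph. Given a point $[C,\Lambda,Y]\in\mathcal U^\geq_{k'}$ with $C\in\Omega_{\Gamma'}$, the strategy is: (i) produce a lift $[C_0,\Lambda_0,Y_0]\in\mathcal U^\geq_k$ with $C_0\in\Omega_\Gamma$ such that $G[C_0,\Lambda_0,Y_0]=[C,\Lambda,Y]$; (ii) use twistor-solvability of $\Gamma$ to get $C_0=\mathcal F(\Gamma)(\Lambda_0,Y_0)$ in $\mathrm{Gr}_{k,2k}$; (iii) apply $G$ to both sides and invoke Proposition~\ref{prop:com diag} to exchange the order of $G$ and evaluation, yielding $C=(G\,\mathcal F(\Gamma))(\Lambda,Y)$.

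For $G=\mathrm{Cyc}$, the lift is immediate: $\mathrm{Cyc}$ acts bijectively between $\Omega_\Gamma$ and $\Omega_{\mathrm{Cyc}(\Gamma)}$ on the Grassmannian level, so take $C_0=\mathrm{Cyc}^{-1}(C)$ and $[\Lambda_0,Y_0]$ accordingly. For $G=\mathrm{Inc}_i$, the cell $\Omega_{\mathrm{Inc}_i(\Gamma)}$ consists exactly of matrices of the form $\mathrm{Inc}_i(C_0)$ for $C_0\in\Omega_\Gamma$, so the lift is $[C_0,\mathrm{Inc}_i^{-1}(\Lambda),C_0\cdot\mathrm{Inc}_i^{-1}(\Lambda)]$, using Corollary~\ref{coro:inc pos} to ensure positivity of the new $\Lambda$-matrix. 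In both cases the conclusion then follows directly from Proposition~\ref{prop:com diag} applied to the matrix $\mathcal F(\Gamma)\in\mathcal F^{k\times 2k}$.

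The $\mathrm{Rot}$ case is the substantive one and the main obstacle, because the move $\mathrm{Rot}_{i,i+1}(\alpha)$ depends on the numerical angle $\alpha$, and to exchange $\mathrm{Rot}_{i,i+1}(a)$ with evaluation via Proposition~\ref{prop:com diag} one must know that the value of the function $a$ on $[C,\Lambda,Y]$ is precisely the angle used to build $C$ from $C_0$. This is exactly what the hypothesis that $a=\mathcal F(v^\omega,\mathrm{Rot}_{i,i+1}(v^\omega)(\Gamma))$ is a twistor-solution for the oriented vertex $v^\omega$ provides: by Definition~\ref{def:twist sol ang}, the parameter associated to $v^\omega$ in the canonical parametrization of $\Omega_{\Gamma'}$ at the point $C$ equals $a(\Lambda,Y)$. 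Since $\tau_i,\tau_{i+1}$ are non-crossing arcs in $\Gamma$, Corollary~\ref{reduced move coro} guarantees $\Gamma'$ is reduced and Theorem~\ref{param bij} gives the bijective parametrization that produces the lift $[C_0,\Lambda_0,Y_0]$ with $\mathrm{Rot}_{i,i+1}(a(\Lambda,Y))[C_0,\Lambda_0,Y_0]=[C,\Lambda,Y]$ and $C_0\in\Omega_\Gamma$.

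With the lift in hand, set $\alpha\defeq a(\Lambda,Y)$ and compute
\[
\bigl(\mathrm{Rot}_{i,i+1}(a)\,\mathcal F(\Gamma)\bigr)(\Lambda,Y)
=\mathrm{Rot}_{i,i+1}(\alpha)\bigl(\mathcal F(\Gamma)(\Lambda_0,Y_0)\bigr)
=\mathrm{Rot}_{i,i+1}(\alpha)(C_0)=C,
\]
where the first equality is Proposition~\ref{prop:com diag} (together with Observation~\ref{prom index supp obs} to control index-supports so that the evaluation makes sense), the second uses twistor-solvability of $\Gamma$, and the last is the construction of the lift. The only remaining technical check is that the defining identities hold as equalities of elements of $\mathrm{Gr}_{k,2k}$ rather than of matrices, which is automatic since $\mathcal F(\Gamma)$ and twistor-solutions in general are defined up to row operations (Remark~\ref{rmk:uniqueness of solution}) and the moves act on cells of the Grassmannian.
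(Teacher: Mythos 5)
Your proposal is correct and follows essentially the same strategy as the paper's own proof: lift a point $[C,\Lambda,Y]$ with $C\in\Omega_{G(\Gamma)}$ back to $[C_0,\Lambda_0,Y_0]$ with $C_0\in\Omega_\Gamma$, use twistor-solvability of $\Gamma$ to write $C_0=\mathcal F(\Gamma)(\Lambda_0,Y_0)$, and then push forward and exchange evaluation with the move via Proposition~\ref{prop:com diag}. The paper's proof is terser (it explicitly writes out the $\mathrm{Cyc}$ case and says the others are similar, flagging only that $a\in\mathcal F_{[2k]}$ ensures $\mathrm{Rot}_{i,i+1}(a)(\mathcal F(\Gamma))\in\mathcal F_{[2k]}^{k\times 2k}$), but your more detailed treatment of the $\mathrm{Rot}$ lift is the same argument spelled out.
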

\begin{proof}
    For the first item, take \([C,\Lambda,Y]\in \mathcal U^{\geq}_k\) with \(C \in \Omega_{\mathrm{Cyc}(\Gamma)}\). Thus \([C,\Lambda,Y] = \mathrm{Cyc}[C_0,\Lambda_0,Y_0]\) with \([C_0,\Lambda_0,Y_0]\in  \mathcal U^{\geq}_k\) and \(C_0 \in \Omega_{\Gamma}\) by Proposition~\ref{prop:prom rules}. Thus \(C_0 = \mathcal F(\Gamma)(\Lambda_0,Y_0)\), and  \(C = \mathrm{Cyc}(\mathcal F(\Gamma)(\Lambda_0,Y_0)) = \mathcal F(\Gamma)(\Lambda,Y)\) by Proposition~\ref{prop:com diag}.
    
    For the other claims the proof is similar, except that for the last item we have to make sure that \(\mathrm{Rot}_{i,i+1}(a)(\mathcal F(\Gamma))\in F_{[2k]}^{k\times 2k}\), as \(a\) is an expression in twistors. Luckily, this is immediate as \(a\in \mathcal F_{[2k]}\) as a twistor-solution for an angle.
\end{proof}

\begin{prop}
    \label{prop:sol commute for arcs}
    Let \(\Gamma\) be an OG graph, and \(\tau_\ell\) be a twistor-solvable external arc with support \(I\). We have that
    \begin{enumerate}
        \item \(\mathcal F(\mathrm{Cyc}(\tau_\ell),\mathrm{Cyc}(\Gamma))=\mathrm{Cyc}(\mathcal F(\tau_\ell,\Gamma)).\)
    \end{enumerate}
    Let \(i\) be such that \(\mathrm{Inc}_i(\tau_\ell)\) is an external arc of \(\mathrm{Inc}_i(\Gamma)\), that is, \(i\notin I\). Then we have that
    \begin{enumerate}[resume]
        \item \(\mathcal F( \mathrm{Inc}_i(\tau_\ell),\mathrm{Inc}_i(\Gamma)) = \mathrm{Inc}_i(\mathcal F(\tau_\ell,\Gamma)).\)
    \end{enumerate}
    Furthermore, suppose \(\Gamma\) is reduced and \(\tau_i\) and \(\tau_{i+1}\) (considered mod \(2k\)) are two non-crossing arcs.  Let \(a = \mathcal F(v^\omega,\mathrm{Rot}_{i,i+1}(v^\omega) (\Gamma))\).  Since the above arcs are non-crossing in \(\Gamma\), we have that \(\mathrm{Rot}_{i,i+1} (\Gamma)\) is reduced by Corollary~\ref{reduced move coro}. Then we have that
    \begin{enumerate}[resume]
        \item \(\mathcal F( \mathrm{Rot}_{i,i+1}(\tau_\ell),\mathrm{Rot}_{i,i+1}(\Gamma)) = \mathrm{Rot}_{i,i+1}(a)(\mathcal F(\tau_\ell,\Gamma)).\)
    \end{enumerate}
\end{prop}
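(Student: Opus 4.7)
The plan is to mimic the argument of Proposition~\ref{prop:sol commute for graphs}, but tracking only the single row of a matrix representative of $C$ corresponding to the vector associated to the arc $\tau_\ell$, rather than the full matrix. For each of the three items the strategy is the same: take $[C,\Lambda,Y]\in\mathcal U^\geq_k$ with $C$ in the interior of the target orthitroid cell, use Proposition~\ref{prop:prom rules} to write $[C,\Lambda,Y]=G[C_0,\Lambda_0,Y_0]$ for some $[C_0,\Lambda_0,Y_0]\in\mathcal U^\geq_k$ with $C_0$ in the interior of $\Omega_\Gamma$ (where $G$ is $\mathrm{Cyc}$, $\mathrm{Inc}_i$, or $\mathrm{Rot}_{i,i+1}(\alpha)$ with $\alpha=a(\Lambda_0,Y_0)$), invoke the defining property of $\mathcal F(\tau_\ell,\Gamma)$ to identify the vector $\mathbf v_0$ associated to $\tau_\ell$ in $C_0$ with $\mathcal F(\tau_\ell,\Gamma)(\Lambda_0,Y_0)$, and then combine Proposition~\ref{prop:com diag} with the observation that the column-level action of $G$ on $C_0$ carries $\mathbf v_0$ precisely to the vector associated to $G(\tau_\ell)$ in $G(C_0)=C$.

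For item (1), the $\mathrm{Cyc}$ move permutes the columns of $C_0$ cyclically with a sign on the last column when $k$ is even, so the row of a representative of $C_0$ realizing the associated vector of $\tau_\ell$ becomes the analogous row of $\mathrm{Cyc}(C_0)$ realizing the associated vector of $\mathrm{Cyc}(\tau_\ell)$, with the same cyclic shift and sign. This is exactly how $\mathrm{Cyc}$ acts on $\mathcal F^{2k}$ (entry-wise, then as a vector in $\mathbb R^{2k}$). For item (2), $\mathrm{Inc}_i$ inserts two zero columns at positions $i,i+1$ into $C_0$ and adds an auxiliary row that is disjoint in support from $\tau_\ell$. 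Hence the row realizing $\mathbf v_0$ acquires zeros at the new positions $i,i+1$ and keeps its other entries, which matches the action of $\mathrm{Inc}_i$ on $\mathcal F^{2k}$ given that the support $I$ of $\tau_\ell$ avoids $\{i,i+1\}$ by hypothesis.

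For item (3), $\mathrm{Rot}_{i,i+1}(\alpha)$ mixes columns $i,i+1$ of $C_0$ by a hyperbolic rotation of angle $\alpha=a(\Lambda_0,Y_0)$, so the entries at positions $i,i+1$ of $\mathbf v_0$ get mixed by the very same rotation, which is the definition of the action of $\mathrm{Rot}_{i,i+1}(a)$ on $\mathcal F^{2k}$. The main obstacle, and the place demanding the most care, is the index-support bookkeeping in this Rot case: one must check that $\mathrm{Rot}_{i,i+1}(a)(\mathcal F(\tau_\ell,\Gamma))$ qualifies as a twistor-solution in the sense of Definition~\ref{twistor sol arc def}, i.e.\ that its index-support is contained in the support of $\mathrm{Rot}_{i,i+1}(\tau_\ell)$. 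By Observation~\ref{prom index supp obs}, this index-support lies in $\mathrm{Rot}_{i,i+1}(I)\cup\mathcal I(a)$, so the verification reduces to showing that the twistor-solution $a$ of the newly added oriented vertex has index-support inside the support of the rotated arc; this is built into the construction of $a$ via Definition~\ref{def:twist sol ang}, since $a$ is determined by the angle at a vertex whose adjacent external legs all lie within the support of $\mathrm{Rot}_{i,i+1}(\tau_\ell)$. Once this is in hand, Proposition~\ref{prop:com diag} yields the claimed equality.
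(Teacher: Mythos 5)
Your proposal follows the same route as the paper: decompose $[C,\Lambda,Y]=R[C_0,\Lambda_0,Y_0]$ via Proposition~\ref{prop:prom rules}, identify $\mathcal F(\tau_\ell,\Gamma)(\Lambda_0,Y_0)$ with the vector associated to $\tau_\ell$ in $C_0$, push it forward through the column action of $R$, and close with Proposition~\ref{prop:com diag}. Two places would benefit from tightening. First, the assertion that the column action of $G$ carries $\mathbf v_0$ ``precisely to the vector associated to $G(\tau_\ell)$ in $G(C_0)$'' is exactly the step that needs an argument: your column-tracking shows $R(\mathbf v_0)$ has support inside $R(I)$, but to conclude it is \emph{the} associated vector you need that $R(\tau_\ell)$ is an external arc of $R(\Gamma)$ with support $R(I)$ (Corollary~\ref{arc supp prom}) together with the uniqueness from Proposition~\ref{prop:uniqueness of supp}; the paper makes this step explicit and you should too. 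Second, for the Rot case you attribute the required index-support containment of $a$ to Definition~\ref{def:twist sol ang}, but that definition only demands $a\in\mathcal F_{[2k]}$ and says nothing about $\mathcal I(a)$ lying inside $\mathrm{Rot}_{i,i+1}(I)$; the paper instead treats $a\in\mathcal F_{\mathrm{Rot}_{i,i+1}(I)}$ as a requirement coming from the context (and in the relevant applications it is supplied by Proposition~\ref{prop:sol arc to angle}, which controls the index-support of angle solutions for vertices lying on the arc). Neither point is a fundamental gap, but both are load-bearing and should be stated rather than asserted.
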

\begin{proof}
    Let the \(R\) be some move as above. Take \([C,\Lambda,Y]\in \mathcal U^{\geq}_k\) with \(C \in \Omega_{R(\Gamma)}\). Thus \([C,\Lambda,Y] = R[C_0,\Lambda_0,Y_0]\) with \(R [C_0,\Lambda_0,Y_0]\in  \mathcal U^{\geq}_k\) and \(C_0 \in \Omega_{\Gamma}\) by Proposition~\ref{prop:prom rules}. As \(\tau_\ell\) is a twistor-solvable external arc of \(\Gamma\), we have that \(\mathcal F(\tau,\Gamma)(\Lambda_0,Y_0)\) is its associated vector in \(C_0\), meaning its support is contained in \(I\).

    We have that \(C = R(C_0)\) thus \(R(\mathcal F(\tau,\Gamma)(\Lambda_0,Y_0))\in C\). Clearly \(R(\mathcal F(\tau,\Gamma)(\Lambda_0,Y_0))\) has support contained in \(R(I)\). By Corollary~\ref{arc supp prom} we have that \(R(\tau_\ell)\) is an external arc of \(R(\Gamma)\) with support \(R(I)\). Thus by Proposition~\ref{prop:uniqueness of supp}, \(R(\mathcal F(\tau,\Gamma)(\Lambda_0,Y_0))\) is its associated vector in \(C\). By Proposition~\ref{prop:com diag}, we have that \(R(\mathcal F(\tau,\Gamma)(\Lambda_0,Y_0)) = R(\mathcal F(\tau,\Gamma))(\Lambda,Y)\). 
    
    If the move was a rotation we required that the angle has \(a\in\mathcal F_{\mathrm{Rot}_{i,i+1}(I)}\) in Definition~\ref{twistor sol arc def}, so we have  that \(R(\mathcal F(\tau,\Gamma))\) has index-support in \(R(I)\).  Thus we have that \(R(\mathcal F(\tau,\Gamma))\) is a twistor-solution for \(R(\tau_\ell)\) in \(R(\Gamma)\).
\end{proof}
\begin{prop}
    \label{prop:sol commute for angles}
    Let \(\Gamma\) be a perfectly orientated OG graph, and \(v^\omega\) a twistor-solvable oriented vertex. We have that
    \begin{enumerate}
        \item \(\mathcal F(\mathrm{Cyc}(v^\omega),\mathrm{Cyc}(\Gamma))=\mathrm{Cyc}(\mathcal F(v^\omega,\Gamma)).\)
        \item For any \(i\in [2k]\), \(\mathcal F( \mathrm{Inc}_i(v^\omega),\mathrm{Inc}_i(\Gamma) )= \mathrm{Inc}_i(\mathcal F(v^\omega,\Gamma)).\)
    \end{enumerate}
    Furthermore, suppose that \(\mathrm{Rot}_{i,i+1} (u^\psi) (\Gamma)\) is reduced and we have that the oriented vertex \(u^\psi\) is twistor-solvable and \(b = \mathcal F(u^\psi, \mathrm{Rot}_{i,i+1} (u^\psi) (\Gamma))\) is its twistor-solution. Then
    \begin{enumerate}[resume]
        \item \( \mathcal F( \mathrm{Rot}_{i,i+1}(v^\omega),\mathrm{Rot}_{i,i+1}  (\Gamma)) = \mathrm{Rot}_{i,i+1} (b) (\mathcal F(v^\omega,\Gamma)).\)
    \end{enumerate}
\end{prop}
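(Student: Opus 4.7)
The plan is to mirror the proof strategy of Propositions~\ref{prop:sol commute for graphs} and~\ref{prop:sol commute for arcs}, replacing the role of the full matrix or of the arc-associated vector by the scalar angle parameter at a single oriented internal vertex. For each of the three moves $G \in \{\mathrm{Cyc},\, \mathrm{Inc}_i,\, \mathrm{Rot}_{i,i+1}(b)\}$ under consideration, I would take an arbitrary $[C,\Lambda,Y] \in \mathcal{U}_k^\geq$ with $C$ in the interior of $\Omega_{G(\Gamma)}$. By Proposition~\ref{prop:prom rules}, I can write $[C,\Lambda,Y] = G[C_0,\Lambda_0,Y_0]$ with $C_0$ in the interior of $\Omega_\Gamma$ (uniquely for Cyc and Rot; uniquely in the sense of $\mathrm{Inc}^{-1}$ of Remark~\ref{rmk:inv inc abuse} for the Inc case, which is all that is needed here).

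The key geometric observation is that each of the three local moves acts only near the boundary disk and never touches the internal structure of the graph at $v^\omega$; in particular the parameter that $C_0$ records at $v^\omega$ in the orientation $\omega$ is literally the same real number as the one recorded by $C$ at the image vertex $G(v^\omega)$ in the inherited orientation on $G(\Gamma)$. Combining this with the assumed twistor-solvability of $v^\omega$ applied at $[C_0,\Lambda_0,Y_0]$ gives
\[
C\bigl(G(v^\omega),\, G(\Gamma)\bigr) \;=\; C_0(v^\omega,\Gamma) \;=\; \mathcal{F}(v^\omega,\Gamma)(\Lambda_0, Y_0).
\]
Now I apply Proposition~\ref{prop:com diag} to the scalar function $f = \mathcal{F}(v^\omega,\Gamma) \in \mathcal{F}_{[2k]}$ to get $(Gf)(\Lambda,Y) = G\bigl(f(\Lambda_0, Y_0)\bigr)$. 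Since $f(\Lambda_0, Y_0)$ is a real number and the moves act as the identity on $\mathbb{R}$ by construction, the right-hand side equals $f(\Lambda_0, Y_0)$. Chaining the two displays yields $\bigl(G\mathcal{F}(v^\omega,\Gamma)\bigr)(\Lambda,Y) = C\bigl(G(v^\omega),\, G(\Gamma)\bigr)$, which is exactly the defining property of $\mathcal{F}(G(v^\omega), G(\Gamma))$, modulo verifying that the resulting expression has the correct index-support.

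The index-support check is the only real bookkeeping issue, and is also where the Rot case becomes subtle. By Observation~\ref{prom index supp obs}, $\mathcal{I}(G\mathcal{F}(v^\omega,\Gamma)) \subseteq G(\mathcal{I}(\mathcal{F}(v^\omega,\Gamma))) \cup J$, where $J$ is the index-support of the rotation parameters appearing inside $G$. For Cyc and Inc this is immediate: $J = \emptyset$ and $G([2k])$ lies in the new index range $[2k\pm\text{shift}]$. For the Rot case, $J$ is the index-support of the angle $b$, and the hypothesis that $u^\psi$ is twistor-solvable with $b = \mathcal{F}(u^\psi,\mathrm{Rot}_{i,i+1}(u^\psi)(\Gamma)) \in \mathcal{F}_{[2k]}$ gives $J \subseteq [2k]$, so the resulting expression $\mathrm{Rot}_{i,i+1}(b)(\mathcal{F}(v^\omega,\Gamma))$ is a genuine element of $\mathcal{F}_{[2k]}$. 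The main obstacle to watch out for is precisely the interplay in item~(3): the angle used at the geometric level of the move $\mathrm{Rot}_{i,i+1}$ must agree with the scalar $b$ used to promote $\mathcal{F}(v^\omega,\Gamma)$, and this is exactly guaranteed by the assumption that $b = \mathcal{F}(u^\psi, \mathrm{Rot}_{i,i+1}(u^\psi)(\Gamma))$, so that at the point $[C,\Lambda,Y]$ the geometric Rot-angle and the algebraic Rot-angle match.
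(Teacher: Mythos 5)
Your proof is correct and follows essentially the same route as the paper: decompose $[C,\Lambda,Y]=G[C_0,\Lambda_0,Y_0]$ via Proposition~\ref{prop:prom rules}, observe that the angle at the vertex is preserved under the move, and conclude by Proposition~\ref{prop:com diag}. The paper replaces your informal ``key geometric observation'' with an explicit citation of Corollary~\ref{coro:moves dont change angles} (which, note, carries the hypothesis that $G(\Gamma)$ be reduced, supplied by assumption in the Rot case and by Corollary~\ref{reduced move coro} otherwise), but the content is identical.
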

\begin{proof}
    Let the \(R\) be some move as above. Take \([C,\Lambda,Y]\in \mathcal U^{\geq}_k\) with \(C \in \Omega_{R(\Gamma)}\). Thus \([C,\Lambda,Y] = R[C_0,\Lambda_0,Y_0]\) with \(R [C_0,\Lambda_0,Y_0]\in  \mathcal U^{\geq}_k\) and \(C_0 \in \Omega_{\Gamma}\) by Proposition~\ref{prop:prom rules}. By Remark~\ref{rmk:oriented vert bij} we have that a choice of \(C_0 \in \Omega_\Gamma\) is equivalent to a choice of angle to each oriented vertex in \(\Gamma\).  As \(v^\omega\) is twistor-solvable, we have that \(\mathcal F(v^\omega,\Gamma)(\Lambda_0,Y_0) = C_0(v^\omega,\Gamma)\), the angle corresponding to \(v^\omega\) in \(C_0\).

    We have that \(C = R(C_0)\). \(R(\Gamma)\) is reduced by assumption or by Proposition~\ref{prop:reduceability}.  By Remark~\ref{rmk:oriented vert bij}, we have that a choice of \(C \in \Omega_{R(\Gamma)}\) is equivalent to a choice of angle to each oriented vertex in \(R(\Gamma)\). By Corollary~\ref{coro:moves dont change angles}, \(C(R(v^\omega),R(\Gamma))=C_0(v^\omega,\Gamma)\). Thus \(C(R(v^\omega),R(\Gamma)) = \mathcal F(v^\omega,\Gamma)(\Lambda_0,Y_0)\). By Proposition~\ref{prop:com diag}, we have that \(\mathcal F(v^\omega,\Gamma)(\Lambda_0,Y_0) =R (\mathcal F(v^\omega,\Gamma))(\Lambda,Y)\). We conclude that \(C(R(v^\omega),R(\Gamma)) = R (\mathcal F(v^\omega,\Gamma))(\Lambda,Y)\), finishing the proof.
\end{proof}
Let us consider special cases where the twistor-solutions are, in some sense, invariant under promotions.
\begin{prop}
    \label{prop:sol invariance for arcs}
    Let \(\Gamma\) be an OG graph and \(\tau_\ell\) be a twistor-solvable external arc of with support \(I\).
    For \(I<i,j\), we have that 
    \begin{enumerate}
        \item \(\mathcal F(\mathrm{Inc}_i(\tau_\ell),\mathrm{Inc}_i(\Gamma))^I =\mathcal F(\tau_\ell,\Gamma)^I.\)
        \item \( \mathcal F( \mathrm{Inc}_i(\tau_\ell), \mathrm{Inc}_i(\Gamma)) =\mathcal F(\mathrm{Inc}_j(\tau_\ell),\mathrm{Inc}_j(\Gamma))).\)
    \end{enumerate}
    Let \(R= \mathrm{Rot}_{i,i+1}\) with \(i,i+1\notin I\).
    We have that
    \begin{enumerate}[resume]
        \item \(\mathcal F( R(\tau_\ell),R(\Gamma)) = \mathcal F(\tau_\ell,\Gamma).\)
    \end{enumerate}
\end{prop}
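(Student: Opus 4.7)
The plan is to invoke Proposition~\ref{prop:sol commute for arcs} to intertwine twistor-solutions with the moves, and then exploit the fact that $\mathcal{F}(\tau_\ell, \Gamma) \in \mathcal{F}^{2k}_I$ has both its index-support and its vector-support inside $I$. The key observation, used for each item, is that whenever a move acts trivially on all abstract twistors $\langle n\, m\rangle$ with $n, m \in I$ and also leaves the coordinates in $I$ of a vector untouched, the twistor-solution itself must be unchanged.

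For item~1, Proposition~\ref{prop:sol commute for arcs}(2) gives $\mathcal{F}(\mathrm{Inc}_i(\tau_\ell), \mathrm{Inc}_i(\Gamma)) = \mathrm{Inc}_i(\mathcal{F}(\tau_\ell, \Gamma))$. Since the index-support lies in $I<i$, every abstract twistor appearing has both indices less than $i$, and is therefore fixed by $\mathrm{Inc}_i$ according to the rules of Proposition~\ref{prop:prom rules}. At the vector level, $\mathrm{Inc}_i$ only inserts two zeros after position $i$, so the entries in positions $I$ are undisturbed. Restricting to $I$ yields $\mathcal{F}(\tau_\ell, \Gamma)^I$. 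Item~2 follows at once: applying item~1 with $\mathrm{Inc}_i$ and with $\mathrm{Inc}_j$ shows both vectors coincide with $\mathcal{F}(\tau_\ell, \Gamma)^I$ on $I$, and each has support contained in $I$ (the padded zeros fall outside $I$, and the shifted entries come from positions $\geq i$ or $\geq j$ of the original, all of which are zero). Hence the two vectors are equal as elements of $\mathcal{F}^{2k+2}$.

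Item~3 requires a direct argument, because Proposition~\ref{prop:sol commute for arcs}(3) carries extra hypotheses we are not assuming here. Fix $[C', \Lambda', Y'] \in \mathcal{U}_k^\geq$ with $C' \in \Omega_{R(\Gamma)}$ and write $[C', \Lambda', Y'] = R[C, \Lambda, Y]$ with $C \in \Omega_\Gamma$; by Definition~\ref{def:rot of lambda} one has $Y' = Y$, while $\Lambda'$ differs from $\Lambda$ only in rows $i, i+1$. Let $v := \mathcal{F}(\tau_\ell, \Gamma)(\Lambda, Y)$, the associated vector of $\tau_\ell$ in $C$; it lies in $\mathrm{rowspan}(C)$ with support contained in $I$, so $vR = v$ (since $v$ vanishes at coordinates $i, i+1$), and hence $v \in \mathrm{rowspan}(CR) = \mathrm{rowspan}(C')$. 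Because the support of $R(\tau_\ell)$ remains $I$, Proposition~\ref{prop:uniqueness of supp} identifies $v$ (up to scaling) with the associated vector of $R(\tau_\ell)$ in $C'$. Finally, for $n, m \in I$ we have $n, m \notin \{i, i+1\}$, so by Proposition~\ref{prop:promoting twistors} the twistors $\langle n\, m\rangle$ take the same values on $(\Lambda, Y)$ and $(\Lambda', Y')$; therefore $\mathcal{F}(\tau_\ell, \Gamma)(\Lambda', Y') = \mathcal{F}(\tau_\ell, \Gamma)(\Lambda, Y) = v$. By Remark~\ref{rmk:uniqueness of solution} this yields $\mathcal{F}(R(\tau_\ell), R(\Gamma)) = \mathcal{F}(\tau_\ell, \Gamma)$, completing the proof. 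No serious obstacle arises; the only subtlety is keeping track of the distinction between the formal $\mathcal{F}$-level action of a move (which acts on abstract twistors) and its evaluation-level action on numerical vectors, both of which happen to be trivial on our twistor-solution.
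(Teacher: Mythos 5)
Your proof of items 1 and 2 follows essentially the same route as the paper: intertwine via Proposition~\ref{prop:sol commute for arcs}(2), then observe that $\mathrm{Inc}_i$ acts trivially because both the index-support and the vector-support of $\mathcal F(\tau_\ell,\Gamma)$ lie in $I<i$. For item 3, you diverge from the paper in a way that is worth noting. The paper's proof simply says ``the proof is the same as for Proposition~\ref{prop:sol commute for arcs}, except that now $R(\mathcal F(\tau_\ell,\Gamma))=\mathcal F(\tau_\ell,\Gamma)$'' — that is, it walks through the same argument and then notices at the end that the rotation angle drops out. You instead worry (correctly) that the statement of Proposition~\ref{prop:sol commute for arcs}(3) carries hypotheses not present here (reducedness of $\mathrm{Rot}_{i,i+1}(\Gamma)$, twistor-solvability of the new vertex), and you replace the invocation with a self-contained computation: decompose $C'=CR$, observe $vR=v$ because $\mathrm{supp}(v)\subset I$ misses $\{i,i+1\}$, identify $v$ as the associated vector of $R(\tau_\ell)$ via Proposition~\ref{prop:uniqueness of supp}, and finish by noting that all twistors $\langle n\,m\rangle$ with $n,m\in I$ are unchanged under the rotation. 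This is the same observation that underlies the paper's shortcut (the rotation acts trivially on everything with index-support in $I$), made explicit rather than deferred to the machinery of the previous proposition. Both arguments reach the same conclusion; yours is longer but more self-verifiable, since it makes visible that the vertex twistor-solution $a$ is never needed. One small wrinkle: like the paper itself, you implicitly assume $R(\Gamma)$ is reduced when invoking Proposition~\ref{prop:uniqueness of supp}, which is not among the stated hypotheses; since this same gap is present in the paper's statement, your reliance on it is consistent with the intended reading.
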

\begin{proof}
    For the items \(1\) and \(2\), by the previous claim we have \(\mathcal F( \mathrm{Inc}_i(\tau_\ell), \mathrm{Inc}_i(\Gamma)) =\mathrm{Inc}_i(\mathcal F(\tau_\ell,\Gamma)) \). Notice that \(\mathrm{Inc_i}\) does not have any effect on abstract twistors with indices smaller then \(i\). Since the support and index-support of \(\mathcal F(\tau_\ell,\Gamma)\) is contained in the support of \(\tau_\ell\), \(I\), we have that \(\mathrm{Inc}_i\) just adds two padding zeros to the end of the vector \(\mathcal F(\tau_\ell,\Gamma)\), proving the claims.
    
   For the item \(3\), the proof is the same as for the Proposition~\ref{prop:sol commute for arcs}, except that now since we have that \(\mathcal F(\tau_\ell,\Gamma)\) has both support and index-support contained in \(I\) by Definition~\ref{twistor sol arc def}, we have that \(R(\mathcal F(\tau_\ell,\Gamma))=\mathcal F(\tau_\ell,\Gamma)\).
\end{proof}

\subsection{Solving by Arcs}
\label{sec:solving by arcs}

We can now reduce the problem of finding a twistor-solution for OG graphs, to that of finding the twistor-solution to external \(n\)-arcs. Recall that by Remark~\ref{rmk inj strategy} we are interested in external arcs with support length \(n\leq 4\). This will allow us to prove Theorem~\ref{BCFW inj}, which is the main goal of this section. Our main tool will be Proposition~\ref{prop:sol induction prop}, whose proof can be found in Section~\ref{sec:proof prop:sol induction prop}.

\begin{prop}
\label{prop:sol induction prop}
     Let \(\Gamma\) be a reduced twistor-solvable \(k\)-OG graph such that \(\{l=i_1, i_2,...,i_{n-2}\}\) does not contain any arcs, and let \(n\leq 4\), then 
     \(
     \mathcal F( \mathrm{Arc}_{n,\ell,k}(v^\omega)(\Gamma)) =  \mathrm{Arc}_{n,\ell,k}( \mathcal F(\Gamma)).
     \)
\end{prop}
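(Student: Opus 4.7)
The plan is to decompose the Arc move into elementary Inc and Rot moves and push the twistor-solution through each one using Proposition~\ref{prop:sol commute for graphs}. By Definition~\ref{def:arc},
\[
\mathrm{Arc}_{n,\ell,k}(v^\omega) = \mathrm{Rot}_{i_{n-1},i_n}(v^\omega_{n-2}) \circ \cdots \circ \mathrm{Rot}_{i_2,i_3}(v^\omega_1) \circ \mathrm{Inc}_\ell,
\]
and the induced action on twistor expressions factors analogously with each abstract angle $v^\omega_j$ replaced by $\alpha_{n,\ell,k,j}$. The non-arc hypothesis on $\{i_1,\ldots,i_{n-2}\}$, together with Observation~\ref{obs:reduce arc move} and Proposition~\ref{prop:reduceability}, ensures that every intermediate graph in this decomposition is reduced, so Proposition~\ref{prop:sol commute for graphs} applies at each stage.

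I would proceed by peeling off the factors from the outside in. The Inc factor is handled directly by Proposition~\ref{prop:sol commute for graphs}(2); this already settles $n=2$, since $\mathrm{Arc}_{2,\ell,k}=\mathrm{Inc}_\ell$ and $\mathbf{v}_{2,\ell,k}$ is simply the row added by $\mathrm{Inc}_\ell$. For $n=3,4$, each subsequent Rot factor is handled by Proposition~\ref{prop:sol commute for graphs}(3), whose hypothesis is that $\alpha_{n,\ell,k,j}$ be a twistor-solution, in the sense of Definition~\ref{def:twist sol ang}, for the oriented vertex $v^\omega_j$ in the intermediate graph. Establishing these angle solutions is the heart of the proof. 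I would do this by tracking the associated vector $\mathbf{v}$ of the new external arc through the rotation sequence: starting from $\mathbf{v}_{2,\ell,k}$ after $\mathrm{Inc}_\ell$, each $\mathrm{Rot}_{i_j,i_{j+1}}(\alpha)$ extends $\mathbf{v}$ by one nonzero coordinate via a hyperbolic rotation on entries $i_j,i_{j+1}$. Matching the resulting vector against $\mathbf{v}_{n,\ell,k}$ from Definition~\ref{def:vec angle exp} forces $\cosh(\alpha_{n,\ell,k,j})$ to equal precisely the quotient written in that definition, reproducing the stated $\mathrm{arccosh}$ formula after dividing by the product of the sinh-factors of the previously determined angles.

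It remains to justify that $\mathbf{v}_{n,\ell,k}$ is indeed the associated vector of the new arc on $\Omega_{\mathrm{Arc}_{n,\ell,k}(\Gamma)}$. By Lemma~\ref{lem:twist in C} combined with Proposition~\ref{prop:twist pluckers}, every row of $C$ satisfies the linear constraint $\mathbf{v}\cdot\lambda=0$, which together with the support condition uniquely determines $\mathbf{v}$ up to scale when $n\leq 3$; the Pl\"ucker form of $\lambda$ then yields the explicit expressions $\mathbf{v}_{2,\ell,k}$ and $\mathbf{v}_{3,\ell,k}$ directly. The main obstacle is the case $n=4$: the constraint $C\eta C^\intercal=0$ contributes the extra quadratic $\mathbf{v}\eta\mathbf{v}^\intercal=0$, producing two candidate vectors that differ by the sign of $\sqrt{S_{\{i_1,\ldots,i_4\}}}$. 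I would fix the correct branch by treating the base case: the top cell of $\OGnon{3}{6}$ is representable as $\mathrm{Arc}_{4,2}\mathrm{Arc}_{3,2}\mathrm{Arc}_{2,1}(O)$, and Lemma~\ref{triag} identifies the $+$-branch of $\Delta_\pm$ as the one meeting the positive part. Since the sign of $\sqrt{S_{\{i_1,\ldots,i_4\}}}$ is locally constant on the connected cell $\Omega_{\mathrm{Arc}_{4,\ell,k}(\Gamma)}$ and can only fail to agree with the $+$-branch along the locus $S=0$, one checks consistency of the sign at a single base configuration reachable from the $k=3$ top cell by $\mathrm{Cyc}$ and Rot moves acting on positive representatives; by Propositions~\ref{prop:sol commute for arcs} and~\ref{prop:sol invariance for arcs} these moves preserve the branch, so the $+$-branch prevails in general. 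The well-definedness of the $\mathrm{arccosh}$ and $\sqrt{\cdot}$ expressions (positivity of the radicand and of the arccosh arguments above $1$) is a separate technical check deferred to Section~\ref{sec:positively rad}.
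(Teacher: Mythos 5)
Your plan is the same as the paper's: decompose $\mathrm{Arc}_{n,\ell,k}$ into $\mathrm{Inc}_\ell$ followed by $\mathrm{Rot}$ factors, push $\mathcal{F}$ through each factor via Proposition~\ref{prop:sol commute for graphs}, identify the angle solutions with $\alpha_{n,\ell,k,j}$ by matching the associated vector of the new arc against $\mathbf{v}_{n,\ell,k}$, and pin down the $n=4$ sign ambiguity via the $\OGnon{3}{6}$ base case of Lemma~\ref{triag}. Most of what you re-derive inline is precisely the content of Propositions~\ref{prop:sol arc to angle},~\ref{prop:sol vnlk prop}, and~\ref{prop:arc sol depends on supp}, which the paper proves separately and then simply cites at this point, so your argument could be compressed by invoking them directly. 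One imprecision worth fixing: propagating the branch choice from $\OGnon{3}{6}$ to a general cell cannot be a continuity argument confined to a single connected cell, since $\Omega_{\mathrm{Arc}_{4,\ell,k}(\Gamma)}$ and the $\OGnon{3}{6}$ top cell sit in Grassmannians of different $k$; the mechanism is the algebraic reduction of Proposition~\ref{prop:arc sol depends on supp}, which strips the ambient graph down to the model graph $\Gamma_4$ via $\mathrm{Cyc}$, $\mathrm{Rot}^{-1}$, and crucially $\mathrm{Inc}^{-1}$ (the last being needed to change $k$), transporting the twistor-solution along the way by Propositions~\ref{prop:sol commute for arcs} and~\ref{prop:sol invariance for arcs}.
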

      \begin{coro}
      \label{sub BCFW are twistor-solvable coro}
        Let \(\Gamma\) be a reduced OG graph with a reduced arc-sequence \(\Xi\) which is sub-BCFW. Then \(\Gamma\) is a twistor-solvable graph.
    \end{coro}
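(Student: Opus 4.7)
The plan is to proceed by induction on the length $m$ of the reduced sub-BCFW arc-sequence $\Xi$ representing $\Gamma$, with Proposition~\ref{prop:sol induction prop} serving as the engine of the inductive step.

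For the base case $m = 0$ we have $\Gamma = \mathrm{Arc}_{2,1}(O)$, the $k=1$ graph consisting of a single edge joining external vertices $1$ and $2$. The cell $\Omega_{\mathrm{Arc}_{2,1}(O)} = \OGnon{1}{2}$ consists of a single point, represented (up to scaling) by $(1\ 1)$, so the constant matrix $(1\ 1)\in \mathcal{F}_{[2]}^{1\times 2}$ is a valid twistor-solution.

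For the inductive step I would set
\[
\Gamma_{j-1} = \mathrm{Arc}_{n_{j-1},i_{j-1}}\cdots\mathrm{Arc}_{n_1,i_1}\mathrm{Arc}_{2,1}(O), \qquad \Gamma_j = \mathrm{Arc}_{n_j,i_j}(\Gamma_{j-1}),
\]
and assume inductively that $\Gamma_{j-1}$ is twistor-solvable; its reducedness is automatic because $\Xi$ is reduced. The hypotheses of Proposition~\ref{prop:sol induction prop} then need checking: (i) $\Gamma_{j-1}$ is reduced and twistor-solvable by the inductive hypothesis; (ii) $n_j \leq 4$ since $\Xi$ is sub-BCFW; and (iii) the set $\{i_j, i_j+1, \ldots, i_j+n_j-3\}$ contains no arcs of $\Gamma_{j-1}$. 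Condition (iii) is vacuous when $n_j \leq 3$ (the set has at most one element, which cannot carry an arc), and for $n_j = 4$ it reduces to $\{i_j, i_j+1\}$ not being an external $2$-arc of $\Gamma_{j-1}$, which follows from the reducedness of $\Gamma_j = \mathrm{Arc}_{4,i_j}(\Gamma_{j-1})$ together with Observation~\ref{obs:reduce arc move}. Proposition~\ref{prop:sol induction prop} then returns $\mathcal{F}(\Gamma_j) = \mathrm{Arc}_{n_j,i_j}(\mathcal{F}(\Gamma_{j-1}))$, completing the step.

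Iterating the step up to $j = m$ produces the desired twistor-solution for $\Gamma = \Gamma_m$. All of the genuine work is concentrated in Proposition~\ref{prop:sol induction prop}; once that is available, this corollary is a mechanical induction, and no step of the induction itself presents a real obstacle.
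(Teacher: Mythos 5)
Your proof is correct and takes the same approach as the paper: the paper's proof is the one-liner ``Immediate from Definition~\ref{def:sequences} with Proposition~\ref{prop:sol induction prop},'' and your induction on the length of the arc-sequence is exactly the mechanism that makes this ``immediate'' -- you have simply spelled out the verification of the hypotheses of Proposition~\ref{prop:sol induction prop} at each step, including the reducedness/no-arc check via Observation~\ref{obs:reduce arc move} and the $k=1$ base case.
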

    \begin{proof}
        Immediate from Definition~\ref{def:sequences} with Proposition~\ref{prop:sol induction prop}.
    \end{proof}
This is, in fact, all we need to prove Theorem~\ref{BCFW inj}:
\begin{proof}[Proof of Theorem~\ref{BCFW inj}]
    Recall that by Definition~\ref{def:twistor sol} we call \(M\in \mathcal{F}^{k\times 2k}\) a \emph{twistor-solution of \(\Gamma\) (or \(\Omega_\Gamma\))},
    if for every \([C,\Lambda,Y]\in \mathcal U_k^\geq\) with \( C \in \Omega_{\Gamma}\), we have that  \(C = M(\Lambda,Y)\). This means that finding a twistor-solution to a \(k\)-OG graph \(\Gamma\) is equivalent to finding the unique preimage of a point \(Y \in \widetilde\Lambda(\Omega_\Gamma)\subset\mathcal O_k\) in the cell \(\Omega_\Gamma\subset\OGnon{k}{2k}\). Thus graphs being twistor-solvable means their Orthitroid cells map injectively by the amplituhedron map.  
    
    BCFW graphs are twistor-solvable by Proposition~\ref{prop:BCFW arcs} together with Proposition~\ref{prop:sol induction prop}. We conclude that BCFW cells map map injectively by the amplituhedron map.

    By Corollary~\ref{sub BCFW are twistor-solvable coro} we have that graphs with sub-BCFW sequences are twistor-solvable. By Proposition~\ref{prop:limits are boundaries} we have that boundaries cells of BCFW cells correspond to boundary graphs of BCFW graphs. By Proposition~\ref{prop:arc limits boundary sub BCFW} all such graphs have sub-BCFW arc-sequences, and thus are twistor-solvable. We conclude that boundary cells of BCFW cells map injectively by the amplituhedron map 
    \end{proof}
    See Example~\ref{exm:twistor sol} in Appendix~\ref{apx:calc exm} for an example for an explicit calculation of a twistor-solution for a \(4\)-OG graph.
\subsubsection{Proof of Proposition~\ref{prop:sol induction prop}}\label{sec:proof prop:sol induction prop}
\begin{dfn}
    If for all OG graphs \(\Gamma\) with external arc \(\tau_\ell\) with support of size \(n\), \(\tau_\ell\) is twistor solvable, we say that \emph{external \(n\)-arcs are twistor-solvable}.
\end{dfn}
\begin{prop}
\label{prop:sol vnlk prop}
    \label{n<=4 twist sol prop}
    For \(n=2,3,4\), external \(n\)-arcs are twistor-solvable. Consider a \(k\)-OG graph \(\Gamma\) with corresponding permutation \(\tau\). Let \(\tau_\ell\) be an external \(n\)-arc labeled such that \(\tau(\ell) =\ell+n \) considered mod \(2k\). We have that its twistor-solution is \(\mathcal F(\tau_\ell,\Gamma) = \mathbf v_{n,\ell,k}\)  as defined in Definition~\ref{def:vec angle exp}.
\end{prop}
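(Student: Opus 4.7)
The plan is to prove the cases \(n=2\), \(n=3\), and \(n=4\) in turn, in each case first verifying the formula on a minimal reference graph and then propagating to arbitrary \(k\)-OG graphs via the invariance of arc twistor-solutions under promotion (Propositions~\ref{prop:sol commute for arcs} and~\ref{prop:sol invariance for arcs}). Since \(\mathbf{v}_{n,\ell,k}\) is given explicitly, most of the work consists of: verifying that the vector lies in \(C\) (reducing to \(\mathbf{v}\cdot\lambda = 0\) via the Pl\"ucker relations on twistors of Proposition~\ref{prop:twist pluckers}, together with \(\mathbf{v}\eta\mathbf{v}^\intercal = 0\) via \(\eta\)'s alternating signs), checking that the support is correct (immediate from the definition), and fixing the overall scale and sign via a base case.

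For \(n=2\), I would take the reference graph \(\mathrm{Arc}_{2,1}(O)\) at \(k=1\): the single row of \(C\) has support \(\{1,2\}\), and isotropy together with positivity force \(C = (1,1)\), matching \(\mathbf{v}_{2,1,1}\). For \(n=3\), the reference is the top cell of \(\OGnon{2}{4}\), whose twistor-solution is given by the first proposition in Section~\ref{sec:basic twist solution}; after a \(\mathrm{GL}_2\) row operation one row has support \(\{1,2,3\}\) and coincides with \(\mathbf{v}_{3,1,2}\), the uniqueness of such a row (up to scaling) following from the orthogonality constraints being linear in this support. In both cases extension to general \(\Gamma\) is immediate: by Proposition~\ref{prop:sol commute for arcs} item 1, \(\mathrm{Cyc}\) carries \(\mathbf{v}_{n,1,k}\) to \(\mathbf{v}_{n,\ell,k}\) (including the \(\varepsilon_i\) signs that account for wrap-around through index \(2k\)); by Proposition~\ref{prop:sol invariance for arcs} items 1 and 3, any additional \(\mathrm{Inc}\) or \(\mathrm{Rot}\) moves acting outside the arc support leave the twistor-solution of the arc unchanged, up to appending zeros.

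For \(n=4\), the reference is the top cell of \(\OGnon{3}{6}\) treated in Lemma~\ref{triag}, which has three external 4-arcs with supports \(\{1,2,3,4\},\{2,3,4,5\},\{3,4,5,6\}\). By row-reducing \(\Delta_+\) (the positive branch selected in Lemma~\ref{triag}) so that one row has support \(\{1,2,3,4\}\), and matching term by term against \(\mathbf{v}_{4,1,3}\) (using Pl\"ucker relations on twistors to identify the polynomial coefficients and, crucially, to identify the square root \(S = \sqrt{S_{\{1,2,3,4\}}}\) with the radical combination of twistors appearing in the reduced row), I would confirm the formula at the base case and pin down the positive branch of \(S\). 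Extension to general \(k\) and \(\ell\) then proceeds as in the \(n=2,3\) cases; the key point is that by Propositions~\ref{s rot} and~\ref{s inc} the Mandelstam variable \(S_{\{i_1,i_2,i_3,i_4\}}\) is preserved under precisely the \(\mathrm{Rot}\) and \(\mathrm{Inc}\) moves that do not touch the arc support, so the positive branch chosen at the base case propagates consistently.

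The main obstacle is the case \(n=4\). The isotropy constraint on an external 4-arc is quadratic, yielding two candidate vectors differing by the sign of \(S\); only one corresponds to a positive \(C\). Lemma~\ref{triag} resolves this for the \(k=3\) base case by a direct sign analysis of Pl\"ucker coordinates, but one must also verify that the sign conventions built into Definition~\ref{def:vec angle exp}, namely the \(\varepsilon_i\) factors and the \((-1)^k\) factors hidden in \(\langle\cdot\,\cdot\rangle_{(2k)}\), correctly track how the positive branch transforms under \(\mathrm{Cyc}\), including the case where the support wraps across the boundary index \(2k\). This bookkeeping, combined with the Mandelstam invariance under the relevant promotion moves, is precisely what ensures \(\mathbf{v}_{4,\ell,k}\) gives the correct associated vector on every graph with an external 4-arc.
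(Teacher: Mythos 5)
Your overall strategy is aligned with the paper's, and the \(n=4\) base case (row-reduce \(\Delta_+\), identify the radical with the correct branch of \(\sqrt{S_{\{1,2,3,4\}}}\), match against \(\mathbf{v}_{4,1,3}\)) is exactly how the paper handles it. But the step you call ``extension to general \(\Gamma\) is immediate'' is where the main technical work actually lives, and your proposal glosses over it.

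Propositions~\ref{prop:sol commute for arcs} and \ref{prop:sol invariance for arcs} tell you \emph{how} a twistor-solution transforms when you apply a move, but they do not by themselves tell you that every \(k\)-OG graph with an external \(n\)-arc of support \(I\) is reachable from the reference graph by \(\mathrm{Cyc}\), \(\mathrm{Inc}\), and \(\mathrm{Rot}\) moves applied away from \(I\). This is precisely what the paper's Proposition~\ref{prop:arc sol depends on supp} establishes for \(n=4\), and its proof is a nontrivial reduction: cycle the arc to \(\{1,\dots,n\}\); use equivalence move~\ref{eqmove:3} to slide each off-arc internal vertex to the boundary and peel it off with \(\mathrm{Rot}^{-1}\) (justified by Proposition~\ref{prop:sol invariance for arcs}); show that what remains away from the arc must be a disjoint union of length-\(2\) external arcs so that \(\mathrm{Inc}^{-1}\) applies (this step uses the externality of the original arc in an essential way); induct on \(k\); terminate at \(\Gamma_4\) and then at the \(\OGnon{3}{6}\) top cell. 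Without spelling out this reduction (or its forward counterpart), the propagation step is an assertion, not a proof.

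A secondary point: for \(n=2\) and \(n=3\) the paper does not use reference cells at all. For \(n=2\) it observes directly that \(\Gamma = \mathrm{Inc}_\ell(\Gamma_0)\), so the associated vector is \((\ldots,0,1,1,0,\ldots)\) on the nose. For \(n=3\) it solves the linear system \(\sum_{i\in I}\mathbf{u}^i\twist{Y}{j}{i}=0\) directly: since \(\dim\lambda=2\) this cuts \(3\) to \(1\) degree of freedom, determining \(\mathbf{u}\) up to scale. Your route via \(\OGnon{2}{4}\) also works, but it is longer (it makes you invoke momentum-conservation identities like \(\twist{Y}{1}{4}^2=\twist{Y}{2}{3}^2\) just to match your row-reduced candidate against \(\mathbf{v}_{3,1,2}\)) and it leans on the same unproven reduction as above. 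If you want a uniform reference-and-propagate argument for all three cases, you should state and prove the reduction lemma up front; otherwise, follow the paper and treat \(n=2,3\) by direct linear algebra, reserving the reduction for \(n=4\).
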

Recall Remark~\ref{rmk:square roots}.
\begin{proof}{proof of Proposition~\ref{prop:sol vnlk prop}} We will prove this separately for each \(n\):

    \underline{\textbf{For \(n=2\):}}  \(\Gamma\) having an external arc with support \(\{\ell,\ell+1\}\) means we have \(\Gamma = \mathrm{Inc}_l(\Gamma_0)\). This means that \(C\in\Omega_{\Gamma}\) contains a vector as in the corollary, and by Proposition~\ref{prop:uniqueness of supp}, it must be the one associated to the arc. Thus, it is the twistor-solution by definition.
    
    \underline{\textbf{For \(n=3\):}}
    Let \([C,\Lambda,Y]\in\mathcal U^\geq_k\) with \(C\in\Omega_\Gamma\), such that \(\tau_\ell\) is an external arc of support \(I=\{l=i_1,i_2,..,i_n = \tau(\ell)\}\) (with \(i_j\) being consecutive mod \(2k\)). Let \(\mathbf u\in C\) be the associated vector to the arc \(\tau_\ell\), which has support contained in \(I\) by Proposition~\ref{prop:uniqueness of supp}. By Lemma~\ref{lem:twist in C} we have that for any \(j\in[2k]\), we have that
    \(
    \sum_{i=1}^{2k}\mathbf u^i\twist{Y}{j}{i} = 0.
    \)
    Since \(\mathbf u^i\) is zero for \(i\notin I\), we have 
    \(
    \sum_{i\in I}\mathbf u^\mu\twist{Y}{j}{i} = 0.
    \)
    Since \(\mathrm{dim} \lambda = 2\) from Proposition~\ref{prop:twistor lambda}, we can now solve the equations to get that
    \[
    \mathbf{u}^i = \begin{cases}
        \twist{Y}{i_2}{i_3}_{(2k)}&i=i_1\\
        -\varepsilon_{i}\twist{Y}{i_1}{i_3}_{(2k)}&i=i_2\\
        \varepsilon_{i}\twist{Y}{i_1}{i_2}_{(2k)}&i=i_3\\
        0 &\mathrm{otherwise},\\
    \end{cases}
    \]
    where \(\varepsilon_{i}\defeq 1\) if \(\ell\leq i\) and \(-1\) otherwise, and \(i_j = \ell+j-1\).

    \underline{\textbf{For \(n=4\):}} As this proof is more involved, we will prove this in separately in Section~\ref{sec:case n=4}.
\end{proof}

\begin{coro}
    \label{n<=4 twist solv coro}
    All external \(n\)-arcs with \( n\leq4\) are twistor-solvable.
\end{coro}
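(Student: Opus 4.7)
The plan is to extract the corollary directly from Proposition~\ref{n<=4 twist sol prop}, which is the substantive statement proved (or deferred to Section~\ref{sec:case n=4}) just above. First I would observe that, given any external $n$-arc $\tau_\ell$ with $n\leq 4$ in a $k$-OG graph $\Gamma$, we may choose the labeling of its endpoints so that $\tau(\ell)=\ell+n$ mod $2k$; this is exactly the hypothesis of Proposition~\ref{n<=4 twist sol prop}. That proposition then produces the explicit element $\mathbf{v}_{n,\ell,k}\in\mathcal{F}_I^{2k}$ from Definition~\ref{def:vec angle exp}, with index-support contained in the support $I$ of $\tau_\ell$, and establishes that for every $[C,\Lambda,Y]\in\mathcal{U}_k^\geq$ with $C$ in the interior of $\Omega_\Gamma$ the value $\mathbf{v}_{n,\ell,k}(\Lambda,Y)$ is the associated vector of $\tau_\ell$ in $C$. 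By Definition~\ref{twistor sol arc def} this is exactly the condition for $\tau_\ell$ to be twistor-solvable, and the corollary follows at once.

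So the corollary itself is essentially a one-line unpacking of definitions; no new argument is needed. All genuine work is inside the proof of Proposition~\ref{n<=4 twist sol prop}. There the $n=2$ case is immediate from the uniqueness of the associated vector to an arc (Proposition~\ref{prop:uniqueness of supp}), and the $n=3$ case is a short linear computation: the three nonzero entries of the associated vector must lie in the orthogonal complement of the two-dimensional space $\lambda$ (using Lemma~\ref{lem:twist in C} and Proposition~\ref{prop:twistor lambda}), and this determines them up to scale as the explicit formula in Definition~\ref{def:vec angle exp}.

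The main obstacle, which justifies deferring it to Section~\ref{sec:case n=4}, is the $n=4$ case. Here the space of candidate vectors with support of size four is four-dimensional, and the two linear constraints from $\lambda^\perp\supset C$ cut it down to a two-dimensional family; imposing the quadratic self-orthogonality relation $\mathbf{u}\,\eta\,\mathbf{u}^\intercal=0$ then leaves precisely two projective solutions, differing by a sign in the square root $S=\sqrt{S_{\{i_1,i_2,i_3,i_4\}}}$ that appears in $\mathbf{v}_{4,\ell,k}$. The strategy there will be to pin down the correct branch in the base case of the top cell of $\OGnon{3}{6}$ using the explicit computation Lemma~\ref{triag}, and then to propagate this choice to arbitrary external $4$-arcs via promotion, invoking Proposition~\ref{prop:sol commute for arcs} and Proposition~\ref{prop:sol invariance for arcs} to show that the correct sign is preserved under $\mathrm{Cyc}$, $\mathrm{Inc}$, and $\mathrm{Rot}$. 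Positivity of the radicand $S_{\{i_1,i_2,i_3,i_4\}}$ on the image of the relevant cell will have to be argued separately, as flagged in Remark~\ref{rmk:square roots}.
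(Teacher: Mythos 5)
Your proposal is correct and matches the paper's approach: the corollary is indeed an immediate restatement of the first sentence of Proposition~\ref{n<=4 twist sol prop} (since an arc has two endpoints, $n\geq 2$ always, so $n\leq 4$ is the same as $n\in\{2,3,4\}$), and the paper accordingly gives it no separate proof. Your additional remarks about the structure of the proof of Proposition~\ref{n<=4 twist sol prop} — the linear argument for $n=2,3$, the quadratic two-branch situation for $n=4$, resolution via Lemma~\ref{triag} and propagation by promotion — accurately summarize the paper's treatment in Section~\ref{sec:case n=4}.
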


 Although it is immaterial to the case of BCFW cells and their boundaries, external arcs with support of length five and above are not generally twistor-solvable. As stated in Remark \ref{rmk inj strategy}, we are mainly interested in external \(n\)-arcs for \(2\leq n\leq 4\).
 
Having found twistor-solutions for arcs, we can now use this information to find twistor-solutions for oriented vertices.

\begin{prop}
    \label{prop:sol arc to angle}
    Given \(\tau_\ell\) a twistor-solvable external arc of \(\Gamma\) a reduced OG graph with support \(I=\{l_1,\ell_2,..,\ell_n\}\) with \(\ell=\ell_1\) (with \(\ell_j\) being consecutive mod \(2k+2\)), let \(\omega\) be a \(\tau_\ell\)-proper orientation for \(\Gamma\) (recall Definition~\ref{prop orientation def}). Oriented vertices on \(\tau_\ell\) be \(\{v^\omega_i\}_{i=1}^{n -2}\) enumerated along the arc in the direction from \(\ell\) to \(\tau(\ell)\). Then \(v^\omega_i\) are twistor-solvable with the index-support of the solution contained in \(I\), and the twistor-solution depends only on \(n,\ell,\) and \(i\) (specifically, it does not depend on \(\Gamma\) or \(k\)). The twistor solution is \(\mathcal F(v^\omega_i,\Gamma) =\mathcal \alpha_{n,\ell,k,i}\) as defined in Definition~\ref{def:vec angle exp}.
\end{prop}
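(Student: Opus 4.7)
By Proposition~\ref{prop:sol vnlk prop}, the twistor-solution of the arc itself is $\mathbf{v}_{n,\ell,k}$, so for any $[C,\Lambda,Y]\in\mathcal{U}_k^\geq$ with $C$ in the interior of $\Omega_\Gamma$, the unique vector of (a representative of) $C$ supported on $I$ is $\mathbf{v}_{n,\ell,k}(\Lambda,Y)$ up to a nonzero scalar. The idea is to compute this \emph{same} vector in terms of the angles at the $n-2$ internal vertices $v_1^\omega,\ldots,v_{n-2}^\omega$ lying on the arc, and then solve for those angles. Since $\tau_\ell$ is external and $\omega$ is $\tau_\ell$-proper, we have $\Gamma = \mathrm{Arc}_{n,\ell,k}(v^\omega)(\Gamma_0)$, where $\Gamma_0$ is the result of removing the arc, and by construction (Definition~\ref{def:arc}) the angles at $v_1^\omega,\ldots,v_{n-2}^\omega$ are precisely the rotation parameters $\theta_1,\ldots,\theta_{n-2}$ appearing in that Arc move.

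First I would reduce to the case $\ell=1$ via $\mathrm{Cyc}^{\ell-1}$; by Proposition~\ref{prop:sol commute for angles}(1) the twistor-solution for the oriented vertex transforms accordingly, and this is the origin of the $\mathrm{Cyc}^{\ell-1}_k$ appearing in Definition~\ref{def:vec angle exp}. Next, applying the explicit row-form of $\mathrm{Rot}_{\ell_{j},\ell_{j+1}}(\theta_j)\cdots\mathrm{Rot}_{\ell_2,\ell_3}(\theta_1)\mathrm{Inc}_1$ to the base two-entry row introduced by $\mathrm{Inc}_1$ (the case $n=4$ is written out in the Remark following Definition~\ref{def:arc}), one reads off directly that the components of the arc's associated vector satisfy
\[
\mathbf{v}_{n,1,k}^{\,\ell_{i+1}} \;=\; \mathbf{v}_{n,1,k}^{\,\ell_1}\cdot \cosh(\alpha_i)\!\prod_{j<i}\sinh(\alpha_j)\qquad(i=1,\ldots,n-2),
\]
and $\mathbf{v}_{n,1,k}^{\,\ell_n}= \mathbf{v}_{n,1,k}^{\,\ell_1}\prod_{j=1}^{n-2}\sinh(\alpha_j)$, where $\alpha_j$ is the angle at $v_j^\omega$.

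Inverting this system by induction on $i$ then yields the proposition. For $i=1$, dividing the two components at $\ell_2$ and $\ell_1$ and applying $\mathrm{arccosh}$ recovers $\alpha_1$. For the inductive step, once $\alpha_1,\ldots,\alpha_{i-1}$ are known, dividing the ratio $\mathbf{v}_{n,1,k}^{\,\ell_{i+1}}/\mathbf{v}_{n,1,k}^{\,\ell_1}$ by $\prod_{j<i}\sinh(\alpha_j)$ isolates $\cosh(\alpha_i)$, and one more $\mathrm{arccosh}$ gives exactly the recursion in Definition~\ref{def:vec angle exp}. Undoing the initial $\mathrm{Cyc}^{\ell-1}$ reintroduces the sign-tracking prefactor, producing the stated formula $\mathcal F(v^\omega_i,\Gamma)=\alpha_{n,\ell,k,i}$. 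The index-support claim is immediate since every entry of $\mathbf{v}_{n,\ell,k}$ lies in $\mathcal{F}_I$; and independence from $\Gamma$ and $k$ (beyond the cyclic signs) is manifest because only the $n$ nonzero entries of $\mathbf{v}_{n,\ell,k}$ enter the recursion.

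The main technical nuisance will be sign bookkeeping: one must check that the signs $\varepsilon_i$ and $\epsilon_{i}$ introduced by the $\mathrm{Cyc}$ and $\mathrm{Rot}$ formulas combine so that the argument of each $\mathrm{arccosh}$ is genuinely $\geq 1$ on the interior of $\Omega_\Gamma$. This should follow from the fact that a $\tau_\ell$-proper orientation makes the vertices $v_i$ trigonometric, hence all $\alpha_i\in(0,\pi/2)$ with strictly positive $\sinh$ and $\cosh$ factors, but it deserves a careful case-check for indices straddling the $2k/1$ boundary (where the factor $-(-1)^k$ enters through $\mathrm{Cyc}^{\ell-1}_k$).
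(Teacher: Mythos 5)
Your approach is essentially the same as the paper's: both proofs read off the associated vector's components from the parameterization supplied by a $\tau_\ell$-proper orientation, equate them with $\mathbf{v}_{n,\ell,k}(\Lambda,Y)$, and recursively isolate $\cosh(\alpha_i)$ by dividing out the product of the preceding $\sinh$ factors. The only structural difference is that you normalize to $\ell=1$ via $\mathrm{Cyc}^{\ell-1}$ up front, whereas the paper carries $\ell$ through the computation and invokes the $\mathrm{Cyc}$ identity only at the end to match the stated expression and show $k$-independence; these are equivalent bookkeeping choices.

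One small slip: you assert that a $\tau_\ell$-proper orientation makes the vertices $v_i$ \emph{trigonometric}, hence $\alpha_i\in(0,\pi/2)$. In fact Definition~\ref{prop orientation def} defines a $\tau_\ell$-proper orientation to be \emph{hyperbolic}, so the angles lie in $(0,\infty)$. This is actually what you need for $\mathrm{arccosh}$ to be well-defined: for a hyperbolic angle $\alpha>0$ one has $\cosh\alpha>1$ and $\sinh\alpha>0$, so the argument of each $\mathrm{arccosh}$ is $>1$ and the $\sinh$ factors in the denominator are nonzero. Your conclusion ("strictly positive $\sinh$ and $\cosh$ factors") thus stands, but the justification should cite the hyperbolic case of Definition~\ref{def:param}, not the trigonometric one.
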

\begin{proof}
    Let \([C,\Lambda,Y]\in\mathcal U^\geq_k\) with \(C\in\Omega_\Gamma\). Write \(C(v^\omega_i,\Gamma) = \alpha_i\). By the proper orientation, the vector associated to \(\tau_\ell\) is exactly 
    \[
    \mathbf u^{i} = 
    \begin{cases}
        1 &i=\ell\\
        \epsilon_i\cosh(\alpha_i)\prod_{j=1}^{i-1} \sinh(\alpha_j)&i\in I\setminus\tau_\ell\\
        \epsilon_i\prod_{j=1}^{i-1} \sinh(\alpha_j) & i = \tau(\ell)\\
        0 &\mathrm{otherwise}.
    \end{cases}
    \]
    where \(\epsilon_i = (-1)^{k+1}\) if \(i<\ell\) and \(1\) if \(\ell<i\). This is so since \(\tau_\ell\) is an external arc thus all of the paths from \(\ell\) go over either no sources or \(k-1\) of them, thus this is just the result of the parametrization we get from the orientation.

    By Proposition~\ref{prop:uniqueness of supp} we have that \(\mathbf u\) is the associated vector to \(\tau_\ell\), thus \(\mathbf u = \mathcal F(\tau_\ell,\Gamma)(\Lambda,Y) = \mathbf v_{n,\ell,k}(\Lambda,Y)\) by Proposition~\ref{prop:sol vnlk prop} (up to scaling). Thus we have that 
    \begin{align*}
    \alpha_1 &= \mathrm{arccosh}\left(\frac{\epsilon_{\ell_2}\mathbf v_{n,\ell,k}(\Lambda,Y)^{\ell_2}}{\mathbf v_{n,\ell,k}(\Lambda,Y)^{\ell_1}}\right)\\
    \alpha_2 &= \mathrm{arccosh}\left(\frac{\epsilon_{\ell_3}\mathbf v_{n,\ell,k}(\Lambda,Y)^{\ell_3}}{\mathbf v_{n,\ell,k}(\Lambda,Y)^{\ell_1}\sinh{(\alpha_1)}}\right) 
    \end{align*}

    etc. We can thus get the twistor-solution to all the angles by induction: 
    \[
   \alpha_{n,\ell,k,i}= \mathrm{arccosh}\left(\frac{\epsilon_{\ell_{i+1}}\mathbf v_{n,\ell,k}^{\ell_{i+1}}}{\mathbf v_{n,\ell,k}^{\ell_1}\prod_{j=1}^{i-1}\mathcal \sinh(\alpha_{n,\ell,k,j})}\right)
    .\]

    Since the index-support of \(\mathcal F(\tau_\ell,\Gamma)\) is in \(I\) by definition, we have that the index-support of \(\mathcal F(v^\omega_i,\Gamma)\) is also contained in \(I\). We built the twistor-solutions from \(\mathbf v_{n,\ell,k}\), thus they do not depend on \(\Gamma\). 

    Now, notice that by definition of \(\mathbf v_{n,\ell,k}\) and by definition of the \(\mathrm{Cyc}\) move, for \(r\in[n]\),

    \[
     \varepsilon_{\ell_r}\mathbf v_{n,\ell,k}^{\ell_r}=\varepsilon_{\ell_r}(\mathrm{Cyc}^{\ell-1}(\mathbf v_{n,1,k}))^{\ell_r}=\mathrm{Cyc}^{\ell-1} (\mathbf v_{n,1,n}^{r}).
    \]
    Thus
     \[
     \alpha_{n,\ell,k,i}= \mathrm{arccosh}\left(\frac{\mathrm{Cyc}^{\ell-1} (\mathbf v_{n,1,n}^{i+1})}{\mathbf \mathrm{Cyc}^{\ell-1} (\mathbf v_{n,1,n}^{1})\prod_{j=1}^{i-1}\mathcal \sinh(\alpha_{n,\ell,k,j})}\right)
    ,\]
    which does not depend on \(k\), and matches Definition~\ref{def:vec angle exp}.
\end{proof}
\begin{rmk}
\label{rmk:no hyp func}
    Notice that no actual hyperbolic functions would ultimately appear in our twistor-solutions. As we only use angles for arguments \(\sinh\) and \(\cosh\) in the \(\mathrm{Rot}\) move, the only expressions we will have appearing are:

     \[
     \cosh(\alpha_{n,\ell,k,i})= \left(\frac{\mathrm{Cyc}^{l-1} (\mathbf v_{n,1,n}^{i+1})}{\mathbf \mathrm{Cyc}^{l-1} (\mathbf v_{n,1,n}^{1})\prod_{j=1}^{i-1}\mathcal \sinh(\alpha_{n,\ell,k,j})}\right)
    ,\]
    and
    \(
    \sinh(\alpha_{n,\ell,k,i})= \sqrt{ \cosh(\alpha_{n,\ell,k,i})^2-1}
    \)
    similarly defined recursively, which will be algebraic.
\end{rmk}

    We would now like to consider OG graphs built recursively by adding on twistor-solvable arcs. Recall the \(\mathrm{Arc}\) move defined in Definition~\ref{def:arc}.
\begin{obs}
\label{obs:reduce arc move}
    For \(\Gamma^\prime\) a reduced \(k\)-OG graph, \(l\in[2k]\) and \(n>1\). Suppose \(I^\prime =\{l=i_1, i_2,...,i_{n-2}\}\) (\(i_j\) consecutive mod \(2k\)) does not contain any arcs of \(\Gamma^\prime\). We have that \(\Gamma =\mathrm{Arc}_{n,\ell,k}(\Gamma^\prime)\) is a reduced \((k+1)\)-OG graph with \(\tau_\ell\) an external arc with support \(I = \{i_1, i_2,...,i_{n}\}\) with \(\ell=i_1\) (\(i_j\) again consecutive mod \(2k\)).
\end{obs}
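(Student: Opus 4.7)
The plan is to exploit the factorization $\mathrm{Arc}_{n,\ell,k} = \mathrm{Rot}_{i_{n-1},i_n}\cdots\mathrm{Rot}_{i_2,i_3}\,\mathrm{Inc}_\ell$ from Definition~\ref{def:arc} and track how arcs evolve step by step along this chain. The base step is $\Gamma_0 \defeq \mathrm{Inc}_\ell(\Gamma')$: it is reduced because $\mathrm{Inc}$ only adds an isolated edge between two fresh external vertices, so no new crossings can appear, and this edge is an external $2$-arc with support $\{i_1,i_2\}$. Under the relabeling induced by $\mathrm{Inc}_\ell$ (old label $m\ge\ell$ becomes new label $m+2$), the indices of $I'$ are sent to $\{i_3,i_4,\ldots,i_n\}$ in $\Gamma_0$, so the hypothesis that $I'$ contains no arc of $\Gamma'$ translates to the statement that no arc of $\Gamma_0$ has both endpoints in $\{i_3,\ldots,i_n\}$; moreover, since $i_1$ and $i_2$ are freshly added and only connected to each other, these arcs also avoid $\{i_1,i_2\}$, so their second endpoints lie outside $I$.

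I would then induct on $j=0,1,\ldots,n-2$, letting $\Gamma_j$ denote the graph obtained after the first $j$ rotations, maintaining a triple invariant: (a) $\Gamma_j$ is reduced; (b) there is an external arc of $\Gamma_j$ with endpoints $\{i_1,i_{j+2}\}$ and support $\{i_1,\ldots,i_{j+2}\}$; (c) the arcs at positions $i_{j+3},\ldots,i_n$ agree with those in $\Gamma_0$, and each has its second endpoint outside $I$. To advance from $j$ to $j+1$, apply $\mathrm{Rot}_{i_{j+2},i_{j+3}}$: since this move braids only the two arcs at positions $i_{j+2}$ and $i_{j+3}$, invariant (c) persists for the untouched tail $i_{j+4},\ldots,i_n$, and swapping the far endpoints of $\{i_1,i_{j+2}\}$ and $\tau_{i_{j+3}}$ (the latter exiting $I$ by (c)) produces an external arc $\{i_1,i_{j+3}\}$ with support $\{i_1,\ldots,i_{j+3}\}$, yielding (b). Reducedness (a) then follows from Corollary~\ref{reduced move coro}, provided we verify that the two arcs being braided are non-crossing in $\Gamma_j$.

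The main obstacle, and essentially the only nontrivial step, is this non-crossing check. It reduces to a planar-topology observation: the chord $\{i_1,i_{j+2}\}$ separates the disk into two regions, whose boundary vertices on one side are precisely $\{i_2,\ldots,i_{j+1}\}$. By invariant (c), the arc $\tau_{i_{j+3}}$ starts at $i_{j+3}$ and ends outside $I$, both of which lie on the opposite side of the chord from $\{i_2,\ldots,i_{j+1}\}$, so the two arcs live in disjoint regions and cannot cross. A minor subtlety is that when $I$ wraps cyclically around the external boundary (i.e.\ $i_n < i_1$ in linear order), the ``two regions'' must be interpreted cyclically, but the cyclic consecutivity of $I$ makes the same separation argument go through. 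Setting $j=n-2$ finishes the proof: $\Gamma = \Gamma_{n-2}$ is reduced, and $\tau_\ell$ is an external $n$-arc with support $I$, as claimed.
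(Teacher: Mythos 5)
Your proof is correct and follows essentially the same route as the paper's: factor $\mathrm{Arc}_{n,\ell,k}$ as one $\mathrm{Inc}$ followed by a chain of $\mathrm{Rot}$ moves, and induct one rotation at a time, invoking Corollaries~\ref{arc supp prom} and~\ref{reduced move coro} at each step. (The paper phrases the induction as ``on $n$, peeling off the last $\mathrm{Rot}$,'' whereas you build $\Gamma_j$ up from $\mathrm{Inc}_\ell$; these are the same induction.) The one substantive difference is that the paper cites Corollary~\ref{reduced move coro} without explicitly checking its hypothesis that the two arcs being braided by $\mathrm{Rot}_{i_{j+2},i_{j+3}}$ are neither equal nor crossing, while your invariant (c) and the chord-separation argument carry out this verification in detail — which is indeed the only step requiring real care.
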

\begin{proof}
    The \(\mathrm{Arc}\) move is made of one \(\mathrm{Inc}\) move and a few \(\mathrm{Rot}\) moves, thus it increases \(k\) by 1.

    We prove the rest by induction on \(n\). 
    
    If \(n=2\) we have that \(\mathrm{Arc}_{2,\ell} = \mathrm{Inc_{\ell}}\) for which the claim is trivial by Corollaries~\ref{arc supp prom} and~\ref{reduced move coro}.

    For the induction step, write
    \[
    \Gamma = \mathrm{Arc}_{n,\ell,k}(v^\omega)(\Gamma^\prime)\defeq \mathrm{Rot}_{i_{n-1},i_n}(v^\omega_{n-2})\mathrm{Arc}_{n-1,\ell,k}(v^\omega)(\Gamma^\prime) =\mathrm{Rot}_{i_{n-1},i_n}(\hat \Gamma).
    \]
    We have by the induction hypothesis that \(\hat \tau_\ell\) is an external arc of \(\hat \Gamma \) a reduced graph with support \(\hat I = \{\ell=i_1, i_2,...,i_{n-1}\}\). By corollaries~\ref{reduced move coro} and~\ref{arc supp prom}, we have that \(\Gamma\) is a reduced graph with external arc \(\mathrm{Rot}_{i_{n-1},i_n} (\hat \tau_\ell) = \tau_\ell\) with support \(\mathrm{Rot}_{i_{n-1},i_n} (\hat I )= I\).
\end{proof}
We are now ready to prove Proposition~\ref{prop:sol induction prop}:
\begin{proof}[Proof of Proposition~\ref{prop:sol induction prop}]

     By Observation~\ref{obs:reduce arc move} we get that \(\tau_\ell\) is an external arc with support \(I = \{\ell=i_1, i_2,...,i_{n}\}\) (\(i_j\) consecutive mod \(2k\)), with \(n\leq 4\). Thus, \(\tau_\ell\) is a twistor-solvable arc.
     
     Set a \(\tau_\ell\)-proper orientation for \(\Gamma\). The angles associated to the internal vertices on \(\tau_\ell\) are \(\{\alpha_i\}_{i=1}^{n -2}\) counting from \(\ell\) to \(\tau(\ell)\).  By Proposition~\ref{prop:sol arc to angle}, we have that the \(\mathcal F (v^\omega_i,\Gamma) = \alpha_{n,\ell,k,i}\). By repeated application of Proposition~\ref{prop:sol commute for graphs}, we have that 
     \begin{align*}
         \mathcal F( \mathrm{Arc}_{n,\ell,k}&(v^\omega)(\Gamma)) = \\
          &= \mathcal F (\mathrm{Rot}_{i_{n-1},i_n}(v^\omega_{n -2})\cdot...\cdot\mathrm{Rot}_{i_3,i_4}(v^\omega_{2})\mathrm{Rot}_{i_2,i_3}(v^\omega_{1})\mathrm{Inc}_{\ell}(\Gamma))\\
         &= \mathrm{Rot}_{i_{n-1},i_n}(\alpha_{n,\ell,k,n -2})\cdot...\cdot\mathrm{Rot}_{i_3,i_4}(\alpha_{n,\ell,k,2})\mathrm{Rot}_{i_2,i_3}(\alpha_{n,\ell,k,1})\mathrm{Inc}_{l}(\mathcal F (\Gamma))\\
         &= \mathrm{Arc}_{n,\ell,k}( \mathcal F(\Gamma)).
     \end{align*}
\end{proof}
\subsubsection{Solving External \texorpdfstring{\(4\)}{4}-arcs }
\label{sec:case n=4}

We will now prove Proposition~\ref{prop:sol vnlk prop} for the case of \(n=4\). First, let us reduce the problem to the case of \(\OGnon{3}{6}\). Consider the following graph:
\begin{figure}[H]
    \centering
                \begin{center}
\begin{tikzpicture}[scale = 0.5]
\draw (0,0) circle (2cm);

\draw(1., -1.73205)node[anchor=north]{\(1\)}--(-1., 1.73205)node[anchor=south]{\(4\)};

\draw(2., 0.)node[anchor=west]{\(2\)}--(-1., -1.73205)node[anchor=north]{\(6\)};

\draw(1., 1.73205)node[anchor=south]{\(3\)}--(-2., 0.)node[anchor=east]{\(5\)};

\end{tikzpicture}
\end{center}
    \caption{the simplest OG graph with an external \(4\)-arc}
    \label{fig:simlpest 4arc}
\end{figure}
\begin{prop}
\label{prop:arc sol depends on supp}
    For a reduced \(k\)-OG graph \(\Gamma\) with corresponding permutation \(\tau\), and an external \(4\)-arc \(\tau_\ell \) of \(\Gamma\) with support \(I = \{\ell,\ell+1,\ell+2,\ell+3\}\) considered mod \(2k\), the twistor solution of \(\tau_\ell\) is \(
    \mathcal{F}(\tau_\ell,\Gamma) = \mathrm{Cyc}^{\ell-1}_{k}\mathrm{Inc}_5^{k-3}\mathcal F(\{1,4\},\Gamma_4),
    \) where \(\Gamma_4\) is as seen in Figure~\ref{fig:simlpest 4arc}.
    
\end{prop}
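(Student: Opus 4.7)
The strategy is to reduce the problem to the base case of $\Gamma_4$ by using the promotion properties of $\mathrm{Cyc}$ and $\mathrm{Inc}$ on arc twistor-solutions, combined with the observation that the twistor-solution of an external $4$-arc is determined by constraints that only involve twistors with indices in its support.

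First, I would reduce to the case $\ell=1$ by a cyclic shift. Applying $\mathrm{Cyc}^{1-\ell}$ to both $\Gamma$ and $\tau_\ell$ yields a reduced $k$-OG graph with an external $4$-arc at support $\{1,2,3,4\}$. By Proposition~\ref{prop:sol commute for arcs}, item 1, iterated on the left, this turns the claim into
\[
\mathcal F(\tau_1, \Gamma^{\mathrm{Cyc}}) \;=\; \mathrm{Inc}_5^{k-3}\mathcal F(\{1,4\},\Gamma_4),
\]
for an arbitrary reduced $k$-OG graph $\Gamma^{\mathrm{Cyc}}$ with external $4$-arc at $\{1,2,3,4\}$. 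So after this reduction we may assume $\ell=1$.

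Next, I would build a ``canonical'' candidate graph and verify the claim there. Let $\widehat\Gamma \defeq \mathrm{Inc}_5^{k-3}(\Gamma_4)$, which by iterated application of Observation~\ref{obs:reduce arc move} is a reduced $k$-OG graph with external $4$-arc $\hat\tau$ at support $\{1,2,3,4\}$. Since $\{5,6,\dots,2k\}\cap\{1,2,3,4\}=\emptyset$, Proposition~\ref{prop:sol invariance for arcs}, item 2, applied iteratively gives
\[
\mathcal F(\hat\tau,\widehat\Gamma)\;=\;\mathrm{Inc}_5^{k-3}\mathcal F(\{1,4\},\Gamma_4),
\]
matching the target RHS. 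So it remains to prove $\mathcal F(\tau_1,\Gamma)=\mathcal F(\hat\tau,\widehat\Gamma)$ for our arbitrary $\Gamma$ with an external $4$-arc at $\{1,2,3,4\}$.

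The key step is then showing that the twistor-solution of an external $4$-arc depends only on the support of the arc and not on the rest of the graph. By Proposition~\ref{prop:uniqueness of supp}, for $[C,\Lambda,Y]\in\mathcal{U}_k^\geq$ with $C\in\Omega_\Gamma$, the vector associated to $\tau_1$ is the unique $\mathbf v\in C$ (up to scaling) with support $\{1,2,3,4\}$. It satisfies
\begin{itemize}
\item $\mathbf v^i=0$ for $i\notin\{1,2,3,4\}$,
\item $\mathbf v\in\lambda^\perp$ by Lemma~\ref{lem:twist in C}, which via Proposition~\ref{prop:twistor lambda} reduces to $\sum_{j=1}^{4}\mathbf v^j\langle m\,j\rangle=0$ for $m\in\{1,2,3,4\}$,
\item $\mathbf v\,\eta\,\mathbf v^\intercal=0$, coming from $C\,\eta\,C^\intercal=0$.
\end{itemize}
The linear constraints involve only twistors $\langle i\,j\rangle$ with $i,j\in\{1,2,3,4\}$, and cut out a $2$-dimensional subspace. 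The quadratic orthogonality condition then cuts this to two $1$-dimensional branches, and the positivity of $C$, controlled by the orientation of the arc (which depends only on $\ell,k$), selects a single branch. Consequently the resulting formal expression in twistors is identical for $\Gamma$ and for $\widehat\Gamma$, giving $\mathcal F(\tau_1,\Gamma)=\mathcal F(\hat\tau,\widehat\Gamma)$.

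The main obstacle is the last point: verifying that the sign choice in the quadratic equation (equivalently, the sign of the square root $S=\sqrt{S_{\{1,2,3,4\}}}$ appearing in $\mathbf v_{4,1,k}$) is canonical, i.e.\ the \emph{same} branch for every reduced $\Gamma$ with the arc at $\{1,2,3,4\}$. This requires showing that positivity on the cell $\Omega_\Gamma$ picks out the same sign as in $\Omega_{\widehat\Gamma}$; this is where one must use that positivity of $C$ restricts, via the orthogonality relation, to a sign constraint depending only on the orientation around the four boundary vertices $\{1,2,3,4\}$, a datum that is the same in both graphs.
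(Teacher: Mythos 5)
Your proof takes a genuinely different route from the paper's, and the difference exposes a gap in your final step.

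The paper's proof, after the initial $\mathrm{Cyc}$ reduction to $\ell=1$, performs a direct \emph{graph simplification}: it repeatedly applies $\mathrm{Rot}^{-1}_{i,i+1}$ (for $i,i+1\notin\{1,2,3,4\}$) to uncross arcs away from $\{1,4\}$ — using equivalence move~3 to move the crossing vertex to the boundary, then removing it — and $\mathrm{Inc}_r^{-1}$ to delete isolated external $2$-arcs. Both operations leave $\mathcal F(\tau_1,\cdot)$ unchanged by Propositions~\ref{prop:sol commute for arcs} and~\ref{prop:sol invariance for arcs}. This physically transforms $\Gamma$ into $\Gamma_4$, at which point the identity $\mathcal F(\{1,4\},\Gamma_4)=\mathcal F(\{1,4\},\Gamma_4)$ is tautological. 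No sign analysis is needed at this stage; the branch of the quadratic is fixed once and for all in Lemma~\ref{triag} for the top cell of $\OGnon{3}{6}$. You, by contrast, build a reference graph $\widehat\Gamma=\mathrm{Inc}_5^{k-3}(\Gamma_4)$, verify (correctly) that $\mathcal F(\hat\tau,\widehat\Gamma)$ equals the target formula, and then argue $\mathcal F(\tau_1,\Gamma)=\mathcal F(\hat\tau,\widehat\Gamma)$ by a ``locality of constraints'' principle. This avoids the $\mathrm{Rot}^{-1}$ reduction entirely, which is a real structural difference.

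The gap is precisely where you flag it. The defining system (the two linear equations from $\mathbf v\perp\lambda$ and the quadratic $\mathbf v\,\eta\,\mathbf v^\intercal=0$) indeed involves only twistors in $\{1,2,3,4\}$ and cuts out two branches $\mathbf v_\pm$. But asserting that positivity of $C$ ``restricts, via the orthogonality relation, to a sign constraint depending only on the orientation around the four boundary vertices'' is a heuristic, not a proof. Positivity of $C\in\Omega_\Gamma$ is a condition on all maximal minors of a $k\times 2k$ matrix of which $\mathbf v$ is a single row; it is not a priori a local condition on $\mathbf v$ alone, and the two branches $\mathbf v_\pm$ do not obviously have opposite sign patterns in the entries $\mathbf v^2,\mathbf v^3,\mathbf v^4$. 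To make your step rigorous you would need something like: (a)~$S_{\{1,2,3,4\}}>0$ on the interior of $\Omega_\Gamma$ so that the branch-selecting map is well-defined and continuous, hence constant on the connected cell (this is Proposition~\ref{prop:external radicand doen't vanish}, which in the paper comes \emph{after} and indeed \emph{uses} the present proposition in a parallel argument); and (b)~a direct comparison between $\Omega_\Gamma$ and $\Omega_{\widehat\Gamma}$ — e.g.\ through a common boundary stratum or via Proposition~\ref{prop:sol arc to angle}'s explicit formula for the associated vector under a $\tau_\ell$-proper orientation — to show both cells pick the same branch. Neither is present in your writeup, and (a) risks circularity as stated. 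The paper's $\mathrm{Rot}^{-1}$/$\mathrm{Inc}^{-1}$ reduction is designed precisely to bypass any fresh sign determination at this stage.
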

\begin{proof}
    
    By Proposition~\ref{prop:sol commute for arcs} we can use the \(\mathrm{Cyc}\) move to reduce the problem to the case of \(I=\{1,2,3,4\}\) (which implies \(l=1\) and \(\tau_\ell  = \{1,4\}\)).

    We can reduce the number of vertices in the graph \(\Gamma\) using the \(\mathrm{Rot}^{-1}\) move according to Proposition~\ref{prop:sol invariance for arcs} until all the internal vertices are on the arc  \(\{1,4\}\), without changing the twistor-solution:
    If there are \(i,i+1\notin I\) such that the arcs \(\tau_i\) and \(\tau_{i+1}\) are crossing, we can use the equivalence move number 3 to replace the graph with an equivalent graph such that the crossing vertex is adjacent to an external vertex. We can now eliminate it using a \(\mathrm{Rot}^{-1}_{i,i+1}\) move without changing the twistor-solution according to Proposition~\ref{prop:sol invariance for arcs}. We will continue doing so until there are no more such arcs, meaning the only pairs of crossing arcs are ones where one of the arcs is contained in \(I\). The only arc contained in \(I\) is \(\{1,4\}\) as it is external, thus we can assume all the internal vertices are on the arc \(\{1,4\}\) in \(\Gamma\).

    Since the arc \(\{1,4\}\) is external, there are no arcs contained \(\{1,2,3,4\}\). Meaning that for any \(\{r,\tau(r)\}\) an arc that does not cross \(\{1,4\}\), we have that \(4<r,\tau(r)\). Since all the internal vertices are on the arc \(\{1,4\}\), we must have that \(\{r,\tau(r)\}\) is external with support of size \(2\). This means we have an arc \(\{r,r+1\}\) with \(4<r\), and \(\Gamma = \mathrm{Inc}_r (\Gamma_0)\) for some graph \(\Gamma_0\). We can now apply \(\mathrm{Inc}_r^{-1}\) according to Propositions~\ref{prop:sol commute for arcs}  and~\ref{prop:sol invariance for arcs} to reduce \(k\) by \(1\). As \(\{1,4\}\) is external, there are exactly \(2\) arcs that cross it. That means that there are exactly \(k - 3\) arcs that do not cross \(\{1,4\}\). By induction, we can now assume that all arcs cross \(\{1,4\}\) in \(\Gamma\) and \(k=3\).

    So now we can assume \(\ell=1\) and  \(k=3\). That is, that \(\{1,4\}\) is an external arc in \(\Gamma\), all the internal vertices are on that arc, and there are no arcs that do not cross it. This means we must have exactly \(2\) other arcs going straight across  \(\{1,4\}\), This means the graphs is precisely \(\Gamma_4\). In this case, the claim is trivial, finishing the proof.
\end{proof}

Let us now deal with the case of \(\OGnon{3}{6}\). We will do so by considering a different point of view on the problem of finding a twistor-solution for the top cell of \(\OGnon{3}{6}\), discussed in Section~\ref{sec:basic twist solution} and Lemma~\ref{triag}.
Consider the top cell of \(\OGnon{3}{6}\) with the following hyperbolic orientation and choice of angles:

\begin{center}
\scalebox{0.8}{
\begin{tikzpicture}[scale = 1]
\draw (0,0) circle (2cm);

\draw({-0.517638, 1.93185})node[anchor=south]{\(4\)}--(1.41421, -1.41421)node[anchor=north]{\(1\)};

\draw [very thick, -latex, shorten <=55] (1.41421, -1.41421)--(0.351694, 0.426123);

\draw(-1.93185, -0.517638)node[anchor=east]{\(5\)}--(1.93185, -0.517638)node[anchor=west]{\(2\)};

\draw [very thick, -latex, shorten <=55] (-1.93185, -0.517638)--(0.193185, -0.517638);

\draw(-1.41421, -1.41421)node[anchor=north]{\(6\)}--(0.517638, 1.93185)node[anchor=south]{\(3\)};

\draw [very thick, -latex, shorten <=55] (-1.41421, -1.41421)--(-0.351695, 0.426122);

\node[anchor=north] (c) at (0.896575-0.1, -0.517638)  {\(\alpha\)};
\node[anchor=east] (c) at (0, 1.03528-0.05)  {\(\beta\)};
\node[anchor=north] (c) at (-0.896575+0.1, -0.517638)  {\(\gamma\)};

\end{tikzpicture}
}
\end{center}

This defines this parameterization (the last two rows do not concern us at this moment):
\[
C(\alpha,\beta,\gamma) = 
\begin{pmatrix}
\mathrm{sinh}(\alpha)\mathrm{sinh}(\beta) &\mathrm{cosh}(\alpha)\mathrm{cosh}(\beta)&\mathrm{cosh}(\beta)&1&0&0\\
C_2^1&C_2^2&C_2^3&0&1&0\\
C_3^1&C_3^2&C_3^3&0&0&1\\
\end{pmatrix}
\]

The associated vector to \(\tau_1\) is
\(
\mathbf{v} = (\mathrm{sinh}(\alpha)\mathrm{sinh}(\beta) ,\,\mathrm{cosh}(\alpha)\mathrm{sinh}(\beta),\,\mathrm{cosh}(\beta),\,1,\,0,\,0).\)

By Lemma~\ref{lem:twist in C} we have that   \(\mathbf v \lambda ^\intercal=0\). Thus the four entries of \(\mathbf{v}\) are defined by the following equations:

\begin{align}
\label{defing eqns for v}
\begin{split}
    &\mathbf v_{1} \twist{Y}{2}{1}+\mathbf v_{3}\twist{Y}{2}{3}+\mathbf  v_{4}\twist{Y}{2}{4} = 0\\
    &\mathbf v_{1} \twist{Y}{3}{1}+\mathbf  v_{2}\twist{Y}{3}{2}+\mathbf  v_{4}\twist{Y}{3}{4} = 0\\
    &\mathbf v_{1}^2-\mathbf  v_{2}^2+\mathbf v_3 ^2 - \mathbf v_{4}^2 = 0\\
\end{split}
\end{align}
This set of equations has two solutions:
 \begin{align*}
     &\mathbf{v}_{\pm}^i =\\
      & \begin{cases}
        S_{\{2,3,4\}}(\Lambda,Y)&i=1\\
       \twist{Y}{ 1}{4}\twist{Y}{ 2}{4} - \twist{Y}{ 1}{3}\twist{Y}{ 2}{3}\pm\twist{Y}{ 3}{4}\sqrt{S_{\{1,2,3,4\}}(\Lambda,Y)}&i=2\\
       \twist{Y}{ 1}{2}\twist{Y}{ 2}{3} - \twist{Y}{ 1}{4}\twist{Y}{ 3}{4}\mp\twist{Y}{ 2}{4}\sqrt{S_{\{1,2,3,4\}}(\Lambda,Y)}&i=3\\
       \twist{Y}{ 1}{3}\twist{Y}{ 3}{4} - \twist{Y}{ 1}{2}\twist{Y}{ 2}{4}\pm\twist{Y}{ 2}{3}\sqrt{S_{\{1,2,3,4\}}(\Lambda,Y)}&i=4\\
        0 &\mathrm{otherwise},\\
    \end{cases}         
     \end{align*}
Notice that by Propositions~\ref{prop:S orth} and~\ref{prop:Consecutive Twistors}, we have \(\sqrt{S_{\{1,2,3,4\}}(\Lambda,Y)} = \twist{Y}{5}{6} \). It is easy to see \(\mathbf{v}_+\) corresponds to \(\Delta_+\). Since we know \(C\) has a non-zero vector satisfying equation (~\ref{defing eqns for v}) that has support \(\{1,\,2,\,3,\,4\}\), by Proposition~\ref{prop:uniqueness of supp}, \(\mathbf v_+\) must be non-zero. Thus, it is the correct solution.

We are now ready to prove the final case of Proposition~\ref{prop:sol vnlk prop}:
\begin{proof}[Proof of the \(n=4\) case of Proposition~\ref{prop:sol vnlk prop}]
    Let \(\Gamma_0 = \mathrm{Arc}_{4,2}\mathrm{Arc}_{3,2}\mathrm{Arc}_{2,1}(O)\), that is the top cell of  \(OG_{\geq 0 } \left(3,\, 6\right)\). By Lemma~\ref{triag} we have that \(\mathcal F(\Gamma_0) =\Delta_+\). By Proposition~\ref{prop:uniqueness of supp} we have that \(\Delta_+\) contains a unique vector with support \(I=\{1,2,3,4\}\). By the above discussion, it corresponds to the vector \(\mathbf v_+\) defined above. Notice that \(\mathbf v_+ = \mathbf{v}_{4,1,3}(\Lambda,Y)\) from Definition~\ref{def:vec angle exp} for \([C,\Lambda,Y]\in\mathcal U^\geq_3\) and \(C\in\Omega_{\Gamma_0}\). Thus, we can deduce that \(\mathcal{F}(\{1,4\},\Gamma_0)=\mathbf v_{4,1,3}\).
    By Proposition~\ref{prop:arc sol depends on supp} we can now deduce that 
        \(
    \mathbf v_{4,1,3} = \mathcal F(\{1,4\},\Gamma_4),
    \)
    with \(\Gamma_4\) as seen in Figure~\ref{fig:simlpest 4arc}. It is easy to see that for \(\mathbf v_{4,\ell,k} \) as defined in Definition~\ref{def:vec angle exp} we have that
    \[
    \mathbf v_{4,\ell,k} =\mathrm{Cyc}^{\ell-1}_{k}\mathrm{Inc}_5^{k-3}\mathbf v_{4,1,3} = \mathrm{Cyc}^{\ell-1}_{k}\mathrm{Inc}_5^{k-3}\mathcal F(\{1,4\},\Gamma_4).
    \]
    Thus by Proposition~\ref{prop:arc sol depends on supp}, we have that for all \(k\)-OG graph \(\Gamma\) with corresponding permutation \(\tau\), and an external \(4\)-arc \(\tau_\ell\) of \(\Gamma\), \(\tau_\ell\) is twistor-solvable and \(\mathcal F(\tau_\ell,\Gamma) = \mathbf v_{4,\ell,k}\). Finishing the proof. 
\end{proof}

   \section{Local Separation}
   \label{separation section}
   The goal of this section is to prove the following theorem:
   \begin{thm}
    \label{thm:local sep}
        Let \(\Lambda\in \mathrm{Mat}^>_{2k\times(k+2)}\) and let \(\Delta = (\Gamma_+,\Gamma_0,\Gamma_-)\) be a boundary triplet of \(k\)-OG graphs. Then there is an open neighborhood \(U\supset \Omega_{\Gamma_0}\) in \(\OG{k}{2k}\), such that \(U\cap \Omega_\Delta\) is mapped injectively by the amplituhedron map \(\widetilde \Lambda\).
    \end{thm}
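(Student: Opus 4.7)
My plan is to argue by induction on $k$, using Observation~\ref{obs:boundary triplet back prom}: for $k > 4$, every boundary triplet $\Delta$ has the form $\mathrm{Arc}_{4,2\ell}\Delta_0$ for a boundary triplet $\Delta_0$ of $(k-1)$-OG graphs, with the boundary vertices matched. The base case is $k=4$, where $\Gamma_0$ is the spider (Figure~\ref{fig:spider}) and $\Gamma_\pm$ are the two $4$-BCFW graphs of Figure~\ref{fig:BCFWk4}.

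The first step is to reformulate the desired local injectivity as a sign comparison. By Observation~\ref{obs: canonical para codim one}, the canonical parameterizations $\varphi_{\Gamma_\pm}$ extend to diffeomorphisms from open neighborhoods of $[0,\tfrac{\pi}{2})^{\mathcal{V}(\Gamma_\pm)}$ onto open subsets of $\OG{k}{2k}$ on which $\widetilde{\Lambda}$ is defined. Combined with the injectivity of $\widetilde{\Lambda}$ on each of $\Omega_{\Gamma_\pm}$ and $\Omega_{\Gamma_0}$ (Theorem~\ref{BCFW inj}), the submersion property of the extended $\widetilde{\Lambda}$ (Proposition~\ref{prop:extends to a nbhd}), and the dimension match provided by Theorem~\ref{dimension thm} and Proposition~\ref{prop:zerolocus submfld}, the compositions $\widetilde{\Lambda}\circ\varphi_{\Gamma_\pm}$ are local diffeomorphisms at the boundary vertex $v_\pm = 0$. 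Consequently $\widetilde{\Lambda}(\Omega_{\Gamma_0})$ is a codimension-one hypersurface inside a neighborhood in $\mathbf{Y}^k_\Lambda$ cutting it into two sides, and it suffices to exhibit a smooth function $g\in\mathcal{F}_{[2k]}$ vanishing on $\widetilde{\Lambda}(\Omega_{\Gamma_0})$ and taking \emph{opposite} signs on $\widetilde{\Lambda}(\Omega_{\Gamma_\pm})$ locally; after pullback this reduces to comparing the leading order of $g\circ\widetilde{\Lambda}\circ\varphi_{\Gamma_\pm}$ in the variable $\alpha_{v_\pm}$.

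For the base case $k=4$, the permutation $\tau$ of the spider forces a specific Mandelstam $S_I$ to vanish on $\Omega_{\Gamma_0}$ by Proposition~\ref{prop:S orth}, while $S_I$ is nonzero generically on $\Omega_{\Gamma_\pm}$. I would take $g = S_I$ for such an $I$, substitute the explicit twistor-solutions $\mathcal{F}(\Gamma_\pm)$ obtained via Proposition~\ref{prop:sol induction prop} together with the sign-determined solution $\Delta_+$ of Lemma~\ref{triag}, expand to leading order as $\alpha_{v_\pm} \to 0^+$, and verify that the two leading coefficients are nonzero with opposite signs. For the inductive step $\Delta=\mathrm{Arc}_{4,2\ell}\Delta_0$, promotions commute with canonical parameterizations, twistor-solutions, and evaluation on $\mathcal{U}_k^\geq$ (Propositions~\ref{prop:sol commute for graphs},~\ref{prop:sol commute for angles},~\ref{prop:com diag}), and act compatibly on Mandelstam variables and other $\mathcal{F}$-expressions (Propositions~\ref{s rot},~\ref{s inc},~\ref{s cyc}). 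Thus if $g_0$ locally separates $\Delta_0$, then $g := \mathrm{Arc}_{4,2\ell}(g_0)$ locally separates $\Delta$; signs are preserved because $\mathrm{Arc}_{4,2\ell}$ is built from an $\mathrm{Inc}$ (which preserves positivity by Corollary~\ref{coro:inc pos}) and hyperbolic rotations parameterized by smooth positive angles along the cells.

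The hard part will be the explicit sign computation in the base case: choosing the correct Mandelstam $I$ and verifying by direct expansion that the two leading coefficients differ in sign. A secondary subtlety is ensuring that $g$, which is a priori only an expression in twistors along the images of cells, extends to a smooth function in a neighborhood of $\widetilde{\Lambda}(\Omega_{\Gamma_0})$ inside $\mathbf{Y}^k_\Lambda$—this is automatic for polynomial expressions in twistors but would need to be checked if square roots appear, in which case one can instead argue at the level of signs of $g^2$ together with a separate continuity argument. Once the base case is pinned down, the induction is essentially formal via the machinery of Section~\ref{sec:prom}.
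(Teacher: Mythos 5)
Your high-level plan matches the paper's: induction on $k$ starting at $k=4$, using Observation~\ref{obs:boundary triplet back prom} for the step, and reducing local injectivity to a sign comparison of a boundary-defining function. However, there are substantive gaps in how you would carry this out.

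First, your choice of separator is wrong. You propose taking a plain Mandelstam $S_I$ that vanishes on the spider cell via Proposition~\ref{prop:S orth}. For the spider permutation $\tau=(1,6)(2,5)(3,8)(4,7)$, the only $I$ satisfying $|I\setminus\tau(I)|<2$ are non-consecutive sets like $\{1,6\}$ or $\{1,2,5,6\}$, and for two-element such $I$ the Mandelstam is a perfect square (up to sign), which cannot change sign between $\Omega_{\Gamma_+}$ and $\Omega_{\Gamma_-}$. The paper's separators are instead the \emph{vertex-separators} of Definition~\ref{def:vertex sep}, which for the spider triplet are $\mathrm{Arc}_{4,2\ell}$-promotions of three-index Mandelstams and hence involve square roots (see Example~\ref{exm:s234}). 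Their vanishing and positive normal derivative at $\Omega_{\Gamma_0}$ come from the $4$-native analysis of Proposition~\ref{prop:sep 4-native}, not from Proposition~\ref{prop:S orth}. Related to this, your fallback of replacing a radical function $g$ with $g^2$ defeats the purpose, since squaring destroys exactly the sign change you need. The smoothness issue you flag as a ``secondary subtlety'' is actually central and is precisely what the positively-radical machinery of Section~\ref{sec:positively rad} (culminating in Observation~\ref{obs:pos rad is smooth}) is designed to handle.

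Second, checking the leading coefficient ``at one point'' does not by itself produce a neighborhood $U$ of all of $\Omega_{\Gamma_0}$: the sign of the transverse derivative is a function along $\Omega_{\Gamma_0}$ (and a priori also depends on $\Lambda$), and could in principle flip. The paper addresses this with a connectedness argument (Lemma~\ref{lem:sign is const} applied to the universal family $\mathbf{Y}^k$ of Definition~\ref{def:Ykg}, together with the connectedness and smoothness assertions of Observation~\ref{obs:Y connected smooth submanifold}), which guarantees the transverse sign is constant and thus reduces the entire theorem (for all $\Lambda$ and all boundary points) to a single numeric check at one point for $k=4$. Moreover, the paper uses \emph{two} separators $S_+$ and $S_-$: each $S_\epsilon$ manifestly vanishes on $\Omega_{\Gamma_0}$ with positive derivative into $\Omega_{\Gamma_\epsilon}$, and the nontrivial content is that $\mathrm{d}S_+/\mathrm{d}S_-<0$ transversally (Propositions~\ref{prop:k=4 sep sign} and~\ref{prop:sep sign}); this is then fed into the topological separation Lemma~\ref{lem:local sep form connectedness}. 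Your single-function formulation implicitly assumes the conclusion of this comparison, rather than proving it.
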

This theorem will be pivotal for the proof of the BCFW tiling of the amplituhedron in Section~\ref{sec:tiling}. In order to prove Theorem~\ref{thm:local sep} in Section~\ref{sec:proving local separation} we will use a topological argument based on comparing the derivatives of different functions that define the boundary between cells (See Proposition~\ref{prop:sep 4-native} for details of those functions). In order to apply the argument, we will first need to show the two following claims:
\begin{thm}
\label{thm:ampli diffeo}
    Let \(\Lambda\in \mathrm{Mat}_{2k\times(k+2)}^>\), \(\Gamma\) and let be a \(k\)-BCFW graph or a boundary of a BCFW graph. Then \(\evalat{\widetilde \Lambda}{\Omega_\Gamma}\) is a diffeomorphism onto its image.
\end{thm}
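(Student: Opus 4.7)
The plan is to combine Theorem~\ref{BCFW inj} (injectivity) with the explicit twistor-solution of $\Gamma$ to manufacture a smooth inverse, so that $\widetilde\Lambda|_{\Omega_\Gamma}$ becomes a smooth bijection with smooth inverse onto its image. Smoothness of $\widetilde\Lambda|_{\Omega_\Gamma}$ itself is immediate: $\Omega_\Gamma$ is a smooth manifold via its canonical parameterization by angles (Theorem~\ref{param bij} for BCFW graphs, and the analogous parameterization coming from Proposition~\ref{prop:limits commute with para} for boundary cells), and $\widetilde\Lambda$ extends smoothly to an open neighborhood of $\OGnon{k}{2k}$ in $\OG{k}{2k}$ by Proposition~\ref{prop:extends to a nbhd}.

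By Theorem~\ref{BCFW inj} together with Proposition~\ref{prop:sol induction prop}, there is a twistor-solution $M \in \mathcal{F}_{[2k]}^{k\times 2k}$ of $\Gamma$. By Definition~\ref{def:twistor sol} this means that whenever $[C,\Lambda,Y]\in \mathcal{U}^\geq_k$ with $C\in \Omega_\Gamma$, one has $C = M(\Lambda,Y)$ as elements of $\Gr{k}{2k}$. Hence $Y\mapsto M(\Lambda,Y)$ is a set-theoretic inverse of $\widetilde\Lambda|_{\Omega_\Gamma}$ on its image.

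To upgrade this to a smooth inverse we must check that $M$ is smooth on $\widetilde\Lambda(\Omega_\Gamma)$. From the inductive arc-sequence construction (Proposition~\ref{prop:sol induction prop} and Remark~\ref{rmk:no hyp func}), $M$ is assembled from rational expressions in twistors together with nested square roots $\sqrt{S_I}$, one for each external $4$-arc appearing in the arc-sequence of $\Gamma$. By Propositions~\ref{prop:Consecutive Twistors} and~\ref{prop:S orth} the Mandelstam variables $S_I$ are non-negative on $\widetilde\Lambda(\OGnon{k}{2k})$; the base case $k=3$ in Section~\ref{sec:case n=4} shows $\sqrt{S_{\{1,2,3,4\}}} = \langle 5\,6\rangle > 0$ on the interior, and the positivity propagates inductively through the $\mathrm{Arc}$-moves thanks to the transformation rules of Propositions~\ref{s rot},~\ref{s inc}, and~\ref{s cyc}. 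Consequently $M$ is smooth on an open neighborhood of $\widetilde\Lambda(\Omega_\Gamma)$.

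With a smooth bijection and a smooth inverse in hand, $\widetilde\Lambda|_{\Omega_\Gamma}$ is a diffeomorphism onto its image, and the image is a smooth embedded submanifold of $\Gr{k}{k+2}$. The main obstacle is the positivity verification in step three: for a BCFW graph the inductive argument runs cleanly, but when $\Gamma$ is a boundary graph, some Mandelstam variables vanish identically on $\Omega_\Gamma$ (cf.\ the ``$|I \setminus \tau(I)| < 2$'' clause in Proposition~\ref{prop:S orth}), so one must use Proposition~\ref{prop:arc limits boundary sub BCFW} to pin down exactly which square-root arguments actually appear in $M$ for the reduced sub-BCFW arc-sequence of $\Gamma$, and confirm strict positivity of those on the interior of $\widetilde\Lambda(\Omega_\Gamma)$.
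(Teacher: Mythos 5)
Your overall route — invert $\widetilde\Lambda|_{\Omega_\Gamma}$ via the explicit twistor-solution and then verify that the inverse is smooth by checking positivity of the radicands — is essentially the paper's strategy, and the scaffolding (parameterization from Theorem~\ref{param bij}, smooth extension from Proposition~\ref{prop:extends to a nbhd}, twistor-solutions from Proposition~\ref{prop:sol induction prop}) is correctly assembled. The paper packages the positivity step via the notion of \emph{positively-radical} vertices (Definition~\ref{def:pos rad vert}), proves Proposition~\ref{prop:the natives are radicalized} that $n$-native vertices for $n\le4$ are positively-radical, and then invokes Corollary~\ref{coro:BCFW are rad}; you are doing the same thing in less structured language.

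However, there are two genuine problems in your step three. First, the attribution is wrong: Propositions~\ref{prop:Consecutive Twistors} and~\ref{prop:S orth} do \emph{not} imply that Mandelstam variables are non-negative on $\widetilde\Lambda(\OGnon{k}{2k})$. General non-negativity of consecutive Mandelstams requires \emph{strongly positive} $\Lambda$ (Theorem~\ref{nonneg mandelstam thm}); the theorem you are proving must hold for all $\Lambda\in\mathrm{Mat}^>_{2k\times(k+2)}$. The only Mandelstam fact you actually need, and the only one available for general positive $\Lambda$, is the much narrower statement that the \emph{specific} arc-radicand $\mathrm{Arc}_{\ldots}(S_{\{1,2,3,4\}})$ appearing in the twistor-solution is positive on the relevant cell; this is proved in the paper's Proposition~\ref{prop:external radicand doen't vanish} by the same $k=3$ reduction you sketch plus Lemma~\ref{lem:vanishing of i i+1 in k=3}, not by appealing to Mandelstam positivity in general. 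Second, your final paragraph names the real obstruction — that on boundary cells some Mandelstams vanish identically — and then leaves it unresolved. The resolution (Proposition~\ref{prop:arc limits boundary sub BCFW} combined with Proposition~\ref{prop:the natives are radicalized}) is that the reduced sub-BCFW arc-sequence of a boundary graph replaces some $\mathrm{Arc}_{4,\ell}$ moves with $\mathrm{Arc}_{n,\ell}$ with $n<4$, which introduce no new square roots, so the remaining radicands are exactly the ones whose positivity is guaranteed; moreover the secondary radicands of the form $\sqrt{\cosh^2-1}=\sinh$ coming from the $\mathrm{Rot}$ factors must also be checked (they are positive because hyperbolic angles are positive — see Lemma~\ref{lem:prom radical pos}, case 2), which your sketch omits. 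Without these two fixes the argument does not close.
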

\begin{thm}
\label{thm:BCFW and codim 1 bounds are smooth}
    Let \(\Lambda\in \mathrm{Mat}_{2k\times(k+2)}^>\), and let \(\Gamma\) be a \(k\)-BCFW graph or a codimension \(1\) boundary of BCFW graph. Then \(\widetilde \Lambda (\Omega_\Gamma)\subset Y_\Lambda^k\) is a smooth submanifold.
\end{thm}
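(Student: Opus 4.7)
The plan is to deduce this from Theorem~\ref{thm:ampli diffeo} in two cases, reducing the codimension one case to the BCFW case via the extended canonical parameterization of an adjacent BCFW cell.

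First, suppose \(\Gamma\) is a BCFW graph. Counting the internal vertices in the arc-sequence representation of Proposition~\ref{prop:BCFW arcs} gives \(\dim \Omega_\Gamma = 2k-3\), which matches \(\dim \mathbf{Y}^k_\Lambda\) by Proposition~\ref{prop:zerolocus submfld}. Theorem~\ref{thm:ampli diffeo} says \(\widetilde\Lambda|_{\Omega_\Gamma}\) is a diffeomorphism onto its image. Composing with the canonical parameterization \(\varphi_\Gamma\), we obtain a smooth injection from an open subset of \(\mathbb{R}^{2k-3}\) into the \((2k-3)\)-dimensional manifold \(\mathbf{Y}^k_\Lambda\); invariance of domain then forces \(\widetilde\Lambda(\Omega_\Gamma)\) to be open in \(\mathbf{Y}^k_\Lambda\), hence a smooth submanifold.

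Next, suppose \(\Gamma\) is a codimension one boundary of some BCFW cell. By Corollary~\ref{coro: codim 1 boudnary classification}, choose a BCFW graph \(\Gamma_+\) and an internal vertex \(v\) of \(\Gamma_+\) so that \(\Gamma = \partial_v \Gamma_+\). Use the extended canonical parameterization \(\varphi_{\Gamma_+}: U \to \OG{k}{2k}\) from Observation~\ref{obs: canonical para codim one}, which restricts on \(\{v=0\}\cap U\) to \(\varphi_\Gamma\). Form the smooth composite \(F := \widetilde\Lambda \circ \varphi_{\Gamma_+}: U \to \mathbf{Y}^k_\Lambda\). The BCFW case already shows \(F\) is a local diffeomorphism on \(\{v>0\}\cap U\), and Theorem~\ref{thm:ampli diffeo} applied to \(\Gamma\) tells us \(F|_{\{v=0\}\cap U}\) is a diffeomorphism onto \(\widetilde\Lambda(\Omega_\Gamma)\); in particular \(dF|_{T\{v=0\}}\) has rank \(2k-4\) everywhere along the boundary slice.

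The hard part will be upgrading these two separate pieces of data into the statement that \(dF\) has full rank \(2k-3\) at each boundary point --- equivalently, that \(\partial_v F\) remains linearly independent of \(dF(T\{v=0\})\) as \(v\to 0\). I would argue this by combining the submersion property of \(\widetilde\Lambda\) on an open neighborhood of \(\OGnon{k}{2k}\) (Proposition~\ref{prop:extends to a nbhd}) with the smooth extension of \(\varphi_{\Gamma_+}\) across \(\{v=0\}\). Writing \(dF = d\widetilde\Lambda \circ d\varphi_{\Gamma_+}\), full rank of \(dF\) is equivalent to transversality of the \((2k-3)\)-dimensional subspace \(\operatorname{Im}(d\varphi_{\Gamma_+})\) with \(\ker d\widetilde\Lambda\) inside \(T\OG{k}{2k}\); this is an open condition that holds throughout the interior, and smoothness of both factors lets us shrink \(U\) so that it persists across \(\{v=0\}\) as well.

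Once full rank of \(dF\) is established near \(\{v=0\}\), the inverse function theorem gives a local diffeomorphism at each boundary point, mapping the smooth hypersurface \(\{v=0\}\cap U\) locally diffeomorphically onto a codimension one smooth submanifold of \(\mathbf{Y}^k_\Lambda\). Global injectivity of \(\widetilde\Lambda|_{\Omega_\Gamma}\) from Theorem~\ref{BCFW inj} then patches these local embeddings into a global smooth embedding, exhibiting \(\widetilde\Lambda(\Omega_\Gamma)\) as a smooth codimension one submanifold of \(\mathbf{Y}^k_\Lambda\), as required.
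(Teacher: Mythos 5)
The BCFW case you handle the same way as the paper: dimension count plus the diffeomorphism of Theorem~\ref{thm:ampli diffeo} plus invariance of domain.

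For the codimension one case there is a genuine gap at exactly the place you yourself flag as ``the hard part.'' The claim that \(dF\) has full rank at boundary points --- equivalently, that \(\partial_v F\) is not tangent to \(dF(T\{v=0\})\) as \(v\to 0\) --- is asserted, not proved. Your argument is that transversality of \(\operatorname{Im}(d\varphi_{\Gamma_+})\) with \(\ker d\widetilde\Lambda\) ``is an open condition that holds throughout the interior, and smoothness of both factors lets us shrink \(U\) so that it persists across \(\{v=0\}\).'' This is a non sequitur: openness of a condition means that, once it holds at a point, it holds in a neighborhood of that point; it does not mean that holding on an open set forces it to hold on the closure of that set. The rank of \(dF\) can degenerate exactly on the boundary slice --- for instance \(F(x,y)=(x,xy)\) has full rank for \(x>0\) and drops rank precisely at \(x=0\). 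Nothing in your argument rules out this behaviour, and shrinking \(U\) cannot help: no matter how small \(U\) is, \(\{v=0\}\cap U\) always contains boundary points.

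The paper closes exactly this gap by producing an explicit \emph{vertex-separator} \(S\) of \(v\) (Proposition~\ref{prop:sep 4-native}): a positively-radical function on a neighborhood of \(\widetilde\Lambda(\Omega_{\Gamma_0})\) that vanishes there and has \(\evalat*{\frac{\partial}{\partial v}S}{\Gamma_0}>0\). Since \(S\) vanishes on \(\widetilde\Lambda(\Omega_{\Gamma_0})\) but its \(v\)-derivative is positive, the direction \(\partial_v F\) is genuinely transverse to a smooth hypersurface \(\{S=0\}\) containing the image, which is what you needed. Establishing that positivity of the derivative is where the real work lies --- Proposition~\ref{prop:sep 4-native} is proved by a chain of promotion reductions to an explicit \(k=3\) calculation (Proposition~\ref{prop:sep k=3}) --- and this content is entirely absent from your proposal. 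After that, the paper does not even need the inverse function theorem: it slices out the smooth \((2k-4)\)-dimensional hypersurface \(\{S(\Lambda,\bullet)=0\}\) and repeats the dimension-matching argument from the BCFW case inside it.
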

Recall Definition~\ref{def:Ylk}, the definition of \(Y_\Lambda^k\).
\subsection{Positively-Radical Vertices}
\label{sec:positively rad}
The goal of this section is to prove Theorem~\ref{thm:ampli diffeo} and to lay the groundwork for the proof of Theorem~\ref{thm:BCFW and codim 1 bounds are smooth}. To that end, we take a closer look at the specific functions defining the twistor-solutions introduced in Section~\ref{sec:inj}. These functions are algebraic expressions in abstract twistors, built from rational functions and square roots. Such expressions are well-behaved when the square roots are taken over positive quantities. Our aim is therefore twofold: first, to understand which expressions may appear under radicals in the twistor-solutions, and second, to demonstrate that these expressions are indeed positive on the image of the orthitroid cell to which they correspond. This analysis also justifies Remark~\ref{rmk:square roots}.

There is a natural analogy here with classical Euclidean constructions. Numbers expressible in closed form using only integers, addition, subtraction, multiplication, division, and square roots are known as constructible numbers, familiar from straightedge and compass constructions. The functions appearing in the twistor-solutions can be seen as analogous—constructed from similar operations, though using twistor variables in place of integers.
\begin{dfn}
    \label{def:native vertex}
    Let \(\Gamma\) be an OG graph with \(\tau_\ell\) an external \(n\)-arc. Then the an internal vertex \(v\) on \(\tau_\ell\) is called a \emph{native vertex} to \(\tau_\ell\) or an \emph{\(n\)-native vertex}.
    
    Let \(\Xi\) be an arc-sequence of \(m\geq0\) consecutive \(\mathrm{Arc}_{n_i,\ell_i}\) moves with \(n_i\leq4\) such that \(\Gamma = \Xi(\Gamma^\prime)\) is reduced after each move. Suppose that \(v^\prime\) is a vertex on an external \(n\)-arc \(\tau^\prime_{\ell^\prime}\) in \(\Gamma^\prime\), and that \(v = \Xi(v^\prime)\) and \(\tau_\ell = \Xi(\tau^\prime_{\ell^\prime})\). Then we would call \(v\) an \emph{\(n\)-native vertex} or \emph{native vertex} to the arc \(\tau_\ell\) in \(\Gamma\) as well (see Figure~\ref{fig:4native}). \(\tau_\ell\) would be called \(n\)-descendant. \(v^\prime\) and \(\tau^\prime_{\ell^\prime}\) would be called \emph{ancestral-vertex} and \emph{ancestral-arc} of \(v\) and \(\tau_\ell\) respectively. We will call \(\Xi\) an \text{ancestry-sequence} of \(v\) to \(\tau^\prime_{\ell^\prime}\).

    \(\omega\) will be called an \emph{\(\tau_\ell\)-proper orientation} or  \emph{\(\tau^\prime_{\ell^\prime}\)-proper orientation} for \(v\) if \(v^\omega = \Xi(v^{\prime \omega^\prime})\) and \(\omega^\prime\) is a \(\tau^\prime_{\ell^\prime}\)-proper orientation for \(v^\prime\). Notice that a proper orientation is always hyperbolic.   
\end{dfn}
Note a vertex can be native to more then one arc.
\begin{figure}[H]
    \centering
                \begin{center}
                \scalebox{0.8}{
\begin{tikzpicture}

\draw (0,-2.4)node[anchor=north]{\(\Gamma^\prime\)};
\draw (6,-2.4)node[anchor=north]{\(\Gamma\)};

\draw (0,0) circle (2cm);
\draw(1.41421, 1.41421)node[anchor=south]{\(3\)}--({-0.517638, -1.93185})node[anchor=north]{\(6\)}node[pos=0.43](a4){} node[pos=.42](b4){} node[pos=0.08](a5){} node[pos=.07](b5){} node[pos=.82](a6){} node[pos=.81](b6){};
\fill[blue] (0.448286*2, 0.258819*2) circle (2pt);
\draw (0.448286*2+0.1, 0.258819*2)node[anchor=north]{\(v^\prime\)};
\draw[blue](-1.93185, 0.517638)--(1.93185, 0.517638)node[pos=0.43](a1){} node[pos=.42](b1){} node[pos=0.08](a2){} node[pos=.07](b2){} node[pos=.82](a3){} node[pos=.81](b3){};
\draw(-1.93185, 0.517638)node[anchor=east]{\(5\)};
\draw(1.93185, 0.517638)node[anchor=west]{\(2\)} ;
\draw(-1.41421, 1.41421)node[anchor=south]{\(4\)}--(0.517638, -1.93185)node[anchor=north]{\(1\)}node[pos=0.43](a7){} node[pos=.42](b7){} node[pos=0.08](a8){} node[pos=.07](b8){} node[pos=.82](a9){} node[pos=.81](b9){};

\draw[blue] [very thick, -latex, shorten <=5] (a1)--(b1);
\draw[blue] [very thick, -latex, shorten <=5] (a2)--(b2);
\draw[blue] [very thick, -latex, shorten <=5] (a3)--(b3);

\draw[very thick, -latex, shorten <=5] (a4)--(b4);
\draw[very thick, -latex, shorten <=5] (a5)--(b5);
\draw[very thick, -latex, shorten <=5] (a6)--(b6);

\draw[very thick, -latex, shorten <=5] (a7)--(b7);
\draw[very thick, -latex, shorten <=5] (a8)--(b8);
\draw[very thick, -latex, shorten <=5] (a9)--(b9);

\draw (0+6,0) circle (2cm);
\draw (0+6+0.15,0)node[anchor=north]{\(v\)};

\draw({-0.765367+6, 1.84776})node[anchor=south]{\(5\)}--(1.84776+6, -0.765367)node[anchor=west]{\(2\)}node[pos=0.43](a1){} node[pos=.42](b1){} node[pos=0.15](a2){} node[pos=.14](b2){} node[pos=.7](a3){} node[pos=.69](b3){};
\draw [very thick, -latex, shorten <=5] (a1)--(b1);
\draw [very thick, -latex, shorten <=5] (a2)--(b2);
\draw [very thick, -latex, shorten <=5] (a3)--(b3);

\draw(-1.84776+6, 0.765367)node[anchor=east]{\(6\)}--(0.765367+6, -1.84776)node[anchor=north]{\(1\)}node[pos=0.43](a1){} node[pos=.42](b1){} node[pos=0.15](a2){} node[pos=.14](b2){} node[pos=.7](a3){} node[pos=.69](b3){};
\draw [very thick, -latex, shorten <=5] (a1)--(b1);
\draw [very thick, -latex, shorten <=5] (a2)--(b2);
\draw [very thick, -latex, shorten <=5] (a3)--(b3);

\draw(0.765367+6, 1.84776)node[anchor=south]{\(4\)}--(-0.765367+6, -1.84776)node[anchor=north]{\(8\)}node[pos=0.32](a1){} node[pos=.31](b1){} node[pos=0.9](a2){} node[pos=.08](b2){} node[pos=.81](a3){} node[pos=.80](b3){}node[pos=0.55](a4){} node[pos=.54](b4){};
\draw [very thick, -latex, shorten <=5] (a1)--(b1);
\draw [very thick, -latex, shorten <=80] (a2)--(b2);
\draw [very thick, -latex, shorten <=5] (a3)--(b3);
\draw [very thick, -latex, shorten <=5] (a4)--(b4);

\draw[blue](-1.84776+6, -0.765367)--(1.84776+6, 0.765367)node[pos=0.32](a1){} node[pos=.31](b1){} node[pos=0.9](a2){} node[pos=.08](b2){} node[pos=.81](a3){} node[pos=.80](b3){}node[pos=0.55](a4){} node[pos=.54](b4){};
\draw[blue] [very thick, -latex, shorten <=5] (a1)--(b1);
\draw[blue] [very thick, -latex, shorten <=80] (a2)--(b2);
\draw[blue] [very thick, -latex, shorten <=5] (a3)--(b3);
\draw[blue] [very thick, -latex, shorten <=5] (a4)--(b4);

\draw(-1.84776+6, -0.765367)node[anchor=east]{\(7\)};
\draw(1.84776+6, 0.765367)node[anchor=west]{\(3\)};
\fill[blue] (6, 0) circle (2pt);

\draw [very thick, -latex, shorten <=0] (2.3,0)--(3.7,0);

\draw(3,0)node[anchor=south]{\(\mathrm{Arc}_{4,2}\) };
\end{tikzpicture}
}
\end{center}
    \caption{a \(4\)-native vertex.}
    \label{fig:4native}
\end{figure}
\begin{exm}
Consider Figure~\ref{fig:4native}. We have \(\Gamma = \mathrm{Arc}_{4,2} \Gamma^\prime\). On \(\Gamma^\prime\), we have \(a^\prime = \{2,5\}\) being an external \(4\)-arc, with a \(a^\prime \)-proper orientation, and a highlighted internal vertex \(v^\prime\) on \(\{2,5\}\). In the graph \(\Gamma\), we have the the corresponding arc \(a = \mathrm{Arc}_{4,2} (a^\prime) = \{3,7\} \), with the inherited \(a\)-proper orientation. The corresponding vertex \(v =\mathrm{Arc}_{4,2} (v^\prime )\) is highlighted. 

The arc \(a\) is not an external \(4\)-arc, but it is \(4\)-descendant. We have that \(v\) is \(4\)-native to the arc \(a\). \(a^\prime\) and \(v^\prime\) are ancestral-arc and ancestral-vertex to \(a\) and \(v\). The ancestry-sequence is \(\Xi = \mathrm{Arc}_{4,2}\).
\end{exm}
Although a bit daunting, the notion of an \(n\)-native vertex is not new. This is the same structure of BCFW graphs and their boundaries expressed in Proposition~\ref{prop:BCFW arcs} and Section~\ref{sec:arc seq}, but now from the point of view of a single vertex. Since by Proposition~\ref{n<=4 twist sol prop} we have a very good control of vertices on external \(n\)-arcs with \(n\leq 4\), we would use this structure to reduces various claims to the case of a vertex being on such an arc.
\begin{obs}
\label{obs:BCFW 4 native}
    Let \(\Gamma\) be a BCFW graph or a boundary of a BCFW graph. Then every internal vertex in \(\Gamma\) is \(n\)-native to some arc with \(n\leq 4\), and every arc in \(\Gamma\) is \(n\)-descendant with \(n\leq 4\). If \(\Gamma\) is a BCFW graph, the all internal vertices are \(4\)-native to some arc, and every arc in \(\Gamma\) is \(4\)-descendant. 
\end{obs}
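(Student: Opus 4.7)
The plan is to exploit the arc-sequence representations of BCFW graphs (Proposition~\ref{prop:BCFW arcs}) and their boundaries (Proposition~\ref{prop:arc limits boundary sub BCFW}), and to read off the nativity of each internal vertex, and the descent of each arc, directly from the move that first introduces it into the sequence.

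For the general statement (\(n \leq 4\)), I would begin from a reduced sub-BCFW arc-sequence
\[
\Gamma = \mathrm{Arc}_{n_m,i_m}\cdots\mathrm{Arc}_{n_1,i_1}\mathrm{Arc}_{2,1}(O),
\]
which exists by Proposition~\ref{prop:BCFW arcs} if \(\Gamma\) is BCFW and by Proposition~\ref{prop:arc limits boundary sub BCFW} if \(\Gamma\) is a boundary of a BCFW graph; by the definition of sub-BCFW each \(n_j \leq 4\). Every internal vertex of \(\Gamma\) first appears as the vertex produced by one of the \(\mathrm{Rot}\)-factors inside some \(\mathrm{Arc}_{n_j,i_j}\), and at that moment it lies on the external \(n_j\)-arc that this move has just created. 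The suffix \(\mathrm{Arc}_{n_{j+1},i_{j+1}}\cdots\mathrm{Arc}_{n_m,i_m}\) consists of moves of sizes \(\leq 4\) with reduced intermediate graphs, so it qualifies as an ancestry-sequence in the sense of Definition~\ref{def:native vertex}, witnessing the vertex as \(n_j\)-native in \(\Gamma\). The same argument, applied now to the external arc introduced by the \(\mathrm{Inc}\)-factor of \(\mathrm{Arc}_{n_j,i_j}\), shows every arc of \(\Gamma\) is \(n_j\)-descendant for some \(n_j \leq 4\); since \(O\) has no arcs, every arc of \(\Gamma\) does originate from some such move.

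For the BCFW-specific refinement to \(n = 4\) (with \(k \geq 3\)), what remains is to upgrade the vertex created by \(\mathrm{Arc}_{3,2i_1}\), together with the arcs originating from \(\mathrm{Arc}_{2,1}\) and \(\mathrm{Arc}_{3,2i_1}\), which are a priori only \(3\)- or \(2\)-native/descendant. The strategy is to truncate the BCFW arc-sequence one move past the first \(\mathrm{Arc}_4\), producing the \(k = 3\) subgraph
\[
\Gamma_0 \defeq \mathrm{Arc}_{4,2i_2}\mathrm{Arc}_{3,2i_1}\mathrm{Arc}_{2,1}(O),
\]
which is itself a \(k = 3\) BCFW graph, hence (by Theorem~\ref{thm:BCFW are trees} together with the uniqueness in Proposition~\ref{prop:uniquness of trees}) the top cell of \(\OGnon{3}{6}\). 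Its three arcs are then the long diagonals \(\{1,4\},\{2,5\},\{3,6\}\), each having consecutive support of size \(4\) and therefore qualifying as an external \(4\)-arc. In particular every internal vertex of \(\Gamma_0\) lies on an external \(4\)-arc and is thus \(4\)-native in \(\Gamma_0\) via the empty ancestry-sequence, while every arc of \(\Gamma_0\) is likewise trivially \(4\)-descendant. The tail \(\mathrm{Arc}_{4,2i_3}\cdots\mathrm{Arc}_{4,2i_{k-2}}\), consisting exclusively of \(\mathrm{Arc}_4\) moves with reduced intermediates, then serves as an ancestry-sequence carrying these \(4\)-nativity and \(4\)-descent properties from \(\Gamma_0\) up to \(\Gamma\). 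The remaining internal vertices and arcs of \(\Gamma\), namely those introduced by the outer \(\mathrm{Arc}_4\) moves, are \(4\)-native/\(4\)-descendant directly by the same logic as in the general case.

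The hard part is this base case: the explicit verification that all three arcs of the truncated \(k = 3\) subgraph \(\Gamma_0\) are external \(4\)-arcs. In principle this is a routine computation — trace how the \(\mathrm{Inc}\) and \(\mathrm{Rot}\) factors in \(\mathrm{Arc}_{4,2i_2}\mathrm{Arc}_{3,2i_1}\mathrm{Arc}_{2,1}\) successively relabel the external legs and braid the arcs — but the bookkeeping is not entirely transparent. Conveniently, the \(\mathrm{Cyc}\)-equivariance of the construction reduces the check to a single representative choice of \((i_1, i_2)\), after which the three resulting arcs visibly coincide with the long diagonals of the hexagon.
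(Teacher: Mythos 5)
Your proposal is correct, and in fact it is more careful than the paper's own justification, which consists of the single line ``This is immediate from Propositions~\ref{prop:BCFW arcs} and~\ref{prop:arc limits boundary sub BCFW}.''

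The first half of your argument — reading off each vertex's and arc's nativity/descent from the move that introduces it in a (sub-)BCFW arc-sequence, with the suffix of the sequence serving as the ancestry-sequence of Definition~\ref{def:native vertex} — is precisely the ``immediate'' part. The subtle point, which you correctly isolate, is the strengthening to $n=4$ for BCFW graphs: the vertex created by $\mathrm{Arc}_{3,2i_1}$ and the arcs originating from $\mathrm{Arc}_{2,1}$ and $\mathrm{Arc}_{3,2i_1}$ are, a priori, only $3$- and $2$-native/descendant from the arc-sequence alone. Your truncation to $\Gamma_0 = \mathrm{Arc}_{4,2i_2}\mathrm{Arc}_{3,2i_1}\mathrm{Arc}_{2,1}(O)$, the $k=3$ top cell, is the right fix: its three arcs $\{1,4\},\{2,5\},\{3,6\}$ each have consecutive support of size $4$ with no arc contained in that support, so each is an external $4$-arc, each of the three internal vertices (the crossings) lies on two of them, and the tail of $\mathrm{Arc}_4$ moves then carries these facts up to $\Gamma$. (You are hedging on the ``routine computation'' at the end, but it is genuinely a one-move finite check — and Proposition~\ref{prop:uniquness of trees} tells you both choices of $i_2$ give the same, unique, $k=3$ BCFW graph — so this should be stated affirmatively rather than flagged as a gap.) Since the $4$-native claim is precisely what is consumed downstream by Definition~\ref{def:vertex sep} and Proposition~\ref{prop:sep 4-native}, having this base case spelled out is genuinely valuable; the paper's proof could reasonably be criticized for not making the $\mathrm{Arc}_3$ upgrade explicit.
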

\begin{proof}
    This is immediate form Propositions~\ref{prop:BCFW arcs} and~\ref{prop:arc limits boundary sub BCFW}.
\end{proof}
As stated before, we aim to study the expressions which can appear inside radicals in the twistor-solutions. Definition~\ref{def:radicand} aims to capture precisely those expressions.
\begin{dfn}
\label{def:radicand}
    Consider an OG graph \(\Gamma\) with an external \(4\)-arc \(\tau_\ell\), that is \(\Gamma = \mathrm{Arc}_{4,\ell} \Gamma_0\) for some graph \(\Gamma_0\). Let the support of \(\tau_\ell\) be \(I \subset [2k]\). We will call the Mandelstam variable \(S_I\) the \emph{the arc-radicand} of \(\tau_\ell\).

    Let \(v\) be a vertex that is \(4\)-native of \(\tau_\ell\) a \(4\)-descendant arc. Let \(\Xi\) be an ancestry-sequence of \(v\) to \(\tau_\ell\). We will call \(\Xi(S_I)\)  an \emph{arc-radicand} of \(\tau_\ell\) and a \emph{vertex-radicand} of \(v\).
\end{dfn}
\begin{rmk}
    A choice of an arc-sequence representation for a BCFW graph or a boundary of a BCFW graph gives each vertex and arc a unique ancestral-arc and radicand. 
\end{rmk}
\begin{rmk}
\label{rmk:what is a radicand}
    For a vertex \(v\) on an external \(4\)-arc of length four \(\tau_\ell\), the expression \(S_I\) that is arc-radicand of \(\tau_\ell\) and a vertex-radial of \(v\), is precisely the expression appearing in the radical of the of twistor-solution \(a = \mathcal F (v^\omega, \Gamma)\), with \(\omega\) a \(\tau_\ell\)-proper orientation by Proposition~\ref{prop:sol arc to angle} and Corollary~\ref{n<=4 twist solv coro}.

    Let \(\Xi\) be some ancestry-sequence of \(\Xi(v)\). Then by Proposition~\ref{prop:sol commute for angles} we have that \\\(\mathcal F(\Xi(v^\omega),\Xi(\Gamma)) = \Xi(a)\). Thus radicand \(S_i\) in \(a\) becomes \(\Xi(S_i)\) in the twistor-solution for \(\Xi(v)\) under a \(\tau_\ell\)-proper orientation. Notice \(\Xi(S_i)\) is how we defined a vertex-radicand of \(\Xi(v)\).
\end{rmk}
\begin{proof}
    This is immediate by Propositions~\ref{prop:arc limits boundary sub BCFW} and~\ref{prop:BCFW arcs}.
\end{proof}
Our strategy for proving Theorem~\ref{thm:ampli diffeo} is showing the twistor-solutions for vertices in BCFW cells and their boundaries are smooth. Observation~\ref{obs:pos rad is smooth} will be our chief strategy for showing functions are smooth on the amplituhedron:
\begin{dfn}
    We say \(f\in\mathcal F_{[2k]}\) is \text{positive} or \emph{zero} (on an OG graph \(\Gamma\)) if for any \([C,\Lambda,Y]\in\mathcal U^\geq_k\) (with \(C\in\Omega_\Gamma\)) we have that \(f(\Lambda,Y)>0\) or \(f(\Lambda, Y)= 0\) respectively.
\end{dfn}
Recall that we are concerned with the functions appearing in the twistor-solutions studied in Section~\ref{sec:inj}. Those include rational expressions and square roots. As stated before, such functions are well-behaved when those square roots, or radicals, are positive. 
\begin{dfn}
    We say \(a\in\mathcal F\) is \emph{positively-radical}  (on an OG graph \(\Gamma\)) if \(a\) has a representation as an algebraic expression in abstract twistors such that all of the radicals are positive (on \(\Gamma\)).
\end{dfn}
\begin{obs}
\label{obs:pos rad is smooth}
    If \(f\) is positively-radical in a \(k\)-OG graph \(\Gamma\), then for every \(\Lambda\in \mathrm{Mat}^>_{2k \times (k+2)}\), the map \(f(\Lambda,\bullet):\widetilde \Lambda(\Omega_\Gamma)\subset\mathcal O_k(\Lambda)\rightarrow \mathbb R\) can be extended to a smooth map in an open neighborhood of \(\widetilde \Lambda(\Omega_\Gamma)\).
\end{obs}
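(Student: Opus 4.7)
My plan is to prove this by a straightforward structural induction on a formal expression realizing $f$ as an algebraic combination of abstract twistors via sums, differences, products, quotients, and square roots (the operations explicitly enumerated in the notion of positively-radical).

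The base case is that each abstract twistor $\langle i\,j\rangle_\Lambda$ is, by Definition~\ref{def twist}, a polynomial in the entries of $\Lambda$ and of any matrix representative $M$ of $Y$, hence smooth as a function on $\mathrm{Mat}_{k\times (k+2)}$, and descends to a smooth function in any local chart of $\mathrm{Gr}_{k,k+2}$ (with the projective weight in $M$ canceling within any well-formed expression). For the inductive step, sums, differences, and products of smooth functions are smooth on the whole domain, while a quotient $g/h$ is smooth on the open set $\{h\neq 0\}$ and a square root $\sqrt{g}$ is smooth on the open set $\{g>0\}$, since $\sqrt{\cdot}$ is smooth on $(0,\infty)$.

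Because $f$ is positively-radical on $\Gamma$, every radicand appearing in the chosen representation is strictly positive on $\widetilde\Lambda(\Omega_\Gamma)$; and because $f(\Lambda,\bullet)$ is defined (finite) on $\widetilde\Lambda(\Omega_\Gamma)$, every intermediate denominator is nonzero there. By continuity, each of these finitely many open conditions persists on an open neighborhood of $\widetilde\Lambda(\Omega_\Gamma)$ in $\mathrm{Gr}_{k,k+2}$. Intersecting these finitely many neighborhoods yields a single open neighborhood $U\supset \widetilde\Lambda(\Omega_\Gamma)$ on which every sub-expression lies in its natural domain and is smooth; composing, $f(\Lambda,\bullet)$ extends to a smooth function on $U$.

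There is essentially no hard step here: the content of the argument is that strict positivity (of radicands) and non-vanishing (of denominators) are open conditions, so they propagate from $\widetilde\Lambda(\Omega_\Gamma)$ to a neighborhood. The only bit of bookkeeping is the projective scaling ambiguity in twistors: this is automatic because any formal expression in twistors that defines a function of $Y$ must be weight-homogeneous of weight zero in the matrix representative $M$, and positivity of a radicand is invariant under rescaling $M\mapsto gM$ (the radicand scales by $\det(g)^{2d}$ for some integer $d\geq 0$, which is positive). Thus the proof reduces to the routine observation that smooth functions are closed under the operations allowed in the definition of positively-radical, provided one stays in the positive locus of each radicand.
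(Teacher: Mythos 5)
The paper states this as an \emph{Observation} with no written proof, treating it as immediate, so there is no proof of record to compare against. Your structural-induction argument correctly supplies the intended (unwritten) reasoning: twistors are polynomial in the entries of $\Lambda$ and of a matrix representative of $Y$, hence smooth; smoothness is preserved under $+,-,\times$, under division on the open set where the denominator is nonvanishing, and under $\sqrt{\cdot}$ on the open set where the radicand is positive; the positively-radical hypothesis gives strict positivity of the radicands on $\widetilde\Lambda(\Omega_\Gamma)$, and since the relevant conditions define open sets containing $\widetilde\Lambda(\Omega_\Gamma)$, their (finite) intersection is an open neighborhood on which the full expression extends smoothly. Your remark on weight-homogeneity is also the right way to handle the projective ambiguity in the twistors.

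One step deserves tightening. You assert that ``because $f(\Lambda,\bullet)$ is defined (finite) on $\widetilde\Lambda(\Omega_\Gamma)$, every intermediate denominator is nonzero there.'' This does not follow in general: an algebraic expression can be finite at points where an internal denominator vanishes (a removable $0/0$), and the definition of positively-radical constrains only the radicands, not the denominators. Formally, one should either insist on a representation of $f$ whose denominators are nonvanishing on $\Omega_\Gamma$, or note (as is true in all uses in the paper) that the denominators appearing in the twistor-solutions of Propositions~\ref{n<=4 twist sol prop} and~\ref{prop:sol arc to angle} are products of terms such as $\sinh(\alpha_{n,\ell,k,j})$ and $\mathbf v_{n,\ell,k}^{\ell_1}$, which are manifestly nonzero on the corresponding cells. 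With that proviso your argument is sound and is exactly the implicit content of the Observation.
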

It is important to note that the existence of a positively-radical twistor-solution for an oriented vertex does not depend on the orientation selected for the vertex. It is thus a property inherent to the vertex itself. 
\begin{obs}
\label{obs:we are too rad for orientations}
    Let \(v^\omega\) be an hyperbolically oriented internal vertex in an OG graph \(\Gamma\), with twistor-solution \(\mathrm{arccosh}(f) = \mathcal F(v^\omega, \Gamma)\). Let \(v^\psi\) be a different hyperbolic orientation for the same vertex. Then \(v^\psi\) is twistor-solvable, and for \( \mathrm{arccosh}(g) = F(v^\psi, \Gamma)\) we have that \(f\) is positively-radical on \(\Gamma\) iff \(g\) is.
\end{obs}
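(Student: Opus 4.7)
The statement concerns two hyperbolic orientations $\omega,\psi$ of the same $4$-valent internal vertex $v$ in $\Gamma$. At a single vertex there are only finitely many hyperbolic orientations (in fact, at most two, corresponding to the two ways of choosing which pair of opposite incident edges plays the role of sources). My plan is to show that the reparametrization relating $\alpha_\omega$ and $\alpha_\psi$ is algebraic and introduces no new square roots, whence positive-radicality is automatically preserved.

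First I would invoke Proposition~\ref{prop:angles repara}, which provides an explicit reparametrization between the angle coordinates of different perfect orientations of the same orthitroid cell. Restricting this to the vertex $v$ alone (the rest of $\Gamma$ being irrelevant because the weight at each vertex is local), I would verify using the explicit local formulas of Appendix~\ref{app:local moves} that $\cosh(\alpha_\psi)$ can be written as a rational function of $\cosh(\alpha_\omega)$ and $\sinh(\alpha_\omega)$. Writing $f = \cosh(\alpha_\omega)$ and recalling from Remark~\ref{rmk:no hyp func} that $\sinh(\alpha_\omega) = \sqrt{f^2-1}$, this yields an algebraic expression for $g = \cosh(\alpha_\psi)$ in terms of $f$ in which the only potentially new radical is $\sqrt{f^2-1}$. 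But $\alpha_\omega$ is real-valued on $\Omega_\Gamma$, so $f \geq 1$ there, and hence $f^2-1 \geq 0$ automatically.

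Consequently, given any positively-radical representation of $f$, substituting it into the rational expression for $g$ produces a representation of $g$ whose radicals are either (i) the radicals already appearing in $f$, positive on $\Gamma$ by assumption, or (ii) the new radical $\sqrt{f^2-1}$, whose radicand is non-negative on $\Gamma$ by the previous paragraph. Thus $g$ is positively-radical on $\Gamma$, and the converse follows by swapping the roles of $\omega$ and $\psi$.

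The main point of substance is verifying the precise local form of the reparametrization at $v$ and confirming that it really is rational in $\cosh(\alpha_\omega)$ and $\sinh(\alpha_\omega)$---i.e., that no hidden square roots are introduced when one toggles between the two hyperbolic orientations at a single $4$-valent vertex. This is a purely local computation that follows directly from the explicit weight formulas set up in Appendix~\ref{app:local moves}, and once it is in hand the observation is immediate.
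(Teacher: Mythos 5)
Your proposal is correct and follows essentially the same route as the paper's proof, namely using Proposition~\ref{prop:angles repara} to write the reparametrized angle as $g=\cosh\alpha_\omega/\sinh\alpha_\omega=f/\sqrt{f^2-1}$ and observing that the only new radicand, $f^2-1$, is positive on $\Omega_\Gamma$. Two small points of care: the definition of positively-radical requires radicands to be \emph{strictly} positive, so you should say $f>1$ (which holds because the hyperbolic angle lies in $(0,\infty)$ on the interior of the cell by Definition~\ref{def:param}), not merely $f\geq 1$; and your parenthetical about ``at most two'' hyperbolic orientations with sources on opposite edges is off---at a hyperbolic vertex the incoming edges are \emph{adjacent}, and there are four such orientations---though this is immaterial since Proposition~\ref{prop:angles repara} supplies the needed reparametrization directly.
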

\begin{proof}
    First let us discuss the issue of the uniqueness of expressions in abstract twistors. Recall Remark~\ref{rmk:uniqueness of solution}. We have that all of the identities defining the ring of twistors on a specific orthitroid cell are polynomial. Thus if one representation of the solution is positively-radical, all other representations are as well.
    Now consider different oriented vertices. The claim is immediate from the fact that either \(\mathrm{arccosh}(f)\) is a twistor-solution for \(v^\psi\) or \ \(\mathrm{arccosh}\left(\sqrt{\frac{f^2}{f^2-1}}\right)\) is by Proposition~\ref{prop:angles repara}, and as by Definitions~\ref{def:param} and~\ref{def:twist sol ang}, we must have that all twistor-solutions for any oriented vertex is positive on the graph.
\end{proof}
\begin{dfn}
\label{def:pos rad vert}
    For an internal vertex \(v\) in an OG graph \(\Gamma\), we say \(v\) is \emph{positively-radical} if there exists an hyperbolic orientation \(\omega\) for \(v\), such that \(v^\omega\) has a twistor-solution \(\mathrm{arccosh}(f) = \mathcal F(v^\omega, \Gamma)\), with \(f\) being positively-radical on \(\Gamma\).
\end{dfn}
\begin{rmk}
    Be not dismayed by the appearance of \(\mathrm{arccosh}\). As discussed in Remark~\ref{rmk:no hyp func}, we are purely concerned with algebraic functions. The \(\mathrm{arccosh}\) is simply here to turn the expression of the twistor-solution from Proposition~\ref{prop:sol arc to angle}, which is of the form \(\alpha = \mathrm{arccosh}(f)\) where \(f\) is algebraic in twistors, back to an algebraic expression. By the definition of the \(\mathrm{Rot}_{i,i+1}(\alpha)\) move in Definition~\ref{abs substit def}, we clearly have that \(\mathrm{Rot}_{i,i+1}(\mathrm{arccosh}(f))\) is algebraic in twistors, and as has all \(\sinh(\alpha)\) must be positive by Definition~\ref{def:param}, we get that \(\mathrm{Rot}_{i,i+1}(\mathrm{arccosh}(f))\) is positively-radical if \(f\) is.
\end{rmk} 

As we hope is now becoming clear, our plan for proving Theorem~\ref{thm:ampli diffeo} is to use the fact that all vertices on BCFW graphs and their boundaries are \(n\)-native for \(n\leq 4\) to show all their internal vertices are positively-radical, thus admitting a smooth twistor-solution. We will then use those solutions to show we can invert the amplituhedron map using smooth functions. Consider the following proposition, which we will prove in Section~\ref{sec:proof of prop:the natives are radicalized}:
\begin{prop}
\label{prop:the natives are radicalized}
    Let \(\Gamma\) be an OG graph with an \(n\)-native vertex \(v\) with \(n\leq4\). Then \(v\) is positively-radical.
\end{prop}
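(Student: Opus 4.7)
Unwinding Definition~\ref{def:native vertex}: since $v$ is $n$-native with $n\le 4$, there is an ancestry-sequence $\Xi$ of $\mathrm{Arc}_{n_j,\ell_j}$ moves with $n_j\le 4$, a graph $\Gamma^\prime$ carrying an external $n$-arc $\tau^\prime_{\ell^\prime}$, and an internal vertex $v^\prime$ on $\tau^\prime_{\ell^\prime}$, such that $v=\Xi(v^\prime)$ and $\Gamma=\Xi(\Gamma^\prime)$. Fix a $\tau^\prime_{\ell^\prime}$-proper (hence hyperbolic) orientation $\omega^\prime$ of $v^\prime$; by Definition~\ref{def:native vertex} this promotes to a $\tau_\ell$-proper orientation $\omega$ of $v$. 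Iterating Proposition~\ref{prop:sol commute for angles} through the $\mathrm{Inc}$ and $\mathrm{Rot}$ moves that make up $\Xi$ yields
\[
\mathcal F(v^\omega,\Gamma)=\Xi\bigl(\mathcal F(v^{\prime\omega^\prime},\Gamma^\prime)\bigr),
\]
so the task reduces to exhibiting a positively-radical representation of the right-hand side. By Proposition~\ref{prop:sol arc to angle} the base expression is $\alpha_{n,\ell^\prime,k^\prime,i}=\mathrm{arccosh}(f)$ with $f$ as in Definition~\ref{def:vec angle exp}: the case $n=2$ is vacuous (there is no internal vertex on an external $2$-arc), for $n=3$ the function $f$ is rational in twistors, and for $n=4$ the only intrinsic radical appearing in $f$ (and in the auxiliary $\sinh(\alpha_{4,\ell^\prime,k^\prime,1})$) is $\sqrt{S_{I^\prime}}$, where $I^\prime$ is the support of $\tau^\prime_{\ell^\prime}$.

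\textbf{The radicals introduced by $\Xi$.} Each $\mathrm{Arc}_{n_j,\ell_j}$ inside $\Xi$ is built from one $\mathrm{Inc}$ and at most two rotations $\mathrm{Rot}_{\cdot,\cdot}(\beta)$ whose angle $\beta=\alpha_{n_j,\ell_j,k_j,r}$ is itself a twistor-expression (Definition~\ref{def:arc}). Through Definition~\ref{abs substit def}, each such rotation introduces a new radical $\sinh\beta=\sqrt{\cosh^2\beta-1}$ into the promoted expression. Because $\beta$ is the parameterizing angle of a hyperbolically oriented internal vertex on the freshly added external arc, Definition~\ref{def:param} forces $\beta>0$ strictly on the interior of the corresponding orthitroid cell; hence $\cosh^2\beta-1>0$ there. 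Induction along $\Xi$ then shows every newly introduced radicand is positive on $\Omega_\Gamma$; the auxiliary inner radical $\sinh(\alpha_{4,\ell^\prime,k^\prime,1})$ from the $n=4$ base formula is handled identically.

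\textbf{The main obstacle: positivity of the transported $S_{I^\prime}$.} The substantive step is to show that the arc-radicand, after being transported through $\Xi$, remains non-negative on $\Omega_\Gamma$. The key observation is that at every step of $\Xi$ the newly added external arc's support is disjoint from the current image of $I^\prime$ (the ancestral arc must persist throughout the sequence), so the hypothesis $|\{i,i+1\}\cap I^\prime_t|\in\{0,2\}$ of Propositions~\ref{s rot} and~\ref{s inc} holds at every step; consequently $\Xi(S_{I^\prime})=S_{\Xi(I^\prime)}$, and by Proposition~\ref{prop:com diag} its value at any $[C,\Lambda,Y]\in\mathcal U_k^\ge$ with $C\in\Omega_\Gamma$ coincides with $S_{I^\prime}(\Lambda^\prime,Y^\prime)$ at the corresponding preimage $[C^\prime,\Lambda^\prime,Y^\prime]\in\mathcal U_{k^\prime}^\ge$ with $C^\prime\in\Omega_{\Gamma^\prime}$. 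Thus the proposition reduces to the lemma: \emph{if $\hat\Gamma$ has an external $4$-arc with support $J$, then $S_J\ge 0$ on $\Omega_{\hat\Gamma}$.} I plan to prove this lemma by mimicking the reduction in Proposition~\ref{prop:arc sol depends on supp}: apply $\mathrm{Rot}^{-1}$ and $\mathrm{Inc}^{-1}$ moves on indices lying entirely outside $J$ to eliminate all internal vertices and arcs not on the external $4$-arc, then $\mathrm{Cyc}$ to send $J$ to $\{1,2,3,4\}$. These moves preserve $\Lambda$-positivity (remark after Definition~\ref{def:rot of lambda} and Corollary~\ref{coro:inc pos}) and preserve $S_J$ (Propositions~\ref{s rot},\ref{s inc},\ref{s cyc}), landing us in the top cell of $\OGnon{3}{6}$ with $J=\{1,2,3,4\}$. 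There the explicit identity $\sqrt{S_{\{1,2,3,4\}}}=\twist{Y}{5}{6}$ (combining Propositions~\ref{prop:S orth} and~\ref{prop:Consecutive Twistors}, as already used in Section~\ref{sec:case n=4}) forces $S_{\{1,2,3,4\}}=\twist{Y}{5}{6}^2\ge 0$, closing the lemma and the proposition. The principal difficulty will be the bookkeeping---verifying at every step of both $\Xi$ and the lemma's reduction that the index-support hypotheses of Propositions~\ref{s rot},\ref{s inc},\ref{s cyc} truly hold, and confirming that nothing is lost in the interplay between formal expressions in abstract twistors and their evaluations on $\mathcal U_k^\ge$.
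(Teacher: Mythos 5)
Your proposal follows the same high-level skeleton as the paper's proof (Corollary~\ref{coro:ext is pos rad} for vertices on genuine external arcs, then Lemma~\ref{lem:prom radical pos} and Corollary~\ref{coro:prom of rad vertex is rad} to propagate by induction along the ancestry-sequence), and the base-case reduction you sketch for the key lemma is precisely the paper's Proposition~\ref{prop:external radicand doen't vanish}. However, two of your statements are wrong, and one is merely insufficient.

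The claim that ``at every step of $\Xi$ the newly added external arc's support is disjoint from the current image of $I^\prime$'' is false, and so is its consequence $\Xi(S_{I^\prime})=S_{\Xi(I^\prime)}$. Take $\Gamma^\prime$ to be the top cell of $\OGnon{3}{6}$ with $I^\prime=\{1,2,3,4\}$ and $\Xi=\mathrm{Arc}_{4,2}$. The new arc has support $\{2,3,4,5\}$, while $\mathrm{Inc}_2$ sends $I^\prime$ to $\{1,4,5,6\}$ and then $\mathrm{Rot}_{3,4}$ meets this set in exactly one index; the rule in Definition~\ref{def:rot} then enlarges it to $\{1,3,4,5,6\}$. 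The two supports intersect in $\{3,4,5\}$, and since $|\{3,4\}\cap\{1,4,5,6\}|=1$ the hypothesis of Observation~\ref{obs:s rot} fails at that very first rotation, so $\Xi(S_{I^\prime})$ is not a Mandelstam variable of $\Gamma$ at all. Fortunately, your argument doesn't actually need this: Proposition~\ref{prop:com diag} transfers the \emph{value} of the expression to the preimage without requiring $\Xi(S_{I^\prime})$ to have any particular Mandelstam form, and that is exactly how Lemma~\ref{lem:prom radical pos} (possibility 1 in its proof) handles the transported radicand. So the detour you take through the false identity should simply be deleted; as written it signals a misunderstanding of what $\Xi$ does to Mandelstam variables.

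The second issue is that your lemma concludes $S_J\ge 0$, but positively-radical requires the radicand to be \emph{strictly} positive on the open cell (see the definitions just before Observation~\ref{obs:pos rad is smooth}). On the reduced $\OGnon{3}{6}$ picture you need to conclude $\twist{Y}{5}{6}\neq 0$, which is what the paper extracts from Lemma~\ref{lem:vanishing of i i+1 in  k=3} (or, on the top cell specifically, from the strict inequality in Proposition~\ref{prop:Consecutive Twistors}); you cite that proposition but still write $\ge 0$. Finally, your accounting of radicals introduced during $\Xi$ names $\sqrt{\cosh^2\beta-1}$ but glosses over the fact that $\cosh\beta=\cosh(\alpha_{4,\ell_j,k_j,r})$ itself contains a fresh arc-radicand $\sqrt{S_{I_j}}$ for the \emph{newly added} arc at each $\mathrm{Arc}_{4,\ell_j}$ step; that radicand must be shown positive on the intermediate graph via the same lemma (this is possibility 2 in the proof of Lemma~\ref{lem:prom radical pos}). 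Your ``induction along $\Xi$'' phrase gestures at this, but the missing detail is exactly where a careless reader would get stuck.
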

\begin{coro}
    \label{coro:BCFW are rad}
    All internal vertices in BCFW graphs and their boundary graphs are positively-radical.
\end{coro}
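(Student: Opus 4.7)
The statement is a direct combination of two results already established in the excerpt, so the proof proposal is short and structural rather than computational. The plan is to reduce the claim to Proposition~\ref{prop:the natives are radicalized} via Observation~\ref{obs:BCFW 4 native}. There is essentially no new content to be produced — the work has already been done in setting up the right definitions (Definitions~\ref{def:native vertex} and~\ref{def:pos rad vert}) and in identifying BCFW graphs with sequences of \(\mathrm{Arc}_{n,\ell}\) moves having \(n \leq 4\) (Propositions~\ref{prop:BCFW arcs} and~\ref{prop:arc limits boundary sub BCFW}).

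Concretely, I would proceed as follows. Let \(\Gamma\) be a BCFW graph or a boundary of a BCFW graph, and let \(v\) be an internal vertex of \(\Gamma\). By Observation~\ref{obs:BCFW 4 native}, \(v\) is \(n\)-native to some arc \(\tau_\ell\) in \(\Gamma\) for some \(n \leq 4\). Proposition~\ref{prop:the natives are radicalized} then asserts that any such \(v\) is positively-radical in the sense of Definition~\ref{def:pos rad vert}. Since \(v\) was arbitrary, every internal vertex of \(\Gamma\) is positively-radical, which is the claim.

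The only mildly nontrivial point worth flagging is that the notion of being positively-radical is attached to the vertex rather than to any particular oriented representative (Observation~\ref{obs:we are too rad for orientations}), so there is no ambiguity between the hyperbolic orientation used to witness positively-radical twistor-solutions and the canonical orientation of a BCFW graph. Since neither the uniqueness aspect nor the existence aspect needs to be re-examined here — both are consequences of results stated above — the corollary reduces to a one-line invocation of Observation~\ref{obs:BCFW 4 native} combined with Proposition~\ref{prop:the natives are radicalized}. The main obstacle, such as it is, lies in the proof of Proposition~\ref{prop:the natives are radicalized}, which is carried out in Section~\ref{sec:proof of prop:the natives are radicalized} and which reduces, via the \(\mathrm{Arc}\)-move formalism of Section~\ref{sec:prom}, to the analysis of the basic external \(4\)-arc case handled in Section~\ref{sec:case n=4}.
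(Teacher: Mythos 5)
Your proposal is correct and follows the paper's proof exactly: the corollary is an immediate consequence of Observation~\ref{obs:BCFW 4 native} together with Proposition~\ref{prop:the natives are radicalized}. The extra remark about orientation-independence (Observation~\ref{obs:we are too rad for orientations}) is a harmless clarification that the paper leaves implicit.
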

\begin{proof}
    This is immediate from Proposition~\ref{prop:the natives are radicalized} and Observation~\ref{obs:BCFW 4 native}.
\end{proof}
We are now ready to prove Theorem~\ref{thm:ampli diffeo}:
\begin{proof}[Proof of Theorem~\ref{thm:ampli diffeo}] 
   By the parameterization from Theorem~\ref{param bij}, we have a diffeomorphism \(\varphi:U\rightarrow \Omega_\Gamma\), where \(U\) is an open subset of \(\mathbb R^n\) and \(n\) is the number of internal vertices in \(\Gamma\). By Corollary~\ref{coro:BCFW are rad}, all internal vertices on \(\Gamma\) are positively-radical. By Observations~\ref{obs:pos rad is smooth} and~\ref{obs:we are too rad for orientations}, this means they are twistor-solvable regardless of their hyperbolic orientation with a smooth twistor-solution. As by Proposition~\ref{prop:hyperbolic choice per arc} \(\Gamma\) admits a hyperbolic orientation, we have a smooth map \(f:\widetilde \Lambda(\Omega_\Gamma)\rightarrow \mathbb R^n\), such that \( f\circ \evalat{\widetilde \Lambda}{\Omega_\Gamma} \circ\varphi = \mathrm{id}\). This means \(\evalat{\widetilde \Lambda}{\Omega_\Gamma}\) has a smooth inverse. As it is clearly smooth, we can conclude \(\evalat{\widetilde \Lambda}{\Omega_\Gamma}\) is a diffeomorphism onto its image.
\end{proof}
\subsubsection{Proof of Proposition~\ref{prop:the natives are radicalized}}
\label{sec:proof of prop:the natives are radicalized}
We will now present the proof for Proposition~\ref{prop:the natives are radicalized}. Our approach will be similar to what we presented in Section~\ref{sec:inj}: We will use Lemma~\ref{lem:vanishing of i i+1 in  k=3} as a base case for \(\OGnon{3}{6}\), continue to extend the claim to vertices on external arcs in Proposition~\ref{prop:external radicand doen't vanish}, and then use Lemma~\ref{lem:prom radical pos} to finally extend to all relevant vertices to prove Proposition~\ref{prop:the natives are radicalized}. 
\begin{lem}[\cite{companion}]
\label{lem:vanishing of i i+1 in  k=3}
     Let \(\Gamma\) be a \(3\)-OG graph, and \(\tau\) be the permutation corresponding to \(\Gamma\). Consider \(C\in \Omega_\Gamma \subset\OGnon{3}{6}\), and \([C,\Lambda,Y]\in\mathcal{U}_{3}^\geq\). Then \(\twist{Y}{i}{i+1} = 0\) iff \(\tau(i) = i+1\) (where \(i+1\) is considered mod \(6\)). 

     Specifically, \(\twist{Y}{i}{i+1} = 0\) means \(C \in \overline{\Omega}_{\Gamma_0}\), the closure of the orthitroid cell corresponding to the graph \(\Gamma_0\) as seen in Figure~\ref{fig: i i+1 =0}, which is codimension \(2\) in \(\OGnon{3}{6}\).

\begin{figure}[H]
\centering

\begin{center}
\begin{tikzpicture}[scale = 0.6]

\draw (0,0) circle (2cm);
\draw (1.2, -1.74)node[anchor=north]{\(i+1\)};
\draw(-1, -1.74)node[anchor=north]{\(i\)};
\draw [black] plot [smooth, tension=1.2] coordinates {  (-1., -1.73205)(0,-1.3) (1., -1.73205)};
\draw(2,0)--(-1., 1.73205);

\draw(-2,0)--(1., 1.73205);
\end{tikzpicture}
\end{center}
    
     \caption{\(\Gamma_0\), the graph corresponding to the cell where \(\twist{Y}{i}{i+1} = 0\) in \(\OGnon{3}{6}\)}
     \label{fig: i i+1 =0}
 \end{figure}
\end{lem}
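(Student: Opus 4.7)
The plan is to prove the biconditional in two directions. For the forward direction (\(\tau(i)=i+1\Rightarrow\twist{Y}{i}{i+1}=0\)), I would invoke Proposition~\ref{prop:S orth} with \(I=\{i,i+1\}\) (indices mod~\(6\)). Since \(\tau\) is a fixed-point-free involution, the assumption \(\tau(i)=i+1\) yields \(\tau(I)=I\), so \(|I\setminus\tau(I)|=0<2\), and therefore \(S_I(\Lambda,Y)=0\). For the two-element set \(I\), Definition~\ref{def mand} collapses to \(S_I=\twist{Y}{i}{i+1}^2\) (the sign factor \((-1)^{i-(i+1)+1}\) is \(+1\), and the same holds for the wraparound pair \(\{1,6\}\)), giving \(\twist{Y}{i}{i+1}=0\).

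For the reverse direction I would exploit the fact that the top cell of \(\OGnon{3}{6}\) is itself a BCFW cell of dimension \(3=\dim\OGnon{3}{6}\), so its closure is the entire space. Hence every \(C\in\OGnon{3}{6}\) is a limit \(C_n\to C\) of points in the top cell interior, with \(Y_n\to Y\); by Lemma~\ref{triag}, \(C_n\) is projectively equal to \(\Delta_+(\Lambda,Y_n)\), so the Pl\"ucker coordinates of \(C\) are limits of those of \(\Delta_+(\Lambda,Y_n)\), i.e.\ algebraic in the twistors. Direct inspection of \(\Delta_+\) (Definition~\ref{def:triag}) shows that if \(\twist{Y}{1}{2}=0\) then both nontrivial entries of rows \(2,3\) in columns \(1,2\) of \(\Delta_+\) vanish, so every \(3\times 3\) minor \(\Delta_{\{1,2,j\}}(C)\) vanishes. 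Equivalently, columns \(1\) and \(2\) of \(C\) are linearly dependent, giving a nontrivial vector \(ae_1+be_2\in C^\perp\); since \(k=3\) forces \(C^\perp=C\eta\), we obtain \(ae_1-be_2\in C\). Self-orthogonality \(v\eta v^\intercal=0\) then forces \(a=\pm b\), and positivity pins down the sign, so \(C\) contains a vector proportional to \(e_1\pm e_2\). This is exactly the structural condition that \(\{1,2\}\) is an external \(2\)-arc in the OG graph of \(C\), i.e.\ \(\tau(1)=2\). The case of general \(i\), including the wraparound \(\{1,6\}\), follows by conjugating with \(\mathrm{Cyc}\) and applying Proposition~\ref{cyc twist}.

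The ``specifically'' clause is then immediate: the graph \(\Gamma_0\) in Figure~\ref{fig: i i+1 =0} has a single internal vertex (the unique crossing of its two straight edges), so Theorem~\ref{param bij} gives \(\dim\Omega_{\Gamma_0}=1\), confirming codimension \(2\) in \(\OGnon{3}{6}\); the reverse direction furthermore places every \(C\) with \(\twist{Y}{i}{i+1}=0\) in a cell containing the external arc \(\{i,i+1\}\), and every such cell is a stratum of \(\overline{\Omega}_{\Gamma_0}\). The main technical obstacle I anticipate is making the continuity/limit argument in the reverse direction fully rigorous: one must verify that \(\Delta_+(\Lambda,Y)\) remains of full rank along the closure so that the projective identification with \(C\) persists, and that the implication from vanishing Pl\"ucker coordinates to the cell-structural conclusion genuinely lands us in \(\overline{\Omega}_{\Gamma_0}\). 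A robust alternative, should this cause trouble, is to replace the limit argument by a direct enumeration of the \(15\) fixed-point-free involutions on \([6]\) together with their defining Pl\"ucker vanishings, verifying \(\twist{Y}{i}{i+1}>0\) cell-by-cell using the positivity of \(C\) and \(\Lambda\).
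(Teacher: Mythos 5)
Your forward direction is correct: with \(I=\{i,i+1\}\) and \(\tau(i)=i+1\) one has \(\tau(I)=I\), so \(|I\setminus\tau(I)|=0<2\), Proposition~\ref{prop:S orth} gives \(S_I=0\), and since \(S_I=\twist{Y}{i}{i+1}^2\) (the single sign is \((-1)^{0}=+1\), also for the wraparound pair) this forces \(\twist{Y}{i}{i+1}=0\). The ``specifically'' clause and the \(\OG{3}{6}\)-structural chain from ``\(C\) contains a vector supported on \(\{i,i+1\}\)'' to ``\(\tau(i)=i+1\)'' are also sound.

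The gap is exactly where you flag it, and it is not cosmetic: the limit argument via \(\Delta_+\) genuinely fails at the locus \(\twist{Y}{i}{i+1}=0\). When \(\twist{Y}{1}{2}=0\), the Pl\"ucker relations for the rank-two matrix \(\Twist{Y}\) (take \(a=1,b=2\) in \(\twist{Y}{a}{b}\twist{Y}{c}{d}-\twist{Y}{a}{c}\twist{Y}{b}{d}+\twist{Y}{a}{d}\twist{Y}{b}{c}=0\)) force rows \(1\) and \(2\) of \(\Twist{Y}\) to be proportional. Rows \(2\) and \(3\) of \(\Delta_+\) are, up to sign, precisely \(\lambda_2\eta\) and \(\lambda_1\eta\); hence \(\Delta_+(\Lambda,Y)\) drops to rank at most \(2\) on the boundary and no longer represents a point of \(\Gr{3}{6}\). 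The projective limit of \(\Delta_+(\Lambda,Y_n)\) is then not determined by the na\"ive entrywise limit, so ``the Pl\"ucker coordinates of \(C\) are limits of those of \(\Delta_+(\Lambda,Y_n)\)'' is false precisely in the regime where you need it. In particular, the step ``every \(3\times3\) minor \(\Delta_{\{1,2,j\}}(C)\) vanishes'' is not obtained this way: a rank-\(\le 2\) matrix has \emph{all} its \(3\times 3\) minors vanish, which tells you nothing about \(C\).

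Your enumeration fallback would work (there are only \(15\) fixed-point-free involutions), but there is a cleaner structural fix that recovers both directions at once: expand \(\twist{Y}{i}{i+1}\) by Cauchy--Binet in terms of Pl\"ucker coordinates of \(C\). Writing \(\twist{Y}{i}{i+1}=\det\bigl[C\Lambda;\Lambda_i;\Lambda_{i+1}\bigr]\) and expanding gives \(\twist{Y}{i}{i+1}=\varepsilon\sum_{J\in\binom{[6]\setminus\{i,i+1\}}{3}}\Delta_J(C)\,\det\Lambda_{J\cup\{i,i+1\}}\) for a single sign \(\varepsilon\) (consistent with Proposition~\ref{prop:Consecutive Twistors}). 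Since all \(\Delta_J(C)\ge0\) and all \(\det\Lambda_{J\cup\{i,i+1\}}>0\), the twistor vanishes iff \(\Delta_J(C)=0\) for all \(J\in\binom{[6]\setminus\{i,i+1\}}{3}\), which by Observation~\ref{obs:3vecsupport} is precisely the condition that \(C\) contains a vector supported on \(\{i,i+1\}\); the remainder of your argument (including the ``orientation with \(i,i+1\) as sources'' trick via Proposition~\ref{prop:hyperbolic choice per arc} for the converse) then finishes the proof without any limit process.
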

Consider a BCFW graph \(\Gamma\), and an internal vertex \(v\) that is adjacent to an external vertex.

\begin{prop}
\label{prop:external radicand doen't vanish}
     Let \(\Gamma\) be a \(k\)-OG graph, and \(\tau_\ell\) an external \(4\)-arc of \(\Gamma\), and \(v\) a vertex on \(\tau_\ell\). Let \(S_I\) be the arc-radicand of \(\tau_\ell\) and of \(v\). Then \(S_I\) is positive on \(\Gamma\).
\end{prop}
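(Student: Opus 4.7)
The plan is to reduce the statement to an explicit computation on the top cell of $\OGnon{3}{6}$, leveraging the reduction technique from the proof of Proposition~\ref{prop:arc sol depends on supp} together with the invariance of Mandelstam variables under moves acting on indices disjoint from the index set.

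First, using the $\mathrm{Cyc}$ move and Proposition~\ref{s cyc}, I would normalize $\ell = 1$ so that $I = \{1,2,3,4\}$. Then, given any $[C,\Lambda,Y] \in \mathcal{U}_k^\geq$ with $C \in \Omega_\Gamma$, I would follow the reduction in the proof of Proposition~\ref{prop:arc sol depends on supp} verbatim: alternating equivalence moves with $\mathrm{Rot}_{i,i+1}^{-1}$ and $\mathrm{Inc}_i^{-1}$, all acting on pairs $\{i,i+1\}$ with $i \geq 5$, one produces a point $[C_0,\Lambda_0,Y_0] \in \mathcal{U}_3^\geq$ with $C_0 \in \Omega_{\Gamma_4}$, where $\Gamma_4$ is the graph of Figure~\ref{fig:simlpest 4arc}. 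Since $\{i,i+1\}$ is disjoint from $\{1,2,3,4\}$ at every step, Propositions~\ref{s rot} and~\ref{s inc} yield $S_I(\Lambda,Y) = S_I(\Lambda_0,Y_0)$, so it suffices to prove strict positivity in the base case.

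At $k=3$ the computation is immediate. By Proposition~\ref{prop:S orth}, $S_{\{1,2,3,4\}}(\Lambda_0,Y_0) = S_{\{5,6\}}(\Lambda_0,Y_0) = \langle Y_0\,5\,6\rangle^2$. The fixed-point-free involution associated to $\Gamma_4$ sends $5 \mapsto 2 \neq 6$, so Lemma~\ref{lem:vanishing of i i+1 in  k=3} forces $\langle Y_0\,5\,6\rangle \neq 0$, and hence the square is strictly positive. Combined with the previous step, this gives $S_I(\Lambda,Y) > 0$ throughout $\Omega_\Gamma$.

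The only potentially delicate bookkeeping, which I expect to be the main (mild) obstacle, is verifying that the sequence of $\mathrm{Rot}^{-1}$ and $\mathrm{Inc}^{-1}$ moves can be arranged to act exclusively on indices $\geq 5$, i.e., that every arc of $\Gamma$ other than $\{1,4\}$ and the two arcs crossing it can be peeled off without ever touching $\{1,2,3,4\}$. This is precisely the structural reduction already established in the proof of Proposition~\ref{prop:arc sol depends on supp}, and it imports directly with no new argument needed.
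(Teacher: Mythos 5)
Your proposal is correct and follows essentially the same route as the paper: normalize with $\mathrm{Cyc}$ to $I=\{1,2,3,4\}$, peel off vertices and arcs not involving $I$ via $\mathrm{Rot}^{-1}$ and $\mathrm{Inc}^{-1}$ (exactly the reduction of Proposition~\ref{prop:arc sol depends on supp}), use Propositions~\ref{s rot},~\ref{s inc},~\ref{s cyc} to keep $S_I$ fixed, and finish at $k=3$ via Proposition~\ref{prop:S orth} and Lemma~\ref{lem:vanishing of i i+1 in  k=3}. One small slip: for $\Gamma_4$ the involution sends $5\mapsto 3$, not $5\mapsto 2$ (the arcs are $\{1,4\},\{2,6\},\{3,5\}$), but this does not affect the conclusion $\tau(5)\neq 6$, which is all that is used.
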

\begin{proof}
    The proof resembles the proof of Proposition~\ref{prop:arc sol depends on supp}.  Let \(C\in\Omega_\Gamma\), and \([C,\Lambda,Y]\in \mathcal U_k^\geq\). Recall that \(\Gamma\) must be of the form illustrated in Figure~\ref{fig:ext arc} with \(I = \{i_1,i_2,i_3,i_4\}\). Using the \(\mathrm{Cyc}\) move and Proposition~\ref{s cyc}, it is enough to show for \(I = \{1,2,3,4\}\), that is \(\tau_\ell = \{1,4\}\). 

   We can reduce the number of vertices in the graph \(\Gamma\) using the \(\mathrm{Rot}^{-1}\) move according to Proposition~\ref{s rot} until all the internal vertices are on the arc  \(\{1,4\}\), without changing the value of \(S_I(\Lambda,Y)\):
    If there are \(i,i+1\notin I\) such that the arcs \(\tau_i\) and \(\tau_{i+1}\) are crossing, we can use the equivalence move number 3 to replace the graph with an equivalent graph such that the crossing vertex is adjacent to an external vertex. We can now eliminate it using a \(\mathrm{Rot}^{-1}_{i,i+1}\) move without changing the arc-radicand according to Proposition~\ref{s rot}. We will continue doing so until there are no more such arcs, meaning the only pairs of crossing arcs are ones where one of the arcs is contained in \(I\). The only arc contained in \(I\) is \(\{1,4\}\) as it is external, thus we can assume all the internal vertices are on the arc \(\{1,4\}\) in \(\Gamma\).

    Since  \(\{1,4\}\) is external, there are no arcs contained \(I\). Meaning that for any \(\{r,\tau(r)\}\) an arc in \(\Gamma\) that does not cross \(\{1,4\}\), we have that \(4<r,\tau(r)\). Since all the internal vertices are on \(\{1,4\}\), we must have that \(\{r,\tau(r)\}\) is external with support of size \(2\). This means we have an arc \(\{r,r+1\}\) with \(4<r\), and \(\Gamma = \mathrm{Inc}_r (\Gamma_0)\) for some graph \(\Gamma_0\). We can now apply \(\mathrm{Inc}_r^{-1}\) according to Proposition~\ref{s inc} to reduce \(k\) by \(1\). As \(\{1,4\}\) is external, there are exactly \(2\) arcs that cross it. That means that there are exactly \(k - 3\) arcs that do not cross \(\{1,4\}\). By induction, we can now assume that all arcs cross \(\{1,4\}\) in \(\Gamma\) and \(k=3\).

    So now we can assume \(\tau_\ell = \{1,4\}\) and  \(k=3\). Now, by Proposition~\ref{prop:S orth} we have that \(S_{\{1,2,3,4\}} = \langle 5\,6\rangle ^2 \) on \([C,\Lambda,Y]\). That means that \(S_{\{1,2,3,4\}}(\Lambda, Y)\) is non-negative, and is zero iff \(\twist{Y}{5}{6} = 0\). By Lemma~\ref{lem:vanishing of i i+1 in  k=3}, since we know \(\Gamma\) has an external \(4\)-arc, we must have that \(\twist{Y}{5}{6}\neq 0\). Thus \(S_{\{1,2,3,4\}}(\Lambda, Y)>0\), finishing the proof.
\end{proof}
\begin{coro}
\label{coro:ext is pos rad}
    Let \(\Gamma\) be an OG graph with an external \(n\)-arc \(\tau_\ell\) with \(n\leq 4\), and \(v\) a vertex on \(\tau_\ell\). Then \(v\) is positively-radical on \(\Gamma\).
\end{coro}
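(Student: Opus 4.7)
The plan is to unwind the definition of positively-radical against the explicit twistor-solutions already available for vertices on short external arcs, and use the just-proven Proposition~\ref{prop:external radicand doen't vanish} to verify that every radical in those expressions is positive.

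First I would fix a $\tau_\ell$-proper hyperbolic orientation $\omega$ for $v$ (which exists by Definition~\ref{def:native vertex} applied to the trivial ancestry-sequence, together with the construction of proper orientations). By Proposition~\ref{prop:sol arc to angle}, the twistor-solution for $v^\omega$ is then $\alpha_{n,\ell,k,i} = \mathrm{arccosh}(f_i)$ for the particular $f_i$ read off from Definition~\ref{def:vec angle exp}, with index-support contained in the support $I$ of $\tau_\ell$. By Observation~\ref{obs:we are too rad for orientations} and Definition~\ref{def:pos rad vert}, it suffices to check that this $f_i$ admits a positively-radical representation on $\Gamma$.

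Next I would inspect the radicals in $f_i$. The recursive formula produces denominators of the form $\prod_{j<i} \sinh(\alpha_{n,\ell,k,j})$, and as noted in Remark~\ref{rmk:no hyp func}, each $\sinh(\alpha_{n,\ell,k,j})$ is itself a nested radical $\sqrt{\cosh(\alpha_{n,\ell,k,j})^2-1}$. However, on the open cell $\Omega_\Gamma$ these values coincide with the hyperbolic parameters of a genuine parameterization (Theorem~\ref{param bij} and Definition~\ref{def:param}), hence are strictly positive, so each of these nested radicals is positive on $\Gamma$. For $n\leq 3$ the components of $\mathbf{v}_{n,\ell,k}$ are polynomial in abstract twistors, so no further radicals appear and we are done. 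For $n=4$ the only additional radical is $S=\sqrt{S_{\{i_1,i_2,i_3,i_4\}}} = \sqrt{S_I}$; but Proposition~\ref{prop:external radicand doen't vanish} says exactly that $S_I$ is positive on $\Gamma$.

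Since every radical in the explicit expression for $f_i$ is positive on $\Gamma$, the representation is positively-radical, and hence so is $v$. There is no real obstacle here: all the geometric content—controlling the sign of the arc-radicand across the cell—was packed into Proposition~\ref{prop:external radicand doen't vanish}, and what remains is a direct bookkeeping match between Definition~\ref{def:vec angle exp}, Definition~\ref{def:radicand}, and Definition~\ref{def:pos rad vert}.
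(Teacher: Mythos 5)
Your proposal is correct and follows essentially the same route as the paper: fix a $\tau_\ell$-proper hyperbolic orientation, read the explicit twistor-solution off Proposition~\ref{prop:sol arc to angle}, and observe that the only nontrivial radical (the case $n=4$) is $\sqrt{S_I}$, which is positive by Proposition~\ref{prop:external radicand doen't vanish}. Your extra paragraph carefully noting that the nested $\sinh(\alpha_{n,\ell,k,j})=\sqrt{\cosh^2-1}$ radicals in the denominator are positive because they evaluate to $\sinh$ of a genuine hyperbolic angle on $\Omega_\Gamma$ is a point the paper's proof of this corollary glosses over (it only names $\sqrt{S_I}$), but the same care does appear in the paper's proof of the closely related Lemma~\ref{lem:prom radical pos}, so your treatment is aligned with the paper's intent and if anything slightly tighter.
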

\begin{proof}
    This is precisely as described in Remark~\ref{rmk:what is a radicand}. By proposition~\ref{prop:sol vnlk prop}, \(\tau_\ell\) is twistor-solvable. Set \(v^\omega\) a \(\tau_\ell\)-proper orientation for \(v\), and we have that \(v^\omega\) is twistor-solvable by Proposition~\ref{prop:sol arc to angle}. The twistor-solution \(\mathcal F(v^\omega,\Gamma)\) has no radicals if \(n<4\), and if \(n=4\) the only radical is the the square-root of a vertex-radicand of \(v\), which its positive on \(\Gamma\) by Proposition~\ref{prop:external radicand doen't vanish}.
\end{proof}
We will now want to show the property of being positively-radical behaves nicely under promotion with the \(\mathrm{Arc}\) move, which dominates the structure of BCFW cells and their boundaries as seen in Section~\ref{sec:BCFW Cells and their Boundaries}. This will allow us to extend the claim to all \(n\)-native vertices with \(n\leq 4\).
\begin{lem}
\label{lem:prom radical pos}
    Let  \(f\in\mathcal F\) be an expression in abstract twistors. If \(f\) is positively-radical on a reduced \(k\)-OG graph \(\Gamma\),  \(\mathrm{Arc}_{n,\ell,k}\Gamma\) is reduced, and \(n\leq4\), then \(\mathrm{Arc}_{n,\ell,k}(a)\) is positively-radical on \(\mathrm{Arc}_{n,\ell,k}\Gamma\). 
\end{lem}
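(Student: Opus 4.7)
The move \(\mathrm{Arc}_{n,\ell,k}\) factors (Definition~\ref{def:arc}) as one \(\mathrm{Inc}_\ell\) followed by a sequence of rotations \(\mathrm{Rot}_{i_{j+1},i_{j+2}}(\alpha_{n,\ell,k,j})\) for \(j=1,\dots,n-2\). After fully expanding \(\mathrm{Arc}_{n,\ell,k}(f)\) as an algebraic expression in abstract twistors, every radical that appears belongs to one of two families: \emph{inherited radicals} of the form \(\sqrt{\mathrm{Arc}_{n,\ell,k}(g)}\), where \(\sqrt{g}\) is a radical in the given positively-radical representation of \(f\); and \emph{new radicals} produced by the recursive definition of the \(\alpha_{n,\ell,k,j}\) together with their \(\cosh\) and \(\sinh\) (see Definition~\ref{def:vec angle exp} and Remark~\ref{rmk:no hyp func}). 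The proof amounts to verifying that the radicands in each family are positive on \(\mathrm{Arc}_{n,\ell,k}\Gamma\).

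\textbf{Inherited radicands.} Fix \([C,\Lambda,Y]\in\mathcal{U}^\geq_{k+1}\) with \(C\) in the interior of \(\Omega_{\mathrm{Arc}_{n,\ell,k}\Gamma}\). Using a \(\tau_\ell\)-proper orientation on the external arc produced by the move and the parameterization of Theorem~\ref{param bij}, one obtains a lift \([C_0,\Lambda_0,Y_0]\in\mathcal{U}^\geq_k\) with \(C_0\in\Omega_\Gamma\) and \(\mathrm{Arc}_{n,\ell,k}[C_0,\Lambda_0,Y_0]=[C,\Lambda,Y]\). Proposition~\ref{prop:com diag} then gives \(\mathrm{Arc}_{n,\ell,k}(g)(\Lambda,Y)=g(\Lambda_0,Y_0)>0\) for every inherited radicand \(g\), positivity following from the hypothesis that \(f\) is positively-radical on \(\Gamma\).

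\textbf{New radicands.} Every new radical sits inside some \(\alpha_{n,\ell,k,j}\). For \(n\le 3\) the recursion in Definition~\ref{def:vec angle exp} produces only the nested radicals \(\sqrt{\cosh^2(\alpha_{n,\ell,k,j})-1}\) (the canonical form of \(\sinh(\alpha_{n,\ell,k,j})\), cf.\ Remark~\ref{rmk:no hyp func}). For \(n=4\) one additionally gets \(\sqrt{S_{\{i_1,i_2,i_3,i_4\}}}\), with \(I=\{i_1,\dots,i_4\}\) the support of the new external arc \(\tau_\ell\); this radicand is positive on \(\mathrm{Arc}_{n,\ell,k}\Gamma\) by Proposition~\ref{prop:external radicand doen't vanish}. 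The remaining new radicands are taken care of by Corollary~\ref{coro:ext is pos rad}: the vertices introduced by the move lie on \(\tau_\ell\) and, under a \(\tau_\ell\)-proper orientation, carry the \(\alpha_{n,\ell,k,j}\) as their twistor-solutions, and the corollary asserts precisely that all radicands nested inside those twistor-solutions are positive on \(\mathrm{Arc}_{n,\ell,k}\Gamma\).

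\textbf{Main obstacle.} The delicate point is the nested character of the new radicals: \(\sinh(\alpha_{n,\ell,k,j})\) depends on the preceding \(\cosh(\alpha_{n,\ell,k,j'})\), so strictly speaking one would need an inner induction on \(j\) to ensure the innermost radicand is positive before forming the next layer. This nested positivity, however, is exactly what Corollary~\ref{coro:ext is pos rad} already packages in a single statement about vertices on external \(n\)-arcs with \(n\le 4\). The lemma therefore reduces to splicing together Proposition~\ref{prop:com diag} (inherited radicals), Proposition~\ref{prop:external radicand doen't vanish} (the top-level \(n=4\) radicand), and Corollary~\ref{coro:ext is pos rad} (the nested new radicals), and concluding that the resulting algebraic expression for \(\mathrm{Arc}_{n,\ell,k}(f)\) has every radicand positive on \(\mathrm{Arc}_{n,\ell,k}\Gamma\).
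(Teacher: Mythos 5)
Your proposal follows the same high-level route as the paper: decompose \(\mathrm{Arc}_{n,\ell,k}\) into an \(\mathrm{Inc}\) followed by rotations, split the radicals of \(\mathrm{Arc}_{n,\ell,k}(f)\) into those inherited from \(f\) and those newly introduced, dispatch the inherited ones via Proposition~\ref{prop:com diag}, and tackle the new ones using the structure of the angles \(\alpha_{n,\ell,k,j}\). The treatment of the inherited radicands and of the top-level radicand \(\sqrt{S_{\{i_1,\dots,i_4\}}}\) is correct and matches the paper.

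There is, however, a gap in how you close out the \(\sinh\) radicands. Applying \(\mathrm{Rot}_{\ell+j,\ell+j+1}(\alpha_{n,\ell,k,j})\) introduces \(\sinh(\alpha_{n,\ell,k,j})=\sqrt{\cosh^2(\alpha_{n,\ell,k,j})-1}\), so one must show that the radicand \(\cosh^2(\alpha_{n,\ell,k,j})-1\) is positive on \(\mathrm{Arc}_{n,\ell,k}\Gamma\). You attribute this to Corollary~\ref{coro:ext is pos rad}, but that corollary only says the vertex is positively-radical in the sense of Definition~\ref{def:pos rad vert}, i.e.\ that \(f=\cosh(\alpha_{n,\ell,k,j})\) admits a representation whose radicands are positive. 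It does not, by itself, say that \(f^2-1>0\). The paper supplies this missing piece explicitly: by Proposition~\ref{prop:sol arc to angle}, \(\alpha_{n,\ell,k,j}\) is the twistor-solution of an oriented vertex on the newly created external arc, and by Definition~\ref{def:param} the angle at any vertex in the interior of a cell is strictly positive, whence \(\sinh(\alpha_{n,\ell,k,j})>0\) and so the radicand \(\cosh^2(\alpha_{n,\ell,k,j})-1>0\). (The remark immediately after Definition~\ref{def:pos rad vert} packages exactly this point.) Adding this one-sentence argument—identifying the \(\alpha\)'s as genuine positive vertex angles rather than relying on the \(\cosh\)-level statement of Corollary~\ref{coro:ext is pos rad}—closes the gap and makes the proof agree with the paper's.
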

\begin{proof}
     Recall that by Definition~\ref{def:arc},
     \[
    \mathrm{Arc}_{n,\ell,k}(a)\\
    \defeq \mathrm{Rot}_{\ell+n-2,\ell+n-1}(\alpha_{n,\ell,k,n -2})\cdot...\cdot\mathrm{Rot}_{\ell+2,\ell+3}(\alpha_{n,\ell,k,2})\mathrm{Rot}_{\ell+1,\ell+2}(\alpha_{n,\ell,k,1})\mathrm{Inc}_{l}(a),
    \]
    Write \(\mathrm{Arc}_{n,\ell,k}(a) = \widehat a\), and let \( \widehat \rho\) be a radicand in \( \widehat a\). We have two possibilities:
    \begin{enumerate}
        \item \( \widehat\rho =\mathrm {Arc}_{n,\ell,k}(\rho)\), where \( \rho\) is a radicand in \(a\).
        \item \(\widehat\rho\) is added by \(\mathrm{Rot}_{\ell+i,\ell+i+1}(\alpha_{n,\ell,k,i})\) for some \(i\).
    \end{enumerate}
    
    First consider possibility \(1\), that is, \(\widehat  \rho =\mathrm {Arc}_{n,\ell,k}(\rho)\), where \(\rho\) is a radicand in \( a\):
    By assumption, we have that all of the radicals in \(\widehat a\) are positive on \(\Gamma\). Specifically, we have that \(\rho\) is positive on \(\Gamma\).  Now, by Proposition~\ref{prop:com diag} we have that \(\mathrm {Arc}_{n,\ell,k}( \rho) = \widehat  \rho\) is positive on \(\mathrm {Arc}_{n,\ell,k} \Gamma\).

    Now consider possibility \(2\): Write \(\alpha_{n,\ell,k,i} = \mathrm{arccosh}(b)\). By Definition~\ref{abs substit def}, the radicals added by \(\mathrm{Rot}_{\ell+i,\ell+i+1}(\alpha_{n,\ell,k,i} )\) 
    are either \(\sqrt{\widehat\rho}=\sinh(\alpha_{n,\ell,k,i}) = \sqrt{b^2-1}\)
    itself, or a radical from the expression for \(\alpha_{n,\ell,k,i}\). By Proposition~\ref{prop:sol arc to angle}, we have that \(\alpha_{n,\ell,k,i}\) is a twistor-solution of a vertex \(u\) in \(\widehat \Gamma\) on the external \(n\)-arc \(\tau_\ell\),
    the one added by the \(\mathrm{Arc}\) move. By Definition~\ref{def:param}, all angles are positive, thus \(\sqrt{b^2-1}\) is positive on \(\mathrm{Arc}_{n,\ell,k}\Gamma\). For \(\sqrt{\widehat\rho}\neq\sinh(\alpha_{n,\ell,k,i})\), we have that \(\widehat\rho\) must be a radicand in the expression \(b\). Notice that by Proposition~\ref{prop:sol arc to angle}, we have that only radicand in \(\cosh(\alpha_{n,\ell,k,i})\) is the arc-radicand of \(\tau_\ell\). Thus  \(\widehat \rho\) is positive on \(\mathrm{Arc}_{n,\ell,k}\Gamma\) by Proposition~\ref{prop:external radicand doen't vanish}.
\end{proof}
\begin{coro}
    \label{coro:prom of rad vertex is rad}
    Let  \(v\) be a positively-radical vertex in a reduced OG graph \(\Gamma\). If \(\mathrm{Arc}_{n,\ell,k}\Gamma\) is reduced and \(n\leq4\), then \(\mathrm{Arc}_{n,\ell,k}(v)\) is a positively-radical vertex in \(\mathrm{Arc}_{n,\ell,k}\Gamma\).    
\end{coro}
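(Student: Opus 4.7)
The plan is to show that positively-radicality of a vertex is preserved by the \(\mathrm{Arc}_{n,\ell,k}\) move by lifting Lemma~\ref{lem:prom radical pos} from expressions to twistor-solutions of vertices. Starting from the data provided by Definition~\ref{def:pos rad vert}, I fix a hyperbolic orientation \(\omega\) of \(v\) in \(\Gamma\) and a twistor-solution \(\mathrm{arccosh}(f) = \mathcal{F}(v^\omega,\Gamma)\) with \(f\) positively-radical on \(\Gamma\). Write \(\widehat{\Gamma} \defeq \mathrm{Arc}_{n,\ell,k}(\Gamma)\) and \(\widehat{v} \defeq \mathrm{Arc}_{n,\ell,k}(v)\). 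Since the \(\mathrm{Arc}\) move only adjoins new external structure near the boundary and never alters the data at existing internal vertices, it carries \(\omega\) to a hyperbolic orientation \(\widehat{\omega}\) of \(\widehat{v}\) in \(\widehat{\Gamma}\).

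Next I would show that \(\widehat{v}^{\widehat{\omega}}\) is twistor-solvable with solution obtained by applying \(\mathrm{Arc}_{n,\ell,k}\) to the old solution. Decomposing \(\mathrm{Arc}_{n,\ell,k}\) as a single \(\mathrm{Inc}_\ell\) followed by the rotations \(\mathrm{Rot}_{\ell+i,\ell+i+1}(\alpha_{n,\ell,k,i})\) (\(i=1,\dots,n-2\)), I iterate Proposition~\ref{prop:sol commute for angles}: parts (1)--(2) handle \(\mathrm{Inc}_\ell\) unconditionally, while for each rotation the required auxiliary hypothesis is that the corresponding new vertex on the added arc is twistor-solvable with a specified solution. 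By Proposition~\ref{prop:sol arc to angle} those vertices are twistor-solvable, with solutions precisely \(\alpha_{n,\ell,k,i}\), so the iteration is valid (and each intermediate graph is reduced by Observation~\ref{obs:reduce arc move}). The conclusion is
\[
\mathcal{F}(\widehat{v}^{\widehat{\omega}},\widehat{\Gamma}) \;=\; \mathrm{Arc}_{n,\ell,k}\bigl(\mathrm{arccosh}(f)\bigr) \;=\; \mathrm{arccosh}\bigl(\mathrm{Arc}_{n,\ell,k}(f)\bigr),
\]
where the second equality uses only that \(\mathrm{Arc}\) acts entry-wise on expressions in abstract twistors via Definition~\ref{abs substit def}.

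Finally, I would invoke Lemma~\ref{lem:prom radical pos} applied to \(f\): since \(f\) is positively-radical on \(\Gamma\) and \(\widehat{\Gamma}\) is reduced with \(n\le 4\), the expression \(\mathrm{Arc}_{n,\ell,k}(f)\) is positively-radical on \(\widehat{\Gamma}\). Together with the twistor-solution obtained in the previous step, this exhibits \(\widehat{v}\) as positively-radical in the sense of Definition~\ref{def:pos rad vert}, completing the proof. The only real subtlety, and what I expect to be the main bookkeeping point, is the verification that the hyperbolic orientation \(\omega\) of \(v\) genuinely extends (together with the canonical proper orientations on the newly-added arc) to a coherent perfect orientation of \(\widehat{\Gamma}\) so that Proposition~\ref{prop:sol commute for angles} applies at every intermediate stage; this is essentially automatic because the \(\mathrm{Arc}\) move touches only the boundary region disjoint from \(v\), and the orientation convention of Definition~\ref{def:arc} supplies exactly the \(\tau_\ell\)-proper orientation needed for the intermediate rotations.
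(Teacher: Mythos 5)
Your proof is correct and substantially follows the paper's approach; the computation from Definition~\ref{def:pos rad vert} through Lemma~\ref{lem:prom radical pos} matches the paper's proof essentially step for step. The one point where you diverge is in how you discharge the auxiliary hypothesis in part~(3) of Proposition~\ref{prop:sol commute for angles}: you invoke Proposition~\ref{prop:sol arc to angle} directly to get twistor-solvability of the newly added arc-vertices, whereas the paper instead goes through Corollary~\ref{coro:ext is pos rad} (the new vertices are positively-radical) followed by Observation~\ref{obs:we are too rad for orientations} (a positively-radical vertex is twistor-solvable under \emph{any} hyperbolic orientation). Your route is shorter and more self-contained, but it leans on the claim that the hyperbolic orientation produced at each intermediate stage of the \(\mathrm{Arc}\) sequence is exactly the \(\tau_\ell\)-proper one demanded by Proposition~\ref{prop:sol arc to angle}; you flag this as the main bookkeeping concern and argue it is automatic from Definition~\ref{def:arc}. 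The paper's detour through Observation~\ref{obs:we are too rad for orientations} is precisely what renders that concern immaterial: once the new vertices are known to be positively-radical, they are twistor-solvable under \emph{whichever} hyperbolic orientation the \(\mathrm{Rot}\) step happens to induce, so no orientation-matching check is required. So the paper buys robustness against the orientation bookkeeping, at the cost of invoking two extra (but already-proved) results; your version is more economical but should really cite the fact that the \(\mathrm{Arc}\) construction of Definition~\ref{def:arc} labels the new oriented vertices with the \(\tau_\ell\)-proper orientation (and that the \(\mathrm{Rot}\)-inherited orientation at each intermediate stage agrees with it) rather than relegating this to an ``essentially automatic'' remark.
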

\begin{proof}
    \(v\) is positively-radical. Which means there exists an hyperbolic orientation \(v^{\omega}\) such that \(\mathrm{arccosh}(a) = \mathcal F(v^{\omega},\Gamma)\), and \(a\) is positively-radical on \(\Gamma\).

     By Corollary~\ref{coro:ext is pos rad} we have that the vertices added by the move to \(\mathrm{Arc}_{n,\ell,k}\Gamma\) are positively-radical. By Observation~\ref{obs:we are too rad for orientations}, they are twistor-solvable with whichever hyperbolic orientation. Thus, by Proposition~\ref{prop:sol commute for angles}, we have that
    \[
    \mathrm{Arc}_{n,\ell,k}(\mathrm{arccosh}(a))= \mathrm{arccosh}(\mathrm{Arc}_{n,\ell,k}(a))= \mathcal F(\mathrm{Arc}_{n,\ell,k}(v^{ \omega})),\mathrm{Arc}_{n,\ell,k}(\Gamma)).
    \]
    By Lemma~\ref{lem:prom radical pos} we have that \(\mathrm{Arc}_{n,\ell,k}(a)\) is positively-radical on \(\mathrm{Arc}_{n,\ell,k}(\Gamma)\). As \(\mathrm{Arc}_{n,\ell,k}(v^{ \omega}))\) is clearly with hyperbolic orientation, we have that \(\mathrm{Arc}_{n,\ell,k}(v)\) is positively-radical in \(\mathrm{Arc}_{n,\ell,k}\Gamma\).
\end{proof}
\begin{proof}[Proof of~\ref{prop:the natives are radicalized}]
     By Definition~\ref{def:native vertex}, we have \(\Xi\), an ancestry-sequence such that \(\Gamma = \Xi(\Gamma^\prime)\) and \(v = \Xi(v^{\prime}) \), with \(v^\prime\) a vertex on the arc \(\tau^\prime_{\ell^\prime}\) an external \(n\)-arc of \(\Gamma^\prime\). By Corollary~\ref{coro:ext is pos rad}, we have that \(v^{\prime}\) is positively-radical. Which means there exists an hyperbolic orientation \(v^{\prime \omega^\prime}\) such that \(\mathrm{arccosh}(a^\prime) = \mathcal F(v^{\prime \omega^\prime},\Gamma^\prime)\), and \(a^\prime\) is positively-radical on \(\Gamma^\prime\). Suppose that \(\Xi\) is composed of \(m\) \(\mathrm{Arc}\) moves. We will prove this by induction on \(m\). For the base case, the case where \(m=0\), the claim is trivial.
     
    For the step, let \(\Xi = \mathrm {Arc}_{\widehat n,r} \widehat \Xi\). Write  \(\widehat \Gamma = \widehat \Xi(\Gamma^\prime)\), and  \(\widehat v = \widehat \Xi(v^{\prime})\). As \(\Xi\) is an ancestry-sequence, we have that \(\widehat n \leq 4\). \(\widehat v = \widehat \Xi(v^{\prime})\) is positively-radical by the induction hypothesis.
    As \(\mathrm{Arc}_{\widehat n,r}\widehat \Gamma = \Gamma \), and \(\mathrm{Arc}_{\widehat n,r}(\widehat v) = v\), we have that \(v\) is a positively-radical vertex by Corollary~\ref{coro:prom of rad vertex is rad}.
\end{proof}

\subsection{Vertex-Separators}
The goal of this section is to prove Theorem~\ref{thm:BCFW and codim 1 bounds are smooth}. The challenging part will be to prove that the images of the codimension \(1\) boundaries of orthitroid cells are smooth submanifolds. We will do so by finding functions termed \emph{vertex-separators} for each internal vertex in a BCFW graph, show they vanish on the corresponding codimension \(1\) boundary, and show their differential does not vanish on that boundary. In fact, we will show the differential is positive in the direction of the cell (See Proposition~\ref{prop:sep 4-native} for details), which will prove pivotal to the proof of~\ref{thm:local sep} in Section~\ref{sec:proving local separation}. In order to do that, we will first need to better understand the relationship of derivatives to the notions developed so far. 
\begin{dfn}
    \label{def:vertex sep}
    Let \(\Gamma\) be an OG graph, and let \(v_i\) be the \(i\)-th internal vertex in an external \(4\)-arc \(\tau_\ell = \{\ell,\tau(\ell)\}\) with \(l<\tau(\ell)\) (that means \(i\) can be either \(1\) or \(2\), \(i=1\) will denote the one closest to the external vertex \(\ell\)). We will call \(S_{\{\ell+1,\ell+2,\ell+3\}}\) a \emph{vertex-separator} of \(v_1\), and \(S_{\{\ell,\ell+1,\ell+2\}}\) a \emph{vertex-separator} of \(v_2\).

    Let \(v\) be an internal vertex in an OG graph \(\Gamma\) such that \(\Gamma = G(\Gamma^\prime)\) and \(v = G(v^\prime)\), where \(G\) is a series of moves. If \(S^\prime\) is a vertex-separator of \(v^\prime\) then, \(G(S^\prime)\) will be a \emph{vertex-separator} of \(v\). Notice any \(4\)-native vertex has a vertex separator.   
  If a Mandelstam variable \(S_I\) for \(I\subset [2k]\) is a vertex-separator of a vertex \(v\), we will consider the complimentary Mandelstam \(S_{[2k]\setminus I}\) to be a vertex-separator for \(v\) as well, as they are equal by Proposition~\ref{prop:S orth}.

\end{dfn}
We will now consider the derivative of the  vertex-separators when the angle associated to the vertex approaches zero. We will therefore define a notion of the derivative of a function with respect to an oriented vertex, and we show the sign does not depend on its orientation.
\begin{dfn}
    \label{def:vertex derivative}
    Let \(v^\omega\) be an oriented vertex in a perfectly orientated OG graph \(\Gamma^\omega\), and write \(\Gamma_ 0 =\lim_{v^\omega\rightarrow L}\Gamma\) for finite \(L\). Let \(C(\alpha_1,...,\alpha_n)\) be the corresponding parameterization of \(\Omega_\Gamma\), such that \(\alpha_1\) is the angle associated to \(v^\omega\). Let \([C(\alpha_1,...,\alpha_n),\Lambda, Y(\alpha_1,...,\alpha_n)]\in\mathcal U^\geq_k\). For \(f\in \mathcal F\), define \(\evalat*{\frac{\partial}{\partial v^\omega}f}{\Gamma_0}\) by 
    \[\evalat*{\frac{\partial}{\partial v^\omega}f}{\Gamma_0}(\Lambda, Y(L,...,\alpha_n)) = \evalat*{\frac{\partial}{\partial t}f(\Lambda, Y(t,\alpha_2,...,\alpha_n))}{t = L},\]
    which is well-defined since the amplituhedron map is defined for a an open neighborhood around \(t=L\) by Proposition~\ref{prop:extends to a nbhd}.
    
    For a different orientation \(v^\psi\), by Proposition~\ref{prop:angles repara} we have that \(C(\alpha_1,...,\alpha_n)(v^\psi,\Gamma) = g(\alpha_1)\) for some function \(g\). Write 
    \(
   \evalat*{\frac{\partial}{\partial v^\omega}f}{\Gamma_0} =\evalat*{\frac{\mathrm d}{\mathrm dt} g(t)}{t=L}\evalat*{\frac{\partial}{\partial v^\psi}f}{\Gamma_0}.
    \)
\end{dfn}
\begin{obs}
\label{obs:vertex derivative orientation}
    Let \(v^\omega\) be an oriented vertex in an OG graph \(\Gamma\), and let \(f\in \mathcal F\), then the sign of \(\evalat{\frac{\partial}{\partial v^\omega}f}{\partial_v \Gamma}\) does not depend on \(\omega\).
    \end{obs}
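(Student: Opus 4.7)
The proof proceeds via the chain-rule identity already built into Definition~\ref{def:vertex derivative}, combined with monotonicity of the reparameterization between orientations. By that definition, if $\omega$ and $\psi$ are two orientations of $v$ then
\[
\evalat*{\frac{\partial}{\partial v^\omega}f}{\partial_v\Gamma} \,=\, \evalat*{\frac{\mathrm d}{\mathrm dt}g(t)}{t=L_\omega}\cdot\evalat*{\frac{\partial}{\partial v^\psi}f}{\partial_v\Gamma},
\]
where $g$ is the reparameterization furnished by Proposition~\ref{prop:angles repara} expressing the $\psi$-angle at $v$ as a function of the $\omega$-angle, and $L_\omega$ is the limit value of the $\omega$-angle that defines $\partial_v\Gamma$. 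Hence the statement reduces to showing $g'(L_\omega)>0$.

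My next step would be to observe that $g$ is a smooth diffeomorphism between the two one-dimensional parameter intervals (each of the form $(0,\tfrac{\pi}{2})$ in the trigonometric case and $(0,\infty)$ in the hyperbolic case), extending continuously to the closure with $g(L_\omega)=L_\psi$. Smoothness on the interior follows from Proposition~\ref{prop:angles repara}; the continuous extension at the relevant endpoint follows from Proposition~\ref{prop:limits commute with para} (the parameterization extends continuously to the boundary strata) together with the fact that the amplituhedron map extends smoothly across $\partial_v\Gamma$ by Proposition~\ref{prop:extends to a nbhd}.

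To finish, I would use the explicit endpoint matching recorded in Section~\ref{sec:boundary cells}: the trigonometric boundary values $0,\tfrac{\pi}{2}$ correspond under any reparameterization to the hyperbolic values $0,\infty$ respectively, and similarly in all other combinations. Because $\partial_v\Gamma$ is a single stratum, $g$ must carry $L_\omega$ to a matching endpoint $L_\psi$ of the same ``side'' (lower to lower or upper to upper). Being a diffeomorphism of intervals with this endpoint-preserving behavior, $g$ is strictly increasing, and the (one-sided) derivative $g'(L_\omega)$ is strictly positive, yielding the claim.

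The main obstacle I anticipate is the endpoint-matching step: one must verify carefully, from the explicit form of the change-of-orientation formulas, that $L_\omega$ and $L_\psi$ always lie on matching sides of their respective intervals. For trigonometric-to-hyperbolic reparameterizations this is transparent from the $0\leftrightarrow 0$, $\tfrac{\pi}{2}\leftrightarrow\infty$ convention; for orientation changes of the same type, one verifies it case by case using the local structure of the opening of $v^\omega$ (cf.\ Figure~\ref{fig:opening a vertex}) and the resulting parameterization formulas.
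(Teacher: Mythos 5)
Your reduction to showing $g'(L)>0$ via the chain-rule identity built into Definition~\ref{def:vertex derivative} is the right move, and framing $g$ as a diffeomorphism of parameter intervals is correct. However, the ``endpoint-matching'' step you flag as the remaining obstacle is not merely unverified --- the claim you intend to prove there is actually false. From Proposition~\ref{prop:angles repara} one of the reparametrizations between a hyperbolic parameter $\alpha$ and a trigonometric parameter $\gamma$ is
\[
\sinh\alpha = \frac{1}{\tan\gamma},
\]
so that $\alpha\to 0$ corresponds to $\gamma\to\tfrac{\pi}{2}$, i.e.\ the lower endpoint of one interval goes to the \emph{upper} endpoint of the other. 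A direct computation gives $g'(\alpha)=-1/\cosh\alpha<0$ everywhere, so the diffeomorphism is decreasing. Likewise $\sinh\alpha = 1/\sinh\beta$ (two hyperbolic orientations) exchanges $0$ and $\infty$ and has negative derivative. Hence your assertion that ``the trigonometric boundary values $0,\tfrac{\pi}{2}$ correspond under any reparameterization to the hyperbolic values $0,\infty$ respectively'' is wrong, and the concluding inference ``endpoint-preserving diffeomorphism, hence increasing'' does not apply. Uniqueness of $\partial_v\Gamma$ only gives $g(L_\omega)=L_\psi$; it does not force $L_\omega$ and $L_\psi$ to sit at the same end of their intervals.

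A correct version of your argument does go through if one restricts to orientations $\omega$ for which $\partial_v\Gamma$ is the boundary at $L=0$: then $g(0)=0$ and a continuous bijection of intervals $(0,b_\omega)\to(0,b_\psi)$ fixing the lower endpoint must be increasing, so $g'(0)>0$. This is in fact the class of orientations the paper actually uses (the canonical orientation, cf.\ Observation~\ref{obs:codim 1 boundary canonical 0} and the proof of Observation~\ref{obs:triplet sections}), so the usable content of Observation~\ref{obs:vertex derivative orientation} can be recovered that way. Note also that the paper states this observation without proof, so there is no authorial argument to compare against; but as written your proof rests on a false endpoint-matching claim and does not close the gap you yourself identified.
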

    The property of the derivative with respect to a vertex being positive is thus well-defined. As will be instrumental in the future, it also behaves nicely under promotion. 
    \begin{dfn}
    \label{def:vert der sans ori}
        Let \(v^\omega\) be an oriented vertex in an OG graph \(\Gamma\), and let \(f\in \mathcal F\). If \(\evalat{\frac{\partial}{\partial v^\omega}f}{\partial_v \Gamma}\) is positive on \(\partial_v \Gamma\) for any perfect orientation \(\omega\), then we would say \emph{\(\evalat{\frac{\partial}{\partial v}f}{\partial_v \Gamma}\) is positive.}
    \end{dfn}
    \begin{obs}
\label{obs:triv diff prom}
    Consider \(f\in\mathcal F\), a move \(G\) that is an \(\mathrm{Inc}\), \(\mathrm{Cyc}\), or \(\mathrm{Rot}\) move with a fixed angle, a \(k\)-OG graph \(\Gamma\), and suppose \(\Gamma^\prime = G(\Gamma)\) is a \(k^\prime\)-OG graph. Let \(C(t)\in \OGnon{k}{2k}\), \([C(t),\Lambda,Y(t)]\in \mathcal{U}_k^\geq\) for \(t\geq 0\) with \(C(0)\in \Omega_\Gamma\). Then for \([C^\prime(t),\Lambda^\prime,Y^\prime(t)] = G [C(t),\Lambda,Y(t)]\in \mathcal{U}_{k^\prime}^\geq\), we have that \(C^\prime(0)\in \Omega_{\Gamma^\prime}\), and
    \[
     \evalat*{ \frac{\partial}{\partial t}f(\Lambda,Y(t))}{t=0} =\evalat*{ \frac{\partial}{\partial t}(Gf)(\Lambda^\prime,Y^\prime(t))}{t=0}.
    \]
\end{obs}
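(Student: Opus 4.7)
The plan is to reduce the observation directly to Proposition~\ref{prop:com diag}, together with the compatibility between the action of the moves on $\mathcal{U}_k$ and on orthitroid cells; it is essentially a bookkeeping statement rather than something requiring new input.

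First I would establish $C'(0)\in\Omega_{\Gamma'}$. The action of the single move $G$ on $\mathcal{U}_k$ was constructed so as to correspond to the action on the underlying matrices and graphs, as summarized in Proposition~\ref{prop:prom rules}. Specifically, $\mathrm{Cyc}$, $\mathrm{Inc}_i$, and $\mathrm{Rot}_{i,i+1}(\alpha)$ send matrices in $\Omega_\Gamma$ to matrices in $\Omega_{G(\Gamma)}$. Since $C(0)\in\Omega_\Gamma$, applying $G$ gives $C'(0)\in\Omega_{G(\Gamma)}=\Omega_{\Gamma'}$.

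For the derivative identity, the key observation is that the action of $G$ on $\mathcal{F}$ is defined so as to be tautologically compatible with evaluation, and in particular acts as the identity on scalars (as stated in the section defining the action of the moves on $\mathcal{F}^{n}$ and $\mathcal{F}^{j\times 2k}$). Concretely, for every $t\geq 0$ with $[C(t),\Lambda,Y(t)]\in\mathcal{U}_k^\geq$, Proposition~\ref{prop:com diag} applied to $f\in\mathcal{F}$ gives
\[
(Gf)(\Lambda',Y'(t)) \;=\; G\bigl(f(\Lambda,Y(t))\bigr) \;=\; f(\Lambda,Y(t)),
\]
where the second equality uses that $G$ acts trivially on $\mathbb{R}$. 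Hence the two functions of $t$ appearing on the two sides of the derivative equation coincide pointwise, and their derivatives at $t=0$ are automatically equal.

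The main (mild) obstacle is just to justify that the one-sided derivative at $t=0$ is well-defined on both sides, since $t=0$ may lie on a boundary of the relevant cell. This is handled by Proposition~\ref{prop:extends to a nbhd}, which shows that the amplituhedron map extends as a submersion to an open neighborhood of $\OGnon{k}{2k}$ in $\OG{k}{2k}$ (and analogously for $k'$); thus $f(\Lambda,Y(t))$ and $(Gf)(\Lambda',Y'(t))$ are defined and smooth on a one-sided neighborhood of $t=0$, and the pointwise equality above propagates to equality of their derivatives.
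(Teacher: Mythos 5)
Your proof is correct and follows essentially the same route as the paper: the paper's one-line argument is exactly that $f(\Lambda,Y(t)) = (Gf)(\Lambda',Y'(t))$ by Proposition~\ref{prop:com diag}, from which the derivative identity follows tautologically. You simply spell out the two small points the paper leaves implicit — that $G$ acts as the identity on scalars, and that $C'(0)\in\Omega_{\Gamma'}$ via the compatibility of moves with orthitroid cells — which is fine, though the appeal to Proposition~\ref{prop:extends to a nbhd} for well-definedness is unnecessary: pointwise equality of the two one-sided functions on $t\geq 0$ already forces equality of their one-sided derivatives whenever either exists.
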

\begin{proof}
    This is immediate from the fact that \(f(\Lambda,Y(t)) =(Gf)(\Lambda^\prime,Y^\prime(t))\) by Proposition~\ref{prop:com diag}.
\end{proof}
    \begin{obs}
    \label{obs:ver der prom}
Consider \(f\in\mathcal F\), a \(\mathrm{Rot}\), \(\mathrm{Cyc}\), or  \(\mathrm{Inc}\) move \(G\), a \(k\)-OG graph \(\Gamma\), and suppose \(\Gamma^\prime = G(\Gamma)\) is a \(k^\prime\)-OG graph. Let \(v^\prime\) be an internal vertex in \(\Gamma^\prime\), and write \(v = G(v^\prime)\) and \(f^\prime = G(f)\). Then \(\evalat{\frac{\partial}{\partial v}f}{\partial_v \Gamma}\) is positive iff \(\evalat{\frac{\partial}{\partial v^\prime}f^\prime}{\partial_{v^\prime} \Gamma^\prime}\) is positive.

\end{obs}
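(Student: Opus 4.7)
The plan is to reduce this observation directly to Observation~\ref{obs:triv diff prom}, using the fact that the moves $\mathrm{Rot}$ (with fixed angle), $\mathrm{Cyc}$, and $\mathrm{Inc}$ transport perfect orientations and preserve the angle coordinate at each preserved internal vertex.

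First I would fix any perfect orientation $\omega$ of $\Gamma$ compatible with $v$, and let $\omega' = G(\omega)$ be the induced perfect orientation of $\Gamma'$. By Corollary~\ref{coro:moves dont change angles}, the angle attached to $v'=G(v)$ under $\omega'$ coincides with the angle attached to $v$ under $\omega$. Concretely, if we enumerate the internal vertices of $\Gamma$ with $v$ first and write the parameterization as $C(t,\alpha_2,\dots,\alpha_n)$ with $t$ the angle at $v^\omega$, then $G$ yields a parameterization $C'(t,\alpha_2,\dots,\alpha_n)=G\bigl(C(t,\alpha_2,\dots,\alpha_n)\bigr)$ of (a portion of) $\Omega_{\Gamma'}$ in which the angle at $v'^{\omega'}$ is again $t$. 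By Corollary~\ref{limit comute for graphs coro}, the codimension-one boundary graph transports as $\partial_{v'}\Gamma' = G(\partial_v \Gamma)$, so the limiting value $L\in\{0,\tfrac\pi2\}$ (or $\{0,\infty\}$) at which $\partial_v\Gamma$ is obtained is the same value at which $\partial_{v'}\Gamma'$ is obtained.

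Next, for $[C(t,\alpha_2,\dots),\Lambda,Y(t,\alpha_2,\dots)]\in\mathcal U_k^\geq$ and its image $[C'(t,\alpha_2,\dots),\Lambda',Y'(t,\alpha_2,\dots)] = G[C(t,\alpha_2,\dots),\Lambda,Y(t,\alpha_2,\dots)]\in\mathcal U_{k'}^\geq$, Observation~\ref{obs:triv diff prom} gives
\[
\evalat*{\frac{\partial}{\partial t} f(\Lambda,Y(t,\alpha_2,\dots))}{t=L}
\;=\;
\evalat*{\frac{\partial}{\partial t} f'(\Lambda',Y'(t,\alpha_2,\dots))}{t=L}.
\]
Unwinding Definition~\ref{def:vertex derivative}, the left side equals $\evalat{\frac{\partial}{\partial v^\omega}f}{\partial_v\Gamma}(\Lambda, Y(L,\alpha_2,\dots))$ and the right side equals $\evalat{\frac{\partial}{\partial v'^{\omega'}}f'}{\partial_{v'}\Gamma'}(\Lambda',Y'(L,\alpha_2,\dots))$. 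Since $[C(L,\alpha_2,\dots),\Lambda,Y(L,\alpha_2,\dots)]$ ranges over all of $\partial_v\Gamma$ (in $\mathcal U_k^\geq$) as $(\alpha_2,\dots)$ varies, and correspondingly for $\Gamma'$, these two functions have the same sign on their respective boundary cells.

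Finally, to remove the dependence on the choice of orientations, apply Observation~\ref{obs:vertex derivative orientation}: the sign of $\evalat{\frac{\partial}{\partial v}f}{\partial_v\Gamma}$ is well-defined and independent of the orientation $\omega$, so positivity as in Definition~\ref{def:vert der sans ori} transports directly. The main (mild) obstacle is simply bookkeeping — making sure the parameterization of $\Omega_{\Gamma'}$ assembles from the parameterization of $\Omega_\Gamma$ plus the extra angles introduced by $G$ (an issue only for $\mathrm{Rot}$, which however has a prescribed fixed angle and so contributes no free parameter), but Corollary~\ref{coro:moves dont change angles} is exactly what makes the coordinate $t$ at $v$ match the coordinate $t$ at $v'$, which is all that is used.
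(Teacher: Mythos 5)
Your proof takes the same route as the paper: the paper's own argument is a one-line citation of Observation~\ref{obs:triv diff prom} together with Corollary~\ref{coro:moves dont change angles}, and your writeup is an unrolling of precisely those two ingredients, supplemented with Corollary~\ref{limit comute for graphs coro} (to transport the codimension-one boundary graph) and Observation~\ref{obs:vertex derivative orientation} (to remove the choice of orientation).

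One point that you flag as ``mild bookkeeping'' deserves to be carried out rather than waved at. Observation~\ref{obs:triv diff prom} applies only to a $\mathrm{Rot}$ move \emph{with a fixed angle} $\alpha$, and in that case $G(\Omega_{\partial_v\Gamma})$ is a proper codimension-one slice of $\Omega_{\partial_{v'}\Gamma'}$, so your derivative equality is a priori established only on that slice. Since Definition~\ref{def:vert der sans ori} demands positivity for \emph{all} $[C',\Lambda',Y']$ with $C'\in\Omega_{\partial_{v'}\Gamma'}$, the argument must vary $\alpha$: given such a triple, Corollary~\ref{coro:graph cell com diag coro} produces the unique $\alpha>0$ and $C\in\Omega_{\partial_v\Gamma}$ with $C'=\mathrm{Rot}_{i,i+1}(\alpha)(C)$, and Definition~\ref{def:rot of lambda} together with the remark after it guarantees the induced $\Lambda=\mathrm{Rot}_{i,i+1}^{-1}(\alpha)(\Lambda')$ is still in $\mathrm{Mat}^>$, so one may legitimately apply Observation~\ref{obs:triv diff prom} for that $\alpha$; ranging over all targets then yields the iff on all of $\Omega_{\partial_{v'}\Gamma'}$. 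With that step written out your proof is complete, and it matches the paper's intended argument.
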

\begin{proof}
    This is immediate from Observation~\ref{obs:triv diff prom}, and Corollary~\ref{coro:moves dont change angles}. 
\end{proof}
Consider the following Proposition, which we will prove in Section~\ref{sec:Proof of Proposition prop:sep 4-native}:
 \begin{prop}
\label{prop:sep 4-native}
    For \(\Gamma\) an OG graph, a \(4\)-native vertex \(v\) in an OG graph \(\Gamma\) with vertex-separator \(S\), and \(\Gamma_0 = \partial_v\Gamma\). Then \(S\) is both positively-radical and zero on \(\Gamma_0\), and \(\evalat{\frac{\partial}{\partial v}S}{\Gamma_0}\) is positive.
\end{prop}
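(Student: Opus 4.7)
The plan is to prove all three claims simultaneously by induction on the length $m$ of an ancestry-sequence $\Xi$ witnessing $v$ as a $4$-native vertex. Write $\Gamma = \Xi(\Gamma^\prime)$, $v = \Xi(v^\prime)$, $S = \Xi(S^\prime)$, where $v^\prime$ lies on the external $4$-arc $\tau^\prime_{\ell^\prime}$ of $\Gamma^\prime$ with $S^\prime$ its vertex-separator. The inductive step is the easy part: Observation~\ref{obs:arc of codim 1 is codim 1} (iterated) gives $\partial_v\Gamma = \Xi(\partial_{v^\prime}\Gamma^\prime)$, and all intermediate boundary graphs $\Xi_i(\partial_{v^\prime}\Gamma^\prime) = \partial_{\Xi_i(v^\prime)}\Xi_i(\Gamma^\prime)$ are reduced. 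Vanishing of $S$ on $\partial_v\Gamma$ then follows from Proposition~\ref{prop:com diag} applied to $\widehat{S} \equiv 0$ on $\partial_{\widehat v}\widehat\Gamma$; positive-radicality is preserved by iterated application of Lemma~\ref{lem:prom radical pos} (since each step is an $\mathrm{Arc}_{n,\ell}$ with $n\leq 4$ on a reduced graph); and positivity of the vertex-derivative is preserved by Observation~\ref{obs:ver der prom} applied to each $\mathrm{Inc}$ and $\mathrm{Rot}$ comprising the $\mathrm{Arc}$ move.

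For the base case, $v$ is directly on the external $4$-arc $\tau_\ell$ with support $I = \{\ell, \ell+1, \ell+2, \ell+3\}$; by symmetry it suffices to handle $v = v_1$, with separator $S = S_{I_0}$ where $I_0 = \{\ell+1, \ell+2, \ell+3\}$. Positive-radicality is vacuous, as Mandelstams are polynomial in twistors and contain no radicals. For vanishing: the codimension-$1$ boundary $\partial_{v_1}\Gamma$ is obtained by replacing the $4$-arc with a $3$-arc on $I_0$ (this is the unique dimension-dropping limit among those in Corollary~\ref{arc 4 boundary coro}, as a count of internal vertices shows). On this boundary the permutation satisfies $\tau(\ell+1) = \ell+3$, $\tau(\ell+3) = \ell+1$, and $\tau(\ell+2)\notin I_0$, so $|I_0\setminus\tau(I_0)| = 1 < 2$, and Proposition~\ref{prop:S orth} gives $S_{I_0} = 0$.

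The heart of the proof, and the main obstacle, is the positivity of the derivative in the base case. Here I plan to use the explicit formula from Definition~\ref{def:vec angle exp} and Proposition~\ref{prop:sol arc to angle}: in the $\tau_\ell$-proper hyperbolic orientation with angle $\alpha_1^{\mathrm{hyp}}$ at $v_1$,
\[
\cosh(\alpha_1^{\mathrm{hyp}}) \;=\; \frac{\mathbf v_{4,\ell,k}^{\ell+1}}{\mathbf v_{4,\ell,k}^{\ell}} \;=\; \frac{\mathbf v_{4,\ell,k}^{\ell+1}}{S_{I_0}},
\]
where the numerator remains finite and non-zero in the limit (its non-vanishing reduces, after $\mathrm{Cyc}$ and the $k$-reduction argument of Proposition~\ref{prop:arc sol depends on supp}, to the explicit computation for $\OGnon{3}{6}$ via Lemma~\ref{triag}, combined with Lemma~\ref{lem:vanishing of i i+1 in k=3} to rule out degeneracies). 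As $\alpha_1^{\mathrm{hyp}}\to\infty$ we reach the codimension-$1$ boundary, forcing $S_{I_0} \sim 2\mathbf v_{4,\ell,k}^{\ell+1}\, e^{-\alpha_1^{\mathrm{hyp}}}$. Hence $\partial S_{I_0}/\partial\alpha_1^{\mathrm{hyp}}$ is non-zero at the boundary. Reparametrizing to the canonical trigonometric angle $\beta$, for which $\beta\to 0$ corresponds to $\alpha_1^{\mathrm{hyp}}\to\infty$ (Observation~\ref{obs:codim 1 boundary canonical 0}), the reparametrization is monotonic (Proposition~\ref{prop:angles repara}), so the chain rule shows $\partial S_{I_0}/\partial\beta > 0$ at the boundary. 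By Observation~\ref{obs:vertex derivative orientation}, the positive sign is orientation-independent, establishing positivity in the sense of Definition~\ref{def:vert der sans ori}. The verification that the canonical codimension-$1$ boundary indeed corresponds to $\alpha_1^{\mathrm{hyp}}\to\infty$ (and not $\to 0$), and that $\mathbf v_{4,\ell,k}^{\ell+1}$ is strictly positive on this boundary, are the most delicate points, each requiring careful bookkeeping with the orientations and a reduction to the $k=3$ case.
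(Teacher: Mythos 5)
Your inductive step matches the paper's closely: both use Observation~\ref{obs:arc of codim 1 is codim 1} to commute $\partial_v$ with $\mathrm{Arc}$ moves, Proposition~\ref{prop:com diag} for vanishing, Lemma~\ref{lem:prom radical pos} for positive-radicality, and Observation~\ref{obs:ver der prom} for the derivative. Your base-case treatment of vanishing (via $|I_0\setminus\tau(I_0)|<2$ and Proposition~\ref{prop:S orth}) and positive-radicality is also sound.

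The gap is in the derivative positivity at the base case, and it is a genuine circularity. Write $\gamma$ for the trigonometric angle at $v_1$, with $\Gamma_0$ at $\gamma=0$. From $\cosh(\alpha_1^{\mathrm{hyp}})=\mathbf v_{4,\ell,k}^{\ell+1}/S_{I_0}$ and the reparametrization $\cosh(\alpha_1^{\mathrm{hyp}})=1/\sin\gamma$, one gets exactly $S_{I_0}=\mathbf v^{\ell+1}\sin\gamma$, so that
\[
\evalat*{\frac{\partial S_{I_0}}{\partial\gamma}}{\gamma=0} \;=\; \mathbf v^{\ell+1}\big|_{\Gamma_0}.
\]
That is, positivity of the vertex-derivative \emph{is} strict positivity of $\mathbf v^{\ell+1}$ on $\Gamma_0$; the formula is an identity, not new information. (Also, your intermediate step "$\partial S_{I_0}/\partial\alpha_1^{\mathrm{hyp}}$ is non-zero at the boundary" is false --- it tends to $0$ --- and the chain rule through the monotone reparametrization lands you at an indeterminate $0\cdot\infty$, though this is a presentational rather than structural flaw, since one should simply differentiate in $\gamma$ directly.) Continuity gives only $\mathbf v^{\ell+1}\big|_{\Gamma_0}\geq 0$, and the cited lemmas do not close the gap: Lemma~\ref{lem:vanishing of i i+1 in k=3} controls a single consecutive twistor, whereas $\mathbf v^{\ell+1}$ is a sum of products of twistors plus a square-root term, and the proportionality of $\mathbf v_{4,\ell,k}(\Lambda,Y_0)$ to the associated vector of the limiting $3$-arc only tells you the vector is either zero or non-degenerate; you have no argument ruling out that the whole vector limits to $0$. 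The paper breaks this circularity by an explicit Cauchy--Binet expansion in Appendix~\ref{apx:sep k=3}, exhibiting the linear coefficient of $\Delta_{\{1,2,3\}}(C)$ in $S_{\{1,2,3\}}$ as a manifestly positive polynomial in Pl\"uckers of $\Lambda$ and $C$; after the stripping reduction to $k=3$ (Proposition~\ref{prop:sep external}, whose bookkeeping uses Observation~\ref{obs:triv diff prom} rather than Proposition~\ref{prop:arc sol depends on supp} as you cite), that computation supplies the strict inequality your argument cannot produce. To complete your approach you would need to establish $\mathbf v^{\ell+1}>0$ on $\Gamma_0$ by some independent means, and the most direct such means is precisely the computation the paper performs.
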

This is, in fact, all we need in order to prove Theorem~\ref{thm:BCFW and codim 1 bounds are smooth}:
\begin{proof}[Proof of Theorem~\ref{thm:BCFW and codim 1 bounds are smooth}]

    Recall that by Proposition~\ref{prop:mom cons}, we have that \(\widetilde \Lambda (\Omega_\Gamma)\subset \mathbf Y_\Lambda^k\).
    First consider \(\Gamma\), a \(k\)-BCFW graph. By Proposition~\ref{prop:zerolocus submfld}, we have that \(\mathrm{dim}(\mathbf Y^k_\Lambda) = 2k-3\), which is the number of internal vertices of a \(k\)-BCFW graph by Theorem~\ref{thm:BCFW are trees}. This means that \(\mathrm{dim}(\Omega_\Gamma) = 2k-3\). By Theorem~\ref{thm:ampli diffeo}, \(\evalat{\widetilde\Lambda}{\Omega_\Gamma}\) is a diffeomorphism onto its it image. We can conclude that \(\widetilde\Lambda(\Omega_\Gamma)\subset \mathbf Y_\Lambda^k \) is a smooth submanifold.
    
Now for \(\Gamma_0=\partial_v\Gamma = \lim_{v^\omega\rightarrow L} \Gamma\), a codimension \(1\) boundary of a \(k\)-BCFW graph. By definition~\ref{def:boundary graph}, this means \(\mathrm{dim}(\Omega_{\Gamma_0})=2k-4\). By Theorem~\ref{thm:ampli diffeo}, \(\evalat{\widetilde\Lambda}{\Omega_{\Gamma_0}}\) is a diffeomorphism onto its it image, thus \(\mathrm{dim}(\widetilde\Lambda(\Omega_{\Gamma_0}))=2k-4\).

By Proposition~\ref{prop:sep 4-native}, we have that \(S\), a vertex-separator of \(v\), is positively-radical and zero on \(\Gamma_0\), and \(\evalat*{\frac{\partial}{\partial v}S}{\Gamma_0}\) is positive. By Observation~\ref{obs:pos rad is smooth} this means \(S(\Lambda,\bullet)\) is smooth in \(U\), an open neighborhood containing \(\widetilde\Lambda(\Omega_{\Gamma_0})\). Furthermore, since \(\evalat{\frac{\partial}{\partial v}S}{\Gamma_0}\) is positive, we know \(\evalat{\frac{\partial}{\partial v^\omega}S}{\Gamma_0}\) is positive for any perfect orientation \(\omega\) on \(v\) by Definition~\ref{def:vert der sans ori}. By Definition~\ref{def:vertex derivative}, \(\evalat{\frac{\partial}{\partial v^\omega}S}{\Gamma_0}=\mathrm{d}(S(\Lambda,\bullet))(\mathbf v )\) on \(\widetilde\Lambda(\Omega_{\Gamma_0})\) for some direction \(\mathbf v\). This means the differential of \(S(\Lambda, \bullet)\) does not vanish in some open neighborhood of \(\widetilde\Lambda(\Omega_{\Gamma_0})\). By restricting \(U\), we can now assume the differential of \(S(\Lambda,\bullet)\) does not vanish in \(U\). We conclude \(S(\Lambda,\bullet)=0\) defines a smooth codimension \(1\) submanifold of \(U\subset\mathbf Y_\Lambda^k\).

As by Proposition~\ref{prop:sep 4-native}, we have that \(S\) is zero on \(\Gamma_0\), we have that \(\widetilde\Lambda(\Omega_{\Gamma_0})\) is a contained in the zero locus of \(S(\Lambda,\bullet)\) in \(U\). As they both have the same dimension, by the same argument as for the previous case, we get that \(\widetilde\Lambda(\Omega_{\Gamma_0})\) is a smooth submanifold of the zero locus of \(S(\Lambda,\bullet)\). We can thus conclude that \(\widetilde \Lambda (\Omega_{\Gamma_0})\subset \mathbf Y_\Lambda^k\) is a smooth submanifold.

\end{proof}
\subsubsection{Proof of Proposition~\ref{prop:sep 4-native}}
\label{sec:Proof of Proposition prop:sep 4-native}
We will now present the proof for Proposition~\ref{prop:sep 4-native}. Our approach will be similar to what we presented in Section~\ref{sec:proof of prop:the natives are radicalized}: We will use Proposition~\ref{prop:sep k=3} as a base case for \(\OGnon{3}{6}\), continue to extend the claim to vertices on external arcs in Proposition~\ref{prop:sep external}, and then use promotion to extend to general \(4\)-native vertices and prove Proposition~\ref{prop:sep 4-native}. 
\begin{prop}
\label{prop:sep k=3}
    For \(\Gamma\) as in Figure~\ref{fig:top cell}, corresponding to the the top cell of \(\OGnon{3}{6}\), an internal vertex \(v\)  in \(\Gamma\) with vertex-separator \(S\), and \(\Gamma_0 = \partial_v \Gamma\) the codimension \(1\) boundary resulting from opening \(v\). Then \(S\) is both positively-radical and zero on \(\Gamma_0\), and \(\evalat{\frac{\partial}{\partial v}S}{\Gamma_0}\) is positive.
\end{prop}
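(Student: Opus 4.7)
The plan is to reduce to a single explicit computation by invoking the cyclic symmetry of the top cell of $\OGnon{3}{6}$ under the $\mathrm{Cyc}$ move, and then verify the three claims directly using the parameterization already exhibited in the proof of the $n=4$ case of Proposition~\ref{prop:sol vnlk prop} together with the twistor-solution $\Delta_+$ from Lemma~\ref{triag}. The top cell has three internal vertices, which form a single orbit under $\mathrm{Cyc}^2$; by Observations~\ref{obs:ver der prom} and~\ref{obs:vertex derivative orientation}, the $\mathrm{Cyc}$ move preserves both ``positively-radical'' and the sign of the vertex-derivative (in particular it sends vertex-separators to vertex-separators, since it commutes with taking Mandelstam variables by Proposition~\ref{s cyc}). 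Hence it is enough to establish the three claims for one specific vertex $v$ of the top cell, which I will take to be the vertex lying on the arc $\{1,4\}$ closest to the external vertex $1$, with vertex-separator $S = S_{\{2,3,4\}}$.

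The ``positively-radical'' claim is the easiest: by Definition~\ref{def mand}, every Mandelstam $S_I$ is a polynomial in the abstract twistors $\langle i\,j \rangle$, containing no radicals whatsoever, so it is vacuously positively-radical on any graph. The vanishing $S=0$ on $\Gamma_0$ can then be read off from the ``moreover'' part of Proposition~\ref{prop:S orth}: one identifies the orthitroid permutation $\tau_0$ of the external codimension $1$ boundary $\Gamma_0 = \partial_v\Gamma$ from the explicit description in Figure~\ref{fig:opening ext int v gl} (here the top cell has permutation $\tau=(1\,4)(2\,5)(3\,6)$, and opening the crossing $v$ adjacent to the external vertex $1$ produces a permutation in which $\{2,3,4\}$ and $\tau_0(\{2,3,4\})$ coincide up to a single element), so $|I\setminus \tau_0(I)|<2$ and hence $S_I$ vanishes on $\Omega_{\Gamma_0}$.

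The substantive claim is the positivity of $\partial S/\partial v$ on $\Gamma_0$, which I plan to settle by direct computation. Using the three-angle hyperbolic parameterization $C(\alpha,\beta,\gamma)$ of the top cell displayed in Section~\ref{sec:case n=4}, combined with the twistor-solution $\Delta_+$ of Lemma~\ref{triag} and the identification $\langle i\,j\rangle = \Delta_{\{i,j\}}(\lambda)$ of Proposition~\ref{prop:twist pluckers}, each twistor $\langle i\,j\rangle$ becomes an explicit algebraic function of $\sinh$ and $\cosh$ of $\alpha,\beta,\gamma$. Substituting into $S_{\{2,3,4\}} = \langle 2\,3\rangle^2 - \langle 2\,4\rangle^2 + \langle 3\,4\rangle^2$ and choosing (by Observation~\ref{obs:we are too rad for orientations}) the hyperbolic orientation in which $v$ corresponds to, say, the angle $\alpha$, one checks that $S$ vanishes at $\alpha=0$ (consistent with the previous paragraph) and that the $\alpha$-derivative of $S$ at $\alpha=0$ reduces to a manifestly positive expression in $\beta,\gamma$ and the remaining twistor data, by virtue of Proposition~\ref{prop:Consecutive Twistors} and the strict positivity of consecutive twistors in the interior of the cell.

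The main obstacle I anticipate is not the algebra itself but the bookkeeping: matching the vertex $v$ of the geometric statement with the correct angle $\alpha$ in the chosen parameterization, ensuring the orientation chosen is indeed a perfect one in the sense of Definition~\ref{def:vert der sans ori}, and confirming that the ``into the cell'' direction corresponds to increasing $\alpha$ rather than decreasing. Once this is correctly set up, the derivative is a short explicit computation, and the remaining two vertices are handled in an identical way after applying $\mathrm{Cyc}^2$ and $\mathrm{Cyc}^4$ respectively.
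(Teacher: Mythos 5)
Your plan follows essentially the same route as the paper: reduce by cyclic symmetry to a single vertex, observe that a Mandelstam variable is vacuously positively-radical, verify vanishing on the boundary, and establish positivity of the derivative by an explicit computation in a hyperbolic parameterization of the top cell. The deduction that $S$ vanishes on $\Gamma_0$ via the ``moreover'' part of Proposition~\ref{prop:S orth}, after identifying the boundary permutation $\tau_0 = (1\,5)(2\,4)(3\,6)$ and checking $|\{2,3,4\}\setminus\tau_0(\{2,3,4\})|=1$, is a clean alternative to reading the vanishing directly off the Pl\"ucker expansion as the paper does.

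The one place where your plan as written would not go through is the final positivity step. You write that the $\alpha$-derivative ``reduces to a manifestly positive expression \ldots by virtue of Proposition~\ref{prop:Consecutive Twistors} and the strict positivity of consecutive twistors.'' That attribution is off: the derivative
\[
\evalat*{\tfrac{\partial}{\partial \alpha}S_{\{2,3,4\}}}{\alpha=0}
= 2\twist{Y}{2}{3}\,\partial_\alpha\twist{Y}{2}{3}
  - 2\twist{Y}{2}{4}\,\partial_\alpha\twist{Y}{2}{4}
  + 2\twist{Y}{3}{4}\,\partial_\alpha\twist{Y}{3}{4}
\]
involves the non-consecutive twistor $\twist{Y}{2}{4}$ with an indefinite sign and a minus sign in front of its square, so nothing about consecutive-twistor positivity alone makes this a sum of non-negative terms. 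The paper's computation resolves this by a Cauchy--Binet expansion of each twistor in terms of $C$-Pl\"uckers with $\Lambda$-minor coefficients, together with the observation that the only Pl\"ucker depending to first order on the boundary angle is $\Delta_{\{1,2,3\}}(C)$ (up to complement): the linear coefficient then organizes itself into a sum of products (positive $\Lambda$-minor)$\times$(positive $\Lambda$-minor)$\times$(non-negative $C$-Pl\"ucker), with at least one $C$-Pl\"ucker strictly positive. Your proposal would discover this upon carrying out the computation, but the route you describe --- invoking consecutive twistors --- is not by itself a valid reason for positivity and should be replaced by this expansion (or an equivalent manipulation). Also note that the parameterization you cite from Section~\ref{sec:case n=4} is only partially written out in the paper (only the first row is given), so you would need to reconstruct the full matrix, or use the paper's $\mathrm{Rot}_{1,6}\mathrm{Rot}_{4,5}\mathrm{Rot}_{2,3}\mathrm{Inc}^3(O)$ parameterization instead.
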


The outline of the proof is as follows: By rotational symmetry and the \(\mathrm{Cyc}\) move, it is enough to prove for one internal vertex. To show \(\evalat{\frac{\partial}{\partial v}S}{\Gamma_0}\) is positive, it is enough to prove that \(\evalat{\frac{\partial}{\partial v^\omega}S}{\Gamma_0}\) is positive on \(\Gamma_0\) for some \(\omega\) by Observation~\ref{obs:vertex derivative orientation}. Recall that by Corollary~\ref{coro: codim 1 boudnary classification} we have exactly one boundary graph of codimension \(1\) for each vertex in a BCFW graph, and that the top cell of \(\OGnon{3}{6}\) is BCFW. Thus, it is enough to verify that the derivative is indeed positive for one parametrization of the cell set by one orientation. We can show this by a simple direct calculation. The detailed proof of Proposition~\ref{prop:sep k=3} together with this calculation can be found in Appendix~\ref{apx:sep k=3}.

    \begin{prop}
\label{prop:sep external}
    For \(\Gamma\) an OG graph, an internal vertex \(v\) on an external \(4\)-arc \(\tau_\ell\) in \(\Gamma\) with vertex-separator \(S\), and \(\Gamma_0 = \partial_v\Gamma\). Then \(S\) is both positively-radical and zero on \(\Gamma_0\), and \(\evalat{\frac{\partial}{\partial v}S}{\Gamma_0}\) is positive.
\end{prop}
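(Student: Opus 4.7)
The plan is to reduce Proposition~\ref{prop:sep external} to the base case of Proposition~\ref{prop:sep k=3}, following the same simplification strategy used in the proof of Proposition~\ref{prop:external radicand doen't vanish}. First, applying $\mathrm{Cyc}^{1-\ell}$ and invoking Propositions~\ref{s cyc} together with the $\mathrm{Cyc}$-analogue of Observation~\ref{obs:ver der prom}, we may assume $\ell=1$, so that $\tau_\ell=\{1,4\}$ has support $I=\{1,2,3,4\}$. The vertex-separator is then either $S=S_{\{2,3,4\}}$ or $S=S_{\{1,2,3\}}$ (by Definition~\ref{def:vertex sep}); the two cases are structurally symmetric, so WLOG $S=S_{\{2,3,4\}}$.

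Next, we iteratively simplify $\Gamma$ just as in Proposition~\ref{prop:external radicand doen't vanish}. If $\Gamma$ contains two crossing arcs neither of which is $\{1,4\}$, an equivalence move brings the crossing adjacent to some external vertex $i\in\{5,\ldots,2k-1\}$, and $\mathrm{Rot}_{i,i+1}^{-1}$ eliminates the corresponding internal vertex without touching the arc $\{1,4\}$. If $\Gamma$ contains an external $2$-arc $\{r,r+1\}$ with $r\geq 5$, then $\mathrm{Inc}_r^{-1}$ removes it. Since $v$ lies on $\{1,4\}$ and the index-support $\{2,3,4\}$ of $S$ is disjoint from the indices touched by any of these moves, Propositions~\ref{s rot} and~\ref{s inc} guarantee that $S$ is literally the same formal expression with the same values before and after each move; Observation~\ref{obs:ver der prom} (and its trivial $\mathrm{Inc}$ analogue) transports the sign of $\evalat*{\tfrac{\partial}{\partial v}S}{\partial_v\Gamma}$ between the two graphs; and the same argument as in Observation~\ref{obs:arc of codim 1 is codim 1} (swapping $\mathrm{Arc}_{4,r}$ for $\mathrm{Rot}$ or $\mathrm{Inc}$ and using Corollary~\ref{limit comute for graphs coro}) shows that the codimension-$1$ boundary $\partial_v\Gamma$ is carried to the codimension-$1$ boundary of the reduced graph. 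Positive-radicality transfers at once because $S$ itself is unchanged, so the defining inequality is the same real inequality on the new graph. After finitely many reductions, every internal vertex lies on $\{1,4\}$ and no arc avoids it, forcing the reduced graph to be the top cell of $\OGnon{3}{6}$, to which Proposition~\ref{prop:sep k=3} applies and yields the three desired properties.

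To conclude, we reverse the reduction: each removed $\mathrm{Rot}$ or $\mathrm{Inc}$ is undone, and by the same invariance lemmas just cited, the three conclusions (being zero on $\partial_v\Gamma$, positive-radicality, and positivity of $\evalat*{\tfrac{\partial}{\partial v}S}{\partial_v\Gamma}$) lift verbatim back to the original graph $\Gamma$. The $\mathrm{Cyc}^{\ell-1}$ at the end restores the general $\ell$, where by Propositions~\ref{s cyc} and Observation~\ref{obs:ver der prom} the three properties again transport without modification.

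The main obstacle is the bookkeeping that, at each reduction step, the vertex $v$, the codimension-$1$ boundary $\partial_v\Gamma$, and the expression $S$ all track compatibly. The crucial technical point that makes this bookkeeping routine is the observation that $S=S_{\{2,3,4\}}$ has index-support $\{2,3,4\}$ disjoint from the indices affected by any move applied in the simplification, so Propositions~\ref{s rot},~\ref{s inc},~\ref{s cyc} reduce the transfer of positive-radicality to a tautology rather than forcing an invocation of Lemma~\ref{lem:prom radical pos}. Once this invariance is established and the base case of Proposition~\ref{prop:sep k=3} is in hand, the remainder of the argument is mechanical.
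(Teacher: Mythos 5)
Your proposal follows essentially the same approach as the paper's proof: reduce via the $\mathrm{Cyc}$ move to the case $\ell=1$, then strip the graph down to the top cell of $\OGnon{3}{6}$ by repeated $\mathrm{Rot}^{-1}$ and $\mathrm{Inc}^{-1}$ moves away from the support $\{1,2,3,4\}$, tracking $S$, $v$, and $\partial_v\Gamma$ through each step, and finally invoke Proposition~\ref{prop:sep k=3}. The minor differences (the explicit WLOG on $S$, citing Observation~\ref{obs:ver der prom} where the paper uses Observation~\ref{obs:triv diff prom}, and making the boundary-tracking bookkeeping explicit rather than implicit) are all benign and the argument is correct.
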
 
    \begin{proof} To show \(\evalat{{}\frac{\partial}{\partial v}S}{\Gamma_0}\) is positive, it is enough to prove that \(\evalat{\frac{\partial}{\partial v^\omega}S}{\Gamma_0}\) is positive for some \(\omega\) by Observation~\ref{obs:vertex derivative orientation}. Let us choose \(\omega\) to be the \(\tau_\ell\)-proper orientation. 
    The proof is similar to that of Proposition~\ref{prop:arc sol depends on supp}.  Let \(C\in\Omega_\Gamma\), and \([C,\Lambda,Y]\in \mathcal U_k^\geq\). Recall that \(\Gamma\) must be of the form seen in Figure~\ref{fig:ext arc} with the support of \(\tau_\ell\) being \(I = \{i_1,i_2,i_3,i_4\}\). Using the \(\mathrm{Cyc}\) move and Proposition~\ref{s cyc}, it is enough to show for \(I = \{1,2,3,4\}\) meaning \(\tau_\ell = \{1,4\}\). 

   We can reduce the number of vertices in the graph \(\Gamma\) using the \(\mathrm{Rot}^{-1}\) move according to Proposition~\ref{s rot} until all the internal vertices are on the arc  \(\{1,4\}\), without changing the value of \(S(\Lambda,Y)\) or \(\evalat*{\frac{\partial}{\partial v^\omega}S}{\Gamma_0}(\Lambda,Y)\):
    If there are \(i,i+1\notin I\) such that the arcs \(\tau_i\) and \(\tau_{i+1}\) are crossing, we can use the equivalence move number 3 to replace the graph with an equivalent graph such that the crossing vertex is adjacent to an external vertex. We can now eliminate it using a \(\mathrm{Rot}^{-1}_{i,i+1}\) move without changing the value of the vertex-separator according to Proposition~\ref{s rot} or its derivative according to Observation~\ref{obs:triv diff prom}. We will continue doing so until there are no more such arcs, meaning the only pairs of crossing arcs are ones where one of the arcs is contained in \(I\). The only arc contained in \(I\) is \(\{1,4\}\) as it is external, thus we can assume all the internal vertices are on the arc \(\{1,4\}\) in \(\Gamma\).

    Since  \(\{1,4\}\) is external, there are no arcs contained in \(I\). Meaning that for any \(\{r,\tau(r)\}\), an arc which does not cross \(\{1,4\}\), we have that \(4<r,\tau(r)\). Since all the internal vertices are on \(\{1,4\}\), we must have that \(\{r,\tau(r)\}\) is external with support of size \(2\). This means we have an arc \(\{r,r+1\}\) with \(4<r\), and \(\Gamma = \mathrm{Inc}_r (\Gamma_0)\) for some graph \(\Gamma_0\). We can now apply \(\mathrm{Inc}_r^{-1}\) according to Proposition~\ref{s inc} to reduce \(k\) by \(1\) without changing the value of the vertex-separator or its derivative according to Observation~\ref{obs:triv diff prom}. As \(\{1,4\}\) is external, there are exactly \(2\) arcs that cross it. That means that there are exactly \(k - 3\) arcs that do not cross \(\{1,4\}\). By induction, we can now assume that all arcs cross \(\{1,4\}\) in \(\Gamma\) and that \(k=3\). The claim reduces to Proposition~\ref{prop:sep k=3}. As \(S\in\mathcal F\) does not change throughout the process, it is trivially positively radical as well.
\end{proof}
 \begin{proof}[Proof of Proposition~\ref{prop:sep 4-native}]
    Let \(\Xi\) be an ancestry-sequence of \(v\), and write \(\Xi(\Gamma^\prime) = \Gamma\), \(\Xi(\Gamma_0^\prime) = \Gamma_0\), and \(\Xi(v^\prime) = v\). By Definition~\ref{def:vertex sep}, there exist \(\Xi\) such that \(S = \Xi(S^\prime)\), with \(S^\prime\) being a vertex-separator of \(v^\prime\). By Observation~\ref{obs:arc of codim 1 is codim 1}, we have that \(\partial_{v^\prime} \Gamma^\prime = \Gamma^\prime_0\). By Proposition~\ref{prop:sep external}, \(S^\prime\) is both positively-radical and zero on \(\Gamma_0^\prime\), and \(\evalat{\frac{\partial}{\partial v^\prime}S^\prime}{\Gamma_0^\prime}\) is positive. By Proposition~\ref{prop:com diag}, \(S^\prime\) being positive on \(\Gamma_0^\prime \) means \(S\) is positive on \(\Gamma_0\). By Lemma~\ref{lem:prom radical pos}, \(S^\prime\) being positively-radical on \(\Gamma_0^\prime \) means \(S\) is positively-radical on \(\Gamma_0\).  By Observation~\ref{obs:ver der prom}, we have that \(\evalat{\frac{\partial}{\partial v^\prime}S^\prime}{\Gamma_0^\prime}\) is positive means \(\evalat{\frac{\partial}{\partial v}S}{\Gamma_0}\) is positive. 
\end{proof}
 \subsection{Proving Local Separation}
 \label{sec:proving local separation}
 Having proved the necessary prerequisites, in this section we will finally prove Theorem~\ref{thm:local sep}. Our strategy for doing so will be  applying the following lemma:
 \begin{lem}
    \label{lem:local sep form connectedness}
    Let \(X\) be a smooth manifold, and let \(X_+,X_0,X_-\) be three disjoint smooth submanifolds of dimensions \(n,n-1,n\) respectively, such that \(X_\epsilon \cup X_0\subset X\) is a smooth submanifold with a boundary for \(\epsilon=\pm\). Let \(Y\) be a smooth manifold of dimension \(n\) and \(f:X\rightarrow Y\) a smooth map. Write \(Y_\epsilon\defeq f(X_\epsilon)\) for \(\epsilon\in\{+,0,-\}\). Let \(\mathbf v_\pm\) be continuous sections of the tangent bundle at \(X_0\) pointing in the direction of \(X_\pm\), and write \(\mathbf u_\pm \defeq \mathrm{d}f( \mathbf v_\pm)\). Let \(S_\pm\in C^1(Y)\).
    
    Suppose the following assumptions are satisfied:
    \begin{enumerate}
        \item \(f\) is injective on \(X_\epsilon\) for \(\epsilon\in\{+,0,-\}\).
        \item \(\forall y\in Y_0\), \(S_\epsilon(y)=0\) and \(\mathrm d S_\epsilon(\mathbf u_\epsilon (y))>0\) for \(\epsilon=\pm\).
        \item For each \(y\in Y_0\), \(\frac{\mathrm d S_+(\mathbf v)}{\mathrm d S_-(\mathbf v)}<0\) for all \(\mathbf v\in T_{y}Y\setminus T_y Y_0\).
    \end{enumerate}
    Then there exists open \(U\supset X_0\) such that \(f\) is injective on \(U\cap(X_+\cup X_0\cup X_-)\).
    
\end{lem}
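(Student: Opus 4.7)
The plan is to use the $C^1$ function $g := S_+ \circ f \colon X \to \mathbb{R}$ as a global separator between the three pieces. By Hypothesis 2 we have $g \equiv 0$ on $X_0$, while at every $x \in X_0$ the directional derivative of $g$ in direction $\mathbf{v}_+(x)$ equals $dS_+(\mathbf{u}_+(x)) > 0$. Hypothesis 3, applied to the transverse vector $\mathbf{u}_-(x)$ (which is indeed transverse since $dS_-(\mathbf{u}_-(x)) > 0$ by Hypothesis 2), forces $dS_+(\mathbf{u}_-(x)) < 0$. Thus the directional derivative of $g$ at every $x \in X_0$ is strictly positive in the $X_+$ direction and strictly negative in the $X_-$ direction. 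Once these pointwise inequalities are upgraded to local sign information, Hypothesis 1 will yield the desired injectivity.

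I would then build $U$ locally. For each $x \in X_0$, since $X_\pm \cup X_0$ are codimension-$0$ submanifolds of $X$ with common boundary $X_0$, there is a chart on $X$ centered at $x$ in which $X_0$ is the slice $\{t = 0\}$, the half $\{t > 0\}$ is contained in $X_+$, and the half $\{t < 0\}$ is contained in $X_-$. Decomposing $\mathbf{v}_\pm(x)$ into a component along $\partial_t$ and a component tangent to $X_0$ (on which $dg$ vanishes since $g|_{X_0} = 0$), the coefficient of $\partial_t$ in $\mathbf{v}_+(x)$ must be positive and the corresponding coefficient in $\mathbf{v}_-(x)$ must be negative; the previous paragraph therefore gives $\partial_t g|_x > 0$. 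Since $dg = dS_+ \circ df$ is continuous, $\partial_t g$ remains positive on a small open ball $W_x$ around $x$, and integrating in $t$ starting from $g|_{W_x \cap X_0} = 0$ yields $g > 0$ on $W_x \cap X_+$ and $g < 0$ on $W_x \cap X_-$. Set $U := \bigcup_{x \in X_0} W_x$; this is open in $X$ and contains $X_0$.

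To verify injectivity of $f$ on $U \cap (X_+ \cup X_0 \cup X_-)$, suppose $f(x') = f(x'')$ for two points $x', x''$ in that set. Then $g(x') = g(x'')$, and by construction $g > 0$ on $U \cap X_+$, $g = 0$ on $U \cap X_0$, and $g < 0$ on $U \cap X_-$, so $x'$ and $x''$ lie in the same piece $X_\epsilon$; Hypothesis 1 then gives $x' = x''$. The main technical step is producing the local chart in which $X_\pm$ fill the two half-spaces and $X_0$ is the separating hyperplane, which uses that $X_\pm \cup X_0$ are codimension-$0$ submanifolds of $X$ with boundary $X_0$, and then upgrading the pointwise inequality $\partial_t g|_x > 0$ to an inequality on an entire open ball via the continuity of $dS_+$; the rest of the argument is essentially formal.
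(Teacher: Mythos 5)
Your core idea --- take $g := S_+ \circ f$, use Hypothesis 2 to get $g \equiv 0$ on $X_0$ with $dg(\mathbf v_+) > 0$, and Hypothesis 3 to force $dg(\mathbf v_-) < 0$, then let Hypothesis 1 finish --- is exactly right, and it is the same separator the paper uses. The gap is in the second paragraph, where you assert that $X_\pm \cup X_0$ are \emph{codimension-$0$} submanifolds of $X$ with common boundary $X_0$, so that $X_+$ and $X_-$ fill the two complementary half-spaces of a single chart on $X$. Neither claim is part of the hypotheses. The statement places no constraint on $\dim X$ (in the intended application $X$ is an open neighborhood of $\OGnon{k}{2k}$ inside $\OG{k}{2k}$, which has dimension $\binom{k}{2}$, strictly larger than $n = 2k-3$ once $k\ge 4$, so $X_\pm$ have positive codimension in $X$), and nothing in the lemma says the union $X_+ \cup X_0 \cup X_-$ is a smooth submanifold of $X$ or that $X_+$ and $X_-$ approach $X_0$ from opposite sides in $X$. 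Without that, the chart you build, the decomposition of $\mathbf v_\pm$ into $\partial_t$ plus a tangential part, and the integration in $t$ all lose their footing.

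The repair is small and is what the paper's proof does: handle the two manifolds-with-boundary separately rather than in one chart. Each of $X_+ \cup X_0$ and $X_- \cup X_0$ is a smooth submanifold of $X$ with boundary $X_0$ of codimension one \emph{inside it}. A collar of $X_0$ in $X_+ \cup X_0$, together with $g|_{X_0}=0$ and $dg(\mathbf v_+)>0$, produces an open $V\subset X$ containing $X_0$ with $g>0$ on $V\cap X_+$; the mirror argument in $X_-\cup X_0$, using $dg(\mathbf v_-)<0$, produces $W\subset X$ with $g<0$ on $W\cap X_-$. Setting $U:=V\cap W$, the sign of $S_+$ separates $f(U\cap X_+)$, $f(U\cap X_0)$, $f(U\cap X_-)$, and your closing injectivity paragraph applies verbatim. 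So drop the single glued chart in favor of two independent collar arguments and take the intersection; without that change the proof as written proves a stronger statement than the one at hand and does not cover the case the lemma is actually used for.
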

\begin{proof}
    We have that \( X_+ \cup X_0\subset X\) and \(X_- \cup X_0\subset X\) are both smooth submanifolds with a boundary, and \(X_0\subset X\) is itself a smooth submanifold. \(\mathbf v_+\) is a section of the tangent bundle at \(X_0\) pointing towards \(X_+\). By assumption \(2\), we have that \(\mathrm{d}(S_+ \circ f)(\mathbf v_+)>0\) and \(\evalat{S_+}{X_0}=0\). As \(X_0\) is codimension \(1\) in \(X_+\cup X_0\), this implies there exists an open neighborhood \(V\subset X\) of \(X_0\) such that \((S_+ \circ f)(V\cap X_+) >0 \). Now, \(\mathbf v_-\) is a section of the tangent bundle at \(X_0\) pointing towards \(X_-\). We have by assumptions \(2\) that  \(\mathrm{d}(S_- \circ f)(\mathbf v_-)>0\). Therefore, assumption \(3\) implies \(\mathrm{d}(S_+ \circ f)(\mathbf v_-)<0\). Just as before, we can conclude there exists an open neighborhood \(W\subset X\) of \(X_0\) such that \((S_+ \circ f)(W\cap X_-) <0 \).

 Setting \(U \defeq V\cap W\), we have that \(U\) is an open neighborhood of \(X_0\) such that \(S_+\circ f\) is positive on \(U\cap X_+\) and negative on \(U\cap X_-\). Recall that by assumption \(2\), we also have that \(S_+\circ f\) is zero on \(X_0\). This implies \(f(U\cap X_+),f(U\cap X_0),f(U\cap X_-)\) are disjoint. By assumption \(1\), \(f\) is injective on each of those sets. We conclude \(f\) is injective on \(U\cap(X_+\cup X_0\cup X_-)\).
\end{proof}

    Theorems~\ref{thm:ampli diffeo} and~\ref{thm:BCFW and codim 1 bounds are smooth} bring us relatively close to showing the case of Theorem~\ref{thm:local sep} satisfies the assumptions of Lemma~\ref{lem:local sep form connectedness}. Our main obstacles now would be showing it satisfies assumptions \(3\) (\(\forall y\in Y_0\), \(S_\epsilon(y)=0\) and \(\mathrm d S_\epsilon(\mathbf u_\epsilon (y))>0\) for \(\epsilon=\pm\)) and \(4\) (for each \(y\in Y_0\), \(\frac{\mathrm d S_+}{\mathrm d S_-}\) is constant on \(T_{y}Y\setminus T_y Y_0\), and \(\frac{\mathrm d S_+}{\mathrm d S_-}<0\)). We will now address assumption \(3\):
\begin{obs}
\label{obs:triplet sections}
    Let \(\Lambda\in \mathrm{Mat}_{2k\times(k+2)}^>\), and let \(\Delta = (\Gamma_+,\Gamma_0,\Gamma_-)\) be a boundary triplet with boundary vertices \(v_\epsilon\) in \(\Gamma_\epsilon\) for \(\epsilon=\pm\) respectively. For \(\epsilon=\pm\), let \(S_\epsilon\) be vertex-separators of \(v_\epsilon\). Then there exist \(\mathbf v_\epsilon\) continous sections of the tangent bundle at \(\Omega_{\Gamma_0}\) pointing towards the cells \(\Omega_{\Gamma_\pm}\) respectively, such that:
    \begin{enumerate}
        \item \(S_\epsilon\) is positively-radical and zero on \(\Gamma_0\).
        \item \(\mathrm{d}(S_\epsilon\circ \widetilde \Lambda)(\mathbf v_\epsilon(x))>0 \) for any \(x\in \widetilde\Lambda(\Omega_{\Gamma_0})\) .
    \end{enumerate}
\end{obs}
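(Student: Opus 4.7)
The plan is to assemble this directly from Proposition~\ref{prop:sep 4-native} and Observation~\ref{obs: canonical para codim one}, using the canonical parameterization of $\Omega_{\Gamma_\pm}$ to produce the sections $\mathbf v_\epsilon$ and to translate the vertex-derivative condition from Proposition~\ref{prop:sep 4-native} into a statement about $\mathrm{d}(S_\epsilon\circ\widetilde\Lambda)$.

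First I would establish item~(1). Since $\Gamma_\pm$ are BCFW graphs, Observation~\ref{obs:BCFW 4 native} (together with Proposition~\ref{prop:BCFW arcs}, which expresses every BCFW graph via a sequence of $\mathrm{Arc}_{4,2i}$ moves starting from an $\mathrm{Arc}_{4,2}\mathrm{Arc}_{3,2}\mathrm{Arc}_{2,1}$ base) forces every internal vertex of $\Gamma_\pm$, and in particular each boundary vertex $v_\pm$, to be $4$-native. Hence the vertex-separators $S_\pm$ are defined, and by Definition~\ref{def:boundary triplet} and Observation~\ref{obs:arc of codim 1 is codim 1} we have $\partial_{v_\epsilon}\Gamma_\epsilon = \Gamma_0$. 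Proposition~\ref{prop:sep 4-native} then gives both conclusions of item~(1): $S_\epsilon$ is positively-radical and zero on $\Gamma_0$.

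For the sections, I would use the canonical parameterization. By Observation~\ref{obs: canonical para codim one}, the canonical parameterization $\varphi_{\Gamma_\epsilon}$ extends to a diffeomorphism onto its image from an open neighborhood $U_\epsilon\supset[0,\pi/2)^{\mathcal V(\Gamma_\epsilon)}$, with the property that setting $\alpha_{v_\epsilon}=0$ recovers $\varphi_{\Gamma_0}$; in particular $\widetilde\Lambda\circ\varphi_{\Gamma_\epsilon}$ is smooth on $\varphi_{\Gamma_\epsilon}(U_\epsilon)$. Define
\[
\mathbf v_\epsilon(x) \;\defeq\; \mathrm{d}\varphi_{\Gamma_\epsilon}\!\left(\tfrac{\partial}{\partial \alpha_{v_\epsilon}}\right)\!\Big|_{\varphi_{\Gamma_0}^{-1}(x)}
\]
for $x\in\Omega_{\Gamma_0}$. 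This is a smooth (hence continuous) section of $T\OG{k}{2k}$ along $\Omega_{\Gamma_0}$, and it points into $\Omega_{\Gamma_\epsilon}$ because positive $\alpha_{v_\epsilon}$ lies in the open $(0,\pi/2)^{\mathcal V(\Gamma_\epsilon)}$ factor parameterizing $\Omega_{\Gamma_\epsilon}$.

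For item~(2), unfolding definitions,
\[
\mathrm{d}(S_\epsilon\circ\widetilde\Lambda)\bigl(\mathbf v_\epsilon(x)\bigr) \;=\; \frac{\partial}{\partial \alpha_{v_\epsilon}}\bigl(S_\epsilon\circ\widetilde\Lambda\circ\varphi_{\Gamma_\epsilon}\bigr)\!\bigg|_{\alpha_{v_\epsilon}=0}.
\]
Under the canonical orientation $\omega_\epsilon$ (which is a perfect orientation of $\Gamma_\epsilon$ with corresponding parameterization $\varphi_{\Gamma_\epsilon}$), this right-hand side is precisely $\evalat*{\tfrac{\partial}{\partial v_\epsilon^{\omega_\epsilon}} S_\epsilon}{\Gamma_0}(\Lambda,x)$ by Definition~\ref{def:vertex derivative}. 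Proposition~\ref{prop:sep 4-native} asserts that $\evalat{\tfrac{\partial}{\partial v_\epsilon} S_\epsilon}{\Gamma_0}$ is positive, and by Definition~\ref{def:vert der sans ori} this positivity holds with respect to every perfect orientation, in particular $\omega_\epsilon$. This gives item~(2).

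The only genuine subtlety is making sure the canonical orientation really is perfect on both sides of the internal boundary $\Gamma_0$; this is exactly the content of the discussion around Figure~\ref{fig:opening orient} establishing that the canonical orientation is well defined on $\Gamma_0$. Once that is noted, the rest is bookkeeping. I expect no significant additional obstacle.
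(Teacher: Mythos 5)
Your proof follows essentially the same route as the paper's: define $\mathbf v_\epsilon$ via the canonical parameterization (Observation~\ref{obs: canonical para codim one}), invoke Observation~\ref{obs:BCFW 4 native} and Proposition~\ref{prop:sep 4-native} for item~(1), and unfold Definition~\ref{def:vertex derivative} together with Observation~\ref{obs:vertex derivative orientation} (Definition~\ref{def:vert der sans ori}) for item~(2). The only cosmetic difference is how non-vanishing of $\mathbf v_\epsilon$ is handled — you derive it directly from $\varphi_{\Gamma_\epsilon}$ being a diffeomorphism, whereas the paper deduces it a posteriori from the positivity in item~(2); both are fine.
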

\begin{proof}
    Take \(\omega\) to be the canonical orientation on the graphs. We have that \(\Gamma_0 = \lim_{v_+^\omega\rightarrow 0 }\Gamma_+=\lim_{v_-^\omega\rightarrow 0 }\Gamma_-\) by Observation~\ref{obs:codim 1 boundary canonical 0}. Let \(\varphi_\epsilon\) be the canonical parameterizations of \(\Gamma_\epsilon\) for \(\epsilon \in\{+,0,-\}\), and let \(\alpha_\pm\) be the angles associated to \(v_\pm\) respectively. We have by Observation~\ref{obs: canonical para codim one} that \(\evalat{\varphi_\pm}{v_\pm\mapsto 0} = \varphi_0\), thus \(\alpha_\pm\) define \(\mathbf v_\pm\) continuous sections of the tangent bundle at  \(\Omega_{\Gamma_0}\) that either vanish or point towards the cells  \(\Omega_{\Gamma_\pm}\) respectively.

    By Observation~\ref{obs:BCFW 4 native}, we have both \(v_\pm\) are \(4\)-native vertices. Thus by Proposition~\ref{prop:sep 4-native}, we have that there exist \(S_\pm\) vertex-separators for \(v_\pm\) that are both positively-radical and zero on \(\Gamma_0\), and \(\evalat{\frac{\partial}{\partial v\pm }S_\pm}{\Gamma_0}\) are positive. By Observation~\ref{obs:vertex derivative orientation}, we have that \(\evalat{\frac{\partial}{\partial v_\pm^\omega }S_\pm}{\Gamma_0}\) are positive. By Definition~\ref{def:vertex derivative}, we have that \(\evalat{\frac{\partial}{\partial v_\epsilon^\omega }S_\epsilon}{\Gamma_0}\) is precisely \(\mathrm{d}(S_\epsilon\circ \widetilde\Lambda)(\mathbf v_\epsilon)\) for \(\epsilon =\pm\). Thus \(\mathrm{d}(S_\epsilon\circ \widetilde\Lambda)(\mathbf v_\epsilon(x))>0\) for any for any \(x\in \widetilde\Lambda(\Omega_{\Gamma_0})\), and we must have that \(\mathbf v_\epsilon(x)\) does not vanish.
\end{proof}
We all that is left now is showing assumption \(4\) (for each \(y\in Y_0\), \(\frac{\mathrm d S_+}{\mathrm d S_-}\) is constant on \(T_{y}Y\setminus T_y Y_0\), and \(\frac{\mathrm d S_+}{\mathrm d S_-}<0\)). Recall Definition~\ref{def:Ylk} for the definition of \(Y^k_\Lambda\).  Let us introduce the following notation:
\begin{dfn}
\label{def:Ykg}
    Define 
    \[\mathbf Y^k \defeq \bigcup_{\Lambda\in \mathrm{Mat}^>_{2k\times(k+2)}}  \{\Lambda\}\times \mathbf Y^k_\Lambda,\] and for \(\Gamma\) a \(k\)-OG graph, define  \[\mathbf Y^k_\Gamma \defeq \bigcup_{\Lambda\in \mathrm{Mat}^>_{2k\times(k+2)}}  \{\Lambda\}\times \widetilde \Lambda(\Omega_{\Gamma}).\] 
\end{dfn}
\begin{obs}
    \label{obs:Y connected smooth submanifold}
    \(\mathbf Y^k\) is a smooth manifold, and for \(\Gamma\) a \(k\)-OG graph that is a BCFW graph or a codimension \(1\) boundary of a BCFW graph, we have that \(Y_\Gamma^k\subset Y^k\) is a connected smooth submanifold.
\end{obs}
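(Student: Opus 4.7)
The plan is to prove the three assertions in turn: the smooth manifold structure on $\mathbf{Y}^k$, the connectedness of $\mathbf{Y}^k_\Gamma$, and the smooth submanifold structure of $\mathbf{Y}^k_\Gamma$ inside $\mathbf{Y}^k$.

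For the first assertion, I would extend Proposition~\ref{prop:zerolocus submfld} by letting $\Lambda$ vary. The space $\mathbf{Y}^k$ sits inside $\mathrm{Mat}^>_{2k\times(k+2)}\times \Gr{k}{k+2}$ as the zero locus of the smooth momentum-conservation map $(\Lambda,Y)\mapsto \lambda_\Lambda(Y)\,\eta\,\lambda_\Lambda(Y)^\intercal$. Since Proposition~\ref{prop:zerolocus submfld} already establishes that, for each fixed $\Lambda$, this map has $0$ as a regular value, the partial differential in the $Y$-direction alone is surjective, so the total differential is surjective. Hence $\mathbf{Y}^k$ is cut out by a submersion and is a smooth submanifold of codimension three.

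For connectedness of $\mathbf{Y}^k_\Gamma$, note that $\Omega_\Gamma$ is a reduced orthitroid cell and is diffeomorphic to an open box via the parameterization of Theorem~\ref{param bij}, hence connected. The positive matrix space $\mathrm{Mat}^>_{2k\times(k+2)}$ is also connected (positivity of all maximal minors is an open condition, and one may path-connect any two representatives, for instance by interpolating and rescaling). Their product $\mathrm{Mat}^>_{2k\times(k+2)}\times \Omega_\Gamma$ is therefore connected, and $\mathbf{Y}^k_\Gamma$ is its image under the continuous map $(\Lambda,C)\mapsto(\Lambda,C\Lambda)$.

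For the smooth submanifold structure, I would use the canonical parameterization $\varphi_\Gamma:U\to\Omega_\Gamma$ from Observation~\ref{obs: canonical para codim one}, which is $\Lambda$-independent. Define the smooth map
\[
\Phi:\mathrm{Mat}^>_{2k\times(k+2)}\times U \longrightarrow \mathbf{Y}^k,\qquad (\Lambda,\alpha)\mapsto(\Lambda,\varphi_\Gamma(\alpha)\Lambda),
\]
whose image is $\mathbf{Y}^k_\Gamma$. Its differential is block-triangular with the identity in the $\Lambda$-factor and $d_\alpha\widetilde\Lambda$ in the $\alpha$-factor; the latter is injective by Theorem~\ref{thm:ampli diffeo}, so $\Phi$ is an immersion. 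Injectivity follows fiberwise from Theorem~\ref{thm:ampli diffeo} together with the trivial recovery of $\Lambda$ from the first coordinate. Together these yield that $\Phi$ is an injective smooth immersion onto $\mathbf{Y}^k_\Gamma$, which, combined with standard properness considerations on each slice, makes $\mathbf{Y}^k_\Gamma$ a smooth submanifold of $\mathbf{Y}^k$.

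The main subtlety I anticipate is handling the $\mathrm{GL}_2$-ambiguity in the definition of $\lambda_\Lambda(Y)$ rigorously enough to conclude that $\mathbf{Y}^k$ is genuinely cut out by a submersion in local trivializations, and then upgrading $\Phi$ from an injective immersion to an embedding so that the image inherits the smooth submanifold structure. Both are handled by local arguments around any fixed $(\Lambda_0,Y_0)\in\mathbf{Y}^k_\Gamma$, using the smooth inverse to $\widetilde{\Lambda_0}|_{\Omega_\Gamma}$ supplied by Theorem~\ref{thm:ampli diffeo}.
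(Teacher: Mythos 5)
Your proof is correct but takes a genuinely different route from the paper's. The paper's argument is very compact: it invokes Theorem~\ref{thm:BCFW and codim 1 bounds are smooth} to get that each slice $\widetilde\Lambda(\Omega_\Gamma)\subset\mathbf{Y}^k_\Lambda$ is a smooth submanifold, and then asserts that these vary smoothly with $\Lambda$, so $\mathbf{Y}^k_\Gamma\subset\mathbf{Y}^k$ is a smooth submanifold; connectedness and the smooth structure of $\mathbf{Y}^k$ are handled in one line each. You bypass Theorem~\ref{thm:BCFW and codim 1 bounds are smooth} entirely, relying only on Proposition~\ref{prop:zerolocus submfld}, Theorem~\ref{param bij} and Theorem~\ref{thm:ampli diffeo}. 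Your realization of $\mathbf{Y}^k$ as the zero locus of a submersion makes the paper's ``changes smoothly with $\Lambda$'' precise, and your connectedness argument --- continuous image of the connected product $\mathrm{Mat}^>_{2k\times(k+2)}\times\Omega_\Gamma$ under $(\Lambda,C)\mapsto(\Lambda,C\Lambda)$ --- is actually cleaner than the paper's, which only establishes fiberwise connectedness and leaves the conclusion about the total space implicit. The trade-off is that, having declined to use Theorem~\ref{thm:BCFW and codim 1 bounds are smooth}, you must build the submanifold structure from scratch via the explicit parameterization $\Phi$. The one step worth sharpening is your ``standard properness considerations'' upgrading $\Phi$ from an injective immersion to an embedding: the precise mechanism is that the twistor-solutions of Section~\ref{sec:inj}, being expressions in abstract twistors, supply a retraction $g(\Lambda,Y)=(\Lambda,\mathcal{F}(\Gamma)(\Lambda,Y))$ which by Observation~\ref{obs:pos rad is smooth} is jointly smooth in $(\Lambda,Y)$ on an open neighborhood of the image (not merely slice-by-slice), and an injective immersion admitting a smooth retraction on a neighborhood of its image is an embedding. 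State that explicitly and the argument is complete.
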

\begin{proof}
    \(\mathrm{Mat}_{2k\times(k+2)}^>\) is clearly a connected smooth manifold. By Proposition~\ref{prop:zerolocus submfld}, \(\mathbf Y_\Lambda^k\) are smooth manifolds. As \(\mathbf Y_\Lambda^k\) clearly change smoothly with \(\Lambda\),  \(\mathbf Y^k\) is a smooth manifold as well.

    By Theorem~\ref{param bij}, \(\Omega_{\Gamma}\) are connected, and thus so are \( \{\Lambda\}\times \widetilde \Lambda(\Omega_{\Gamma})\). By Theorem~\ref{thm:ampli diffeo}, they are smooth manifolds. By Theorem~\ref{thm:BCFW and codim 1 bounds are smooth}, \(\widetilde \Lambda(\Omega_{\Gamma})\subset \mathbf Y^k_\Lambda\) are smooth submanifolds, thus \(Y_\Gamma^k\subset Y^k\) are smooth connected submanifolds.  
\end{proof}
This construction imitates the universal amplituhedron defined in \cite{universal}. Our strategy would now be to apply Lemma~\ref{lem:sign is const} on \(\mathbf Y_{\Gamma_0}^k\subset \mathbf Y^k\) to show it is enough to show the assumption is satisfied for one \(\Lambda\in \mathrm{Mat}^>_{2k\times(k+2)}\) for each \(k\). We would then use the Lemma and compute a single case for \(k=4\) to prove the assumption holds for all cases of \(k=4\) and \(\Lambda\in \mathrm{Mat}^>_{8\times6}\) in Proposition~\ref{prop:k=4 sep sign}. We then continue and use induction on \(k\) to deduce the assumption holds for all \(\Lambda\in \mathrm{Mat}^>_{2k\times(k+2)}\) with \(k\geq4\) in Proposition~\ref{prop:sep sign}. Let us begin:

\begin{lem}
\label{lem:sign is const}
     Let \(Y\) be a smooth manifold of dimension \(n\), \(S_\epsilon\in C^1(Y)\) for \(\epsilon =\pm\), and \(Y_0\subset Y\) a smooth connected submanifold of \(Y\) of codimension \(1\). Write \(\evalat{TY}{Y_0}\defeq \iota^*TY,\), where $\iota:Y_0\hookrightarrow Y$ is the embedding. Suppose the following assumptions hold:
    \begin{enumerate}
        \item \(\evalat{S_\epsilon}{Y_0}=0\) for \(\epsilon=\pm\).
        \item There exist  \(\mathbf u_\epsilon\in \Gamma^0 (\evalat{TY}{Y_0})\) continuous sections of \(\evalat{TY}{Y_0}\), such that \(\forall y\in Y_0\),\\ \(\mathrm d S_\epsilon(\mathbf u_\epsilon (y))>0\) for \(\epsilon=\pm\).
    \end{enumerate}
    Then we have that 
        \begin{enumerate}
        \item For each \(y\in Y_0\), \(\frac{\mathrm d S_+}{\mathrm d S_-}\) is constant on \(T_{y}Y\setminus T_y Y_0\). 
        \item \(\frac{\mathrm d S_+}{\mathrm d S_-}\) has a constant sign on \((\evalat{TY}{Y_0})\setminus T Y_0\).
    \end{enumerate}
\end{lem}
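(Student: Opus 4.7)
The plan is to observe that the two conclusions are essentially pure linear algebra plus a connectedness argument: the codimension-one assumption forces $dS_+$ and $dS_-$ to be pointwise proportional on $Y_0$, and then the continuity afforded by the sections $\mathbf u_\pm$ propagates the sign of the proportionality constant along the connected manifold $Y_0$.

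First I would prove the pointwise proportionality. Since $S_\epsilon$ vanishes identically on $Y_0$, its differential $dS_\epsilon|_y$ vanishes on $T_y Y_0$ for every $y\in Y_0$. Hypothesis (2) gives $dS_\epsilon(\mathbf u_\epsilon(y))>0$, so $dS_\epsilon|_y$ is a nonzero element of the annihilator $(T_y Y_0)^\circ \subset T_y^*Y$. Because $Y_0$ has codimension one in $Y$, this annihilator is one–dimensional, so there is a unique nonzero scalar $c(y)\in\mathbb{R}$ with $dS_+|_y = c(y)\,dS_-|_y$. For any $\mathbf v\in T_yY\setminus T_yY_0$ we then have $dS_-(\mathbf v)\neq 0$ and $dS_+(\mathbf v)/dS_-(\mathbf v) = c(y)$, which is the first conclusion.

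For the second conclusion, I would produce an explicit continuous formula for $c$. Since $dS_-(\mathbf u_-(y))>0$, in particular $\mathbf u_-(y)\notin T_yY_0$, so we may set
\[
c(y) \;=\; \frac{dS_+(\mathbf u_-(y))}{dS_-(\mathbf u_-(y))}.
\]
Both numerator and denominator are continuous functions of $y\in Y_0$ (the differentials $dS_\pm$ are continuous since $S_\pm\in C^1(Y)$, and $\mathbf u_-$ is a continuous section by hypothesis), and the denominator is nonzero by hypothesis (2). Hence $c:Y_0\to\mathbb{R}\setminus\{0\}$ is continuous, and since $Y_0$ is connected, $c$ has constant sign on $Y_0$. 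Combined with the first conclusion, this shows $dS_+/dS_-$ has a constant sign on $(\evalat{TY}{Y_0})\setminus TY_0$.

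There is no real obstacle in this argument; the only point requiring care is verifying that each $dS_\epsilon|_y$ is actually nonzero so that the ratio is well-defined, which is precisely what the hypothesis on $\mathbf u_\epsilon$ guarantees.
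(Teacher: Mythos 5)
Your proof is correct and takes essentially the same route as the paper: the first conclusion is the linear-algebra observation that $dS_\pm|_y$ both lie in the one-dimensional annihilator of $T_yY_0$ (the paper phrases this via a direct-sum decomposition $T_yY = V\oplus T_yY_0$, which is equivalent), and the second conclusion uses the same continuous section formula together with connectedness of $Y_0$, where the paper argues by contradiction via the intermediate value property and you argue directly that the continuous map $c$ lands in $\mathbb{R}\setminus\{0\}$.
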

\begin{proof}
      For the first conclusion, let us fix some \(y\in Y_0\). We claim that \(\frac{\mathrm d S_+(\mathbf v)}{\mathrm d S_-(\mathbf v)}\) is constant on \( T_{y}Y\setminus T_{y}Y_0\) for \(y\in Y_0\). We have that \(Y_0\) is a smooth connected submanifolds of \(Y\) of codimension \(1\). Thus, we have that \(T_{y}Y = V\oplus T_{y}Y_0\), where \(V \defeq \mathrm{Span} \{\mathbf v\}\) for some \(0\neq \mathbf v\in T_y Y\setminus T_y Y_0\). Take  \(\mathbf w\in T_{y}Y\setminus T_{y}Y_0\). We have that \(\mathbf w =a \mathbf  v + b \mathbf u\), with \(\mathbf u\in T_{x_0}Y_0\) and \(a\neq 0\). Since by assumption \(1\), \(\evalat{S_\pm}{Y_0}=0\), we have that \(\mathrm d S_\pm(\mathbf u) = 0\). Therefore \(\frac{\mathrm d S_+(\mathbf v)}{\mathrm d S_-(\mathbf v)}=\frac{\mathrm d S_+(\mathbf w)}{\mathrm d S_-(\mathbf w)}\) by linearity. Meaning that \(\frac{\mathrm d S_+(\mathbf w)}{\mathrm d S_-(\mathbf w)}\) is constant \(\forall \mathbf w\in T_{y}Y\setminus T_{y}Y_0\), proving the first part of the claim. 

    For the second conclusion, write \(\sigma:Y_0\to \mathbb R\) by \(\sigma(y) \defeq \frac{\mathrm d S_+(\mathbf u_+(y))}{\mathrm d S_-(\mathbf u_+(y))}=\frac{\mathrm d S_+(\mathbf u_-(y))}{\mathrm d S_-(\mathbf u_-(y))}\), which is continuous as \(S_\pm\in C^1(Y)\) and \(\mathbf u_\pm(y)\) are continuous. By assumption \(2\), \(\mathrm d S_-(\mathbf u_-(y)) \) does not vanish, and thus \(\sigma\) is well-defined. We claim that the sign of \(\sigma\) is constant on \(Y_0\). If it is not, since \(Y_0\) is connected, \(\sigma\) must vanish for some \(x\in Y_0\). This means \(\mathrm d S_+(\mathbf u_+(y)) = 0\), which is impossible by assumption \(2\). We conclude \(\sigma(y)\) has constant sign on \(Y_0\), and thus that  \(\frac{\mathrm d S_+}{\mathrm d S_-}\) has a constant sign on \(\left(\evalat{TY}{Y_0}\right)\setminus T Y_0\), proving the second part of the claim.
\end{proof}
Now for the base case of the induction:
\begin{prop}
\label{prop:k=4 sep sign}
     Let \(\Delta = (\Gamma_+,\Gamma_0,\Gamma_-)\) be the only boundary triplet of \((k=4)\)-OG graphs, as seen in Figures~\ref{fig:BCFWk4} and~\ref{fig:spider}. Let \(v_\pm\) be the boundary vertex in \(\Gamma_\pm\) respectively. Write  \(\mathbf Y^4_\epsilon \defeq\mathbf Y^4_{\Gamma_\epsilon}\) for \(\epsilon\in\{+,0,-\}\). Then there exist \(S_\pm\) vertex-separators of \(v_\pm\) respectively, such that \(\frac{\mathrm d S_+}{\mathrm d S_-}<0\) on \(T_{y}\mathbf Y^4\setminus T_y \mathbf Y^4_0\) for all \(y\in Y_0\).
\end{prop}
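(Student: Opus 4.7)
The plan is to reduce the proposition, via Lemma~\ref{lem:sign is const} and the connectedness of $\mathbf{Y}^4_0$, to a single sign computation at one convenient point of the $k=4$ amplituhedron, and then carry out that computation in an explicit parameterization.

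First, I would set up the hypotheses of Lemma~\ref{lem:sign is const} with $Y = \mathbf{Y}^4$ and $Y_0 = \mathbf{Y}^4_0$. By Observation~\ref{obs:Y connected smooth submanifold}, $\mathbf{Y}^4_0$ is a smooth connected submanifold of $\mathbf{Y}^4$ of codimension one. For each $\Lambda \in \mathrm{Mat}^>_{8\times 6}$, Observation~\ref{obs:triplet sections} produces vertex-separators $S_\pm$ of $v_\pm$ and continuous tangent sections $\mathbf{v}_\pm$ along $\widetilde{\Lambda}(\Omega_{\Gamma_0})$ such that $S_\pm$ vanish on $\widetilde{\Lambda}(\Omega_{\Gamma_0})$ and $\mathrm{d}(S_\epsilon \circ \widetilde{\Lambda})(\mathbf{v}_\epsilon) > 0$ for $\epsilon = \pm$. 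Since the $S_\pm$ are polynomial expressions in twistors independent of $\Lambda$, and the $\mathbf{v}_\pm$ come from the canonical parameterization (Observation~\ref{obs: canonical para codim one}), which varies smoothly with $\Lambda$, they assemble into global $C^1$ functions on $\mathbf{Y}^4$ and continuous sections $\mathbf{u}_\pm$ of $T\mathbf{Y}^4|_{\mathbf{Y}^4_0}$ verifying both hypotheses of the lemma. Its conclusion is that $\frac{\mathrm{d}S_+}{\mathrm{d}S_-}$ has constant sign on $(T\mathbf{Y}^4|_{\mathbf{Y}^4_0}) \setminus T\mathbf{Y}^4_0$ and is constant on each fiber $T_y\mathbf{Y}^4 \setminus T_y\mathbf{Y}^4_0$.

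Next, I would verify the sign is negative at a single, convenient triple. Fix a simple strongly positive $\Lambda_0 \in \mathrm{Mat}^>_{8\times 6}$, take $y_0 = \widetilde{\Lambda}_0(\varphi_{\Gamma_+}(0, \alpha_2, \ldots, \alpha_5))$ for generic $\alpha_j \in (0, \pi/2)$ using the extended canonical parameterization of Observation~\ref{obs: canonical para codim one}, and set $\mathbf{w} = \mathbf{v}_+(y_0)$, the tangent vector in the direction of the angle associated to $v_+$. Then $\mathrm{d}(S_+\circ \widetilde{\Lambda}_0)(\mathbf{w}) > 0$ by the construction of $S_+$, so the content is to check $\mathrm{d}(S_-\circ \widetilde{\Lambda}_0)(\mathbf{w}) < 0$, i.e.\ that $S_-$ decreases as we move off the spider boundary into $\Omega_{\Gamma_+}$. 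Using Definition~\ref{def:vertex sep} together with the arc-sequence realizing $v_\pm$ as $4$-native vertices, $S_\pm$ become explicit promotions of Mandelstam variables $S_{\{i,i+1,i+2\}}$, and the desired derivative reduces to a finite algebraic expression in twistors of $\widetilde{\Lambda}_0(\Omega_{\Gamma_+})$ near $y_0$.

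The main obstacle is this concrete computation in $\Gr{4}{6}$. To make it tractable I would exploit the rotational symmetry of the spider graph $\Gamma_0$: by Corollary~\ref{the two sides coro}, interchanging $\Gamma_+ \leftrightarrow \Gamma_-$ corresponds to a cyclic relabeling of the external indices, so $S_-$ can be taken as an appropriate $\mathrm{Cyc}$-image of $S_+$, thereby reducing the check to understanding a single Mandelstam variable and its leading behavior along one canonical angle. Choosing $\Lambda_0$ so that the twistors at $y_0$ take particularly simple values---for instance, arranging $\twist{Y}{i}{j}_{\Lambda_0} = 0$ for pairs outside the relevant $4$-arc support, which is consistent with being on the spider boundary by Proposition~\ref{prop:S orth}---collapses the promoted Mandelstam expansion to a handful of nonzero terms. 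At this point the sign of $\frac{\mathrm{d}}{\mathrm{d}\alpha_1}\big|_{\alpha_1=0}(S_- \circ \widetilde{\Lambda}_0 \circ \varphi_{\Gamma_+})$ can be read off directly, with the overall normalization of $S_\pm$ fixed by the positivity assertion of Proposition~\ref{prop:sep 4-native}.
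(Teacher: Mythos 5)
Your high-level framework matches the paper exactly: you invoke Observations~\ref{obs:triplet sections} and~\ref{obs:Y connected smooth submanifold} to apply Lemma~\ref{lem:sign is const}, reducing the proposition to a single sign check at one point and one tangent direction. That part of the plan is sound and is precisely what the paper does.

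The gap is in how you propose to carry out the one-point computation. Two of your proposed shortcuts would not work. First, the observation that a cyclic rotation of the spider graph swaps $\Gamma_+ \leftrightarrow \Gamma_-$ (and $S_+ \leftrightarrow S_-$) is correct but cannot determine the sign of $\frac{\mathrm d S_+}{\mathrm d S_-}$: it only tells you the quantity is equivariant under the symmetry, which is consistent with either sign, and Lemma~\ref{lem:sign is const} already supplies the constancy you would get from it. Second, and more seriously, the claim that one can ``arrange $\twist{Y}{i}{j}_{\Lambda_0}=0$ for pairs outside the relevant $4$-arc support, which is consistent with being on the spider boundary by Proposition~\ref{prop:S orth}'' is not justified. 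On the spider boundary, Proposition~\ref{prop:S orth} forces certain \emph{Mandelstam variables} to vanish, but not individual twistors; the twistors $\twist{Y}{i}{j}$ are Pl\"ucker coordinates of $\lambda = C^\perp \cap \Lambda^\intercal$, not of $C$, and Proposition~\ref{prop:Consecutive Twistors} in fact forces several of them to be strictly nonzero. There is no freedom to kill off-support twistors by a choice of $\Lambda_0$, so the ``handful of nonzero terms'' does not materialize, and the decisive sign is left uncomputed. You also lean on ``the overall normalization of $S_\pm$ fixed by the positivity assertion of Proposition~\ref{prop:sep 4-native}'', but that proposition only controls $\mathrm{d} S_\epsilon$ in the direction into $\Omega_{\Gamma_\epsilon}$; it says nothing about the sign of $\mathrm{d} S_-$ in the direction into $\Omega_{\Gamma_+}$, which is exactly the quantity whose sign you need.

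The paper closes this gap by a completely explicit calculation: it fixes the Vandermonde matrix $\Lambda=\{i^{j-1}\}$, builds a concrete one-parameter family $C(t)\in\OGnon{4}{8}$ by a sequence of $\mathrm{Inc}$ and $\mathrm{Rot}$ moves with $C(0)\in\Omega_{\Gamma_0}$, obtains $\mathrm{d}(S_-\circ\widetilde\Lambda)(\mathbf u)>0$ from Proposition~\ref{prop:sep 4-native}, and then directly expands $S_+=\mathrm{Arc}_{4,4}(S_{\{2,3,4\}})$ (as in Example~\ref{exm:s234}) to find $S_+(\Lambda,\widetilde\Lambda(C(t)))= -10{,}617{,}209{,}487{,}360{,}000\,t + O(t^2)$, hence $\mathrm{d}(S_+\circ\widetilde\Lambda)(\mathbf u)<0$. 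So to complete your proof you would need to replace the proposed simplification with some such concrete calculation.
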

We prove this using a direct calculation. See Appendix~\ref{apx:k=4 sep sign} for the calculation together with the proof of Proposition~\ref{prop:k=4 sep sign}. Now, we would use the base case and show the induction step to prove the claim for general boundary triplets with \(k\geq4\):
\begin{prop}
\label{prop:sep sign}
     Let \(\Delta = (\Gamma_+,\Gamma_0,\Gamma_-)\) be a boundary triplet of \(k\)-OG graphs. Let \(v_\pm\) be the boundary vertex in \(\Gamma_\pm\) respectively. Write  \(\mathbf Y^k_\epsilon \defeq\mathbf Y^k_{\Gamma_\epsilon}\) for \(\epsilon\in\{+,0,-\}\). Then there exist \(S_\pm\) vertex-separators of \(v_\pm\) respectively, such that \(\frac{\mathrm d S_+}{\mathrm d S_-}<0\) on \(T_{y}\mathbf Y^k\setminus T_y \mathbf Y^k_0\) for all \(y\in \mathbf Y^k_0\).
\end{prop}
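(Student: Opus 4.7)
The plan is to proceed by induction on $k$, with the base case $k = 4$ given directly by Proposition~\ref{prop:k=4 sep sign}. For the induction step, assuming the statement for $k-1$, I would invoke Observation~\ref{obs:boundary triplet back prom} to write $\Delta = \mathrm{Arc}_{4,2l}(\Delta_0)$ for some $l$, where $\Delta_0 = (\tilde\Gamma_+, \tilde\Gamma_0, \tilde\Gamma_-)$ is a boundary triplet of $(k-1)$-OG graphs with boundary vertices $\tilde v_\pm$ satisfying $v_\pm = \mathrm{Arc}_{4,2l}(\tilde v_\pm)$. Applying the induction hypothesis to $\Delta_0$ yields vertex-separators $\tilde S_\pm$ of $\tilde v_\pm$ with $\frac{\mathrm d \tilde S_+}{\mathrm d \tilde S_-} < 0$ on $T_{\tilde y}\mathbf Y^{k-1} \setminus T_{\tilde y}\mathbf Y^{k-1}_0$ for every $\tilde y \in \mathbf Y^{k-1}_0$. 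I would set $S_\pm \defeq \mathrm{Arc}_{4,2l}(\tilde S_\pm)$; by Definition~\ref{def:vertex sep}, these are vertex-separators for $v_\pm$.

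Next, I would use Lemma~\ref{lem:sign is const} to reduce the verification to a single test point. By Proposition~\ref{prop:sep 4-native}, the functions $S_\pm$ vanish on $\mathbf Y^k_0$, and Observation~\ref{obs:triplet sections} produces continuous tangent sections $\mathbf u_\pm$ on $\mathbf Y^k_0$ with $\mathrm d S_\epsilon(\mathbf u_\epsilon) > 0$. Since $\mathbf Y^k_0 \subset \mathbf Y^k$ is a connected smooth submanifold by Observation~\ref{obs:Y connected smooth submanifold}, Lemma~\ref{lem:sign is const} applies and shows that $\frac{\mathrm d S_+}{\mathrm d S_-}$ has constant sign on $(\evalat{T\mathbf Y^k}{\mathbf Y^k_0}) \setminus T\mathbf Y^k_0$. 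Thus it suffices to exhibit one point $y \in \mathbf Y^k_0$ and one direction $\mathbf v \in T_y\mathbf Y^k \setminus T_y\mathbf Y^k_0$ where the ratio is negative.

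For the transfer step I would argue as follows. Fix any $\tilde y_0 \in \mathbf Y^{k-1}_0$ and any $\tilde{\mathbf v}_0 \in T_{\tilde y_0}\mathbf Y^{k-1} \setminus T_{\tilde y_0}\mathbf Y^{k-1}_0$, so by induction $\frac{\mathrm d \tilde S_+(\tilde{\mathbf v}_0)}{\mathrm d \tilde S_-(\tilde{\mathbf v}_0)} < 0$. For any choice of positive angles $\theta = (\theta_1, \theta_2)$, the operation $\Phi_\theta \defeq \mathrm{Arc}_{4,2l}(\theta)$ is a smooth map $\mathbf Y^{k-1} \to \mathbf Y^k$ that sends $\mathbf Y^{k-1}_0$ into $\mathbf Y^k_0$, because the $\mathrm{Arc}$ move commutes with the limit operation defining the common codimension-$1$ boundary (Observation~\ref{obs:arc of codim 1 is codim 1}). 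Setting $y \defeq \Phi_\theta(\tilde y_0)$ and $\mathbf v \defeq \mathrm d\Phi_\theta(\tilde{\mathbf v}_0)$, Proposition~\ref{prop:com diag} applied to $\mathrm{Arc}_{4,2l}$ gives the pullback identity $S_\epsilon \circ \Phi_\theta = \tilde S_\epsilon$ on the image, and differentiating yields $\mathrm d S_\epsilon(\mathbf v) = \mathrm d \tilde S_\epsilon(\tilde{\mathbf v}_0)$ for $\epsilon = \pm$. Since these are nonzero while $S_\pm$ vanish on $\mathbf Y^k_0$, one has $\mathbf v \notin T_y\mathbf Y^k_0$, and $\frac{\mathrm d S_+(\mathbf v)}{\mathrm d S_-(\mathbf v)} = \frac{\mathrm d \tilde S_+(\tilde{\mathbf v}_0)}{\mathrm d \tilde S_-(\tilde{\mathbf v}_0)} < 0$. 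Combined with the constancy of the sign, this gives the desired global inequality.

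The main obstacle is making the pullback identity rigorous enough to differentiate. The $\mathrm{Arc}$ move acts asymmetrically on $C$ (where it adds angles) and on $\Lambda$ (where it is a one-sided inverse via $\mathrm{Inc}^{-1}$), so one must verify that $\Phi_\theta$ is a bona fide smooth map $\mathbf Y^{k-1} \to \mathbf Y^k$ and that Proposition~\ref{prop:com diag} gives equality of scalar functions along its image rather than merely pointwise values. Once this bookkeeping is secured, the induction collapses into the combination of the base case, Lemma~\ref{lem:sign is const}, and the pushforward identity above.
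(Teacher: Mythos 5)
Your high-level plan—base case from Proposition~\ref{prop:k=4 sep sign}, reduction to a single test point via Lemma~\ref{lem:sign is const}, and transfer of the derivative inequality across the $\mathrm{Arc}_{4,2l}$ step—matches the paper's proof exactly, and your identification of $S_\pm = \mathrm{Arc}_{4,2l}(\tilde S_\pm)$ as the vertex-separators is also what the paper uses. However, the transfer step has a genuine gap that you correctly flag at the end but do not resolve, and the gap is more than bookkeeping: there is no bona fide map $\Phi_\theta:\mathbf Y^{k-1}\to\mathbf Y^k$. A point of $\mathbf Y^{k-1}$ carries a matrix $\Lambda^{k-1}\in\mathrm{Mat}^>_{(2k-2)\times(k+1)}$, a point of $\mathbf Y^k$ carries $\Lambda^k\in\mathrm{Mat}^>_{2k\times(k+2)}$, and the only move available on $\Lambda$ is $\mathrm{Inc}^{-1}$ (and hence $\mathrm{Arc}^{-1}$), which is a many-to-one \emph{left} inverse (Remark~\ref{rmk:inv inc abuse}). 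There is no canonical lift $\Lambda^{k-1}\mapsto\Lambda^k$, so $\Phi_\theta$ is not defined, and consequently there is no canonical differential $\mathrm d\Phi_\theta$ to push $\tilde{\mathbf v}_0$ forward along; even a local smooth section of the fibration $\mathrm{Arc}_{4,2l}(\theta)^{-1}$ would introduce an arbitrary choice that your argument would then have to be shown independent of.

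The paper sidesteps this by running the construction in the opposite direction: it first \emph{fixes} $\Lambda^{k+1}$, derives $\Lambda^k=\mathrm{Inc}_\ell^{-1}(\Lambda^{k+1})$ canonically, uses the canonical parametrization $\varphi^{k+1}(\alpha_1,\ldots,\alpha_{n+2})=\mathrm{Arc}_{4,\ell}(\alpha_{n+1},\alpha_{n+2})\varphi^k(\alpha_1,\ldots,\alpha_n)$ (Corollary~\ref{coro:graph cell com diag coro}), and then takes the test direction to be specifically the one generated by the angle $\alpha_1$ attached to the boundary vertex. This choice means the $\Lambda$-coordinate is held \emph{constant} along the test curve in both $\mathbf Y^k$ and $\mathbf Y^{k+1}$, and Observation~\ref{obs:triv diff prom} then yields the derivative identity $\mathrm dS^{k+1}_\pm(\mathbf v^{k+1})=\mathrm dS^k_\pm(\mathbf v^k)$ without ever invoking a map between the universal spaces. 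To repair your argument, you should replace "pick $\tilde y_0$ and push forward" with "fix $\Lambda^k$ and the angles $\alpha_2,\ldots,\alpha_{n+2}$, derive the corresponding point $\tilde y_0\in\mathbf Y^{k-1}_0$ and test direction $\tilde{\mathbf v}_0$, and compare derivatives via Observation~\ref{obs:triv diff prom}," and also note that restricting to the $\alpha_1$-direction loses nothing because Lemma~\ref{lem:sign is const} already establishes constancy of the sign over all of $T_y\mathbf Y^k\setminus T_y\mathbf Y^k_0$.
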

\begin{proof}
    We will prove this by induction on \(k\), and use Proposition~\ref{prop:k=4 sep sign} as a base case, as there are only boundary triplets for \(k\geq4\).  

    For the induction step, assume the claim is true for all \(k\geq4\). And let \(\Delta^{k+1} = (\Gamma_+^{k+1},\Gamma_0^{k+1},\Gamma_-^{k+1})\) be a boundary triplet of \((k+1)\)-OG graphs. Let \(v_\pm^{k+1}\) be the boundary vertex in \(\Gamma_\pm^{k+1}\) respectively. Recall that by~\ref{obs:Y connected smooth submanifold}, \(\mathbf Y_0^{k+1}\subset\mathbf  Y^{k+1}\) is a smooth connected submanifold for any \(k\ge3\). We want to find \(S_\pm^{k+1}\) vertex-separators of \(v_\pm^{k+1}\) respectively, such that \(\frac{\mathrm d S_+^{k+1}}{\mathrm d S_-^{k+1}}<0\) on \(T_{y}\mathbf Y^{k+1}\setminus T_y \mathbf Y_0^{k+1}\) for all \(y\in \mathbf Y_0^{k+1}\). By Observations~\ref{obs:triplet sections} and~\ref{obs:Y connected smooth submanifold}, Lemma~\ref{lem:sign is const} applies. Thus, it is enough to show for one point \(y\in \mathbf Y^{k+1}\) and one \(\mathbf u \in T_{y}\mathbf Y^{k+1}\setminus T_y \mathbf Y_0^{k+1}\).

    Fix \(\Lambda ^{k+1}\in\mathrm{Mat}_{(2k+2)\times(k+3)}^>\). \(\Delta^{k+1}\) is a boundary triplet of \((k+1)\)-OG graphs. By Observation~\ref{obs:boundary triplet back prom}, we have \(\Delta^{k+1} =\mathrm{Arc}_{4,\ell}(\Delta^k)\), for \(\Delta^k = (\Gamma^k_+,\Gamma_0^k,\Gamma_-^k)\) a boundary triplet of \(k\)-OG graphs with \(v^k_\pm\) the boundary vertex in \(\Gamma^k_\pm\) respectively, and with \(v^{k+1}_\pm=\mathrm{Arc}_{4,\ell}(v^{k}_\pm)\) respectively. 

     By Observation~\ref{obs:BCFW 4 native}, \(v^{k}_\pm\) are \(4\)-native vertices, thus there exist vertex-separators \(S^k_{\pm}\) of \(v^{k}_\pm\) respectively, such that \(\frac{\mathrm d S_+}{\mathrm d S_-}<0\) on \(T_{x}\mathbf Y^k\setminus T_x \mathbf Y_0^k\) for all \(x\in \mathbf Y_0^k\) by the induction hypothesis.

    Let \(\varphi^k(\alpha_1,...,\alpha_n)\) be the canonical parameterization of \(\Gamma_+^{k}\), with \(\alpha_1\) being the angle associated to \(v_+^{k}\). By Observation~\ref{obs: canonical para codim one}, we have that \(\varphi^k(0,\alpha_2,...,\alpha_n)\) is the canonical parameterization for \(\Gamma_0^k\). Write \(\Lambda^k = \mathrm{Inc}_l^{-1} \Lambda^{k+1}\). By Definition~\ref{def:vertex sep}, we have that \(S^{k+1}_\pm = \mathrm{Arc}_{4,\ell} S_\pm^k \)  are vertex-separators for \(v^{k+1}_\pm\) respectively. By Observation~\ref{obs:BCFW 4 native}, \(v^{k+1}_\pm\) are \(4\)-native vertices, thus  we have that \(\evalat{\frac{\partial}{\partial v_\epsilon^{k+1}}S^{k+1}_\epsilon}{\Gamma_0^{k+1}}\) are positive for \(\epsilon =\pm\) by Proposition~\ref{prop:sep 4-native}.

    By Corollary~\ref{coro:graph cell com diag coro}, we have that \(\varphi^{k+1}(\alpha_1,...,\alpha_{n+2})= \mathrm{Arc}_{4,\ell}(\alpha_{n+1},\alpha_{n+2})\varphi^k(\alpha_1,...,\alpha_n)\) is a parameterization for \(\Gamma^{k+1}_+\). Since by Observation~\ref{obs: canonical para codim one}  \(\varphi^k(0,\alpha_2,...,\alpha_n)\) is the canonical parameterization for \(\Gamma_0^k\), we have that \(\varphi^{k+1}(0,\alpha_2,...,\alpha_{n+2})\) is a parameterization for \(\Gamma^{k+1}_0\). Set some positive angles \(\alpha_2,...,\alpha_{n+2}\). Write 
    \(x^k = (\Lambda^k,\widetilde\Lambda^k(\varphi^k(0,\alpha_2,...,\alpha_n))),  \) and 
    \(x^{k+1} = (\Lambda^{k+1},\widetilde\Lambda^{k+1}(\varphi^{k+1}(0,\alpha_2,...,\alpha_n,\alpha_{n+1},\alpha_{n+2}))). \)
    
    Let \(\mathbf v^\epsilon_+\in T_{x^\epsilon} \mathbf Y^\epsilon \setminus T_{x^\epsilon}\mathbf Y^\epsilon\) for \(\epsilon = k,k+1\) be the directions defined increasing \(\alpha_1\). We will show they do not vanish. Since both \(\evalat{\frac{\partial}{\partial v_+^{\epsilon}}S^{\epsilon}_+}{\Gamma_0^{\epsilon}}\) are positive, we know both \(\evalat{\frac{\partial}{\partial (v_+^{\epsilon})^\omega}S^{\epsilon}_+}{\Gamma_0^{\epsilon}}\) are positive for any perfect orientation \(\omega\) on \(v_+^{\epsilon}\) by Definition~\ref{def:vert der sans ori}. By Definition~\ref{def:vertex derivative}, \(\evalat{\frac{\partial}{\partial (v_+^{\epsilon})^\omega}S^{\epsilon}_+}{\Gamma_0^{\epsilon}}=\mathrm{d}S_+^{\epsilon}(\mathbf v _+^{\epsilon})\) for some orientation \(\omega\). Thus \(\mathrm{d}S_+^{\epsilon}(\mathbf v ^{\epsilon}_+)\) are positive for both \(\epsilon =k,k+1\), and \(\mathbf v ^{\epsilon}_+\) do not vanish.\
    
    By Observation~\ref{obs:triv diff prom}, we have that \(\mathrm{d}S_+^{k+1}(\mathbf v ^{k+1})=\frac{\partial}{\partial\alpha_1}S_+^{k+1} (x^{k+1})=\frac{\partial}{\partial\alpha_1}S_+^{k} (x^{k})=\mathrm{d}S_+^{k}(\mathbf v ^{k}),\)
    and in the same way we have that \(\mathrm{d}S_-^{k+1}(\mathbf v ^{k+1}) = \mathrm{d}S_-^{k}(\mathbf v ^k)\). Meaning that \(\frac{\mathrm d S_+^{k+1}(\mathbf v ^{k+1})}{\mathrm d S_-^{k+1}(\mathbf v ^{k+1})}= \frac{\mathrm d S_+^{k}(\mathbf v ^{k})}{\mathrm d S_-^{k}(\mathbf v ^{k})}<0\). By Lemma~\ref{lem:sign is const}, we have that \(\frac{\mathrm d S_+^{k+1}}{\mathrm d S_-^{k+1}}<0\) on \(T_{y}\mathbf Y^{k+1}\setminus T_y \mathbf Y_0^{k+1}\) for all \(y\in \mathbf Y_0^{k+1}\).
\end{proof}
We can now deduce the claim holds for each specific \(\Lambda\in\mathrm{Mat}_{2k\times(k+2)}^>\), which we need in order to prove Theorem~\ref{thm:local sep}.
\begin{coro}
\label{coro:sep sign}
     Let \(\Delta = (\Gamma_+,\Gamma_0,\Gamma_-)\) be a boundary triplet of \(k\)-OG graphs, and \(\Lambda\in\mathrm{Mat}_{2k\times(k+2)}^>\). Let \(v_\pm\) be the boundary vertex in \(\Gamma_\pm\) respectively. Write \(M \defeq  \widetilde\Lambda(\Omega_{\Gamma_0})\). Then there exist \(S_\pm\) vertex-separators of \(v_\pm\) respectively, such that \(\frac{\mathrm d S_+}{\mathrm d S_-}<0\) on \(T_{y} \mathbf Y^k_\Lambda \setminus T_y M \) for all \(y\in  M\).
\end{coro}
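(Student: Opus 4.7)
The plan is to deduce the corollary from its universal counterpart, Proposition~\ref{prop:sep sign}, by restricting to a single $\Lambda$-fiber. Fix $\Lambda\in\mathrm{Mat}^>_{2k\times(k+2)}$ and let $S_\pm\in\mathcal F$ be the vertex-separators of $v_\pm$ produced by Proposition~\ref{prop:sep sign}, so that $dS_+/dS_- < 0$ on $T_z\mathbf Y^k\setminus T_z\mathbf Y^k_{\Gamma_0}$ for every $z\in\mathbf Y^k_{\Gamma_0}$. I take these same $S_\pm$, viewed as functions on $\mathbf Y^k_\Lambda$ via the canonical embedding $\iota_\Lambda\colon\mathbf Y^k_\Lambda\hookrightarrow\mathbf Y^k$, $y\mapsto(\Lambda,y)$, to be the vertex-separators in the statement of the corollary; they remain vertex-separators of $v_\pm$ in the sense of Definition~\ref{def:vertex sep} because they are the same Mandelstam expressions in abstract twistors.

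Now let $y\in M$ and $\mathbf v\in T_y\mathbf Y^k_\Lambda\setminus T_yM$. By naturality of the differential under pullback,
\[
\frac{d(\iota_\Lambda^*S_+)(\mathbf v)}{d(\iota_\Lambda^*S_-)(\mathbf v)}=\frac{dS_+((\iota_\Lambda)_*\mathbf v)}{dS_-((\iota_\Lambda)_*\mathbf v)},
\]
so it suffices to show that $(\iota_\Lambda)_*\mathbf v\in T_{(\Lambda,y)}\mathbf Y^k\setminus T_{(\Lambda,y)}\mathbf Y^k_{\Gamma_0}$: once this is established, Proposition~\ref{prop:sep sign} will immediately yield a strictly negative ratio. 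Since $\iota_\Lambda$ is a smooth embedding (using that $\mathbf Y^k_\Lambda$ is a smooth submanifold of $\mathbf Y^k$ via Observation~\ref{obs:Y connected smooth submanifold}), the pushforward $(\iota_\Lambda)_*$ is injective, so the only thing I need to verify is the tangent-space identity
\[
(\iota_\Lambda)_*T_y\mathbf Y^k_\Lambda\;\cap\;T_{(\Lambda,y)}\mathbf Y^k_{\Gamma_0}\;=\;(\iota_\Lambda)_*T_yM.
\]

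This identity is transparent from the definitions: the left-hand side is the tangent space at $(\Lambda,y)$ to the intersection $(\{\Lambda\}\times\mathbf Y^k_\Lambda)\cap\mathbf Y^k_{\Gamma_0}=\{\Lambda\}\times M$, using that $\mathbf Y^k_{\Gamma_0}$ is fibered over $\mathrm{Mat}^>_{2k\times(k+2)}$ with fiber $\widetilde{\Lambda'}(\Omega_{\Gamma_0})$ over $\Lambda'$, and this is exactly the right-hand side. I do not anticipate any genuine obstacle: the corollary is a routine ``fiberwise specialization'' of the universal inequality, and the only substantive point is the tangent-intersection identity, which follows directly from the product structure of $\mathbf Y^k$ over the parameter space $\mathrm{Mat}^>_{2k\times(k+2)}$.
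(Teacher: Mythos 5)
Your proof is correct and follows the paper's (terse, one-line) approach of restricting Proposition~\ref{prop:sep sign} to the slice $\{\Lambda\}\times\mathbf Y^k_\Lambda$ and pulling back the vertex-separators along the inclusion $\iota_\Lambda$. The tangent-intersection identity you isolate is indeed the crux, and your fibration remark is the right justification --- though it would be cleaner to say explicitly that the projection $\mathbf Y^k_{\Gamma_0}\to\mathrm{Mat}^>_{2k\times(k+2)}$ is a submersion with fiber $\{\Lambda\}\times M$ over $\Lambda$ (rather than appealing to a ``product structure,'' since the fibers $\widetilde{\Lambda'}(\Omega_{\Gamma_0})$ vary with $\Lambda'$), so that any tangent vector to $\mathbf Y^k_{\Gamma_0}$ with vanishing $\Lambda$-component is automatically tangent to the fiber $M$.
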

\begin{proof}
    That is immediate from Proposition~\ref{prop:sep sign} and Definition~\ref{def:Ylk} and~\ref{def:Ykg} by holding \(\Lambda\) constant.
\end{proof}
    \begin{proof}[Proof of Theorem~\ref{thm:local sep}]
        We will show Lemma~\ref{lem:local sep form connectedness} applies.

        We have that \(f=\widetilde \Lambda\) is a smooth map that is defined in an open neighborhood \(X \supset \OGnon{k}{2k}\) in  \(\OG{k}{2k}\) by Proposition~\ref{prop:extends to a nbhd}. By Theorem~\ref{thm:BCFW and codim 1 bounds are smooth}, we have that \(X_+ = \Omega_{\Gamma_+},X_0 = \Omega_{\Gamma_0},X_- = \Omega_{\Gamma_-}\) are smooth submanifolds, with \(X_0\) in the boundary of both \(X_\pm\). By Theorem~\ref{param bij}, \(X_+,X_0,X_-\) are connected and of dimensions \(2k-3,2k-4,2k-3\) respectively. By Proposition~\ref{prop:limits commute with para}, both \(X_\pm \cup X_0\) are smooth manifolds with boundary. \(Y = \widetilde\Lambda(X)\) is of dimension \(n=2k-3\) by Theorem~\ref{dimension thm}.

        By Theorem~\ref{thm:ampli diffeo}, \(\evalat{f}{X_\epsilon}\) is a diffeomorphism onto its image for \(\epsilon\in\{+,0,-\}\), thus injective, hence Assumption \(1\) of  Lemma~\ref{lem:local sep form connectedness} holds.

        By Corollary~\ref{coro:sep sign}, there exist \(S_\pm\) vertex-separators of \(v_\pm\) respectively, such that \(\frac{\mathrm d S_+}{\mathrm d S_-}<0\) on \(T_{y}Y\setminus T_y Y_0\) for all \(y\in Y_0\), implying assumption \(3\) of Lemma~\ref{lem:local sep form connectedness}.
        
        By Observation~\ref{obs:triplet sections}, we have that there exist \(\mathbf v_\epsilon\in \Gamma^0 (\evalat{TY}{Y_0})\), continuous sections of the tangent bundle at \(\Omega_{\Gamma_0}\) pointing towards the cells \(\Omega_{\Gamma_\pm}\) respectively, such that:
    \begin{enumerate}
        \item \(S_\epsilon\) is positively-radical and zero on \(\Gamma_0\).
        \item \(\mathrm{d}(S_\epsilon\circ \widetilde \Lambda)(\mathbf v_\epsilon(x))>0 \) for any \(x\in \widetilde\Lambda(\Omega_{\Gamma_0})\) .
    \end{enumerate}
        By Observation~\ref{obs:pos rad is smooth}, \(S_\epsilon\) being positively radical on \(\Gamma_0\) means they are smooth in a neighborhood containing \(\widetilde \Lambda(\Omega_{\Gamma_0})\). Meaning that  \(S_\epsilon\) satisfy assumption \(2\) of  Lemma~\ref{lem:local sep form connectedness}. Therefore Lemma~\ref{lem:local sep form connectedness} applies. We conclude that there is an open neighborhood \(U\supset \Omega_{\Gamma_0}\) in \(\OG{k}{2k}\), such that \(U\cap \Omega_\Delta\) is mapped injectively by the amplituhedron map \(\widetilde \Lambda\).
        \end{proof}

\section{Strong Positivity and Non-Negative 
Mandelstam Variables}\label{sec:immanant_pos}

In this section, we find a set of matrices \(\Lambda\in \mathrm{Mat}^>_{k\times2k}\) such that all Mandelstam variables \(S_I\), for cyclically consecutive subsets $I\subseteq[2k]$, are non-negative on \(\mathcal{O}_k(\Lambda)\) (see Theorem \ref{nonneg mandelstam thm}). We call these matrices \emph{strongly positive matrices} and denote them \(\mathcal L^>_k\) (see Definition~\ref{def:strongly positive} for the precise definition). Our approach is inspired by a similar result proved by Galashin \cite{origami} for the momentum amplituhedron, defined by Damgaard, Ferro, Lukowski, and Parisi in \cite{momAmpli}. 

\begin{thm}
\label{nonneg mandelstam thm}
    For \(Y\in \mathcal O_k(\widetilde\Lambda)\) with \(\Lambda\in \mathcal L^>_k\), and \(I\subset[2k]\) with \(1<\left| I\right|<2k-1\) such that \(I\) is cyclically consecutive, we have that \(S_I(\Lambda, Y)\geq0\).
\end{thm}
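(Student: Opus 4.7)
My plan is to recast $S_I$ as a geometric quantity on the 2-plane $\lambda = \lambda_\Lambda(Y)$, exploit momentum conservation to get a sign-flip duality between $I$ and $[2k]\setminus I$, and then rule out the bad (definite) case using strong positivity of $\Lambda$. The first moves are cosmetic reductions: by the cyclic invariance of the Mandelstam variables (Proposition~\ref{s cyc}), assuming $\mathcal{L}^>_k$ is (designed to be) $\mathrm{Cyc}$-invariant, together with $S_I = S_{[2k]\setminus I}$ (Proposition~\ref{prop:S orth}), it suffices to treat $I = \{1,2,\ldots,m\}$ with $2 \leq m \leq k$.

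The central calculation is the following. Pick a basis $u,v$ of $\lambda$, and let $\eta_I$ denote the diagonal $2k\times 2k$ matrix obtained from $\eta$ by zeroing out the rows and columns outside $I$. By Proposition~\ref{prop:twist pluckers}, $\langle Y\,i\,j\rangle = \Delta_{\{i,j\}}(\lambda) = u_iv_j - u_jv_i$, and a direct expansion, regrouping by parities of $i+j$ and using the convention $\eta_{ii} = (-1)^{i+1}$, yields
\[
S_I(\Lambda, Y) \;=\; (u\,\eta_I\,v^\intercal)^2 \;-\; (u\,\eta_I\,u^\intercal)\,(v\,\eta_I\,v^\intercal).
\]
Thus $-S_I$ is the Gram determinant of the symmetric bilinear form $\eta_I|_\lambda$, so $S_I \ge 0$ if and only if $\eta_I|_\lambda$ is not definite (signature $(1,1)$ or degenerate). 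Moreover, because $\eta = \eta_I + \eta_{[2k]\setminus I}$ and momentum conservation (Proposition~\ref{prop:mom cons}) forces $\eta|_\lambda = 0$, one obtains $\eta_I|_\lambda = -\eta_{[2k]\setminus I}|_\lambda$, which matches $S_I = S_{[2k]\setminus I}$ and reduces the task to ruling out, say, positive-definiteness of $\eta_I|_\lambda$.

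To rule out definiteness, I would exhibit, for every $\lambda = C^\perp\cap\Lambda^\intercal$ with $C \in \OGnon{k}{2k}$ and $\Lambda\in\mathcal{L}^>_k$, a vector $w \in \lambda$ with $w\,\eta_I\,w^\intercal \le 0$. The natural candidates for $w$ are signed linear combinations of rows of $\Lambda^\intercal$ with coefficients dictated by the interval $I$ (or, equivalently, the pullbacks under $\lambda_\Lambda$ of distinguished vectors in $\Lambda^\intercal$). Expanding $w\,\eta_I\,w^\intercal$ in such a basis writes it as a signed sum of $(k+2)\times(k+2)$ minors of $\Lambda$, and strong positivity of $\Lambda$---designed in the spirit of Galashin's immanant positivity \cite{origami}---is precisely the hypothesis that controls the sign of these immanant-like combinations.

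The main obstacle, and where the genuine content of this theorem lies, is this last step: producing the witness $w$ and verifying its sign directly from the definition of $\mathcal{L}^>_k$. I expect the argument to proceed by combining two ingredients. First, induction on $k$ using the $\mathrm{Arc}$-move transformations of $S_I$ (Propositions~\ref{s rot},~\ref{s inc},~\ref{s cyc}): for $I$ with $|\{i,i+1\}\cap I| \in\{0,2\}$, the Mandelstam is exactly preserved or transported, so attaching an external arc to a smaller cell reduces the question to a lower-$k$ Mandelstam inequality plus a boundary check. Second, a continuity/deformation argument inside the open connected set $\mathcal{L}^>_k$: since $S_I$ varies continuously with $(\Lambda, Y)$, a sign change would force vanishing, which (together with the non-degeneracy of the amplituhedron map on its interior, Theorem~\ref{thm:ampli diffeo}) can be traced to an orthitroid-cell boundary where $S_I$ is forced to be $\ge 0$ by Proposition~\ref{prop:S orth}; hence verifying non-negativity at a single convenient basepoint $\Lambda_0 \in \mathcal{L}^>_k$ suffices. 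This is the same high-level strategy Galashin employs in the momentum-amplituhedron setting, and the rigidity afforded by momentum conservation together with the explicit Gram-determinant formula above should make each step go through with essentially only bookkeeping modifications.
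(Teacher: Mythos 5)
Your Gram-determinant reformulation is correct and elegant: with $u,v$ a basis of $\lambda$ and $\eta_{ii}=(-1)^{i+1}$, a short expansion of $\sum_{\{i,j\}\subset I}(-1)^{i-j+1}(u_iv_j-u_jv_i)^2$ indeed gives
\[
S_I(\Lambda,Y)=(u\,\eta_I\,v^\intercal)^2-(u\,\eta_I\,u^\intercal)(v\,\eta_I\,v^\intercal),
\]
so $S_I\ge 0$ iff $\eta_I|_\lambda$ is indefinite or degenerate, and momentum conservation gives the expected $\eta_I|_\lambda=-\eta_{[2k]\setminus I}|_\lambda$. This is a genuinely different packaging from the paper, which proves the theorem directly by identifying the ABJM Mandelstam with the momentum-amplituhedron one, expanding it via Temperley--Lieb immanants (Theorem~\ref{s immanant thm}) as $S_{I(i,j)}=\sum_{(\tau,T)}c^{i,j}_{\tau,T}\Delta_{\tau,T}(C)$, and then observing that all non-trivial coefficients are positive by the very definition of $\mathcal{L}^>_k$ while the immanants $\Delta_{\tau,T}(C)$ are non-negative for $C\in\Grnon{k}{2k}$.

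The gap in your proposal is the step you flag yourself: you never actually exhibit the vector $w\in\lambda$ with $w\,\eta_I\,w^\intercal\le 0$, and the two strategies you sketch for producing it do not go through as stated. The claim that $w\,\eta_I\,w^\intercal$ becomes a signed sum of $(k+2)\times(k+2)$ minors of $\Lambda$ is not plausible: $w$ lies in $\lambda=C^\perp\cap\Lambda^\intercal$, so its entries depend on $C$ as well as on $\Lambda$, and there is no purely $\Lambda$-dependent minor expansion. The deformation fallback is circular: to run ``a sign change forces a zero, hence a boundary'' you would first need to know that $S_I\ge0$ on that boundary and that $S_I$ cannot vanish in the interior of a stratum without a sign change --- Proposition~\ref{prop:S orth} gives only $S_I=S_{[2k]\setminus I}$ and certain vanishings, not non-negativity, and in fact $S_I$ does vanish on interior strata of the non-negative orthogonal Grassmannian for suitable $I$. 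The paper's route through Theorem~\ref{s immanant thm} avoids all of this because strong positivity of $\Lambda$ is \emph{defined} precisely so that the immanant expansion is term-by-term non-negative; if you want to complete your argument, the natural move is to feed that same expansion into your Gram-determinant picture to produce the witness, rather than to look for it by hand.
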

The full proof of Theorem~\ref{nonneg mandelstam thm} can be found in Section~\ref{sec:Temperley–Lieb Immanants}.
This theorem brings us close to the main result of this section, which is the following:
\begin{thm}
\label{thm external boundaries}
    For \(\Lambda\in\mathcal{L}_k^>\) and \(\Gamma_0\) an external boundary of BCFW cell \(\Gamma\) in \(\OGnon{k}{2k}\), we have that \(\widetilde \Lambda(\Omega_{\Gamma_0})\) is contained in the boundary of the amplituhedron \(\mathcal{O}_k(\Lambda)\).
\end{thm}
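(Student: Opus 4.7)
The plan is to identify a non-negative function on $\mathcal O_k(\Lambda)$ that vanishes on $\widetilde\Lambda(\Omega_{\Gamma_0})$ with non-vanishing gradient, and then conclude that $\widetilde\Lambda(\Omega_{\Gamma_0})$ must lie in the topological boundary of $\mathcal O_k(\Lambda)$. The desired function will be a vertex-separator of the boundary vertex $v$ defined by $\Gamma_0=\partial_v\Gamma$, which I will identify with a cyclically consecutive Mandelstam variable so that Theorem~\ref{nonneg mandelstam thm} can be applied.

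First I would locate $v$ and identify its vertex-separator. Since $\Gamma_0$ is an external boundary, Corollary~\ref{coro: codim 1 boudnary classification} tells us $v$ is an internal vertex of $\Gamma$ adjacent to an external vertex. Translating this into the tree-of-triangles language of Theorem~\ref{thm:BCFW are trees}, the internal ToT edge corresponding to $v$ has a leaf of the ToT as one of its endpoints, and the two external half-edges at that leaf bound an external $4$-arc $\tau_\ell$ of $\Gamma$ with cyclically consecutive support $\{\ell,\ell+1,\ell+2,\ell+3\}\subset[2k]$, on which $v$ sits as one of the two internal vertices. By Definition~\ref{def:vertex sep}, a vertex-separator of $v$ is then $S=S_I$ with $I\in\{\{\ell,\ell+1,\ell+2\},\{\ell+1,\ell+2,\ell+3\}\}$, a cyclically consecutive three-element subset of $[2k]$.

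Next I would combine strong positivity with the local geometry near $\Omega_{\Gamma_0}$. By Theorem~\ref{nonneg mandelstam thm}, $S_I(\Lambda,Y)\ge 0$ for every $Y\in\mathcal O_k(\Lambda)$. By Proposition~\ref{prop:sep 4-native}, $S_I$ is positively-radical and vanishes identically on $\Gamma_0$, and $\evalat*{\frac{\partial}{\partial v}S_I}{\Gamma_0}$ is strictly positive. Observation~\ref{obs:pos rad is smooth} then lifts positive-radicality to the statement that $S_I(\Lambda,\bullet)$ extends smoothly to an open neighborhood $U$ of $\widetilde\Lambda(\Omega_{\Gamma_0})$ inside $\mathbf Y_\Lambda^k$, and unpacking Definition~\ref{def:vertex derivative} shows that the differential of this extension restricted to $\mathbf Y_\Lambda^k$ does not vanish at any point of $\widetilde\Lambda(\Omega_{\Gamma_0})$.

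The conclusion is then a one-line topological argument. Fixing any $y\in\widetilde\Lambda(\Omega_{\Gamma_0})$, the smooth function $S_I(\Lambda,\bullet)$ on $U$ vanishes at $y$ with non-vanishing gradient, so every neighborhood of $y$ in $\mathbf Y_\Lambda^k$ contains points at which $S_I(\Lambda,\bullet)<0$; such points cannot lie in $\mathcal O_k(\Lambda)$ by the previous step, so $y$ is not interior to $\mathcal O_k(\Lambda)$ and therefore belongs to its boundary. The main obstacle is the structural identification carried out in the second paragraph: translating ``$v$ is adjacent to an external vertex'' into ``$v$ lies on an external $4$-arc of $\Gamma$ whose support is cyclically consecutive in $[2k]$''. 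This is precisely the step that couples an external boundary vertex to a Mandelstam variable eligible for Theorem~\ref{nonneg mandelstam thm}; once established, everything else is a clean synthesis of results already proven in Sections~\ref{sec:BCFW Cells and their Boundaries} and~\ref{separation section}.
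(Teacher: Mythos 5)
There is a genuine gap. Your argument hinges on the structural claim that an external boundary vertex $v$ always lies on an external $4$-arc of $\Gamma$ itself, so that Definition~\ref{def:vertex sep} directly hands you a Mandelstam $S_I$ with $I$ cyclically consecutive of size $3$. This is false in general. Take the $k=5$ BCFW graph
\[
\Gamma \;=\; \mathrm{Arc}_{4,4}\,\mathrm{Arc}_{4,4}\,\mathrm{Arc}_{3,2}\,\mathrm{Arc}_{2,1}(O),
\]
whose permutation is $\tau = (1,6)(2,8)(3,10)(4,7)(5,9)$. The internal vertex $v$ at the crossing of the arcs $\{2,8\}$ and $\{5,9\}$ is the last crossing from both $8$ and $9$, hence is adjacent to the external vertices $8$ and $9$ and corresponds to an external boundary. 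Yet $\{2,8\}$ is an external $5$-arc (support $\{8,9,10,1,2\}$) and $\{5,9\}$ is an external $5$-arc (support $\{5,6,7,8,9\}$); $v$ lies on no external $4$-arc of $\Gamma$. It is only $4$-\emph{native}: it is $\mathrm{Arc}_{4,4}(v')$ for a vertex $v'$ on the external $4$-arc $\{4,7\}$ of the $k=4$ graph, and its Mandelstam vertex-separator is $\mathrm{Arc}_{4,4}(S_{\{7,8,1,2,3\}})=S_{\{9,10,1,2,3\}}$, a length-$5$ consecutive Mandelstam, not a length-$3$ one.

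The correct statement --- and exactly what the paper proves as Proposition~\ref{prop:ext bound cont mand sep} --- is merely that there \emph{exists} a cyclically consecutive $I\subset[2k]$ of odd length (not necessarily $3$) such that $S_I$ is a vertex-separator of $v$. Establishing this requires an induction along the BCFW arc-sequence: one must show that after each $\mathrm{Arc}_{4,n}$ move, the pushed-forward vertex-separator is again a Mandelstam with consecutive support, which works because the support of the newly added arc always falls entirely on one side of the split determined by $v$, so the $\mathrm{Rot}$ moves leave the Mandelstam untouched (Observation~\ref{obs:s rot}). This induction is precisely the piece your argument asserts as immediate; once it is in place, the remaining steps of your proposal (Theorem~\ref{nonneg mandelstam thm} for non-negativity, Proposition~\ref{prop:sep 4-native} and Observation~\ref{obs:pos rad is smooth} for smoothness and non-vanishing gradient, and the one-line topological conclusion) coincide with the paper's proof.
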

The proof of Theorem~\ref{thm external boundaries} can be found in Section~\ref{sec:ext boundaries Pos}.

As well as proving the above claims, we have two more goals for this section. In Section~\ref{sec:strongly pos mat}, we will show the set of strongly positive matrices behaves nicely under the local moves we used throughout the previous sections. In Section~\ref{sec:finding strongly pos mat}, we would show the set of strongly positive matrices is not empty.

For the rest of this section, in the spirit of Proposition~\ref{prop:mom cons}, keep in mind that we can view \(\mathcal{O}_k (\Lambda) \) as the image of \(\OGnon{k}{2k}\) under the map \(C \mapsto C^\perp\cap \Lambda^\intercal \in \mathrm{Gr}_2 (\Lambda^\intercal)\), where \(\mathrm{Gr}_2 (\Lambda^\intercal)\) is the space of two dimension linear subspaces of \(\Lambda^\intercal\). 

\begin{lem}[\cite{origami}, Equation~2.16]
    \label{alt ortho}
    For \(I\in\binom{[n]}{k}\) and \(C\in \Grnon{k}{n}\), we have
    \(
    \Delta_{[n]\setminus I}(C^\perp) = \Delta_I(C\eta).
    \)
\end{lem}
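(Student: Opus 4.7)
The plan is to reduce the identity to a polynomial identity in the entries of $C$ and verify it by direct computation in a canonical representative. Since $\eta$ is diagonal, one has $\Delta_I(C\eta)=\bigl(\prod_{i\in I}\eta_{ii}\bigr)\Delta_I(C)$, so the claim is equivalent to showing that there is a representative $D$ of $C^\perp$ satisfying $\Delta_{[n]\setminus I}(D)=\bigl(\prod_{i\in I}\eta_{ii}\bigr)\Delta_I(C)$ for every $I\in\binom{[n]}{k}$. Once a normalization of $D$ is fixed, both sides are polynomials in the entries of $C$ that transform identically under the left $\mathrm{GL}_k$-action on $C$, so it suffices to verify the identity on the Zariski-dense locus where the leading $k\times k$ minor of $C$ is nonzero; by continuity the identity then extends to all of $\Grnon{k}{n}$.

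On this locus one may use the left $\mathrm{GL}_k$-action to bring $C$ to the reduced form $C=[\,I_k\mid B\,]$, where $B$ is a $k\times(n-k)$ matrix. A direct computation shows that $D=[\,-B^\intercal\mid I_{n-k}\,]$ satisfies $CD^\intercal=0$, so it represents $C^\perp$. For an arbitrary $I\in\binom{[n]}{k}$, both $\Delta_I(C)$ and $\Delta_{[n]\setminus I}(D)$ reduce to signed minors of $B$ of a common shape, with the shape determined by which elements of $I$ lie in $[k]$ versus in $\{k+1,\dots,n\}$. The identity then becomes a comparison of two signs: the Plücker-duality sign (which one reads off from the Laplace expansion of the $n\times n$ determinant obtained by stacking $C$ on top of $D$), and the sign $\prod_{i\in I}\eta_{ii}$ coming from the diagonal action of $\eta$.

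The main obstacle is organizing this sign bookkeeping cleanly. Writing $I=\{i_1<\cdots<i_k\}$, the shuffle sign $\operatorname{sgn}(I,[n]\setminus I)$ governing Plücker duality equals $(-1)^{\sum_r i_r-\binom{k+1}{2}}$. With the convention $\eta_{ii}=(-1)^{i+1}$ used in the paper, one computes $\prod_{i\in I}\eta_{ii}=(-1)^{k+\sum_r i_r}$. These two signs differ by $(-1)^{k+\binom{k+1}{2}}$, a global factor independent of $I$, which is absorbed into the choice of representative of $C^\perp$ by replacing $D$ with $\pm D$. Once this global normalization is fixed, the identity holds simultaneously for all $I\in\binom{[n]}{k}$, and the lemma follows. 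The only genuinely delicate step is the final sign check; everything else is a straightforward reduction.
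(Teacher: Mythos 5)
The paper does not prove this lemma itself; it is imported directly from Galashin~\cite{origami}, so there is no in-paper argument to compare against. Your proof is correct in substance. The reduction to the representative $C=[\,I_k\mid B\,]$, $D=[\,-B^\intercal\mid I_{n-k}\,]$, the observation that $\Delta_I(C)$ and $\Delta_{[n]\setminus I}(D)$ are the same minor of $B$ up to sign, and the sign bookkeeping are all right: the Laplace (or, more carefully, the Cauchy--Binet/Cauchy--Schwarz) argument gives $\Delta_{[n]\setminus I}(D)=\operatorname{sgn}(I,[n]\setminus I)\,\Delta_I(C)$ with your choice of $D$, and comparing with $\prod_{i\in I}\eta_{ii}=(-1)^{k+\sum_r i_r}$ yields the $I$-independent ratio $(-1)^{k+\binom{k+1}{2}}$, so the two Pl\"ucker vectors agree projectively.

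One small inaccuracy you should fix: replacing $D$ by $-D$ rescales every $(n-k)\times(n-k)$ minor by $(-1)^{n-k}$, which is $+1$ when $n-k$ is even (in particular for $n=2k$ with $k$ even). So ``$D\mapsto\pm D$'' does not in general absorb an odd global sign. The correct statement is that one rescales $D$ by any $g\in\mathrm{GL}_{n-k}$ with $\det g=(-1)^{k+\binom{k+1}{2}}$ (for instance, negating a single row), or more simply one notes that Pl\"ucker coordinates are projective, so a global scalar is immaterial to the asserted identity. This is a cosmetic rather than a structural flaw; the sign computation, which is the only genuinely delicate step, is correct.
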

\begin{dfn}\label{def:mom_amp}
Let \(M^{\perp > }_{n\times k}\) be the space of \(n \times k\) real full rank \(k\) matrices which are orthogonal to a positive \(n \times (n-k)\) matrix.
For \((\Lambda,\widetilde\Lambda)\in M^{\perp >}_{n\times (n-k+2)}\times M^{ > }_{n\times (k+2)}\), the \emph{momentum amplituhedron} \(\mathcal{M}_{k,n}(\Lambda,\widetilde\Lambda)\) is defined as the image of \(\Grnon{k}{n}\) under the map\\
\(
\Phi_{\Lambda,\widetilde\Lambda}:\Grnon{k}{n}\rightarrow \Gr{2}{n}\times\Gr{2}{n}
\) defined by
\(C\mapsto ({\underline{\lambda}},\widetilde{{\underline{\lambda}}}):=(C\cap\Lambda^\intercal,C^\perp\cap\widetilde\Lambda^\intercal).
\)
\({{\underline{\lambda}}}\) and \(\widetilde{{\underline{\lambda}}}\) are always two dimensional, and thus \(({\underline{\lambda}},\widetilde{\underline{\lambda}})\in \Gr{2}{n}\times\Gr{2}{n}\). 

The Mandelstam variables on \(\Gr{2}{n}\times\Gr{2}{n}\) are defined as
\(
S_I({\underline{\lambda}},\widetilde{\underline{\lambda}}) = \sum_{\{i,j\}\subset I} \langle i \, j \rangle_{\underline{\lambda}} \left[i\,j\right]_{\widetilde{\underline{\lambda}}}
\)
for \(I\subset[n]\), where \(\langle i \, j \rangle_{\underline{\lambda}} \defeq \Delta_{\{i,j\}}({\underline{\lambda}})\) and \(\left[i\,j\right]_{\widetilde{\underline{\lambda}}} \defeq \Delta_{\{i,j\}}(\widetilde{\underline{\lambda}})\).    
\end{dfn}
Note that the definition of the ABJM amplituhedron is closely analogous: by setting \(n=2k\), \(\eta \Lambda = \widetilde{\Lambda}\), and changing the domain of the map in Definition~\ref{def:mom_amp} to \(\OGnon{k}{2k}\) instead of \(\Grnon{k}{2k}\), the resulting space is naturally isomorphic to the ABJM amplituhedron. Moreover, the \(\lambda\) defined in Definition~\ref{def lambda} coincides with \({\underline{\lambda}} \eta = \widetilde{\underline{\lambda}}\) of the momentum amplituhedron, and the Mandelstam variables restrict to those discussed in Definition~\ref{def mand}.

\subsection{Temperley–Lieb Immanants}
\label{sec:Temperley–Lieb Immanants}
The goal of this section is to prove Theorem~\ref{nonneg mandelstam thm}. To do so, we need to study more closely the Mandlestam variables and consider the Temperley–Lieb immanants defined by Lam \cite{Lam} following Rhoades and Skandera \cite{TLImma}. See also the treatment in \cite{origami}.
\begin{dfn}\label{def:temp_lieb}
Let \(\tau\) be an involution on \([n]\), that is, a permutation such that \(\tau^2 = \text{id}\). Let \(S(\tau) \defeq \{\ell\in [n]  : \tau(\ell) \neq \ell\}\) and let \(T \subset [n]\setminus S(\tau)\) such that \(2 \left|T\right|+\left|S(\tau)\right| = 2k \). We say that \((\tau,T)\) is a \emph{\((k,n)\)-partial non-crossing pairing} if there are no indices \(1\leq a < b < c < d \leq n\) such that \(\tau(a) = c\) and \(\tau(b)=d\). Let \(\mathcal{T}_{k,n}\) be the set of \((k,n)\)-partial non-crossing pairings. 

For \(A,B\in\binom{[n]}{k}\), we say \((\tau,T)\in \mathcal T_{k,n}\) is \emph{compatible} with \((A,B)\) if \(T = A\cap B\), \(S(\tau)=(A\setminus B)\cup (B\setminus A)\), and \(\tau(A\setminus B) = B\setminus A\). Write \(\mathcal T_{k,n}(A,B)\) for the set of \((\tau,T)\in \mathcal T_{k,n}\) compatible with \((A,B)\)

The \emph{Temperley--Lieb immanants} of \(C\in \Gr{k}{n}\) are the set of functions \(\Delta_{\tau,T}(C)\) for all\\ \((\tau,T)\in \mathcal T_{k,n}\), uniquely defined by the equations
\(
\Delta_A(C)\Delta_B (C)  = \sum_{(\tau,T)\in \mathcal T_{k,n}(A,B)} \Delta_{\tau,T}(C)
\) for all \( A,B\in \binom{[n]}{k}\). They are non-negative for \(C\in\Grnon{k}{n}\).    
\end{dfn}
Write \(I(i,j)\) for the elements \(\{i+1,i+2,...,j\}\) taken mod \(n\). A pair \(\{\ell,\tau(\ell)\}\) with \(l\in[n]\) and \(l\neq \tau(\ell)\) is called an \emph{arc} of \(\tau\). For \(\{\ell,\tau(\ell)\}\) an arc is an \emph{I-special arc} if \(\left|\{\ell,\tau(\ell)\}\cap I \right| = 1\). Let an \((i,j,\tau,T)\)-marking be a function \(\mu: S(\tau) \rightarrow \{\ell,R,J\}\) that satisfies:
\begin{itemize}
    \item \(\tau\) has exactly two \(I(i,j)\)-special arcs \(\{\ell, \tau (\ell)\}\) and \(\{r, \tau (r)\}\) with \(\mu (\ell) = \mu (r) = J\) and \(\mu (\tau(\ell)) = \mu (\tau(r)) = R\). We will refer to those as \emph{\(J\)-arcs}.
    \item For all other arcs of \(\tau\), we have one endpoint being sent to \(L\) and one to \(R\).
\end{itemize}
Let \(\mathcal{M}_{(i,j,\tau,T)}\) be the set of all \((i,j,\tau,T)\)-markings. Let \(d^{i,j}_\mu\) be the number of \(I(i,j)\)-special arcs of \(\tau\) between \(\{\ell, \tau (\ell)\}\) and \(\{r, \tau (r)\}\). Let \(R_\mu \defeq T\cup \mu^{-1}(R)\), \(L_\mu \defeq T\cup \mu^{-1}(L)\), and \(J_\mu \defeq \mu^{-1}(J)\).

\begin{thm}[\cite{origami}, Theorem~6.2]
\label{s immanant thm}
    For \(C\in\Grnon{k}{n}\), \((\Lambda,\widetilde\Lambda)\in M^{\perp > }_{n\times (n-k+2)}\times M^{ > }_{n\times (k+2)}\), and \(\Phi_{\Lambda,\widetilde\Lambda} (C) = ({\underline{\lambda}},\widetilde{\underline{\lambda}})\), we have
    \(
    S_{I(i,j)}({\underline{\lambda}},\widetilde{\underline{\lambda}}) = \sum_{(\tau,T)\in \mathcal T_{k,n}} c_{\tau,T}^{i,j}(\Lambda,\widetilde\Lambda)\Delta_{\tau,T}(C)
    ,\)
    where
    \begin{equation}
        \label{cijtT}
    c_{\tau,T}^{i,j}(\Lambda,\widetilde\Lambda)=\sum_{\mu\in\mathcal{M}_{(i,j,\tau,T)}}(-1)^{d^{i,j}_\mu} \Delta_{L_\mu}(\Lambda^{\perp \intercal})\Delta_{R_\mu\cup J_\mu}(\widetilde\Lambda^\intercal).
    \end{equation}
    
\end{thm}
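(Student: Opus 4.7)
The plan is to derive the identity by two algebraic expansions followed by a combinatorial matching. Starting from $S_I=\sum_{\{i',j'\}\subset I}\langle i'\,j'\rangle_{\underline{\lambda}}[i'\,j']_{\widetilde{\underline{\lambda}}}$, the strategy is to rewrite each factor via orthogonal-complement decompositions of the two-dimensional intersection spaces, apply Lemma~\ref{alt ortho} to normalize everything to minors of $C$, then invoke the defining identity of the Temperley--Lieb immanants and regroup to match \eqref{cijtT}.

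For the algebraic step, since $\underline{\lambda}=C\cap\Lambda^\intercal$ is two-dimensional, its orthogonal complement decomposes as $\underline{\lambda}^\perp=C^\perp+\Lambda^{\perp\intercal}$, a sum which is direct in generic position since $(n-k)+(k-2)=n-2$. Cauchy--Binet then gives
$$\langle i'\,j'\rangle_{\underline{\lambda}}=\pm\!\!\sum_{J_1\sqcup J_2=[n]\setminus\{i',j'\}}\!\!\epsilon(J_1,J_2)\,\Delta_{J_1}(C^\perp)\,\Delta_{J_2}(\Lambda^{\perp\intercal})$$
with $|J_1|=n-k$, $|J_2|=k-2$, and analogously $[i'\,j']_{\widetilde{\underline{\lambda}}}$ decomposes across $C+\widetilde\Lambda^{\perp\intercal}$. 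I would then use Lemma~\ref{alt ortho} to turn each $\Delta_{J_1}(C^\perp)$ into $\pm\Delta_{[n]\setminus J_1}(C)$ (absorbing the $\eta$-signs into an overall sign) and the standard Plücker duality $\Delta_K(\widetilde\Lambda^{\perp\intercal})=\pm\Delta_{[n]\setminus K}(\widetilde\Lambda^\intercal)$ to bring the $\widetilde\Lambda$-factor into the form appearing in~\eqref{cijtT}. The net outcome is
$$\langle i'\,j'\rangle[i'\,j']=\sum_{A,B}\pm\,\Delta_A(C)\,\Delta_B(C)\,\Delta_{A\setminus\{i',j'\}}(\Lambda^{\perp\intercal})\,\Delta_{B\cup\{i',j'\}}(\widetilde\Lambda^\intercal),$$
summed over $A,B\in\binom{[n]}{k}$ with $\{i',j'\}\subset A$ and $\{i',j'\}\cap B=\emptyset$. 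Expanding each $\Delta_A(C)\Delta_B(C)$ via the defining Temperley--Lieb identity and swapping summation orders collects, for each $(\tau,T')$, a coefficient which is the candidate for $c^{i,j}_{\tau,T'}$; a direct check verifies that under the compatibility of $(A,B)$ with $(\tau,T')$ the index sets $A\setminus\{i',j'\}$ and $B\cup\{i',j'\}$ are precisely the $L_\mu$ and $R_\mu\cup J_\mu$ of the corresponding marking, so the $\Lambda$- and $\widetilde\Lambda$-factors are already correct before signs.

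The main obstacle is the combinatorial matching: verifying that only terms corresponding to genuine markings survive with the claimed sign $(-1)^{d^{i,j}_\mu}$. Compatibility of $(A,B)$ with $(\tau,T')$ forces $\{i',j'\}\subset A\setminus B$, so $i'$ and $j'$ lie on two distinct arcs of $\tau$; once these are declared to be the $J$-arcs, the distribution of endpoints of the remaining arcs between $A$ and $B$ is exactly the data of a marking $\mu$. Two nontrivial points remain. First, the marking definition requires both $J$-arcs to be $I$-special, whereas the expansion only enforces $\{i',j'\}\subset I$; the expected mechanism is that an arc $\{a,\tau(a)\}$ with both endpoints in $I$ contributes twice under the sum over $\{i',j'\}$ (once with $i'=a$ and once with $i'=\tau(a)$) with opposite Cauchy--Binet signs, producing the cancellation that eliminates non-$I$-special arc pairs. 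Second, the compound sign from the two Cauchy--Binet expansions, Lemma~\ref{alt ortho}, the Plücker duality, and the Temperley--Lieb expansion must be shown to equal $(-1)^{d^{i,j}_\mu}$; the quantity $d^{i,j}_\mu$ counting $I$-special arcs between the two $J$-arcs should emerge from the cyclic reordering needed to bring $\epsilon(J_1,J_2)$ into the canonical order used in~\eqref{cijtT}, since each intervening $I$-special-arc endpoint contributes one transposition. This simultaneous vanishing-and-sign identification is the heart of the argument; the rest is essentially routine substitution.
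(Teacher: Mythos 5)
First, a point of reference: the paper does not prove Theorem~\ref{s immanant thm} at all --- it is imported verbatim from \cite{origami} (Theorem~6.2), so there is no in-paper argument to compare against. Your attempt is in effect a reconstruction of the proof in the cited reference, and its skeleton is the right one: expand $\langle i'\,j'\rangle_{\underline{\lambda}}$ and $[i'\,j']_{\widetilde{\underline{\lambda}}}$ through the orthogonal-complement decompositions into bilinear sums of the form $\sum\pm\Delta_A(C)\Delta_{A\setminus\{i',j'\}}(\Lambda^{\perp\intercal})$ and $\sum\pm\Delta_B(C)\Delta_{B\cup\{i',j'\}}(\widetilde\Lambda^{\intercal})$, regroup the products $\Delta_A(C)\Delta_B(C)$ via the defining Temperley--Lieb identity, and observe that compatible triples $(A,B,\{i',j'\})$ are exactly the data of markings, with the $\Lambda$- and $\widetilde\Lambda$-factors already in the form of~\eqref{cijtT}.

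However, as a proof your text is incomplete exactly at the two places you yourself flag as ``expected mechanisms.'' The pairing you propose to kill configurations in which the arc through $i'$ has both endpoints in $I(i,j)$ does have matching factors (swapping $i'$ with $\tau(i')$ indeed leaves $A\setminus\{i',j'\}$ and $B\cup\{i',j'\}$ unchanged), but the cancellation rests entirely on those two terms carrying opposite signs, and that sign comes from the very bookkeeping you never carry out: the shuffle signs $\epsilon(J_1,J_2)$ in the two Laplace/Cauchy--Binet expansions, the $\eta$-twist hidden in Lemma~\ref{alt ortho}, and the duality sign relating minors of $\widetilde\Lambda^{\perp}$ to complementary minors of $\widetilde\Lambda$. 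Likewise the identification of the accumulated sign with $(-1)^{d^{i,j}_\mu}$ is only asserted as what ``should emerge''; this sign pattern is the actual content of the theorem (all the downstream positivity statements, e.g.\ Theorem~\ref{nonneg mandelstam thm}, hinge on it), so without these computations the argument does not establish~\eqref{cijtT}. A further loose end is normalization: $\underline{\lambda}$ and $\widetilde{\underline{\lambda}}$ are defined only as subspaces, so each of your expansions holds a priori only up to an overall scalar per factor, and one must fix consistent matrix representatives (as is done in \cite{origami}) to promote the projective identities to the exact equality claimed. In short: correct strategy, essentially the known one, but the decisive sign verifications --- the heart of the matter --- are missing.
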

Thus, if every \( c_{\tau,T}^{i,j}(\Lambda,\widetilde\Lambda)\geq 0\) then all the Mandelstam variables are  non-negative on the amplituhedron. 
\begin{dfn}
\label{def:strongly positive}
    Let \(\mathcal{T}_{k,n}^{i,j}\) be the set of \((\tau,T)\in \mathcal{T}_{k,n}\) such that \(\tau\) has at least two \(I(i,j)\)-special arcs. \((\Lambda,\widetilde\Lambda)\in M^{\perp > }_{n\times (n-k+2)}\times M^{ > }_{n\times (k+2)}\) would be called \emph{strongly positive} if \( c_{\tau,T}^{i,j}(\Lambda,\widetilde\Lambda)> 0\) for any \(i+2\leq j\leq i+n-2\) and \((\tau,T)\in \mathcal{T}_{k,n}^{i,j}\) (we will call those \emph{non-trivial} \(c_{\tau,T}^{i,j}\)). Let \(\mathcal{L}_k^>\), which we will call \emph{strongly positive matrices}, be the space of \(\widetilde\Lambda\) such that \((\eta \widetilde\Lambda, \widetilde\Lambda)\in M^{\perp > }_{2k\times (k+2)}\times M^{ > }_{2k\times (k+2)}\) are immanant positive. 
\end{dfn}
From here until the end of the section we shall restrict to the case \(\eta\Lambda = \widetilde\Lambda\) and \(n = 2k\), and we will be interested in finding conditions on $\Lambda$ making the pair $(\Lambda,\widetilde\Lambda)$ immanant positive.

\begin{proof}[Proof of Theorem~\ref{nonneg mandelstam thm}]
    Set \(C\in \OGnon{k}{2k}\) such that \(C\widetilde\Lambda = Y\). It holds that \(C\eta = C^\perp\), thus
    \[
    ({\underline{\lambda}},\widetilde{{\underline{\lambda}}}):=(C\cap\Lambda^\intercal,C^\perp\cap\widetilde\Lambda^\intercal) =
    ((C\eta)^\perp\cap(\eta\widetilde\Lambda)^\intercal,C^\perp\cap\widetilde\Lambda^\intercal),
    \]
    and therefore
    \(\lambda = \underline{\widetilde\lambda}=\underline{\lambda}\eta\). This means that 
    \(
    (-1)^{i-j+1}\langle i \, j \rangle_{\underline{\lambda}} = \left[i\,j\right]_{\widetilde{\underline{\lambda}}} = \Delta_{\{i,j\}}(\lambda)=\twist{Y}{i}{j}_{\widetilde\Lambda},
    \)
    by Proposition~\ref{prop:twist pluckers}. Therefore,
   \[
    S_I(\widetilde\Lambda, Y)=
    \sum_{\{i,\,j\} \subset I} (-1)^{i-j+1} \twist{Y}{i}{j}_{\widetilde\Lambda}^2
    =\sum_{\{i,j\}\in I} \langle i \, j \rangle_{\underline{\lambda}} \left[i\,j\right]_{\widetilde{\underline{\lambda}}}
    =S_I({\underline{\lambda}},\widetilde{\underline{\lambda}})
\geq
    0
    \] 
\end{proof}

\subsection{Strongly Positive Matrices}
\label{sec:strongly pos mat}
We will now show that \(\widetilde\Lambda\in \mathcal L^>_k\) is preserved by \(\mathrm{Rot}\), \(\mathrm{Inc}\), and \(\mathrm{Cyc}\) moves.
    \begin{prop}
    \label{imm pos closed cyc prop}
        For \(\widetilde\Lambda\in\mathcal{L}^>_k\), we have that \(\mathrm{Cyc}(\widetilde\Lambda)\in\mathcal{L}^>_k\).
    \end{prop}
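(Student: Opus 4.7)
The plan is to show that the coefficients $c_{\tau,T}^{i,j}(\mathrm{Cyc}(\Lambda),\mathrm{Cyc}(\widetilde\Lambda))$ are obtained from the coefficients $c_{\tau',T'}^{i',j'}(\Lambda,\widetilde\Lambda)$ by a simple relabeling induced by the cyclic shift, so that the hypothesis of strong positivity for $\widetilde\Lambda$ translates directly into strong positivity for $\mathrm{Cyc}(\widetilde\Lambda)$. Concretely, I plan to establish an equivariance of the form
\[
c_{\tau,T}^{i,j}\bigl(\mathrm{Cyc}(\Lambda),\mathrm{Cyc}(\widetilde\Lambda)\bigr) = c_{\mathrm{Cyc}^{-1}(\tau),\mathrm{Cyc}^{-1}(T)}^{i-1,j-1}(\Lambda,\widetilde\Lambda),
\]
with indices taken modulo $2k$, and then invoke the fact that the decorated set $\mathcal{T}_{k,2k}^{i,j}$ of non-trivial pairings is itself cyclically invariant to conclude that non-trivial coefficients are permuted among themselves.

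First I would pin down how the basic building blocks of the formula in Theorem~\ref{s immanant thm} transform. From Observation~\ref{cyc plu} and its analogue for $\widetilde\Lambda$, each Plücker coordinate $\Delta_J(\widetilde\Lambda^\intercal)$ transforms under $\mathrm{Cyc}$ by an index shift, picking up at most an overall sign of $-(-1)^k$ when the cycled index wraps past $2k$. The same holds for $\Delta_L(\Lambda^{\perp\intercal})$, using Lemma~\ref{alt ortho} together with the fact that $\mathrm{Cyc}$ commutes with $C\mapsto C\eta$ up to a controlled sign. Substituting these into Equation~(\ref{cijtT}) and performing the obvious bijection on markings $\mu \leftrightarrow \mathrm{Cyc}^{-1}(\mu)$ (which preserves $d^{i,j}_\mu$ because the \emph{number} of $I$-special arcs between the two $J$-arcs is invariant under cycling), one obtains the claimed equivariance, possibly multiplied by an overall positive constant coming from paired sign factors.

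Alternatively, and perhaps more elegantly, I can avoid computing with the markings directly by using Proposition~\ref{s cyc}: for any $C\in\Grnon{k}{2k}$,
\[
S_{I(i,j)}\bigl(\mathrm{Cyc}(\Lambda),\mathrm{Cyc}(\widetilde\Lambda)\bigr)\bigl(\mathrm{Cyc}(C)\bigr) = S_{I(i-1,j-1)}(\Lambda,\widetilde\Lambda)(C),
\]
while the Temperley--Lieb immanants themselves obey $\Delta_{\tau,T}(\mathrm{Cyc}(C)) = \Delta_{\mathrm{Cyc}^{-1}(\tau),\mathrm{Cyc}^{-1}(T)}(C)$, which follows from their implicit definition through products $\Delta_A\Delta_B$ (each of which transforms by the same cyclic shift on $A,B$) and the cyclic equivariance of the compatibility condition on $\mathcal T_{k,2k}(A,B)$. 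Since the expansion of a Mandelstam variable in Temperley--Lieb immanants with \emph{non-negative} coefficients is unique once restricted to $\mathrm{Gr}^{\geq}_{k,2k}$, comparing coefficients in the two expansions yields the desired identity.

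The main obstacle is bookkeeping the $-(-1)^k$ signs introduced by the definition of $\mathrm{Cyc}$ on the last row (or column) in Definitions of the $\mathrm{Cyc}$ action on $C$, $\Lambda$, and $\widetilde\Lambda$: these signs must cancel between $\Delta_{L_\mu}(\mathrm{Cyc}(\Lambda)^{\perp\intercal})$ and $\Delta_{R_\mu\cup J_\mu}(\mathrm{Cyc}(\widetilde\Lambda)^\intercal)$ inside each summand of (\ref{cijtT}), and the cancellation should be uniform across markings $\mu$ so that no sign leaks into the overall coefficient. Once this is verified, since by definition $(\tau,T)\in \mathcal{T}_{k,2k}^{i,j}$ iff $(\mathrm{Cyc}^{-1}(\tau),\mathrm{Cyc}^{-1}(T))\in \mathcal{T}_{k,2k}^{i-1,j-1}$, the hypothesis that all non-trivial $c_{\tau',T'}^{i',j'}(\Lambda,\widetilde\Lambda)$ are strictly positive implies the same for $(\mathrm{Cyc}(\Lambda),\mathrm{Cyc}(\widetilde\Lambda))$, which by Definition~\ref{def:strongly positive} gives $\mathrm{Cyc}(\widetilde\Lambda)\in \mathcal{L}_k^>$.
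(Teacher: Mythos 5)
Your main approach — establishing a cyclic equivariance \(c_{\tau,T}^{i,j}(\mathrm{Cyc}(\Lambda),\mathrm{Cyc}(\widetilde\Lambda)) = c_{\tau',T'}^{i',j'}(\Lambda,\widetilde\Lambda)\) via Observation~\ref{cyc plu}, a bijection on markings, invariance of \(d^{i,j}_\mu\), and then invoking the cyclic invariance of the set \(\mathcal T_{k,2k}^{i,j}\) of non-trivial pairings — is exactly what the paper's proof does (the paper additionally uses Lemma~\ref{alt ortho} to convert \(\Delta_{L_\mu}(\Lambda^{\perp\intercal})\) to \(\Delta_{L_\mu^c}(\widetilde\Lambda^\intercal)\) first, which you also appeal to implicitly). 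The only substantive difference is the direction of the index shift: the paper writes the shift as \(+1\), while your proposed identity shifts by \(-1\) (equivalently \(\mathrm{Cyc}^{-1}\)). Tracing the conventions from the definition of \(\mathrm{Cyc}_k(\widetilde\Lambda)\), one finds \(\Delta_J(\mathrm{Cyc}(\widetilde\Lambda)^\intercal) = \Delta_{J-1}(\widetilde\Lambda^\intercal)\), so your direction is the one that checks out literally; in any case both directions are bijections on the set of non-trivial \((\tau,T,i,j)\) and the conclusion is unaffected, so you are right to note this is pure bookkeeping.

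Your secondary, ``more elegant'' route — transforming both sides of the identity in Theorem~\ref{s immanant thm} under \(\mathrm{Cyc}\) and reading off the coefficients from the uniqueness of the Temperley--Lieb expansion — is a genuinely different argument that the paper does not pursue. It is sound, since the \(\Delta_{\tau,T}\) form a basis of the span of the products \(\Delta_A\Delta_B\), so the coefficients are determined. What it buys you is that you sidestep the marking-by-marking bijection (and hence most of the \((-(-1)^k)\)-sign cancellation bookkeeping you flag as the ``main obstacle''), at the cost of needing to invoke the linear-independence property of the immanants and to verify the Mandelstam equivariance \(S_{I(i,j)}\circ\mathrm{Cyc}=S_{I(i-1,j-1)}\), which is Proposition~\ref{s cyc}. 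Both routes are correct.
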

    \begin{proof}
    Write \(\mathrm{Cyc}(\widetilde{\Lambda}) = \widetilde\Lambda^\prime\). We have for  \((\tau,T)\in \mathcal T_{k,2k}^{i,j}\), by Lemma~\ref{alt ortho}

        \begin{align*}
             c_{\tau,T}^{i,j}(\eta \widetilde\Lambda^\prime,\widetilde\Lambda^\prime)\,&=\sum_{\mu\in\mathcal{M}_{(i,j,\tau,T)}}(-1)^{d^{i,j}_\mu} \Delta_{L_\mu}((\eta\widetilde\Lambda^\prime)^{\perp \intercal})\Delta_{R_\mu\cup J_\mu}(\widetilde\Lambda^{\prime \intercal})\\
             &= \sum_{\mu\in\mathcal{M}_{(i,j,\tau,T)}}(-1)^{d^{i,j}_\mu} \Delta_{L_\mu^c}(\widetilde\Lambda^{\prime \intercal})\Delta_{R_\mu\cup J_\mu}(\widetilde\Lambda^{\prime \intercal})\\
              &= \sum_{\mu\in\mathcal{M}_{(i,j,\tau,T)}}(-1)^{d^{i,j}_\mu} \Delta_{L_\mu^c+1}(\widetilde\Lambda^{ \intercal})\Delta_{R_\mu\cup J_\mu+1}(\widetilde\Lambda^{ \intercal})
        \end{align*}
        by Corollary~\ref{cyc plu} where similarly we define adding 1 to an index mod \(2k\).

        Define \(\tau^\prime(\ell+1) = \tau(\ell)+1\), \(T^\prime = T+1\), \(i^\prime = i+1\), \(j^\prime = j+1\). We have  \((\tau^\prime,T^\prime)\in\mathcal T^{i,^\prime,j^\prime}_{k,2k}\) 
        as \((\tau,T)\in \mathcal T_{k,2k}^{i,j}\). For \(\mu\in\mathcal{M}_{(i,j,\tau,T)}\) define \(\mu^\prime (\ell+1) = \mu(\ell)\) and we have 
        \(\mu^\prime\in\mathcal{M}_{(i^\prime,j^\prime,\tau^\prime,T^\prime)}\). 
        Since change all of the arcs together by cycling the indices, we have that \(d^{i^\prime,j^\prime}_{\mu^\prime} = d^{i,j}_\mu\). For a set of indices \(A\subset[2k]\) write \(A+1\) for the set resulting form adding one to each index mod \(2k\). Since we just moved all the indices by 1, we have \(L_\mu^c+1 = L_{\mu^\prime}\) and \(R_\mu\cup J_\mu+1 = R_{\mu^\prime}\cup J_{\mu^\prime}\). Therefore
        \begin{align*}
             c_{\tau,T}^{i,j}(\eta \widetilde\Lambda^\prime,\widetilde\Lambda^\prime)\,&= \sum_{\mu\in\mathcal{M}_{(i,j,\tau,T)}}(-1)^{d^{i,j}_\mu} \Delta_{L_\mu^c+1}(\widetilde\Lambda^{ \intercal})\Delta_{R_\mu\cup J_\mu+1}(\widetilde\Lambda^{ \intercal})\\
             &= \sum_{\mu\in\mathcal{M}_{(i,j,\tau,T)}}(-1)^{d^{i^\prime,j^\prime}_{\mu^\prime}} \Delta_{L_{\mu^\prime}^c}(\widetilde\Lambda^{ \intercal})\Delta_{R_{\mu^\prime}\cup J_{\mu^\prime}}(\widetilde\Lambda^{ \intercal})\\
             &= \sum_{\mu^\prime\in\mathcal{M}_{(i^\prime,j^\prime,\tau^\prime,T^\prime)}}(-1)^{d^{i^\prime,j^\prime}_{\mu^\prime}} \Delta_{L_{\mu^\prime}^c}(\widetilde\Lambda^{ \intercal})\Delta_{R_{\mu^\prime}\cup J_{\mu^\prime}}(\widetilde\Lambda^{ \intercal})\\
             &=\,c_{\tau^\prime,T^\prime}^{i^\prime,j^\prime}(\eta \widetilde\Lambda,\widetilde\Lambda).
        \end{align*}
        Thus \(\widetilde\Lambda^\prime\) is strongly positive if \(\widetilde\Lambda\) is.   
    \end{proof}
    \begin{prop}\label{prop:imm_under_inv_inj}
        For \(\widetilde\Lambda\in\mathcal{L}^>_{k+1}\), we have that \(\mathrm{Inc}^{-1}_{s}(\widetilde\Lambda)\in\mathcal{L}^>_{k}\).
    \end{prop}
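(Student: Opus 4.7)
The plan is to show that each non-trivial coefficient $c_{\tau',T'}^{i,j}(\eta\widetilde{\Lambda}', \widetilde{\Lambda}')$ governing strong positivity of $\widetilde{\Lambda}' = \mathrm{Inc}_s^{-1}(\widetilde{\Lambda})$ equals a non-trivial coefficient $c_{\tau,T}^{i'',j''}(\eta\widetilde{\Lambda}, \widetilde{\Lambda})$ of the bigger matrix $\widetilde{\Lambda}$, which is positive by hypothesis. Concretely, given $(\tau',T') \in \mathcal{T}_{k,2k}^{i,j}$, I would lift it to $(\tau, T) \in \mathcal{T}_{k+1,2k+2}$ by declaring $\tau(s)=s+1$ (inserting a new arc between $s$ and $s+1$) and relabeling all other indices via the natural inclusion $\iota: [2k] \hookrightarrow [2k+2] \setminus \{s,s+1\}$ (with $T$ and all other arcs shifted through $\iota$). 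The arc $\{s,s+1\}$ is consecutive so no crossing is created, and the size balance $2|T|+|S(\tau)| = 2|T'|+|S(\tau')|+2 = 2(k+1)$ is immediate. The interval $(i,j)$ is lifted to $(i'',j'')$ by choosing $I(i'',j'')$ to be the shortest cyclic interval in $[2k+2]$ containing $\iota(I(i,j))$ with $|\{s,s+1\} \cap I(i'',j'')|=0$ or $2$; this keeps $\{s,s+1\}$ non-$I(i'',j'')$-special while preserving the two $I(i,j)$-special arcs of $(\tau',T')$ as $I(i'',j'')$-special arcs of $(\tau,T)$, so $(\tau,T) \in \mathcal{T}_{k+1,2k+2}^{i'',j''}$ is non-trivial.

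The core computation is then to match the two sides of the proposed equality $c_{\tau',T'}^{i,j}(\eta\widetilde{\Lambda}', \widetilde{\Lambda}') = c_{\tau,T}^{i'',j''}(\eta\widetilde{\Lambda}, \widetilde{\Lambda})$ using formula~(\ref{cijtT}). Using Lemma~\ref{alt ortho} to convert $\Delta_{L_\mu}(\Lambda^{\perp\intercal})$ into a minor of $\widetilde{\Lambda}^\intercal$, both sides become signed sums of products of max-minors of $\widetilde{\Lambda}^\intercal$ or $\widetilde{\Lambda}'^\intercal$. I would apply Proposition~\ref{prop:inc plu} to each minor on the left-hand side, expanding every $\Delta_J(\widetilde{\Lambda}'^\intercal)$ as $\Delta_{\iota(J)\cup\{s\}}(\widetilde{\Lambda}^\intercal) + \Delta_{\iota(J)\cup\{s+1\}}(\widetilde{\Lambda}^\intercal)$. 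This turns each term on the left into four terms, of which the cross-terms $\Delta_{\cdot\cup\{s\}}\cdot\Delta_{\cdot\cup\{s+1\}}$ and $\Delta_{\cdot\cup\{s+1\}}\cdot\Delta_{\cdot\cup\{s\}}$ match exactly the two markings $\mu$ of $(\tau,T)$ that extend a given marking $\mu'$ of $(\tau',T')$ (the arc $\{s,s+1\}$ being non-special forces $\mu(s),\mu(s+1) \in \{L,R\}$ in the two orderings), while the ``diagonal'' terms $\Delta_{\cdot\cup\{s\}}\cdot\Delta_{\cdot\cup\{s\}}$ and $\Delta_{\cdot\cup\{s+1\}}\cdot\Delta_{\cdot\cup\{s+1\}}$ are antisymmetric Plücker relations and should vanish or cancel.

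The main obstacle will be bookkeeping the signs: both the $(-1)^{d^{i,j}_\mu}$ factors (since inserting the non-special arc $\{s,s+1\}$ may or may not lie between the two J-arcs of $(\tau,T)$, shifting $d^{i'',j''}_\mu$ relative to $d^{i,j}_{\mu'}$) and the Plücker signs incurred by the inclusion $\iota$ and by adjoining $s$ versus $s+1$ to an index set. I expect that after identifying these contributions carefully (analogously to the tracking in Proposition~\ref{imm pos closed cyc prop}), the two orderings of $\mu(s),\mu(s+1)$ produce terms of the same sign, and the diagonal terms cancel by the three-term Plücker relation on $\widetilde{\Lambda}^\intercal$ restricted to the appropriate $(k+3)$-element sets containing both $s$ and $s+1$. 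A secondary, milder issue is choosing $(i'',j'')$ consistently when $i$ or $j$ is close to the boundary caused by the inclusion $\iota$; this is handled by the two cases $|\{s,s+1\} \cap I(i'',j'')| = 0$ and $=2$, mirroring the hypothesis of Proposition~\ref{s inc} and producing in each case a valid non-trivial pairing in $\mathcal{T}_{k+1,2k+2}^{i'',j''}$ whose coefficient is positive.

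Once the identity $c_{\tau',T'}^{i,j}(\eta\widetilde{\Lambda}', \widetilde{\Lambda}') = c_{\tau,T}^{i'',j''}(\eta\widetilde{\Lambda}, \widetilde{\Lambda})$ is established, positivity of the right-hand side (from $\widetilde{\Lambda} \in \mathcal{L}_{k+1}^>$ applied to the non-trivial pairing $(\tau,T) \in \mathcal{T}_{k+1,2k+2}^{i'',j''}$) gives positivity of the left, uniformly for every non-trivial $(\tau',T') \in \mathcal{T}_{k,2k}^{i,j}$ and every admissible $(i,j)$; this is exactly $\widetilde{\Lambda}' \in \mathcal{L}_k^>$.
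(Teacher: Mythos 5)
Your high-level strategy — expand the minors of $\widetilde{\Lambda}' = \mathrm{Inc}_s^{-1}(\widetilde{\Lambda})$ via Proposition~\ref{prop:inc plu} and regroup the resulting terms into coefficients of the larger matrix — is exactly right and is what the paper does. But your central claimed identity is incorrect, and the reasoning for it inverts the actual combinatorics.

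You claim $c_{\tau',T'}^{i,j}(\eta\widetilde{\Lambda}', \widetilde{\Lambda}')$ equals the \emph{single} coefficient $c_{\tau,T}^{i'',j''}(\eta\widetilde{\Lambda}, \widetilde{\Lambda})$ obtained by inserting the arc $\{s,s+1\}$, with the cross-terms accounting for the two markings of this new pairing and the diagonal terms cancelling by a Plücker relation. Neither of these is true. If $(\tau_3,T_3)$ is your lift with the arc $\{s,s+1\}$ and $\mu$ is a marking extending $\mu'$, then since $\{s,s+1\}$ is a non-special arc one of $s,s+1$ is labelled $L$ and the other $R$; say $\mu(s)=L, \mu(s+1)=R$. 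Then $s\in L_\mu$, $s+1\notin L_\mu$, $s+1\in R_\mu$, $s\notin R_\mu\cup J_\mu$, so $L_\mu^c$ and $R_\mu\cup J_\mu$ each pick up $s+1$ and neither picks up $s$. This is a \emph{diagonal} term $\Delta_{\cdot\cup\{s+1\}}\Delta_{\cdot\cup\{s+1\}}$; the other extension produces the other diagonal term $\Delta_{\cdot\cup\{s\}}\Delta_{\cdot\cup\{s\}}$. So the diagonal terms are precisely $c_{\tau_3,T_3}$, they do not vanish, and in fact they are each a nonnegative product of minors of the positive matrix $\widetilde{\Lambda}$ — there is no Plücker cancellation available.

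Conversely, the cross-terms $\Delta_{\cdot\cup\{s+1\}}\Delta_{\cdot\cup\{s\}}$ and $\Delta_{\cdot\cup\{s\}}\Delta_{\cdot\cup\{s+1\}}$ do not correspond to markings of any pairing with the arc $\{s,s+1\}$; they correspond to two different lifts $(\tau_1,T_1)$ and $(\tau_2,T_2)$ in which $s$ and $s+1$ are both \emph{fixed points} of $\tau$, with $T_1 = T'\cup\{s\}$ and $T_2 = T'\cup\{s+1\}$ respectively. (Then $s\in T_1\subset L_\mu \cap (R_\mu\cup J_\mu)$ while $s+1\notin L_\mu$, giving $L_\mu^c\ni s+1$ and $R_\mu\cup J_\mu\ni s$, i.e., a cross-term.) The correct identity is therefore the three-term decomposition
\[
c_{\tau',T'}^{i,j}(\eta\widetilde{\Lambda}', \widetilde{\Lambda}') = c_{\tau_1,T_1}^{i,j}(\eta\widetilde{\Lambda}, \widetilde{\Lambda}) + c_{\tau_2,T_2}^{i,j}(\eta\widetilde{\Lambda}, \widetilde{\Lambda}) + c_{\tau_3,T_3}^{i,j}(\eta\widetilde{\Lambda}, \widetilde{\Lambda}),
\]
which still yields the desired positivity (all three summands are coefficients of $\widetilde{\Lambda}\in\mathcal{L}^>_{k+1}$), but your one-to-one matching and the Plücker-cancellation step would not survive the actual computation.
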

    \begin{proof}
        Enough to show for \(s=2k-1\). Write \(\mathrm{Inc}^{-1}_{s}(\widetilde\Lambda) =\widetilde\Lambda^\prime\). We have by Lemma~\ref{alt ortho}
 \begin{align*}
             c_{\tau,T}^{i,j}(\eta \widetilde\Lambda^\prime,\widetilde\Lambda^\prime)=&\sum_{\mu\in\mathcal{M}_{(i,j,\tau,T)}}(-1)^{d^{i,j}_\mu} \Delta_{L_\mu}((\eta\widetilde\Lambda^\prime)^{\perp \intercal})\Delta_{R_\mu\cup J_\mu}(\widetilde\Lambda^{\prime \intercal})\\
             =& \sum_{\mu\in\mathcal{M}_{(i,j,\tau,T)}}(-1)^{d^{i,j}_\mu} \Delta_{L_\mu^c}(\widetilde\Lambda^{\prime \intercal})\Delta_{R_\mu\cup J_\mu}(\widetilde\Lambda^{\prime \intercal}).
        \end{align*}
        Thus by Proposition~\ref{prop:inc plu},
        \begin{align*}
             c_{\tau,T}^{i,j}(\eta \widetilde\Lambda^\prime,\widetilde\Lambda^\prime)
             =& \sum_{\mu\in\mathcal{M}_{(i,j,\tau,T)}}(-1)^{d^{i,j}_\mu} \Delta_{L_\mu^c\cup \{s\}}(\widetilde\Lambda^{\intercal})\Delta_{R_\mu\cup J_\mu\cup\{s\}}(\widetilde\Lambda^{\intercal})\\
             &+\sum_{\mu\in\mathcal{M}_{(i,j,\tau,T)}}(-1)^{d^{i,j}_\mu} \Delta_{L_\mu^c\cup \{s+1\}}(\widetilde\Lambda^{\intercal})\Delta_{R_\mu\cup J_\mu\cup\{s\}}(\widetilde\Lambda^{\intercal})\\
             &+\sum_{\mu\in\mathcal{M}_{(i,j,\tau,T)}}(-1)^{d^{i,j}_\mu} \Delta_{L_\mu^c\cup \{s\}}(\widetilde\Lambda^{\intercal})\Delta_{R_\mu\cup J_\mu\cup\{s+1\}}(\widetilde\Lambda^{\intercal})\\
             &+\sum_{\mu\in\mathcal{M}_{(i,j,\tau,T)}}(-1)^{d^{i,j}_\mu} \Delta_{L_\mu^c\cup \{s+1\}}(\widetilde\Lambda^{\intercal})\Delta_{R_\mu\cup J_\mu\cup\{s+1\}}(\widetilde\Lambda^{\intercal}).
        \end{align*}
        First consider 
        \(
        \sum_{\mu\in\mathcal{M}_{(i,j,\tau,T)}}(-1)^{d^{i,j}_\mu} \Delta_{L_\mu^c\cup \{s+1\}}(\widetilde\Lambda^{\intercal})\Delta_{R_\mu\cup J_\mu\cup\{s\}}(\widetilde\Lambda^{\intercal})
        \).\\ Define \(\tau_1(s) = s\), \(\tau_1(s+1) = s+1\) and  \(\tau_1(\ell) = \tau(\ell)\) otherwise. Then \(S(\tau_1) =S(\tau)\). Set \(T_1 = T\cup \{s\}\). We have that \(\mathcal{M}_{(i,j,\tau,T)} = \mathcal{M}_{(i,j,\tau_1,T_1)} \). Clearly we have \((\tau_1,T_1)\in\mathcal T_{k+1,2k+2}^{i,j}\). Notice that since we added \(s\) to \(T\), we added \(s\) to both \(L_\mu\) and \(R_\mu\cup J_\mu\), thus
        \begin{align*}
             &\sum_{\mu\in\mathcal{M}_{(i,j,\tau,T)}}(-1)^{d^{i,j}_\mu} \Delta_{L_\mu^c\cup \{s+1\}}(\widetilde\Lambda^{\intercal})\Delta_{R_\mu\cup J_\mu\cup\{s\}}(\widetilde\Lambda^{\intercal})\\
             =&\sum_{\mu\in\mathcal{M}_{(i,j,\tau_1,T_1)}}(-1)^{d^{i,j}_\mu} \Delta_{L_\mu^c}(\widetilde\Lambda^{\intercal})\Delta_{R_\mu\cup J_\mu}(\widetilde\Lambda^{\intercal})\\
             =&\,c_{\tau_1,T_1}^{i,j}(\eta \widetilde\Lambda,\widetilde\Lambda).
        \end{align*}
        Similarly, by  defining \((t_2,T_2)\) with \(s+1\) instead of \(s\), we get 
         \begin{align*}
             &\sum_{\mu\in\mathcal{M}_{(i,j,\tau,T)}}(-1)^{d^{i,j}_\mu} \Delta_{L_\mu^c\cup \{s\}}(\widetilde\Lambda^{\intercal})\Delta_{R_\mu\cup J_\mu\cup\{s+1\}}(\widetilde\Lambda^{\intercal})\\
             =&\sum_{\mu\in\mathcal{M}_{(i,j,\tau_2,T_2)}}(-1)^{d^{i,j}_\mu} \Delta_{L_\mu^c}(\widetilde\Lambda^{\intercal})\Delta_{R_\mu\cup J_\mu}(\widetilde\Lambda^{\intercal})\\
             =&\,c_{\tau_2,T_2}^{i,j}(\eta \widetilde\Lambda,\widetilde\Lambda).
        \end{align*}
Finally, define \(\tau_3(s) =s+1\), \(\tau_3(s+1) = s\), and \(\tau_3(\ell) = \tau(\ell)\) otherwise, and \(T_3 = T\). We still have that \((\tau_3,T_3)\in \mathcal T_{k+1,2k+2}^{i,j}\) as the new arc crosses none of the old arcs. As \(i,j\leq2k-2\), \(\{s,s+1\}\) is not a special arc. Thus for any \(\mu\in\mathcal{M}_{i,j,\tau_3,T_3}\) we either have \(\mu(s) = L\) and \(\mu(s+1) = R\), or \(\mu(s) = R\) and \(\mu(s+1) = L\). Thus,

        \begin{align*}
        c_{\tau_3,T_3}^{i,j}(\eta \widetilde\Lambda,\widetilde\Lambda)=&
        \sum_{\mu\in\mathcal{M}_{(i,j,\tau_3,T_3)}}(-1)^{d^{i,j}_\mu} \Delta_{L_\mu^c}(\widetilde\Lambda^{\intercal})\Delta_{R_\mu\cup J_\mu}(\widetilde\Lambda^{\intercal})\\
        =&\sum_{\mu\in\mathcal{M}_{(i,j,\tau_3,T_3)},\,\,\mu(s)=L}(-1)^{d^{i,j}_\mu} \Delta_{L_\mu^c}(\widetilde\Lambda^{\intercal})\Delta_{R_\mu\cup J_\mu}(\widetilde\Lambda^{\intercal})\\
        &+\sum_{\mu\in\mathcal{M}_{(i,j,\tau_3,T_3)},\,\,\mu(s)=R}(-1)^{d^{i,j}_\mu} \Delta_{L_\mu^c}(\widetilde\Lambda^{\intercal})\Delta_{R_\mu\cup J_\mu}(\widetilde\Lambda^{\intercal}) \\
        =&\sum_{\mu\in\mathcal{M}_{(i,j,\tau,T)}}(-1)^{d^{i,j}_\mu} \Delta_{L_\mu^c\cup\{s+1\}}(\widetilde\Lambda^{\intercal})\Delta_{R_\mu\cup J_\mu\cup\{s+1\}}(\widetilde\Lambda^{\intercal})\\
        &+\sum_{\mu\in\mathcal{M}_{(i,j,\tau,T)}}(-1)^{d^{i,j}_\mu} \Delta_{L_\mu^c\cup\{s\}}(\widetilde\Lambda^{\intercal})\Delta_{R_\mu\cup J_\mu\cup\{s\}}(\widetilde\Lambda^{\intercal}).
        \end{align*}
        Combining the above we get
        \(
        c_{\tau,T}^{i,j}(\eta \widetilde\Lambda^\prime,\widetilde\Lambda^\prime) = c_{\tau_1,T_1}^{i,j}(\eta \widetilde\Lambda,\widetilde\Lambda)+c_{\tau_2,T_2}^{i,j}(\eta \widetilde\Lambda,\widetilde\Lambda)+c_{\tau_3,T_3}^{i,j}(\eta \widetilde\Lambda,\widetilde\Lambda).
        \)
        Therefore, if \(\widetilde\Lambda\) is strongly positive then so is \(\widetilde\Lambda^\prime\).
    \end{proof}

Recall for \([C,\widetilde\Lambda,Y]\in \mathcal{U}_k\), we defined
    \(
    \mathrm{Rot}_{i,i+1}^{-1} (\alpha) [C,\widetilde\Lambda,Y] = [\mathrm{Rot}_{i,i+1}^{-1}(\alpha) C,\mathrm{Rot}_{i,i+1}(\alpha)^{-1}\widetilde\Lambda,Y],
    \)
that is \( \mathrm{Rot}_{i,i+1}^{-1}(\widetilde\Lambda)= R_{i,i+1}(\alpha)\widetilde\Lambda\), with \(R_{i,i+1}(\alpha)\) being the hyperbolic rotation as defined in Definition~\ref{def:rot}. In Section~\ref{sec:finding strongly pos mat} we will also show that
\begin{prop}
\label{imm pos rot prop}
For \(\widetilde\Lambda\in\mathcal{L}^>_k\) and \(\alpha>0\), we have that \(\mathrm{Rot}_{i,i+1}(\alpha)^{-1}(\widetilde\Lambda)\in\mathcal{L}^>_{k}\).
\end{prop}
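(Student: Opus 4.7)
My plan is to adapt the direct expansion strategy used in the proofs of Propositions~\ref{imm pos closed cyc prop} and~\ref{prop:imm_under_inv_inj}: I would rewrite each non-trivial coefficient $c_{\tau,T}^{i,j}(\eta\widetilde\Lambda',\widetilde\Lambda')$ as a non-negative linear combination of coefficients $c_{\tau'',T''}^{i,j}(\eta\widetilde\Lambda,\widetilde\Lambda)$, where the combination is strictly positive whenever $\alpha>0$. Since $\widetilde\Lambda\in\mathcal L^>_k$ guarantees positivity of each such term, the desired strong positivity of $\widetilde\Lambda'$ follows.

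First, writing $\widetilde\Lambda'=R\widetilde\Lambda$ with $R=R_{i,i+1}(\alpha)$, I would establish the relevant Plücker-level expansions. The matrix $R$ differs from the identity only in the $\{i,i+1\}$-block, which has determinant $1$, so $\Delta_A(\widetilde\Lambda'^\intercal)=\Delta_A(\widetilde\Lambda^\intercal)$ whenever $\{i,i+1\}\cap A\in\{\emptyset,\{i,i+1\}\}$. When exactly one of $i,i+1$ belongs to $A$, Laplace expansion along the affected columns gives an explicit decomposition of $\Delta_A(\widetilde\Lambda'^\intercal)$ as a combination of $\Delta_A(\widetilde\Lambda^\intercal)$ and $\Delta_{A\triangle\{i,i+1\}}(\widetilde\Lambda^\intercal)$ with coefficients $\cosh\alpha$ and $\pm\sinh\alpha$. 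The analogous statement for minors of $(\eta\widetilde\Lambda')^{\perp\intercal}$ follows from Lemma~\ref{alt ortho} together with the identity $R^\intercal\eta R=\eta$ on the $\{i,i+1\}$-block.

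Next, I would substitute these expansions into the formula of Theorem~\ref{s immanant thm}, and group the resulting terms by the local behavior at $\{i,i+1\}$ of each marking $\mu\in\mathcal M_{(i,j,\tau,T)}$. Mimicking the trichotomy of $(\tau_1,T_1),(\tau_2,T_2),(\tau_3,T_3)$ appearing in the proof of Proposition~\ref{prop:imm_under_inv_inj}, I expect the sum to reorganize into contributions indexed by new pairings $(\tau'',T'')\in\mathcal T_{k,2k}^{i,j}$ obtained from $(\tau,T)$ by (a) declaring $i$ fixed, (b) declaring $i+1$ fixed, and (c) composing $\tau$ with the transposition $(i,i+1)$, together with the original $(\tau,T)$. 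The combinatorial content is then to verify that each of these modifications produces an element of $\mathcal T_{k,2k}^{i,j}$ whose $c_{\tau'',T''}^{i,j}(\eta\widetilde\Lambda,\widetilde\Lambda)$ appears with a coefficient that is a non-negative polynomial in $\cosh\alpha$ and $\sinh\alpha$. Using $\cosh^2\alpha-\sinh^2\alpha=1$, I expect the cross-terms to collapse so that only a constant $1$ together with products $\cosh\alpha\sinh\alpha$ and $\sinh^2\alpha$ survive as coefficients; each of these is strictly positive for $\alpha>0$, which yields the desired strict inequality.

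The main obstacle I anticipate is the sign bookkeeping. The factor $(-1)^{d_\mu^{i,j}}$ in the definition of $c_{\tau,T}^{i,j}$, combined with the signs arising from the Plücker expansion when $|A\cap\{i,i+1\}|=1$ and with the sign mismatch introduced by $\eta$ on the $\{i,i+1\}$-block, produces cancellations that must be checked in detail. In particular, one must track how the count $d_\mu^{i,j}$ of $I(i,j)$-special arcs between the two $J$-arcs of a marking changes under each of the modifications above; when $i$ or $i+1$ is itself an endpoint of a special arc of $\tau$, these modifications can create or destroy a special arc, and the analysis bifurcates into several subcases. This case-by-case verification, rather than any new conceptual ingredient, is where the bulk of the work lies.
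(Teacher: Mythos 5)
Your proposal takes a fundamentally different route from the paper, and the place where it stops is exactly where the difficulty lives. The paper does \emph{not} perform a direct Plücker/immanant expansion for the hyperbolic rotation. Instead, it observes that $r_i(t)$ factors as a product of the one-parameter totally non-negative generators,
\[
r_i(t)=y_i(\tanh t)\,x_i(\sinh t\cosh t)\,h_i(\mathrm{sech}\, t)\,h_{i+1}(\cosh t),
\]
and then invokes Lemma~\ref{pos poly lemma} (Galashin's result that left-multiplication by any single $x_s(t),y_s(t),h_s(t)$ with $t>0$ sends each non-trivial $c^{i,j}_{\tau,T}$ to a $\mathbb{Z}_{\ge0}[t]$-linear combination of non-trivial $c^{i,j}_{\tau',T'}$, with the diagonal coefficient nonzero). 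Applying that lemma four times in a row, together with the algebraic identity $r_i(t)^\intercal\eta\,r_i(t)=\eta$ to preserve the constraint $\eta\Lambda=\widetilde\Lambda$, gives strong positivity immediately. The $i=2k$ case is then reduced to $i=1$ via $\mathrm{Cyc}$ and Proposition~\ref{imm pos closed cyc prop}.

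What you sketch instead is a from-scratch expansion of the $c^{i,j}_{\tau,T}$ under $R_{i,i+1}(\alpha)$, hoping the terms reorganize into a manifestly non-negative combination of coefficients of modified $(\tau,T)$. You correctly anticipate the Plücker-level formulas for $|A\cap\{i,i+1\}|=1$, but you explicitly defer the entire sign/reorganization argument, writing that you ``expect'' the cross-terms to collapse and that the case-by-case verification ``is where the bulk of the work lies.'' That deferred step is not routine: a hyperbolic rotation mixes $\Delta_A$ with $\Delta_{A\triangle\{i,i+1\}}$ simultaneously in both factors of each term in Equation~\eqref{cijtT}, and there is no obvious reason the resulting reorganization lands on actual $c^{i,j}_{\tau'',T''}$ with non-negative polynomial coefficients in $\cosh\alpha,\sinh\alpha$ without first factoring $r_i$ into the elementary generators, which is precisely what makes the structure visible. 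So as written your argument has a genuine gap: the positivity you need is asserted, not established, and the mechanism that would establish it (the factorization into $x_s,y_s,h_s$ plus the already-proved Lemma~\ref{pos poly lemma}) is absent from the proposal. You also omit the $i=2k$ boundary case entirely.
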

Combining Propositions~\ref{imm pos closed cyc prop},~\ref{prop:imm_under_inv_inj},~\ref{imm pos rot prop}, we obtain
\begin{coro}
\label{coro:imm pos L closed}
    \(\mathcal{L}^>_{k}\) is closed under the action of the inverses of the \( \mathrm{Rot}\), \( \mathrm{Inc}\), and \( \mathrm{Cyc}\). As it is defined as a combination of the above, it is also closed under the inverse of the \( \mathrm{Arc}\) move.
\end{coro}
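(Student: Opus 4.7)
The plan is to observe that the three preceding propositions (Propositions~\ref{imm pos closed cyc prop}, \ref{prop:imm_under_inv_inj}, and \ref{imm pos rot prop}) already dispatch each of the three basic moves, so the only real work is unpacking the definition of $\mathrm{Arc}^{-1}$ as a composition. I will begin by noting that closure under $\mathrm{Cyc}$ (Proposition~\ref{imm pos closed cyc prop}) immediately gives closure under $\mathrm{Cyc}^{-1}$, since $\mathrm{Cyc}$ has finite order up to the sign conventions in its definition (and those sign flips preserve positivity of maximal minors, so they preserve strong positivity as well). Closure under $\mathrm{Inc}^{-1}$ is exactly Proposition~\ref{prop:imm_under_inv_inj}, and closure under $\mathrm{Rot}_{i,i+1}(\alpha)^{-1}$ for $\alpha>0$ is exactly Proposition~\ref{imm pos rot prop}.

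Next, I would deduce the $\mathrm{Arc}^{-1}$ statement directly from Definition~\ref{def:arc}, which presents
\[
\mathrm{Arc}_{n,\ell,k}(\theta) \;=\; \mathrm{Rot}_{i_{n-1},i_n}(\theta_{n-2})\cdots\mathrm{Rot}_{i_2,i_3}(\theta_1)\,\mathrm{Inc}_{\ell},
\]
so that
\[
\mathrm{Arc}_{n,\ell,k}(\theta)^{-1} \;=\; \mathrm{Inc}_{\ell}^{-1}\,\mathrm{Rot}_{i_2,i_3}(\theta_1)^{-1}\cdots\mathrm{Rot}_{i_{n-1},i_n}(\theta_{n-2})^{-1}.
\]
Applying this operator to a $\widetilde\Lambda\in\mathcal{L}^>_{k+1}$, each $\mathrm{Rot}^{-1}$ factor preserves $\mathcal{L}^>$ by Proposition~\ref{imm pos rot prop} (using $\theta_i>0$, which is the standard positivity condition attached to the parameters in $\mathrm{Arc}$ used throughout the paper), and the final $\mathrm{Inc}_\ell^{-1}$ drops us into $\mathcal{L}^>_k$ by Proposition~\ref{prop:imm_under_inv_inj}. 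Composing preserving maps gives the desired closure.

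The only non-routine ingredient is Proposition~\ref{imm pos rot prop} itself, whose proof is deferred to Section~\ref{sec:finding strongly pos mat}; modulo that proposition, the corollary is a one-line composition. The one thing I would be careful about in writing the proof is the scope of the positivity assumption on the rotation angle: the statement of closure under $\mathrm{Arc}^{-1}$ must be understood as applying to those $\mathrm{Arc}$-parameters $\theta$ that are genuinely positive (equivalently, those coming from a point in an orthitroid cell via the parametrization of Theorem~\ref{param bij}), which is precisely the regime in which Proposition~\ref{imm pos rot prop} operates. With that caveat, the corollary is immediate.
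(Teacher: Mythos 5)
Your proposal is correct and follows essentially the same approach as the paper: the corollary is obtained by combining Propositions~\ref{imm pos closed cyc prop}, \ref{prop:imm_under_inv_inj}, and \ref{imm pos rot prop}, and then noting that $\mathrm{Arc}^{-1}$ is the corresponding composition of $\mathrm{Inc}^{-1}$ with positive-angle $\mathrm{Rot}^{-1}$ factors. You unpack the composition and the finite-order argument for $\mathrm{Cyc}^{-1}$ more explicitly than the paper, which simply cites the three propositions, but the substance is identical.
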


\subsection{Finding Strongly Positive Matrices}
\label{sec:finding strongly pos mat}
We now show that \(\mathcal L_k^>\) is non-empty, and in fact contains the very nice large subset we define below, inspired by a related construction of \cite{origami}.
\begin{dfn}
    Given \(A\in \mathrm{Mat}_{(k+m)\times n} \), with \(k+m<n\), define
    \[
    \pi_{k-m,k+m}: \mathrm{Mat}_{n\times n}\rightarrow \mathrm{Mat}_{(k-m)\times n}\times \mathrm{Mat}_{(k+m)\times n},
    \]
    by 
    \(
    \pi_{k-m,k+m}(A) = (A_{\{k-m\}},A_{\{k+m\}}).
    \)
    That is, \(A\) restricted to the \(\{k\pm m\}\) rows respectively.
\end{dfn}

    \begin{dfn}
    \label{regular moves for tot pos def}
        For \(s\in[n-1]\) let \(x_s(t)\) and \(y_s(t)\) be the matrices obtained by taking the \(2k \times 2k\) identity matrix and replacing the block on the \(\{s, s+1\}\) rows and columns by 
        \(
        \begin{pmatrix}
            1&t\\
            0&1
        \end{pmatrix}
        \)
        and
        \(
        \begin{pmatrix}
            1&0\\
            t&1
        \end{pmatrix}
        \)
        respectively. For \(s\in[n]\) let \(h_s(t)\) be the matrix obtained by taking the \(2k\times 2k\) identity matrix and scaling the \((s,s)\) entry by \(t\). 

        Let \(\mathcal G\) be the semigroup generated by \(x_s(t)\), \(y_s(t)\) and \(h_s(t)\) for \(t>0\).
    \end{dfn}

    \begin{lem}[\cite{origami}, Section~6.2]
    \label{pos poly lemma}
        Write \((\Lambda^{ \perp \intercal }_0,\widetilde\Lambda^{\intercal}_0) = \pi_{k-2,k+2}(M)\), and \((\Lambda^{\perp \intercal },\widetilde\Lambda^\intercal) = \pi_{k-2,k+2}(M \, g(t))\), where \(t>0\) and \(g(t)\in\{x_s(t),y_s(t),h_s(t)\}\). Then for \((\tau,T)\in T_{k,2k}\) the expression
        \(
            c^{i,j}_{\tau,T}(\Lambda, \widetilde\Lambda) 
        \)
        is a \(\mathbb{Z}_{\geq 0}[t]\)-linear combination of non-trivial \(c^{i,j}_{\tau^\prime,T^\prime}(\Lambda_0,  \widetilde\Lambda_0)\), with the coefficient of \(c^{i,j}_{\tau,T}(\Lambda_0,  \widetilde\Lambda_0)\) being non-zero.
    \end{lem}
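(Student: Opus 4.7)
The plan is to analyze each of the three generator types $g(t) \in \{h_s(t), x_s(t), y_s(t)\}$ separately, using the fact that each performs a single elementary column operation on $M$ and therefore acts on each Pl\"ucker coordinate $\Delta_I(M)$ by a completely explicit formula. These formulas are linear in $t$ with non-negative coefficients, so substituting them into equation~(\ref{cijtT}) and regrouping the resulting sum as a linear combination of the $c^{i,j}_{\tau',T'}(\Lambda_0, \widetilde{\Lambda}_0)$ will produce the desired $\mathbb{Z}_{\geq 0}[t]$-expression. The $t^{0}$-term of the regrouping will clearly reproduce $c^{i,j}_{\tau,T}(\Lambda_0,\widetilde{\Lambda}_0)$ itself, so the non-vanishing of its coefficient will be immediate.

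First I would dispose of the toric case $g(t) = h_s(t)$. Here column $s$ of $M$ is scaled by $t$, so $\Delta_I(Mg(t)) = t^{[s\in I]}\Delta_I(M)$ for every $I$, and each product $\Delta_{L_\mu}(\Lambda^{\perp \intercal})\Delta_{R_\mu\cup J_\mu}(\widetilde{\Lambda}^\intercal)$ acquires a factor $t^{[s\in L_\mu]+[s\in R_\mu\cup J_\mu]}$. A short case analysis, splitting on whether $s\in T$, $s\in S(\tau)$, or $s\notin T\cup S(\tau)$, shows that this exponent is independent of the marking $\mu$: the key observation is that when $s\in S(\tau)$, exactly one of the conditions $\mu(s)=L$ or $\mu(s)\in\{R,J\}$ holds, contributing exactly one factor of $t$. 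Hence $c^{i,j}_{\tau,T}(\Lambda,\widetilde{\Lambda}) = t^{e}\,c^{i,j}_{\tau,T}(\Lambda_0,\widetilde{\Lambda}_0)$ for some $e\in\{0,1,2\}$, which is already of the required form.

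For $g(t) = x_s(t)$ (and, by symmetry, $y_s(t)$) the column operation yields $\Delta_I(Mg(t)) = \Delta_I(M) + t\,\Delta_{I'}(M)$ whenever $s+1\in I$ and $s\notin I$ (with $I' = I\setminus\{s+1\}\cup\{s\}$), and $\Delta_I(Mg(t)) = \Delta_I(M)$ otherwise. Expanding both factors in $\Delta_{L_\mu}\Delta_{R_\mu\cup J_\mu}$ produces up to four summands per marking, indexed by which factors acquire the $t$-term. The $t^0$-summand of the whole expression reproduces $c^{i,j}_{\tau,T}(\Lambda_0,\widetilde{\Lambda}_0)$ verbatim, yielding a coefficient of $1$ and thereby establishing non-vanishing. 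The remaining summands involve subsets obtained by replacing $s+1$ with $s$ in one or both of $L_\mu$ and $R_\mu\cup J_\mu$, and I would show that these correspond, via a natural bijection, to markings of a new partial non-crossing pairing $(\tau',T')$ obtained from $(\tau,T)$ by a local swap of the combinatorial data at positions $s$ and $s+1$ (e.g.\ moving an arc endpoint, or swapping between $T$ and an arc endpoint). Gathering these terms yields the claimed $\mathbb{Z}_{\geq 0}[t]$-combination.

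The main obstacle is the combinatorial bookkeeping in the $x_s$ and $y_s$ cases: one must verify that (i) the local swap produces a genuine non-crossing pairing, which follows from the adjacency of $s$ and $s+1$ together with non-crossing of $\tau$; (ii) the signs $(-1)^{d^{i,j}_\mu}$ match consistently across the reorganized sum, so that the resulting coefficients are truly non-negative in $\mathbb{Z}[t]$; and (iii) each surviving $(\tau',T')$ remains non-trivial, i.e.\ keeps at least two $I(i,j)$-special arcs. Point (iii) is automatic from the formula, since by its definition $c^{i,j}_{\tau',T'}$ vanishes whenever $\mathcal{M}_{(i,j,\tau',T')} = \emptyset$, so trivial pairings simply drop out of the sum and need not appear in the final expression. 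The argument is parallel to the one carried out for the momentum amplituhedron in \cite{origami}, and translating it to our $\pi_{k-2,k+2}$-split amounts to checking that the row-splitting of $M$ is respected by the column operations, which is immediate since $g(t)$ only acts on columns.
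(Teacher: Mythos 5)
The paper does not give its own proof of this lemma: it is cited verbatim to Galashin's paper \cite{origami}, Section~6.2, and used as a black box throughout Section~\ref{sec:immanant_pos}. So there is no internal proof to compare against, and your attempt must be judged on its own merits.

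Your treatment of the $h_s(t)$ case is correct and complete: column~$s$ is scaled by $t$, the exponent $[s\in L_\mu]+[s\in R_\mu\cup J_\mu]$ is constant over $\mu$ (it is $2$ if $s\in T$, $1$ if $s\in S(\tau)$, and $0$ otherwise), so $c^{i,j}_{\tau,T}(\Lambda,\widetilde\Lambda)=t^e\,c^{i,j}_{\tau,T}(\Lambda_0,\widetilde\Lambda_0)$. Your $x_s(t)$ expansion formula $\Delta_I\mapsto\Delta_I+t\,\Delta_{I\setminus\{s+1\}\cup\{s\}}$ (when $s+1\in I$, $s\notin I$; unchanged otherwise) is also correct, including the absence of a sign since $s,s+1$ are adjacent. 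And the observation that the $t^0$-coefficient of the regrouped sum is exactly $c^{i,j}_{\tau,T}(\Lambda_0,\widetilde\Lambda_0)$ is right.

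The genuine gap is in the claim that the $t$-terms ``correspond, via a natural bijection, to markings of a new partial non-crossing pairing $(\tau',T')$ obtained from $(\tau,T)$ by a local swap.'' This bijection breaks down precisely when a $J$-label sits at $s$ or $s+1$. Concretely, suppose $s,s+1\in S(\tau)$ are endpoints of two distinct arcs, with $\mu(s)=J$ and $\mu(s+1)=L$, so that $\mu(\tau(s))=R$ and $\mu(\tau(s+1))=R$. Then $\Delta_{L_\mu}$ expands and the $t$-term is $\Delta_{L'_\mu}\Delta_{R_\mu\cup J_\mu}$ with $L'_\mu=L_\mu\setminus\{s+1\}\cup\{s\}$. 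Computing $T'=L'_\mu\cap(R_\mu\cup J_\mu)=T\cup\{s\}$ and $S(\tau')=S(\tau)\setminus\{s,s+1\}$, the natural replacement pairing has a new arc $\{\tau(s),\tau(s+1)\}$ whose two endpoints both carry the label $R$ — which is not a legal marking of any pairing. One might hope to fix this by relabelling one of $\tau(s),\tau(s+1)$ as $J$, but that requires $\{\tau(s),\tau(s+1)\}$ to be $I(i,j)$-special, which generically fails: for instance, if $s\in I(i,j)$, $\{s,\tau(s)\}$ is special, $\{s+1,\tau(s+1)\}$ is special, then neither $\tau(s)$ nor $\tau(s+1)$ lies in $I(i,j)$ and the new arc is not special. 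Moreover, even in the cases where a candidate $\mu'$ exists, your point~(ii) — that $(-1)^{d^{i,j}_\mu}$ matches $(-1)^{d^{i,j}_{\mu'}}$ — is delicate: when $\tau(s)=s+1$ (so the arc $\{s,s+1\}$ is contracted into $T'$), that arc may itself be $I(i,j)$-special and lie between the two $J$-arcs, in which case $d$ drops by one and the sign flips. Neither issue is resolved by your appeal to ``adjacency of $s$ and $s+1$.''

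What is actually needed is a regrouping argument at the level of the whole sum over $\mu$, not term-by-term: the problematic $t$-terms must be combined across several markings (and across several target pairings $(\tau',T')$) to produce the non-negative coefficients. This is exactly the content of Galashin's argument, which uses the structure of the Temperley–Lieb coefficients under the skein-type compatibility relations. Your sketch correctly identifies the algebraic starting point and the $h_s$ case, but the core combinatorial lemma it needs — that the $x_s$- and $y_s$-induced $t$-terms assemble into non-negative combinations — is precisely the hard part of the result and is not established by the bijection you propose.
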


    \begin{dfn}[\cite{origami}, Section~6.2]
        Let \(\mathbf{Fl_{>0}}(k-2,k+2)\defeq \{\pi_{k-m,k+m}(M)|M\in \mathrm{Mat}_{n\times n}^{\gg0}(\mathbb{R})\}\), where \( \mathrm{Mat}_{n\times n}^{\gg0}(\mathbb{R})\) be the \(n\times n\) \emph{totally} positive matrices, that is, matrices with all minors of all sizes being positive.
    \end{dfn}

        \begin{thm}[\cite{origami}, Theorem~6.5]
    \label{immanant pos totaly pos thm}
    For \(2\leq k\leq n-2\), and \((\Lambda^{\perp \intercal },\widetilde\Lambda^\intercal)\in \mathbf{Fl_{>0}}(k-2,k+2)\), we have that \((\Lambda,\widetilde \Lambda)\) are strongly positive. 
    \end{thm}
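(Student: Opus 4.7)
\bigskip

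\noindent\textbf{Proof proposal.} The plan is to reduce the theorem to Lemma~\ref{pos poly lemma} by an induction on the length of a Lusztig factorization. By Lusztig's theorem on total positivity in $\mathrm{GL}_n$, every matrix $M\in\mathrm{Mat}^{\gg 0}_{n\times n}(\mathbb R)$ admits a factorization
\[
M = h\cdot g_1(t_1)\cdots g_N(t_N),
\]
where $h$ is a diagonal matrix with positive entries, each $g_j(t_j)$ is one of $x_{s_j}(t_j)$ or $y_{s_j}(t_j)$ coming from a fixed reduced word for the longest element of $S_n$, and all $t_j>0$. Writing $M^{(j)}:=h\cdot g_1(t_1)\cdots g_j(t_j)$ and $(\Lambda^{(j)},\widetilde\Lambda^{(j)}):=\pi_{k-2,k+2}(M^{(j)})$, the strategy is to prove, by induction on $j$, that every non-trivial $c^{i,j}_{\tau,T}(\Lambda^{(j)},\widetilde\Lambda^{(j)})$ is strictly positive, finally obtaining the theorem at $j=N$.

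The inductive step is forced by Lemma~\ref{pos poly lemma}. Assume all non-trivial $c^{i,j}_{\tau',T'}(\Lambda^{(j-1)},\widetilde\Lambda^{(j-1)})$ are positive. The lemma expresses
\[
c^{i,j}_{\tau,T}(\Lambda^{(j)},\widetilde\Lambda^{(j)}) \;=\; \sum_{(\tau',T')}\, f_{\tau',T'}(t_j)\,c^{i,j}_{\tau',T'}(\Lambda^{(j-1)},\widetilde\Lambda^{(j-1)}),
\]
where each $f_{\tau',T'}\in \mathbb{Z}_{\geq 0}[t]$ and $f_{\tau,T}\not\equiv 0$. A non-zero polynomial with non-negative integer coefficients is strictly positive on $\mathbb R_{>0}$, so $f_{\tau,T}(t_j)>0$. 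Combined with $c^{i,j}_{\tau,T}(\Lambda^{(j-1)},\widetilde\Lambda^{(j-1)})>0$ from the inductive hypothesis, this gives a strictly positive contribution; all remaining summands are non-negative, and hence $c^{i,j}_{\tau,T}(\Lambda^{(j)},\widetilde\Lambda^{(j)})>0$.

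The main obstacle is the base case, since $M^{(0)}=h$ is diagonal and not totally positive, so many $c^{i,j}_{\tau,T}(\Lambda^{(0)},\widetilde\Lambda^{(0)})$ vanish at the start. The cleanest route is to weaken the induction in its early stages to \emph{non-negativity} rather than strict positivity, while tracking which $c^{i,j}_{\tau,T}$ are already positive; Lemma~\ref{pos poly lemma} then shows that once one non-trivial $c^{i,j}_{\tau',T'}$ attains a positive value with non-zero coefficient in the expansion, positivity spreads. One then argues that after applying the entire reduced word—i.e.\ once $M^{(N)}$ reaches the open Bruhat cell and becomes totally positive—all Plücker minors $\Delta_{L_\mu}(\Lambda^{\perp\intercal})$ and $\Delta_{R_\mu\cup J_\mu}(\widetilde\Lambda^\intercal)$ in \eqref{cijtT} are strictly positive and every marked configuration has been ``activated,'' forcing strict positivity of each non-trivial $c^{i,j}_{\tau,T}$. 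A second option is to bypass the base case entirely: combine Lam's non-negativity of the Temperley--Lieb immanants $\Delta_{\tau,T}(C)\geq 0$ on $\Grnon{k}{n}$ with a positroid-cell-separation argument—choosing, for each $(\tau,T)$, a cell on which $\Delta_{\tau,T}$ is the unique non-vanishing immanant—which reduces non-negativity of $c^{i,j}_{\tau,T}$ to non-negativity of a single Mandelstam variable; strict positivity is then extracted by the openness of $\mathbf{Fl_{>0}}(k-2,k+2)$ and continuity of $c^{i,j}_{\tau,T}$ in $(\Lambda,\widetilde\Lambda)$.
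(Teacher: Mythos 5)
Two remarks up front. First, the paper does not prove this theorem; it cites it verbatim from Galashin's paper \cite{origami} as Theorem~6.5, so there is no in-text proof to compare against. Second, the paper \emph{does} carry out an argument of exactly the shape you sketch — Lusztig/Fomin--Zelevinsky factorization together with iterating Lemma~\ref{pos poly lemma} — but for Corollary~\ref{L0M pos coro}, where the base point is the specific block matrix $\Lambda^\intercal_{0,k}$ rather than a diagonal $h$, and where the non-negativity of all $c^{i,j}_{\tau,T}$ at that base point is established by the long combinatorial Lemma~\ref{pos on L0 lemma}. So the skeleton you have chosen is the right one, but you have skipped the step that carries essentially all of the weight.

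That skipped step is the gap. Formula~\eqref{cijtT} for $c^{i,j}_{\tau,T}$ is a \emph{signed} sum over markings $\mu$, via $(-1)^{d^{i,j}_\mu}$. At your diagonal base case the Plücker minors degenerate to $0$ or to a single nonvanishing value, but several markings can still survive and they can carry opposite signs, so the resulting sum has no a priori sign. You flag this as ``the main obstacle'' and then propose to ``weaken the induction to non-negativity'' — but this does not resolve anything: non-negativity of $c^{i,j}_{\tau,T}$ at the base is still precisely what must be proved, and it is precisely the hard part. Observe that the paper's own analogous statement, Lemma~\ref{pos on L0 lemma}, occupies the entirety of Section~\ref{the base case section} and requires the circle-graph cancellation machinery to handle these signs; nothing in your proposal plays that role.

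The ``positivity spreads / activation'' step is also not supported by Lemma~\ref{pos poly lemma}. That lemma guarantees a nonzero $\mathbb{Z}_{\geq 0}[t]$-coefficient only for the \emph{same} pair $(\tau,T)$ in the expansion; it says nothing about the coefficients of other pairs $(\tau',T')$. So if $c^{i,j}_{\tau,T}$ vanishes at the base while some $c^{i,j}_{\tau',T'}$ is positive, the lemma alone does not let you conclude that $c^{i,j}_{\tau,T}$ becomes positive after more factors. The way the paper closes this gap in Corollary~\ref{L0M pos coro} is by arguing that $c^{i,j}_{\tau,T}$ is not \emph{identically} zero as a polynomial on an open set — but that argument itself invokes Theorem~\ref{immanant pos totaly pos thm}, so you cannot use it here without circularity. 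Likewise, the final remark that ``all Plücker minors strictly positive forces strict positivity of each non-trivial $c^{i,j}_{\tau,T}$'' does not follow: positivity of each summand's modulus is no obstacle to cancellation through the signs $(-1)^{d^{i,j}_\mu}$. Your second alternative (positroid-cell separation plus Lam's non-negativity of immanants) is too sketchy to evaluate, and its closing claim — extracting strict positivity from non-negativity via openness and continuity alone — is false without an additional algebraic input.

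In short: the framework is the right one, but the two load-bearing claims (base-case non-negativity and spreading of positivity) are stated, not proved, and neither is obvious; this is where the actual content of Galashin's Theorem~6.5 lives.
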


    Observe that acting with the generators \( \{x_s(t), y_s(t), h_s(t) \}\) for \( t \in \mathbb{R}_{>0} \) on a totally positive matrix by multiplication from the left results in a totally positive matrix. Totally positive matrices have been extensively studied in the past. Fomin and Zelevinsky \cite{FZ} give a very powerful characterization of the space of totally positive matrices, including the following:

    \begin{thm}[\cite{FZ}, Theorem~5]
           \label{totally pos para thm}
        There exists a series of \(g_i\in \{x_s\, y_s, h_s \}\) such that any matrix \( M \in \mathrm{Mat}_{n\times n }^{\gg0}(\mathbb{R}) \) can be represented as
        \(
        M = g_1(t_1) g_2(t_2)...g_l(t_l)
        \)
         for some \(t_i>0\).
    \end{thm}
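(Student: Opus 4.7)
The plan is to reduce the statement to a parametrization of the totally positive unipotent subgroups via reduced words in the symmetric group.

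First, I would establish an $LDU$ decomposition $M = L D U$, where $L$ is lower unitriangular, $U$ is upper unitriangular, and $D$ is diagonal. Positivity of the leading principal minors of $M$ guarantees the existence of this factorization via Gauss elimination, and a direct computation with Schur complements shows that $L$ and $U$ inherit total non-negativity from $M$. The diagonal piece factors immediately as $D = h_1(d_1)\cdots h_n(d_n)$ with $d_i>0$, so the task reduces to factoring an arbitrary totally non-negative unitriangular matrix as a product of the $y_s(t)$ (in the lower case) or $x_s(t)$ (in the upper case) with $t>0$.

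Next, I would fix a reduced expression $s_{i_1}\cdots s_{i_N}$ for the longest element $w_0 \in S_n$, with $N=\binom{n}{2}$, and show that the map
\[
(t_1,\ldots,t_N)\;\longmapsto\; y_{i_1}(t_1)\,y_{i_2}(t_2)\cdots y_{i_N}(t_N)
\]
is a bijection from $\mathbb{R}_{>0}^N$ onto the totally positive lower unitriangular matrices. Injectivity is proved by solving for each $t_r$ as an explicit ratio of minors of the partial product, using the fact that left multiplication by $y_s(t)$ acts affinely on minors in a controlled way. Surjectivity proceeds by induction on $N$: given a totally non-negative $L$, one identifies an elementary factor $y_{i_1}(t_1)$ such that $y_{i_1}(t_1)^{-1}L$ is again totally non-negative, and iterates until one reaches the identity matrix.

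The main obstacle is verifying that this peeling step preserves total non-negativity at every stage. The cleanest way I know is to model factorizations graphically via the Lindstr\"om--Gessel--Viennot lemma, which identifies totally non-negative matrices with transfer matrices of weighted planar networks; the elementary factors $y_s(t)$ correspond to individual crossings in the network, and reduced words for $w_0$ correspond precisely to wiring diagrams realizing all crossings. Peeling off a boundary crossing preserves planarity, and hence total non-negativity, giving the inductive step. Assembling the lower, diagonal, and upper factorizations then yields the required expression $M = g_1(t_1)\cdots g_\ell(t_\ell)$ with $t_i>0$.
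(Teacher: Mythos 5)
The paper does not prove this theorem; it is cited directly from \cite{FZ}, so I am comparing your sketch against the standard factorization argument of Whitney--Loewner--Fomin--Zelevinsky. Your overall architecture is exactly that argument: an $LDU$ decomposition inherited from positivity of leading principal minors, factorization of $D$ as $h_1(d_1)\cdots h_n(d_n)$, then factorization of the totally positive unitriangular blocks along a reduced word for $w_0$. The injectivity step, solving for each $t_r$ by ratios of minors, is the chamber-ansatz direction and is fine in outline.

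The gap is in your surjectivity step, and it is a real one. You invoke LGV to ``identify totally non-negative matrices with transfer matrices of weighted planar networks,'' but LGV only gives one direction: a planar network with nonnegative weights yields a totally nonnegative path matrix, so products $y_{i_1}(t_1)\cdots y_{i_N}(t_N)$ are TNN. The converse --- that an arbitrary TP unitriangular $L$ \emph{is} the transfer matrix of such a network --- is precisely the content of the factorization theorem you are trying to prove. Your peeling step reads off a ``boundary crossing'' from a network presentation of $L$, but you have no such presentation until the theorem is established; as written, the induction is circular.

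To close it, you need the peeling to be a matrix-level argument rather than a graph-level one. Two standard fixes: (a) run the classical Loewner--Whitney reduction directly --- identify the correct parameter $t_1$ (a specific subdiagonal entry or ratio of minors of $L$), show $t_1>0$ on the TP cell, and verify by a Schur-complement or minor computation that $y_{i_1}(t_1)^{-1}L$ remains in the appropriate class; or (b) write down the full chamber ansatz, giving explicit subtraction-free rational expressions for every $t_r$ in terms of minors of $L$, check positivity of each on the TP cell, and verify that substituting them back reproduces $L$. Option (b) is what \cite{FZ} actually does. The planar-network picture is an excellent guide for discovering the formulas, but it cannot serve as the logical engine of the induction.
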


\begin{obs}
    For \(g\in \{x_s\, y_s, h_s \},~t>0\) and every \(n\times n\) matrix \(A\), the \(I, J\in\binom{[n]}{k}\) minor \(\Delta_{I,J}(Ag(t))\) of \(Ag(t)\), can be written as a $\mathbb{Z}_{\geq0}[t]$-linear combination of the minors of $A,$ such that the coefficient of \(\Delta_{I,J}(A) \) is non-zero. In particular, if \(t>0\) and \(\Delta_{I,J}(A)>0\), then  \(\Delta_{I,J}(Ag(t))>0\).
\end{obs}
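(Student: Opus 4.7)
The plan is a straightforward case analysis on the three generator types, reducing to multilinearity of the determinant. The key observation is that right-multiplication by $g(t)$ alters at most one column of $A$ in a completely controlled way: $h_s(t)$ scales column $s$ by $t$; $x_s(t)$ replaces column $s+1$ by $A^{s+1} + tA^s$ while leaving all other columns fixed; and $y_s(t)$ replaces column $s$ by $A^s + tA^{s+1}$ while leaving all other columns fixed. Consequently, for any $k \times k$ submatrix of $Ag(t)$ selected by $(I,J)$, multilinearity of $\det$ in the single modified column (if it appears among those indexed by $J$) expresses $\Delta_{I,J}(Ag(t))$ as a $\mathbb{Z}$-linear combination of at most two minors of $A$.

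The cases to verify: for $h_s(t)$ one has $\Delta_{I,J}(Ah_s(t)) = \Delta_{I,J}(A)$ if $s \notin J$, and $t\,\Delta_{I,J}(A)$ if $s \in J$. For $x_s(t)$, if $s+1 \notin J$ nothing changes, while if $s+1 \in J$ multilinearity gives $\Delta_{I,J}(A) + t\,D$, where $D$ is the determinant of the submatrix obtained by substituting $A^s$ for the $(s+1)$-th column. If additionally $s \in J$, then $D = 0$ by the repeated-column rule; otherwise, $D = \Delta_{I,J'}(A)$ with $J' = (J\setminus\{s+1\})\cup\{s\}$. The one subtlety is verifying no sign is incurred in identifying this $D$ with the minor $\Delta_{I,J'}(A)$: since $s \notin J$ and $s+1 \in J$, replacing $s+1$ by $s$ in the column-sequence keeps it increasing (the immediate predecessor of $s+1$ in $J$ is some $j < s$), so the column order is already sorted and the sign is $+1$. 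The case of $y_s(t)$ is symmetric, swapping the roles of $s$ and $s+1$.

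In every case, $\Delta_{I,J}(Ag(t))$ is visibly a $\mathbb{Z}_{\geq 0}[t]$-linear combination of minors of $A$, and the coefficient of $\Delta_{I,J}(A)$ itself is either $1$ or $t$, hence non-zero. For the ``in particular'' clause, note that the other minor potentially appearing in the combination is itself a minor of $A$; when all minors of $A$ are assumed positive (as in the intended application to totally positive $A$), evaluating at $t > 0$ yields a sum of non-negative terms with a strictly positive summand $\Delta_{I,J}(A)$ (or $t\,\Delta_{I,J}(A)$), proving $\Delta_{I,J}(Ag(t)) > 0$. There is no genuine obstacle in the argument; the only care needed is tracking the sign in the column-reordering step for $x_s$ and $y_s$, which is handled by the elementary observation above.
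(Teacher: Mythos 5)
Your proof is correct, and it fills in details the paper leaves implicit, since this Observation is stated there without proof. The case analysis on $h_s, x_s, y_s$, the use of column multilinearity, and the verification that no sign is introduced when $s\notin J$ and $s+1\in J$ (so that replacing $s+1$ by $s$ leaves the index set sorted) are all accurate; this sign check is indeed the one point that needs care.

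One remark on the statement rather than your proof: as literally written, the paper's ``in particular'' clause claims positivity of $\Delta_{I,J}(Ag(t))$ assuming only $\Delta_{I,J}(A)>0$, which does not follow, since for $g=x_s$ or $y_s$ the expansion $\Delta_{I,J}(A)+t\,\Delta_{I,J'}(A)$ involves a second minor that could a priori be negative. You handle this correctly by noting that the conclusion requires the second minor to be non-negative, which holds in the paper's intended use (multiplying a totally positive matrix $A$ on the right, as in Corollary~\ref{more positive coro}). It would be worth making that conditionality explicit, e.g., by phrasing the conclusion as: if $A$ is totally positive and $t>0$, then $Ag(t)$ is totally positive.
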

 
\begin{coro}
\label{more positive coro}
Let \(M = f_1(t_1) f_2(t_2)...f_l(t_r)\) a series of \(f_i\in \{x_s\, y_s, h_s \}\) with \(t_i>0\). If there exist a series \(i_1<i_2<...<i_l\) such that \(f_{i_j} = g_j\) for \(j\in[l]\) then \(M\) is totally positive.
    
\end{coro}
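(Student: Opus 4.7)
The plan is to reduce the claim to the observation preceding it together with Theorem~\ref{totally pos para thm}, via a polynomial-positivity argument.

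First, by iterating the observation \(r\) times starting from the identity matrix \(A_0 = I\) (whose minors are all in \(\{0,1\}\)), each minor \(\Delta_{I,J}(M)\) of \(M = f_1(t_1)\cdots f_r(t_r)\) is a polynomial in \(t_1,\ldots,t_r\) with non-negative integer coefficients. Hence it suffices to exhibit, for each \(I,J\in\binom{[n]}{k}\), at least one positive assignment of the \(t_i\) at which the polynomial is non-zero; since the coefficients are non-negative this will force positivity for \emph{all} positive \(t_i\).

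To find such an assignment, I would exploit the subsequence hypothesis by ``neutralizing'' the non-subsequence factors. Note that \(x_s(0) = y_s(0) = I\) and \(h_s(1) = I\). So I would evaluate at
\[
t_i \;=\;
\begin{cases}
0 & i \notin \{i_1,\ldots,i_l\},\ f_i \in \{x_s, y_s\}, \\
1 & i \notin \{i_1,\ldots,i_l\},\ f_i = h_s, \\
s_j > 0 & i = i_j \text{ for some } j,
\end{cases}
\]
with arbitrary positive \(s_j\). Under this substitution every non-subsequence factor becomes the identity, and \(M\) reduces to \(N(s_1,\ldots,s_l) := g_1(s_1)\cdots g_l(s_l)\).

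It then remains to show that \(N(s_1,\ldots,s_l)\) is totally positive for all \(s_j > 0\) (not merely for the specific values guaranteed by Theorem~\ref{totally pos para thm}). This follows from exactly the same polynomial argument: each minor of \(N(s_1,\ldots,s_l)\) is a polynomial in \(s_1,\ldots,s_l\) with non-negative coefficients (by the observation applied to the \(g_j\) sequence starting from \(I\)), and Theorem~\ref{totally pos para thm} supplies specific positive values \(s_j^0\) at which the polynomial is positive, hence not identically zero; non-negativity of coefficients then upgrades this to positivity at all positive \(s_j\). Plugging this back into the minor polynomial of \(M\) shows it is non-zero at our chosen assignment, completing the argument.

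I do not anticipate a serious obstacle: the only mildly subtle point is handling the \(h_s\)-type generators among the non-subsequence factors, since \(h_s(0)\) is singular and a naive ``set to zero'' substitution would fail; the fix is the asymmetric substitution above, which uses \(h_s(1) = I\) instead. Everything else is bookkeeping via the observation.
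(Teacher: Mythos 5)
Your proof is correct, and since the paper states Corollary~\ref{more positive coro} without proof, your iteration of the preceding observation followed by the ``neutralize the extra factors'' substitution is exactly the argument those two results are designed to support. One small wording slip: you announce that you will ``exhibit a positive assignment'' of the $t_i$, but the substitution you then construct sets some $t_i$ to $0$, so the assignment is only non-negative. This does not hurt the argument — a polynomial with non-negative coefficients that is strictly positive at some point of $\mathbb{R}_{\ge0}^r$ must have a nonzero coefficient, hence is strictly positive on all of $\mathbb{R}_{>0}^r$ — but the statement should say ``non-negative assignment.'' Also, your step showing $N(s)$ is totally positive for \emph{all} $s>0$ is more than you need: Theorem~\ref{totally pos para thm} already hands you a single $s^0>0$ with $N(s^0)$ totally positive (represent any fixed totally positive matrix in the $g_j$-parametrization), and that one point suffices to rule out the polynomial being identically zero.
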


    Let us consider a different point of view. For \(M\) an \(n\times n\) matrix define
    \(
    c_{\tau,T}^{i,j}(M) \defeq c_{\tau,T}^{i,j}(\Lambda,\widetilde \Lambda),
    \)
    where \((\Lambda^{\perp \intercal },\widetilde\Lambda^\intercal)=\pi_{k-m,k+m}(M)\).

 Recall that we are interested in the case of \(n=2k\), \(m=2\), and \(\eta \Lambda = \widetilde \Lambda\) for the orthogonal momentum amplituhedron. When we define those using \(\pi_{k-m,k+m}\), we have that \(\Lambda^{\perp\intercal}\subset\widetilde\Lambda^\intercal \) as spaces. Thus \(\eta \Lambda = \widetilde \Lambda\) is equivalent to \(\Lambda^{\perp\intercal} \, \eta \,\Lambda^{\perp}=0\).

Let \(e_i\) be the standard basis vectors of \(\mathbb{R}^{2k}\). Define the \(2k\times 2k\)\ matrix \(\Lambda_{0,k}\) by \((\Lambda_{0,k}^\intercal)_i =e_{2i-1}+e_{2i}\) for \(1\leq i \leq k-2\), and \((\Lambda_{0,k}^\intercal)_i = e_i\) for \(k-2<i\leq 2k\), and the rest of the rows chosen arbitrarily to get a matrix of rank \(2k\).

\[
\Lambda_{0,k}^\intercal =\scalebox{0.7}{\(
\begin{pmatrix}
    1&1&&&&&&&&&\\
    &&1&1&&&&&&&\\
    &&&&\ddots&&&&&&\\
    &&&&&1&1&&&&\\
    &&&&&&&1&&&\\
    &&&&&&&&1&&\\
    &&&&&&&&&1&\\
    &&&&&&&&&&1\\
    *&*&*&*&...&*&*&*&*&*&*\\
    &\vdots&&&&&&&&\vdots&\\
    *&*&*&*&...&*&*&*&*&*&*\\
\end{pmatrix}
\)}\]

In Section~\ref{the base case section}, we will prove the following lemma:
\begin{lem}
\label{pos on L0 lemma}
    Fix $k>2.$ For \((\Lambda_{k},\widetilde\Lambda_{k})\) such that \((\Lambda_{k}^{\perp \intercal },\widetilde\Lambda_{k}^\intercal) = \pi_{k-2,k+2}(\Lambda_{0,k}^\intercal)\), we have
    \\\({
     c_{\tau,T}^{i,j}(\Lambda_{k},\widetilde\Lambda_{k})\geq0
     }\)
\end{lem}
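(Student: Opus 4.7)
The plan is to exploit the sparse structure of $\Lambda_{0,k}$ to directly simplify formula~\eqref{cijtT} for $c_{\tau,T}^{i,j}$. The rows of $\Lambda_k^{\perp\intercal}$ are the ``pair vectors'' $e_{2i-1}+e_{2i}$, $i=1,\dots,k-2$, so by inspection the maximal minor $\Delta_L(\Lambda_k^{\perp\intercal})$ equals $1$ when $L\subseteq\{1,\dots,2k-4\}$ selects exactly one element from each pair $\{2i-1,2i\}$, and equals $0$ otherwise. Likewise, since $\widetilde\Lambda_k^\intercal$ has the additional rows $e_{k-1},e_k,e_{k+1},e_{k+2}$, the minor $\Delta_M(\widetilde\Lambda_k^\intercal)$ is $1$ precisely when $M$ is a disjoint union $L'\sqcup\{k-1,k,k+1,k+2\}$ with $L'$ a transversal of the pairs, and vanishes otherwise; this can be checked by a block-triangular expansion using the identity structure of both the pair rows (on $L'$) and the singleton rows (on $\{k-1,\dots,k+2\}$).

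Substituting into~\eqref{cijtT}, the coefficient $c_{\tau,T}^{i,j}(\Lambda_k,\widetilde\Lambda_k)$ collapses to a purely combinatorial signed count
\[
c_{\tau,T}^{i,j}(\Lambda_k,\widetilde\Lambda_k) \;=\; \sum_{\mu\in\mathcal{M}^\star_{(i,j,\tau,T)}} (-1)^{d^{i,j}_\mu},
\]
where $\mathcal{M}^\star$ collects the markings $\mu\in\mathcal{M}_{(i,j,\tau,T)}$ for which $L_\mu$ is a transversal of the pairs and $(R_\mu\cup J_\mu)\setminus\{k-1,k,k+1,k+2\}$ is another such transversal. I would then unpack the combinatorial constraints this imposes on $(\tau,T,\mu)$: the inclusion $\{k-1,k,k+1,k+2\}\subseteq R_\mu\cup J_\mu$, together with the non-crossing condition on $\tau$, severely restricts the allowed arc patterns, and the transversal conditions fix the $L/R/J$-labels on each pair up to a single binary choice per ``free'' pair not yet pinned down by $T$ or by $\{k-1,\dots,k+2\}$.

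The heart of the argument is then a sign-reversing involution $\iota\colon\mathcal{M}^\star\to\mathcal{M}^\star$ matching markings whose $d^{i,j}_\mu$-parities differ. A natural candidate is to swap the transversal choice on one carefully selected pair $\{2i-1,2i\}$; this forces a redistribution of the $L/R/J$-labels on the arc endpoints attached to that pair, and can be arranged so that exactly one $I(i,j)$-special arc moves across a $J$-arc, flipping the parity of $d^{i,j}_\mu$. The surviving fixed points of $\iota$ should then all carry sign $+1$, which yields the required non-negativity. \textbf{The main obstacle} is the explicit design of $\iota$: it must simultaneously preserve both transversal conditions, remain compatible with the constraint that $\mu$ sends exactly two endpoints to $J$ (and these must belong to $I(i,j)$-special arcs), and genuinely flip the parity of $d^{i,j}_\mu$ off its fixed locus. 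Small values of $k$ will probably require separate bookkeeping, since for $k\le 5$ the sets $\{k-1,k,k+1,k+2\}$ and $\{1,\dots,2k-4\}$ overlap only partially and the transversal structure degenerates; in those cases I would expect direct enumeration of $\mathcal{M}^\star$ to suffice.
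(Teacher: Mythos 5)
Your reduction of $c_{\tau,T}^{i,j}$ to a signed combinatorial count is exactly the paper's first step, but two of the load-bearing details are off.

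\textbf{Wrong ``forced'' column set.} You took the definition of $\Lambda_{0,k}$ from the running text, which has a typo. The rows $k-1,\dots,k+2$ of $\Lambda_{0,k}^\intercal$ are $e_{2k-3},e_{2k-2},e_{2k-1},e_{2k}$, not $e_{k-1},\dots,e_{k+2}$ — this is what the displayed matrix shows and what the proof uses, via $(\widetilde\Lambda_k)^H_{\{k-1,\dots,k+2\}}=\mathrm{Id}_{4\times 4}$ with $H=\{2k-3,\dots,2k\}$. The correct characterization is then: $\Delta_M(\widetilde\Lambda_k^\intercal)\neq 0$ iff $M = L' \sqcup H$ with $L'$ a transversal of the pairs $\{2i-1,2i\}$, $i\le k-2$. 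With your set $\{k-1,\dots,k+2\}$, that set is never disjoint from the pair columns $\{1,\dots,2k-4\}$ for any $k>2$ (the two sets intersect for every $k\ge 3$, and for $k\ge 6$ yours is fully contained), so the clean ``disjoint union'' characterization you wrote would simply be false in the whole range of the lemma. This is not a small-$k$ edge case as you suggested; it's a misidentification of the structure. With the correct $H$, the disjointness holds for all $k>2$ and no separate bookkeeping is needed at any $k$.

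\textbf{The involution plan doesn't match what actually happens.} The swap you propose — flipping the transversal choice on a single pair $\{2i-1,2i\}$ — does \emph{not} flip the parity of $d^{i,j}_\mu$. In the paper this exact operation is used, but precisely because it preserves the sign: it gives a $2$-to-$1$ map onto the markings of a smaller instance (one pair removed, $T$ enlarged), yielding $c_{\tau,T}^{i,j}(\Lambda_k) = 2\,c_{\tau',T'}^{i',j'}(\Lambda_{k-1})$ and a descent in $k$. The global architecture is not a single sign-reversing involution on $\mathcal M^\star$; it is a case analysis organized around the ``circle graph'' of $\tau$ (a $2$-regular planar graph whose cycles alternate $\tau$-edges and pair-edges): first reduce by induction to $T=\emptyset$ and to the situation where every cycle contains an $H$-edge; then observe that each ``$H$-path'' must contain exactly one $J$-arc, forcing exactly two $H$-paths. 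If one of these paths is $2$-special, a genuine sign-reversing involution (swap which of its two special arcs is the $J$-arc) shows $c=0$. If both are $1$-special, planarity forces every marking to have $d^{i,j}_\mu$ even, so $c>0$. Your proposal collapses all of these into one fixed-point/involution step, but the sign-flipping move you singled out is the one that does not change sign, and the case where $c$ equals zero versus where every sign is $+1$ requires precisely the planar cycle analysis that your sketch does not supply.
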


Theorem~\ref{immanant pos totaly pos thm} tells us that non-trivial \(c_{\tau,T}^{i,j}(M)\) are  strictly positive for  \(M\in  \mathrm{Mat}_{2k\times2k}^{\gg0}(\mathbb{R})\). Since $\mathrm{Mat}_{2k\times2k}^{\gg0}(\mathbb{R})$ is open in \(\mathrm{Mat}^*_{2k\times2k}(\mathbb{R})\), the zero-locus of each polynomial $c_{\tau,T}^{i,j}$ is sparse. 
Since \(\Lambda_{0,k}^\intercal\) is invertible, 
    \(\Lambda_{0,k}^\intercal \cdot  \mathrm{Mat}_{2k\times 2k}^{\gg0}(\mathbb{R})\subset \mathrm{Mat}^*_{2k\times2k}(\mathbb{R})\) is also open, and hence non-trivial \(c_{\tau,T}^{i,j}\) are not identically zero on \(\Lambda_{0,k}^\intercal \cdot   \mathrm{Mat}_{2k\times 2k}^{\gg0}(\mathbb{R})\).
    \begin{coro}
        \label{L0M pos coro}
         Non-trivial \(c_{\tau,T}^{i,j}\) are positive on \(\Lambda_{0,k}^\intercal \cdot  \mathrm{Mat}_{2k\times 2k}^{\gg0}(\mathbb{R})\).
    \end{coro}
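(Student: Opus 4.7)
My proof plan is to combine the Fomin--Zelevinsky parametrization with Lemmas~\ref{pos on L0 lemma} and~\ref{pos poly lemma} to write $c_{\tau,T}^{i,j}(\Lambda_{0,k}^{\intercal} N)$ as a polynomial in positive parameters with non-negative coefficients, and then use the ``not identically zero'' observation already stated in the excerpt to upgrade non-negativity to strict positivity. Explicitly, fix any $N \in \mathrm{Mat}_{2k\times 2k}^{\gg 0}(\mathbb{R})$ and a non-trivial $(\tau,T) \in \mathcal{T}_{k,2k}^{i,j}$; our aim is to show $c_{\tau,T}^{i,j}(\Lambda_{0,k}^{\intercal} N) > 0$.

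First, I would invoke Theorem~\ref{totally pos para thm} to fix a factorization scheme: there exists a sequence $g_1, g_2, \ldots, g_l \in \{x_s, y_s, h_s\}$ such that every $N \in \mathrm{Mat}_{2k\times 2k}^{\gg 0}$ can be written $N = g_1(t_1) g_2(t_2) \cdots g_l(t_l)$ with $t_i > 0$. Iterating Lemma~\ref{pos poly lemma} once per factor (starting from $\Lambda_{0,k}^{\intercal}$ and multiplying on the right by each $g_i(t_i)$ in turn), we obtain an expansion
\[
c_{\tau,T}^{i,j}(\Lambda_{0,k}^{\intercal} N) \;=\; \sum_{(\tau',T') \in \mathcal{T}_{k,2k}^{i,j}} p_{\tau',T'}(t_1,\ldots,t_l)\, c_{\tau',T'}^{i,j}(\Lambda_{k}, \widetilde{\Lambda}_k),
\]
where the sum runs only over non-trivial pairs $(\tau',T')$, and each $p_{\tau',T'} \in \mathbb{Z}_{\geq 0}[t_1, \ldots, t_l]$. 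Because Lemma~\ref{pos on L0 lemma} guarantees $c_{\tau',T'}^{i,j}(\Lambda_{k}, \widetilde{\Lambda}_k) \geq 0$ for every non-trivial $(\tau',T')$, the right-hand side is itself a polynomial $F(t_1,\ldots,t_l) \in \mathbb{R}_{\geq 0}[t_1,\ldots,t_l]$ with non-negative coefficients.

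The final step is to rule out that $F$ is identically zero. This is exactly the content of the paragraph immediately preceding the corollary: $c_{\tau,T}^{i,j}$ is a polynomial on $\mathrm{Mat}^*_{2k\times 2k}(\mathbb{R})$, and by Theorem~\ref{immanant pos totaly pos thm} it is strictly positive on the open subset $\mathrm{Mat}_{2k\times 2k}^{\gg 0}(\mathbb{R})$, hence non-zero as a polynomial; since $\Lambda_{0,k}^{\intercal}$ is invertible, the image $\Lambda_{0,k}^{\intercal} \cdot \mathrm{Mat}_{2k\times 2k}^{\gg 0}(\mathbb{R})$ is a non-empty open subset of $\mathrm{Mat}^*_{2k\times 2k}(\mathbb{R})$, so $c_{\tau,T}^{i,j}$ cannot vanish identically on it, and therefore $F$ is not the zero polynomial. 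A non-zero element of $\mathbb{R}_{\geq 0}[t_1,\ldots,t_l]$ takes strictly positive values on $(0,\infty)^l$ (some monomial has a strictly positive coefficient and every monomial evaluated at $t_i > 0$ is positive, while every other term is non-negative). Evaluating at the parameters $t_i > 0$ corresponding to our chosen $N$ yields $c_{\tau,T}^{i,j}(\Lambda_{0,k}^{\intercal} N) = F(t_1,\ldots,t_l) > 0$, as required.

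There is no real obstacle here beyond bookkeeping; the only subtlety is the observation in the last paragraph that a polynomial in $\mathbb{R}_{\geq 0}[t_1,\ldots,t_l]$, if non-zero, is automatically strictly positive on the positive orthant, which is what converts the ``non-identically-zero + non-negative'' setup of the preceding paragraph into the strict positivity claimed by the corollary.
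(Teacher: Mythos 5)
Your proof is correct and follows essentially the same route as the paper's: fix the Fomin--Zelevinsky factorization scheme, iterate Lemma~\ref{pos poly lemma} to express $c_{\tau,T}^{i,j}(\Lambda_{0,k}^\intercal N)$ as an element of $\mathbb{R}_{\geq 0}[t_1,\ldots,t_l]$ via Lemma~\ref{pos on L0 lemma}, and then invoke the non-vanishing observation from the preceding paragraph to conclude strict positivity on the positive orthant. The only (welcome) difference is that you spell out explicitly why a non-zero polynomial with non-negative coefficients is strictly positive on $(0,\infty)^l$, a step the paper states somewhat tersely.
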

    \begin{proof}
        Take \(L \in  \Lambda_{0,k}^\intercal   \cdot \mathrm{Mat}_{2k\times 2k}^{\gg0}(\mathbb{R})\). By Theorem~\ref{totally pos para thm} we have that \(L = L(t) =  \Lambda_{0,k}^\intercal M(t)\) for 
         \(
        M(t) = g_1(t_1) g_2(t_2)...g_l(t_l)
        \)
         for \(t\in\mathbb R_{+}^l\). By Lemma~\ref{pos poly lemma} we have that \(c_{\tau,T}^{i,j}(L(t))\) is a  \(\mathbb{Z}_{\geq 0}[t]\)-linear combination of non-trivial \(c^{i,j}_{\tau^\prime,T^\prime}( \Lambda_{0,k}^\intercal)\). By Lemma~\ref{pos on L0 lemma}, these are non-negative, therefore \(c_{\tau,T}^{i,j}(L(t))\in \mathbb R_{\geq 0 }[t]\). Since, as explained above, \(c_{\tau,T}^{i,j}\) are not identically zero on \(\Lambda_{0,k}^\intercal  \cdot  \mathrm{Mat}_{2k\times 2k}^{\gg0}(\mathbb{R})\), 
         they must all be positive for \(t\in\mathbb R_{+}^l\).
    \end{proof}
     \begin{dfn}
        For \(i\in[n-1]\) let \(r_i(t)\) be the matrices obtained by taking the \(2k \times 2k\) identity matrix and replacing the block on the \(\{i, i+1\}\) rows and columns by 
        \(
        \begin{pmatrix}
            \cosh(t)&\sinh(t)\\
            \sinh(t)&\cosh(t)
        \end{pmatrix}.
        \)
        \\
        Let \(\mathcal R\) be the semigroup generated by \(r_i(t)\) for \(t>0\). Define also $\mathcal {R}^{\gg0}$ to be the intersection $\mathrm{Mat}_{2k\times 2k}^{\gg0}(\mathbb R)\cap \mathcal R$.
    \end{dfn}
    Notice for with \(\xi(t)\defeq \sinh(t) \cosh(t)\) we have 
    \begin{align*}
        r_i(t) 
        &=
        y_i\left(\tanh(t)\right)x_i\left(\xi(t)\right)h_i\left(\mathrm{sech}(t)\right)h_{i+1}\left({\cosh(t)}\right)\\
        &= 
        x_i\left(\tanh(t)\right)y_i\left(\xi(t)\right)h_{i+1}\left(\mathrm{sech}(t)\right)h_{i}\left({\cosh(t)}\right).
    \end{align*}

    \(\mathcal R\) is a sub-semigroup of \(\mathcal G\) defined in Definition~\ref{regular moves for tot pos def}. Note that \(r_i(t) \,\eta\, r_i(t)^\intercal = \eta\), thus acting with those by multiplication from the left on \(\Lambda^{\perp\intercal}\) will preserve the property that \(\Lambda^{\perp\intercal} \, \eta \,\Lambda^{\perp}=0\). 

    \begin{coro}
        Strongly positive \( \mathcal L_k^>\) matrices are closed under action by multiplication from the left by \(r_i(t)\) with \(t>0\). 
    \end{coro}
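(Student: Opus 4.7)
The strategy is to reduce directly to the already-established Lemma~\ref{pos poly lemma} via the explicit factorization
\[
r_i(t) \;=\; y_i(\tanh(t))\, x_i(\xi(t))\, h_i(\mathrm{sech}(t))\, h_{i+1}(\cosh(t)),
\]
recorded immediately above the corollary, which exhibits $r_i(t)$ as an ordered product of four generators of $\mathcal G$ whose parameters $\tanh(t),\xi(t),\mathrm{sech}(t),\cosh(t)$ are all strictly positive whenever $t>0$.

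Given $\widetilde\Lambda\in\mathcal L_k^>$, set $\widetilde\Lambda'\defeq r_i(t)\widetilde\Lambda$ and $\Lambda'\defeq\eta\widetilde\Lambda'$. Before addressing strict positivity of the coefficients $c^{p,q}_{\tau,T}$, I would first verify that $(\Lambda',\widetilde\Lambda')\in M^{\perp>}_{2k\times(k+2)}\times M^{>}_{2k\times(k+2)}$. The membership $\widetilde\Lambda'\in M^{>}_{2k\times(k+2)}$ is immediate: the $2\times 2$ block of $r_i(t)$ acting on rows $i,i+1$ is totally positive, so expanding any $(k+2)$-minor of $\widetilde\Lambda'$ along those two rows yields either the corresponding minor of $\widetilde\Lambda$ unchanged, or a strictly positive combination of two positive $(k+2)$-minors of $\widetilde\Lambda$. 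The membership $\Lambda'\in M^{\perp>}_{2k\times(k+2)}$ then follows from the analogous argument applied to the positive matrix certifying $\eta\widetilde\Lambda\in M^{\perp>}$, together with the identity $\eta\, r_i(-t)\,\eta = r_i(t)$ (a short entrywise calculation using the $\pm$-alternating diagonal of $\eta$) which ensures that this certificate is transformed by left multiplication by the totally positive matrix $r_i(t)$ and hence remains positive.

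The main content is the strict positivity of every non-trivial $c^{p,q}_{\tau,T}(\Lambda',\widetilde\Lambda')$. I would choose an ambient matrix $M\in\mathrm{Mat}_{2k\times 2k}$ with $\pi_{k-2,k+2}(M)=(\Lambda^{\perp\intercal},\widetilde\Lambda^{\intercal})$; since $r_i(t)^\intercal=r_i(t)$, one has $\pi_{k-2,k+2}(M r_i(t))=(\Lambda'^{\perp\intercal},\widetilde\Lambda'^{\intercal})$, which is exactly the setting of Lemma~\ref{pos poly lemma}. Apply that lemma successively across the four factors of $r_i(t)$. At each step, $c^{p,q}_{\tau,T}$ evaluated at the updated matrices is a $\mathbb Z_{\geq 0}[t]$-linear combination of the non-trivial $c^{p,q}_{\tau',T'}$ evaluated at the previous matrices, in which the coefficient of $c^{p,q}_{\tau,T}$ itself is a non-zero polynomial and hence strictly positive at $t>0$. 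The strong positivity of $\widetilde\Lambda$ gives the base case, and the conclusion at step $n$ (strict positivity of every non-trivial $c^{p,q}_{\tau',T'}$ at the $n$-th intermediate pair) is precisely the hypothesis needed to invoke the lemma at step $n+1$. Four iterations yield $c^{p,q}_{\tau,T}(\Lambda',\widetilde\Lambda')>0$.

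The only real subtlety is the iteration itself, which is automatic by the inductive structure just described; the orthogonality-preservation identity $r_i(t)\eta r_i(t)^\intercal=\eta$ noted above the corollary handles the compatibility with the ABJM constraint throughout. As a byproduct, this corollary supplies the proof of Proposition~\ref{imm pos rot prop} (for $i<2k$) promised earlier and closes the chain of closure properties needed for Corollary~\ref{coro:imm pos L closed}.
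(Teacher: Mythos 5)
Your proof is correct and follows the same route as the paper: use the displayed factorization of $r_i(t)$ into the four generators $y_i$, $x_i$, $h_i$, $h_{i+1}$ with positive parameters, iterate Lemma~\ref{pos poly lemma} across the four factors (each step preserving the strict positivity of the non-trivial $c^{p,q}_{\tau,T}$, since the coefficient of the "diagonal" term is a nonzero element of $\mathbb{Z}_{\geq 0}[t]$ and the remaining coefficients are nonnegative), and observe that $r_i(t)^\intercal\eta\, r_i(t)=\eta$ ensures the constraint $\eta\Lambda=\widetilde\Lambda$ is preserved throughout. The only addition you make is a brief verification that $(\Lambda',\widetilde\Lambda')$ lands in $M^{\perp>}_{2k\times(k+2)}\times M^{>}_{2k\times(k+2)}$, which the paper leaves implicit; that check is sound, modulo the slight imprecision that the coefficients in Lemma~\ref{pos poly lemma} are polynomials in the abstract parameter of the generator (here $\tanh t$, $\xi(t)$, $\mathrm{sech}\,t$, $\cosh t$) rather than in $t$ itself, though all of these are positive for $t>0$ so the conclusion stands.
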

    \begin{proof}
        This is immediate from Lemma~\ref{pos poly lemma}, together with the fact that \(r_i(t)\) preserves the property that \(\eta \Lambda = \widetilde \Lambda\): For  \(\widetilde\Lambda = r_i(t)\widetilde\Lambda_0,\) and \(\Lambda^{\perp } = r_i(t) \Lambda^{\perp }_0 \). As for any \(r_i(t)\in\mathcal R\) we have \(r_i(t)^\intercal\,\eta\, r_i(t) = \eta\), we conclude
        \(
         (\Lambda^\perp)\cdot (\eta \widetilde \Lambda) =\Lambda^{\perp\intercal } \eta\widetilde \Lambda = \Lambda_0^{\perp\intercal }r_i(t)^\intercal \eta r_i(t) \widetilde \Lambda _0 =  \Lambda_0^{\perp\intercal }\eta \widetilde \Lambda _0= \\=(\Lambda^\perp_0)\cdot (\eta \widetilde \Lambda_0)=0.
        \)
        Thus \(\Lambda = \eta \widetilde\Lambda\), and \(\eta \Lambda = \widetilde \Lambda\).
    \end{proof}

    We turn to prove Proposition~\ref{imm pos rot prop}:
    
    \begin{proof}[Proof of Proposition~\ref{imm pos rot prop}]
        We need to show that for \(\widetilde\Lambda\in\mathcal{L}^>_k\), we have that \(\mathrm{Rot}_{i,i+1}^{-1}(\widetilde\Lambda)\in\mathcal{L}^>_{k}\).

        If \(i<2k\) we have that \(\mathrm{Rot}_{i,i+1}^{-1}(t)(\widetilde\Lambda) = r_i(t)\widetilde \Lambda\). If \(i=2k\) we can use the \(\mathrm{Cyc}\) move and Proposition~\ref{imm pos closed cyc prop} to reduce the problem to the previous case as \(\mathrm{Rot}_{1,2}^{-1}(t)\mathrm{Cyc}(\widetilde \Lambda) = \mathrm{Cyc}(\mathrm{Rot}_{2k,1}^{-1}(t)\widetilde \Lambda)\). 
    \end{proof}

    By Theorem~\ref{totally pos para thm} we have that \(M\in  \mathrm{Mat}_{2k\times 2k}^{\gg0}(\mathbb{R})\)  is a product 
    \(
    M = g_1(t_1) g_2(t_2)...g_\ell(t_\ell)
    \)
    with \(g_i\in \{x_s\, y_s, h_s \}\) and \(t_i>0\). Let us write \(M = M(t)\) and construct a new matrix \(\hat{M}(t)\) in the following way:
    \begin{dfn}
        For \(i\in[2k-1]\) write:
        \begin{align*}
            &\hat x_i(t) = r_i(t) 
        =
        y_i\left(\tanh(t)\right)x_i\left(\xi(t)\right)h_i\left(\mathrm{sech}(t)\right)h_{i+1}\left({\cosh(t)}\right),\\
            &\hat y_i(t) = r_{i}(t) = 
        x_i\left(\tanh(t)\right)y_i\left(\xi(t)\right)h_{i+1}\left(\mathrm{sech}(t)\right)h_{i}\left({\cosh(t)}\right),\\
            &\hat h_i(t)=r_{i}(t) = 
        x_i\left(\tanh(t)\right)y_i\left(\xi(t)\right)h_{i+1}\left(\mathrm{sech}(t)\right)h_{i}\left({\cosh(t)}\right),\\
        &\hat h_{2k}(t)=r_{2k-1}(t)=
        y_{2k}\left(\tanh(t)\right)x_{2k}\left(\xi(t)\right)h_{2k}\left(\mathrm{sech}(t)\right)h_{2k+1}\left({\cosh(t)}\right).
        \end{align*}

        For \(M(t) = g_1(t_1) g_2(t_2)...g_\ell(t_\ell)\)  write
        \(
        \hat M= \hat g_1(t_1) \hat g_2(t_2)...\hat g_\ell(t_\ell).
        \)
    \end{dfn}
    By Corollary~\ref{more positive coro} and theorem~\ref{totally pos para thm} we have that \(\hat{M}\) is totally positive for any \(t\in\mathbb R_+^l\). 
     \begin{dfn}
               Define 
        \(
        \mathbf{OF}_{>0}(k-2,k+2)\defeq \{\pi_{k-2,k+2}(\Lambda_{0,k}^\intercal M)|M\in\mathcal {R}^{\gg0}
        \}.
        \)
    \end{dfn}
    \begin{thm}
        For \(2\leq k\leq n-2\), \(\mathbf{OF}_{>0}(k-2,k+2)\) is non empty, and for \((\Lambda^{\perp \intercal },\widetilde\Lambda^\intercal)\in \mathbf{O F}_{>0}(k-2,k+2)\), every \((\Lambda,\widetilde \Lambda)\) is both strongly positive and satisfies \(\eta \Lambda = \widetilde \Lambda\).
    \end{thm}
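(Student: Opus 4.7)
The plan is to verify the three required assertions separately: non-emptiness of $\mathbf{OF}_{>0}(k-2,k+2)$, the relation $\eta\Lambda=\widetilde\Lambda$, and strong positivity. The first and third are essentially reductions to facts already established; the $\eta$-orthogonality is where the careful choice of $\Lambda_{0,k}$ pays off.

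For \textbf{non-emptiness}, I would exhibit a concrete element of $\mathcal{R}^{\gg 0}$. By Theorem~\ref{totally pos para thm}, pick any factorization $M(t)=g_1(t_1)\cdots g_\ell(t_\ell)$ realizing a totally positive matrix, and form the matrix $\hat M=\hat g_1(t_1)\cdots\hat g_\ell(t_\ell)$ using the hatted replacements defined just above the statement. By construction $\hat M\in\mathcal R$, since each $\hat g_i(t_i)$ equals some $r_j(t_i)$. On the other hand, expanding each $\hat g_i(t_i)$ as a product of $x_s,y_s,h_s$ factors exhibits $\hat M$ as a product of generators of $\mathcal G$ that contains, as a subsequence, a factorization of the form covered by Corollary~\ref{more positive coro}. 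Hence $\hat M\in\mathrm{Mat}_{2k\times 2k}^{\gg 0}(\mathbb R)$, and so $\hat M\in\mathcal R^{\gg 0}$. Consequently $\pi_{k-2,k+2}(\Lambda_{0,k}^\intercal \hat M)\in\mathbf{OF}_{>0}(k-2,k+2)$.

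For the \textbf{orthogonality condition} $\eta\Lambda=\widetilde\Lambda$, recall from the paragraph preceding the statement that this is equivalent to $\Lambda^{\perp\intercal}\eta\Lambda^\perp=0$, i.e., to the first $k-2$ rows of $\Lambda_{0,k}^\intercal M$ being $\eta$-isotropic. Since $M\in\mathcal R$ is a product of $r_i(t)$, each of which satisfies $r_i(t)\,\eta\,r_i(t)^\intercal=\eta$, we have $M\eta M^\intercal=\eta$. Therefore
\[
(\Lambda_{0,k}^\intercal)_{[k-2]}\,M\,\eta\,M^\intercal\,(\Lambda_{0,k}^\intercal)_{[k-2]}^\intercal \;=\;(\Lambda_{0,k}^\intercal)_{[k-2]}\,\eta\,(\Lambda_{0,k}^\intercal)_{[k-2]}^\intercal,
\]
and I need to check this last expression vanishes. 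Its $(i,j)$-entry is $(e_{2i-1}+e_{2i})\,\eta\,(e_{2j-1}+e_{2j})^\intercal$, which is $0$ for $i\ne j$ by disjoint supports, and for $i=j$ equals $\eta_{2i-1,2i-1}+\eta_{2i,2i}$; since the diagonal of $\eta$ alternates in sign, this is $+1-1=0$. Hence the condition holds for every $M\in\mathcal R$.

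For \textbf{strong positivity}, given $(\Lambda^{\perp\intercal},\widetilde\Lambda^\intercal)=\pi_{k-2,k+2}(\Lambda_{0,k}^\intercal M)$ with $M\in\mathcal R^{\gg 0}$, we have in particular $M\in\mathrm{Mat}_{2k\times 2k}^{\gg 0}(\mathbb R)$, so $\Lambda_{0,k}^\intercal M\in\Lambda_{0,k}^\intercal\cdot\mathrm{Mat}_{2k\times 2k}^{\gg 0}(\mathbb R)$, and Corollary~\ref{L0M pos coro} immediately gives $c_{\tau,T}^{i,j}(\Lambda,\widetilde\Lambda)>0$ for every non-trivial $(\tau,T)\in\mathcal T_{k,2k}^{i,j}$. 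By Definition~\ref{def:strongly positive}, this is precisely strong positivity. The \emph{main obstacle} is really the first step: one has to be sure that totally positive hyperbolic-rotation products actually exist, which is why the $\hat M$ construction is needed — a naive product $r_{i_1}(t_1)\cdots r_{i_\ell}(t_\ell)$ need not be totally positive, and it is the re-expression of each $\hat g_i(t)$ as a length-$4$ word in the generators $\{x_s,y_s,h_s\}$ that matches the pattern demanded by Corollary~\ref{more positive coro}. All other steps reduce to invoking results already proved in the preceding subsections.
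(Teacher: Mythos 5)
Your proof is correct and takes essentially the same approach as the paper: exhibit $\hat M\in\mathcal{R}^{\gg 0}$ via the hatted factorization and Corollary~\ref{more positive coro}, invoke Corollary~\ref{L0M pos coro} for strong positivity, and reduce the $\eta$-condition to the base case $\Lambda_{0,k}$ using $R^\intercal\eta R=\eta$ for $R\in\mathcal R$. The only difference is in the orthogonality step: the paper verifies $\Lambda^{\perp\intercal}\eta\widetilde\Lambda=0$ via the base-case identity $\eta\Lambda_0=\widetilde\Lambda_0$, while you verify $\Lambda^{\perp\intercal}\eta\Lambda^\perp=0$ directly by computing the $(i,j)$-entries with the alternating signs of $\eta$ — both routes are justified by the equivalence stated in the paragraph preceding the theorem, and your explicit entry-level computation is a nice unpacking of what the paper calls "easy to check."
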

    \begin{proof}
        For the first part, it is enough to show that \(\mathcal {R}^{\gg0}=\mathrm{GL}_{2k}^{\gg0}(\mathbb R)\cap \mathcal R\) is not empty. For every \(M\in \mathrm{GL}_{2k}^{\gg0}(\mathbb R)\), clearly \(\hat M\in \mathrm{GL}_{2k}^{\gg0}(\mathbb R)\cap \mathcal R\), hence $\mathcal {R}^{\gg0}\neq\emptyset.$

        For the second part, consider \((\Lambda^{\perp \intercal },\widetilde\Lambda^\intercal)\in \mathbf{O F}_{>0}(k-2,k+2)\).\\ Then \((\Lambda^{\perp \intercal },\widetilde\Lambda^\intercal) = \pi_{k-2,k+2}(\Lambda_{0,k}^\intercal M)\) with \(M \in\mathrm{GL}_{2k}^{\gg0}(\mathbb R) \). Thus \((\Lambda,\widetilde\Lambda)\) are strongly positive by Corollary~\ref{L0M pos coro}.

        Finally, consider \((\Lambda^{\perp \intercal }_0,\widetilde\Lambda^\intercal_0) = \pi_{k-2,k+2}(\Lambda_{0,k}^\intercal)\). It is easy to check that \(\eta \Lambda_0 = \widetilde \Lambda_0\). Now, for \(R\in \mathcal R\), \((\Lambda^{\perp \intercal },\widetilde\Lambda^\intercal) = \pi_{k-2,k+2}(\Lambda_{0,k}^\intercal R) = (\Lambda^{\perp \intercal }_0 R^\intercal,\widetilde\Lambda^\intercal_0R^\intercal).\) Indeed, \(\widetilde\Lambda = R\widetilde\Lambda_0,\) and \(\Lambda^{\perp } = R \Lambda^{\perp }_0 \). Since every \(R\in\mathcal R\)satisfies \(R^\intercal\eta R = \eta\), we have
        \(
         (\Lambda^\perp)\cdot (\eta \widetilde \Lambda) =\Lambda^{\perp\intercal } \eta\widetilde \Lambda = \Lambda_0^{\perp\intercal }R^\intercal \eta R \widetilde \Lambda _0 =  \Lambda_0^{\perp\intercal }\eta \widetilde \Lambda _0 =(\Lambda^\perp_0)\cdot (\eta \widetilde \Lambda_0)=0.
        \)
        Thus \(\Lambda = \eta \widetilde\Lambda\), and \(\eta \Lambda = \widetilde \Lambda\).
    \end{proof}

\begin{coro}
    \label{imma pos L is not empty}
    \(\mathcal{L}_k^{>}\) is not empty.
\end{coro}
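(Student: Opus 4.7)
The plan is to observe that this corollary is essentially a direct consequence of the theorem immediately preceding it, so the proof amounts to unpacking the definition of $\mathcal{L}_k^{>}$ and matching it against what has already been established for $\mathbf{OF}_{>0}(k-2,k+2)$.

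First I would invoke the preceding theorem, which asserts both that $\mathbf{OF}_{>0}(k-2,k+2)$ is non-empty (via the $\hat{M}$ construction, which produces elements of $\mathcal{R}^{\gg0}$ starting from any totally positive matrix and parameterization from Theorem~\ref{totally pos para thm}) and that any pair $(\Lambda,\widetilde\Lambda)$ with $(\Lambda^{\perp\intercal},\widetilde\Lambda^\intercal)\in\mathbf{OF}_{>0}(k-2,k+2)$ is strongly positive and satisfies $\eta\Lambda=\widetilde\Lambda$. Concretely, pick some $R\in\mathcal{R}^{\gg0}$ (for instance $R=\hat{M}$ for any $M\in\mathrm{Mat}_{2k\times 2k}^{\gg0}(\mathbb{R})$), set $(\Lambda^{\perp\intercal},\widetilde\Lambda^\intercal)=\pi_{k-2,k+2}(\Lambda_{0,k}^\intercal R)$, and obtain a concrete candidate $\widetilde\Lambda$.

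Then I would check that this $\widetilde\Lambda$ actually lies in $\mathcal{L}_k^{>}$. By Definition~\ref{def:strongly positive}, $\mathcal{L}_k^{>}$ consists of those $\widetilde\Lambda$ for which $(\eta\widetilde\Lambda,\widetilde\Lambda)\in M^{\perp>}_{2k\times(k+2)}\times M^{>}_{2k\times(k+2)}$ is strongly positive. Since the theorem guarantees $\eta\Lambda=\widetilde\Lambda$, we have $\Lambda=\eta\widetilde\Lambda$, so the strongly positive pair $(\Lambda,\widetilde\Lambda)$ produced by the theorem is precisely $(\eta\widetilde\Lambda,\widetilde\Lambda)$. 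Hence $\widetilde\Lambda\in\mathcal{L}_k^{>}$, and $\mathcal{L}_k^{>}\neq\emptyset$.

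There is no real obstacle here, since all the substantive work—constructing $\Lambda_{0,k}$, verifying Lemma~\ref{pos on L0 lemma} for non-negativity of the $c^{i,j}_{\tau,T}$ on the base matrix, showing that applying elements of $\mathcal{R}^{\gg0}$ preserves the orthogonality relation $\eta\Lambda=\widetilde\Lambda$, and using the openness/density argument via Theorem~\ref{immanant pos totaly pos thm} to upgrade non-negativity to strict positivity—has already been packaged into the preceding theorem. The corollary is just the extraction of the conclusion in the language of $\mathcal{L}_k^{>}$.
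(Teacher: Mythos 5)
Your proof is correct and takes essentially the same approach as the paper: cite the preceding theorem to get a non-empty $\mathbf{OF}_{>0}(k-2,k+2)$ producing strongly positive pairs with $\eta\Lambda=\widetilde\Lambda$, then unwind Definition~\ref{def:strongly positive} to conclude $\widetilde\Lambda\in\mathcal{L}_k^{>}$. The paper's proof is a one-liner doing exactly this; your version merely spells out the same reasoning in more detail.
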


\begin{proof}
    For \((\Lambda^{\perp \intercal },\widetilde\Lambda^\intercal)\in \mathbf{O F}_{>0}(k-2,k+2)\), we have that \(\widetilde\Lambda\in\mathcal{L}_k^{>} \).
\end{proof}

\subsection{Proof of Lemma~\ref{pos on L0 lemma}}
\label{the base case section}
We will now prove Lemma~\ref{pos on L0 lemma}. The proof will be rather technical. 

Recall that
\[
\Lambda_{0,k}^\intercal =\scalebox{0.7}{\(
\begin{pmatrix}
    1&1&&&&&&&&&\\
    &&1&1&&&&&&&\\
    &&&&\ddots&&&&&&\\
    &&&&&1&1&&&&\\
    &&&&&&&1&&&\\
    &&&&&&&&1&&\\
    &&&&&&&&&1&\\
    &&&&&&&&&&1\\
    *&*&*&*&...&*&*&*&*&*&*\\
    &\vdots&&&&&&&&\vdots&\\
    *&*&*&*&...&*&*&*&*&*&*\\
\end{pmatrix}.
\)}\]
\begin{proof}[Proof of Lemma~\ref{pos on L0 lemma}]
    The Pl\"ucker coordinates of \(\Lambda_{k}^{\intercal \perp},\widetilde\Lambda_{k}^\intercal\) are all either 1 or 0. We will analize for which indices \(i,j\) and \((i,j,\tau,T)\)-markings \(\mu\) the terms \({\Delta_{L_\mu}(\Lambda_{k}^{\intercal \perp})\Delta_{R_\mu\cup J_\mu}(\widetilde\Lambda_{k}^\intercal)}\) are  $1,$ and to find the corresponding signs \((-1)^{d_\mu^{i,j}}\). We will act inductively.

    Write \(I = I(i,j)\), \(H = \{2k-3,2k-2,2k-1,2k\}\). We must have \(H\cap L_\mu = \emptyset\) otherwise \(\Delta_{L_\mu}(\Lambda_{k}^{\intercal \perp}) = 0\), as \((\Lambda_{k}^{\intercal \perp})^H = 0\), and therefore 
    \(
    H\cap T = \emptyset
    .\) We must also have \(H\subset R_\mu \cup J_\mu\), otherwise \(\Delta_{R_\mu \cup J_\mu} (\widetilde\Lambda_{k}) = 0\) as \[
    (\widetilde\Lambda_{k})^{H^c}_{\{k-1,k,k+1,k+2\}} = 0
    ,\quad\text{and}\quad 
    {(\widetilde\Lambda_{k})^{H}_{\{k-1,k,k+1,k+2\}} = \mathrm{Id}_{4\times4}}
    .\]
    This means 
    \(
    H\subset\mu^{-1}(R)\cup\mu^{-1}(J)\subset S(\tau)
    .\)

    Similarly, for each \(1\leq q\leq k-2\) exactly one index of the pair \(\{\ell,r\}\defeq\{2q-1,2q\}\) is in \(L_\mu\), and exactly one is in \(R_\mu\cup J_\mu\). 
    
    Since these are the only \((i,j,\tau,T)\)-markings with \({\Delta_{L_\mu}(\Lambda_{k}^{\intercal \perp})\Delta_{R_\mu\cup J_\mu}(\widetilde\Lambda_{k}^\intercal)}\) are non-zero, we are going to define \(\widetilde{\mathcal M}_{(i,j,\tau,T)}\) as follows: \begin{dfn}
        \(\widetilde{\mathcal M}_{(i,j,\tau,T)}\) are precisely the markings \(\mu\in\mathcal M{(i,j,\tau,T)}\) such that:
    \begin{itemize}
        \item For each \(1\leq q\leq k-m\) that exactly one index of the pair \(\{\ell,r\}\defeq\{2q-1,2q\}\) is in \(L_\mu\) and exactly one is in \(R_\mu\cup J_\mu\).
        \item \(H\subset\mu^{-1}(R)\cup\mu^{-1}(J)\), and thus \(H\subset R_\mu\cup J_\mu\) and \(H \cap L_\mu = \emptyset\).
    \end{itemize}
    \end{dfn}
     Indeed, it is easy to see that for these \((i,j,\tau,T)\)-markings \({\Delta_{L_\mu}(\Lambda_{k}^{\intercal \perp})\Delta_{R_\mu\cup J_\mu}(\widetilde\Lambda_{k}^\intercal)}\) are 1. We can thus write 
     \begin{prop}
     \label{prop:cMtilde}
          \[
     c_{\tau,T}^{i,j}(\Lambda_k,\widetilde\Lambda_k)=\sum_{\mu\in\widetilde{\mathcal{M}}_{(i,j,\tau,T)}}(-1)^{d^{i,j}_\mu}.
     \]
     \end{prop}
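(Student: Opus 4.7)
The plan is to complete the analysis begun just before the proposition. By formula (\ref{cijtT}) applied to $(\Lambda_k,\widetilde\Lambda_k)$, it suffices to show that for every $\mu\in\mathcal{M}_{(i,j,\tau,T)}$ the product
\[
P(\mu):=\Delta_{L_\mu}(\Lambda_k^{\perp\intercal})\,\Delta_{R_\mu\cup J_\mu}(\widetilde\Lambda_k^\intercal)
\]
equals $1$ when $\mu\in\widetilde{\mathcal M}_{(i,j,\tau,T)}$ and vanishes otherwise; the asserted formula then follows by rearranging the sum.

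The preceding paragraphs already establish the vanishing half: the sparsity of the top $k-2$ rows of $\Lambda_{0,k}^\intercal$ (each supported on a disjoint pair $\{2q-1,2q\}$) forces every pair to contribute exactly one index to $L_\mu$ and exactly one to $R_\mu\cup J_\mu$, while the four bottom rows, supported on $H$, force $H\subset R_\mu\cup J_\mu$ and $H\cap L_\mu=\emptyset$. Any violation of one of these conditions kills one of the two Pl\"ucker factors, hence $P(\mu)=0$ whenever $\mu\notin\widetilde{\mathcal M}_{(i,j,\tau,T)}$.

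For the converse I would verify that $P(\mu)=+1$ when the conditions do hold. The $(k-2)\times(k-2)$ submatrix of $\Lambda_k^{\perp\intercal}$ selected by $L_\mu$ has exactly one $1$ per row, located in the column of $L_\mu$ chosen from $\{2q-1,2q\}$; because the pairs $\{1,2\},\{3,4\},\dots,\{2k-5,2k-4\}$ are disjoint intervals arranged in increasing order of $q$, the natural increasing ordering on $L_\mu$ already matches the row ordering, so this submatrix is the identity and contributes $+1$. Analogously, $\widetilde\Lambda_k^\intercal$ restricted to columns $R_\mu\cup J_\mu$ decomposes as a direct sum of a $4\times 4$ identity block on $H$ and a $(k-2)\times(k-2)$ identity block on the remaining columns in $\{1,\dots,2k-4\}$, again giving $+1$. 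The main and essentially only subtle point is this sign check; once it is in hand, substituting $P(\mu)=\mathbf 1_{\mu\in\widetilde{\mathcal M}_{(i,j,\tau,T)}}$ into (\ref{cijtT}) yields the proposition.
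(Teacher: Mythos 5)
Your proof is correct and follows essentially the same route as the paper: the sparsity of $\Lambda_{0,k}^\intercal$ forces $P(\mu)=0$ outside $\widetilde{\mathcal M}_{(i,j,\tau,T)}$, and for markings inside it the two Pl\"ucker factors each reduce to determinants of (block-)identity matrices. You spell out the sign check that the paper leaves as ``it is easy to see,'' which is the only part requiring care; the observation that the pairs $\{2q-1,2q\}$ and the block $H$ are disjoint increasing intervals, so the selected submatrices are literally identity matrices, is exactly the right way to see $P(\mu)=+1$.
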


     Consider now \(\mu \in \widetilde{\mathcal{M}}_{(i,j,\tau,T)}\). For each \(1\leq q\leq k-2\) exactly one index of the pair \(\{\ell,r\}\defeq\{2q-1,2q\}\) is in \(L_\mu\), and exactly one is in \(R_\mu\cup J_\mu\).
     
    \underline{\textbf{If it is the same index}}: Then  without loss of generality we have
    \(l \in (R_\mu\cup J_\mu) \cap L_\mu = T,\) and  \({r \notin R_\mu\cup J_\mu \cup L_\mu = S(\tau) \cup T}.\)
    
    \underline{\textbf{If it is not the same index}}: Then without loss of generality we have 
    \[
    \ell\in L_\mu,\,\,\,\,\, r\in R_\mu\cup J_\mu,
    \]
    \[
    r\notin L_\mu,\,\,\,\,\, \ell\notin R_\mu\cup J_\mu,
    \]
    and thus it must hold that
    \(
    \ell,r \in  R_\mu\cup J_\mu \cup L_\mu\setminus T = S(\tau)
    .\)
    Both of these properties depend only on \((\tau,T)\) and not on \(\mu\).
    \begin{prop}
    \label{prop:cond on  tT}
    For every \(i,j\) and \((\tau,T)\in\mathcal T_{k,2k}\), if \(c_{\tau,T}^{i,j}(\Lambda_k,\widetilde\Lambda_k) \neq 0\), then:
    \begin{enumerate}
        \item For each \(1\leq q\leq k-2\), the pair \(\{2q-1,2q\}\) are either both in \(S(\tau)\) or both are not in \(S(\tau)\) and exactly one is in \(T\).
        \item \(H \subset S(\tau)\).
    \end{enumerate}
    \begin{proof}
         If this is not the case, \(\mu \in \widetilde{\mathcal{M}}_{(i,j,\tau,T)}=\emptyset\) and thus by Proposition~\ref{prop:cMtilde}, we have
         \(c_{\tau,T}^{i,j}(\Lambda_k,\widetilde\Lambda_k) = 0.\)
    \end{proof}
       
    \end{prop}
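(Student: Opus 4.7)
The proof is essentially a direct unfolding of the definition of \(\widetilde{\mathcal{M}}_{(i,j,\tau,T)}\) combined with Proposition~\ref{prop:cMtilde}. The plan is to show the contrapositive: if either condition 1 or condition 2 fails, then \(\widetilde{\mathcal{M}}_{(i,j,\tau,T)}\) is empty, from which Proposition~\ref{prop:cMtilde} immediately gives \(c_{\tau,T}^{i,j}(\Lambda_k,\widetilde\Lambda_k)=0\).

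To handle condition 2 first, recall that for any \(\mu\in\widetilde{\mathcal{M}}_{(i,j,\tau,T)}\) we need \(H\subset R_\mu\cup J_\mu\) and \(H\cap L_\mu=\emptyset\). Since \(R_\mu\cup J_\mu = T\cup\mu^{-1}(R)\cup\mu^{-1}(J)\) and \(L_\mu = T\cup\mu^{-1}(L)\), the first inclusion forces \(H\subset T\cup S(\tau)\) while the disjointness forces \(H\cap T=\emptyset\). Together these give \(H\subset S(\tau)\), so failure of condition 2 forbids any valid marking. For condition 1, fix \(1\le q\le k-2\) and write \(\{\ell,r\}=\{2q-1,2q\}\); I would case-split on how many of \(\ell,r\) lie in \(S(\tau)\). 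The definition of \(\widetilde{\mathcal{M}}_{(i,j,\tau,T)}\) requires exactly one of \(\ell,r\) to lie in \(L_\mu\) and exactly one to lie in \(R_\mu\cup J_\mu\). If both lie in \(S(\tau)\), this is realized by assigning \(\mu(\ell),\mu(r)\) so that one is \(L\) and the other is in \(\{R,J\}\); if neither lies in \(S(\tau)\), it is realized iff exactly one of \(\ell,r\) lies in \(T\) (an element of \(T\) is in both \(L_\mu\) and \(R_\mu\cup J_\mu\), one outside both sets is in neither). The remaining case, where precisely one of \(\ell,r\) lies in \(S(\tau)\), will give a contradiction in every sub-case: a short enumeration checks that \(\mu(\ell)\in\{L,R,J\}\) together with \(r\in T\) or \(r\notin T\) always forces either both \(\ell,r\) or neither to land in \(L_\mu\), or the analogous failure for \(R_\mu\cup J_\mu\).

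Thus failure of condition 1 at any \(q\) yields a pair \(\{2q-1,2q\}\) for which no assignment of \(\mu\) on \(S(\tau)\cap\{2q-1,2q\}\) can produce the desired count, so again \(\widetilde{\mathcal{M}}_{(i,j,\tau,T)}=\emptyset\). Invoking Proposition~\ref{prop:cMtilde} finishes the argument.

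The only mildly delicate point is the exhaustive case analysis for condition 1, but it is routine since each index has at most three possible \(\mu\)-labels (\(L\), \(R\), or \(J\)) or no label at all, and one simply checks the four combinations of \((\mu(\ell),\mathbf{1}_{r\in T})\) against the two counting requirements. No deep ingredients beyond the explicit structure of \(\Lambda_{0,k}\) (which is what produced the set \(\widetilde{\mathcal{M}}_{(i,j,\tau,T)}\) in the first place) are needed.
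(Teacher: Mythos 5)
Your proposal is correct and follows essentially the same route as the paper: the paper's terse proof is a one-liner that cites \(\widetilde{\mathcal{M}}_{(i,j,\tau,T)}=\emptyset\) together with Proposition~\ref{prop:cMtilde}, relying on the case analysis carried out in the surrounding discussion (the paragraphs immediately before and after the definition of \(\widetilde{\mathcal{M}}_{(i,j,\tau,T)}\)), which is precisely what you reconstruct. Your explicit unfolding of the counting constraints for the pairs \(\{2q-1,2q\}\) and for \(H\) is the same argument made self-contained.
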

    We will thus restrict our attention only to such cases as described in Proposition~\ref{prop:cond on  tT}.
    \begin{dfn}
        Let \(\widetilde{\mathcal T}_{k,2k}\) be the set of \((\tau,T)\in\mathcal T_{k,2k}\) that satisfy the conditions in the previous Proposition~\ref{prop:cond on  tT}
    \end{dfn}
    \begin{prop}
    \label{prop:T empty}
    For \(i,j\) and \((\tau, T)\in \widetilde{\mathcal T}_{k,2k}\) such that \(T\neq\emptyset\), there exist \(i^\prime,j^\prime\) and \((\tau^\prime, T^\prime) \in \widetilde{\mathcal T}_{k-1,2k-2}\) such that
        \(
        c_{\tau,T}^{i,j}(\Lambda_k,\widetilde\Lambda_k)=c_{\tau^\prime,T^\prime}^{i^\prime,j^\prime}(\Lambda_{k-1},\widetilde\Lambda_{k-1})
        .\)
    \end{prop}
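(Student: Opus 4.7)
The plan is to descend from the $(k, 2k)$ setting to the $(k-1, 2k-2)$ setting by excising a compatible pair of fixed points of $\tau$, then match the formula of Proposition~\ref{prop:cMtilde} term by term. First, pick any $t \in T$. Since $T \subset [2k] \setminus S(\tau)$, $t$ is a fixed point of $\tau$, and condition~(1) of Proposition~\ref{prop:cond on  tT} places $t$ in some pair $\{2q_0-1, 2q_0\}$ with $q_0 \leq k-2$; the same condition forces the other element $t'$ of this pair to satisfy $t' \notin T$ and $t' \notin S(\tau)$. Let $\phi : [2k] \setminus \{t, t'\} \to [2k-2]$ be the unique order-preserving bijection, and define $\tau'$ on $[2k-2]$ by $\tau'(\phi(\ell)) = \phi(\tau(\ell))$ together with $T' = \phi(T \setminus \{t\})$.

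Next I would verify that $(\tau', T') \in \widetilde{\mathcal{T}}_{k-1, 2k-2}$. The non-crossing property and the count $2|T'| + |S(\tau')| = 2(k-1)$ are immediate. For condition~(1) of Proposition~\ref{prop:cond on  tT}, $\phi$ carries the remaining pairs $\{2q-1, 2q\}$ of $[2k]$ with $q \leq k-2$ and $q \neq q_0$ bijectively onto the pairs $\{2q'-1, 2q'\}$ of $[2k-2]$ with $q' \leq k-3$, and the status of each pair (both fixed with one in $T$, or both in $S(\tau)$) is unchanged. For condition~(2), $q_0 \leq k-2$ gives $t, t' \leq 2k-4$, so the hypothesis $H = \{2k-3, 2k-2, 2k-1, 2k\} \subset S(\tau)$ carries over to $H' = \{2k-5, 2k-4, 2k-3, 2k-2\} \subset S(\tau')$. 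The indices $(i', j')$ are then chosen so that $I(i', j') = \phi(I(i,j) \setminus \{t, t'\})$, which remains cyclically consecutive in $[2k-2]$ in every configuration of $I(i,j) \cap \{t, t'\}$.

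With these choices, because $t, t'$ are fixed points of $\tau$ and thus not endpoints of any arc, $\phi$ induces a bijection between arcs of $\tau$ and arcs of $\tau'$ that preserves, for every arc, its intersection cardinality with $I(i,j)$ versus $I(i', j')$. Hence $I(i,j)$-special arcs correspond to $I(i', j')$-special arcs. Moreover, removing two consecutive points from the cyclic arrangement $[2k]$ preserves the cyclic order of the remaining $2k-2$ points, so the count $d^{i,j}_\mu$ of special arcs between the two $J$-arcs equals the corresponding count $d^{i', j'}_{\mu'}$ for the induced marking $\mu' = \mu \circ \phi^{-1}$. The conditions defining $\widetilde{\mathcal{M}}$ transport through $\phi$: the condition on the pair $\{t, t'\}$ is automatic, since $t \in T$ places $t$ in both $L_\mu$ and $R_\mu \cup J_\mu$ while $t' \notin T \cup S(\tau)$ places $t'$ in neither; the conditions on the remaining pairs and on $H$ transfer directly via $\phi$. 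Proposition~\ref{prop:cMtilde} then yields the claimed equality term by term.

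The one subtlety I anticipate is the degenerate configuration in which $\{t, t'\}$ straddles the boundary of $I(i, j)$ so that $\phi(I(i, j) \setminus \{t, t'\})$ has fewer than two elements, and no valid $(i', j')$ of the required form exists. In that case the original $\tau$ has at most one $I(i, j)$-special arc (arcs avoid the fixed points $t, t'$), so $\widetilde{\mathcal{M}}_{(i, j, \tau, T)}$ is empty and $c^{i, j}_{\tau, T}(\Lambda_k, \widetilde\Lambda_k) = 0$. I would then exhibit any placeholder $(i', j', \tau', T') \in \widetilde{\mathcal{T}}_{k-1, 2k-2}$ with $c^{i', j'}_{\tau', T'}(\Lambda_{k-1}, \widetilde\Lambda_{k-1}) = 0$, for example one whose $\tau'$ has fewer than two $I(i', j')$-special arcs, so that the equality holds trivially.
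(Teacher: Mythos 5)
Your proposal takes essentially the same route as the paper: pick $t\in T$, excise the pair $\{2q_0-1,2q_0\}$ containing it (whose partner $t'$ lies outside both $T$ and $S(\tau)$), transport $(\tau,T,i,j)$ through the order-preserving bijection onto $[2k-2]$, and match the sum of Proposition~\ref{prop:cMtilde} term by term, with $d^{i,j}_\mu$ unchanged. One small attribution slip: it is condition~(2) of Proposition~\ref{prop:cond on  tT} (i.e.\ $H\subset S(\tau)$, hence $t\notin H$) that puts $t$ into a pair with $q_0\le k-2$, not condition~(1); beyond that, your explicit handling of the configurations where $I(i,j)$ meets or straddles $\{t,t'\}$ (which the paper leaves implicit) is correct and slightly more careful.
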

    This claim will allow us to disregard such cases by induction on \(k\).
    \begin{proof}
    We know \(H\subset S(\tau)\). Take \(1\leq q\leq k-2\) and a pair \(\{\ell,r\}\defeq\{2q-1,2q\}\) such that \(\ell\in T\).
        Thus \(\ell,r\notin S(\tau)\) and are not contained in any arc.
    We would like to define a new \((\tau^\prime ,T^\prime)\in \mathcal{T}_{k-1,2k-2}\) and \(i^\prime,j^\prime\), such that
    \(
    c_{\tau,T}^{i,j}(\Lambda_{k},\widetilde\Lambda_{k})= 
    c_{\tau^\prime,T^\prime}^{i^\prime,j^\prime}(\Lambda_{k-1},\widetilde\Lambda_{k-1})
    ,\)
    by deleting the \(\{2q-1,2q\}\) indices; that is, define \((i^\prime,j^\prime,\tau^\prime,T^\prime)\) by the following procedure:

    First define the bijection \(\nu_{\{\ell,r\}}:[2k]\setminus\{\ell,r\}\rightarrow[2k-2]\) by \(\nu_{\{\ell,r\}}(a) = a-1\) if \(a>2q\) and \(\nu_{\{\ell,r\}}(a)=a\) otherwise. Now
    \begin{itemize}
        \item \(i^\prime =\nu_{\{\ell,r\}}(i)\), and \(j^\prime =\nu_{\{\ell,r\}}(j)\).
        \item \(\tau^\prime:[2k-2]\rightarrow[2k-2]\) is defined by \(\tau^\prime(\nu_{\{\ell,r\}}((a)) = \nu_{\{\ell,r\}}(\tau(a))\), for every \(a \in [2k]\setminus \{\ell,r\}\).
        \item \(T^\prime = \nu_{\{\ell,r\}}(T\setminus\{l\})\) (which are stable points of \(\tau^\prime\) by the above definition).
    \end{itemize}
    Because  \(S(\tau^\prime) = \nu_{\{\ell,r\}}(S(\tau))\), we have \(2\left|T^\prime\right| +\left|S(\tau^\prime)\right| = 2k-2.\)  As \(\nu_{\{\ell,r\}}\) preserves orderings of indices \(\tau^\prime\) is also non-crossing, iff \(\tau\) is non-crossing. Thus \((\tau^\prime,T^\prime)\in \mathcal{T}_{k-1,2k-2}\). As we removed a pair of indices \(\{2q-1,2q\}\) and \(\nu_{\{\ell,r\}}\subset S(\tau^\prime)\), we have \((\tau^\prime,T^\prime)\in \widetilde{\mathcal{T}}_{k-1,2k-2}\). Since \(\nu_{\{\ell,r\}}\) preserves orderings of indices \(\tau^\prime\) the image of an arc of \(\tau\) is a special arc of \(\tau^\prime\) if and only if it were a special arc of \(\tau\). We can similarly define
     for \(\mu\in{\mathcal{M}_{(i,j,\tau,T)}}\), a new \(\mu^\prime\in{\mathcal{M}_{(i^\prime,j^\prime,\tau^\prime,T^\prime)}}\) by
    \begin{itemize}
        \item \(\mu^\prime(\nu_{\{\ell,r\}}^{-1}(a)) = \mu(a)\) for any \(a\in S(\tau)\).
    \end{itemize}
    Observe that that this operation is invertible: given  \(\hat \mu \in \widetilde{\mathcal{M}} _{(i^\prime, j^\prime, \tau^\prime, T^\prime)}\) we can find a unique \(\tilde\mu \in \widetilde{\mathcal{M}} _{(i, j, \tau, T)}\) such that \(\tilde\mu^\prime = \hat \mu\) by applying the bijection \(\nu_{\{\ell,r\}}\).
    Furthermore, as \(\{\ell,r\} = \{2q-1,2q\}\subset S(\tau)\) for some \(1\leq q \leq k-2\), iff  \(\{\ell,r\} = \{2q^\prime-1,2q^\prime\} \subset S(\tau^\prime)\) for some \(1\leq q^\prime \leq k-3\). Thus \(\mu^\prime \in \widetilde{\mathcal{M}} _{(i^\prime, j^\prime, \tau^\prime, T^\prime)}\) iff  \(\mu \in \widetilde{\mathcal{M}} _{(i , j , \tau , T )}\).
    We thus obtain a bijection \(\mu \in \widetilde{\mathcal{M}} _{(i , j , \tau , T )}\rightarrow \mu^\prime \in \widetilde{\mathcal{M}} _{(i^\prime, j^\prime, \tau^\prime, T^\prime)}  \) by \(\mu\mapsto \mu^\prime\).

    As we didn't remove any arc, and did not change the ordering of the indices, 
    \(d^{i,j}_\mu \) and \(d^{i^\prime,j^\prime}_{\mu^\prime} \) must be equal.
    Thus,
    \begin{align*}
        c_{\tau,T}^{i,j}(\Lambda_{k},\widetilde\Lambda_{k})\,&=\sum_{\mu\in\widetilde{\mathcal{M}}_{(i,j,\tau,T)}}(-1)^{d^{i,j}_\mu}
        =\sum_{\mu\in\widetilde{\mathcal{M}}_{(i,j,\tau,T)}}(-1)^{d^{i^\prime,j^\prime}_{\mu^\prime}} \\
        &=\sum_{\mu^\prime\in\widetilde{\mathcal{M}}_{(i^\prime,j^\prime,\tau^\prime,T^\prime)}}(-1)^{d^{i^\prime,j^\prime}_{\mu^\prime}}
        =c_{\tau^\prime,T^\prime}^{i^\prime,j^\prime}(\Lambda_{k-1},\widetilde\Lambda_{k-1})
    \end{align*}
    \end{proof}
    
    Since we have \(H\subset S(\tau)\) meaning \(H\cap T =\emptyset\), for each \(\ell\in T\) there must be a \(q\) with \(1\leq q\leq k-2\) such that \(\ell\in\{2q-1,2q\}\). We can thus continue removing such pairs, by induction, until we reach the case \(T = \emptyset\). It is therefore enough to show \(
    c_{\tau,T}^{i,j}(\Lambda_{k},\widetilde\Lambda_{k})\geq 0 
    \) for cases where \(T = \emptyset\), and assume they are positive otherwise by induction. 
    Since \(2\left|T\right| + |S(\tau)| = 2k\), we must have \(S(\tau) = [2k]\), and that \(\tau\) is an involution with no fixed points. We will now consider only these cases. 

    Recall that, by definition, for every \(\mu \in \widetilde{\mathcal M} _{(i,j,\tau,T)}\), for each \(1\leq q\leq k-2\), exactly one index from the pair \(\{\ell,r\}\defeq\{2q-1,2q\}\) is in \(L_\mu\), and exactly one is in \(R_\mu\cup J_\mu\). They are disjoint, since we assume \(T=\emptyset\). Without loss of generality \(
        \ell\in L_\mu,\ell\notin R_\mu\cup J_\mu\) and \(
        r\notin L_\mu, r\in R_\mu\cup J_\mu.\)
    
\subsubsection{Circle Graphs}
Recall \(H\defeq \{2k-3,2k-2,2k-1,2k\}\).

    \begin{dfn}

    For \(\tau\) as defined in Definition~\ref{def:temp_lieb}, and \(I\subset[2k]\), define the \emph{circle graph} \(\Gamma(\tau) \defeq \Gamma \) as the following graph embedded in a disk: 
    \begin{itemize}
        \item The vertices \(\mathcal V_\Gamma\) are the indices \([2k]\), arranged along the boundary of the disk in a counter-clockwise order.
        \item For every \(q\in[k]\), \(\{2q-1,2q\}\) is an edge. These edges will be called \emph{\(O\)-edges} and will be drawn on the boundary of the disk. If \(\{2q-1,2q\}\subset H\),  These edges will be termed \emph{\(H\)-edges}, and will be drawn dashed.
        \item For every \(\ell\in[2k]\), \(\{\ell,\tau(\ell)\}\) is an edge, contained in the interior of the disk. These are called the \emph{\(\tau\)-edges}. If  \(\{\ell,\tau(\ell)\}\) is an I-special arc, we will call the corresponding edge a \emph{special edge}.
    \end{itemize}

    Since \(\tau\) is non-crossing, we may assume that the edges of \(\Gamma\) are non-crossing as well.
    Since \(\tau\) is a total pairing, as \(T = \emptyset\), the graph is \(2\)-regular, that is, it is a disjoint set of cycles. A cycle that contains a special edge would be called a \emph{special cycle}, and a cycle that contains an \(H\)-edge would be called an \emph{\(H\)-cycle}. An \(n\)-special cycle is a cycle with exactly \(n\) special edges. Given \(\mu \in \widetilde{ \mathcal M} _{(i,j,\tau,T)}\), an edge that corresponds to a \(J\)-arc would be called a \emph{\(J\)-edge}, and a cycle that contains an \(J\)-edge is a \emph{\(J\)-cycle}. 
    \end{dfn}
    \begin{figure}[H]
    \centering
 
      \begin{center}
\begin{tikzpicture}[scale=0.8, every node/.style={scale=0.8}]

\draw[line width=0.30mm,blue] (0.469303,-2.96307) arc (-81.:-63.:3);

\draw[line width=0.30mm,blue] (2.12132,-2.12132) arc (-45.:-27.:3);

\draw[line width=0.30mm,blue] (2.96307,-0.469303) arc (-9.:9.:3);

\draw[line width=0.30mm,blue] (2.67302,1.36197) arc (27.:45.:3);

\draw[line width=0.30mm,blue] (1.36197,2.67302) arc (63.:81.:3);

\draw[line width=0.30mm,blue] (-0.469303,2.96307) arc (99.:117.:3);

\draw[line width=0.30mm,blue] (-2.12132,2.12132) arc (135.:153.:3);

\draw[line width=0.30mm,blue] (-2.96307,0.469303) arc (171.:189.:3);

\draw[line width=0.30mm,blue, dashed] (-2.67302,-1.36197) arc (207.:225.:3);

\draw[line width=0.30mm,blue, dashed] (-1.36197,-2.67302) arc (243.:261.:3);

\draw[black,dashed,line width=0.30mm] (-2.42705, -1.76336) arc (126.:108.:18.9413);

\draw[black,line width=0.25mm] (0.469303,-2.96307) arc (-171.:135-360.:5.88783);

\draw[black,line width=0.25mm] (1.36197,-2.67302) arc (-153.:45.-360:0.475153);

\draw[black,line width=0.25mm] (2.67302,-1.36197) arc (-117.:81.-360:0.475153);

\draw[black,line width=0.25mm] (2.96307,0.469303) arc (-81.:117.-360:0.475153);

\draw[black,line width=0.70mm] (1.36197,2.67302) arc (-27+180.:351-180:18.9413);

\draw[black,line width=0.25mm] (0.469303,2.96307) arc (-9.:333.-360:18.9413);

\draw[black,line width=0.70mm] (-0.469303,2.96307) arc (9.:315.-360:5.88783);

\draw[black,line width=0.25mm] (-1.36197,2.67302) arc (27.:225.-360:0.475153);

\draw[black,line width=0.25mm] (-2.67302,1.36197) arc (63.:297.-360:1.52858);

\draw[black,line width=0.25mm] (-2.96307,0.469303) arc (81.:279.-360:0.475153);

\filldraw[black] (0.469303,-2.96307) circle (0.07cm);

\node[anchor=north](c) at (0.469303,-2.96307){1};
\node[anchor=east, purple](c) at (0.469303,-2.96307){\(L\)};

\filldraw[black] (1.36197,-2.67302) circle (0.07cm);

\node[anchor=north](c) at (1.36197,-2.67302){2};
\node[anchor=west,purple](c) at (1.36197,-2.67302){\(R\)};

\filldraw[black] (2.12132,-2.12132) circle (0.07cm);

\node[anchor=north](c) at (2.12132,-2.12132){3};
\node[anchor=west,purple](c) at (2.12132,-2.12132){\(L\)};

\filldraw[black] (2.67302,-1.36197) circle (0.07cm);

\node[anchor=west](c) at (2.67302,-1.36197){4};
\node[anchor=south,purple](c) at (2.67302,-1.36197){\(R\)};

\filldraw[black] (2.96307,-0.469303) circle (0.07cm);

\node[anchor=west](c) at (2.96307,-0.469303){5};
\node[anchor=north,purple](c) at (2.96307,-0.469303){\(L\)};

\filldraw[black] (2.96307,0.469303) circle (0.07cm);

\node[anchor=west](c) at (2.96307,0.469303){6};
\node[anchor=south,purple](c) at (2.96307,0.469303){\(R\)};

\filldraw[black] (2.67302,1.36197) circle (0.07cm);

\node[anchor=west](c) at (2.67302,1.36197){7};
\node[anchor=south,purple](c) at (2.7,1.45){\(L\)};

\filldraw[black] (2.12132,2.12132) circle (0.07cm);

\node[anchor=south](c) at (2.12132,2.12132){8};
\node[anchor=west,purple](c) at (2.13,2.13){\(R\)};

\filldraw[black] (1.36197,2.67302) circle (0.07cm);

\node[anchor=south](c) at (1.36197,2.67302){9};
\node[anchor=west,purple](c) at (1.36197,2.67302){\(J\)};

\filldraw[black] (0.469303,2.96307) circle (0.07cm);

\node[anchor=south](c) at (0.469303,2.96307){10};
\node[anchor=east,purple](c) at (0.469303,2.96307){\(L\)};

\filldraw[black] (-0.469303,2.96307) circle (0.07cm);

\node[anchor=south](c) at (-0.469303,2.96307){11};
\node[anchor=west,purple](c) at (-0.469303,2.96307){\(R\)};

\filldraw[black] (-1.36197,2.67302) circle (0.07cm);

\node[anchor=south](c) at (-1.36197,2.67302){12};
\node[anchor=east,purple](c) at (-1.36197,2.67302){\(L\)};

\filldraw[black] (-2.12132,2.12132) circle (0.07cm);

\node[anchor=south](c) at (-2.12132,2.12132){13};
\node[anchor=east,purple](c) at (-2.145,2.13){\(R\)};

\filldraw[black] (-2.67302,1.36197) circle (0.07cm);

\node[anchor=east](c) at (-2.67302,1.36197){14};
\node[anchor=north,purple](c) at (-2.67302,1.36197){\(L\)};

\filldraw[black] (-2.96307,0.469303) circle (0.07cm);

\node[anchor=east](c) at (-2.96307,0.469303){15};
\node[anchor=south,purple](c) at (-2.96307,0.469303){\(L\)};

\filldraw[black] (-2.96307,-0.469303) circle (0.07cm);

\node[anchor=east](c) at (-2.96307,-0.469303){16};
\node[anchor=north,purple](c) at (-2.96307,-0.469303){\(R\)};

\filldraw[black] (-2.67302,-1.36197) circle (0.07cm);

\node[anchor=east](c) at (-2.67302,-1.36197){17};
\node[anchor=south,purple](c) at (-2.67302,-1.36197){\(R\)};

\filldraw[black] (-2.12132,-2.12132) circle (0.07cm);

\node[anchor=north](c) at (-2.12132,-2.12132){18};
\node[anchor=east,purple](c) at (-2.12132,-2.12132){\(J\)};

\filldraw[black] (-1.36197,-2.67302) circle (0.07cm);

\node[anchor=north](c) at (-1.36197,-2.67302){19};
\node[anchor=east,purple](c) at (-1.36197,-2.6){\(R\)};

\filldraw[black] (-0.469303,-2.96307) circle (0.07cm);

\node[anchor=north](c) at (-0.469303,-2.96307){20};
\node[anchor=west,purple](c) at (-0.469303,-2.96307){\(R\)};

\end{tikzpicture}

\end{center}

    \caption{An example of a circle graph with a marking}
    \label{fig:circ graph}
\end{figure}

    \paragraph{Example:} in Figure~\ref{fig:circ graph} we see an example of the circle graph \(\Gamma(\tau)\) for \(k=10\), \(i=6\), \(j=17\), 
    \[
    \tau = (1,8) (2,3) (4,5) (6,7) (9,20) (10,19) (11,18) (12,13) (14,17) (15,16)
    \] in cycle notation, \(T = \emptyset\), together with a marking \(\mu\in\widetilde{\mathcal M}_{(i,j,\tau,T)}\).
    
    The \(\tau\)-edges are colored black, with the \(J\)-edges being thicker. The \(O\)-edges are colored blue with the \(H\)-edges dashed blue. The \(i\)-\(j\) line is dashed in black. The marking \(\mu\) is colored purple.

    The special edges are \(\{1,8\},\{6,7\},\{9,20\},\{10,19\},\{11,18\}\). The indices \(\{1,2,3,4,5,6,7\}\), form a 2-special cycle,  \(\{9,10,19,20\}\) a 2-special \(H\)-cycle, and the indices \(
    \{11,12,13,14,17,18\}\) form a 1-special \(H\) cycle. The cycle on \(\{15,16\}\) is neither an \(H\)-cycle nor a special cycle. There is one special edge strictly between the two \(J\)-edges \(\{9,20\}\) and \(\{11,18\}\), namely the edge \(\{10,19\}\). Thus \(d^{i,j}_\mu = 1\).

     As all the edges are on the boundary of the disk, we must have that the regions bounded by the cycles are disjoint, that is, we cannot have a circle embedded in the region bounded by another circle.
    Since the vertices are on the boundary of the disk, the edges are non crossing, and the graph is two-regular, the \(i\)-\(j\) line can cross each cycle either twice or not at all. Furthermore, by definition, the special edges are the ones that cross \(i\)-\(j\) line. Since the \(i\)-\(j\) line passes through the interior of the disk, we have that it only crosses \(\tau\)-edges, while it may or may not start and end on \(O\)-edges.
    This means that the first and last cycles that the \(i\)-\(j\) line intersects can either be 1-special or 2-special, while any other special cycle is 2-special.

    \begin{prop}
    \label{prop:special cycles}
        For \((\tau,T=\emptyset)\in\widetilde{\mathcal T}_{k,2k}\), \(\Gamma(\tau)\) is a disjoint set of cycles with the areas bounded by the cycles being disjoint. The first and last cycles the \(i\)-\(j\) line passes through can either be 1-special or 2-special, while any other special cycle is 2-special.
    \end{prop}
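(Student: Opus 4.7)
Since $T = \emptyset$ and $2|T| + |S(\tau)| = 2k$, every index in $[2k]$ lies in $S(\tau)$, so $\tau$ is a fixed-point-free involution. Hence every vertex of $\Gamma(\tau)$ is incident to exactly one $\tau$-edge and exactly one $O$-edge, making $\Gamma(\tau)$ a $2$-regular graph. A $2$-regular graph is a disjoint union of cycles; this gives the first part of the statement.

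Next I would argue that the bounded regions of the cycles are disjoint. The $O$-edges lie on the boundary of the disk, and by the non-crossing property of $\tau$, the $\tau$-edges are pairwise non-crossing chords. Therefore each cycle, as a closed curve composed of boundary arcs and chords, is a simple closed curve inside the closed disk. If two cycles $C_1$ and $C_2$ were such that the bounded region of $C_1$ strictly contained $C_2$, then every vertex of $C_2$ would lie in the open interior of the disk (since the bounded region of $C_1$ meets the boundary circle only in its $O$-edges, and those $O$-edges are vertex-disjoint from $C_2$). This contradicts the fact that all vertices sit on the boundary of the disk. Hence the bounded regions of distinct cycles are disjoint.

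For the statement about special cycles, the key input is that the $i$-$j$ line is a simple arc in the closed disk whose two endpoints lie on the boundary circle (in the gaps immediately after vertex $i$ and after vertex $j$), and whose interior lies in the open disk. Because $O$-edges sit on the boundary, the $i$-$j$ line only meets $\tau$-edges, and a $\tau$-edge of a cycle $C$ is special precisely when the $i$-$j$ line crosses it. Viewing $C$ as a simple closed curve, the parity of the number of times the $i$-$j$ line crosses $C$ is determined by whether its two boundary endpoints lie in the same complementary region of $C$ or in opposite ones. For a cycle $C$ whose region does not contain either endpoint of the $i$-$j$ line, both line-endpoints lie in the unbounded region (relative to $C$ inside the disk), so the line crosses $C$ an even number of times; combined with non-crossingness of edges, this forces exactly $0$ or $2$ crossings, giving a non-special or $2$-special cycle. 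A cycle can fail this parity condition only if its bounded region contains one (or both) of the two endpoints of the $i$-$j$ line. Since the regions are disjoint by the previous step, at most two cycles can contain an endpoint — one for each endpoint — and these are precisely the first and last cycles the line meets. For these cycles the crossing count can be odd, giving the $1$-special case.

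The main obstacle, in my view, is making the parity/Jordan-curve argument fully rigorous at the endpoints of the $i$-$j$ line, since those endpoints lie on the boundary circle and may coincide with an $O$-edge of a cycle (when $i$ or $j$ is odd) rather than lie strictly in an open region. I would resolve this by perturbing the endpoints slightly into the interior of the disk (which does not change the number of $\tau$-edge crossings, as $\tau$-edges are in the open interior) so that the standard Jordan curve parity argument applies verbatim; the remainder then reduces to the clean counting above.
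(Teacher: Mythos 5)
Your proof is correct and follows essentially the same approach as the paper: two-regularity gives the cycle decomposition, the vertices-on-the-boundary observation gives disjointness of the bounded regions, and the Jordan-curve parity argument for the $i$-$j$ chord (with only $\tau$-edges available for crossings) gives the $2$-special versus $1$-special dichotomy. Your explicit perturbation of the endpoints of the $i$-$j$ line into the open disk is a clean way to resolve the boundary-case subtlety that the paper handles more informally by saying the line "may or may not start and end on $O$-edges," and your reduction of the ``at most two crossings'' claim to non-crossingness of the chords is at the same level of rigor as the paper's corresponding one-sentence assertion.
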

   
While \(\mu \in \widetilde{ \mathcal M} _{(i,j,\tau,T)}\) imposes conditions on the labeling of vertices connected by either a \(\tau\)-edge or an \(O\)-edge, the labeling of vertices in different cycles is entirely independent.
    \begin{prop}
    \label{prop:regular cycles}
           Let \(C\subset[2k]\) is cycle of \(\Gamma\), and let \(V_C\) be its vertices. Suppose that \(C\) is neither a special cycle nor an \(H\)-cycle. Then there exists \((\tau^\prime, T^\prime)\in \widetilde{\mathcal{T}}_{k,2k}\), such that \(T^\prime\neq \emptyset\), and 
    \(
    c^{i,j}_{\tau,T}(\Lambda_{k},\widetilde\Lambda_{k})=2 c^{i,j}_{\tau^\prime,T^\prime}(\Lambda_{k},\widetilde\Lambda_{k})
    .\)
    \end{prop}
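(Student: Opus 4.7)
The strategy is to construct $(\tau', T')$ by ``collapsing'' the cycle $C$. Since $C$ is neither a special cycle nor an $H$-cycle, its $\tau$-edges are all non-special, and its $O$-edges are all non-$H$ pairs $\{2q_\alpha-1,\,2q_\alpha\}$ with $q_\alpha \in [k-2]$ for $\alpha = 1, \ldots, m$, where $2m = |V_C|$. Define $\tau'$ by $\tau'(v) \defeq v$ for $v \in V_C$ and $\tau'(v) \defeq \tau(v)$ otherwise, and set $T' \defeq \{2q_\alpha - 1 : \alpha \in [m]\}$. This removes exactly the $\tau$-edges of $C$ from $\tau$, turns every $v \in V_C$ into a fixed point of $\tau'$, and places one element of each collapsed $O$-pair into $T'$.

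The first step is to verify $(\tau', T') \in \widetilde{\mathcal T}_{k, 2k}$. Non-crossing is preserved under removing arcs, and a count gives $2|T'| + |S(\tau')| = 2m + (2k - 2m) = 2k$. For the condition on pairs $\{2q-1, 2q\}$ with $q \in [k-2]$: since every $O$-edge of $\Gamma$ lies in a single connected component (as $\Gamma$ is $2$-regular), each such pair either lies entirely in $V_C$, in which case both endpoints become fixed points of $\tau'$ with $2q-1 \in T'$, or is disjoint from $V_C$ and hence both endpoints remain in $S(\tau')$. The condition $H \subset S(\tau')$ reduces to $V_C \cap H = \emptyset$; the latter holds because any $H$-index in $V_C$ would force its incident $O$-edge (an $H$-edge) to lie in $C$, contradicting the assumption that $C$ is not an $H$-cycle.

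The next step is to establish a $2$-to-$1$ correspondence between $\widetilde{\mathcal M}_{(i,j,\tau,T)}$ and $\widetilde{\mathcal M}_{(i,j,\tau',T')}$. Given $\mu \in \widetilde{\mathcal M}_{(i,j,\tau,T)}$, its restriction $\mu' \defeq \mu|_{S(\tau')}$ lies in $\widetilde{\mathcal M}_{(i,j,\tau',T')}$, and each such $\mu'$ admits exactly two extensions to a valid $\mu$. The key observation is that since $C$ is non-special, no $\tau$-arc of $C$ can be a $J$-arc, so $\mu|_{V_C}$ uses only the labels $L$ and $R$; the $\tau$-edge and $O$-edge constraints then force $\mu|_{V_C}$ to alternate $L$ and $R$ around $C$, yielding exactly two valid labelings. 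Conversely, extending any $\mu'$ by either alternating labeling produces a valid $\mu$, because the $O$-pair constraints for $q \in \{q_1, \ldots, q_m\}$ are discharged via $T'$ on the $\mu'$ side and via the alternating labeling on the $\mu$ side.

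Finally, one checks $d_{\mu}^{i,j} = d_{\mu'}^{i,j}$: the pair of $J$-arcs is the same for $\mu$ and $\mu'$ (being two special arcs of $\tau$, which are preserved since we only removed non-special arcs), and the set and cyclic ordering of $I(i,j)$-special arcs agree between $\tau$ and $\tau'$. Summing $(-1)^{d_\mu^{i,j}}$ via the $2$-to-$1$ correspondence then gives $c^{i,j}_{\tau,T} = 2\, c^{i,j}_{\tau',T'}$. The main bookkeeping obstacle is verifying the $\widetilde{\mathcal M}$ conditions (in particular the $O$-pair and $H$ constraints) on both sides of the correspondence; these, however, reduce directly to the structural description of $C$ recorded in Proposition~\ref{prop:special cycles}.
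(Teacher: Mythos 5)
Your proof is correct and takes essentially the same approach as the paper: collapse the cycle $C$ to fixed points of $\tau'$, place one element of each collapsed $O$-pair in $T'$, establish the $2$-to-$1$ restriction map on markings via the forced $L/R$-alternation around $C$, and observe that $d^{i,j}_\mu$ is unchanged because no special arcs were removed. The only cosmetic difference is that the paper places the even elements of the collapsed pairs in $T'$ while you use the odd ones; and you spell out the step $V_C \cap H = \emptyset$ slightly more explicitly, which the paper leaves implicit.
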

    Thus we can assume \(c^{i,j}_{\tau,T}(\Lambda_{k},\widetilde\Lambda_{k})\geq 0\) by induction via Proposition~\ref{prop:T empty}.
    \begin{proof}
    Define \(\tau^\prime\) and \(T^\prime\) as follows: First define \(\tau^\prime(\ell) = \tau(\ell)\) for \(\ell\notin V_C\), and \(\tau(\ell) = \ell\) otherwise. Now, define \(T^\prime = T\cup\{2\ell : \{2\ell-1,2\ell\}\subset V_C\}\).

\(\tau^\prime\) is non-crossing if \(\tau\) is, and \(2\left|T^\prime\right| + \left|S(\tau^\prime)\right| = 2\left|T\right| + \left|S(\tau)\right| = 2k\), thus \((\tau^\prime,T^\prime)\in {\mathcal T}_{k,2k}\). \(H\subset S(\tau^\prime)\), and by construction for each \(1\leq q\leq k-2\), the pair \(\{2q-1,2q\}\) is either contained in \(S(\tau^\prime)\) or avoids \(S(\tau^\prime)\), and exactly one of $2q-1,2q$ is in \(T^\prime\). Thus \((\tau^\prime,T^\prime)\in \widetilde {\mathcal T}_{k,2k}\).
    For \(\mu \in \widetilde{\mathcal M}_{(i,j,\tau,T)}\), define \(\mu^\prime = \evalat{\mu}S(\tau^\prime)\). We have essentially removed from $\tau$ the arcs in the cycle \(V_C\) to get \(\tau^\prime\). Since \(V_C\) is not special we did not remove any special arcs from \(\tau\). This means \(\mu^\prime \in \widetilde{\mathcal M}_{(i,j,\tau^\prime,T^\prime)}\), and that  \(d^{i,j}_{\mu^\prime}=d^{i,j}_{\mu}\).

    We claim that the map \(\widetilde{\mathcal M}_{(i,j,\tau,T)}\rightarrow\widetilde{\mathcal M}_{(i,j,\tau^\prime,T^\prime)}\) defined by \(\mu\mapsto \mu^\prime\) is two-to-one:
    Take \(\mu^\prime\in \widetilde{\mathcal M}_{(i,j,\tau^\prime,T^\prime)} \). We will show there exist only two \(\hat\mu_{1,2}\in \widetilde{\mathcal M}_{(i,j,\tau,T)} \) such that  \(\hat\mu_{1,2}^\prime = \mu^\prime\).

    For every index \(\ell\notin V_C\), we have \(\mu^\prime(\ell)=\mu(\ell)\), thus we must have \(\hat\mu_{1,2}(\ell) = \mu(\ell)\). As for the indices in \(V_C\), notice we have that 
    \(
    V_C \cap S(\tau^\prime) = \emptyset
    ,\)
    and
    \(
    T^\prime = \{q : q\in V_C \text{ is even}\} 
    .\)
    Notice that for every edge \(\{r,\ell\}\subset V_C\), \(\{r,\ell\}\) is either a non-special \(\tau\)-edge, or a non-\(H\) \(O\)-edge. This means we must have one edge labeled \(R\), and one labeled \(L\), by \(\hat\mu_{1,2}\in\widetilde{\mathcal M}_{(i,j\tau,T)}\). Meaning that the labeling of vertices in \(V_C\) must be alternating \(R\) and \(L\). Indeed there are exactly two possible options for \(\evalat{\hat\mu_{1,2}}{V_C}\). One where the odd-numbered vertices go to \(R\) and the even-numbered vertices go to \(L\) (let us call it \(\hat\mu_1\)), and one where the even-numbered vertices go to \(R\) and the odd-numbered vertices go to \(L\) (let us call it \(\hat\mu_2\)). 
    
 Thus the map \(\widetilde{\mathcal M}_{(i,j,\tau,T)}\rightarrow\widetilde{\mathcal M}_{(i,j,\tau^\prime,T^\prime)}\) defined by \(\mu\mapsto \mu^\prime\) is two-to-one, and we have

    \begin{align*}
        c_{\tau,T}^{i,j}(\Lambda_{k},\widetilde\Lambda_{k})\,&=\sum_{\mu\in\widetilde{\mathcal{M}}_{(i,j,\tau,T)}}(-1)^{d^{i,j}_\mu} 
        =\sum_{\mu\in\widetilde{\mathcal{M}}_{(i,j,\tau,T)}}(-1)^{d^{i ,j}_{\mu^\prime}} \\
        &=2\sum_{\mu^\prime\in\widetilde{\mathcal{M}}_{(i ,j,\tau^\prime,T^\prime)}}(-1)^{d^{i ,j}_{\mu^\prime}} 
        =2 c_{\tau^\prime,T^\prime}^{i ,j}(\Lambda_{k},\widetilde\Lambda_{k}).
    \end{align*}
    \end{proof}

    As \(T^\prime\neq\emptyset\), we can assume by induction \(c_{\tau^\prime,T^\prime}^{i ,j}(\Lambda_{k},\widetilde\Lambda_{k}) \geq 0\), and therefore so is \(c_{\tau,T}^{i,j}(\Lambda_{k},\widetilde\Lambda_{k})\).

    If \(\Gamma(\tau)\) has a non-special, non-\(H\) cycle (by Proposition~\ref{prop:regular cycles}), or if \(T \neq \emptyset \) (by Proposition~\ref{prop:T empty}),  we can now assume that \(c_{\tau,T}^{i,j}(\Lambda_{k},\widetilde\Lambda_{k})\geq 0\). We will now consider \((\tau,T)\in \widetilde{\mathcal T}_{k,2k}\) with \(T =\emptyset\) such that \(\Gamma(\tau)\) has no non-special, non-\(H\) cycles.

    \begin{dfn}
   Fix \(\mu\in\widetilde{\mathcal M}_{(i,j,\tau,T)}\) and a path \(P = \{\{a_1,a_2\},\{a_2,a_3\},...\{a_{q-1},a_q\}\}\) in \(\Gamma\).
   
   A path \(P\) that does not contain any \(J\) or \(H\)-edge will be called a \emph{regular path}.
    A path \(P\) such that \(a_1, a_q\) both belong to a \(H\)-edge and \(P\) itself contains no \(H\)-edges will be called an \emph{\(H\)-path}. Since they belong to an \(H\)-edge, we must have \(\{a_1,a_q\}\subset R_\mu\cup J_\mu\).
    An \(H\)-path that contains a special edge will be called a \emph{special path}. A special path is \emph{\(n\)-special} if it contains exactly \(n\) special edges.
    \end{dfn}

    \begin{prop}
    \label{prop:HJpaths}
        For \(\mu\in\widetilde{\mathcal M}_{(i,j,\tau,T)}\) regular paths must alternate between non-\(H\) \(O\)-edges and non-\(J\) \(\tau\)-edges, and must also alternate between vertices in \(L_\mu\) and \(R_\mu\cup J\mu\). Furthermore, every \(H\)-path is special and must contain exactly one \(J\)-edge, and every \(J\)-edge is contained in an \(H\)-path.
    \end{prop}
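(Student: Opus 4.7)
The plan is to verify the four assertions of the proposition in sequence, leveraging that $\Gamma(\tau)$ is $2$-regular with each vertex incident to exactly one $O$-edge and one $\tau$-edge, and tracking along paths the indicator $\sigma\colon \mathcal V_\Gamma \to \{0,1\}$ defined by $\sigma(v)=0$ iff $v\in L_\mu$ (equivalently, $\sigma(v)=1$ iff $v\in R_\mu\cup J_\mu$). The key observation is that each edge type has a predictable effect on $\sigma$: non-$H$ $O$-edges and non-$J$ $\tau$-edges both flip $\sigma$ (the former because, with $T=\emptyset$, a non-$H$ pair $\{2q-1,2q\}$ must contribute exactly one endpoint to $L_\mu$ and one to $R_\mu\cup J_\mu$ by the definition of $\widetilde{\mathcal M}$; the latter because in any marking a non-$J$ $\tau$-edge carries one label $L$ and one label $R$), while $H$-edges and $J$-edges both preserve $\sigma$ since their endpoints all lie in $H\subset R_\mu\cup J_\mu$ and in $\mu^{-1}(J)\cup \mu^{-1}(R)\subset R_\mu\cup J_\mu$ respectively.

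First I would dispatch the alternation claim: since $\Gamma$ is $2$-regular with one $O$- and one $\tau$-edge at each vertex, every path alternates between the two edge types, and by definition a regular path contains neither $H$- nor $J$-edges, so its $O$-edges are non-$H$ and its $\tau$-edges are non-$J$. The required alternation between $L_\mu$ and $R_\mu\cup J_\mu$ then follows immediately from the flipping behaviour of $\sigma$ on both edge types along the path.

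Next I would treat $H$-paths. Since the $H$-edge at each endpoint $a_1, a_q$ of an $H$-path $P$ is excluded from $P$, the other edge incident to each endpoint must be a $\tau$-edge, so $P$ both begins and ends with a $\tau$-edge; in particular the length of $P$ is odd. Applying the $\sigma$-calculus, $\sigma(a_1)=\sigma(a_q)=1$ forces the number of $\sigma$-flipping edges of $P$ (i.e., the non-$J$ edges) to be even, so the number of $J$-edges of $P$ has the same parity as the total (odd) length. Since only two $J$-edges exist globally, the only odd count compatible with this is one, so $P$ contains exactly one $J$-edge; that edge is $I$-special by definition, which makes $P$ itself special.

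Finally, to see that every $J$-edge lies on some $H$-path, fix a $J$-edge and let $C$ be the cycle of $\Gamma$ containing it. If $C$ contains any $H$-edge, then cutting $C$ along all its $H$-edges partitions it into maximal $H$-edge-free arcs, each of which is an $H$-path (its endpoints lie on an $H$-edge by construction), and the $J$-edge lies in one of these pieces. The main obstacle is the edge case that $C$ has no $H$-edge at all: the global $\sigma$-parity around $C$ then forces an even number of $J$-edges on $C$, hence both $J$-edges to lie on $C$. In this configuration the two $J$-edges split $C$ into two arcs, each of which starts and ends with an $O$-edge (the $J$-edges themselves, being $\tau$-edges, have been removed) and therefore has odd length; but with no $H$-edges on $C$ every edge of these arcs flips $\sigma$, producing $\sigma(v_1)\ne\sigma(v_2)$ at the two $J$-endpoints of each arc even though both must lie in $R_\mu\cup J_\mu$. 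This parity contradiction rules out the edge case, showing that $C$ must carry an $H$-edge and hence that the given $J$-edge is contained in an $H$-path, completing the proof.
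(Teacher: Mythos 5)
Your proof is correct and takes essentially the same approach as the paper's: both rest on the observation that a regular path alternates edge types and alternates vertex labels, and then extract a parity contradiction. Your packaging via the explicit indicator $\sigma$ is cleaner than the paper's walk-along-the-path-and-reiterate style, and you make two things more explicit that the paper leaves slightly implicit: (a) instead of proving by contradiction that no regular $H$-path exists and then separately deducing the ``exactly one $J$-edge'' claim, you compute directly that an $H$-path contains an odd number of $\sigma$-preserving edges (hence, with at most two $J$-edges in the whole graph, exactly one); and (b) you carefully treat the edge case of both $J$-edges sitting on a single $H$-free cycle, which the paper covers only with ``by the same argument.'' Nothing is missing.
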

    \begin{proof}
    Suppose \(P\) is a regular \(H\)-path.
    Every vertex in \(\Gamma\) is two-regular and belongs to one \(O\)-edge and one \(\tau\)-edge. Since we have that \(a_1\) belongs to a \(H\)-edge (which must be \(O\)) that does not belong to \(P\), we have that \(\{a_1,a_2\}\) must be a non-\(J\) \(\tau\)-edge, and therefore we must have \(a_2\in L_\mu\). Similarly, \(\{a_2,a_3\}\) must then be a non-\(H\) \(O\)-edge, and therefore we must have \(a_3 \in R_\mu \cup J_\mu\). Continuing in that fashion, we get that for any \(2\leq2\ell\leq q\), we have that \(\{a_{2\ell-1},a_{2\ell}\}\) must be a non-\(J\), \(\tau\)-edge with \(a_{2\ell}\in L_\mu\), and  for any \(3\leq2\ell+1\leq q\) we have that \(\{a_{2\ell},a_{2\ell+1}\}\) must be a non-\(H\), \(O\)-edge with \(a_{2\ell+1}\in R_\mu \cup J_\mu\). 
    
    We know \(a_q \) belongs to an \(H\)-edge, thus we have \(a_q\in R_\mu \cup J_\mu\). That means \(q\) must be odd. However, that means \(\{a_{q-1},a_{q}\}\) must be a non-\(H\), \(O\)-edge (which can not be an \(H\)-edge because it is contained in \(P\)), and thus the other edge \(a_q\) belongs to must be a \(\tau\)-edge (which also can not be a \(H\)-edge, as only \(O\)-edges might be \(H\)). We thus have that \(a_q\), does not belong to any \(H\)-edge, which is contradiction. Which means \(P\) must contain a \(J\)-edge. 

    To reiterate: Regular paths must alternate between non-\(H\), \(O\)-edges and non-\(J\), \(\tau\)-edges, and must also alternate between vertices in \(L_\mu\) and \(R_\mu\cup J\mu\). Since the number of vertices and the number of edges in a path are of differing parity, we have that no such path can end on both the same kind of edge and the same kind of vertex. 
    This means we cannot have a regular \(H\)-path. By the same argument, no regular path can connect two vertices in \(R_\mu\cup J_\mu\) that are contained in a \(J\)-edge, and therefore we cannot have a \(J\)-edge in a non-\(H\) cycle. That means that any cycle must contain the same number of \(H\) and \(J\)-edges arranged alternatively around the cycle, and every \(H\)-path must contain exactly one \(J\)-edge. 
    \end{proof}
    Since we have only two \(J\)-edges, we can conclude
    \begin{prop}
    \label{prop:2 H paths}
    For \(\widetilde{\mathcal{M}}_{(i,j,\tau,T)}\) to be non-empty, and thus for 
    \[
    c_{\tau,T}^{i,j}(\Lambda_{k},\widetilde\Lambda_{k})=\sum_{\mu\in\widetilde{\mathcal{M}}_{(i,j,\tau,T)}}(-1)^{d^{i,j}_\mu} 
    \]
    to be non-zero, we must have exactly two \(H\)-paths.
     \end{prop}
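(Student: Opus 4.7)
The plan is to obtain this as an immediate counting corollary of Proposition~\ref{prop:HJpaths}. The observation is that the number of \(H\)-paths is controlled by the number of \(J\)-edges, and the latter is completely pinned down by the definition of an \((i,j,\tau,T)\)-marking.

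Concretely, I would argue in two short steps. First, unpack the definition of \(\widetilde{\mathcal M}_{(i,j,\tau,T)}\): any marking \(\mu\) in this set requires \(\tau\) to have exactly two \(I(i,j)\)-special arcs \(\{\ell,\tau(\ell)\}\) and \(\{r,\tau(r)\}\), with \(\mu(\ell) = \mu(r) = J\) (and \(\mu(\tau(\ell)) = \mu(\tau(r)) = R\)). Hence any such \(\mu\) produces exactly two \(J\)-edges in the circle graph \(\Gamma(\tau)\), and non-emptiness of \(\widetilde{\mathcal M}_{(i,j,\tau,T)}\) already guarantees the existence of precisely these two \(J\)-edges.

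Second, apply Proposition~\ref{prop:HJpaths}, which simultaneously asserts (i) every \(J\)-edge is contained in an \(H\)-path and (ii) every \(H\)-path is special and contains exactly one \(J\)-edge. Together these statements set up a bijection between \(J\)-edges and \(H\)-paths: by (i) each of the two \(J\)-edges gives an \(H\)-path, by (ii) these two \(H\)-paths are distinct (two \(J\)-edges cannot share an \(H\)-path) and there are no other \(H\)-paths (every \(H\)-path is accounted for by a \(J\)-edge). So the number of \(H\)-paths equals the number of \(J\)-edges, namely two.

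I do not anticipate any substantive obstacle: all the real structural content sits in Proposition~\ref{prop:HJpaths}, and the present statement is essentially a two-line bookkeeping consequence. The only mild point requiring explicit mention is that both directions of the bijection between \(J\)-edges and \(H\)-paths rest on part (ii) of Proposition~\ref{prop:HJpaths}—one uses it to rule out two \(J\)-edges in the same \(H\)-path, the other to rule out \(J\)-free \(H\)-paths—but both exclusions are immediate.
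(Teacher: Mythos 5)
Your proof is correct and takes essentially the same route as the paper: the paper derives the statement as a one-line corollary of Proposition~\ref{prop:HJpaths} together with the fact that a marking has exactly two \(J\)-arcs, which is precisely the bijection you spell out.
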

     
    \begin{prop}
    \label{prop:all H cycles}
           Let \(C\subset [n]\) be cycle of \(\Gamma\), and let \(V_C\) be its vertices. Suppose that \(C\) is not an \(H\)-cycle. We claim that there exist \((\tau^\prime, T^\prime)\in \widetilde{\mathcal{T}}_{k,2k}\) such that \(T^\prime\neq \emptyset\), and 
    \(
    c^{i,j}_{\tau,T}(\Lambda_{k},\widetilde\Lambda_{k})=2 c^{i,j}_{\tau^\prime,T^\prime}(\Lambda_{k},\widetilde\Lambda_{k})
    .\)
    \end{prop}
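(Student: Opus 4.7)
The construction of $(\tau', T')$ mirrors that of Proposition~\ref{prop:regular cycles}: I would define $\tau'$ to agree with $\tau$ off $V_C$ and to fix every vertex in $V_C$, and set $T' = T \cup \{2q : \{2q-1, 2q\} \subset V_C\}$. The verification that $(\tau',T') \in \widetilde{\mathcal{T}}_{k,2k}$ with $T' \neq \emptyset$ proceeds exactly as in Proposition~\ref{prop:regular cycles}, using that $2|T'| + |S(\tau')| = 2k$, that $\nu$-type relabellings preserve non-crossingness, and that any pair $\{2q-1, 2q\} \subset V_C$ contributes a new element of $T'$ (so $T' \neq \emptyset$ since $V_C$ is non-empty and each edge of $C$ has two endpoints forming such a pair whenever the edge is an $O$-edge).

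Next, for each $\mu \in \widetilde{\mathcal{M}}_{(i,j,\tau,T)}$, I would set $\mu' := \mu|_{S(\tau')}$ and check that $\mu \mapsto \mu'$ is a 2-to-1 surjection onto $\widetilde{\mathcal{M}}_{(i,j,\tau',T')}$ with $(-1)^{d^{i,j}_\mu} = (-1)^{d^{i,j}_{\mu'}}$, giving the desired identity. The map is well-defined because $C$ is non-$H$: by Proposition~\ref{prop:HJpaths}, every $J$-edge is contained in an $H$-path, and every $H$-path lies in an $H$-cycle, so $C$ carries no $J$-edges. Consequently the two $J$-arcs of $\mu$ remain as $J$-arcs of $\mu'$, and $\mu'$ meets all the requirements of $\widetilde{\mathcal{M}}_{(i,j,\tau',T')}$. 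The 2-to-1 multiplicity follows exactly as in Proposition~\ref{prop:regular cycles}: each edge of $C$ (whether $O$- or $\tau$-edge) is non-$H$ and non-$J$, so $\mu$ must split its endpoints into one $L$ and one $R$; this forces the labels along $C$ to alternate, leaving precisely two lifts of any given $\mu'$.

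The crux—and the main obstacle—is showing $(-1)^{d^{i,j}_\mu} = (-1)^{d^{i,j}_{\mu'}}$, i.e., that the number of special edges in $V_C$ counted by $d^{i,j}_\mu$ has the same parity as in $\mu'$. I would split into cases using Proposition~\ref{prop:special cycles}: if $C$ is non-special, there is nothing to remove. If $C$ is 2-special, the two special edges of $C$ are the two crossings of the $i$-$j$ line with the planar cycle $C$; since the regions bounded by the cycles of $\Gamma$ are pairwise disjoint, and the two $J$-arcs together with the $i$-$j$ line bound a well-defined topological region, the planarity of $\Gamma$ forces both special edges of $C$ to lie on the same side of each $J$-arc—so they are both counted in $d^{i,j}_\mu$ or both not, changing $d$ by $0$ or $2$.

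The hardest subcase is when $C$ is a $1$-special non-$H$ cycle, which by Proposition~\ref{prop:special cycles} only arises when $C$ is the first or last cycle crossed by the $i$-$j$ line, i.e., an endpoint of the $i$-$j$ line lies on a non-$H$ $O$-edge of $C$. Here I plan to use Proposition~\ref{prop:2 H paths} together with the alternating structure of Proposition~\ref{prop:HJpaths} to argue that either the unique special edge of $C$ fails to be counted by $d^{i,j}_\mu$ (so parity is automatically preserved), or else the topological constraints of having exactly two $H$-paths together with the $J$-edge–$H$-edge alternation force $\widetilde{\mathcal{M}}_{(i,j,\tau,T)} = \emptyset$, making the identity vacuous. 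Assembling the three cases with the 2-to-1 bijection on markings then yields $c^{i,j}_{\tau,T}(\Lambda_k, \widetilde\Lambda_k) = 2\, c^{i,j}_{\tau',T'}(\Lambda_k, \widetilde\Lambda_k)$, which combined with Proposition~\ref{prop:T empty} and induction closes out the sign analysis and completes Lemma~\ref{pos on L0 lemma}.
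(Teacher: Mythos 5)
Your overall plan mirrors the paper's proof: construct the same $(\tau',T')$ as in Proposition~\ref{prop:regular cycles}, note that $C$ cannot carry a $J$-edge (by Proposition~\ref{prop:HJpaths}) so the restriction map $\mu\mapsto\mu'$ is still $2$-to-$1$, and then close by controlling the parity of $d^{i,j}$. The non-special and $2$-special subcases are handled in essentially the paper's way; your planarity phrasing for the $2$-special case is just a restatement of the paper's cleaner observation that the two special edges of $C$ are \emph{consecutive} on the $i$-$j$ line (disjointness of the cycle regions), so they are either both strictly between the $J$-arcs or both outside, changing $d$ by $0$ or $2$.

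The gap is in the $1$-special subcase. You hedge between two alternatives, but the paper gives a direct argument that always applies: by Proposition~\ref{prop:special cycles}, a $1$-special cycle must be the first or last cycle the $i$-$j$ line crosses, so its unique special edge is the \emph{first or last} special edge along the $i$-$j$ line. Since $C$ is not an $H$-cycle, this edge is never a $J$-edge, so it cannot lie strictly between the two $J$-arcs of any marking $\mu$, hence it is never counted in $d^{i,j}_\mu$ and removing it changes nothing. Your fallback (``force $\widetilde{\mathcal{M}}_{(i,j,\tau,T)}=\emptyset$, making the identity vacuous'') is also not quite safe as stated: emptiness of $\widetilde{\mathcal{M}}_{(i,j,\tau,T)}$ gives $c^{i,j}_{\tau,T}=0$, but you would still need to argue separately that $\widetilde{\mathcal{M}}_{(i,j,\tau',T')}$ is empty too (the $2$-to-$1$ surjectivity of $\mu\mapsto\mu'$ does give this, but one should say so rather than call the identity ``vacuous''). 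Fortunately, since the first alternative always holds, the fallback is never needed; you should simply establish it outright rather than posing it as a dichotomy.
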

    Thus we can assume \(c^{i,j}_{\tau,T}(\Lambda_{k},\widetilde\Lambda_{k})\geq 0\) by induction via Proposition~\ref{prop:T empty}.
    \begin{proof}
    By Proposition~\ref{prop:regular cycles} it is enough to show for \(C\) a special non-\(H\)-cycle.
    
    By Proposition~\ref{prop:HJpaths}, we know that for \(\mu\in\widetilde{\mathcal{M}}_{(i,j,\tau,T)}\) we cannot have that a non-\(H\) cycle would be a \(J\)-cycle. That means that every special non-\(H\) cycle cannot ever be \(J\). Thus it is labeled the same a non-special non-\(H\) cycle and the argument from Proposition~\ref{prop:regular cycles} applies. The only change is that now we are removing some special edges, so let us make sure  \((-1)^{d^{i,j}_{\mu^\prime}}=(-1)^{d^{i,j}_{\mu}}\): By Proposition~\ref{prop:special cycles} we have that \(C\) is either 2-special or 1-special. 

    If it is 1-special, it must be either the first or the last special-edge on the \(i\)-\(j\) line, thus removing it will not change the number of special edges between the \(J\)-edges and thus \((-1)^{d^{i,j}_{\mu^\prime}}=(-1)^{d^{i,j}_{\mu}}\).

    If it is 2-special, since we have that the areas bounded by the cycles being disjoint, its two special edges must be consecutive on the \(i\)-\(j\) line, thus removing them will not change the parity of the number of special edges between the \(J\)-edges and thus \((-1)^{d^{i,j}_{\mu^\prime}}=(-1)^{d^{i,j}_{\mu}}\).

    Thus we have \((-1)^{d^{i,j}_{\mu^\prime}}=(-1)^{d^{i,j}_{\mu}}\), completing the proof.
    \end{proof}

    We can now assume that if \(\Gamma(\tau)\) has non-\(H\) cycle (by Proposition~\ref{prop:all H cycles}), or if \(T \neq \emptyset \) (by Proposition~\ref{prop:T empty}), we have \(c_{\tau,T}^{i,j}(\Lambda_{k},\widetilde\Lambda_{k})\geq 0\). We will now consider \((\tau,T)\in \widetilde{\mathcal T}_{k,2k}\) with \(T =\emptyset\) such that \(\Gamma(\tau)\) has no non-special, non-\(H\) cycles.
    
    \begin{prop}
        Let \(\mu_1, \mu_2 \in\widetilde{\mathcal{M}}_{(i,j,\tau,T)}\) be labels that have the same \(J\)-edges, and  \(\mu_1(\ell) =  \mu_2(\ell)\) for any vertex \(\ell\) that is contained in a \(J\)-edge, then we must have \(\mu_1 = \mu_2\). 
    \end{prop}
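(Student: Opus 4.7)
The plan is to use the alternating-label structure from Proposition~\ref{prop:HJpaths} to propagate labels around each cycle of $\Gamma(\tau)$, showing that once the labels on the four $J$-vertices are fixed, every other label is forced. Throughout, we work in the reduced setting secured by Propositions~\ref{prop:T empty},~\ref{prop:regular cycles} and~\ref{prop:all H cycles}, so we may assume $T=\emptyset$ and every cycle of $\Gamma(\tau)$ is an $H$-cycle.

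First, I would record the combinatorial setup: by Proposition~\ref{prop:2 H paths}, for $\widetilde{\mathcal M}_{(i,j,\tau,T)}$ to be non-empty there are exactly two $H$-paths, and by Proposition~\ref{prop:HJpaths} each contains exactly one $J$-edge. Thus the two $J$-edges of $\mu_1$ (which are the same as those of $\mu_2$ by hypothesis) partition the total collection of $H$-edges among the $H$-cycles. Fix such an $H$-cycle $C$ and a vertex $v_0\in V_C$ that belongs to a $J$-edge; by hypothesis $\mu_1(v_0)=\mu_2(v_0)$, and moreover the partner $v_0'$ of $v_0$ under $\tau$ (which is also a $J$-vertex) satisfies $\mu_1(v_0')=\mu_2(v_0')$.

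Next I would trace the cycle $C$ starting at $v_0$. By Proposition~\ref{prop:HJpaths}, along any regular segment of $C$ the edges alternate between non-$H$ $O$-edges and non-$J$ $\tau$-edges, and the vertex labels alternate between $L_\mu$ and $R_\mu\cup J_\mu$. Because $\mu_1$ and $\mu_2$ share the same set of $J$-edges, a vertex lies in $J_{\mu_1}$ exactly when it lies in $J_{\mu_2}$; hence the trichotomy $\{L,R,J\}$ at each non-$J$ vertex is already determined up to the $L$-vs-$R$ binary choice, and this choice is forced by the alternating pattern and by $\mu_1(v_0)=\mu_2(v_0)$. Propagating along the regular portion of $C$ until the next $H$-edge is hit, I obtain $\mu_1(v)=\mu_2(v)$ for every vertex $v$ encountered. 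When the traversal reaches the endpoint of an $H$-edge (necessarily another $J$-vertex), the two labelings still agree by hypothesis, so the propagation resumes consistently on the next regular segment, eventually closing up on the full cycle $C$.

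Finally, I would remark that every $H$-cycle contains at least one vertex belonging to a $J$-edge (by Proposition~\ref{prop:HJpaths}, each $H$-edge in an $H$-cycle is adjacent to a $J$-edge through the cycle structure), so the above argument covers every cycle of $\Gamma(\tau)$, and hence every element of $S(\tau)$. The main (very mild) obstacle is simply to verify that the propagation is globally consistent, i.e., that after going all the way around $C$ we return to the same label at $v_0$; this is automatic because both $\mu_1$ and $\mu_2$ are elements of $\widetilde{\mathcal M}_{(i,j,\tau,T)}$, so each individually produces a consistent labeling, and agreement of the forced values along the way then gives $\mu_1=\mu_2$ on $S(\tau)$, hence everywhere.
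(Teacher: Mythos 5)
Your proposal takes essentially the same route as the paper: both rely on Proposition~\ref{prop:HJpaths} and propagate labels along the alternating regular paths of the $H$-cycles, and both observe that once the labels on $J$-edge vertices are fixed everything else is forced. The only difference in presentation is the anchor point: the paper anchors the alternation at the $H$-edge vertices not on $J$-edges, whose label is forced to be $R$ by the constraint $H \subset R_\mu \cup J_\mu$, and then reads off all other labels; you anchor the propagation at the $J$-edge vertices using the hypothesis. Either anchor generates the same forced pattern.

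There is one genuine slip in your write-up. You assert that when the traversal reaches the endpoint of an $H$-edge, that endpoint is \emph{necessarily another $J$-vertex}, and you then invoke the hypothesis to conclude the two labelings agree there. This parenthetical is false: an $H$-edge endpoint need not lie on a $J$-edge. (In Figure~\ref{fig:circ graph}, for instance, vertex $17$ lies on the $H$-edge $\{17,18\}$ but belongs to no $J$-edge and is labeled $R$.) The correct justification for agreement at such a vertex — which the paper invokes explicitly — is that any vertex on an $H$-edge that is not on a $J$-edge must be labeled $R$, because $H\subset R_\mu\cup J_\mu$ while $J_\mu$ consists precisely of the two $J$-labeled vertices, both of which lie on $J$-edges. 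Once you replace the appeal to the hypothesis at the $H$-crossing with this forced-$R$ argument, the traversal goes through and you recover exactly the paper's conclusion: $\mu$ is determined on every vertex not on a $J$-edge, and agreement on $J$-edge vertices is exactly the hypothesis.
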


    \begin{proof}
    By Proposition~\ref{prop:all H cycles} we can assume \(\Gamma = \Gamma(\tau)\) has no cycles that are not \(H\), and by Proposition~\ref{prop:HJpaths} all \(H\)-cycles must have regular paths connecting alternating \(J\) and \(H\)-edges.
    Let \(P\) be a regular path such that \(a_1,a_q\in R_\mu\cup J\mu\) with \(a_1\) contained in an \(H\)-edge and  \(a_q\) is contained in a \(J\)-edge. The vertices on the path must alternate between \(R_\mu\cup J\mu\) and \(L_\mu\). Since the vertices \(a_2,...,a_{q-1}\) are not contained in any \(J\)-edge, they must alternate between being labeled \(R\) and \(L\) by \(\mu_{1,2}\) and \(\mu_{1,2}(a_2) = \mu_{1,2}(a_{q-1}) = L\). 
    
    As all vertices that do not belong to an \(H\) or \(J\)-edge are contained in such a path, the labels of \(\mu\in\widetilde{\mathcal{M}}_{(i,j,\tau,T)}\) must are set on any vertices that do not belong to an \(H\) or \(J\)-edge. As vertices on \(H\)-edges that do no belong to a \(J\)-edge must be labeled \(R\), we have that \(\mu\) is set on any vertex that does not belong to a \(J\)-edge.
    \end{proof}

    \begin{prop}
    \label{prop:two special}
        Suppose \(i,j\) and \((\tau,T=\emptyset)\in\widetilde{\mathcal{T}}_{k,2k}\) are such that \(\Gamma\) has a 2-special \(H\)-path.
        Then
        \(
        c^{i,j}_{\tau,T}(\Lambda_{k},\widetilde\Lambda_{k})=0.
        \)
    \end{prop}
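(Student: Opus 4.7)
My plan is to construct a fixed-point-free sign-reversing involution $\sigma$ on $\widetilde{\mathcal{M}}_{(i,j,\tau,T)}$, from which $c^{i,j}_{\tau,T}(\Lambda_k,\widetilde\Lambda_k)=\sum_{\mu}(-1)^{d^{i,j}_\mu}=0$ follows immediately. Fix the 2-special $H$-path $P$ and let $e_1,e_2$ be its two special edges. By Proposition~\ref{prop:HJpaths}, every $\mu\in\widetilde{\mathcal{M}}_{(i,j,\tau,T)}$ designates exactly one of $e_1,e_2$ as the $J$-edge on $P$, and once the $J$-edges on both $H$-paths have been chosen, the entire marking is forced by the alternation rules on the rest of $\Gamma$. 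I would define $\sigma(\mu)$ by swapping which of $e_1,e_2$ is the $J$-edge on $P$ while leaving the $J$-edge on the other $H$-path fixed. Well-definedness and $\sigma^2=\mathrm{id}$ are then immediate, so the substantive content reduces to verifying the sign-reversing property $(-1)^{d^{i,j}_{\sigma(\mu)}}=-(-1)^{d^{i,j}_\mu}$.

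Since every special chord joins $I(i,j)$ to its complement and $\tau$ is non-crossing, the special chords are linearly ordered by nesting, and this order coincides with the order in which they cross the $i$-$j$ line. Writing these crossings as $s_1,\dots,s_N$ with $e_1=s_a$ and $e_2=s_b$, $a<b$, a brief case analysis on the position $c$ of the $J$-arc on the other $H$-path---handling $c<a$, $a<c<b$, and $b<c$ separately---shows that the parity of $d^{i,j}_\mu$ is flipped by $\sigma$ precisely when $b-a$ is odd, equivalently when the number of special chords strictly nested between $e_1$ and $e_2$ is even.

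The main obstacle is therefore to establish this evenness. Let $x_1,x_2$ be the endpoints of $e_1,e_2$ lying on the $I(i,j)$-side of the $i$-$j$ line, and let $D_+$ denote the component of the disk minus the $i$-$j$ line containing $I(i,j)$. Consider the subregion of $D_+$ bounded by the $D_+$-halves of $e_1,e_2$, the segment of the $i$-$j$ line between their two crossings, and the boundary arc from $x_1$ to $x_2$. Let $\Gamma'\subset\Gamma$ be the subgraph obtained by deleting all special edges; then special vertices have degree $1$ in $\Gamma'$ and non-special vertices have degree $2$. The non-crossing property of $\tau$ forces every non-special $\tau$-edge incident to a vertex on the arc from $x_1$ to $x_2$ to stay inside this subregion, so the restricted graph retains the same local degrees. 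The degree-$1$ vertices of the restriction are therefore exactly $x_1$, $x_2$, together with the $D_+$-endpoints of the special chords strictly nested between $e_1$ and $e_2$; since any finite graph has an even number of odd-degree vertices, the count of such chords is even, completing the argument. The most delicate point I expect is rigorously verifying the planar-topological picture---that the chosen subregion is a disk and that no edge of $\Gamma'$ inside it can escape through $e_1$, $e_2$, or the $i$-$j$ line segment---both of which ultimately reduce to the non-crossing of $\tau$ and the planar embedding of $\Gamma$.
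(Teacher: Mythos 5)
Your overall structure differs from the paper's: the paper shows that $e_1$ and $e_2$ are in fact \emph{adjacent} among the crossings on the $i$-$j$ line (using that $P$ lies in a single cycle whose enclosed region is disjoint from the other cycles' regions), so the sign flip under the involution is immediate; you instead aim for the weaker statement that the number $m$ of special chords strictly between $e_1$ and $e_2$ is even, and deduce the sign flip from the parity of $b-a$. Your reduction to ``$m$ even'' via the case analysis on the position of the other $J$-arc is correct, and the sign-reversing involution is the same as the paper's.

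The gap is in the Euler-parity step. You claim the odd-degree vertices of the restriction of $\Gamma'$ to the subregion are exactly $x_1$, $x_2$, and the $D_+$-endpoints of the $m$ special chords strictly nested between $e_1$ and $e_2$. This is not automatic for $x_1$ and $x_2$: after deleting $e_1$, the only remaining edge at $x_1$ is its $O$-edge, and this $O$-edge lies in the boundary arc $A=\{x_1,\dots,x_2\}$ only when it points ``inward'' toward $x_2$, i.e.\ exactly when $x_1$ is odd (since $O$-edges are $\{2q-1,2q\}$); if $x_1$ is even the $O$-edge is $\{x_1-1,x_1\}$, which exits the subregion, and $x_1$ has degree $0$ in the restriction. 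Your justification (``non-crossing forces every non-special $\tau$-edge \dots to stay inside'') only addresses $\tau$-edges, never $O$-edges. If $x_1$ and $x_2$ have the same parity, the parity count yields $m$ odd, the opposite of what you need. To close the gap you should use that the sub-path $Q$ of $P$ strictly between $e_1$ and $e_2$ is a regular path on one side of the line; when $Q$ lies in $D_+$ it runs from $x_1$ to $x_2$, so the first and last $O$-edges of $Q$ are precisely the $O$-edges of $x_1,x_2$ and must point inward (forcing $x_1$ odd, $x_2$ even), while when $Q$ lies in $D_-$ the same argument applies to $x_1',x_2'$ and one should carry out the degree count on the $D_-$ side. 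The paper sidesteps all of this by proving $m=0$ outright.
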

    \begin{proof}
        Let \(P\) be a 2-special \(H\)-path that is contained in a cycle  in \(\Gamma = \Gamma(\tau)\). Let those two special edges be \(e_1, e_2\in P\).
        Since every special \(H\)-path must contain exactly one \(J\)-edge (by Proposition~\ref{prop:HJpaths}), we know that for \(\mu \in\widetilde{\mathcal{M}}_{(i,j,\tau,T)}\) either \(e_1\) or \(e_2\) is a \(J\)-edge of \(\mu\). Since \(P\) is contained in a cycle, and \(\Gamma\) has that areas bounded by the cycles are disjoint (Proposition~\ref{prop:special cycles}), we must have that \(e_1\) and \(e_2\) are adjacent on the \(i\)-\(j\) line. Let us write \(e_1 = \{\ell_1,\tau(\ell_1)\}\) and \(e_2 = \{\ell_2,\tau(\ell_2)\}\) for \(\ell_{1,2}<\tau(\ell_{1,2})\), and
        \(
    \widetilde{\mathcal{M}}_{(i,j,\tau,T)}^{E} \defeq \{ \mu\in\widetilde{\mathcal{M}}_{(i,j,\tau,T)} : E \text{ are \(J\)-edges of } \mu \}
    ,\)
    for \(E\) a set of special edges.
    We have a bijection \(\widetilde{\mathcal{M}}_{(i,j,\tau,T)}^{\{e_1\}}\rightarrow\widetilde{\mathcal{M}}_{(i,j,\tau,T)}^{\{e_2\}}\) by \(\mu\mapsto\mu^\prime\) by switching the labeling of the edges \(e_1\) and \(e_2\): 
    \[
    \mu^\prime (\ell) = 
    \begin{cases}
        \mu(\ell_1), &\ell=\ell_2\\
        \mu(\ell_2), &\ell=\ell_1\\
        \mu(\tau(\ell_1)), &\ell=\tau(\ell_2)\\
        \mu(\tau(\ell_2)), &\ell=\tau(\ell_1)\\
        \mu(\ell), &\text{otherwise}.\\
        
    \end{cases}
    \]

    Since \(e_1\) and \(e_2\) are adjacent on the \(i\)-\(j\) line, we have that \((-1)^{d^{i,j}_\mu} = -(-1)^{d^{i,j}_{\mu^\prime}} \), and thus:
    \begin{align*}
        c_{\tau,T}^{i,j}(\Lambda_{k},\widetilde\Lambda_{k})\,=\sum_{\mu\in\widetilde{\mathcal{M}}_{(i,j,\tau,T)}}(-1)^{d^{i,j}_\mu}&=\sum_{\mu\in\widetilde{\mathcal{M}}_{(i,j,\tau,T)}^{\{e_1\}}}(-1)^{d^{i,j}_\mu}+\sum_{\mu\in\widetilde{\mathcal{M}}_{(i,j,\tau,T)}^{\{e_2\}}}(-1)^{d^{i,j}_\mu}\\
        &=\sum_{\mu\in\widetilde{\mathcal{M}}_{(i,j,\tau,T)}^{\{e_1\}}}(-1)^{d^{i,j}_\mu}+\sum_{\mu\in\widetilde{\mathcal{M}}_{(i,j,\tau,T)}^{\{e_1\}}}(-1)^{d^{i,j}_{\mu^\prime}}\\
        &=\sum_{\mu\in\widetilde{\mathcal{M}}_{(i,j,\tau,T)}^{\{e_1\}}}(-1)^{d^{i,j}_\mu}-\sum_{\mu\in\widetilde{\mathcal{M}}_{(i,j,\tau,T)}^{\{e_1\}}}(-1)^{d^{i,j}_{\mu}} = 0.
    \end{align*}
    \end{proof}

    Meaning that for any \(i,j\) and \((\tau,T)\in \widetilde{\mathcal T}_{k,2k}\) such that \(\Gamma(\tau)\) has a 2-special \(H\)-path, we have \( c_{\tau,T}^{i,j}(\Lambda_{k},\widetilde\Lambda_{k}) = 0\).

    \begin{prop}
    \label{prop:2 1 special}
        Suppose \(i,j\) and \((\tau,T=\emptyset)\in\widetilde{\mathcal{T}}_{k,2k}\) are such that \(\Gamma\) has a exactly two 1-special \(H\)-paths, and no other \(H\)-paths.
        Then
        \(
        c^{i,j}_{\tau,T}(\Lambda_{k},\widetilde\Lambda_{k})\geq0.
        \)
    \end{prop}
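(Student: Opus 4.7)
The plan is to combine the structural reductions already established with a single sign computation. By Propositions~\ref{prop:T empty} and~\ref{prop:all H cycles} (plus induction on $k$), I may assume $T=\emptyset$ and that every cycle of $\Gamma(\tau)$ is an $H$-cycle. Under these assumptions, together with the hypothesis that $\Gamma$ has exactly two 1-special $H$-paths and no other $H$-paths, I will show that every $I(i,j)$-special arc of $\tau$ must be one of the two $J$-arcs. Consequently $d^{i,j}_\mu = 0$ for every valid marking, and the formula of Proposition~\ref{prop:cMtilde} produces a sum of $+1$'s, which is non-negative.

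The first step is to observe that every special arc of $\tau$ lies in some $H$-path. Indeed, under our reductions every cycle of $\Gamma$ is an $H$-cycle, and within such a cycle the $H$-edges partition the remaining edges into arcs which connect endpoints of $H$-edges and themselves contain no $H$-edges; each such arc is by definition an $H$-path. Since a special arc is a $\tau$-edge (hence not an $O$-edge, and in particular not an $H$-edge), every special arc of $\tau$ is contained in one of these $H$-paths.

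Next I invoke Proposition~\ref{prop:HJpaths}: each $H$-path contains exactly one $J$-edge, and $J$-edges are $I(i,j)$-special by definition of a marking. Since the two $H$-paths of $\Gamma$ are 1-special by hypothesis, the unique special edge in each path must be precisely its $J$-edge. Combined with the previous paragraph, this forces the entire set of $I(i,j)$-special arcs of $\tau$ to coincide with the set of the two $J$-arcs. In particular there are no special arcs of $\tau$ strictly between the two $J$-arcs, so $d^{i,j}_\mu = 0$ for every $\mu \in \widetilde{\mathcal{M}}_{(i,j,\tau,T)}$.

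Applying Proposition~\ref{prop:cMtilde} then gives
\[
c^{i,j}_{\tau,T}(\Lambda_k,\widetilde\Lambda_k) \;=\; \sum_{\mu\in\widetilde{\mathcal{M}}_{(i,j,\tau,T)}}(-1)^{d^{i,j}_\mu} \;=\; \bigl|\widetilde{\mathcal{M}}_{(i,j,\tau,T)}\bigr| \;\geq\; 0,
\]
as desired. I do not expect a substantial obstacle here: the combinatorial heavy lifting has already been done in the earlier propositions, and the only real content is the observation that a 1-special $H$-path has no room to hide a special arc outside the $J$-edge. The one point worth being careful about in the writeup is to explicitly justify that every $\tau$-edge of an $H$-cycle lies in some $H$-path (so that no ``orphan'' special arc escapes the argument), but this is immediate from the structure of cycles in $\Gamma$.
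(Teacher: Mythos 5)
Your proof is correct, but it takes a genuinely different route from the paper. The paper proves the proposition as stated, \emph{without} assuming every cycle of $\Gamma(\tau)$ is an $H$-cycle: it splits into two cases according to whether $P_1$ and $P_2$ lie in the same cycle. When they do, the two special edges are adjacent on the $i$-$j$ line and $d^{i,j}_\mu = 0$. When they do not, the cycles $C_1,C_2$ containing them are both 1-special and hence are the first and last cycles crossed by the $i$-$j$ line, so Proposition~\ref{prop:special cycles} forces every special cycle strictly between them — precisely the 2-special non-$H$ cycles your reduction removes — to be 2-special, contributing an even number to $d^{i,j}_\mu$. The paper thus concludes $(-1)^{d^{i,j}_\mu} = 1$ by a parity argument, without claiming $d^{i,j}_\mu = 0$.

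You instead invoke Proposition~\ref{prop:all H cycles}, Proposition~\ref{prop:T empty}, and the outer Lemma's induction on $k$ to reduce to the case where every cycle is an $H$-cycle, after which the conclusion $d^{i,j}_\mu = 0$ is immediate from Proposition~\ref{prop:HJpaths} and the 1-special hypothesis. This is valid in context — the Lemma's proof performs exactly this chain of reductions before citing Proposition~\ref{prop:2 1 special} — and your terminal argument is cleaner: the sum is literally $|\widetilde{\mathcal M}_{(i,j,\tau,T)}|$. The trade-off is that your proof is not self-contained as a proof of the proposition as stated; it borrows the induction from the meta-layer rather than handling non-$H$ cycles directly. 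A referee might reasonably ask you to either add the all-$H$-cycle assumption to the statement or supply the parity argument for 2-special cycles between the two $J$-arcs. One small infelicity: invoking Proposition~\ref{prop:T empty} to justify $T=\emptyset$ is redundant, since $T=\emptyset$ is already part of the hypothesis.
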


    \begin{proof}
    We have that \(\Gamma\) has exactly two 1-special \(H\)-paths \(P_1\) and \(P_2\) with special edges \(e_1\) and \(e_2\) respectively. Since for any \(\mu\in\widetilde{\mathcal M}_{(i,j,\tau,T)}\) must have that every \(H\)-path has exactly one \(J\)-edge (by Proposition~\ref{prop:HJpaths}), those two special edges must be \(J\). 

     If \(P_1\) and \(P_2\) are contained in the same cycle \(C\), since \(\Gamma\) has that areas bounded by the cycles are disjoint (Proposition~\ref{prop:special cycles}), we must have that \(e_1\) and \(e_2\) are adjacent on the \(i\)-\(j\) line. This means \(d^{i,j}_\mu\) must be zero, and thus
     \[
    c_{\tau,T}^{i,j}(\Lambda_{k},\widetilde\Lambda_{k})=\sum_{\mu\in\widetilde{\mathcal{M}}_{(i,j,\tau,T)}}(-1)^{d^{i,j}_\mu}>0.
    \]

    Now assume \(P_1\) and \(P_2\) are not contained in the same cycle, but rather in cycles \(C_1\) and \(C_2\) respectively. As \(P_1\) is 1-special, \(C_1\) is 1-special as well. Indeed, if \(C_1\) were 2-special, it means it contains two \(H\)-paths (by Proposition~\ref{prop:HJpaths}) -- but we said \(\Gamma\) has exactly two \(H\)-paths that are not contained in the same cycle. So we must conclude both \(C_1\) and \(C_2\) are 1-special. Since any cycle that crosses the \(i\)-\(j\) line that is not the first or the last must be 2-special (by Proposition~\ref{prop:special cycles}), we get that any special cycle between \(e_1\) and \(e_2\) is 2-special. That means the number of special edges between them must be even, and thus \((-1)^{d^{i,j}_\mu}\) is positive. Therefore 
    \[
    c_{\tau,T}^{i,j}(\Lambda_{k},\widetilde\Lambda_{k})=\sum_{\mu\in\widetilde{\mathcal{M}}_{(i,j,\tau,T)}}(-1)^{d^{i,j}_\mu}>0.
    \]
    \end{proof}

    Now we can finally finish the proof of Lemma~\ref{pos on L0 lemma}: By Proposition~\ref{prop:T empty} we can assume \(T=\emptyset\). We know \(\Gamma\) is a set of disjoint cycles by Proposition~\ref{prop:special cycles}. By Proposition~\ref{prop:all H cycles} we can assume \(\Gamma\) contains only \(H\)-edges. By Proposition~\ref{prop:cMtilde} we have that 
    \[
     c_{\tau,T}^{i,j}(\Lambda_k,\widetilde\Lambda_k)=\sum_{\mu\in\widetilde{\mathcal{M}}_{(i,j,\tau,T)}}(-1)^{d^{i,j}_\mu}.
     \]
     By Proposition~\ref{prop:2 H paths} we can assume \(\Gamma\) has exactly two \(H\)-paths, that can be either 2 or 1-special (by Proposition~\ref{prop:special cycles}). By Proposition~\ref{prop:two special} we can assume \(\Gamma\) has only 1-special paths. By Proposition~\ref{prop:2 1 special} we know that in those cases we have \(c^{i,j}_{\tau,T}(\Lambda_{k},\widetilde\Lambda_{k})\geq0\). Therefore we have
        \(
        c^{i,j}_{\tau,T}(\Lambda_{k},\widetilde\Lambda_{k})\geq0,
        \)
    finally finishing the proof of Lemma~\ref{pos on L0 lemma}.
\end{proof}
\subsection{The Boundary of the Amplituhedron}
\label{sec:ext boundaries Pos}

In this section we will prove Theorem~\ref{thm external boundaries}. In order to do so, we will take a closer look at external boundaries defined in Corollary~\ref{coro: codim 1 boudnary classification} and apply the tools developed in Section~\ref{separation section}.

\begin{prop}
\label{prop:ext bound cont mand sep}
     Let \(v\) be an internal vertex in a \(k\)-BCFW graph \(\Gamma\), such that \(\partial_v \Gamma\) is an external boundary. Then there exist a cyclically consecutive \(I\subset[2k]\) of odd length, such that the Mandelstam variable \(S_I\) is a vertex-separators of \(v\).
     \end{prop}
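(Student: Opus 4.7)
The plan is to reduce the statement directly to Definition~\ref{def:vertex sep} by exhibiting an external $4$-arc of $\Gamma$ that contains $v$. Indeed, if $v$ lies on an external $4$-arc $\tau_a$ of $\Gamma$ with support $\{a,a+1,a+2,a+3\}$, then Definition~\ref{def:vertex sep} directly provides $S_{\{a+1,a+2,a+3\}}$ or $S_{\{a,a+1,a+2\}}$ as a vertex-separator of $v$; both are Mandelstam variables on cyclically consecutive subsets of length $3$, which is odd, as required. The proposition thus reduces to the structural claim that in a BCFW graph $\Gamma$, any internal vertex adjacent to some external vertex lies on an external $4$-arc of $\Gamma$.

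By Corollary~\ref{coro: codim 1 boudnary classification}, $\partial_v\Gamma$ being external means $v$ is adjacent to some external OG vertex $e$, so $v$ is the first internal crossing encountered along the arc $\tau_e$ starting from $e$. Since $\Gamma$ is a tree of triangles (Theorem~\ref{thm:BCFW are trees}), the external vertex $e$ lies in the support $\{a,a+1,a+2,a+3\}$ of an external $4$-arc $\tau_a$ of $\Gamma$ associated with the leaf of $\Delta(\Gamma)$ carrying $e$ on the boundary. The arc $\tau_e$ out of $e$ must either coincide with $\tau_a$ (when $e\in\{a,a+3\}$) or cross $\tau_a$ in order to leave the leaf region, and in either case the first internal crossing along $\tau_e$ from $e$ lies on $\tau_a$. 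Therefore $v$ lies on the external $4$-arc $\tau_a$, establishing the structural claim.

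The main obstacle is verifying the leaf-to-$4$-arc correspondence rigorously: that each leaf of $\Delta(\Gamma)$ yields a unique external $4$-arc of $\Gamma$ with the indicated support, and that the first crossing from $e$ along $\tau_e$ genuinely lies on this leaf's $4$-arc. I would prove this by induction on $k$ using Proposition~\ref{prop:BCFW arcs}. The base case $k=3$ follows by direct inspection of the top cell of $\OGnon{3}{6}$, in which all three arcs are external $4$-arcs and every internal vertex sits on two of them. For the inductive step, write $\Gamma = \mathrm{Arc}_{4,2i_n}\Gamma''$: the final move adds two new internal vertices lying on a new external $4$-arc with support $\{2i_n,\ldots,2i_n+3\}$, both manifestly adjacent to the newly added external vertices, while the other external $4$-arcs of $\Gamma$ are inherited from $\Gamma''$ with appropriately shifted indices, so that the induction hypothesis applies to internal vertices arising from the inherited leaves of $\Delta(\Gamma'')$.
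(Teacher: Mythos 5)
Your proposed reduction rests on the structural claim that every internal vertex $v$ adjacent to an external OG vertex must lie on an external $4$-arc of $\Gamma$. This claim is false, and the gap is not patchable by the induction you sketch.

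Here is a concrete counterexample. Take the $k=6$ BCFW graph $\Gamma = \mathrm{Arc}_{4,4}\mathrm{Arc}_{4,4}\mathrm{Arc}_{4,4}\mathrm{Arc}_{4,2}\mathrm{Arc}_{3,2}\mathrm{Arc}_{2,1}(O)$, whose arcs one computes to be $\{4,7\}$, $\{5,9\}$, $\{6,11\}$, $\{1,8\}$, $\{2,10\}$, $\{3,12\}$. Its only external $4$-arcs are $\{4,7\}$ with support $\{4,5,6,7\}$ and $\{3,12\}$ with support $\{12,1,2,3\}$; in particular the external vertices $8,9,10,11$ belong to no external $4$-arc's support, contradicting your assertion that "$e$ lies in the support of an external $4$-arc." Tracing the $\mathrm{Arc}$ history shows the vertex $v = \{5,9\}\cap\{1,8\}$ is the first crossing from both $8$ (along $\{1,8\}$) and $9$ (along $\{5,9\}$), so it is adjacent to external vertices and $\partial_v\Gamma$ is an external boundary; yet both arcs through $v$ have supports of size $5$ and $6$. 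The mechanism that breaks your inductive step is visible in the last move: $\mathrm{Arc}_{4,4}$ creates a new leaf triangle next to the old one, so the previous external $4$-arc (which was $\{4,7\}$ at $k=5$) becomes the longer arc $\{5,9\}$, while its far vertex $v$ remains adjacent to external vertices; the sentence "the other external $4$-arcs of $\Gamma$ are inherited from $\Gamma''$" is simply not true when the move is applied at (or next to) a position covered by an existing $4$-arc. Thus, in general, the required Mandelstam has odd length $>3$, not the fixed length $3$ your argument produces.

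The paper avoids this by proving the \emph{stronger} statement that for $v$ adjacent to $i$ and $i+1$, both $S_{\{i+1,\ldots,j\}}$ and $S_{\{j+1,\ldots,i\}}$ (where $j,j+1$ mark the split between the two subgraphs $\Gamma_1,\Gamma_2$ as in Figure~\ref{fig:graph with ext boundary}) are vertex-separators of odd lengths, by induction on the length of an ancestry-sequence. The key observation you would need, and which your proposal lacks, is that each $\mathrm{Arc}_{4,n}$ in the ancestry inserts its support $\{n,n+1,n+2,n+3\}$ entirely inside either $I$ or $J$, so that by Observation~\ref{obs:s rot} the $\mathrm{Rot}$ factors in $\mathrm{Arc}_{4,n}$ fix $S_I$ and $S_J$, and only the $\mathrm{Inc}_n$ factor acts, keeping both as Mandelstam variables of cyclically consecutive (but possibly large, odd) index sets.
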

     \begin{proof}[Proof of Proposition~\ref{prop:ext bound cont mand sep}]
    By Corollary~\ref{coro: codim 1 boudnary classification}, we have that \(\Gamma\) and \(v\) as in Figure~\ref{fig:graph with ext boundary}. 
    \begin{figure}[H]
    \centering
\begin{center}
\begin{tikzpicture}[scale=0.8, every node/.style={scale=0.8}]
\draw (0,0) circle (2cm);

\filldraw[lightgray] (-.9,1/2) circle (0.6cm);
\node[scale=2.5] (c) at (-.9,1/2)  {\(\Gamma_1\)};
\draw (-.9,1/2) circle (.6cm);

\draw (-1.38541, 0.147329) --(-1.97538, -0.312869);
\draw (-0.806139, 1.09261) --(-0.618034, 1.90211)node[anchor=south]{\(j+1\)};
\filldraw[black] (-1.49522, 0.988483) circle (1.75pt);
\filldraw[black] (-1.14384, 1.30383) circle (1.75pt);
\filldraw[black] (-1.69938, 0.531408) circle (1.75pt);

\draw  (-0.475736, 0.0757359)--(0.415823, -1.9563)node[anchor=north]{\(i+1\)};

\filldraw[lightgray] (.9,1/2) circle (0.6cm);
\node[scale=2.5] (c) at (.9,1/2)  {\(\Gamma_2\)};
\draw (.9,1/2) circle (.6cm);

\draw (1.38541, 0.147329) --(1.97538, -0.312869);
\draw (0.806139, 1.09261) --(0.618034, 1.90211)node[anchor=south]{\(j\)};
\filldraw[black] (1.49522, 0.988483) circle (1.75pt);
\filldraw[black] (1.14384, 1.30383) circle (1.75pt);
\filldraw[black] (1.69938, 0.531408) circle (1.75pt);

\draw  (0.475736, 0.0757359)--(-0.415823, -1.9563)node[anchor=north]{\(i\)};

\draw  (-0.67039, -0.0543277)--(0.67039, -0.0543277);
\draw (0, -0.8)node[anchor=north east]{\(v\)};

 \end{tikzpicture}
 \end{center}
    \caption{a graph with an external boundary}
    \label{fig:graph with ext boundary}
\end{figure}
We will prove the stronger statement that \(S_{\{i+1,i+2,...,j\}}\) and \(S_{\{j+1,i+2,...,i\}}\) are both vertex-separators of \(v\), and that both sets are of odd length. By Corollary~\ref{coro:BCFW are rad}, \(v\) is \(4\)-native. We will prove it by induction on the length of the ancestry-sequence of \(v\). By Definition~\ref{def:vertex sep}, as complimentary Mandelstam variables are equivalent by Proposition~\ref{prop:S orth}, it enough to show only one of those Mandelstam variables is a vertex-separator.

    For the base case, consider \(v\) a vertex on an external arc. The claim is now trivially true by Definition~\ref{def:vertex sep}.
    
    For the induction step, suppose \(\Gamma   = \mathrm{Arc}_{4,n}(\Gamma^\prime)\) and \(v = \mathrm{Arc}_{4,n}(v^\prime)\) are such that \(\partial_v \Gamma\) is an external boundary, that is, \(\Gamma\) and \(v\) are as seen in Figure~\ref{fig:graph with ext boundary}. Then new external arc added by the \(\mathrm{Arc}\) move is either in \(\Gamma_1\) or \(\Gamma_2\). Thus we must have that \(\Gamma^\prime\) and \(v^\prime\) are also as seen in Figure~\ref{fig:graph with ext boundary} as well. Let us label the vertices marked in the Figure as \(j^\prime+1,j^\prime,i^\prime,i^\prime+1\) respectively for this case. We clearly have that \(i = \mathrm{Arc}_{4,n}(i^\prime)\) and \(j = \mathrm{Arc}_{4,n}(j^\prime)\). By Corollary~\ref{coro: codim 1 boudnary classification}, we have that \(\partial_{v^\prime}\Gamma^\prime\) is an external boundary, and by Theorem~\ref{thm:BCFW are trees}, we have that \(\Gamma^\prime\) is a BCFW graph.

    Write \(I^\prime = \{i^\prime+1,i^\prime+2,...,j^\prime\}\) and \(J^\prime =\{j^\prime+1,i^\prime+2,...,i^\prime\}\). By the induction hypothesis, we have that that \(S_{I^\prime} \) and \(S_{J^\prime}\) are both vertex-separators of \(v^\prime\) and of odd length. By Definition~\ref{def:vertex sep}, we have that \(\mathrm{Arc}_{4,n}(S_{I^\prime})\) and \(\mathrm{Arc}_{4,n}(S_{J^\prime})\) are both vertex-separators of \(v\). By Theorem~\ref{thm:BCFW are trees}, we have that \(n\) is even. Recall that by Definition~\ref{def:arc}
     \(
    \mathrm{Arc}_{4,n}
    \defeq \mathrm{Rot}_{n+2,n+3}\mathrm{Rot}_{n+1,n+2}\mathrm{Inc}_{n}.
    \)
    By Definition~\ref{abs substit def} we have that \(\mathrm{Inc}_{n}(S_{I^\prime}) = S_{\mathrm{Inc}_{n}(I^\prime)}\) and \(\mathrm{Inc}_{n}(S_{J^\prime}) = S_{\mathrm{Inc}_{n}(J^\prime)}\). We clearly have that either \(\mathrm{Inc}_{n}(I^\prime)\) or \(\mathrm{Inc}_{n}(J^\prime)\) are consecutive. If it is the former, write \(I = \mathrm{Inc}_{n}(I^\prime)\) and \(J = [2k]\setminus I\). If the latter, write \(J = \mathrm{Inc}_{n}(J^\prime)\) and \(I = [2k]\setminus J\). \(I\) and \(J\) are now both consecutive, and since \(I^\prime\) and \(J^\prime\) are both of odd length, so are \(I\) and \(J\). 
    
    Since the arc added by the \(\mathrm{Arc}\) move is either in \(\Gamma_1\) or \(\Gamma_2\), the effect of the \(\mathrm{Arc}\) move on the indices \(i^\prime,i^\prime+1,j^\prime,j^\prime+1\) is the same as the \(\mathrm{Inc}\) move alone. Therefore, we have that \(I=\{i+1,i+2,...,j\}\) and \(J=\{j+1,i+2,...,i\}\). \(S_I\) and \(S_J\) are vertex-separators of \(\mathrm{Inc}_n (v^\prime)\) in the graph \(\mathrm{Inc}(\Gamma^\prime)\) by Definition~\ref{def:vertex sep}. As the arc added by the \(\mathrm{Arc}\) move is either in \(\Gamma_1\) or \(\Gamma_2\), we must have that the support of the arc,  \(\{n,n+1,n+2,n+3\}\) is contained in either \(I\) or \(J\). Thus, by Observation~\ref{obs:s rot}, we have that \(S_I\) and \(S_J\) are invariant under the \(\mathrm{Rot}\) moves, and therefore \(S_I=\mathrm{Arc}_{4,n}(S_{I^\prime})\) and \(S_J=\mathrm{Arc}_{4,n}(S_{J^\prime})\) which are vertex-separators of \(v\).
\end{proof}

\begin{proof}[Proof of Theorem~\ref{thm external boundaries}]
 By Corollary~\ref{coro: codim 1 boudnary classification} we have that \(\Gamma_0 = \partial_v\Gamma\) for some \(v\), an internal vertex of the \(k\)-BCFW graph \(\Gamma\). Meaning that by Proposition~\ref{prop:ext bound cont mand sep} we can find \(S\), a vertex-separator of \(v\), such that \(S=S_I\) with \(I\subset [2k]\) being cyclically consecutive.

 By Observation~\ref{obs:BCFW 4 native}, we have that \(v\) is \(4\)-native. Thus by Proposition~\ref{prop:sep 4-native}, have that \(S\) is positively-radical and zero on \(\Gamma_0\), and \(\evalat{\frac{\partial}{\partial v }S}{\Gamma_0}\) is positive. By Observation~\ref{obs:vertex derivative orientation}, we have that \(\evalat{\frac{\partial}{\partial v^\omega }S}{\Gamma_0}\) is positive for any perfect orientation \(\omega\) on \(v\). By Definition~\ref{def:vertex derivative}, we have that \(\evalat{\frac{\partial}{\partial v^\omega }S}{\Gamma_0}\) is precisely the derivative of \(S(\Lambda,\bullet)\) on \(\widetilde \Lambda(\Omega_{\Gamma_0})\) in the direction defined by increasing \(v^\omega\). Meaning \(S(\Lambda,\bullet)\) is zero with a non-vanishing derivative on \(\widetilde \Lambda(\Omega_{\Gamma_0})\). 
 
 Fix \(x\in\widetilde\Lambda(\Omega_{\Gamma_0})\). By the previous paragraph, \(S(\Lambda,x)\) is zero with a non-vanishing derivative. This means that for any neighborhood \(U\ni x\), there exists a point \(y\in U\) with \(S(\Lambda,y)<0\). By Theorem~\ref{nonneg mandelstam thm}, \(S(\Lambda,y)\geq 0\) for any \(y\in \mathcal O_k(\Lambda)\). We can conclude \(x\) is in the boundary of \(\mathcal O_k(\Lambda)\), and thus \(\widetilde\Lambda(\Omega_{\Gamma_0})\) is contained in the boundary of \(\mathcal O_k(\Lambda)\).
\end{proof}

\section{The BCFW Tiling of the ABJM Amplituhedron}\label{sec:tiling}
In this section, we prove Theorem~\ref{thm:tilings}. The proof proceeds in two steps. First, Proposition~\ref{prop:const_deg} establishes, using results from earlier sections and the topological tools developed here, that the degree of the amplituhedron map is constant over the union of BCFW cells. Then, Proposition~\ref{prop:deg_1} shows that this constant degree is in fact equal to \(1\). We begin by stating these main propositions and then proceed to prove Theorem~\ref{thm:tilings}.
\begin{dfn}
    Define $\Ext_k$ to be the set of orthitroid cells that codimension $1$ external boundaries of cells in $\BCFW_k$.
\end{dfn}
Recall Observation~\ref{co-dime 1 external obs} for a characterization such OG graphs.
\begin{prop}\label{prop:const_deg}
Fix a strongly positive $\Lambda.$ Then 
$\widetilde{\Lambda}(\bigcup_{\Omega\in\BCFW_k}\Omega)$ is dense in the amplituhedron.
Let $S$ be the union of all BCFW cells and their internal boundaries of codimension $1.$ Then $\widetilde{\Lambda}(S)$ is an open dense subset of the amplituhedron, and, moreover, for an open dense subset of this space there is a constant number of preimages in $S.$ In addition, 
\(\partial\mathcal{O}_k(\Lambda)=\bigcup_{\Omega\in\Ext_k}\widetilde\Lambda\left(\overline{\Omega}\right).\)
\end{prop}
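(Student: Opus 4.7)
The plan is to combine the smoothness and local-injectivity theorems developed in Sections~\ref{sec:BCFW Cells and their Boundaries}--\ref{separation section} with the boundary-vanishing of Section~\ref{sec:immanant_pos} in a clopen-in-connected argument. First I would endow $S$ with a smooth manifold structure of dimension $2k-3$: by Corollary~\ref{coro: codim 1 boudnary classification} every internal codimension $1$ boundary $\Omega_{\Gamma_0}$ lies between exactly two BCFW cells $\Omega_{\Gamma_\pm}$, and Proposition~\ref{prop:closure of orthit} gives that each $\Omega_{\Gamma_\epsilon}\cup\Omega_{\Gamma_0}$ is a manifold with boundary, so the two pieces glue across $\Omega_{\Gamma_0}$. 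Theorem~\ref{thm:ampli diffeo} (diffeomorphism on each stratum), Theorem~\ref{thm:local sep} (local injectivity across internal boundaries), Proposition~\ref{prop:extends to a nbhd} (submersion on a neighborhood of $\OGnon{k}{2k}$), together with dimension matching $\dim S=\dim\mathbf{Y}^k_\Lambda=2k-3$, then show that $\widetilde{\Lambda}|_S$ is a local diffeomorphism into $\mathbf{Y}^k_\Lambda$. Consequently $\widetilde{\Lambda}(S)$ is open in $\mathbf{Y}^k_\Lambda$ and, sitting inside $\mathcal{O}_k(\Lambda)$, is contained in the interior $\mathcal{O}_k^\circ(\Lambda):=\mathcal{O}_k(\Lambda)\setminus\partial\mathcal{O}_k(\Lambda)$.

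For density, let $N\subset\mathcal{O}_k(\Lambda)$ be the $\widetilde{\Lambda}$-image of the compact union of all codimension $\geq 2$ boundary strata of BCFW cells; it is closed, of dimension at most $2k-5$. Put $Y^\circ:=\mathcal{O}_k^\circ(\Lambda)\setminus N$: this is open in $\mathbf{Y}^k_\Lambda$, and removing a closed codimension $\geq 2$ subset from a connected smooth manifold preserves connectedness of each of its components, so it suffices to argue component-by-component (and one can check that $\mathcal{O}_k^\circ(\Lambda)$ itself is connected since it is the top-dimensional stratum of the connected set $\mathcal{O}_k(\Lambda)$). By the first paragraph $\widetilde{\Lambda}(S)\cap Y^\circ$ is open in $Y^\circ$; for closedness, take $y_n=\widetilde{\Lambda}(x_n)\to y\in Y^\circ$ with $x_n\in S$ and extract a subsequential limit $x\in\overline{S}$ by compactness of the finite union of BCFW cell closures. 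If $x$ lay on an external codimension $1$ boundary, Theorem~\ref{thm external boundaries} would place $y\in\partial\mathcal{O}_k(\Lambda)$, excluded by $y\in Y^\circ$; if $x$ lay on a codimension $\geq 2$ stratum then $y\in N$, also excluded; so $x\in S$. Being nonempty and clopen in each component of $Y^\circ$, $\widetilde{\Lambda}(S)\cap Y^\circ$ equals $Y^\circ$, so $Y^\circ\subseteq\widetilde{\Lambda}(S)$ and $\widetilde{\Lambda}(S)$ is dense in $\mathcal{O}_k(\Lambda)$. Density of $\widetilde{\Lambda}(\bigcup_{\Omega\in\BCFW_k}\Omega)$ follows, since its complement in $\widetilde{\Lambda}(S)$ is contained in the lower-dimensional image of internal codimension $1$ boundaries.

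For the constancy of the fiber cardinality I would take the open dense subset of $\widetilde{\Lambda}(S)$ in the proposition to be $Y^\circ$ itself. Over $Y^\circ$ every preimage in $\overline{S}$ already lies in $S$ by the very exclusion just used, so $\deg(y):=|\widetilde{\Lambda}^{-1}(y)\cap S|$ is finite (at most the number of BCFW cells, by injectivity on each). Local diffeomorphism at each preimage gives lower semi-continuity of $\deg$; the same compactness-and-exclusion argument supplies upper semi-continuity; and connectedness of the relevant component of $Y^\circ$ yields constancy.

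The identification $\partial\mathcal{O}_k(\Lambda)=\bigcup_{\Omega\in\Ext_k}\widetilde{\Lambda}(\overline{\Omega})$ then decomposes as usual. The inclusion $\supseteq$ comes from Theorem~\ref{thm external boundaries}, closedness of $\partial\mathcal{O}_k(\Lambda)$ and compactness of $\overline{\Omega}$. For $\subseteq$, take $y\in\partial\mathcal{O}_k(\Lambda)$; by density $y\in\widetilde{\Lambda}(\bigcup_\Omega\overline{\Omega})$, so $y=\widetilde{\Lambda}(x)$ with $x\in\overline{\Omega}$ for some $\Omega\in\BCFW_k$. Since $y\notin\mathcal{O}_k^\circ(\Lambda)$ we have $x\notin S$, so $x$ lives in an external codimension $1$ boundary or in a higher-codimension stratum of $\Omega$. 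The main obstacle is precisely the latter case: to realize $x$ as a point of some $\overline{\Omega'}$ with $\Omega'\in\Ext_k$, I would invoke Theorem~\ref{nonneg mandelstam thm} together with Proposition~\ref{prop:ext bound cont mand sep}, which together exhibit $\partial\mathcal{O}_k(\Lambda)$ as lying in the zero locus of a cyclically consecutive Mandelstam variable that is the vertex-separator of an external opening of some BCFW graph; pulled back via $\widetilde{\Lambda}$, this vanishing singles out the closure of an external orthitroid stratum, placing $x$ on $\overline{\Omega'}$ for some $\Omega'\in\Ext_k$. Matching this amplituhedron-level boundary condition to the cell-level combinatorics is the delicate step that requires the careful stratified bookkeeping developed in Sections~\ref{sec:BCFW Cells and their Boundaries} and~\ref{sec:immanant_pos}.
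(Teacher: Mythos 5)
Your outline is structurally the same as the paper's: both reduce to showing that the set of amplituhedron points with a fixed preimage count in $S$ is clopen in a connected open set obtained by deleting the images of strata of codimension at least $2$. (The paper phrases this via paths in the complement; you phrase it as clopen-in-connected.) The same ingredients appear in both: submersion, local diffeomorphism on strata, local separation across internal boundaries, exclusion of external boundaries via Mandelstam positivity, and compactness of cell closures. So the strategy is sound, but two steps you wave at are where the real work happens.

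First, the statement that ``removing a closed codimension $\geq 2$ subset from a connected smooth manifold preserves connectedness'' is not a black box you can invoke here. The set $N$ is the continuous image of a union of strata, not a closed submanifold, so ``codimension'' is not a transversality statement and the usual argument does not apply directly. The paper's Lemma~\ref{lem:connectedness_in_complement_stratification} proves exactly this under the hypothesis that the removed set is a countable union of images of smooth injections from manifolds of codimension $\geq 2$, using a Baire-category argument together with Morse--Sard and Poincar\'e--Lefschetz duality. Without something of that kind your clopen argument runs on an assumption you have not verified. You should either cite the paper's lemma or reproduce the argument.

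Second, the final containment $\partial\mathcal{O}_k(\Lambda)\subseteq\bigcup_{\Omega\in\Ext_k}\widetilde\Lambda(\overline{\Omega})$ does not follow from Theorem~\ref{nonneg mandelstam thm} and Proposition~\ref{prop:ext bound cont mand sep} alone. When $y=\widetilde\Lambda(x)$ with $x$ in a codimension-$\geq 2$ stratum of some BCFW cell, there is no a priori reason for $x$ to lie in $\overline{\Omega'}$ for an \emph{external} codimension-$1$ boundary $\Omega'$: a corner could be reached only through internal codimension-$1$ faces. The observation that $\partial\mathcal{O}_k(\Lambda)$ sits in a Mandelstam zero locus does not resolve this combinatorial issue at the level of cells. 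The paper settles it with the same path-connectedness device as in the first two items of Proposition~\ref{prop:criterion_for_positive_const_deg}: connect a point near $y$ inside $f(S)\setminus f(R)$ to a point outside the amplituhedron by a path avoiding both $f(R)$ and the images of external boundaries, and derive a contradiction at the first exit time from $f(S)$. You need an argument of this flavor; the Mandelstam machinery is what certifies external boundaries land on $\partial\mathcal{O}_k(\Lambda)$, not what identifies all of $\partial\mathcal{O}_k(\Lambda)$ as covered by them.
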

We call this number the \emph{degree} of the amplituhedron map on $S$.
\begin{coro}\label{cor:boundaries}
The boundaries of the ABJM amplituhedron, for strongly positive \(\Lambda\), are the closures of images of external boundaries of BCFW cells.    
\end{coro}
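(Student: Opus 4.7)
The proof will be a short unpacking of Proposition~\ref{prop:const_deg}, which already supplies the identity
\[
\partial\mathcal{O}_k(\Lambda)=\bigcup_{\Omega\in\Ext_k}\widetilde{\Lambda}(\overline{\Omega}).
\]
The only thing left to verify is that for each $\Omega\in\Ext_k$, the image of the closure coincides with the closure of the image, i.e.\ $\widetilde{\Lambda}(\overline{\Omega})=\overline{\widetilde{\Lambda}(\Omega)}$. So the plan is simply to invoke Proposition~\ref{prop:const_deg} and then rewrite each term on the right-hand side in the desired form.

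For the closure identity, I would argue as follows. By Proposition~\ref{prop:closure of orthit}, $\overline{\Omega}\subset\OGnon{k}{2k}$ is compact. The amplituhedron map $\widetilde{\Lambda}$ is continuous on $\OGnon{k}{2k}$ (in fact it extends to a submersion on a neighborhood by Proposition~\ref{prop:extends to a nbhd}), so $\widetilde{\Lambda}(\overline{\Omega})$ is the continuous image of a compact set, hence compact and in particular closed. Since it contains $\widetilde{\Lambda}(\Omega)$, it contains $\overline{\widetilde{\Lambda}(\Omega)}$. Conversely, continuity of $\widetilde{\Lambda}$ gives $\widetilde{\Lambda}(\overline{\Omega})\subseteq\overline{\widetilde{\Lambda}(\Omega)}$, so the two sets are equal.

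Combining these two observations yields
\[
\partial\mathcal{O}_k(\Lambda)=\bigcup_{\Omega\in\Ext_k}\widetilde{\Lambda}(\overline{\Omega})=\bigcup_{\Omega\in\Ext_k}\overline{\widetilde{\Lambda}(\Omega)},
\]
which is exactly the statement of the corollary. There is no real obstacle here; the content of the corollary lies entirely in Proposition~\ref{prop:const_deg}, and the translation to ``closures of images'' is a purely topological compactness argument.
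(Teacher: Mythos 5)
Your argument is correct and matches what the paper actually does: the corollary is indeed an immediate unpacking of the last assertion in Proposition~\ref{prop:const_deg}, and the identification $\widetilde{\Lambda}(\overline{\Omega})=\overline{\widetilde{\Lambda}(\Omega)}$ via compactness of $\overline{\Omega}$ (Proposition~\ref{prop:closure of orthit}) plus continuity of $\widetilde{\Lambda}$ is precisely the step the paper carries out inside the proof of Proposition~\ref{prop:criterion_for_positive_const_deg}, where $\overline{f(S_{a;i})}=f(\overline{S_{a;i}})$ is noted for the same reason. No gap.
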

\begin{prop}\label{prop:deg_1}
    Fix $k\geq 4.$ Assume that for every  \(\Lambda^{k-1}\in \mathcal L^>_{k-1}\), the \(\widetilde \Lambda^{k-1}\) images of different orthitroid cells represented by graphs of $\BCFW_{k-1}$ do not intersect, then the same holds for  \(\widetilde \Lambda^{k}\) images of cells represented by graphs of $\BCFW_{k}$ for every  \(\Lambda^{k}\in \mathcal L^>_{k}\).
\end{prop}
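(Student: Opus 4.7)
The plan is induction on $k$, with the pivot being Proposition~\ref{prop:const_deg}. For $\Lambda\in\mathcal L^>_k$, each BCFW cell has dimension $2k-3$ by Theorem~\ref{thm:BCFW are trees}, matching $\dim\mathcal O_k(\Lambda)=2k-3$ (Theorem~\ref{dimension thm}), and maps diffeomorphically onto an open subset of $\mathcal O_k(\Lambda)$ by Theorem~\ref{thm:ampli diffeo}. Consequently, two BCFW cells with overlapping images would produce an open set on which the constant degree $d_k$ from Proposition~\ref{prop:const_deg} is at least $2$, while $d_k=1$ forces the images to be pairwise disjoint. Hence the proposition reduces to exhibiting a single $Y\in\mathcal O_k(\Lambda)$ with exactly one preimage in $\bigcup_{\Gamma\in\BCFW_k}\Omega_\Gamma$.

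To construct such a $Y$, pick $\Gamma_0=\mathrm{Arc}_{4,2i}(\Gamma_0')\in\BCFW_k$ with $\Gamma_0'\in\BCFW_{k-1}$, which exists for $k\geq 4$ by Proposition~\ref{prop:BCFW arcs}. Choose generic $C'\in\Omega_{\Gamma_0'}$ and hyperbolic angles $\theta=(\theta_1,\theta_2)\in(0,\infty)^2$, set $C=\mathrm{Arc}_{4,2i}(\theta)(C')$ and $Y=\widetilde\Lambda(C)$. By Corollary~\ref{coro:imm pos L closed}, $\Lambda^{k-1}:=\mathrm{Arc}_{4,2i}(\theta)^{-1}(\Lambda)\in\mathcal L^>_{k-1}$, so the induction hypothesis applies. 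Suppose $C_2\in\Omega_{\Gamma_2}$ with $\Gamma_2\in\BCFW_k$ is another preimage of $Y$. If $\Gamma_2$ also carries $\{2i,2i+1,2i+2,2i+3\}$ as an external $4$-arc, write $\Gamma_2=\mathrm{Arc}_{4,2i}(\Gamma_2')$ and $C_2=\mathrm{Arc}_{4,2i}(\theta^{(2)})(C_2')$; the twistor-solutions for the two vertices on this arc (Proposition~\ref{prop:sol arc to angle} and Corollary~\ref{n<=4 twist solv coro}) recover $\theta$ directly from $Y$, so $\theta^{(2)}=\theta$. Observation~\ref{obs:inv arc inj} then yields $\widetilde\Lambda^{k-1}(C')=\widetilde\Lambda^{k-1}(C_2')$, and the induction hypothesis together with Theorem~\ref{BCFW inj} gives $C_2=C$.

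To handle the remaining case, where $\Gamma_2$ does not contain this external $4$-arc, let $C$ degenerate toward the codim-$1$ external boundary $\Omega_{\mathrm{Arc}_{3,2i+1}(\Gamma_0')}$ of $\Omega_{\Gamma_0}$ (Corollary~\ref{arc 4 boundary coro}). Its image lies on $\partial\mathcal O_k(\Lambda)$ along the smooth stratum $\{S_{\{2i+1,2i+2,2i+3\}}=0\}$: the relevant Mandelstam is a vertex-separator (Propositions~\ref{prop:ext bound cont mand sep} and~\ref{prop:sep 4-native}), is non-negative on the amplituhedron (Theorem~\ref{nonneg mandelstam thm}), vanishes on that boundary (Theorem~\ref{thm external boundaries}), and cuts out a smooth submanifold (Theorem~\ref{thm:BCFW and codim 1 bounds are smooth}). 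By Observation~\ref{co-dime 1 external obs}, this stratum is a codim-$1$ boundary of $\Omega_{\Gamma_0}$ alone. By compactness of $\OGnon{k}{2k}$, a sequence of $Y$-preimages in $\Omega_{\Gamma_2}$ subconverges to some $C_0\in\overline{\Omega_{\Gamma_2}}$ with $\widetilde\Lambda(C_0)$ in $\widetilde\Lambda(\Omega_{\mathrm{Arc}_{3,2i+1}(\Gamma_0')})$; one then concludes—via the induction hypothesis applied after $\mathrm{Arc}_{4,2i}^{-1}$ and the local structure of the amplituhedron near its boundary—that $C_0\in\Omega_{\mathrm{Arc}_{3,2i+1}(\Gamma_0')}$ and that this cell is a codim-$1$ face of $\overline{\Omega_{\Gamma_2}}$. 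Observation~\ref{co-dime 1 external obs} then forces $\Gamma_2=\Gamma_0$, a contradiction.

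The main obstacle is precisely this last codimension statement: $\Omega_{\mathrm{Arc}_{3,2i+1}(\Gamma_0')}$ could in principle appear as a higher-codimension boundary of $\overline{\Omega_{\Gamma_2}}$, in which case Observation~\ref{co-dime 1 external obs} does not apply directly. Resolving this requires combining the manifold-with-boundary structure of the amplituhedron near the external face (Theorem~\ref{thm:BCFW and codim 1 bounds are smooth}) with a dimension count on $\widetilde\Lambda(\overline{\Omega_{\Gamma_2}})$ near $\widetilde\Lambda(\Omega_{\mathrm{Arc}_{3,2i+1}(\Gamma_0')})$ that exploits the openness of $\widetilde\Lambda(\Omega_{\Gamma_2})$ (Theorem~\ref{thm:ampli diffeo}) and the inductive injectivity on $\BCFW_{k-1}$ cells to identify the limit stratum uniquely after stripping the $4$-arc.
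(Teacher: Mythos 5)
Your reduction to exhibiting a point with a unique BCFW preimage is correct, and your Case~1 is sound: it is in effect an interior variant of Lemma~\ref{lem:deg_1_promotes_deg_1}, which the paper formulates with $\mathrm{Arc}_{3,3}$ acting on $(k-1)$-BCFW cells (producing boundary strata of $k$-BCFW cells) rather than with $\mathrm{Arc}_{4,2i}$ in the interior. The problem is your Case~2, and it is precisely the gap you acknowledge. Two steps there are unjustified. First, as $Y$ degenerates you cannot assume that the second preimage stays in $\Omega_{\Gamma_2}$; the constant degree of Proposition~\ref{prop:const_deg} only fixes the total preimage count and says nothing about which cell a given preimage lives in, so the compactness argument must run over $\coprod_{\Gamma\in\BCFW_k}\overline{\Omega_\Gamma}$ rather than over a single closure. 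Second---and this is what you call the main obstacle---even once a limit $C_0$ is produced, nothing prevents it from lying in a corner of codimension at least $2$. There Observation~\ref{co-dime 1 external obs} does not apply, and a dimension count alone does not save you: the $\widetilde\Lambda$-image of a lower-dimensional corner is lower-dimensional, but it may still pass through the particular boundary point you constructed.

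The paper circumvents all of this by working at the boundary from the start. Proposition~\ref{prop:S345 arc} characterizes exactly which orthitroid strata map into the locus $\{S_{\{3,4,5\}}=0\}$, and Lemma~\ref{lem:corners_dont cover_codim1}---a Morse--Sard argument---produces a boundary point $Y\in\widetilde\Lambda(\Omega_{\Gamma_0})$ that is \emph{not} in the image of any codimension $\geq 2$ corner of any BCFW cell. With such a $Y$, Proposition~\ref{prop:const_deg} implies $Y$ has no preimages in BCFW cells or internal boundaries (it is an external boundary image); Lemma~\ref{lem:deg_1_promotes_deg_1} together with Theorem~\ref{BCFW inj} yields a unique preimage in $\Ext_k$; and a compactness argument over all $\overline{\Omega_\Gamma}$ shows that a degree $>1$ would force a preimage in a corner, contradicting the choice of $Y$. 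Without analogues of Proposition~\ref{prop:S345 arc} and Lemma~\ref{lem:corners_dont cover_codim1}, your argument does not close.
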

\begin{proof}[Proof of Theorem~\ref{thm:tilings}]
The case $k=3$ follows from Theorem~\ref{BCFW inj}, since in this case $\BCFW_3$ consists of a single graph. The proof is now a simple induction. Assume the theorem holds for $k-1.$ Then by Proposition~\ref{prop:const_deg} the union of images of BCFW cells in $\BCFW_k$ are dense in the amplituhedron, they are locally separated and of constant degree. This degree is $1,$ by Proposition~\ref{prop:deg_1} and the induction. Thus, images of different BCFW cells are also disjoint. And the induction follows. 
\end{proof}

\subsection{Constant Degree}
In order to prove Proposition~\ref{prop:const_deg}, we need some topological preparations.
The following lemma is a slight adjustment of a very nice argument from \cite{4168411,4169287}.
\begin{lem}\label{lem:connectedness_in_complement_stratification}
Let $M$ be smooth connected manifold of dimension $n$, and $S_1,\ldots,S_N$ smooth manifolds of dimensions at most $n-2.$ Let $f_i:S_i\to M$ be smooth injections, for $i=1,\ldots, N.$ Then $M\setminus \bigsqcup f_i(S_i)$ is connected.
\end{lem}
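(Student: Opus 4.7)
The plan is to show that any two points $p, q \in M \setminus \bigsqcup_i f_i(S_i)$ can be joined by a smooth path lying in the complement. Since $M$ is a connected manifold it is path-connected, and by standard approximation one can take a smooth path $\gamma : [0,1] \to M$ with $\gamma(0)=p$ and $\gamma(1)=q$. The whole argument then reduces to perturbing $\gamma$, keeping endpoints fixed, to a smooth path $\widetilde\gamma$ whose image avoids each $f_i(S_i)$.

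To produce such a perturbation I would use parametric transversality. Cover $\gamma([0,1])$ by finitely many coordinate charts of $M$ and, inside the interior of $[0,1]$, build a finite-dimensional family of compactly supported smooth perturbations $\gamma_b$, $b$ ranging over a ball $B$ in a Euclidean space, rich enough that the evaluation map $F:[0,1]\times B\to M$, $F(t,b)=\gamma_b(t)$, is a submersion on the interior of $[0,1]\times B$. Apply the parametric transversality theorem for smooth maps to $F$ and to each $f_i:S_i\to M$. Since $\dim [0,1]+\dim S_i-\dim M\le 1+(n-2)-n=-1<0$, transversality forces the fiber product of $F(\cdot,b)$ with $f_i$ to be empty for a residual (in particular dense) set of $b\in B$. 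Intersecting the finitely many residual sets, one obtains arbitrarily small $b$ with $\gamma_b([0,1])\cap f_i(S_i)=\emptyset$ for every $i$; call this path $\widetilde\gamma$.

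The main obstacle is that the $f_i$ are merely smooth injections, not embeddings, so $f_i(S_i)$ need not be a submanifold and the naive ``locally a codimension-$2$ submanifold is avoided by a generic arc'' picture is not literally available. This is handled by the version of transversality for smooth \emph{maps} into $M$ rather than for submanifolds of $M$: the relevant statement is that the set of $b\in B$ for which $F(\cdot,b)$ is transverse to $f_i$ is residual, and transversality together with the negative expected dimension already gives disjointness. As a backup, each $f_i(S_i)$ has Hausdorff dimension at most $n-2$ (it is a locally Lipschitz image of a manifold of that dimension), so the finite union $\bigcup_i f_i(S_i)$ also has Hausdorff dimension at most $n-2$, and a classical fact in geometric measure theory gives that the complement of such a set in a connected $n$-manifold is path-connected.

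Either way, the perturbed path $\widetilde\gamma$ connects $p$ to $q$ inside $M\setminus\bigsqcup_i f_i(S_i)$, and since $p,q$ were arbitrary the complement is (path-)connected, as required.
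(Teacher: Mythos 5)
Your argument is correct in outline and takes a genuinely different route from the paper's. The paper works inside the completely metrizable space of paths $\mathcal{P}_{x,y}(M)$: it covers $S$ by countably many compact disks $D_j$ so that each image $f(D_j)$ is closed, shows via Poincar\'e--Lefschetz duality and Morse--Sard (Propositions~\ref{prop:global} and~\ref{prop:kohan}) that paths missing a single $f(D_j)$ form an open dense subset, and then invokes Baire category. You instead apply the parametric transversality theorem to the pair of smooth maps $F(\cdot,b)$ and $f_i$, observing that when $1+\dim S_i < n$ transversality forces disjointness of images, and that Sard's theorem gives a full-measure (hence dense) set of good parameters $b$ for each of the finitely many $i$. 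This is shorter and uses a more canonical tool, at the mild cost of needing the map-to-map version of transversality rather than the map-to-closed-submanifold version (the latter is unavailable here, exactly as you say, because $f_i(S_i)$ need not be a submanifold). The Hausdorff/topological-dimension backup you sketch is also legitimate, via the classical fact that a set of topological dimension at most $n-2$ cannot disconnect a connected $n$-manifold.

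There is, however, a genuine slip in the perturbation family. You ask for $\gamma_b$ to be ``compactly supported'' in the interior of $[0,1]$, i.e.\ to vanish on some $[0,\delta]\cup[1-\delta,1]$. That is incompatible with your own claim that $F(t,b)=\gamma_b(t)$ is a submersion on all of $(0,1)\times B$: on $(0,\delta)\times B$ the image of $dF$ is one-dimensional. Worse, since $f_i(S_i)$ need not be closed, $p$ can lie in its closure, and then the original $\gamma$ may enter $f_i(S_i)$ at times arbitrarily close to $0$, where your perturbations do nothing; the perturbed path still meets $f_i(S_i)$ near the endpoints. The fix is to use perturbations that vanish \emph{at} $t=0,1$ (so the endpoints stay fixed at $p,q\notin\bigcup f_i(S_i)$) but are nonzero for every $t\in(0,1)$, e.g.\ scale the chart perturbations by $t(1-t)$, or set $F(t,b)=\exp_{\gamma(t)}(t(1-t)V_b(\gamma(t)))$. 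Then $F$ is a genuine submersion on $(0,1)\times B$, transversality on the open interval together with the endpoint conditions gives a path in the complement on all of $[0,1]$, and the argument closes.
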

\begin{proof}
Write $S=\coprod S_i,~f=\coprod f_i.$ Our convention is that manifolds are second countable. Therefore, we can find a countable cover of $S$ by subsets $D_j,j=1,2,\ldots$ that are diffeomorphic to closed disks in $\mathbb R^{n-a_j},$ for $a_j\geq 2.$ We can also assume that each $D_i$ is contained in the interior of another closed disk $D'_i$, satisfying the same properties. 

Every differential connected manifold has a complete metric \cite{complete_metric}. Recall that the space of maps from a compact space $X$ to a complete metric space $Y$, endowed with the compact-open topology, has a complete metric inducing the topology, see for example \cite[Theorem 2.4.1]{Hirsch}. This metric can be taken to be $d_\infty(f,g)=\sup_{x\in X}d(f(x),g(x)),$ where $d(\cdot)$ is the complete metric on $Y.$

Write $\mathcal{P}_{x,y}(N)$ for the space of paths from $x$ to $y$ in $N,$ endowed with the compact-open topology. Then by the previous paragraph $\mathcal{P}_{x,y}(N)$ is completely metrizable.

Note that since each $D_i$ is a closed disk of dimension at most $n-2,$ and hence compact, $M$ is Hausdorff and $\evalat*{f}{D_i}$ is injective, every $f(D_i)$ is a disk of dimension at most $n-2$ \emph{topologically} embedded in $M.$ 

We will now show
\begin{prop}\label{prop:global}
Let $M$ be a manifold, and $x,y\in M$. Let  $D$ be a smooth closed disk of dimension at most $n-2,$ smoothly embedded in the interior of another closed topological disk $D'$ of the same dimension, both are mapped by a smooth injection to $M\setminus\{x,y\}.$ Then $\mathcal{P}_{x,y}(M\setminus D)$ is open and dense in $\mathcal{P}_{x,y}(M).$  
\end{prop}
This would imply the lemma, as 
\(\mathcal{P}_{x,y}(M\setminus f(S)) = \bigcap_i \mathcal{P}_{x,y}(M\setminus f(D_i))\) is the intersection of countably many dense open sets. Since $\mathcal{P}_{x,y}(M)$ is completely metrizable, it is a Baire space, hence this intersection is non empty and dense.

In order to prove Proposition~\ref{prop:global}, we will use a nice argument of Moshe Kohan \cite{4168411}.
\begin{prop}\label{prop:kohan}
Let $U$ be a connected, orientable manifold of dimension $n$ and $C$ a closed subspace homeomorphic to a $n-2$-dimensional manifold. Then $U\setminus C$ is connected.
\end{prop}
\begin{proof}
We first prove Proposition~\ref{prop:kohan}. 
By Poincar\'e-Lefschetz duality as written in \cite[Proposition 7.14, Ch. VIII]{Dold}, we have that
\(\check{H}^i_c(C)\simeq H_{n-i}(U,U\setminus C),\) where the left hand side the the \v Cech cohomology with compact support, and the right hand side is the relative singular homology. 
\(\check{H}^i_c(C)\simeq H^i_c(C),\) where the right hand side is the singular homology, since $C$ is a topological manifold. Thus, 
\(H^{n-1}(C)=0\Rightarrow H_1(U,U\setminus C)=0.\)
Using the long exact sequence for homology we get
\[\ldots\rightarrow H_1(U)\to H_1(U,U\setminus C)\simeq0\rightarrow \tilde{H}_0(U\setminus C)\rightarrow\tilde{H}_0(U)\simeq0,\]
where $\tilde{H}_*$ is the reduced homology, and the last equality is since $U$ is connected. Therefore
\(\tilde{H}_0(U\setminus C)\simeq 0,\) implying $U\setminus C$ is connected.
\end{proof}

Returning to Proposition~\ref{prop:global}, we denote by $d(\cdot)$ a fixed complete metric on $M,$ and by $d_\infty$ for the induced metric on $\mathcal{P}_{x,y}(M).$
We have added the technical requirement on $D',$ since it allows us to apply Morse-Sard's theorem \cite[Chapter 3]{Hirsch}
on the smooth map $\evalat*{f}{\text{int}(D')}:\text{int}(D')\to M$ between the manifolds $\text{int}(D')$ and $M,$ to deduce that every open subset of $M$ contains points from $M\setminus \text{int}(D'),$ hence also from $M\setminus D.$ 

Since $D$ is compact $\mathcal{P}_{x,y}(M\setminus D)$ is open in $\mathcal{P}_{x,y}(M).$ Indeed, if $\gamma\in\mathcal{P}_{x,y}(M\setminus D)$ then
\(\gamma:[0,1]\to M\setminus D,\) with \(\gamma(0)=x,~\gamma(1)=y\). Write
\[\rho=\inf_{z\in[0,1],a\in D}d(\gamma(z),a).\] $\rho>0$ by the compactness of $[0,1],~D.$ Thus, every $\gamma'\in\mathcal{P}_{x,y}(M)$ with $d_\infty(\gamma,\gamma')<\rho/2$ belongs to $\mathcal{P}_{x,y}(M\setminus D).$

As for denseness, let $\gamma:[0,1]\to M$ be an element of $\mathcal{P}_{x,y}(M).$ Then $\gamma^{-1}(D)$ is a closed subset of $[0,1]$ which does not include \(0\) or \(1\).
For $\epsilon>0$ we will find an element $\gamma'\in\mathcal{P}_{x,y}(M\setminus D)$ with $d_\infty(\gamma,\gamma')<\epsilon.$

By using Lebesgue's covering lemma we pick $n$ large enough so that the diameter of $\gamma([\frac{i}{n},\frac{i+1}{n}])$ is smaller than $\epsilon/2$ for $i=0,1,\ldots,n-1.$ 
Denote by $U_i=U_i^\epsilon$ the open set
\[U_i=\{x\in M|\inf_{t\in[\frac{i}{n},\frac{i+1}{n}]}d(x,\gamma(t))<\epsilon\}.\]We may assume $\epsilon$ is small enough so that all these sets are orientable and connected.

By the Morse-Sard argument above, we can find a point $x_i\in U_i\setminus D,$ for $i=1,\ldots,n-1.$ Write $x_0=x,x_n=y.$
We now apply Proposition~\ref{prop:kohan} with $U_i$ as the ambient manifold, and $D\cap U_i$ as the embedded closed subset to find a path \[\gamma'_i:[\frac{i}{n},\frac{i+1}{n}]\to U_i\setminus D,~~~~~\text{with }\gamma'_i(\frac{i}{n})=x_i,~\gamma'_i(\frac{i+1}{n})=x_{i+1}.\]We glue these paths along their endpoints to obtain $\gamma'\in\mathcal{P}_{x,y}(M\setminus D)$ which by construction satisfies
\(d_\infty(\gamma',\gamma)<\epsilon.\)
\end{proof}

The following lemma is a modification of \cite{amptriag}[Proposition 8.5] to our needs.
We recall that a manifold is assumed to be without boundary unless explicitly specified otherwise.
\begin{prop}\label{prop:criterion_for_positive_const_deg}
Let $f:B\to N$ be a smooth submersion between two manifolds (\emph{without boundary}) $B,N,$ where the dimension of $N$ is $n.$ Let $L$ be a connected open subset of $B$ with a compact closure $\overline{L}\subset B.$ Denote $f(\overline{L})$ by $K.$
Let $\{S_a\}_{a\in A}$ be a collection of $n$-dimensional submanifolds  of $B,$ which are contained in $\overline{L}$ and satisfy the following properties:
\begin{enumerate}
\item $\overline{S_a}$ is compact. The (topological) boundary of each $S_a$ has a stratification $\partial S_a = \bigcup_{i=1}^{k_a}S_{a;i}$ where each $S_{a;i}$ is a submanifold of $B$ of dimension at most $n-1.$ We moreover assume that the union of $S_a$ with the dimension $n-1$ boundaries $S_{a;i}$ is a \emph{manifold with boundary}, and that the closure of every $S_{a;i}$ is the union of $S_{a;i}$ with other spaces $S_{a;j}$ of smaller dimensions.
    \item 
    For every $S_{a;i}$ of dimension $n-1$ either $f(S_{a;i})$ is contained in $\partial K,$ or  $S_{a;i}=S_{b;j}$ for exactly one other $b,j.$ We call the former boundaries \emph{external} and the latter \emph{internal}. 
    \item Write $S'_a$ for the manifold with boundary which is the union of $S_a$ and its internal boundaries. Then $S'_a\cap S'_b$ is either empty, or it is the union of the common internal boundaries of $S_a$ and $S_b.$
    \item $f$ is injective on every $S_a$ and every boundary stratum $S_{a;i}.$ In addition, if $S_{a;i}=S_{b;j}$ are internal boundaries, and if $S_{a,b}$ is the (topological) manifold obtained by gluing the manifolds with boundaries $S_a\cup S_{a;i},$ and $S_b\cup S_{b;j}$ along the common $S_{a;i}=S_{b;j}$, then $f$ is locally injective near every $y\in S_{a;i}\hookrightarrow S_{a,b}.$
\end{enumerate}
Then 
\begin{enumerate}
\item\(f(\bigcup_{a\in A}\overline{S_a})=K.\)
\item Write $S$ for the space obtained by gluing the manifolds with boundaries $S'_a,$ along the identified internal boundaries, and $R$ is the union of spaces $S_{a;i}$ of dimensions at most $n-2.$ Then $f(S)$ is open, $f(S)\setminus f(R)$ is open and dense in $f(S),$ and every point of $f(S)\setminus f(R)$ has the same number of preimages in $S.$

\item $\partial K$ is the union of $\overline{f(S_{a;i})}$ taken over the external boundaries $S_{a;i}.$
\end{enumerate}
\end{prop}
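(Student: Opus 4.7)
The plan is to define a degree function $d(y) \defeq |f^{-1}(y) \cap S|$ counting preimages in the glued space $S$, to show $d$ is locally constant on the open set $U \defeq \operatorname{int}(K) \setminus f(R)$, and then to invoke Lemma~\ref{lem:connectedness_in_complement_stratification} to upgrade local constancy to global constancy. The three stated conclusions will follow by routine bookkeeping from these facts.

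First I would establish that $f(S)$ is open in $N$ and that $f$ is a local homeomorphism at every point of $S$. On an interior point of some $S_a$, invariance of domain applied to the continuous injection $f|_{S_a}$ between $n$-dimensional manifolds yields a local homeomorphism onto an open subset of $N$. At an internal boundary $S_{a;i} = S_{b;j}$, condition~4 supplies a continuous local injection on the $n$-manifold $S_{a,b}$, again a local homeomorphism by invariance of domain. Since $\overline{L}$ is compact, $f|_{\overline{L}}$ is proper, and combined with the discreteness forced by local injectivity this implies that $f^{-1}(y) \cap S$ is finite for every $y$.

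The crux of the local constancy step is the inclusion $\partial f(S) \cap \operatorname{int}(K) \subseteq f(R)$. Given $y$ in this intersection, write $y = \lim f(x_n)$ with $x_n \in S$; compactness of $\overline{L}$ yields a subsequence with $x_n \to x \in \overline{L}$. Since $y \notin f(S)$, $x \notin S$, so $x$ lies on some boundary stratum $S_{a;i}$: if $\dim S_{a;i} = n-1$ and the stratum is internal, then $x \in S$ via the gluing, a contradiction; if external, then $y = f(x) \in \partial K$, contradicting $y \in \operatorname{int}(K)$; hence $\dim S_{a;i} \leq n-2$ and $y \in f(R)$. This inclusion, together with openness of $f(S)$, forces $U \subseteq f(S)$. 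A standard covering argument around the finitely many preimages $x_1,\ldots,x_k$ of $y \in U$, with pairwise disjoint local-homeomorphism neighborhoods $V_i$ and $W \defeq \bigcap_i f(V_i)$, then shows $d$ is locally constant on $U$; properness plus the same stratum analysis rules out extra preimages in $\overline{L} \setminus \bigcup_i V_i$.

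For the global step, compactness of $\overline{L}$ gives $K = \overline{f(L)}$, and since $L$ is connected and $f$ is a submersion, $f(L)$ is open and connected; thus $\operatorname{int}(K)$ is a connected $n$-dimensional manifold. The smooth injections $f|_{S_{a;i}}$ over the codim-$\geq 2$ strata then satisfy the hypotheses of Lemma~\ref{lem:connectedness_in_complement_stratification}, whence $U$ is connected and $d$ is constant on $U$---giving conclusion~2. Conclusion~1 follows because $U$ is open and dense in $\operatorname{int}(K)$, which is dense in $K$, so $K = \overline{U} \subseteq \overline{f(S)} \subseteq f(\bigcup_a \overline{S_a})$, the latter being a closed finite union of compacts. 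Conclusion~3 is a parallel limit analysis on $\partial K$: any $y \in \partial K$ is a limit of $f(L)$-points, and the same stratum case analysis combined with condition~2 forces $y$ to lie in the closure of images of external codim-$1$ strata. The principal obstacle is the inclusion $\partial f(S) \cap \operatorname{int}(K) \subseteq f(R)$: it is precisely where the gluing structure of $S$, condition~2 on external boundaries, and invariance of domain must all be engaged in concert; once that inclusion is secured, the remainder of the argument is essentially bookkeeping.
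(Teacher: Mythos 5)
Your argument is essentially the same as the paper's: both results hinge on Lemma~\ref{lem:connectedness_in_complement_stratification} together with the observation that $\overline{S}\setminus S$ consists only of $R$ and external codimension-one boundary images. You package this as the boundary inclusion $\partial f(S)\cap\operatorname{int}(K)\subseteq f(R)$ plus local constancy of the degree on a set shown connected by the lemma; the paper packages the same content as a path-walking argument that tracks the minimal time a path leaves $f(S)$. These are logically interchangeable, and your versions of conclusions 1 and 2 are correct, including the nontrivial point that $\operatorname{int}(K)$ is connected (it has the dense connected subset $f(L)$).

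The one place where the sketch is genuinely thin is conclusion 3. A ``parallel limit analysis'' as you describe it shows: for $y\in\partial K$ one obtains a limiting preimage $x$ on some boundary stratum, and if $x$ lies on an external codimension-one stratum you are done. But the case analysis also allows $x$ on a stratum of codimension $\geq 2$, which only yields $y\in f(R)\cap\partial K$, and this does \emph{not} directly place $y$ in $\bigcup\overline{f(S_{a;i})}$ over external strata. Since $\partial K$ need not be a manifold, you cannot appeal to Sard to discard $f(R)\cap\partial K$ as a negligible subset of $\partial K$. The fix (and what the paper actually does) is another application of the connectedness lemma, localized to a small ball $B$ around a putative $y\in\partial K$ disjoint from all external $\overline{f(S_{a;i})}$: $B\setminus f(R)$ is connected, contains both a point of $f(S)$ (density of $U$ in $K$) and a point outside $K$, hence must meet $\partial f(S)$, which forces a contradiction exactly as in the proof of conclusion~1. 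So the proof of conclusion 3 is not merely a limit analysis but a local repetition of the connectedness argument; with that emendation the proposal is complete.
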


\begin{proof}
Items 1,3 guarantee that $S$ is a (topological) manifold. Item 4 guarantees it is a topological cover of its image. Thus, $f(S)$ is open. By Morse-Sard's theorem \cite[Chapter 3]{Hirsch}, for example, $f(S)\setminus f(R)$ is non empty and dense in $f(S)$. By Item 1 it is also open.

Assume towards contradiction that $\overline{f(S)}\neq K$. Since $\overline{S}$ is compact,  also $\overline{f(S)}$ is. 
Since $f$ is a submersion, $f(L)$ is open. Since $\overline{L}$ is compact, the closure of $f(L)$ is $K.$ This implies that $K$ is also the closure of its interior, hence $U:=\text{int}(K)\setminus\overline{f(S)}$ is open, and, by assumption, non empty.
Take $q\in U, p\in f(S)\setminus f(R).$ 
Since $L$ is connected, also $f(L)$ is. By Item 4 and Lemma~\ref{lem:connectedness_in_complement_stratification}, $f(L)\setminus f(R)$ is connected. 

Connect $p,q$ by a path
\(u:[0,1]\to f(L)\setminus f(R),~u(0)=p,~u(1)=q.\) This path does not pass through the boundary $\partial K,$ since $f(L)$ is open. Since $f(S)$ is open, 
\(\{t\in[0,1]~|~u(t)\in f(S)\}\) is open. Since $u(0)\in f(S)$ there must exist a minimal $t$ with $u(t)\notin f(S).$ Thus $u(t)\in f(\overline{S}).$ But $\overline{S}\setminus S$ is contained in the union of $R$ and the external boundaries, whose $f-$images are missed by $u(t)$. This contradiction shows the first statement. 

The second part is proven similarly, assume towards contradiction that there are $p,p'\in f(L)\setminus f(R)$ with different number of preimages $S,$ being $m\neq m'$ respectively. We again connect them by a path in $f(L)\setminus f(R)$, and consider the minimal time $t$ for which $u(t)$ has a number of preimages different than $m,$ the number of preimages of $u(0).$ The same argument shows $u(t)$ must either have a preimage in $R$ or in the external boundaries which map to $\partial K,$ which is again impossible. 

Regarding the last part, as \(\overline{S_{a;i}}\) is compact as a closed subset of \(\overline{S_{a}}\), which is compact by assumption. Therefore, we have that $\overline{f(S_{a;i})}=f(\overline{S_{a;i}})$. Clearly the union of $\overline{f(S_{a;i})}$ taken over external boundaries is contained in $\partial K,$ since each such $f(S_{a;i})\subseteq \partial K,$ and $\partial K$ is closed.
For the other containment, assume the existence of \[y\in \partial K\setminus\bigcup_{S_{a;i}~\text{is external}}\overline{f(S_{a;i})}.\]
Then since the removed set is closed, $y$ has a neighborhood $B$ not intersecting it. Since $y$ is a boundary point, $B$ must contain a point $q\notin K,$ and since $K=\overline{f(S)\setminus f(R)}$ also a point  
$p\in f(S)\setminus f(R).$ As above, we can connect $p,q$ by a path contained in $B\setminus f(R)$, and we will reach the exact same contradiction as above when we test the minimal time for which the path leaves $f(S),$ and observing that at this time the path must either intersect $f(R)$ or the image of an external boundary.
\end{proof}
\begin{proof}[Proof of Proposition~\ref{prop:const_deg}]
This proposition is an immediate consequence of Proposition~\ref{prop:criterion_for_positive_const_deg} once we verify its conditions.

In our setting $B$ is a small neighborhood of $\OGnon{k}{2k}$ in the orthogonal Grassmannian, $N$ is the zero locus of the momentum conservation equations in the target Grassmannian $\Gr_{k,k+2}$ and $f=\widetilde{\Lambda}.$ By Proposition~\ref{prop:extends to a nbhd}, \(f\) extends to \(B\) as a submersion, and by Proposition~\ref{prop:zerolocus submfld}, \(N\) is a smooth manifold. 
The spaces $S_a$ are the BCFW cells $\Omega\in\BCFW_k.$ 

Condition 1 is satisfied by Proposition~\ref{prop:closure of orthit}.
Condition 2 is a consequence of Corollary~\ref{coro: codim 1 boudnary classification} and Theorem~\ref{thm external boundaries}. 
Condition 3 is immediate in our case.
Condition 4 follows from Theorem~\ref{BCFW inj}, and Theorem~\ref{thm:local sep}.
Thus, we may apply Proposition~\ref{prop:criterion_for_positive_const_deg}, finishing the proof. 
\end{proof}

\subsection{The Degree Is \texorpdfstring{\(1\)}{1}}
The goal of this section is to prove Proposition~\ref{prop:deg_1}. We rely on several preparatory lemmas.
\begin{prop}
\label{prop:S345 arc}
    For \(\Lambda\in \mathcal L^>_{k}\) and \(Y = \widetilde \Lambda (C)\) for \(C \in \OGnon{k}{2k}\), we have that \(S_{\{3,4,5\}}(\Lambda,Y) =0\) iff \(C\in\Omega_\Gamma\) with \(\Gamma\) having an external arc with support contained in \(\{3,4,5\}\). Moreover, if ${\Gamma_0}$ is any boundary graph of a BCFW graph, and for some $C\in\Omega_{\Gamma_0}$ $S_{\{3,4,5\}}(\Lambda,Y)$ vanishes on $\widetilde{\Lambda}(C)$, then it vanishes on all $\widetilde{\Lambda}(\Omega_{\Gamma_0}).$
\end{prop}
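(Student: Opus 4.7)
For the backward implication I appeal to Proposition~\ref{prop:S orth}: if \(\Gamma\) has an external arc with support contained in \(\{3,4,5\}\), then its permutation \(\tau\) satisfies either \(\tau(a)=b\) for some two-element subset \(\{a,b\}\subset\{3,4,5\}\) (the \(2\)-arc case) or \(\tau(3)=5\) (the \(3\)-arc case). In either case at least two indices of \(\{3,4,5\}\) are sent back into \(\{3,4,5\}\) by \(\tau\), so \(|\{3,4,5\}\setminus\tau(\{3,4,5\})|\leq 1<2\), and Proposition~\ref{prop:S orth} yields \(S_{\{3,4,5\}}\equiv 0\) on the entire cell \(\Omega_\Gamma\); this in particular gives \(S_{\{3,4,5\}}(\Lambda,\widetilde\Lambda(C))=0\) for every \(C\in\Omega_\Gamma\).

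For the forward direction my plan is to invoke the Temperley--Lieb immanant expansion of Theorem~\ref{s immanant thm} applied to \(I(2,5)=\{3,4,5\}\):
\[S_{\{3,4,5\}}(\Lambda,\widetilde\Lambda(C))=\sum_{(\tau,T)\in\mathcal{T}^{2,5}_{k,2k}}c^{2,5}_{\tau,T}\,\Delta_{\tau,T}(C).\]
By the definition of strong positivity for \(\Lambda\in\mathcal{L}^>_k\), every coefficient \(c^{2,5}_{\tau,T}\) on a non-trivial \((\tau,T)\) will be strictly positive, while \(\Delta_{\tau,T}(C)\geq 0\) for \(C\in\OGnon{k}{2k}\subset\Grnon{k}{2k}\). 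Hence the vanishing \(S_{\{3,4,5\}}(\Lambda,\widetilde\Lambda(C))=0\) will force \(\Delta_{\tau,T}(C)=0\) for every non-trivial \((\tau,T)\in\mathcal{T}^{2,5}_{k,2k}\). The remaining step is a combinatorial argument by contrapositive: assuming the cell \(\Omega_\Gamma\) containing \(C\) has no external arc supported in \(\{3,4,5\}\), the three indices \(3,4,5\) must be endpoints of three distinct arcs of \(\tau\) leaving \(\{3,4,5\}\). From this data I will construct a non-crossing partial pairing \((\tau',T')\in\mathcal{T}^{2,5}_{k,2k}\) together with compatible \(A,B\in\binom{[2k]}{k}\) such that \(\Delta_A(C)\Delta_B(C)\neq 0\), using Plücker coordinates known to be non-vanishing on \(\Omega_\Gamma\). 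By the Temperley--Lieb identity of Definition~\ref{def:temp_lieb}, \(\Delta_A(C)\Delta_B(C)\) equals a sum of non-negative TL immanants containing \(\Delta_{\tau',T'}(C)\), which must therefore be strictly positive, contradicting the vanishing conclusion.

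The main obstacle will be this last combinatorial construction: one must select \((\tau',T')\) in a way that is non-crossing with the arcs of \(\tau\), and pick a Plücker monomial \(\Delta_A(C)\Delta_B(C)\) supported only on non-vanishing minors of \(\Omega_\Gamma\). The bookkeeping is delicate when \(\tau\) carries arcs passing over \(\{3,4,5\}\) or pairing indices adjacent to \(\{2k,1,2\}\), but it mirrors the circle-graph analysis of Section~\ref{sec:Temperley–Lieb Immanants}. Once the forward direction is in hand, the moreover clause will be immediate: if \(S_{\{3,4,5\}}\) vanishes at any single point \(C\in\Omega_{\Gamma_0}\), the forward direction forces \(\Gamma_0\) to contain an external arc supported in \(\{3,4,5\}\), and then the backward direction (via Proposition~\ref{prop:S orth}) gives identical vanishing of \(S_{\{3,4,5\}}\) on all of \(\widetilde\Lambda(\Omega_{\Gamma_0})\).
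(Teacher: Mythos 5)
Your overall strategy — route everything through the Temperley–Lieb immanant expansion and use strong positivity of the coefficients — matches the paper's. The backward direction via Proposition~\ref{prop:S orth} is a valid and slightly more economical shortcut than what the paper does (the paper instead proves both directions of an equivalence between $S_{\{3,4,5\}}=0$ and the vanishing of all $\Delta_I(C)$ for $I\in\binom{[2k]\setminus\{3,4,5\}}{k}$); your check that an external $2$- or $3$-arc inside $\{3,4,5\}$ forces $|\{3,4,5\}\setminus\tau(\{3,4,5\})|<2$ is correct.

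The forward direction, however, has a genuine gap: the step you label ``the main obstacle'' — building $(\tau',T')$ and $A,B$ with $\Delta_A(C)\Delta_B(C)\neq 0$ — is left as a sketch, and without it the argument does not close. The missing idea is to translate the graph condition into a Plücker condition and then use the parametrization. Concretely: ``$\Gamma$ has an external arc with support in $\{3,4,5\}$'' is, by Observation~\ref{obs:3vecsupport}, equivalent to $\Delta_I(C)=0$ for every $I\in\binom{[2k]\setminus\{3,4,5\}}{k}$. If no such arc exists then $\{3,4,5\}$ contains no arc at all, so by Proposition~\ref{prop:hyperbolic choice per arc} one can choose a perfect orientation with $\{3,4,5\}$ among the sinks; in the corresponding parametrization the identity block sits on the set of sources $A$, giving $\Delta_A(C)=1\neq 0$ with $A\cap\{3,4,5\}=\emptyset$. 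Taking $B=[2k]\setminus A$ and using self-complementarity, $\Delta_A(C)\Delta_B(C)=\Delta_A(C)^2>0$ is a sum of immanants $\Delta_{\sigma,T}(C)$ with $T=\emptyset$ and $\sigma(A)=B$, hence $\sigma(\{3,4,5\})\subset A$ is disjoint from $\{3,4,5\}$ and every such $(\sigma,T)$ is non-trivial. So some non-trivial immanant is strictly positive and $S_{\{3,4,5\}}>0$ by strong positivity. This is the paper's key step; it is not especially delicate once the orientation trick is in view. With this filled in, your handling of the moreover clause is fine.
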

\begin{obs}\label{obs:3vecsupport}
\(V\in \Gr{k}{n}\) contains a non-zero vector with support contained in \(J\subset [n]\) iff \(\Delta_{I}(V) = 0\) for all \(I\in   \binom{[n]\setminus J}{k}\). 
\end{obs}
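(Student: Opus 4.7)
The plan is to prove both implications directly from a matrix representative $M$ of $V$, exploiting that Plücker coordinates of $V$ are (up to common scalar) the maximal minors of $M$.

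For the forward direction, suppose $V$ contains a nonzero vector $v$ whose support lies in $J$. I would use the $\mathrm{GL}_k$ freedom in choosing $M$ to take $v$ as one of its rows. Then for any $I \in \binom{[n]\setminus J}{k}$, the corresponding row of the submatrix $M_I$ is identically zero (since $I \cap J = \emptyset$ and $v$ vanishes outside $J$), so $\det(M_I) = 0$ and hence $\Delta_I(V) = 0$.

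For the converse, consider the submatrix $M' := M_{[n]\setminus J}$ obtained by restricting to the columns in $[n]\setminus J$. Either $|[n]\setminus J| < k$, in which case $M'$ trivially has rank less than $k$, or $|[n]\setminus J| \geq k$ and the hypothesis that every $k \times k$ minor of $M'$ vanishes again forces $\mathrm{rank}(M') < k$. Hence the $k$ rows of $M'$ are linearly dependent, so there exists $0 \neq c \in \mathbb{R}^k$ with $c^\intercal M' = 0$. Setting $v := c^\intercal M \in V$ produces a vector whose entries outside $J$ all vanish, i.e.\ $\mathrm{supp}(v) \subseteq J$; moreover $v \neq 0$, since $M$ has rank $k$ and so $c \neq 0$ implies $c^\intercal M \neq 0$.

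There is no real obstacle: the argument is a direct application of the definition of Plücker coordinates together with the standard rank/minors dictionary. The only mild subtlety is to handle the degenerate case $|[n]\setminus J| < k$ explicitly, where the hypothesis on the $\Delta_I$'s is vacuous and the conclusion must be read off from dimension alone.
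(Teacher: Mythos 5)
Your proof is correct and follows essentially the same approach as the paper: in both directions you pass to a matrix representative, use the row containing $v$ to kill the relevant minors, and in the converse use that all $k\times k$ minors of the restriction to columns $[n]\setminus J$ vanish to get a rank drop and hence a nontrivial row relation. Your explicit handling of the degenerate case $|[n]\setminus J|<k$ is a minor refinement the paper leaves implicit.
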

\begin{proof}
If $V$ contains such a vector, let $C$ be a matrix representation containing this vector as a row. Then clearly $\Delta_I(C)=0$ for every $I\in\binom{[n]\setminus J}{k}.$ 

Conversely, suppose $\Delta_I(V)=0$ for every $I\in\binom{[n]\setminus J}{k}.$ Take a matrix representation $C$ for $V.$ Its restriction to the columns $[n]\setminus J$ is of rank smaller than $k.$ Thus, there is a non trivial linear combination $v$ of $C'$s rows whose restriction to the columns $[n]\setminus J$ is $0.$ 
Since $\text{rk}(C)=k,~v\neq 0.$ Thus, $v\in V$ is a non zero vector supported on the entries of $J.$\end{proof}

\begin{proof}[Proof of Proposition~\ref{prop:S345 arc}]

    Let $C\in \OGnon{k}{2k}.$ We first show that \(C\in\Omega_\Gamma\), with \(\Gamma\) having an external arc \(\tau_\ell\) with support contained in \(\{3,4,5\}\), iff $C$ has a non-zero vector with support contained \(\{3,4,5\}\), where \(\tau\) is the permutation of \(\Gamma\). The first direction is obvious from the parameterization corresponding to the \(\tau_\ell\)-proper orientation.

    For the second direction, suppose \(\Gamma\) does not have an external arc with support contained in \(J = \{3,4,5\}\). Hence \(J\) contains no arcs at all.
    If \(J\) contains no arcs, then by Proposition~\ref{prop:hyperbolic choice per arc},  we can pick a perfect orientation such that \(I\supset J\) are sinks. This yields a parameterization for \(C\in \Omega_\Gamma\) such that \(C^{[2k]\setminus I} = \mathrm{id}_k\). In particular, \(\Delta_{[2k]\setminus I}(C)\neq 0\), hence, by Observation~\ref{obs:3vecsupport}, \(C\) does not contain a non-zero vector with support in \(J\).

    Thus, \(C\in\Omega_\Gamma\) for \(\Gamma\) having an external arc \(\tau_\ell\) supported on \(J\) iff \(C\in \OGnon{k}{2k}\) has a vector with support in \(J\). 
    This reduces Proposition~\ref{prop:S345 arc} to showing \(S_{J} =0\) iff \(C\in \OGnon{k}{2k}\) has a non-zero vector with support in \(J\). By Observation~\ref{obs:3vecsupport}, it is enough to show that
    \begin{prop}
    \label{prop:subclaim for 10.7} For strongly positive \(\Lambda\),
        \(S_{J} =0\) iff \(\Delta_{I}(C) = 0\) for all \(I\in   \binom{[n]\setminus J}{k}\).
    \end{prop}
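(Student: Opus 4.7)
The plan is to prove both implications of the equivalence. The direction $(\Leftarrow)$ is essentially immediate from results already in place, while $(\Rightarrow)$ is where the strong positivity assumption does its real work and is the main obstacle.

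For $(\Leftarrow)$, if $\Delta_I(C) = 0$ for every $I \in \binom{[n]\setminus J}{k}$, then by Observation~\ref{obs:3vecsupport} the subspace $C$ contains a nonzero vector supported on $J$. Since $|J| = 3$, Proposition~\ref{prop:uniqueness of supp} identifies this with the associated vector of some arc of the orthitroid cell's permutation $\tau$ whose two endpoints both lie in $J$; consequently $|J \setminus \tau(J)| \le 1 < 2$, and Proposition~\ref{prop:S orth} directly yields $S_J(\Lambda,Y) = 0$.

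For $(\Rightarrow)$, I would start by applying Theorem~\ref{s immanant thm} with $\eta\widetilde\Lambda = \Lambda$ to expand
\[
S_J \;=\; \sum_{(\tau,T) \in \mathcal{T}^{i,j}_{k,n}} c^{i,j}_{\tau,T}(\eta\widetilde\Lambda, \widetilde\Lambda) \, \Delta_{\tau,T}(C),
\]
where every coefficient is strictly positive by strong positivity of $\Lambda$, and every Temperley--Lieb immanant satisfies $\Delta_{\tau,T}(C) \geq 0$ since $C \in \Grnon{k}{n}$. Hence $S_J = 0$ forces $\Delta_{\tau,T}(C) = 0$ for every $(\tau,T) \in \mathcal{T}^{i,j}_{k,n}$.

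The main obstacle is converting this into the desired vanishing of Plücker coordinates. My approach is by contradiction: suppose $\Delta_I(C) > 0$ for some $I \in \binom{[n]\setminus J}{k}$. The self-duality of $\OGnon{k}{2k}$---arising from $C^\perp = C\eta$ together with Lemma~\ref{alt ortho}---implies that $\Delta_{I^c}(C) > 0$ as well, and $I^c \supset J$ by construction. Expanding
\[
\Delta_I(C) \, \Delta_{I^c}(C) \;=\; \sum_{(\tau,T) \in \mathcal{T}_{k,n}(I, I^c)} \Delta_{\tau,T}(C),
\]
every compatible $\tau$ has $T = I \cap I^c = \emptyset$ and must pair each of the three elements of $J \subset I^c$ with an element of $I \subset [n]\setminus J$, producing three $J$-special arcs. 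Thus every $(\tau,T) \in \mathcal{T}_{k,n}(I,I^c)$ lies in $\mathcal{T}^{i,j}_{k,n}$, forcing each summand to vanish; the right side becomes zero, contradicting the positivity of $\Delta_I(C)\Delta_{I^c}(C)$. The key insight---beyond the Temperley--Lieb machinery already in place---is recognizing that pairing up the complementary sets $I$ and $I^c$ with $J \subset I^c$ automatically funnels every compatible non-crossing matching into $\mathcal{T}^{i,j}_{k,n}$.
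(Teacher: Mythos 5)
Your direction $(\Rightarrow)$ (i.e.\ $S_J=0$ implies all $\Delta_I(C)=0$) is essentially identical to the paper's "first direction." The paper expands $\Delta_I(C)^2 = \Delta_I(C)\Delta_{I^c}(C)$ into immanants over $\mathcal{T}_{k,n}(I,I^c)$ and observes that every compatible $\sigma$ satisfies $|\sigma(J)\cap J|=0$, while you observe that each $j\in J$ lies in a $J$-special arc, giving three such arcs; these are equivalent reformulations of the same fact, and both conclude that every summand is a non-trivial immanant, hence zero. This part is correct.

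Your direction $(\Leftarrow)$ takes a genuinely different route from the paper. The paper proves it entirely inside the Temperley--Lieb framework: for each non-trivial $(\hat\sigma,\hat T)$ it constructs a carefully chosen $k$-subset $A$ (with $J\subset A$ or $J\cap A=\emptyset$, depending on whether $\hat\sigma(J)\cap J\subset\hat T$) so that $\Delta_A(C)\Delta_{\hat\sigma(A)}(C)=0$ and $(\hat\sigma,\hat T)\in\mathcal{T}_{k,n}(A,\hat\sigma(A))$, forcing $\Delta_{\hat\sigma,\hat T}(C)=0$ and thus $S_J=0$. You instead pass through the OG-graph structure: vanishing of all $\Delta_I$ gives a vector in $C$ supported on $J$ (Observation~\ref{obs:3vecsupport}), this gives an arc with both endpoints in $J$, and then Proposition~\ref{prop:S orth} gives $S_J=0$. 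Your route is shorter and leverages the combinatorics already built for $\OGnon{k}{2k}$, whereas the paper's route keeps the sub-proposition self-contained inside the immanant calculus without invoking the orthitroid stratification at all.

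However, there is a gap in the middle step of your $(\Leftarrow)$ argument. You cite Proposition~\ref{prop:uniqueness of supp} to go from "a nonzero vector in $C$ supported on $J$" to "an arc of $\tau$ with both endpoints in $J$," but Proposition~\ref{prop:uniqueness of supp} gives only the forward implication: given an external arc with support $I$, there is a (unique up to scaling) vector with support in $I$. It does not assert that a vector supported on a small set must be the associated vector of an arc. The converse you need is true, but its proof requires the hyperbolic-orientation argument: if $J$ contained no arcs, one chooses a perfect orientation (Proposition~\ref{prop:hyperbolic choice per arc}) making $J$ a set of sinks, so the pivot columns form a set $[2k]\setminus I$ disjoint from $J$ with $\Delta_{[2k]\setminus I}(C)=1\neq 0$, which by Observation~\ref{obs:3vecsupport} contradicts the existence of a vector supported on $J$. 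The paper carries out exactly this argument in the enclosing proof of Proposition~\ref{prop:S345 arc}. With that step justified correctly, your $(\Leftarrow)$ direction goes through.
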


    By Theorem~\ref{s immanant thm}, we have  
     \[
    S_{J} = \sum_{(\sigma,T)\in \mathcal T_{k,n}} c_{\sigma,T}^{3,6}(\Lambda \eta,\Lambda)\Delta_{\sigma,T}(C),
    \]
    with non-trivial \(c_{\sigma,T}^{3,6}(\Lambda \eta,\Lambda)\) being positive by the definition of \(\mathcal L^>_{k}\). Recall that \(c_{\sigma,T}^{3,6}\) is non-trivial iff \( \sigma\) has two \(J\)-special arcs, which is equivalent to saying \(|\sigma(J)\cap J|< 2\). As the immanants are non-negative for \(C\in\Grnon{k}{n}\), \(S_{\{3,4,5\}}(\Lambda,Y)\) would be zero iff all immanants \(\Delta_{\sigma,T}(C)\) with \(|\sigma(J)\cap J|< 2\) are zero.
    
    Recall that the immanants are defined by 
    \[
    \Delta_A(C)\Delta_B (C)  = \sum_{(\sigma,T)\in \mathcal T_{k,n}(A,B)} \Delta_{\sigma,T}(C)
    \] for all \(A,B\in \binom{[2k]}{k}\).
    For \(B = [2k]\setminus A\), we get
    \[
    \Delta_A(C)^2  = \sum_{(\sigma,T)\in \mathcal T_{k,n}(A,B)} \Delta_{\sigma,T}(C)
    \]  by Theorem~\ref{thm:comp pluckers}.
     Recall \((\sigma,T)\in \mathcal T_{k,n}(A,B)\) iff \(T = A\cap B\), \(S(\sigma)=(A\setminus B)\cup (B\setminus A)\), and \(\sigma(A\setminus B) = B\setminus A\). Thus \(T=\emptyset\), \(S(\sigma)= [2k]\) and \(\sigma(A) = B\).

     For the first direction of Proposition~\ref{prop:subclaim for 10.7}, assume that \(S_{J} = 0 \) on $\widetilde{\Lambda}(C)$. Note that all immanants \(\Delta_{\sigma,T}(C)\) with \(|\sigma(J)\cap J|< 2\) are zero.  Consider \(I\in \binom{[2k]\setminus J}{k}\). We have
     \[
     \Delta_I(C)^2  = \sum_{(\sigma,T)\in \mathcal T_{k,n}(I,[2k]\setminus I)} \Delta_{\sigma,T}(C).
     \]
    For \((\sigma,T)\in \mathcal T_{k,n}(I,[2k]\setminus I)\) it holds that \(\sigma([2k]\setminus I)\cap ([2k]\setminus I)=\emptyset\). Since \(J\subset [2k]\setminus I\), we have \(|\sigma(J)\cap J|=0\). Thus all summands are zero and hence \(\Delta_I(C) = 0\) for all \(I\in \binom{[2k]\setminus J}{k}\), proving the first direction of Proposition~\ref{prop:subclaim for 10.7}.

    For the second direction, let \(C\) be such that all \(I\in \binom{[2k]\setminus J}{k}\), \(\Delta_I(C) = 0\). 
    Consider \(\Delta_{\hat \sigma, \hat T}(C)\) with non-trivial \(c_{\hat \sigma, \hat T}^{3,6}(\Lambda \eta,\Lambda)\), that is, \(|\hat \sigma(J)\cap J| <2\). We will show \(S_{J} = 0\) by 
 showing all such \(\Delta_{\hat \sigma, \hat T}(C)=0\).
    
    \((\sigma,T)\in \mathcal T_{k,n}(A,B)\) iff \(T = A\cap B\), \(S(\sigma)=(A\setminus B)\cup (B\setminus A)\), and \(\sigma(A\setminus B) = B\setminus A\). Construct \(A\) of size $k$ such that either \(J \subset A\) or \(A\cap J = \emptyset\) as follows: 

    \begin{enumerate}
        \item \underline{\textbf{ If \(\hat \sigma(J)\cap J  \not\subset \hat T\):}} First set \(\hat T\subset A\). Then, for each arc \(\hat{\sigma}(r) = \ell\), if one of \(\{\ell, \sigma(\ell)\}\) is in \(J\), add the other to \(A\); if neither is, add an arbitrary element of the arc to \(A\).
        
        As \(|\hat \sigma(J)\cap J| =0\) we cannot have that both \(r, \hat  \sigma(r)\in J\). As \(\hat  T\) are all fixed points of \(\hat  \sigma\) by definition of \(\mathcal  T_{k,n}\), we must have \(J\cap \hat T = \emptyset\) and thus \(A\cap J = \emptyset\). As \(|S(\hat \sigma)|+2|\hat T| = 2k\) by definition of \(\mathcal T_{k,n}\), indeed \(|A| = k\).

        \item \underline{\textbf{ If \(\hat \sigma(J)\cap J \subset \hat T\):}} First set \(\hat T\subset A\). Then, for each arc \(\hat \sigma(r) = \ell\), if one of \(\{\ell,\sigma(\ell)\}\) is in \(J\), add that one to \(A\); if neither is, add an arbitrary element of the arc to \(A\). 
        
        As \(|\hat \sigma(J)\cap J| =1\) we cannot have that both \(r,\hat \sigma(r)\in J\), unless \(\hat \sigma(r) = \ell\) is not an arc of \(\hat \sigma\). As \(\hat T\) are all stable points of \(\hat \sigma\) by definition of \(\mathcal T_{k,n}\) and we can only have one stable point in \(J\) we must have \(J\cap\hat  T  =\{i\}\) and all other elements of \(J\) are contained in arcs. Thus \(J\subset A\). As \(|S(\hat \sigma)|+2|\hat T| = 2k\) by definition of \(\mathcal T_{k,n}\), we have that \(|A| = k\).
    \end{enumerate}

    We have
    \[    
    0 = \Delta_A(C)\Delta_{\hat \sigma(A)} (C)  = \sum_{(\sigma,T)\in \mathcal T_{k,n}(A,B)} \Delta_{\sigma,T}(C)
    \]
    by Theorem~\ref{thm:comp pluckers} as \(J\subset [2k]\setminus A\) or \(J\subset A\). Since all the summands are non-negative, we conclude that \(\Delta_{\sigma,T}(C) = 0\) for all \((\sigma,T)\in T_{k,n}(A,\hat \sigma(A))\). 
    
    By construction, \(\hat T = A\cap \hat \sigma(A)\) as \(\hat T\) are stable points, \(S(\hat \sigma)=(A\setminus \hat \sigma(A))\cup (\hat \sigma(A)\setminus A)\) as \(A\) contains one element out of every arc, and 
    \(
    \hat \sigma(A\setminus \hat \sigma(A)) = \hat \sigma(A\setminus\hat  T) =\hat \sigma(A)\setminus \hat T =  \hat \sigma(A)\setminus A.
    \)
    Thus \((\hat\sigma,\hat T)\in T_{k,n}(A,\hat \sigma(A))\). We conclude that \(\Delta_{\hat \sigma, \hat T}(C)=0\), and thus \(S_{\{3,4,5\}}(\Lambda,Y) = 0\). Moreover, if ${\Gamma_0}$ is any boundary graph of a BCFW graph, and for some $C\in\Omega_{\Gamma_0}$ $S_{\{3,4,5\}}$ vanishes on $\widetilde{\Lambda}(C)$, then it vanishes on all $\widetilde{\Lambda}(\Omega_{\Gamma_0})$

   Suppose  ${\Gamma_0}$ is any boundary graph of a BCFW graph, and for some $C\in\Omega_{\Gamma_0}$ $S_{\{3,4,5\}}$ vanishes on $\widetilde{\Lambda}(C)$. The proof shows that the Mandelstam vanishes precisely when certain sets of Pl\"ucker coordinates vanish, and that if this happens for some $C\in\Omega_{\Gamma_0},$ it happens for all $\Omega_{\Gamma_0}.$ Thus  $S_{\{3,4,5\}}$ then vanishes on  all $\widetilde{\Lambda}(\Omega_{\Gamma_0})$.
\end{proof}

\begin{lem}\label{lem:corners_dont cover_codim1}
Fix $k,$ and a strongly positive $\Lambda\in \mathrm{Mat}^>_{2k\times(k+2)}.$ Let $\Gamma_0$ be an external boundary of a BCFW cell mapping to the zero locus of $S_{\{3,4,5\}}.$ Let $\widehat{\Gamma}_1,\ldots,\widehat{\Gamma}_N$ be the set of corners of all BCFW cells of $\OGnon{k}{2k}$ of codimension at least $2.$ Then 
\[\widetilde{\Lambda}(\Omega_{\Gamma_0})\setminus\left(\bigcup_{i=1}^N\widetilde{\Lambda}(\Omega_{\Gamma_0})\right)\neq\emptyset.\]
\end{lem}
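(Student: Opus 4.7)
The plan is a straightforward dimension count, relying on the results from Section~\ref{separation section}. First, since $\Gamma_0$ is a codimension one boundary of a BCFW cell in $\OGnon{k}{2k}$, and the latter has dimension $2k-3$, we have $\dim\Omega_{\Gamma_0}=2k-4$. By Theorem~\ref{thm:ampli diffeo}, the amplituhedron map $\widetilde{\Lambda}$ restricts to a diffeomorphism on $\Omega_{\Gamma_0}$ onto its image, so $\widetilde{\Lambda}(\Omega_{\Gamma_0})$ is a smooth submanifold of $\mathcal{O}_k(\Lambda)$ of dimension exactly $2k-4$.

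Next, I would analyse each $\widehat{\Gamma}_i$. By assumption each is a corner of a BCFW cell of codimension at least $2$, so $\dim \Omega_{\widehat\Gamma_i}\le 2k-5$. By Proposition~\ref{prop:limits are boundaries} each $\widehat\Gamma_i$ is itself a boundary graph of a BCFW graph, hence Theorem~\ref{thm:ampli diffeo} applies: $\widetilde{\Lambda}$ is a diffeomorphism on $\Omega_{\widehat\Gamma_i}$, and $\widetilde{\Lambda}(\Omega_{\widehat\Gamma_i})$ is a smooth submanifold of the amplituhedron of dimension at most $2k-5$. Since the number of orthitroid cells in $\OGnon{k}{2k}$ is finite, $N$ is finite.

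The conclusion then follows from the fact that a smooth manifold of dimension $2k-4$ cannot be covered by a finite union of subsets of strictly smaller Hausdorff dimension. Concretely, pick any coordinate chart $U\subset \widetilde{\Lambda}(\Omega_{\Gamma_0})$; the intersection of $U$ with each $\widetilde{\Lambda}(\Omega_{\widehat\Gamma_i})$ is a subset of a smooth submanifold of dimension $\le 2k-5$ inside a manifold of dimension $2k-4$, and hence has $(2k-4)$-dimensional Lebesgue measure zero in $U$. A finite union of measure zero sets has measure zero, so it cannot exhaust the open set $U$, producing the desired point in $\widetilde{\Lambda}(\Omega_{\Gamma_0})\setminus\bigcup_i\widetilde{\Lambda}(\Omega_{\widehat\Gamma_i})$.

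There is no real obstacle in this argument — it is a transparent dimension count given the results of Sections~\ref{sec:inj} and~\ref{separation section}. The hypothesis that $\widetilde{\Lambda}(\Omega_{\Gamma_0})$ lies in the zero locus of $S_{\{3,4,5\}}$ is not used in the proof of this lemma itself; it singles out the specific family of external boundaries to which we wish to apply this statement in the subsequent degree computation (via Proposition~\ref{prop:S345 arc}).
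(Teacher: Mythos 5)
Your proof is essentially correct and rests on the same ultimate tool as the paper's: the easy part of Morse--Sard, i.e., that the image of a lower-dimensional manifold under a smooth map has measure zero. Both arguments contradict the covering assumption by a dimension count. The difference is one of presentation: the paper's proof re-derives the local structure of $\widetilde{\Lambda}(\Omega_{\Gamma_0})$ by intersecting a small neighborhood of a point with the zero locus of $S_{\{3,4,5\}}$ (invoking Theorem~\ref{thm external boundaries} and Proposition~\ref{prop:S345 arc}), while you instead invoke Theorem~\ref{thm:ampli diffeo} directly. Your route is cleaner and bypasses the zero-locus detour; the paper's route has the side effect of identifying the set $V$ with the vanishing locus of $S_{\{3,4,5\}},$ which is also exactly what Proposition~\ref{prop:deg_1} later exploits, so there is a certain economy in keeping that characterization close at hand.

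Two small points to tighten. First, to conclude that $\widetilde{\Lambda}(\Omega_{\Gamma_0})$ is an \emph{embedded} submanifold of $\mathbf Y^k_\Lambda$ of dimension $2k-4,$ the cleanest citation is Theorem~\ref{thm:BCFW and codim 1 bounds are smooth}, not just Theorem~\ref{thm:ampli diffeo}; the latter on its own only gives you an injective immersion with smoothly extendable inverse, which does imply embedding, but the embedding statement is exactly what the former records. Second, the phrase ``a subset of a smooth submanifold of dimension $\le 2k-5$ inside a manifold of dimension $2k-4$'' is imprecise: $\widetilde{\Lambda}(\Omega_{\widehat\Gamma_i})$ is \emph{not} contained in $\widetilde{\Lambda}(\Omega_{\Gamma_0})$; only their intersection is. The clean statement is that each $\widetilde{\Lambda}(\Omega_{\widehat\Gamma_i})$ is a $\sigma$-compact set of Hausdorff dimension at most $2k-5$ in the ambient space, hence its intersection with the $(2k-4)$-dimensional submanifold $\widetilde{\Lambda}(\Omega_{\Gamma_0})$ has zero $(2k-4)$-dimensional Hausdorff measure, and a finite union of such still does.

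Your observation that the hypothesis that $\Gamma_0$ maps to the zero locus of $S_{\{3,4,5\}}$ is not actually used in the argument is correct: it serves to fix the $\Gamma_0$ to which the lemma is applied in the proof of Proposition~\ref{prop:deg_1}. The paper's proof carries the hypothesis along because it proceeds via the local zero-locus description, but the conclusion holds for any external boundary of a BCFW cell, exactly as your dimension count shows.
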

\begin{proof}
By Proposition~\ref{prop:S345 arc} if a point from a stratum of $ \OGnon{k}{2k}$ maps to the zero locus of $S_{\{3,4,5\}}$ then the all stratum maps there. Write $K =\bigcup_{i=1}^N\Omega_{\Gamma_0}$. Assume towards contradiction every $Y\in\widetilde{\Lambda}(\Omega_{\Gamma_0})$ is covered by $\widetilde{\Lambda}(K).$ Let $U$ be a small open subset of $Y.$ By Theorem~\ref{thm external boundaries} we can take $U$ to be small enough so that \(V=U\cap\{Y|~S_{\{3,4,5\}}(Y,\Lambda)=0\}=U\cap\widetilde{\Lambda}(\Omega_{\Gamma_0}),\) and this intersection is a submanifold of dimension $2k-4$. By the assumption $V$ is covered by the image of $K$ and by Proposition~\ref{prop:S345 arc},
\(W=K\cap\widetilde{\Lambda}^{-1}(U)=K\cap\widetilde{\Lambda}^{-1}(V).\)
$W$ is relatively open in $K,$ as the preimage of an open set under a continuous map, and is the union of manifolds of dimensions at most $2k-5.$ By our assumption $W$ surjects onto $V$ under the smooth map $\widetilde\Lambda.$ But $V$ is of dimension $2k-4,$ and this is impossible by Morse-Sard's theorem \cite[Chapter 3]{Hirsch}.
\end{proof}
\begin{lem}\label{lem:deg_1_promotes_deg_1}Fix $k\geq 3.$  Assume that \(\forall \Lambda^{k}\in \mathrm{Mat}^>_{2k\times (k+2)}\), the \(\widetilde \Lambda^k\) images of different BCFW cells of $\BCFW_{k}$ do not intersect. Then \(\forall \Lambda^{k+1}\in \mathrm{Mat}^>_{(2k+2)\times (k+3)}\), the \(\widetilde \Lambda^{k+1}\) images of the cells \(\{\Omega_{\mathrm{Arc}_{3,3}(\Gamma)}\}_{\Omega_\Gamma \in \BCFW_{k}}\) do not intersect.

\end{lem}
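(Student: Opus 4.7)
The strategy is to invert the $\mathrm{Arc}_{3,3}$ promotion on both preimages and invoke the inductive hypothesis at level $k$. Fix $\Lambda^{k+1} \in \mathrm{Mat}^>_{(2k+2)\times(k+3)}$ and suppose $Y \in \widetilde\Lambda^{k+1}(\Omega_{\mathrm{Arc}_{3,3}(\Gamma_1)}) \cap \widetilde\Lambda^{k+1}(\Omega_{\mathrm{Arc}_{3,3}(\Gamma_2)})$ for two BCFW graphs $\Gamma_1, \Gamma_2 \in \BCFW_k$. Choose preimages $C_i^{k+1} \in \Omega_{\mathrm{Arc}_{3,3}(\Gamma_i)}$ mapping to $Y$; since each lies in the interior of its orthitroid cell, the parameterization of Definition~\ref{def:arc} yields $C_i^{k+1} = \mathrm{Arc}_{3,3}(\alpha_i)(C_i^k)$ with $\alpha_i>0$ and $C_i^k \in \Omega_{\Gamma_i}$.

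The central step is to show $\alpha_1 = \alpha_2$, which is required in order to apply Observation~\ref{obs:inv arc inj} (whose hypothesis demands the same angle parameter on both sides). Both graphs $\mathrm{Arc}_{3,3}(\Gamma_i)$ share $\{3,5\}$ as an external $3$-arc, and under the $\tau_{\{3,5\}}$-proper orientation used in the $\mathrm{Arc}$ construction the single internal vertex on that arc admits the twistor-solution $\alpha_{3,3,k+1,1}$ of Proposition~\ref{prop:sol arc to angle}. Crucially, this expression depends only on the twistor data $(\Lambda^{k+1},Y)$ and not on the remainder of the graph, so $\alpha_1 = \alpha_2 =: \alpha$.

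With this common $\alpha$, Observation~\ref{obs:inv arc inj} delivers $\widetilde\Lambda^k(C_1^k) = \widetilde\Lambda^k(C_2^k)$ where $\Lambda^k := \mathrm{Arc}_{3,3}(\alpha)^{-1}(\Lambda^{k+1})$. Positivity of $\Lambda^k$ follows from $\alpha>0$ together with Proposition~\ref{prop:promoting twistors} (stability of positivity under $\mathrm{Rot}_{4,5}^{-1}(\alpha)$ for $\alpha>0$) and Corollary~\ref{coro:inc pos} (for the residual $\mathrm{Inc}_3^{-1}$). Since $C_i^k \in \Omega_{\Gamma_i}$, the common image point $\widetilde\Lambda^k(C_1^k)$ lies in $\widetilde\Lambda^k(\Omega_{\Gamma_1}) \cap \widetilde\Lambda^k(\Omega_{\Gamma_2})$, and the hypothesis of the lemma applied at the positive $\Lambda^k$ forces $\Gamma_1 = \Gamma_2$.

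The only real content beyond bookkeeping is the identification $\alpha_1 = \alpha_2$, which rests on the twistor-solvability of external $3$-arcs and hence on the fact that the associated angle is a canonical function of $(\Lambda^{k+1},Y)$. The remaining positivity check for $\Lambda^k$ is immediate from the existing promotion lemmas, since both $\mathrm{Rot}^{-1}$ (for positive angle) and $\mathrm{Inc}^{-1}$ have already been shown to preserve positivity of $\Lambda$.
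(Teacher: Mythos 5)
Your proof follows essentially the same route as the paper's: decompose each preimage as $\mathrm{Arc}_{3,3}(\alpha_i)(C_i^k)$, establish $\alpha_1=\alpha_2$ via the twistor-solvability of the vertex on the newly added external $3$-arc (noting the twistor-solution depends only on the arc's support and not on the rest of the graph), apply Observation~\ref{obs:inv arc inj} to pull the equality of images back to level $k$, and invoke the hypothesis at the positive $\Lambda^k$. One small remark: your decomposition $\mathrm{Arc}_{3,3}=\mathrm{Rot}_{4,5}\circ\mathrm{Inc}_3$ is the correct reading of Definition~\ref{def:arc}, and the positivity of $\Lambda^k$ is most directly justified by Definition~\ref{def:rot of lambda} (for the $\mathrm{Rot}^{-1}$ step) and Corollary~\ref{coro:inc pos} (for the $\mathrm{Inc}^{-1}$ step), rather than Proposition~\ref{prop:promoting twistors}, but this is a bookkeeping detail that does not affect the argument.
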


\begin{proof}
   Assume towards contradiction that the conclusion of the lemma does not hold. That is, that  \(\forall \Lambda^{k}\in \mathrm{Mat}^>_{2k\times (k+2)}\) the \(\widetilde\Lambda^{k}\) images of different BCFW cells of $\BCFW_{k}$ do not intersect, but the \(\widetilde\Lambda^{k+1}\) images of the cells \(\{\Omega_{\mathrm{Arc}_{3,3}(\Gamma)}\}_{\Omega_\Gamma \in \BCFW_{k}}\) do intersect for some \(\Lambda^{k+1}\in \mathrm{Mat}^>_{(2k+2)\times (k+3)}\). Fix such \(\Lambda^{k+1}\in \mathrm{Mat}^>_{(2k+2)\times (k+3)}\). There exist
    \(
    Y^{k+1}\in \widetilde\Lambda^{k+1}(\Omega_{\mathrm{Arc}_{3,3}(\Gamma_1)})\cap \widetilde\Lambda^{k+1}(\Omega_{\mathrm{Arc}_{3,3}(\Gamma_2)}),
    \)
    for some \(\Omega_{\Gamma_1}\neq\Omega_{\Gamma_2} \in \BCFW_{k}\). This implies the existence of \(C_1^{k+1} \in\Omega_{\mathrm{Arc}_{3,3}(\Gamma_1)}\) and \(C_2^{k+1} \in \Omega_{\mathrm{Arc}_{3,3}(\Gamma_1)}\) such that\(\widetilde\Lambda^{k+1}(C_i^{k+1}) = Y^{k+1}\) for both \(i=1,2\). Label the new oriented vertices added by \(\mathrm{Arc}_{3,3}\) to \(\Gamma_i\) as \(v^\omega_i\) for \(i=1,2\) respectively. By Corollary~\ref{coro:graph cell com diag coro}, we have that \(C_i^{k+1} =\mathrm{Arc}_{3,3}(\alpha_i)(C^k_i)\) with \(C^k_i\in \Omega_{\Gamma_i}\), \(\alpha_i,>0\) for \(i=1,2\) respectively. 
    
    We argue that \(\alpha_1 =\alpha_2\): Recall that by Proposition~\ref{prop:sol vnlk prop}, external arcs with support of length \(3\) are twistor-solvable, and that the \(\mathrm{Arc}_{3,3}\) adds an external arc of support length \(3\). Thus by Proposition~\ref{prop:sol arc to angle} the vertices \(v^\omega_i\) for \(i=1,2\) are twistor-solvable and their angles \(C^{k+1}_i(v^\omega_i,\Gamma_i) = \alpha_i>0\) can be expressed as a function in twistors. The expressions for those functions depend only on the support of the arc by Proposition~\ref{n<=4 twist sol prop}, which is \(\{3,4,5\}\) in both cases. Thus the twistor-solution for \(\mathcal F (v^\omega_1,\Gamma_1)\ = \mathcal F (v^\omega_2,\Gamma_2) = a\). By definition the twistor-solution for an oriented vertex, we have that \(\alpha_1 = a(\Lambda^{k+1},Y^{k+1}) = \alpha_2\). Write \(\alpha\) for \(\alpha_1=\alpha_2\). 
    
    We have that for $i=1,2$, \(C_i^{k+1} =\mathrm{Arc}_{3,3}(\alpha)(C^k_i)\). Write \(\Lambda^k\defeq \mathrm{Arc}_{3,3}(\alpha)^{-1}(\Lambda^{k+1})\). As \(\mathrm{Arc}_{3,3}(\alpha) = \mathrm{Rot_{2,3}(\alpha)\mathrm{Inc_3}}\) by Definition~\ref{def:arc}, we have that \(\Lambda^k\in\mathrm{Mat}^>_{2k\times(k+2)}\) by Corollary~\ref{coro:inc pos} and Definition~\ref{def:rot of lambda}. Now, write \(Y_i^k \defeq \widetilde\Lambda^k(C^k_i)\) for \(i=1,2\). Since \(Y^{k+1}_1=Y^{k+1}_2\), by Observation~\ref{obs:inv arc inj} we have that \(Y^k\defeq Y^{k}_1=Y^{k}_2\). 
    
    We have thus found \(Y^k\) and \(\Lambda^k\in\mathrm{Mat}^>_{2k\times(k+2)}\) such that 
    \(
    Y^{k} \in \widetilde\Lambda^k(\Omega_{\Gamma_1})\cap\widetilde\Lambda^k(\Omega_{\Gamma_2}),\)
    but this shows the \(\widetilde\Lambda^k\) images of different BCFW cells of $\BCFW_{k}$ intersect, which is a contradiction to the assumptions. We conclude the images of the cells \(\{\Omega_{\mathrm{Arc}_{3,3}(\Gamma)}\}_{\Omega_\Gamma \in \BCFW_{k}}\) do not intersect.
\end{proof}
\begin{proof}[Proof of Proposition~\ref{prop:deg_1}]
By Proposition~\ref{prop:const_deg} the amplituhedron map has constant degree on the union of the BCFW cells, and the BCFW cells and their internal boundaries map to the interior of the amplituhedron. Proving Proposition~\ref{prop:deg_1} is thus equivalent to showing that this degree is $1.$ 

Fix \(\Lambda^{k}\in \mathcal L^>_{k}\), and recall that $\Ext_k$ is the set codimension $1$ external boundaries of cells from $\BCFW_k$. Write \(R_k = \coprod_{\Omega \in \Ext_k} \Omega\). We will first show that all points in an image of $\Omega$ for some \(\Omega\in\Ext_k\), which map to the zero locus of $S_{\{3,4,5\}}$, have unique preimage in $R$.

By Proposition~\ref{prop:S345 arc} the only graphs $\Gamma'$ with  $ \Omega_{\Gamma^\prime}\in \Ext_k$ for which $\widetilde{\Lambda}^k (\Omega_{\Gamma'})$ intersects the zero locus of $S_{\{3,4,5\}}$ are graphs having an external arc with support contained in  $\{3,4,5\}$. By Observation~\ref{co-dime 1 external obs} and Theorem~\ref{thm:BCFW are trees}, since they are  codimension \(1\) boundaries of BCFW cells, they must have an external arc with support \(\{3,4,5\}\).
Lemma~\ref{lem:deg_1_promotes_deg_1} proves that under our assumptions the images of the different strata $\widetilde{\Lambda}(\Omega_{\Gamma'})$ do not intersect, hence preimages of a point from $\widetilde{\Lambda}(\Omega_{\Gamma_0})$ \emph{in}
$\bigcup_{\Omega\in\Ext_k}\widetilde{\Lambda}(\Omega)$ can only come from $\Omega_{\Gamma_0},$ and will be unique by Theorem~\ref{BCFW inj}.

We now argue that the above conclusion implies that the constant degree of the amplituhedron map on BCFW cells is $1.$ 
By Lemma~\ref{lem:corners_dont cover_codim1} again we can find $Y\in\widetilde{\Lambda}(\Omega_{\Gamma_0})$ which is not in the image of a corner of codimension $2$ or more of a BCFW cell.

This $Y$ is the limit of a sequence $Y_1,Y_2,\ldots,$ of points in $\bigcup_{\Omega\in\BCFW_k}\widetilde{\Lambda}({\Omega}).$ Had these points multiple preimages, they must have been preimages from different BCFW cells, but from compactness of the closures $\overline{\Omega_\Gamma}$, this would imply that also $Y$ has multiple preimages in $\coprod_{\Omega\in\BCFW_k}\overline{\Omega}$. Since $Y$ has a unique preimage in $\Ext_k,$ and no preimages in BCFW cells and their internal boundaries, by Proposition~\ref{prop:const_deg}, it must have at least one more preimage coming from a corner of a BCFW cell. Contradicting the choice of $Y.$ This contradiction proves the proposition.
\end{proof}
\printbibliography
\pagebreak
\begin{appendices}
\section{The Positive Orthogonal Grassmannian}
\label{app:OG}

This is an overview of the non-negative orthogonal Grassmannian (sometimes called the positive orthogonal Grassmannian), which was introduced in 2014 by Huang, Wen, and Xie \cite{PosOG} in the study of scattering amplitudes in ABJM theory. We present the basic theory and key notions of the non-negative orthogonal Grassmannian that are used throughout this paper.
\begin{dfn}
    \emph{The Grassmannian} \(\Gr{k}{n}\) is defined as the set of \(k\)-dimensional linear sub-spaces of \(\mathbb{R}^n\). Or equivalently \(\Gr{k}{n} := \,_{\mathrm{GL}_{k}(\mathbb R )}\mkern-.5mu\backslash\mkern-2mu^{\mathrm{Mat}_{k \times n}^*(\mathbb R )}, \)
    where \(\mathrm{Mat}_{k \times n}^*(\mathbb R )\) is the space of real \(k\times n\) matrices of full rank, and \(\mathrm{GL}_{k}(\mathbb R )\) acts by multiplication from the left. We will use \(C\in\Gr{k}{n}\) for both a class and the matrix representing it depending on context. 
\end{dfn}

\begin{dfn}
    \emph{The Pl\"ucker embedding} \(\Delta : \Gr{k}{n} \rightarrow \mathrm{\mathbb{RP}}^{\binom{n}{k}-1}\) is defined by

    \(
        \Delta_I (C) = \mathrm{det}\left( C^{I} \right) \), \(\forall I\in \binom{\left[n\right]}{k},
    \)
    where \(\left[n\right]\) is the set \(\{1,\, \ldots ,\,n\}\), \(\binom{\left[n\right]}{k}\) is the set of length \(k\) subsets of \(\left[n\right]\), and \(C^I\) is the matrix formed by taking the columns of the matrix representing \(C\) corresponding to the indices in \(I\).
\end{dfn}

\begin{prop}
\label{prop:plucker relations}
    The Pl\"ucker coordinates for \(C \in \Gr{k}{n}\) satisfy the Pl\"ucker relations:

    For any two ordered sequences of indices \(i_l,\,j_m \in [n]\), that is,
    \(
    i_1 < i_2<\dots<i_{k-1},\\ j_1<j_2<\dots<j_{k+1},
    \)
    one has
    \[
    \sum_{\ell=1}^{k+1} (-1)^\ell \Delta_{\{i_1,\,...,\,i_{k-1},\,j_\ell\}} (C) \Delta_{\{j_1,\,...,\,\hat j_\ell ,\,...,\,j_{k+1}\}} (C) = 0
    \]
    where \(j_1,\,...,\,\hat j_\ell ,\,...,\,j_{k+1}\) denotes the sequence \(j_1,\,\,...,\,j_{k+1}\) with \(j_\ell\) missing.
\end{prop}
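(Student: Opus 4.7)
\medskip

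\noindent\textbf{Plan.} The Pl\"ucker relations are a classical identity, and the cleanest proof is a one-line application of Cramer's rule plus a linear functional. My plan would be to first fix a matrix representative $C$ of the subspace, note that both sides of the asserted identity scale by $(\det g)^2$ under the replacement $C\mapsto gC$ for $g\in\mathrm{GL}_k(\mathbb R)$, so it suffices to prove the identity for any chosen representative and it descends to a well-defined identity on projective Pl\"ucker coordinates. Throughout I interpret $\Delta_{\{i_1,\ldots,i_{k-1},j_\ell\}}(C)$ as the determinant of the $k\times k$ matrix whose columns are $c_{i_1},\ldots,c_{i_{k-1}},c_{j_\ell}$ in that order (which is zero if $j_\ell\in\{i_1,\ldots,i_{k-1}\}$); sorting the index set reintroduces the same permutation sign on both occurrences and so does not affect the identity.

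\medskip

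\noindent The core step is to exploit the fact that the $k\times(k+1)$ submatrix $M$ of $C$ with columns indexed by $j_1<\cdots<j_{k+1}$ has rank at most $k$, so its columns satisfy a non-trivial linear relation. Cramer's rule (equivalently, Laplace expansion of a $(k+1)\times(k+1)$ matrix obtained from $M$ by duplicating a row, which vanishes because two rows coincide) gives explicitly
\[
\sum_{\ell=1}^{k+1}(-1)^{\ell-1}\,\Delta_{\{j_1,\ldots,\hat j_\ell,\ldots,j_{k+1}\}}(C)\,c_{j_\ell}\;=\;0 \quad\text{in }\mathbb R^k.
\]

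\medskip

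\noindent Next I apply the linear functional $\phi:\mathbb R^k\to\mathbb R$ defined by $\phi(w)=\det\bigl(c_{i_1}\,|\,c_{i_2}\,|\cdots|\,c_{i_{k-1}}\,|\,w\bigr)$. Linearity of $\phi$ and the relation above yield
\[
\sum_{\ell=1}^{k+1}(-1)^{\ell-1}\,\Delta_{\{j_1,\ldots,\hat j_\ell,\ldots,j_{k+1}\}}(C)\,\Delta_{\{i_1,\ldots,i_{k-1},j_\ell\}}(C)\;=\;0,
\]
which (after multiplying by $-1$) is exactly the claimed identity. The overall sign flip from $(-1)^{\ell-1}$ to $(-1)^\ell$ matches the convention in the statement.

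\medskip

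\noindent The only real subtlety — and where I would slow down — is the bookkeeping of signs in Step 2: verifying that the Cramer coefficients really are $(-1)^{\ell-1}\Delta_{J\setminus\{j_\ell\}}(C)$ rather than some other convention. I would derive this cleanly by considering, for each row index $m\in[k]$, the $(k+1)\times(k+1)$ matrix obtained by appending to $M$ the $m$-th row of $M$ as a new bottom row; this matrix has two equal rows, so its determinant vanishes, and Laplace expansion along the bottom row gives exactly the coefficients $(-1)^{k+1+\ell}\Delta_{J\setminus\{j_\ell\}}(C)$ pairing with the entries $M_{m,\ell}=(c_{j_\ell})_m$. Since the identity must hold for every $m$, the linear combination of columns itself vanishes, with the signs as claimed. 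Everything else — well-definedness under the $\mathrm{GL}_k$ action, and the trivial case where $\{i_1,\ldots,i_{k-1}\}\cap\{j_\ell\}\neq\emptyset$ — is routine.
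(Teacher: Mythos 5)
The paper does not give a proof of this proposition---it is stated as classical background immediately after introducing Pl\"ucker coordinates, and later invoked only through specific instances (e.g.\ Proposition~\ref{prop:twist pluckers}). Your argument is the standard derivation and is correct: the $(k+1)\times(k+1)$ determinant obtained by duplicating a row vanishes, Laplace expansion along the duplicated row (one equation per row index $m\in[k]$) assembles into the explicit linear dependence $\sum_{\ell=1}^{k+1}(-1)^{\ell-1}\Delta_{\{j_1,\ldots,\hat j_\ell,\ldots,j_{k+1}\}}(C)\,c_{j_\ell}=0$ in $\mathbb{R}^k$, and applying the alternating $k$-linear functional $w\mapsto\det\bigl(c_{i_1}\,|\cdots|\,c_{i_{k-1}}\,|\,w\bigr)$ gives the relation.

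One side remark in your write-up is incorrect, though it does not affect the proof. You say ``sorting the index set reintroduces the same permutation sign on both occurrences,'' but only the first factor $\Delta_{\{i_1,\ldots,i_{k-1},j_\ell\}}$ can be unsorted; the second, $\Delta_{\{j_1,\ldots,\hat j_\ell,\ldots,j_{k+1}\}}$, is automatically sorted since deleting an entry from an increasing sequence leaves an increasing sequence. Sorting the first factor introduces a sign $(-1)^{k-p_\ell}$, where $p_\ell$ is the sorted position of $j_\ell$ among $\{i_1,\ldots,i_{k-1},j_\ell\}$, and this sign genuinely depends on $\ell$. (Concretely, for $k=2$, $i_1=3$, $(j_1,j_2,j_3)=(1,2,4)$ the relation with sorted $\Delta_I$ evaluates to $-2\Delta_{12}\Delta_{34}\ne 0$.) So the proposition is true with the \emph{ordered} (antisymmetric) interpretation of the first Pl\"ucker factor---the interpretation you correctly adopt---and not literally with the paper's set-indexed $\Delta_I$ unless $j_\ell>i_{k-1}$ for all $\ell$; the conclusion of your remark is right, but the justification is not.
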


\begin{dfn}[\cite{postnikov}, Definition~3.1]
    \emph{The non-negative Grassmannian} (sometimes referred to as the positive Grassmannian) is defined as
    \(
        \Grnon{k}{n} \defeq \Bigl\{C \in \Gr{k}{n} \Big| \, \Delta_I (C) \geq 0,\,\,\,\forall I\in \binom{\left[n\right]}{k}\Bigr\}.
    \)
\end{dfn}
We now turn to the orthogonal Grassmannian and its positive part, which will play a key role in this work.
\begin{dfn}[\cite{OG_source}, Section~2.2]
    \emph{The orthogonal Grassmannian }is defined as
    \(
        \OG{k}{n} \defeq \Bigl\{C \in \Gr{k}{n} \Big| \, C \, \eta \, C^\intercal=0\Bigl\},
    \)
    where \(\eta\) is the diagonal \(n\times n\) matrix with alternating \(1\) and \(-1\) on the diagonal. 
    
    \emph{The non-negative orthogonal Grassmannian} is defined as \(\OGnon{k}{2k}\defeq \OG{k}{2k}\cap\Grnon{k}{2k}\).
\end{dfn}
\begin{thm}[\cite{OG_source}, Proposition~5.1]
    \label{thm:comp pluckers}
    For \(C\in\Grnon{k}{2k}\) the following are equivalent:
    
    \begin{itemize}
        \item {\(C\in\OGnon{k}{2k}\)}

        \item {\(\forall I\in \binom{\left[n\right]}{k} \,\,\,\Delta_{I} (C) = \Delta_{\bar I} (C) ,\) where \(\bar I \defeq \left[n\right] \setminus I\).}
    \end{itemize}
\end{thm}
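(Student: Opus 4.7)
The plan is to realize the self-$\eta$-orthogonality condition as a fixed-point condition for an involution on $\Gr{k}{2k}$, then translate this into the Plücker-coordinate symmetry $\Delta_I(C)=\Delta_{\bar I}(C)$, and finally use non-negativity to eliminate a residual sign.

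First I would introduce the map $\phi\colon\Gr{k}{2k}\to\Gr{k}{2k}$ defined by $\phi(V)=\eta V^\perp$, where $V^\perp$ denotes the standard (unweighted) orthogonal complement. A short computation shows $\phi^2=\mathrm{id}$: if $W=\eta V^\perp$ then $W^\perp=\eta V$, so $\phi(W)=\eta\cdot\eta V=V$. Unpacking $V=\phi(V)$ gives $\eta V\subseteq V^\perp$, i.e.\ $v^\intercal\eta w=0$ for all $v,w\in V$, which is precisely $C\,\eta\,C^\intercal=0$ for any matrix $C$ representing $V$. Hence $C\in\OG{k}{2k}$ iff $V=\mathrm{rowspan}(C)$ is fixed by $\phi$.

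Second, I would compute the Plücker coordinates of $\phi(V)$. If $C^\perp$ is any matrix representing $V^\perp$, then $C^\perp\eta$ represents $\phi(V)=\eta V^\perp$. Writing $\epsilon_I:=\prod_{i\in I}\eta_{ii}$, column-scaling gives $\Delta_I(C^\perp\eta)=\epsilon_I\,\Delta_I(C^\perp)$, and Lemma~\ref{alt ortho} applied with $\bar I$ in place of $I$ yields $\Delta_I(C^\perp)=\epsilon_{\bar I}\,\Delta_{\bar I}(C)$. Since $\epsilon_I\epsilon_{\bar I}=\prod_{i=1}^{2k}\eta_{ii}=(-1)^k$, one obtains the projective identity $\Delta_I(\phi(V))=(-1)^k\,\Delta_{\bar I}(C)$. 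Therefore $\phi(V)=V$ as points of $\Gr{k}{2k}$ is equivalent to the existence of a nonzero scalar $c$ with $\Delta_I(C)=c\,\Delta_{\bar I}(C)$ for every $I\in\binom{[2k]}{k}$.

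Finally, I would pin down the sign of $c$. Swapping $I$ with $\bar I$ in the identity gives $\Delta_{\bar I}(C)=c\,\Delta_I(C)$, so $c^2=1$ as soon as some Plücker coordinate is nonzero, which is forced by $\mathrm{rank}(C)=k$. Thus $c=\pm1$. If $c=-1$, non-negativity of the Plücker coordinates of $C\in\Grnon{k}{2k}$ would force every $\Delta_I(C)$ to vanish, contradicting maximality of rank. Hence $c=1$, giving $\Delta_I(C)=\Delta_{\bar I}(C)$ for every $I$. The converse runs this chain of equivalences in reverse: the Plücker symmetry forces $\phi(V)=V$ and hence $C\,\eta\,C^\intercal=0$. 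The only delicate point is the sign bookkeeping in the second step, which reduces to matching $\epsilon_I\epsilon_{\bar I}=(-1)^k$ against the definition of $\eta$; this is routine once Lemma~\ref{alt ortho} is in hand.
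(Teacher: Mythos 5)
The paper cites this result directly from Galashin--Pylyavskyy \cite{OG_source} (Proposition~5.1) without providing its own proof, so there is no internal argument to compare against. Your proof is correct and self-contained given Lemma~\ref{alt ortho}. The key idea---recasting the $\eta$-self-orthogonality condition $C\,\eta\,C^\intercal=0$ as the fixed-point condition $V=\eta V^\perp$ for the involution $\phi$ on $\Gr{k}{2k}$, then reading off $\Delta_I(\phi(V))=(-1)^k\Delta_{\bar I}(C)$ from the column-scaling action of $\eta$ combined with Lemma~\ref{alt ortho}, and finally eliminating the residual scalar $c'=\pm1$ via nonnegativity---is exactly the standard route and is executed cleanly. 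Two small points worth making explicit in a final write-up: first, the step $\phi(V)=V$ $\Leftrightarrow$ $C\,\eta\,C^\intercal=0$ silently uses that $\eta V\subseteq V^\perp$ upgrades to equality by a dimension count ($\dim V=k=2k-k=\dim V^\perp$), which is specific to $n=2k$; and second, Lemma~\ref{alt ortho} is really an identity for a fixed normalization of the matrix $C^\perp$, so your chain of equalities is a genuine equality of numbers for that representative, which is what justifies reading off the projective proportionality at the end. Both points are minor and do not affect correctness.
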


The positive orthogonal Grassmannian can be decomposed into cells defined the the vanishing of Pl\"ucker variables. That is, cells defined by a certain subset (that is invariant under complement) of the Pl\"ucker variables being zero. These are called \emph{Orthitroid cells}, and they can be indexed by various combinatorial structures. 
\begin{dfn}
    The Orthitroid cell with all minors being positive will be called the \emph{top cell}. It is equivalent to the interior of \(\OGnon{k}{2k}\) or the positive orthogonal Grassmannian.
\end{dfn}

\subsection{OG Graphs}
\label{app:OGgraphs}
The main tool we use to study the Orthitroid cells are the OG graphs. The Orthitroid cells are indexed by equivalence classes of OG graphs, which also give rise to different parameterizations of the Orthitroid cells. These are the \(\OGnon{k}{2k}\) analogues of plabic graphs, which have been studied in relation to \(\Grnon{k}{n}\) by Postnikov \cite{postnikov}.
\begin{dfn}[\cite{PosOG}, Section~2.2]
    \emph{OG graphs} (sometimes referred to as \emph{medial graphs}) are planar graphs embedded in a disc with \(2 k\, \,\,\, (k\in \mathbb N) \) \emph{external vertices}, that is vertices which lie on the boundary of the disc, numbered counter-clockwise.
    The remaining vertices are \emph{internal} and are $4-$regular. An edge is \emph{external} if it is contained in the boundary of the disk, and otherwise it is \emph{internal}. Every external vertex touches a single internal edge and two external edges. An OG graph would be referred to as a \(k\)-OG graph if it is an OG graph with \(2k\) external vertices. The numbering of the external vertices would always be considered cyclically mod \(2k\), that is, the \(2k+i\)-th external vertex would refer to the \(i\)-th external vertex.

    Two OG graphs are considered equivalent if one can be reached from the other by a series of the following \emph{equivalence moves}:

    \begin{enumerate}
    \begin{minipage}{0.9\textwidth}
        \item \label{eqmove:1}

            \
            
            \begin{center}
            \begin{tikzpicture}[scale=0.7]

            \draw (4, 0.5) --(6, 0.5);
        
            \draw [very thick, -latex] (2.25, 1) -- (3.75, 1);

            \draw [black] plot [smooth, tension=0.8] coordinates { (0,0.5) (1,1) (1.5,1.75) (1,2.25) (0.5,1.75) (1,1) (2,0.5)};
        
        \end{tikzpicture}
        \end{center}
        \end{minipage}\\
\begin{minipage}{0.9\textwidth}
        \item \label{eqmove:2}

            \
            
            \begin{center}
            \begin{tikzpicture}[scale=0.7]

            \draw [very thick, -latex] (4.25, 1) -- (5.75, 1);

            \draw [black] plot [smooth, tension=0.7] coordinates { (0,2) (1,1) (2,0.6) (3,1) (4,2)};

            \draw [black] plot [smooth, tension=0.7] coordinates { (0,0) (1,1) (2,1.4) (3,1) (4,0)};

            \draw (6,2)--(8,0);
            \draw (6,0)--(8,2);

        \end{tikzpicture}
        \end{center}
      \end{minipage} \\
        \begin{minipage}{0.9\textwidth}
        \item \label{eqmove:3}

            \
            
            \begin{center}
            \begin{tikzpicture}[scale=0.7]

            \draw (2.93185, 1.51764+0.25882)--(-0.931852, 1.51764+0.25882);
            \draw (2.41421, 2.41421+0.25882)--(0.482362, -0.931852+0.25882);
            \draw (-0.414214, 2.41421+0.25882)--(1.51764, -0.931852+0.25882);

              \draw [very thick, -latex] (3.25, 1) -- (4.75, 1);

            \draw (8.93185, 0.482362-0.25882)--(5.06815, 0.482362-0.25882);
            \draw (7.51764, 2.93185-0.25882)--(5.58579, -0.414214-0.25882);
            \draw (6.48236, 2.93185-0.25882)--(8.41421, -0.414214-0.25882);
        
        \end{tikzpicture}
        \end{center}
             
        \end{minipage}
    \end{enumerate}

    An OG graph is called \emph{reduced} when it has the minimal number of vertices in its equivalence class. As equivalence move~\ref{eqmove:3} does not change the number of vertices, there can be multiple reduced graphs in any equivalence class.

    We will use OG graphs to refer to a specific graph or its equivalence class interchangeably depending on context. Furthermore, when referring to an OG graph we always assume it is reduced, unless specified otherwise.
\end{dfn}
\begin{obs}
    Every OG graph is equivalent to a reduced OG graph.
\end{obs}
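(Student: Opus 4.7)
The observation is essentially a soft consequence of the well-ordering principle applied to the vertex-counting invariant, so the plan is very short. Given any OG graph $G$, I would consider its equivalence class $[G]$ under the three equivalence moves, and look at the set
\[
N([G]) \defeq \{\,\#\mathcal V(G') : G' \in [G]\,\} \subseteq \mathbb{Z}_{\geq 0},
\]
where $\mathcal V(G')$ denotes the set of internal vertices of $G'$. Since $G \in [G]$, this set is non-empty, hence by the well-ordering of $\mathbb{Z}_{\geq 0}$ it has a minimum $n_0$. Pick any representative $G_0 \in [G]$ with $\#\mathcal V(G_0) = n_0$.

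By construction, $G_0$ has the minimal number of internal vertices in its equivalence class, so by the very definition of ``reduced'' (an OG graph is reduced when it has the minimal number of vertices in its equivalence class), $G_0$ is reduced. Since $G \sim G_0$ by the choice of $G_0 \in [G]$, this exhibits an equivalence between $G$ and a reduced graph, proving the observation.

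There is no real obstacle here: the statement is almost tautological once one unpacks the definitions. The only thing to note is that one is not asserting an algorithm to find $G_0$ from $G$ by a finite sequence of moves in a controlled way (this would in fact follow once one observes that equivalence moves 1 and 2 strictly decrease the vertex count while move 3 preserves it, so one could also argue constructively: apply moves 1 and 2 whenever possible, using move 3 to set up such applications, and terminate by the vertex count being a non-negative integer). But for the stated observation, the existence argument via well-ordering suffices and is cleaner.
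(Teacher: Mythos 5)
Your proof is correct and is the canonical argument; the paper does not spell out a proof for this observation, treating it as immediate from the definition of ``reduced'' (minimal vertex count in the equivalence class) together with the well-ordering of $\mathbb{Z}_{\geq 0}$, exactly as you do.
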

\begin{rmk}
    The OG graph corresponding to the top cell of \(\OGnon{3}{6}\) is as seen in Figure~\ref{fig:top cell}.

    \begin{figure}[H]
    \centering
   \begin{center}
\begin{tikzpicture}[scale=0.6]
\draw (0,0) circle (2cm);

\draw({-0.517638, 1.93185})node[anchor=south]{\(4\)}--(1.41421, -1.41421)node[anchor=north]{\(1\)};

\draw(-1.93185, -0.517638)node[anchor=east]{\(5\)}--(1.93185, -0.517638)node[anchor=west]{\(2\)};

\draw(-1.41421, -1.41421)node[anchor=north]{\(6\)}--(0.517638, 1.93185)node[anchor=south]{\(3\)};

\draw (0+5,0) circle (2cm);

\draw({-0.517638+5, -1.93185})node[anchor=north]{\(6\)}--(1.41421+5, 1.41421)node[anchor=south]{\(3\)};

\draw(-1.93185+5, 0.517638)node[anchor=east]{\(5\)}--(1.93185+5, 0.517638)node[anchor=west]{\(2\)};

\draw(-1.41421+5, 1.41421)node[anchor=south]{\(4\)}--(0.517638+5, -1.93185)node[anchor=north]{\(1\)};
\end{tikzpicture}
\end{center}
    \caption{the equivalent OG graphs corresponding to the top cell of \(\OGnon{3}{6}\)}
    \label{fig:top cell}
\end{figure}
\end{rmk}

\begin{dfn}
    A \emph{perfect orientation} for an OG graph is an orientation on its internal edges such that every internal vertex has two in-going and two out-going edges. Unless stated otherwise, the term \emph{orientation} will refer to a perfect orientation.

    External vertices are termed \emph{sinks} if their internal edge is in-going, and \emph{sources} if their internal edge is out-going.

    For \(\Gamma\) an OG graph with \(\omega\) an orientation, (that is, If \(V\) are the vertices and \(E \in \binom{V}{2}\) are the internal edges of \(\Gamma\), \(\omega:E \rightarrow V \times V\) with \(\omega(\{v_1,v_2\}) = (v_1,v_2)\) or \((v_2,v_1)\) for any \(\{v_1,v_2\}\in E\) ) write \(\Gamma^\omega\) for the oriented OG graph.
\end{dfn}

\begin{dfn}
    A \emph{hyperbolic orientation} on an OG graph is a perfect orientation such that every internal vertex has non-alternating in-going and out-going edges. A \emph{trigonometric orientation} is a perfect orientation such that every internal vertex has alternating in-going and out-going edges.
\begin{figure}[H]
    \centering

    \begin{center}
            \begin{tikzpicture}

            \draw (0,2)--(2,0);
            \draw (2,2)--(0,0);

           \draw [very thick, -latex, shorten <= 20] (0,2)--(0.6, 1.4);
            \draw [very thick, -latex, shorten <= 20] (1, 1)--(1.6,0.4);
            \draw [very thick, -latex, shorten <= 20] (2,2)--(1.4, 1.4);
            \draw [very thick, -latex, shorten <= 20] (1, 1)--(0.4,0.4);

             \draw [very thick, -latex, shorten <= 20] (4+4,2)--(4.6+4, 1.4);
            \draw [very thick, -latex, shorten <= 20] (6+4,0)--(5.4+4, 0.6);
            \draw [very thick, -latex, shorten <= 20] (2+8,2)--(1.4+8, 1.4);
            \draw [very thick, -latex, shorten <= 20] (5+4,1)--(4.4+4,0.4);

            \draw (1,-0.75)node[anchor = south]{Hyperbolic and perfect};

            \draw (4,2)--(6,0);
            \draw (4,0)--(6,2);

            \draw [very thick, -latex, shorten <= 20] (4,2)--(4.6, 1.4);
            \draw [very thick, -latex, shorten <= 20] (5, 1)--(5.6,1.6);
            \draw [very thick, -latex, shorten <= 20] (6,0)--(5.4, 0.6);
            \draw [very thick, -latex, shorten <= 20] (5,1)--(4.4,0.4);

            \draw (5,-0.75)node[anchor = south]{Trigonometric and perfect};

            \draw (4+4,2)--(6+4,0);
            \draw (4+4,0)--(6+4,2);

            \draw (5+4,-0.75)node[anchor = south]{Not perfect};
        
        \end{tikzpicture}
        \end{center}
    \caption{possible orientations on vertices}
    \label{fig:posible orient}
\end{figure}
\end{dfn}

\begin{prop}[\cite{PosOG}, Section~2.3.2]
    \label{prop:trigonometric existence}
    A trigonometric orientation exists for any OG graph.

\end{prop}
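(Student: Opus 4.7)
The plan is to exploit the strand structure of an OG graph to produce a trigonometric orientation directly. Given an OG graph $\Gamma$, at every $4$-valent internal vertex the four incident edges come in two pairs of \emph{opposite} edges, determined by the planar embedding (equivalently, by which edges are across from one another in the cyclic order). Declare two half-edges at an internal vertex to be equivalent if they belong to the same pair of opposite edges; at an external vertex, declare the two half-edges of the boundary edges and the unique internal half-edge to belong to one common equivalence class. The equivalence classes group the edges of $\Gamma$ into \emph{strands}. Following a strand through the graph, each internal vertex is traversed ``straight across,'' so a strand is either a closed loop in the interior of the disk, or a path starting and ending at external vertices (pairing the $2k$ external vertices into a fixed-point-free involution on $[2k]$).

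Next, I would choose an arbitrary orientation for each strand (for open strands, a direction from one endpoint to the other; for closed strands, either cyclic direction). This assigns an orientation to every internal edge of $\Gamma$, since every internal edge lies on exactly one strand. I then need to verify two things. First, that the result is a perfect orientation: at each internal vertex two distinct strands cross, and each strand contributes exactly one in-edge and one out-edge (by the consistency of the strand's orientation), so the vertex has $2$ in and $2$ out. Second, that the result is trigonometric: the two in-edges at an internal vertex come from two different strands, hence lie on different ``straight lines'' through the vertex, hence are opposite edges in the cyclic sense; the same is true of the two out-edges. This is precisely the alternation condition in the definition of a trigonometric orientation (see Figure~\ref{fig:posible orient}).

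I do not foresee a significant obstacle: once the strand decomposition is set up, the verification is immediate from how the local picture at a $4$-valent vertex decomposes into two transverse strands. The only mildly delicate point is ensuring the strand equivalence relation is well-defined and that strands exhaust the edge set, which is a direct consequence of $4$-regularity at internal vertices together with the planar embedding. Alternatively, one could bypass the construction entirely by appealing to Proposition~\ref{prop:trigonometric from disk graph}: trigonometric orientations on $\Gamma$ are in bijection with disk graphs $G$ with $M(G)=\Gamma$, and any planar $4$-regular medial-type graph arises as $M(G)$ for some $G$ (either $G$ or its planar dual), so existence of a trigonometric orientation reduces to the classical fact that every such $\Gamma$ is realized as a medial graph.
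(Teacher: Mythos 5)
Your main construction is incorrect: consistently orienting each strand produces a \emph{hyperbolic} orientation, not a trigonometric one. This is exactly the content of Proposition~\ref{prop:hyperbolic choice per arc} in the paper, which identifies hyperbolic orientations with choices of orientation for the complete straight paths (arcs). The error is in the sentence ``the two in-edges at an internal vertex come from two different strands, hence lie on different straight lines through the vertex, hence are opposite edges in the cyclic sense.'' Two edges lie on the \emph{same} straight line through a $4$-valent vertex precisely when they are opposite in the cyclic order; edges on \emph{different} straight lines are cyclically adjacent. Concretely, label the four edges in cyclic order $e_1,e_2,e_3,e_4$; the two strands are $\{e_1,e_3\}$ and $\{e_2,e_4\}$. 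If strand $A$ enters at $e_1$ and exits at $e_3$, and strand $B$ enters at $e_2$ and exits at $e_4$, the cyclic pattern is in--in--out--out, which is non-alternating (hyperbolic, as in the left panel of Figure~\ref{fig:posible orient}). To get the alternating pattern in--out--in--out, the orientation on each straight line must ``reflect'' at the vertex (both half-edges of a strand pointing into, or both out of, the vertex), which is incompatible with orienting each strand consistently.

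Your fallback argument is the right one and is essentially how the existence statement is justified. By Proposition~\ref{prop:trigonometric from disk graph}, trigonometric orientations of $\Gamma$ correspond to disk graphs $G$ with $M(G)=\Gamma$, and such a $G$ always exists because any $4$-regular plane graph admits a checkerboard $2$-coloring of its faces (the dual of a plane graph with all vertex degrees even is bipartite); choosing the vertices of $G$ to sit in the faces of one color recovers $\Gamma$ as a medial graph. Equivalently and more directly, one can define the trigonometric orientation by orienting each edge so that the ``black'' face lies on a fixed side; the face colors around a $4$-valent vertex alternate, which forces the in/out pattern to alternate. If you wish to avoid citing medial-graph facts, proving that the faces of $\Gamma$ can be $2$-colored is the substantive step you should supply, rather than the strand argument.
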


\begin{dfn}
    Let \(\Gamma\) be an OG graph with \(V\) its internal vertices and \(E\) its internal edges.
    A \emph{path} from \(\ell\) a vertex to \(r\) a vertex is a selection of consecutive edges 
    \[
    P=\{\{\ell,v_1\},\{v_1,v_2\},\{v_2,v_3\},...,\{v_n,r\}\}\subset E
    \] with \(n\geq0\) such that \(v_i\) are internal vertices.

    Every internal edge is four-regular. Thus for each edge \(e\) adjacent to an internal vertex \(v\) we have three other internal edges adjacent to \(v\), two that share a face with \(e\), and one that does not. We will call the latter edge the \emph{edge opposite} to \(e\) at \(v\). A \emph{straight path} is a path in which every consecutive edges are opposite at their common vertex. A \emph{complete path} is a path that goes from an external vertex to an external vertex.
\end{dfn}

\begin{obs}
    Each internal edge is contained in exactly one complete straight path. Each internal vertex is contained in either one or two complete straight paths. For every internal vertex, each complete straight path contains an even number of its adjacent edges. Each external vertex is contained in exactly one complete straight path. 
\end{obs}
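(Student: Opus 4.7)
The common thread in all four sub-claims is the local fact that at any internal vertex $v$, which is $4$-regular, the incident internal edges split into two opposite pairs, and for any incident edge $e$ the opposite edge at $v$ is uniquely determined. Consequently a straight path admits a unique continuation at every internal vertex, and terminates at every external vertex (which has only one incident internal edge). My plan is to extract the four claims in the order (1), (4), (2), (3), after handling the one genuinely non-trivial point, namely ruling out straight cycles.

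For claim (1), starting from any internal edge $e$ I would extend uniquely in both directions using the opposite-edge rule; by finiteness of the graph the resulting maximal straight path must either close into a cycle or reach external vertices on both ends. Granting the no-cycle statement discussed below, the maximal straight path is then complete, and it is manifestly unique since any other complete straight path through $e$ would be forced to agree with it edge-by-edge at every internal vertex. Claim (4) is an immediate consequence: an external vertex has exactly one incident internal edge, and so lies on the unique complete straight path containing that edge. For claim (2), the four internal edges at an internal vertex $v$ form two opposite pairs, and each pair appears as a consecutive segment of some complete straight path through $v$; either both pairs belong to the same complete straight path (which then visits $v$ twice) or to two different paths, giving the dichotomy one-or-two. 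Claim (3) is the same bookkeeping viewed differently: each visit of a complete straight path to $v$ uses exactly one opposite pair and therefore contributes two incident edges, so the total number of edges of the path at $v$ is always even.

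The main obstacle is ruling out the possibility that a maximal straight path closes into a cycle without meeting any external vertex, and this is where the standing reducedness assumption enters. I would argue that a closed straight cycle $\gamma$ bounds a disk region in the embedding; either the subgraph inside that region is empty, in which case the local configuration along $\gamma$ can be absorbed via equivalence move $1$ (the self-loop untwisting) or collapsed by a bigon simplification of the form of equivalence move $2$, or the region contains a smaller $4$-regular subgraph with no external vertices, to which the same analysis applies inductively. Either outcome contradicts the assumption that the graph is reduced. Making this case analysis fully rigorous, tracking the interplay between the planar embedding, the opposite-edge rule, and the three equivalence moves, is the most technical part of the argument; everything else follows quickly from the unique-continuation principle.
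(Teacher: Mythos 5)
Your handling of claims (2), (3), (4), and of the uniqueness half of (1) is correct and matches the standard ``opposite-edge gives unique continuation'' argument; the paper states this as an unproved \texttt{obs}, so it supplies no proof to compare against. The real question is whether the existence half of (1) — that a maximal straight path cannot close into a cycle — is where the substance lies, and your treatment of that point has a concrete gap.

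The gap is in the sentence claiming that when the region bounded by a closed straight cycle $\gamma$ is empty, ``the local configuration along $\gamma$ can be absorbed via equivalence move 1 ... or collapsed by ... equivalence move 2.'' Moves 1 and 2 as defined remove, respectively, a single self-crossing (one vertex) and a bigon where two strands cross twice (two vertices). A closed straight cycle of length $\geq 3$ with empty interior matches neither local picture, so neither move applies, and the claimed dichotomy (``empty interior $\Rightarrow$ removable by move 1 or 2'') fails. Even for a length-$2$ closed cycle, move 2 is stated for two strands crossing twice and then separating; applied to a loop one would be left with a vertexless free loop, which is not an OG graph, so the move does not straightforwardly delete the cycle. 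The subsequent ``same analysis applies inductively'' inherits this defect. You flag this as the technical heart of the argument, which is honest, but as written the case analysis does not go through. A cleaner route in the spirit of the paper would be to argue directly against vertex-minimality: excising the closed cycle together with everything it encloses (which contains no external vertices, as you correctly observe) yields an OG graph with strictly fewer internal vertices and the same arc pairing, and then appeal to Proposition~\ref{prop:OG graphs to perm bij} / Theorem~\ref{param bij} to contradict reducedness — rather than trying to realize the deletion by a sequence of the three local moves. Alternatively, note that in practice the paper only ever invokes this observation for graphs built via the $\mathrm{Inc}$, $\mathrm{Rot}$, $\mathrm{Cyc}$ moves from the base case, where closed strands manifestly never arise, which is likely why the paper felt no proof was needed.
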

We have thus found a total pairing between the external vertices which are labeled by indices from \([2k]\). Two external vertices are paired iff there exist a straight path between them. This pairing defines a permutation on \([2k]\) that is a product of disjoint 2-cycles and has no fixed points. We will call that permutation the permutation \emph{corresponding} to the OG graph. 
\begin{obs}
Equivalence moves~\ref{eqmove:1} and~\ref{eqmove:2} changed the corresponding permutation to an OG graph. Equivalence move~\ref{eqmove:3} does not.
\end{obs}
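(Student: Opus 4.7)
The plan is to prove the observation by local analysis of each equivalence move, tracing the straight paths through the affected disc-region and reading off the induced pairing of external vertices.

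To set up, I would first fix the convention from the preceding paragraph: at every internal \(4\)-valent vertex the planar cyclic order of the four incident edges determines two pairs of \emph{opposite} edges, and a straight path traverses the vertex by exiting through the opposite of the edge by which it entered. The corresponding permutation is then the 2-cycle decomposition obtained by following every complete straight path from one external vertex to another.

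For move~\ref{eqmove:3}, the strategy is to verify preservation by tracing each of the three strands through the triangle formed by the three pairwise crossings. The claim to establish is the standard Reidemeister-III statement in the medial-graph setting: on both the LHS and RHS configurations, each of the three strands enters via one of the six external endpoints and, after applying the opposite-edge rule at the two crossings it visits, exits at the same partnered endpoint on both sides. Hence the three 2-cycles of the permutation restricted to the six endpoints are identical on LHS and RHS. This is a direct finite case-check once the cyclic orders at the three vertices are read off from the two pictures.

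For moves~\ref{eqmove:1} and~\ref{eqmove:2}, the goal is instead to exhibit the \emph{change}. Move~\ref{eqmove:2} reduces to a direct computation at each of the two bigon vertices of the LHS: applying the opposite-edge rule pairs the two top-side external endpoints together through the bigon and similarly the two bottom-side endpoints together, whereas the single $X$-crossing on the RHS routes top-left to bottom-right and top-right to bottom-left; the induced 2-cycles on the four external endpoints therefore differ. For move~\ref{eqmove:1} one performs the analogous analysis at the loop vertex, comparing the straight-path routing that passes through the loop with the trivial path of the loop-free RHS, and observes a discrepancy in how the two relevant external endpoints are paired.

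The principal obstacle will be move~\ref{eqmove:1}: one must unambiguously identify the opposite-edge structure at the self-intersection of the smooth loop-curve, since na\"ively the loop is internal and a routine check of strands on either side of the loop vertex could look permutation-preserving. Careful bookkeeping of which two half-edges sit opposite across the loop vertex (as dictated by the smooth-curve picture) is what forces the change in the external pairing. Once that local identification is nailed down, the observation reduces in all three cases to small, explicit local computations of the opposite-edge rule.
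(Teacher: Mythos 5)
Your overall strategy---local tracing of straight paths via the opposite-edge rule---is the natural (and essentially only) approach here, and your treatment of moves~\ref{eqmove:2} and~\ref{eqmove:3} is correct: on the left of move~\ref{eqmove:2} the two top endpoints are paired and the two bottom endpoints are paired, while the single crossing on the right pairs top-left with bottom-right, so the permutation changes; and for move~\ref{eqmove:3} a finite check of the three strands through the triangle shows all pairings are preserved. The genuine gap is in your handling of move~\ref{eqmove:1}. The ``discrepancy'' you promise there does not exist, and no bookkeeping of opposite half-edges can force one. Trace it: entering the self-crossing from the left leg, the opposite edge (the unique edge not sharing a face with it) is the germ leading into the loop; after traversing the loop you return to the crossing along the other loop germ, whose opposite edge is the right leg. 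Hence the straight path joins the left external endpoint to the right one, exactly as the bare strand on the other side of the move does. More robustly, the local picture has only two boundary legs, and a straight path entering one of them can neither reverse nor terminate at an internal vertex, so it must exit through the other; no local configuration with two legs can alter the external pairing at all. What move~\ref{eqmove:1} changes is reducedness---the arc now crosses itself, cf.\ Proposition~\ref{prop:reduceability}---not the pairing.

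Consequently the step you flag as the ``principal obstacle'' is not a matter of careful bookkeeping: your own method, carried out honestly, proves that move~\ref{eqmove:1} \emph{preserves} the corresponding permutation, so the assertion of a change for that move cannot be completed by any routing argument (the observation as printed lumps move~\ref{eqmove:1} with move~\ref{eqmove:2}, but the only content your analysis can establish is that move~\ref{eqmove:2} changes the permutation while moves~\ref{eqmove:1} and~\ref{eqmove:3} do not). As written, your proposal asserts the opposite for move~\ref{eqmove:1} without exhibiting any routing, and that step would fail.
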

\begin{dfn}
    Let \(\Gamma\) be an OG graph, and \(\tau\) the corresponding permutation. For \(\ell\in [2k] \) we call \(\tau_\ell \defeq \{\ell,\tau(\ell)\}\) an \emph{arc} of \(\Gamma\). Each arc corresponds to a straight path on the graph from the \(\ell\) to the \(\tau_{\ell}\) external vertex. Thus complete straight paths are in bijection with arcs.
\end{dfn}
\begin{prop}[\cite{companion}]
\label{prop:hyperbolic choice per arc}
    A hyperbolic orientation is equivalent to a choice of an orientation to the complete straight paths of the graph, or conversely the arcs of the graph.
\end{prop}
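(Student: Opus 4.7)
The plan is to construct the bijection explicitly and verify it locally at each internal vertex, where the condition of being perfect and the condition of being non-alternating combine into a single combinatorial constraint on opposite edges. The key observation is: at a 4-valent internal vertex, a perfect orientation gives a 2-in/2-out pattern, and going around the vertex the pattern is either \emph{alternating} (in-out-in-out, trigonometric) or \emph{non-alternating} (in-in-out-out in some cyclic reading, hyperbolic). In the alternating case each pair of opposite edges is either both incoming or both outgoing, whereas in the non-alternating case each pair of opposite edges has exactly one incoming and one outgoing edge. Since opposite edges are precisely the pairs of consecutive edges along a complete straight path through the vertex, the hyperbolic condition at a vertex is equivalent to saying that every straight path passing through that vertex flows consistently through it.

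First I would establish the forward map. Given a hyperbolic orientation $\omega$ on $\Gamma$, take any complete straight path $P$ and traverse it edge by edge; at each internal vertex $v$ on $P$, the two edges of $P$ meeting at $v$ are opposite, hence by the local analysis above exactly one is incoming and one is outgoing at $v$. So the induced directions on the successive edges of $P$ agree, giving a well-defined orientation of $P$. Since complete straight paths are in bijection with arcs (as noted right before the proposition), this defines an orientation of every arc.

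Next I would construct the inverse map. Recall from the observation preceding the proposition that each internal edge lies on a unique complete straight path. Given a choice of orientation on every arc, orient each internal edge by declaring that it agrees with the orientation of the straight path containing it; this is well defined. I then verify that the resulting orientation is perfect and hyperbolic by a purely local check at each internal vertex $v$. The four internal edges at $v$ split into two opposite pairs, each pair lying on a (possibly common, but still well defined) straight path; since each such path flows through $v$, each opposite pair contributes one incoming and one outgoing edge at $v$. This gives exactly two in-edges and two out-edges, so the orientation is perfect, and since opposite edges are oppositely oriented at $v$, the pattern around $v$ is non-alternating, i.e.\ hyperbolic.

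Finally I would note that the two constructions are manifestly inverse: both amount to the tautology that the orientation of an internal edge coincides with the orientation of the straight path through it. There is no real obstacle here; the statement is essentially a repackaging of the definitions, and the only substantive content is the equivalence between the non-alternating local pattern and the ``each opposite pair is split'' condition, which is immediate from the four possible 2-in/2-out patterns at a 4-valent vertex.
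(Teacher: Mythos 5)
The paper does not include a proof of this proposition — it is imported from the companion paper — so there is no in-paper argument to compare against. Your proof is correct, and its two pillars (the local dichotomy ``opposite pairs split $\iff$ non-alternating,'' and the global bijection obtained by orienting each edge via the unique complete straight path through it) are exactly the content of the statement.

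One point worth making a bit more explicit, depending on whether you intend the proposition to cover non-reduced graphs. Your forward map says ``the two edges of $P$ meeting at $v$ are opposite, hence ... the induced directions on the successive edges of $P$ agree.'' When an arc crosses itself at $v$ (which by Proposition~\ref{prop:reduceability} happens only for non-reduced graphs), the path $P$ has two passages through $v$, with the other opposite pair $\{e_2,e_4\}$ also belonging to $P$. In that case the claim that ``the induced directions agree'' requires a short inductive propagation along $P$: starting from one endpoint, agreement on the incoming edge forces agreement on the outgoing edge at each passage (because the opposite pair is split), and this inductive chain is what rules out a hyperbolic local pattern like I I O O at $v$ when the geometry of $P$ dictates the second passage in the other direction. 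You implicitly use this propagation but do not name it, and the last paragraph's claim that the statement is ``essentially a repackaging of the definitions'' slightly undersells that this is the only place where a genuine argument is needed. Since the paper's stated convention is that OG graphs are assumed reduced unless specified otherwise, you can also simply restrict to the two-passage-per-vertex case (where every vertex lies on exactly two distinct straight paths) and the propagation step becomes trivial; but it would strengthen the write-up to make the choice explicit.
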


We say that two arcs are \emph{crossing} at an internal vertex \(v\), if the corresponding straight paths both contain edges adjacent to \(v\). We say that an arc crosses itself at internal vertex \(v\) if the corresponding path contains all of the four edges adjacent to \(v\).
\begin{prop}[\cite{companion}]
\label{prop:reduceability}
    An OG graph is reduced iff neither of its arcs crosses itself and every pair of arcs crosses at most once.
\end{prop}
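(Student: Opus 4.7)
The plan is to introduce a crossing-count invariant. Each internal vertex of an OG graph is a crossing of two strands (arcs of the corresponding permutation): either two distinct arcs crossing, or one arc self-intersecting. Writing $s_\tau$ for the number of self-crossings of an arc $\tau$ and $c_{\tau,\sigma}$ for the number of crossings between distinct arcs $\tau,\sigma$, the number of internal vertices satisfies $V(\Gamma)=\sum_\tau s_\tau+\sum_{\tau<\sigma}c_{\tau,\sigma}$. Since all three equivalence moves preserve the corresponding permutation, the set of arcs together with their boundary endpoints is an invariant of the equivalence class. By a standard disk-topology argument, two arcs whose endpoint pairs link on the boundary circle must cross an odd number of times, and in particular at least once; hence $V(\Gamma)\ge V_{\min}\defeq\#\{\text{linked pairs of arcs}\}$, a lower bound depending only on the equivalence class.

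The easy direction is then immediate. If $\Gamma$ has no self-crossings and every pair of arcs crosses at most once, each linked pair contributes exactly one to $V(\Gamma)$ and each unlinked pair contributes zero, so $V(\Gamma)=V_{\min}$. Since $V_{\min}$ is the minimum number of internal vertices attained in the equivalence class, $\Gamma$ is reduced.

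For the converse I will prove the contrapositive: if $\Gamma$ contains either a self-crossing arc or a pair of arcs that cross at least twice, then $V(\Gamma)$ can be strictly decreased by equivalence moves, so $\Gamma$ is not reduced. If an arc $\tau$ self-crosses at a vertex $v$, then one of the two loops of $\tau$ based at $v$ bounds a sub-disk $D$; choosing $v$ to be an \emph{innermost} such self-crossing ensures that $\tau$ has no further self-crossings inside $D$. The plan is then to empty $D$ of crossings with other arcs by repeated application of equivalence move~\ref{eqmove:3}, after which the configuration at $v$ is exactly the left-hand side of move~\ref{eqmove:1}; one application of that move removes $v$ and decreases $V$ by one. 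The case where two distinct arcs $\tau,\sigma$ cross at least twice is analogous: taking an innermost bigon bounded by arcs of $\tau$ and $\sigma$, one evacuates its interior with move~\ref{eqmove:3} and then collapses it by move~\ref{eqmove:2}, reducing $V$ by two.

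The main obstacle will be making the ``innermost emptying'' procedure rigorous: I must show that at each stage the crossings inside $D$ (respectively, the bigon) can actually be reduced by a single application of move~\ref{eqmove:3}, i.e.\ that there exists a triangular region inside $D$ to which the move is applicable. The standard argument is a second innermost selection: among all intersection points inside $D$, pick one whose associated triangle (cut off together with segments of $\partial D$ and/or other arcs) contains no further intersections, and apply move~\ref{eqmove:3} there. Induction on the total number of interior intersections then forces the interior to be emptied in finitely many steps. Once this technical lemma is established, the iff follows immediately by combining it with the counting argument above.
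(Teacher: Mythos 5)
The paper cites this proposition to a companion work and gives no proof here, so I cannot compare you to the source directly; I can only assess the argument on its own terms. Your overall strategy—bound $V(\Gamma)$ below by a linking-number count that depends only on the equivalence class, then show that any offending self-crossing or double crossing can be removed by a Reidemeister-style reduction—is the standard approach for results of this type, and the easy direction (crossing conditions $\Rightarrow$ reduced) is complete and correct: linked endpoint pairs must cross an odd number of times, unlinked an even number, so under the hypothesis $V(\Gamma)$ meets the lower bound $V_{\min}$ and is therefore minimal in its class.

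The hard direction, however, has a genuine gap precisely where you flag "the main obstacle." You assert that one can "pick an intersection point inside $D$ whose associated triangle contains no further intersections" and apply move~\ref{eqmove:3}, but "associated triangle" is not defined, and it is not true that every crossing is the corner of a triangular \emph{face}; move~\ref{eqmove:3} requires a face bounded by exactly three strand segments with nothing inside. Establishing that such a face exists inside an innermost monogon or bigon is the real content of the converse, and your induction "on the total number of interior intersections" is also not quite the right invariant by itself, since move~\ref{eqmove:3} preserves the global crossing count (what decreases is the number of crossings \emph{inside} the chosen region, which must be tracked explicitly). A clean way to close the gap: among \emph{all} monogons and bigons of the arrangement, pick one, $R$, with the fewest crossings strictly in its interior. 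If that number is zero, apply move~\ref{eqmove:1} or~\ref{eqmove:2}. Otherwise some strand segment $\alpha$ crosses $R$, cutting it into two pieces; the piece not containing the corner of $R$ is itself a bigon, hence by minimality it contains the same set of interior crossings as $R$, which forces the piece containing the corner to be an empty triangular face — and there move~\ref{eqmove:3} applies, pushing $\alpha$ out of $R$ and strictly decreasing the interior crossing count of $R$. Iterating this terminates and yields the required reduction in $V$. Without some argument of this kind, the converse direction is only a plausible plan, not a proof.
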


\begin{dfn}
\label{def:arc sup}
    Let \(\Gamma\) be an OG graph, and \(\tau\) the corresponding permutation. For \(\ell\in [2k]\) write \(I = \{\ell,\ell+1,...,\tau(\ell)\}\) and \(J = \{\tau(\ell),\tau(\ell)+1,...,\ell\}\) considered mod \(2k\), We say that \(\tau_\ell\) is an \emph{external arc} of \(\Gamma\) if \(I\) contains no other arc of \(\Gamma\) or if \(J\) contains no other arc of \(\Gamma\). If the former occurs, we say that \(I\) is the \emph{support} of \(\tau_\ell\), and if the latter occurs we say that \(J\) is the \emph{support} of \(\tau_\ell\) (if both are true then the choice is arbitrary). If \(I\) is the support of \(\tau_\ell\), we have that for any \(r\in I\setminus\tau_\ell\), we have \(\tau(r)\notin I\), meaning \(\tau_r\) crosses \(\tau_\ell\) (see Figure~\ref{fig:ext arc}). If the support of \(\tau_\ell\) is of length \(n\), we would say \(\tau_\ell\) is an \emph{external \(n\)-arc}.
\end{dfn}

    \begin{figure}[H]
        \centering
    \begin{center}
    \scalebox{0.8}{
\begin{tikzpicture}
\draw (0,0) circle (2cm);
\filldraw[lightgray] (0,1/2) circle (1cm);
\node[scale=3] (c) at (0,1/2)  {\(\Gamma_0\)};
\draw (0,1/2) circle (1cm);
\draw (1.96962/2, 1.3473/2) --(3.64697/2, 1.64306/2);
\draw (-1.96962/2, 1.3473/2) --(-3.64697/2, 1.64306/2);
\filldraw[black] (0,3.5/2) circle (2pt);

\filldraw[black] (1.27565/2, 3.20949/2) circle (2pt);
\filldraw[black] (-1.27565/2, 3.20949/2) circle (2pt);

\draw (0.68404/2, -0.879385/2) --(1.68446/2, -3.62803/2)node[anchor=north]{\(i_3\)};
\draw[very thick, -latex, shorten <= 25] (0.68404/2, -0.879385/2)  --(0.68404/8+3*1.68446/8, -3*3.62803/8 -0.879385/8 );

\draw (-0.68404/2, -0.879385/2)  --(-1.68446/2, -3.62803/2)node[anchor=north]{\(i_2\)};
\draw[very thick, -latex, shorten <= 25] (-0.68404/2, -0.879385/2)   --(-0.68404/8-3*1.68446/8,  -0.879385/8-3*3.62803/8);

\draw (-3.27261/2, -2.3/2) node[anchor=north east]{\(i_1\)}-- (3.27261/2, -2.3/2)node[anchor=north west ]{\(i_4\)} ;
\draw[very thick, -latex, shorten <= 45] (-3.27261/2, -2.3/2)-- (0.2/2, -2.3/2) ;

\end{tikzpicture}
}
\end{center}
        \caption{ an OG graph containing an external \(4\)-arc, \(\{i_1,i_4\}\), with support \(\{i_1,i_2,i_3,i_4\}\), with a \(\{i_1,i_4\}\)-proper orientation.}
        \label{fig:ext arc}
    \end{figure}

\begin{dfn}
\label{prop orientation def}
    Given \(\{\ell,\tau(\ell)\}\) an external arc of \(\Gamma\) reduced OG graph with support \(I\) such that \(\ell<\tau(\ell)\). A \emph{\(\tau_\ell\)-proper orientation} of \(\Gamma\) is a hyperbolic orientation where \(\tau_\ell\) is orientated \(\ell\) to \(\tau(\ell)\), and for every \(i\in I\setminus\tau_\ell\), we have \(\tau_i\) oriented \(\tau(i)\) to \(i\) (see Figure~\ref{fig:ext arc}). Other arcs are oriented arbitrarily.
    
    The existence of such an orientation is possible by Proposition~\ref{prop:hyperbolic choice per arc} since \(\tau_\ell\) is an external arc of a reduced graph, and hence the only arc contained entirely in \(I\) is \(\tau_\ell\).
\end{dfn}
 In the spirit of Definition~\ref{def:arc sup}, in the rare occasion where an external arc has two possible supports, we will take care to have the proper orientation be compatible with the chosen support.

\begin{prop}[\cite{companion}]
\label{prop:uniqueness of supp}
    Suppose \(\tau_\ell\) is an external arc of \(\Gamma\), a reduced OG graph with support \(I\), and \(C\in \Omega_\Gamma\). Then there is a vector in \(C\) with support contained in \(I\), unique up to scaling. This vector will be called the \emph{vector associated} with \(\{\ell,\tau(\ell)\}\) (see Figure~\ref{fig:ext arc} for an example of an external arc of a graph with a proper orientation).
    \end{prop}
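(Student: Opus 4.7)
My plan is to exhibit a concrete vector with support in $I$ via the parametrization associated to a $\tau_\ell$-proper orientation, and then deduce uniqueness from the identity block structure of that parametrization.

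The first step is to select a $\tau_\ell$-proper orientation $\omega$, which exists by Definition~\ref{prop orientation def} (as $\tau_\ell$ is external with support $I$, no other arc is contained in $I$). Let $J$ denote the set of sources of $\omega$. Since $\tau_\ell$ is oriented $\ell \to \tau(\ell)$ and each other arc $\tau_i$ with $i \in I \setminus \{\ell,\tau(\ell)\}$ is oriented $\tau(i) \to i$ with $\tau(i) \notin I$, I would verify that $J \cap I = \{\ell\}$: only $\ell$ is a source inside $I$, while all remaining sources (from arcs crossing $\tau_\ell$ and from arcs with both endpoints outside $I$) lie outside $I$. Taking a matrix representative $C \in \Omega_\Gamma$ with $C^J = \mathrm{Id}$, the row $\mathbf{r}_\ell$ indexed by the source $\ell$ is the natural candidate for the associated vector.

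For existence, I would argue that every directed path from $\ell$ terminates at a sink in $I$, so that $\mathbf{r}_\ell$ is supported in $I$. Let $R$ be the disc region bounded by the boundary arc from $\ell$ to $\tau(\ell)$ through $I$ together with the straight path of $\tau_\ell$. Any path from $\ell$ starts inside $R$. At each crossing of two arcs, the path either continues on its current arc or turns onto the transverse one. The transverse arcs appearing in $R$ are exactly the $\tau_i$ with $i \in I \setminus \{\ell,\tau(\ell)\}$ (because no other arc is contained in $I$), and the proper orientation directs each of them into $I$. Since $\Gamma$ is reduced, Proposition~\ref{prop:reduceability} guarantees that any two arcs cross at most once, so an arc that has just crossed $\tau_\ell$ cannot cross it again. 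Consequently the path cannot escape $R$ and must end at a sink in $I$.

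For uniqueness, any vector $\mathbf{v}$ in the row span of $C$ is a linear combination $\mathbf{v} = \sum_{s \in J} a_s \mathbf{r}_s$, and because $C^J = \mathrm{Id}$ the coefficient $a_s$ equals the $s$-th entry of $\mathbf{v}$. If $\mathrm{supp}(\mathbf{v}) \subseteq I$, then $a_s = 0$ for every $s \in J \setminus \{\ell\} \subseteq [2k] \setminus I$, forcing $\mathbf{v} = a_\ell \mathbf{r}_\ell$. The main obstacle I expect is the geometric path-containment step: verifying rigorously that directed paths from $\ell$ cannot leak out of $R$ requires combining the proper orientation convention with reducedness of $\Gamma$ at each turning vertex; everything after this is formal bookkeeping from $C^J = \mathrm{Id}$.
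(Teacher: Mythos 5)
You correctly prove both halves, and your approach --- exhibit the vector via the $\tau_\ell$-proper orientation and deduce uniqueness from the identity block --- is precisely the mechanism the paper later relies on when it uses this proposition (cf.\ the proof of Proposition~\ref{prop:sol arc to angle}, which writes out the associated vector explicitly from the proper orientation in exactly this way). The uniqueness step from $C^J = \mathrm{Id}$ together with $J\cap I = \{\ell\}$ is clean and correct, and your verification that $\ell$ is the only source inside $I$ is right.

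The geometric containment step is where the write-up should be tightened, since it is the crux and your logic is slightly mis-wired in two places. First, the parenthetical justification ``because no other arc is contained in $I$'' only rules out an arc lying \emph{entirely} inside $R$; it does not by itself rule out an arc with \emph{both} endpoints outside $I$ dipping into $R$ and back out. For that you need reducedness: such an arc would cross the straight path of $\tau_\ell$ twice, contradicting Proposition~\ref{prop:reduceability}. Second, the sentence ``an arc that has just crossed $\tau_\ell$ cannot cross it again; consequently the path cannot escape $R$'' is not by itself a proof, because a directed path in the boundary-measurement sense switches strands at crossings and could a priori leak out of $R$ via a different strand than the one it entered on. The actual argument combines: (a) every strand segment meeting $R$ belongs to $\tau_\ell$ or to some $\tau_i$ with $i \in I\setminus\tau_\ell$; and (b) under the $\tau_\ell$-proper orientation, at each vertex on $\tau_\ell$ the two out-going half-edges point along $\tau_\ell$ toward $\tau(\ell)$ and into $R$ toward $i$, while the $\tau_i$ half-edge exiting $R$ toward $\tau(i)$ is always \emph{in}-going; and at each crossing in the interior of $R$ both out-going half-edges point toward $I$. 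You state (b) in substance and you invoke reducedness, but the explicit deduction --- that at every vertex a directed path from $\ell$ can reach, the out-going edges never cross $\tau_\ell$ outward, hence no path from $\ell$ reaches a sink outside $I$ --- should be written out, as this is exactly the point where reducedness and the proper orientation interact.
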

\begin{prop} [\cite{arkanihamed2014scattering}, Section~13.2]
    \label{prop:orth to perm bij}
    Orthitroid cells are labeled by permutations that are products of disjoint 2-cycles with no fixed points.
\end{prop}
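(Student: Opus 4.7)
The plan is to factor the bijection through equivalence classes of reduced OG graphs. The setup of the appendix already supplies two ingredients: (i) orthitroid cells are in bijection with equivalence classes of reduced OG graphs (this is how they were introduced at the start of Section~\ref{app:OGgraphs}, and corresponds to the combinatorial labeling used throughout the paper); (ii) every OG graph has a well-defined corresponding permutation $\tau_\Gamma$ on $[2k]$, obtained from its complete straight paths, and this $\tau_\Gamma$ is a product of disjoint $2$-cycles with no fixed points. So the only content to establish is that the map $[\Gamma]\mapsto \tau_\Gamma$ from equivalence classes of reduced OG graphs to fixed-point-free involutions on $[2k]$ is a well-defined bijection.

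For well-definedness, I would verify directly that each of the three equivalence moves leaves the permutation unchanged: moves~\ref{eqmove:1} and~\ref{eqmove:2} only modify the interior of a single straight strand (a loop or a bigon) and hence do not alter the pairing of endpoints on the boundary, while move~\ref{eqmove:3} is a local braid move that literally permutes internal crossings among three strands but keeps the three external endpoints identified pairwise in the same way. (I read the remark in the main text saying that moves~\ref{eqmove:1},~\ref{eqmove:2} \emph{change} the permutation as a typo for ``do not change''; otherwise $\tau_\Gamma$ would not descend to equivalence classes.)

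For surjectivity, given any fixed-point-free involution $\tau$ on $[2k]$, I would construct a reduced OG graph $\Gamma$ with $\tau_\Gamma=\tau$ inductively, using the \(\mathrm{Inc}\) and \(\mathrm{Rot}\) moves (or equivalently the \(\mathrm{Arc}_{n,\ell}\) moves of Definition~\ref{def:arc}). Concretely, pick an index $\ell\in[2k]$ such that $\{\ell,\ell+1,\ldots,\tau(\ell)\}$ (or its cyclic complement) contains no other arc of $\tau$, i.e.\ make $\{\ell,\tau(\ell)\}$ external; this always exists for any non-crossing-free involution because cyclic arrangements of chords always admit an outermost chord in one of the two arcs. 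Apply $\mathrm{Arc}_{n,\ell}$ to realize this arc on top of a smaller graph whose corresponding involution is the restriction of $\tau$ to $[2k]\setminus\{\ell,\tau(\ell)\}$, relabeled cyclically; by induction on $k$ such a graph exists. If $\tau$ does have crossings, the same inductive argument works after first using move~\ref{eqmove:3} to braid, or equivalently by noting that any fixed-point-free involution can be cyclically rotated (using $\mathrm{Cyc}$) into a position with a short external arc.

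Injectivity is the main obstacle, and the cleanest route I see is to pass through the geometry. By Proposition~\ref{prop:uniqueness of supp} (and more generally the parameterizations from perfect orientations discussed in the appendix), each arc $\tau_\ell$ of $\Gamma$ determines, for every $C\in\Omega_\Gamma$, a unique (up to scale) vector in $C$ whose support lies in the support of $\tau_\ell$. This means the collection of supports of the $2k$ distinguished vectors in $C$, which is forced by $\tau_\Gamma$ alone, determines the set of vanishing Pl\"ucker coordinates on $\Omega_\Gamma$ (via Observation~\ref{obs:3vecsupport}). Since orthitroid cells are by definition cut out by vanishing of certain Pl\"ucker coordinates, two equivalence classes of reduced OG graphs with the same corresponding permutation give the same orthitroid cell, so the map $[\Gamma]\mapsto \tau_\Gamma$ is injective on orthitroid cells. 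Combined with surjectivity, this yields the claimed bijection between orthitroid cells of $\OGnon{k}{2k}$ and fixed-point-free involutions on $[2k]$.
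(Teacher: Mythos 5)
The paper does not actually prove this proposition; it cites it from \cite{arkanihamed2014scattering}, so there is no internal argument to compare against. What you should notice, though, is that the statement follows immediately by composing the next two cited propositions in the same appendix: Proposition~\ref{prop:orth to graphs bij} gives a bijection between orthitroid cells and equivalence classes of OG graphs, and Proposition~\ref{prop:OG graphs to perm bij} gives a bijection between equivalence classes of OG graphs and fixed-point-free involutions on $[2k]$. Your proof is in effect an attempt to re-derive Proposition~\ref{prop:OG graphs to perm bij}, and that attempt has a genuine gap.

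The gap is in your well-definedness step. You assert that all three equivalence moves preserve the corresponding permutation, and you dismiss the paper's Observation (that moves~\ref{eqmove:1} and~\ref{eqmove:2} change it while move~\ref{eqmove:3} does not) as a typo. It is not a typo. Look at the figure for move~\ref{eqmove:2}: on the left, two strands cross twice and the straight-path pairing is top-left$\leftrightarrow$top-right, bottom-left$\leftrightarrow$bottom-right; on the right, the bigon has collapsed to a single crossing and the pairing becomes top-left$\leftrightarrow$bottom-right, bottom-left$\leftrightarrow$top-right. So the endpoint pairing visibly swaps, i.e.\ move~\ref{eqmove:2} does change $\tau_\Gamma$. (A concrete instance: in $\OGnon{2}{4}$, the bigon graph has permutation $(1\,2)(3\,4)$ while its reduction, the single-crossing top-cell graph, has permutation $(1\,3)(2\,4)$.) The theory remains consistent precisely because ``the permutation of a cell'' means the permutation of a \emph{reduced} representative, and two equivalent reduced graphs differ only by move~\ref{eqmove:3}, which does preserve $\tau_\Gamma$ --- that is exactly the content of Proposition~\ref{prop:OG graphs to perm bij}. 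Non-reduced graphs over-parameterize and can have a different trip permutation from their reduction; this is why ``reduced'' is a running hypothesis in the paper. So the fix to your argument is not to re-examine the moves but simply to invoke Proposition~\ref{prop:OG graphs to perm bij} (or to restrict to reduced representatives and only check move~\ref{eqmove:3}). A secondary issue: your injectivity step applies Proposition~\ref{prop:uniqueness of supp}, which is stated only for external arcs, to all $2k$ arcs, and the claim that $\tau_\Gamma$ alone determines the full set of vanishing Pl\"ucker coordinates needs more argument than the sketch provides.
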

\begin{prop}[\cite{PosOG}, Section~2.2]
\label{prop:orth to graphs bij}
    Orthitroid cells are also in bijection with equivalent classes of OG graphs.
\end{prop}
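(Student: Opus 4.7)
The plan is to route the bijection through fixed-point-free involutions on $[2k]$. Proposition~\ref{prop:orth to perm bij} already identifies orthitroid cells with such involutions, so it suffices to construct a matching bijection between equivalence classes of OG graphs and fixed-point-free involutions on $[2k]$, and then check that the two bijections compose to the map $[\Gamma]\mapsto\Omega_{\Gamma}$ coming from the parametrization of the graph. The natural candidate is $[\Gamma]\mapsto \tau_{\Gamma}$, sending an equivalence class to the permutation read off the straight paths, which is manifestly a fixed-point-free involution.

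The first step is well-definedness on equivalence classes. For each of the three moves one draws the local picture and traces the straight paths in the affected region: move~\ref{eqmove:1} just contracts a bubble, so the path entering on one side exits on the other at the same external endpoint; move~\ref{eqmove:2} locally resolves two crossing paths with bubbles into two cleanly crossing paths with the same four external endpoints; and move~\ref{eqmove:3} is a Yang--Baxter style swap of three mutually crossing paths that trades which pair of paths crosses first but leaves the induced pairing of external endpoints unchanged. Thus $\tau_{\Gamma}$ depends only on $[\Gamma]$.

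For surjectivity, given a fixed-point-free involution $\tau$ on $[2k]$, I would realize it by the graph obtained from drawing the $k$ chords $\{i,\tau(i)\}$ inside a disk with $2k$ boundary vertices in generic position, and replacing each crossing of chords by a $4$-valent internal vertex; the straight paths of the resulting OG graph reproduce the chords, so $\tau_{\Gamma}=\tau$. For injectivity one must show that any two reduced OG graphs with the same corresponding involution are equivalent. The idea is to use Proposition~\ref{prop:reduceability}: in a reduced graph no arc crosses itself and no two arcs cross more than once, so the internal vertices correspond bijectively to the unordered pairs of arcs that cross, data determined by $\tau$ alone; move~\ref{eqmove:3} then suffices to shuffle the order of crossings along each strand and transport any such configuration to a fixed canonical wiring diagram for $\tau$.

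The main obstacle is exactly this last step — establishing that any two reduced OG graphs representing the same involution are connected by a finite sequence of moves~\ref{eqmove:1}--\ref{eqmove:3}. This is the orthogonal analog of Postnikov's reduction theorem for plabic graphs, and the cleanest rigorous route is to invoke the corresponding result from \cite{PosOG}, which is where the proposition is originally proved. Once this is in hand, one verifies compatibility with Proposition~\ref{prop:orth to perm bij} by noting that the parametrization coming from any perfect orientation of $\Gamma$ lands in the orthitroid cell whose decorated permutation is $\tau_{\Gamma}$ (which can be read off the matrix directly from the path-weighting construction), completing the identification of equivalence classes of OG graphs with orthitroid cells.
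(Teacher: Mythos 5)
Your outline is correct and routes through the same identification the paper relies on: the paper does not prove this proposition but simply cites \cite{PosOG}, and your sketch accurately reconstructs what that proof looks like — identify both cells and graph classes with fixed-point-free involutions, then compose. Worth noting: the ``main obstacle'' you isolate (two reduced OG graphs with the same involution are move-equivalent) is stated verbatim as Proposition~\ref{prop:OG graphs to perm bij} a few lines later in the paper, also cited to \cite{PosOG}, so your deferral lands exactly where the paper itself defers.
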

\begin{dfn}
    The orthitroid cell labeled by an OG graph \(\Gamma\) will be denoted as \(\Omega_\Gamma\).
\end{dfn}
\begin{prop}[\cite{PosOG}, Section~2.2]
    \label{prop:OG graphs to perm bij}
    Two reduced OG graphs are equivalent iff they correspond to the same permutation. Thus, equivalence classes of OG graphs are in bijection with products of disjoint 2-cycles with no fixed points. 
\end{prop}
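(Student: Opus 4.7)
The plan is to establish the two directions of the iff separately, leaning on the two bijections already recorded in Propositions~\ref{prop:orth to perm bij} and~\ref{prop:orth to graphs bij}, and on the graph-theoretic fact that the straight-path pairing is preserved by the three equivalence moves. Concretely, writing $\tau(\Gamma)$ for the corresponding permutation of $\Gamma$, I would first produce a well-defined map $\Phi:\{\text{equivalence classes of reduced OG graphs}\}\to\{\text{fixed-point-free involutions on }[2k]\}$ by checking that $\tau$ is a move-invariant, and then identify $\Phi$ with the bijection $\Psi$ obtained by composing $\{\text{eq.~classes of red.~OG graphs}\}\xleftrightarrow{\text{Prop.~\ref{prop:orth to graphs bij}}}\{\text{orthitroid cells}\}\xleftrightarrow{\text{Prop.~\ref{prop:orth to perm bij}}}\{\text{fpf involutions}\}$. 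Once $\Phi=\Psi$ is known, both claims of the proposition follow immediately: $\Phi$ being injective is the non-trivial direction of the iff, and $\Phi$ being bijective is the enumeration statement.

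For the move-invariance, I would argue locally, move-by-move. For move~\ref{eqmove:1} (tadpole removal), the straight path entering the loop makes a U-turn and exits on the same external edge, so removing the tadpole leaves every complete straight path unchanged. For move~\ref{eqmove:2} (the bigon reduction), the two straight paths that pass through the bigon become two crossing lines after the move; a direct inspection of the four possible ways in/out of the bigon confirms that the pairing of their endpoints is unchanged. For move~\ref{eqmove:3} (the Yang–Baxter/hexagon move), the three straight paths entering the hexagon meet pairwise in crossings on both sides of the move, and a case check at each of the two configurations shows that the three pairs of entry/exit endpoints coincide. Together these verifications give that $\tau(\Gamma)$ only depends on the equivalence class of $\Gamma$.

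The heart of the argument is the identification $\Phi=\Psi$. I would do this by tracing through what defines $\Psi$: a reduced graph $\Gamma$ yields the orthitroid cell $\Omega_\Gamma$ (Proposition~\ref{prop:orth to graphs bij}), and that cell is labeled by a fixed-point-free involution of $[2k]$ via the list of its vanishing Plücker coordinates (Proposition~\ref{prop:orth to perm bij}). To match this with $\tau(\Gamma)$, I would use Proposition~\ref{prop:uniqueness of supp}: each arc $\{i,\tau(i)\}$ is an external arc of some reduced graph in the equivalence class of $\Gamma$, and therefore forces the existence on $\Omega_\Gamma$ of a vector supported on the corresponding cyclic interval. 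Combining such statements for all arcs and using that the straight-path pairing is preserved under the moves (so we can always replace $\Gamma$ by an equivalent reduced graph in which any chosen arc is external), one recovers exactly the collection of vanishing Plücker coordinates that defines the orthitroid cell; this is the combinatorial dictionary between the arcs of $\Gamma$ and the permutation labeling $\Omega_\Gamma$.

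The main obstacle is precisely this last identification $\Phi=\Psi$. The two Propositions~\ref{prop:orth to perm bij}, \ref{prop:orth to graphs bij} give existence of abstract bijections on each side, but neither by itself pins down that the permutation appearing in the cell labeling coincides with the permutation read off from the straight paths of $\Gamma$. Making this match is a book-keeping exercise tying arcs of $\Gamma$ (via Proposition~\ref{prop:uniqueness of supp}) to the zero-loci of maximal minors on $\Omega_\Gamma$, and then invoking Theorem~\ref{thm:comp pluckers} to check that the induced involution has the expected cyclic-non-crossing form. Once this is in place, the proposition follows: the move-invariance yields one direction of the iff, and $\Phi=\Psi$ together with the bijectivity of $\Psi$ yields both the converse direction and the global bijection asserted in the second sentence of the statement.
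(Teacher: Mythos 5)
Your proposal hinges on the claim that the strand pairing $\tau(\Gamma)$ is invariant under all three equivalence moves, and this is exactly where it breaks: move~\ref{eqmove:2} does \emph{not} preserve the permutation. In the left-hand configuration two strands cross twice, so the straight paths pair the endpoints ``in parallel'' (top-left with top-right, bottom-left with bottom-right), while after the move the strands cross once and the endpoints are paired ``crosswise.'' The paper records this explicitly in the unnumbered Observation of Appendix~\ref{app:OGgraphs}: moves~\ref{eqmove:1} and~\ref{eqmove:2} change the corresponding permutation, and only move~\ref{eqmove:3} preserves it. (This is also why the proposition is stated for \emph{reduced} graphs: in $\OGnon{2}{4}$ the one-crossing graph with permutation $(13)(24)$ is equivalent, via move~\ref{eqmove:2} run backwards, to a non-reduced two-crossing graph whose permutation is $(12)(34)$.) So your ``direct inspection'' for move~\ref{eqmove:2} reaches the wrong conclusion, and with it both the well-definedness of your map $\Phi$ on equivalence classes and the ``equivalent $\Rightarrow$ same permutation'' direction collapse as argued: a chain of moves between two reduced graphs may pass through non-reduced intermediate graphs whose permutations differ, so no move-by-move invariance argument can work.

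A correct route must go around this. For the forward direction one typically uses that equivalent graphs parameterize the same orthitroid cell (Theorem~\ref{param bij}, Proposition~\ref{prop:orth to graphs bij}) and that the permutation of a \emph{reduced} graph is recoverable from its cell (e.g.\ from the vanishing pattern of Pl\"ucker coordinates, in the spirit of your use of Proposition~\ref{prop:uniqueness of supp}); for the converse one needs the genuinely combinatorial fact that any two reduced graphs with the same pairing are connected by move~\ref{eqmove:3} alone (a triangle-flip connectivity statement for reduced strand diagrams), which your appeal to the abstract bijections of Propositions~\ref{prop:orth to perm bij} and~\ref{prop:orth to graphs bij} does not supply and risks being circular, since the proposition at hand is precisely the compatibility statement between those two labelings. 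Note also that the paper itself does not prove this proposition; it cites \cite{PosOG}, Section~2.2.
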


\begin{dfn}[\cite{PosOG}, Section~2.4.1]  
\label{def:param}
To each perfectly oriented OG graph \(\Gamma^\omega\) we can assign a parameterization of the orthitroid cell \(\Omega_\Gamma\).

    The parameterization is defined in the following way: 

    We first assign an angle to each internal vertex. If the vertex is has a hyperbolic orientation, its angle should be positive. If the vertex has a trigonometric orientation, the angle should be between 0 and \(\frac{\pi}{2}\). These angles will serve as coordinates for the parameterization.

    Let us define a \emph{decision} at an internal vertex \(v\), as an ordered pair of edges \((a, b)\) of the vertex \(v\), such that \(a\) is in-going into \(v\) and \(b\) is out-going from \(v\), with respect to the given orientation.
    
    For each decision at a vertex we assign a weight in the following way: On a vertex with an angle \(\alpha\) and hyperbolic orientation, a \emph{turn}, that is, a decision where \(a\) and \(b\) are adjacent, is assigned the weight \(\cosh \alpha\). A \emph{straight pass}, that is, a decision where \(a\) and \(b\) are non-adjacent, is assigned the weight \(\sinh \alpha\). For a vertex with a trigonometric orientation we define the weights differently. A \emph{right turn}, that is, a decision where \(b\) is just to the right of \(a\), is assigned weight \(\sin \alpha\). A \emph{left turn}, that is, a decision where \(b\) is just to the left of \(a\), is assigned the weight \(\cos \alpha\).

    Define a \emph{path} from an external edge \(i\) to an external edge \(j\), to be a series of decisions \(\{d_i\}_{i=1}^m \) such that the first edge of \(d_1\) is the only edge on the vertex \(i\), the last edge of \(d_m\) is the only edge on the vertex \(j\), and for every \(\ell<m\) the last edge of \(d_\ell\) is the first edge of \(d_{\ell+1}\).

    For a path on the graph going from the external vertex \(i\) to the external vertex \(j\), with that goes around loops in the graph a total of \(w\) times, assign a weight that is the product of the weight of its decisions times the sign \((-1)^{w}\). 
     For external vertices \(i\) and \(j\), define the \emph{boundary measurement} \(M_{i,\,j}\) as the sum of the weights of possible paths going from \(i\) to \(j\).

     As noted in \cite{PosOG}, notice if there is an oriented cycle from \(i\) to \(j\) then the sum will include an infinitely many alternating geometric terms that will sum to 
     \[
     M_{i,\,j}= \sum_{w=0}^{\infty}(-1)^w f_0f_1^w = \frac{f_0}{1+f_1},
     \]
     for \(f_0\) and \(f_1\) some products of hyperbolic and trigonometric functions.

    Finally, for an OG graph $\Gamma,$ an orientation and choices of angles, define the associated \(k \times 2k \) matrix $C$ by \(C_{i,\,j} = (-1)^{\ell}M_{s_i,\,j}\), with \(\{s_i\}_{i=1}^k\subset[2k]\) being the sources, \(j \in [2k]\) goes over the external vertices, and \(\ell\) is the number of sources between \(s_i\) and \(j\).

    If \(\{v_i\}_{i=1}^n\) is some enumeration of the internal vertices of \(\Gamma\), write the resulting parametrization as \(C = \Gamma^\omega(\alpha) =\Gamma^\omega(\alpha_1,...,\alpha_n) \), where \(\alpha_i\) is the angle assigned to the \(v_i\) vertex.
\end{dfn}
\begin{rmk}
    We can allow the angles to take any real value, but the resulting matrix will not necessarily be positive. It will however still be a well defined element of \(\OGnon{k}{2k}\).
\end{rmk}
\begin{coro}
\label{boundary measurements bounded coro}
    Boundary measurements are always positive, and for trigonometric orientations they are bound by \(1\).
\end{coro}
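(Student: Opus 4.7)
The corollary splits into two assertions: positivity of $M_{i,j}$ for any perfect orientation, and the bound $M_{i,j}\le 1$ in the trigonometric case. I will treat each in turn.

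For positivity, my first observation is that the weight of every individual decision is positive: for hyperbolic orientations $\sinh\alpha,\cosh\alpha>0$ when $\alpha>0$, and for trigonometric orientations $\sin\alpha,\cos\alpha>0$ when $\alpha\in(0,\tfrac{\pi}{2})$. A path without loops therefore carries a positive weight. Paths with loops are grouped by the number $w$ of times they traverse a given oriented cycle; summing over this multiplicity with the sign $(-1)^w$ yields a geometric series with value $\tfrac{f_0}{1+f_1}$, where $f_0,f_1>0$ are products of decision weights, as hinted in Definition~\ref{def:param}. In the trigonometric case $f_1<1$ automatically, since it is a product of numbers in $(0,1)$, so the series converges to a positive value. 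In the hyperbolic case, convergence is guaranteed by the structural observation that a reduced OG graph equipped with a hyperbolic orientation admits no oriented cycles — an easy consequence of Proposition~\ref{prop:reduceability} together with the non-alternating in/out pattern at hyperbolic vertices — so no loop corrections appear at all, and each $M_{i,j}$ is simply a finite sum of positive products.

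For the bound $M_{i,j}\le 1$ in the trigonometric case, I would reinterpret the parametrization as a composition of local orthogonal transformations. At each internal vertex with angle $\alpha$ and trigonometric orientation, fix a cyclic labeling of the four half-edges so that the two incoming half-edges and the two outgoing half-edges alternate. Incorporating the sign conventions built into Definition~\ref{def:param}, the transition from the vector of incoming amplitudes to the vector of outgoing amplitudes is given, up to a global sign, by the rotation
\[
R(\alpha)=\begin{pmatrix}\sin\alpha & \cos\alpha\\ -\cos\alpha & \sin\alpha\end{pmatrix}\in\mathrm{SO}(2).
\]
Globally, the whole graph is a network of these local scattering matrices, and the boundary-to-boundary map sending the unit amplitude at the source $s_i$ to the vector $(M_{s_i,j})_{j}$ is therefore a composition of block-orthogonal operators on the space indexed by half-edges; it consequently has operator norm $1$. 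In particular, the $\ell^2$-norm of the row $(M_{s_i,j})_j$ is at most $1$, so each entry satisfies $M_{s_i,j}\le 1$, and combined with the positivity from the first part this gives $M_{s_i,j}\in[0,1]$.

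The main obstacle I anticipate is the bookkeeping in the matrix-norm argument: one must check that the cyclic and sign conventions embedded in Definition~\ref{def:param} — together with the $(-1)^\ell$ factor entering the matrix representative $C$ — combine so that the local scattering matrices really assemble into a single orthogonal operator globally. If the direct composition proves delicate, a clean alternative is an induction on the number of internal vertices, using the observation (developed in Section~\ref{sec:prom}) that any OG graph can be built from the empty graph by successive $\mathrm{Inc}$ and $\mathrm{Rot}$ moves: the base case is immediate, and each such move is easily seen to preserve both positivity of boundary measurements and, for trigonometric orientations, the bound by $1$.
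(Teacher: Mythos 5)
Your positivity argument contains a false intermediate claim, and your boundedness argument rests on an orthogonality statement that does not follow from Definition~\ref{def:param} as written; you flag the latter as a potential "obstacle," but it is in fact the heart of the matter.

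First, the assertion that a reduced OG graph with a hyperbolic orientation admits no oriented cycles is wrong. Take the top cell of $\OG{3}{6}$ (Figure~\ref{fig:top cell}): three arcs pairwise crossing once form a triangle with three internal vertices. By Proposition~\ref{prop:hyperbolic choice per arc} any orientation of the three arcs yields a hyperbolic orientation, and choosing each arc to point "around" the triangle (say $a:v_{ca}\!\to\! v_{ab}$, $b:v_{ab}\!\to\! v_{bc}$, $c:v_{bc}\!\to\! v_{ca}$) produces a valid in-in-out-out pattern at each vertex in which the triangle is traversed entirely by turns, hence is a directed cycle. Since BCFW graphs are trees of triangles, oriented cycles are unavoidable for hyperbolic orientations of the very cells the paper cares about. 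Your positivity conclusion survives because the formal sum $\frac{f_0}{1+f_1}$ remains positive regardless of convergence, but the structural claim you use to avoid discussing it is false, and for graphs with several independent or nested loops you would still have to argue that the fully resolved boundary measurement stays positive — which is not automatic from a single geometric series.

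Second, and more seriously for the trigonometric bound: the local vertex matrix supplied by Definition~\ref{def:param} has entries $\cos\alpha$ (left turn) and $\sin\alpha$ (right turn) in \emph{all four} in-to-out slots, i.e.\ it is
$\bigl(\begin{smallmatrix}\cos\alpha & \sin\alpha \\ \sin\alpha & \cos\alpha\end{smallmatrix}\bigr)$,
which is not orthogonal (its rows are not orthogonal and its determinant is $\cos 2\alpha$). The $-\cos\alpha$ in your $R(\alpha)$ is not present in the definition; the only signs in Definition~\ref{def:param} are the global factors $(-1)^w$ for windings and $(-1)^\ell$ in the assembly of $C$. To make the scattering-network argument run, you would have to prove a genuine lemma: that the winding sign $(-1)^w$ can be redistributed as local transition signs at vertices (via a rotation-number argument for planar paths), in such a way that (a) the local matrices become honest $\mathrm{SO}(2)$ rotations and (b) the resulting signed boundary measurements agree with $M_{i,j}$ up to an $i,j$-dependent global sign. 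Without that lemma the orthogonality of the boundary-to-boundary map — and hence the bound $M_{i,j}\le 1$ — is not established. You correctly identify this as the sticking point, but a proof has to actually discharge it rather than defer it; as written, the second half of the corollary is not proven. The fallback induction on $\mathrm{Inc}$/$\mathrm{Rot}$ moves is only sketched, and it is not clear that the bound is preserved under $\mathrm{Rot}$ without exactly the same sign bookkeeping.
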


    \begin{thm}[\cite{PosOG}, Section~2.4.1]
    \label{param bij}
         \(C\in \OGnon{k}{2k}\), is independent of the choice of orientation and of the representative of the equivalence class of OG graphs, and is in the orthitroid cell of the given permutation. That is, each orientation for each graph in the same equivalence class would give a parameterization of the same orthitroid cell, the cell that is labeled by the permutation corresponding to the reduced graph, with the parameters being the choice of angles for each internal vertex. For a given reduced graph and perfect orientation, we get a diffeomorphism from the interior of the orthitroid cell labeled by its permutation to 
         \(\left(0,\frac{\pi}{2}\right) ^{n_1} \times \left(0, \infty \right)^{n_2}  \subset \mathbb  R ^{n_1 +n_2},\)
         where \(n_1\) is the number of vertices with a trigonometric orientation, and \(n_2\) is the number of vertices with a hyperbolic orientation.
    \end{thm}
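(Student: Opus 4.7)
My plan is to establish the three claims of the theorem in turn: that $C=\Gamma^\omega(\alpha)$ is a well-defined element of $\OGnon{k}{2k}$ lying in the orthitroid cell labeled by the permutation of $\Gamma$; that this element is invariant under change of perfect orientation and under the equivalence moves; and finally that on a reduced graph the map $\Gamma^\omega$ is a diffeomorphism onto the interior of $\Omega_\Gamma$.

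First I would check that $C \in \OG{k}{2k}$ by verifying $C\eta C^\intercal=0$. The key local fact is the identity $\cosh^2\alpha-\sinh^2\alpha=1$ at a hyperbolic vertex and $\cos^2\alpha+\sin^2\alpha=1$ at a trigonometric vertex; these are precisely what is needed so that the path contributions from two distinct sources to a common boundary index, weighted by the alternating $(-1)^\ell$ coming from the $\eta$ sign, cancel in pairs as the paths traverse a shared vertex. A path-algebra computation that tracks each pair of paths through the first or last common vertex then telescopes to give $C\eta C^\intercal=0$. For non-negativity of Pl\"ucker coordinates, each boundary measurement is a positive sum of products of $\cosh,\sinh,\cos,\sin$ applied to positive angles, and whenever directed cycles are present the contribution sums geometrically to $f_0/(1+f_1)$ with $f_0,f_1>0$; a Lindström--Gessel--Viennot style argument on non-crossing families of paths then shows each minor $\Delta_I(C)$ is either identically zero or a manifestly positive expression. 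The vanishing pattern of Pl\"ucker coordinates is dictated exactly by which source/sink matchings admit non-crossing path systems, which in turn depends only on the permutation $\tau$ of $\Gamma$; invoking Proposition~\ref{prop:orth to perm bij} identifies the correct orthitroid cell.

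Next I would verify invariance. Independence of the choice of perfect orientation reduces to two elementary orientation flips at a single vertex or along a path of sources-to-sinks, each of which I would handle by a direct computation using the angle-addition formulas. Invariance under the three equivalence moves is local: move~(\ref{eqmove:1}) collapses a bigon, move~(\ref{eqmove:2}) merges two adjacent vertices into a crossing, and move~(\ref{eqmove:3}) is a Yang--Baxter-type three-vertex relation. Moves (1) and (2) reduce the count of internal vertices and are verified by short trigonometric/hyperbolic computations, with the angles on the reduced side expressed in terms of those of the original graph. Move (3) is the most delicate, and will require a genuine six-term identity among $\sinh$ and $\cosh$ (or $\sin, \cos$); once established it immediately shows that any two reduced representatives of the same equivalence class produce the same matrix $C \in \OGnon{k}{2k}$.

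Finally, for the diffeomorphism statement on a reduced $\Gamma$, I would compare dimensions and prove injectivity. By the decomposition theory of $\OGnon{k}{2k}$, the cell $\Omega_\Gamma$ has dimension equal to the number of internal vertices of any reduced representative, matching $n_1+n_2$. Smoothness of $\Gamma^\omega$ is clear from the explicit formulas. For injectivity I would argue inductively: pick an internal vertex $v$ adjacent to the boundary (which exists in any reduced graph with at least one internal vertex), read off the angle at $v$ as an explicit ratio of two boundary measurements, and then apply a local move that removes $v$; this reduces the problem to a graph with one fewer internal vertex. The inverse map is then rational in the Plücker coordinates of $C$, hence smooth. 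The main obstacle I anticipate is the global Yang--Baxter verification for move~(\ref{eqmove:3}) together with the careful bookkeeping needed in the path-sum argument for $C\eta C^\intercal=0$ when the graph contains oriented cycles, since one must show the geometric series cancellations work out globally and not just term-by-term.
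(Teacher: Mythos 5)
This theorem is stated with a citation to \cite{PosOG}, Section~2.4.1, and the paper gives no proof of its own --- so there is no in-paper argument against which to compare your attempt. That said, your outline is broadly consistent with how the cited reference and the related plabic-graph literature (Postnikov) establish such parametrization statements: verify $C\eta C^\intercal=0$ via pairwise cancellations using $\cosh^2-\sinh^2=1$ and $\cos^2+\sin^2=1$, prove minor positivity via families of paths, reduce orientation-independence and equivalence-move-invariance to local identities (with move~\ref{eqmove:3} being the nontrivial Yang--Baxter-type relation), and show bijectivity onto the cell by reading off angles inductively.

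Two places in your sketch deserve more care before this could be called a proof. First, Lindstr\"om--Gessel--Viennot applies directly only to acyclic networks; when the perfect orientation has directed cycles, the boundary measurements are rational functions of the form $f_0/(1+f_1)$ as in Definition~\ref{def:param}, and positivity of the minors $\Delta_I(C)$ is not an immediate LGV consequence. The standard treatments avoid this either by a winding/subtraction-free-expression argument or by building the cell inductively via the $\mathrm{Inc}$ and $\mathrm{Rot}$ moves and propagating positivity, which is also the approach suggested by the recursive structure in Appendix~\ref{app:local moves}. Second, your induction for injectivity relies on the existence of an internal vertex adjacent to the boundary in any reduced graph with at least one internal vertex; this is true (it follows from planarity and the non-crossing structure of the arcs) but should be stated and justified rather than taken for granted, since it is the hinge of the inductive step. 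Finally, the Yang--Baxter verification for move~\ref{eqmove:3} is, as you yourself flag, the genuine computational core; merely asserting that ``a six-term identity'' exists leaves the hardest step undone.
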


    Notice the dimension of the cell equals the number of internal vertices of a reduce graph. In particular, a reduced OG graph corresponds to a zero dimensional cell iff it has no internal vertices.

\begin{prop}[\cite{companion}]
\label{prop:angles repara}
    Different orientations of the graph induce different parameterizations of the cell. In parameterizations resulting from hyperbolic orientations, the parameters for each vertex undergo the following change of variables:

    \begin{center}
    \scalebox{0.8}{
            \begin{tikzpicture}

            \draw (0,2)--(2,0);
            \draw (2,2)--(0,0);

            \draw [very thick, -latex, shorten <= 20] (0,2)--(0.6, 1.4);
            \draw [very thick, -latex, shorten <= 20] (1, 1)--(1.6,0.4);
            \draw [very thick, -latex, shorten <= 20] (2,2)--(1.4, 1.4);
            \draw [very thick, -latex, shorten <= 20] (1, 1)--(0.4,0.4);

            \draw (1,0.5)node[anchor = south]{\(\alpha\)};

            \draw (4,2)--(6,0);
            \draw (4,0)--(6,2);

            \draw [very thick, -latex, shorten <= 20] (4,2)--(4.6, 1.4);
            \draw [very thick, -latex, shorten <= 20] (5, 1)--(5.6,1.6);
            \draw [very thick, -latex, shorten <= 20] (5,1)--(5.6, 0.4);
            \draw [very thick, -latex, shorten <= 20] (4,0)--(4.6,0.6);

            \draw (5,0.4)node[anchor = south]{\(\beta\)};

             \draw (0+8,2)--(2+8,0);
            \draw (2+8,2)--(0+8,0);

            \draw [very thick, -latex, shorten <= 20] (0+8,2)--(0.6+8, 1.4);
            \draw [very thick, -latex, shorten <= 20] (2+8, 0)--(1.4+8,0.6);
            \draw [very thick, -latex, shorten <= 20] (1+8,1)--(1.6+8, 1.6);
            \draw [very thick, -latex, shorten <= 20] (1+8, 1)--(0.4+8,0.4);

            \draw (1+8,0.5)node[anchor = south]{\(\gamma\)};

        \end{tikzpicture}
        }
        \end{center}

    \begin{center}
    \scalebox{0.8}{
            \begin{tikzpicture}

            \draw (0,2)--(2,0);
            \draw (2,2)--(0,0);

            \draw [very thick, -latex, shorten <= 20] (1,1)--(0.4, 1.6);
            \draw [very thick, -latex, shorten <= 20] (2, 0)--(1.4,0.6);
            \draw [very thick, -latex, shorten <= 20] (1,1)--(1.6, 1.6);
            \draw [very thick, -latex, shorten <= 20] (0, 0)--(0.6,0.6);

            \draw (1,0.5)node[anchor = south]{\(\alpha\)};

            \draw (4,2)--(6,0);
            \draw (4,0)--(6,2);

            \draw [very thick, -latex, shorten <= 20] (5,1)--(4.4, 1.6);
            \draw [very thick, -latex, shorten <= 20] (6, 2)--(5.4,1.4);
            \draw [very thick, -latex, shorten <= 20] (6,0)--(5.4, 0.6);
            \draw [very thick, -latex, shorten <= 20] (5,1)--(4.4,0.4);

            \draw (5,0.4)node[anchor = south]{\(\beta\)};

            \draw (4+4,2)--(6+4,0);
            \draw (4+4,0)--(6+4,2);

            \draw [very thick, -latex, shorten <= 20] (1+4+4,1)--(0.4+4+4, 1.6);
            \draw [very thick, -latex, shorten <= 20] (1+4+4, 1)--(1.6+4+4,0.4);
            \draw [very thick, -latex, shorten <= 20] (2+4+4,2)--(1.4+4+4, 1.4);
            \draw [very thick, -latex, shorten <= 20] (0+4+4, 0)--(0.6+4+4,0.6);

            \draw (5+4,0.4)node[anchor = south]{\(\delta\)};
        
        \end{tikzpicture}
        }
        \end{center}
with \(\sinh{\alpha} = \frac{1}{\sinh{\beta}}=\frac{1}{\tan{\gamma}}=\tan{\delta}\).

\end{prop}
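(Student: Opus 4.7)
The plan is to reduce the claim to a purely local verification at a single \(4\)-valent vertex. By the boundary-measurement formula in Definition~\ref{def:param}, every entry \(C_{i,j}\) is a sum over oriented paths, each weighted by a product of local decision-weights over the vertices it traverses. Consequently, if only one internal vertex \(v\) has its orientation changed, the path weights differ only through their local contribution at \(v\); paths that avoid \(v\) contribute identically in both parameterizations. It therefore suffices to prove the relation in the simplest possible configuration: a single isolated \(4\)-valent vertex with four external edges, which realises the top cell of \(\OGnon{2}{4}\).

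For this base case, I would write down the explicit \(2\times 4\) matrix \(C\) produced by the recipe of Definition~\ref{def:param} for each of the perfect orientations at the vertex, using \(\cosh\alpha/\sinh\alpha\) for hyperbolic turns and straight passes, and \(\cos\alpha/\sin\alpha\) for left/right trigonometric turns. Since different orientations assign different source sets, matching the matrices coming from two distinct orientations amounts to identifying the \(2\)-dimensional subspaces of \(\mathbb{R}^4\) they span, i.e.\ comparing their Pl\"ucker coordinates. Carrying this out for the four configurations labelled \(\alpha,\beta,\gamma,\delta\) and equating the projective Pl\"ucker vectors yields a small algebraic system whose unique solution is precisely \(\sinh\alpha=1/\sinh\beta=1/\tan\gamma=\tan\delta\). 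Theorem~\ref{param bij} guarantees that every orientation's parameterization is a diffeomorphism onto the same top cell, so this relation must pin down the change of variables uniquely. To globalise, one observes that the local identity applies termwise inside each path weight in the boundary-measurement sum, so the two global parameterizations agree after substituting the angle at \(v\) according to the stated formulas; the general case is then obtained by iterating over the vertices whose orientations differ.

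The main obstacle will be the sign bookkeeping. The formula \(C_{i,j}=(-1)^{\ell}M_{s_i,j}\) depends on the source set through the counter \(\ell\), and different orientations change both the source set and the signed cycle contributions \((-1)^w\). One must verify that the \(GL_2\) rescaling needed to reconcile different source sets absorbs exactly these sign shifts, so that the substitution rule relating the angles is clean. Some additional care is needed concerning \emph{which} pairs of orientations admit a direct local comparison: only orientations differing at a single vertex in a way compatible with the ambient perfect-orientation constraints are directly comparable in isolation, and the general statement is recovered by transitivity through a chain of such elementary local changes. Once the signs are tracked consistently in the base \(\OGnon{2}{4}\) computation, the remainder of the argument is formal.
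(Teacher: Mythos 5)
Your base-case computation is the right first move: in $\OGnon{2}{4}$ a single $4$-valent vertex has all four edges external, so any of the six perfect orientations at that vertex is a genuine perfect orientation of the graph, and matching projective Pl\"ucker vectors of the resulting $2\times 4$ matrices does yield precisely $\sinh\alpha = 1/\sinh\beta = 1/\tan\gamma = \tan\delta$. That part would go through (modulo the sign conventions you flag, which are indeed delicate but not a conceptual obstacle).

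The gap is in the globalisation, and it is a real one. You cannot, in general, ``change the orientation of only one internal vertex.'' The four edges at an internal vertex $v$ are shared with other vertices of the graph (unless the graph is $\OGnon{2}{4}$), so re-orienting any subset of them to get a new perfect orientation at $v$ simultaneously changes the perfect-orientation data at every vertex along the affected straight paths -- by Proposition~\ref{prop:hyperbolic choice per arc}, a hyperbolic re-orientation is precisely a choice of direction for entire arcs, and flipping one arc flips edge orientations at \emph{all} the vertices it traverses. So the premise ``paths that avoid $v$ contribute identically'' fails, and more seriously, the very \emph{set} of admissible paths (and the source set, and hence the gauge-fixing columns of $C$) changes globally. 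Your ``termwise substitution inside each path weight'' therefore doesn't even type-check: there is no bijection between paths under $\omega$ and paths under $\omega'$ along which a termwise comparison could run. The chain-of-elementary-changes idea at the end suffers the same problem, since there are no ``elementary'' changes supported at one vertex.

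What is actually being asserted -- and what needs its own argument -- is the \emph{locality} statement made explicit in Remark~\ref{rmk:oriented vert bij}: for fixed $C\in\Omega_\Gamma$, the angle $\varphi_{\omega'}(v)$ depends only on $\varphi_\omega(v)$ and on the local orientation change of the four edges at $v$, independently of the rest of $\omega,\omega'$. Your proposal implicitly assumes this (by reducing to a one-vertex graph) but never proves it. One way to close the gap is to show that the angle at $v$ is intrinsic to the pair $(C,v)$ up to the vertex's local orientation, e.g.\ by isolating $v$ with a transfer-matrix factorisation of the boundary measurements along the two arcs through $v$, or by expressing $\cosh\alpha$ (resp.\ $\tan\gamma$) as a ratio of quantities attached to $C$ that visibly depend only on the local picture at $v$. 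Until that is done, the reduction to $\OGnon{2}{4}$ is unjustified, and the argument does not prove the proposition for general graphs.
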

\begin{dfn}
\label{def:oriented vertex}
    Let \(\Gamma\) be an OG graph, \(V\) its internal vertices. Let the edges incident to some \(v\in V\) be \(E_v =e_1,e_2,e_3,e_4\). Consider  \(\omega\), some partial orientation that is defined on \(E_v\). Let us denote  \((v,\evalat*{\omega}{E_v}) \) as \(v^\omega\) and refer to it as an \emph{oriented vertex}, and \(\evalat*{\omega}{E_v}\) will be referred to as \emph{the orientation} of the vertex \(v\) under \(\omega\), or the orientation of \(v^\omega\). Unless stated otherwise, all orientations are assumed to be perfect, that is, two out-going and two in-going edges for each internal vertex. Denote the set of all possible oriented vertices of \(\Gamma\) as \(\widetilde{\mathcal V}(\Gamma)\). If the orientation is clear from context we might omit it.
\end{dfn}
\begin{rmk}
\label{rmk:oriented vert bij}
    Given an oriented OG graph \(\Gamma^\omega\) with internal vertices \(V\), a choice of \(C\in \Omega_\Gamma\) is equivalent to a choice of an angle for each \(v\in V\) by Definition~\ref{def:param} and Theorem~\ref{param bij}, that is, a map \(\varphi_\omega:V\rightarrow \mathbb R\) satisfying certain inequalities. Hold \(C\) constant, and change the orientation of the graph to \(\omega^\prime \). This again defines a parameterization of \(\Omega\), and thus \(C\) is now equivalent to a different choice of angles \(\varphi_{\omega^\prime}:V\rightarrow \mathbb R\). By Proposition~\ref{prop:angles repara}, for each \(v\in V\), the value of \(\varphi_{\omega^\prime}(v)\) is a function of \(\varphi_\omega(v)\) that depends solely on the change of orientation of the edges incident to the vertex \(v\). If the orientation of those edges does not change between \(\omega\) and \(\omega^\prime\), we have that \(\varphi_{\omega^\prime}(v)=\varphi_\omega(v)\).

    Elements \(C\in \Omega_\Gamma\) are thus in bijection with maps \(\varphi:\widetilde{\mathcal V}(\Gamma)\rightarrow \mathbb R\), that satisfy the following conditions:
    \begin{enumerate}
        \item If \(v^\omega\in\widetilde V\) has an hyperbolic orientation then \(\varphi(v^\omega) \in (0,\infty)\).
        \item If \(v^\omega\in\widetilde V\) has a trigonometric orientation then \(\varphi(v^\omega) \in (0,\frac{\pi}{2})\).
        \item For any \(v\in V\) and \(v^\omega,v^{\omega^\prime}\in\widetilde{\mathcal V}(\Gamma)\), \(\varphi(v^\omega)\) and \(\varphi(v^{\omega^\prime})\) relate as described in Proposition~\ref{prop:angles repara}.
    \end{enumerate} 
\end{rmk}
\begin{dfn}
\label{def:C as a func on vert}
    Let \(C\in\Omega_\Gamma\) and for \(\Gamma\) an OG graph. By Remark~\ref{rmk:oriented vert bij}, \(C\) defines a map \(\varphi:\widetilde{\mathcal V}(\Gamma)\rightarrow \mathbb R\). For \(v^\omega \in \widetilde{\mathcal V}(\Gamma) \) write \(C(v^\omega,\Gamma) = \varphi(v^\omega) \).
\end{dfn}
\subsection{Local Moves}
\label{app:local moves}
We can inductively construct all OG graphs from 
 using the \(\mathrm{Inc}_i\) and \(\mathrm{Rot}_{i,\,i+1}\) moves:

\subsubsection{The \texorpdfstring{\(\mathrm{Inc}\)}{Inc} Move}
\label{inc subsubsection}
\begin{dfn}

The \emph{\(\mathrm{Inc}_i\) move} on graphs adds two new vertices between \(i-1\) and \(i\) and connects them with an edge, with indices considered mod \(2k\).
\begin{figure}[H]
    \centering
\begin{center}
\begin{tikzpicture}[scale=0.6]
\draw (0,0) circle (2cm);
\filldraw[lightgray] (0,1/2) circle (1cm);
\node[scale=3] (c) at (0,1/2)  {\(\Gamma\)};
\draw (0,1/2) circle (1cm);
\draw (-1.96962/2, 1.3473/2) --(-3.64697/2, 1.64306/2);
\draw (1.96962/2, 1.3473/2) --(3.64697/2, 1.64306/2)node[anchor=south west]{\(i\)};
\filldraw[black] (-0,3.5/2) circle (2pt);

\filldraw[black] (-1.27565/2, 3.20949/2) circle (2pt);
\filldraw[black] (1.27565/2, 3.20949/2) circle (2pt);

\draw (-0.382683, -0.42388) --(-1.68446/2, -3.62803/2);
\draw (-1.68446/2-0.3, -3.62803/2)node[anchor=north]{\(i-1\)};

\draw[very thick,-latex] (2.3,0)   --(3.7,0);

\draw (0+6,0) circle (2cm);
\filldraw[lightgray] (0+6,1/2) circle (1cm);
\node[scale=3] (c) at (0+6,1/2)  {\(\Gamma\)};
\draw (0+6,1/2) circle (1cm);
\draw (-1.96962/2+6, 1.3473/2) --(-3.64697/2+6, 1.64306/2);
\draw (1.96962/2+6, 1.3473/2) --(3.64697/2+6, 1.64306/2)node[anchor=south west]{\(i+2\)};
\filldraw[black] (-0+6,3.5/2) circle (2pt);

\filldraw[black] (-1.27565/2+6, 3.20949/2) circle (2pt);
\filldraw[black] (1.27565/2+6, 3.20949/2) circle (2pt);

\draw (-0.382683+6, -0.42388) --(-1.68446/2+6, -3.62803/2);
\draw (-1.68446/2-0.3+6, -3.62803/2)node[anchor=north]{\(i-1\)};
\draw (1.68446/2+0.3+6, -3.62803/2)node[anchor=north]{\(i\)};
\draw (3.27261/2+6, -2.3/2)node[anchor=north west]{\(i+1\)};

\draw (1.68446/2+6, -3.62803/2) arc (180+28.7936 :180+28.7936-157.5:0.53254);

\end{tikzpicture}
\end{center}
  \caption{the Inc move}
    \label{fig:inc move}
\end{figure}

Let \(\Gamma\) be the original graph and \(\Gamma^\prime\) be the one after the move, and \(\tau\) and \(\tau^\prime\) be the corresponding permutations. If \(\ell\in[2k]\) write \(\ell^\prime = \mathrm{Inc}_i (\ell) \) for the corresponding index in \(\Gamma^\prime\). That is, if \(\ell<i\) then \(\ell^\prime = \ell\), and if \(\ell\geq i\) then \(\ell^\prime = \ell+2\). Define the action on sets of indices similarly.

For permutations, define \( \mathrm{Inc}_i(\tau)(\ell^\prime) = \mathrm{Inc}_i(\tau (\ell))\), thus \( \mathrm{Inc}_i(\tau) = \tau^\prime\).

If \(\tau_\ell = \{\ell,r\}\) we have that \(\mathrm{Inc}_i (\tau_\ell) = \tau^\prime_{\ell^\prime} = \{\ell^\prime, r^\prime\} =  \{\mathrm{Inc}_i (\ell),\mathrm{Inc}_i (r)\}\) is the corresponding arc in \(\Gamma^\prime\). 

If we have \(\omega\) an  orientation for \(\Gamma\), we will say that the \emph{inherited orientation} for  \(\Gamma^\prime\) is such that the orientation for the edges contained in the original graph are preserved, and for the new arc \(\{i,i+1\}\) we choose \(i\) to \(i+1\) (mod \([2k]\)). We write \(\mathrm{Inc}_i(\omega)\) for the inherited orientation and \(\mathrm{Inc}_i(\Gamma^\omega)=\mathrm{Inc}_i(\Gamma)^{\mathrm{Inc}_i(\omega)}\). Notice that if \(\omega\) is hyperbolic then the inherited orientation is also hyperbolic by Proposition~\ref{prop:hyperbolic choice per arc}. 

For an internal vertex \(v\) in \(\Gamma\), let \(\mathrm{Inc}_i (v)\) be the corresponding vertex in \(\mathrm{Inc}_i (\Gamma)\). For an oriented vertex \(v^\omega\), define  \(\mathrm{Inc}_i(v^\omega) = \mathrm{Inc}_i(v)^{\mathrm{Inc}_i(\omega)}\).

The effect on matrices is as follows (where \(J\) denotes the set of source vertices):

For \(i\neq 2k\),
\[
\left(
\begin{array}{c|c}
C_{\{1,\,...,\,i-1\}\cap J}^{\{1,\,...,\,i-1\}} & C_{{\{1,\,...,\,i-1\}}\cap J}^{\{i,\,...,\,2k\}}\\ \hline
C_{\{i\,...,\,2k\}\cap J}^{\{1,\,...,\,i-1\}} & C_{\{i,\,...,\,2k\}\cap J}^{\{i\,...,\,2k\}}\\
\end{array}
\right)
\mapsto
\left(
\begin{array}{c|cc|c}
C_{\{1,\,...,\,i-1\}\cap J}^{\{1,\,...,\,i-1\}} &&& -C_{{\{1,\,...,\,i-1\}}\cap J}^{\{i,\,...,\,2k\}}\\ \hline
 & 1 & 1 & \\ \hline
-C_{\{i\,...,\,2k\}\cap J}^{\{1,\,...,\,i-1\}} &&& C_{\{i,\,...,\,2k\}\cap J}^{\{i\,...,\,2k\}}\\
\end{array}
\right)
,\]
and for \(i= 2k\),
\[
\left(
\begin{array}{c|c}
C_{\{1,\,...,\,i\}\cap J}^{\{1,\,...,\,i\}} & C_{{\{1,\,...,\,i\}}\cap J}^{\{i+1,\,...,\,2k\}}\\ \hline
C_{\{i+1\,...,\,2k\}\cap J}^{\{1,\,...,\,i\}} & C_{\{i+1,\,...,\,2k\}\cap J}^{\{i+1\,...,\,2k\}}\\
\end{array}
\right)
\mapsto
\left(
\begin{array}{c|c|c}
1 &&  (-1)^k \\ \hline
   & C &\\ 
\end{array}
\right).
\]

\end{dfn}

\subsubsection{The \texorpdfstring{\(\mathrm{Rot}\)}{Rot} Move}
\begin{dfn}
\label{def:rot}
The \emph{\(\mathrm{Rot}_{i,\,i+1}\) move} braids the edges going to \(i\) and \(i+1\) (considered mod \(2k\)) and adds an additional vertex adjacent to the boundary vertices.
\begin{figure}[H]
    \centering
\begin{center}
\begin{tikzpicture}[scale=0.6]
\draw (0,0) circle (2cm);
\filldraw[lightgray] (0,1/2) circle (1cm);
\node[scale=3] (c) at (0,1/2)  {\(\Gamma\)};
\draw (0,1/2) circle (1cm);
\draw (1.96962/2, 1.3473/2) --(3.64697/2, 1.64306/2);
\draw (-1.96962/2, 1.3473/2) --(-3.64697/2, 1.64306/2);
\filldraw[black] (0,3.5/2) circle (2pt);

\filldraw[black] (1.27565/2, 3.20949/2) circle (2pt);
\filldraw[black] (-1.27565/2, 3.20949/2) circle (2pt);

\draw (0.68404/2, -0.879385/2)--(1.68446/2, -3.62803/2);
\draw (1.68446/2+0.3, -3.62803/2)node[anchor=north]{\(i+1\)};

\draw  (-0.68404/2, -0.879385/2)  --(-1.68446/2, -3.62803/2);
\draw (-1.68446/2-0.3, -3.62803/2)node[anchor=north]{\(i\)};

\draw[very thick, -latex, shorten <= 0] (2.3,0)   --(3.7,0);

\draw (0+6,0) circle (2cm);
\filldraw[lightgray] (0+6,1/2) circle (1cm);
\node[scale=3] (c) at (0+6,1/2)  {\(\Gamma\)};
\draw (0+6,1/2) circle (1cm);
\draw (1.96962/2+6, 1.3473/2) --(3.64697/2+6, 1.64306/2);
\draw (-1.96962/2+6, 1.3473/2) --(-3.64697/2+6, 1.64306/2);
\filldraw[black] (0+6,3.5/2) circle (2pt);

\filldraw[black] (1.27565/2+6, 3.20949/2) circle (2pt);
\filldraw[black] (-1.27565/2+6, 3.20949/2) circle (2pt);

\draw  (-0.68404/2+6, -0.879385/2)--(1.68446/2+6, -3.62803/2);
\draw (1.68446/2+0.3+6, -3.62803/2)node[anchor=north]{\(i+1\)};

\draw (0.68404/2+6, -0.879385/2)  --(-1.68446/2+6, -3.62803/2);
\draw (-1.68446/2-0.3+6, -3.62803/2)node[anchor=north]{\(i\)};
\end{tikzpicture}
\end{center}
  \caption{the Rot move}
    \label{fig:rot move}
\end{figure}
Let \(\Gamma\) be the original graph, and \(\Gamma^\prime\) be the one after the move, and \(\tau\) and \(\tau^\prime\) be the corresponding permutations, and consider the indices mod \(2k\). If \(\ell\in[2k]\) write \(\ell^\prime = \mathrm{Rot}_{i,i+1} (\ell) \) for the corresponding index in \(\Gamma^\prime\). That is, if \(\ell = i\) then \(\ell = i+1\),  if \(\ell=i+1\) then \(\ell= i\), and if \(\ell\neq i,i+1\) then \(\ell^\prime =\ell\). 

For permutations, define \(\mathrm{Rot}_{i,i+1}(\tau)(\ell^\prime) = \mathrm{Rot}_{i,i+1}(\tau (\ell))\), thus \(\mathrm{Rot}_{i,i+1}(\tau) = \tau^\prime\).

For arcs  \(\tau_\ell = \{\ell,r\}\) define \(\mathrm{Rot}_{i,i+1}  (\tau_\ell) = \tau^\prime_{\ell^\prime} = \{\ell^\prime, r^\prime\} =  \{\mathrm{Rot}_{i,i+1}  (\ell),\mathrm{Rot}_{i,i+1}  (r)\}\) is the corresponding arc in \(\Gamma^\prime\). 

For general sets of indices define \(\mathrm{Rot}_{i,i+1} (I) = I\cup\{i,i+1\}\) for \(I\cap \{i,i+1\} \neq\emptyset\), and \(\mathrm{Rot}_{i,i+1} (I)  =I\) otherwise. 

If we have \(\omega\) a perfect orientation for \(\Gamma\), we will say that the \emph{inherited orientation} for \(\Gamma^\prime\) the orientation for the edges contained in the original graph are preserved, and the orientation for the new vertex is hyperbolic. We write \(\mathrm{Rot}_{i,i+1}(\omega)\) for the inherited orientation and \(\mathrm{Rot}_{i,i+1}(\Gamma^\omega)=\mathrm{Rot}_{i,i+1}(\Gamma)^{\mathrm{Rot}_{i,i+1}(\omega)}\). Notice the inherited orientation to a hyperbolic orientation is also hyperbolic by Proposition~\ref{prop:hyperbolic choice per arc}.

For an internal vertex \(v\) in \(\Gamma\), let \(\mathrm{Rot}_{i,i+1} (v)\) be the corresponding vertex in \(\mathrm{Rot}_{i,i+1} (\Gamma)\). For an oriented vertex \(v^\omega\), define  \(\mathrm{Rot}_{i,i+1}(v^\omega) = \mathrm{Rot}_{i,i+1}(v)^{\mathrm{Rot}_{i,i+1}(\omega)}\).

If the graph \(\Gamma\) has a \(\{i,i+1\}\) as an arc, then\(\mathrm{Rot}_{i,\,i+1}\) just gives us a non-reduced graph equivalent to \(\Gamma\). If \(\{i,i+1\}\) is not an arc, we can find an orientation where \(i\) and \(i+1\) are both sinks by Proposition~\ref{prop:hyperbolic choice per arc}). 

It is easy to see that for \(\{i,\,i+1\} \neq \{1,\,2k\}\) (mod \(2k\))
\[
C^i \mapsto  C^i \cosh \alpha + C^{i+1} \sinh \alpha
\]
\[
C^{i+1} \mapsto  C^i \sinh \alpha + C^{i+1} \cosh \alpha,
\]
and for \(\{i,\,i+1\} = \{1,\,2k\}\) (mod \(2k\))
\[
C^1 \mapsto  C^1 \cosh \alpha -(-1)^k C^{2k} \sinh \alpha
\]
\[
C^{2k} \mapsto -(-1)^k C^1 \sinh \alpha + C^{2k} \cosh \alpha.
\]
where \(\alpha\) is the positive angle associated with the new vertex. Which means that:
\[
C \mapsto C\,  R_{i,\,i+1} (\alpha),\,\,\, \forall \{i,\,i+1\} \neq \{1,\,2k\}
\]
\[
C \mapsto C\,  R_{1,\,2k} (-(-1)^k \alpha),\,\,\, \{i,\,i+1\} = \{1,\,2k\}
\]
where \(R_{i,\,i+1} (\alpha)\) is the hyperbolic rotation matrix between the \(i\) and \(j\) basis elements in \(\mathbb{R} ^{2k}\) with angle \(\alpha\).
\[
\left(R_{i,\,j}(\alpha)\right)_{a,\,b} \defeq
\begin{cases}
    1 & j \neq a = b \neq i\\
    \cosh \alpha & a= b = i \lor a= b = j\\
    \sinh \alpha & a =i,\,b=j \lor a =j,\,b=i\\
    0 & \mathrm{otherwise}
\end{cases}
\]
Define \(\mathrm{Rot}_{i,i+1}(\alpha)(C) = C \, R_{i,i+1}(\alpha) \) for \(i<2k\) and  \(\mathrm{Rot}_{2k,1}(\alpha)(C) = C \, R_{1,2k}(-(-1)^k\alpha) \). 
\end{dfn}
\begin{dfn}
\label{vertex label}
   When we write \(\mathrm{Rot}_{i,i+1}(v^\omega)\) acting on an oriented graph, we mean to label the new oriented vertex as \(v^\omega\).
\end{dfn}

\begin{obs}
    If \(C\in\Omega_\Gamma\), then \(C^\prime\defeq \mathrm{Rot}_{i,i+1}(\alpha)(C) \in\Omega_{\mathrm{Rot}_{i,i+1}(v^\omega)(\Gamma)}\) for \(\alpha>0\) and \(C^\prime (v^\omega) = \alpha\).
\end{obs}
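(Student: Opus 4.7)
The plan is to verify the observation directly at the level of boundary measurements, showing that the matrix-level action of $\mathrm{Rot}_{i,i+1}(\alpha)$ (which is right-multiplication by the hyperbolic rotation $R_{i,i+1}(\alpha)$, as laid out in Definition~\ref{def:rot}) coincides with the parametrization of the new graph $\Gamma'=\mathrm{Rot}_{i,i+1}(v^\omega)(\Gamma)$ equipped with the inherited orientation and with the new internal vertex $v$ assigned angle $\alpha$. The observation is then a direct consequence of the parametrization bijection in Theorem~\ref{param bij}.

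I would begin by dispatching the degenerate case: if $\{i,i+1\}$ is already an arc of $\Gamma$, then $\Gamma'$ is non-reduced and equivalent to $\Gamma$, so the claim reduces to the reparametrization statement of Proposition~\ref{prop:angles repara} together with the fact that the underlying element of $\OGnon{k}{2k}$ is unchanged. Otherwise, by Proposition~\ref{prop:hyperbolic choice per arc} I can choose a hyperbolic orientation $\omega$ on $\Gamma$ in which both $i$ and $i+1$ are sinks. Write $C=\Gamma^\omega(\alpha_1,\ldots,\alpha_n)$ and let $\omega'=\mathrm{Rot}_{i,i+1}(\omega)$ be the inherited hyperbolic orientation on $\Gamma'$, so that $v^\omega$ is hyperbolically oriented with $i,i+1$ still sinks of $\Gamma'$.

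Next I would compare boundary measurements in $\Gamma'$ with those of $\Gamma$. For each source $s$ and each boundary vertex $j\notin\{i,i+1\}$, no path from $s$ to $j$ in $\Gamma'$ traverses the new vertex, so $M'_{s,j}=M_{s,j}$. A path from $s$ to $i$ in $\Gamma'$ must pass through $v$; at $v$, by Definition~\ref{def:param} a straight pass contributes $\sinh\alpha$ and a turn contributes $\cosh\alpha$. Following the braiding configuration of the Rot move in Figure~\ref{fig:rot move}, such paths decompose into the old paths from $s$ to $i$ (weighted by $\cosh\alpha$ via a turn at $v$) and the old paths from $s$ to $i+1$ (weighted by $\sinh\alpha$ via a straight pass at $v$). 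Symmetrically, $M'_{s,i+1}=M_{s,i}\sinh\alpha+M_{s,i+1}\cosh\alpha$. Translating back to matrices through the $(-1)^\ell$ sign convention of Definition~\ref{def:param}, this is exactly the column transformation $C\mapsto C\,R_{i,i+1}(\alpha)$.

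The only subtle point is the wrap-around case $\{i,i+1\}=\{2k,1\}$: here the number of sources counted between a chosen source and the two boundary vertices $1$ and $2k$ differs in parity by $k$, producing the extra sign $-(-1)^k$ appearing in Definition~\ref{def:rot} for $\mathrm{Rot}_{2k,1}(\alpha)$. Once this bookkeeping is carried out, one gets $C'=\mathrm{Rot}_{i,i+1}(\alpha)(C)$, and since $\Gamma'^{\omega'}$ with the listed angles parametrizes $\Omega_{\Gamma'}$ by Theorem~\ref{param bij}, we conclude $C'\in\Omega_{\Gamma'}$ with $C'(v^\omega)=\alpha$. The main (minor) obstacle is the sign analysis in the wrap-around case, but it is resolved by the same accounting that justifies the signs in the definition of the Rot move on matrices.
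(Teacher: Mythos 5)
Your proposal is correct and, in substance, fills in the details of what the paper asserts with ``It is easy to see that...'' inside Definition~\ref{def:rot}: the paper already derives the column action $C\mapsto C\,R_{i,i+1}(\alpha)$ (with the $-(-1)^k$ twist for $\{2k,1\}$) by choosing a hyperbolic orientation with $i,i+1$ as sinks and comparing boundary measurements, and the Observation is stated without further proof as the immediate consequence read through Theorem~\ref{param bij}. Your boundary-measurement decomposition $M'_{s,i}=M_{s,i}\cosh\alpha+M_{s,i+1}\sinh\alpha$ and its twin are exactly the computation the paper elides, and your sign accounting for the wrap-around case is the right explanation for the $-(-1)^k$; so this is the same route, just made explicit.
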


\begin{rmk}
    It is easy to see that \(\mathrm{Rot}_{i,i+1}(\alpha)^{-1} =\mathrm{Rot}_{i,i+1}(-\alpha)\).
\end{rmk}

\subsubsection{The \texorpdfstring{\(\mathrm{Cyc}\)}{Cyc} Move}
It is useful to consider another move on OG graphs.
\begin{dfn}
    For a \(C\in \Omega_\Gamma\), where \(\Gamma\) is an OG graph with \(2k\) external vertices, define \(\mathrm{Cyc}_k(\Gamma)\) to be the same graph with the index labels rotated one step clockwise:
\begin{figure}[H]
    \centering
    \begin{center}
\begin{tikzpicture}[scale=0.7]
\draw (0,0) circle (2cm);
\filldraw[lightgray] (0,1/2) circle (1cm);
\node[scale=3] (c) at (0,1/2)  {\(\Gamma\)};
\draw (0,1/2) circle (1cm);
\draw (1.96962/2, 1.3473/2) --(3.64697/2, 1.64306/2);
\draw (-1.96962/2, 1.3473/2) --(-3.64697/2, 1.64306/2);
\filldraw[black] (0,3.5/2) circle (2pt);

\filldraw[black] (1.27565/2, 3.20949/2) circle (2pt);
\filldraw[black] (-1.27565/2, 3.20949/2) circle (2pt);

\draw (0.68404/2, -0.879385/2)--(1.68446/2, -3.62803/2);
\draw (1.68446/2+0.3, -3.62803/2)node[anchor=north]{\(i+2\)};
\draw (0.68404/2, -0.879385/2+0.3)node[anchor=north west]{\(i+1\)};

\draw  (-0.68404/2, -0.879385/2) --(-1.68446/2, -3.62803/2);
\draw (-1.68446/2-0.3, -3.62803/2)node[anchor=north]{\(i+1\)};
\draw  (-0.68404/2-0.1, -0.879385/2+0.3)node[anchor=north east]{\(i\)};

\end{tikzpicture}
\end{center}
\caption{the Cyc move}
    \label{fig:cyc move}
\end{figure}

We will omit \(k\) when it is clear from context.

    Let \(\Gamma\) be the original graph and \(\Gamma^\prime\) be the one after the move, and \(\tau\) and \( \tau^\prime\) be the corresponding permutations. If \(\ell\in[2k]\) write \(\ell^\prime = \mathrm{Cyc} (\ell) \) for the corresponding index in \(\Gamma^\prime\). That is,  \(\ell^\prime =\ell + 1\). Define the action on sets of indices similarly. 

    For permutations, define \( \mathrm{Cyc} (\tau)(\ell^\prime) = \mathrm{Cyc} (\tau (\ell))\), thus \( \mathrm{Cyc} (\tau) = \tau^\prime\).
    
    If \(\tau_\ell = \{\ell,r\}\) we have that \(\mathrm{Cyc} (\tau_\ell) = \tau^\prime_{\ell^\prime} = \{\ell^\prime, r^\prime\} =  \{\mathrm{Cyc}(\ell),\mathrm{Cyc} (r)\}\) is the corresponding arc in \(\Gamma^\prime\).

    If we have \(\omega\) a perfect orientation for \(\Gamma\), we will say that the \emph{inherited orientation} for \(\Gamma^\prime\) is such the agrees with the natural identification of edges and vertices across the graphs. We write \(\mathrm{Cyc}(\omega)\) for the inherited orientation and \(\mathrm{Cyc}(\Gamma^\omega)=\mathrm{Cyc}(\Gamma)^{\mathrm{Cyc}(\omega)}\). 

    For an internal vertex \(v\) in \(\Gamma\), let \(\mathrm{Cyc} (v)\) be the corresponding vertex in \(\mathrm{Cyc} (\Gamma)\). For an oriented vertex \(v^\omega\), define  \(\mathrm{Cyc}(v^\omega) = \mathrm{Cyc}(v)^{\mathrm{Cyc}(\omega)}\).
    
    For \(C\in\Gr{k}{2k}\) with columns \(C^i\), 
        \(
     \mathrm{Cyc}_k(C) \defeq 
    \left(
    \begin{array}{ccccc}
         \vertbar&\vertbar&\vertbar&&\vertbar  \\
         -(-1)^k C^{2k}&C^1&C^2&\dots&C^{2k-1}\\
         \vertbar&\vertbar&\vertbar&&\vertbar
    \end{array}
    \right).
    \)

\end{dfn}
\begin{rmk}
    It is easy to see that \(\mathrm{Cyc}_k^{-1} = \mathrm{Cyc}_k^{k-1}\).
\end{rmk}

\subsubsection{Conclusions}
\begin{coro}[\cite{companion}]
\label{coro:graph cell com diag coro}
  For an OG graph \(\Gamma\), \(C\in \Omega_\Gamma\), and \(G\) is either a \(\mathrm{Inc}\), \(\mathrm{Cyc}\), or \(\mathrm{Rot(\alpha)}\) for \(\alpha>0\), we have that \(G(C)\in\Omega_{G(\Gamma)}\).

  For an OG graph \(\Gamma\), \(C\in \Omega_{G(\Gamma)}\), and \(G\) is either a \(\mathrm{Inc}\), \(\mathrm{Cyc}\), we have that \(C = G(C^\prime)\) for some \(C^\prime\in\Omega_{\Gamma}\).

   For an OG graph \(\Gamma\), \(C\in \Omega_{Rot_{i,i+1}(\Gamma)}\), we have that \(\exists\alpha>0\) such that \(C = Rot_{i,i+1}(\alpha)(C^\prime)\) for some \(C^\prime\in\Omega_{\Gamma}\).
\end{coro}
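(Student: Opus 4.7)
The overall strategy is to argue via parameterizations: by Theorem~\ref{param bij}, each cell $\Omega_\Gamma$ is the image of a product of intervals under the parameterization $\Gamma^\omega$ from Definition~\ref{def:param}, for any choice of perfect orientation $\omega$. I want to show that the matrix-level moves $\mathrm{Inc}_i$, $\mathrm{Cyc}$, and $\mathrm{Rot}_{i,i+1}(\alpha)$ are compatible with the graph-level moves, in the sense that applying a matrix move to $\Gamma^\omega(\underline{\alpha})$ recovers $G(\Gamma)^{G(\omega)}(\underline{\alpha}')$ for an explicit $\underline{\alpha}'$. Granting this, the forward direction is immediate, and the backward direction reduces to inverting the parameter change.

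For the forward direction, I proceed case by case. For $\mathrm{Cyc}$, pick any perfect orientation $\omega$ of $\Gamma$ and note that the inherited orientation $\mathrm{Cyc}(\omega)$ yields the same internal weights on every internal vertex; the boundary-measurement definition then shows that cycling the columns of $\Gamma^\omega(\underline{\alpha})$ (with the sign accounting for the crossing of the index $1$--$2k$) gives exactly $\mathrm{Cyc}(\Gamma)^{\mathrm{Cyc}(\omega)}(\underline{\alpha})$, which lies in $\Omega_{\mathrm{Cyc}(\Gamma)}$. For $\mathrm{Inc}_i$, pick an orientation where $i$ becomes a source and $i+1$ a sink after the move (inherited from $\omega$): the new arc $\{i,i+1\}$ contributes the constant row $(\ldots,1,1,\ldots)$ while all other boundary measurements are unchanged, which is precisely the block formula defining the matrix-level $\mathrm{Inc}_i$. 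For $\mathrm{Rot}_{i,i+1}(\alpha)$, pick an orientation of $\Gamma$ in which $i$ and $i+1$ are sinks (possible since either $\{i,i+1\}$ is an arc of $\Gamma$---and then the move yields a non-reduced graph representative---or one exists by Proposition~\ref{prop:hyperbolic choice per arc}). The new vertex added by the move carries the hyperbolic angle $\alpha$, and a direct inspection of boundary measurements on the two edges feeding into $i,i+1$ shows that appending this vertex multiplies the matrix on the right by $R_{i,i+1}(\alpha)$ (or $R_{1,2k}(-(-1)^k\alpha)$ when $\{i,i+1\}=\{1,2k\}$), matching the matrix-level $\mathrm{Rot}_{i,i+1}(\alpha)$.

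For the backward direction I invert these parameter identifications. Given $C\in\Omega_{\mathrm{Cyc}(\Gamma)}$ set $C':=\mathrm{Cyc}^{-1}(C)=\mathrm{Cyc}^{k-1}(C)$; the forward direction applied to $\mathrm{Cyc}^{k-1}$ (equivalently, to $\mathrm{Cyc}$ iterated) places $C'$ in $\Omega_\Gamma$, and $\mathrm{Cyc}(C')=C$ by construction. For $C\in\Omega_{\mathrm{Inc}_i(\Gamma)}$, the arc $\{i,i+1\}$ is an external arc of $\mathrm{Inc}_i(\Gamma)$ with support $\{i,i+1\}$, so by Proposition~\ref{prop:uniqueness of supp} $C$ has, up to scaling, a unique vector $\mathbf{v}$ supported on $\{i,i+1\}$. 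Using positivity one may choose a representative of $C$ whose corresponding row equals $(\ldots,1,1,\ldots)$; deleting this row and the two columns $i,i+1$ (with the sign convention in the Inc-column rule) produces a $k\times(2k{-}2)$ matrix $C'$, and a check against the parameterization shows $C'\in\Omega_\Gamma$ with $\mathrm{Inc}_i(C')=C$. For $C\in\Omega_{\mathrm{Rot}_{i,i+1}(v^\omega)(\Gamma)}$, the new internal vertex $v$ is reduced (or the move preserved the equivalence class), so by Remark~\ref{rmk:oriented vert bij} $C$ determines a positive angle $\alpha:=C(v^\omega,\mathrm{Rot}_{i,i+1}(\Gamma))$, and setting $C':=C\cdot R_{i,i+1}(\alpha)^{-1}$ (with the sign twist for $\{1,2k\}$) gives an element of $\Omega_\Gamma$ by the same parameterization identification, with $\mathrm{Rot}_{i,i+1}(\alpha)(C')=C$.

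The main obstacle is the precise bookkeeping in the Rot case when $\{i,i+1\}$ happens to already be an arc of $\Gamma$, or more subtly, when $i$ or $i+1$ cannot simultaneously be made into sinks in a \emph{single} perfect orientation; in that situation one must either first pass to an equivalent reduced representative of the cell (so that the move produces a non-reduced but equivalent graph, handled by Proposition~\ref{prop:OG graphs to perm bij}) or decompose $\mathrm{Rot}$ through an orientation change governed by Proposition~\ref{prop:angles repara} so that the $\alpha$ entering the matrix formula is positive. Once this edge case is dispatched, the rest is a direct verification from Definitions~\ref{def:param}, \ref{def:rot}, and the action of $\mathrm{Inc}$, $\mathrm{Cyc}$ on matrices.
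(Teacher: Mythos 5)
Since this corollary is attributed to the companion paper, there is no in-text proof in this document to compare against; your proof must stand on its own, and it is assessed on that basis.

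Your overall strategy---reduce everything to the parameterization of Definition~\ref{def:param} together with Theorem~\ref{param bij}, verify that each matrix-level move tracks its graph-level counterpart (via inherited orientations), and then invert the parameter change for the backward direction---is the natural one, and the forward arguments for \(\mathrm{Cyc}\), \(\mathrm{Inc}_i\), and the non-degenerate \(\mathrm{Rot}\) case are essentially correct. The backward Inc argument via Proposition~\ref{prop:uniqueness of supp} (extract the unique vector supported on \(\{i,i+1\}\), normalize, then delete the row and the two columns) and the backward Rot argument via Remark~\ref{rmk:oriented vert bij} (read off the angle of the new vertex and multiply by \(R_{i,i+1}(\alpha)^{-1}\)) are the right moves. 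Two small slips that do not affect the logic: the matrix you obtain after inverting Inc should be \(k\times 2k\), not \(k\times(2k-2)\); and the formula \(\mathrm{Cyc}_k^{-1}=\mathrm{Cyc}_k^{k-1}\) that you quote from the paper is a typo (it should be \(\mathrm{Cyc}_k^{2k-1}\), which is the identity modulo an overall sign that dies in the Grassmannian)---but you only use invertibility of \(\mathrm{Cyc}\), so this is harmless.

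The genuine gap is the degenerate Rot case, which you flag but do not actually dispatch. Two distinct degeneracies arise: (a) \(\{i,i+1\}\) is an arc of \(\Gamma\), and (b) \(\tau_i\) and \(\tau_{i+1}\) are distinct crossing arcs. Case (b) is more serious than you suggest. If \(\tau_i,\tau_{i+1}\) cross, then \(\mathrm{Rot}_{i,i+1}(\Gamma)\) has a double crossing, so after reduction via equivalence move~\ref{eqmove:2} the cell \(\Omega_{\mathrm{Rot}_{i,i+1}(\Gamma)}\) has dimension \(\dim\Omega_\Gamma-1\). A dimension count then shows the stated forward inclusion \(\mathrm{Rot}_{i,i+1}(\alpha)(C)\in\Omega_{\mathrm{Rot}_{i,i+1}(\Gamma)}\) for all \(C\in\Omega_\Gamma\) and all \(\alpha>0\) cannot hold as written: the right-hand side is lower-dimensional than the source family \(\Omega_\Gamma\times(0,\infty)\), while right multiplication by \(R_{i,i+1}(\alpha)\) is a diffeomorphism of the Grassmannian. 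So either the statement carries an implicit non-degeneracy hypothesis (compare Corollary~\ref{reduced move coro}, which explicitly excludes the crossing case), or the crossing case must be handled by a separate argument that you do not supply. You should either restrict to the non-degenerate situation and say so, or explain why the image still lands in the stated cell; appealing to Proposition~\ref{prop:OG graphs to perm bij} or Proposition~\ref{prop:angles repara} as you do does not resolve this particular collision, since those concern re-labelling within a fixed equivalence class, not a drop in cell dimension.
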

\begin{coro}[\cite{companion}]
\label{arc supp prom}
  For \(\tau_\ell\) an external arc of \(\Gamma\) with support \(I\), and \(G\) either a \(\mathrm{Inc}\), \(\mathrm{Rot}\), or \(\mathrm{Cyc}\) move such that \(G\neq\mathrm{Inc}_i\) for \(i\in i\),  we have that \(G(\tau_\ell)\) is an external arc of \(G (\Gamma)\) with support \(G  (I)\). Additionally, if \(G = \mathrm{Inc}_i\), we have that \(\{i,i+1\}\) (mod \(2k\) is an external arc of \(\mathrm{Inc}_i(\Gamma)\).
\end{coro}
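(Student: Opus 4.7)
The statement breaks naturally into three cases (one per move), plus a short addendum about the new arc $\{i,i+1\}$ created by $\mathrm{Inc}_i$. My plan is a case-by-case analysis in which, for each move $G$, I verify three things: (a) that $G(\tau_\ell)$ is an arc of $G(\Gamma)$, which is essentially forced by the explicit formulas for how each move acts on the corresponding permutation; (b) that its support is $G(I)$, which reduces to showing that $G(I)$ is cyclically consecutive and contains both endpoints of $G(\tau_\ell)$; and (c) the substantive point, that no other arc of $G(\Gamma)$ is contained in $G(I)$, which I will derive from the externality of $\tau_\ell$ in $\Gamma$.

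For $G=\mathrm{Cyc}_k$, the move is a pure relabeling $\ell \mapsto \ell+1 \pmod{2k}$, and everything—arcs, supports, the containment relation defining externality—translates uniformly; there is nothing to do. For $G=\mathrm{Rot}_{i,i+1}$, the only effect on the permutation is to swap the partners of $i$ and $i+1$, and the set-level rule $\mathrm{Rot}_{i,i+1}(I)=I\cup\{i,i+1\}$ when $I\cap\{i,i+1\}\neq\emptyset$ is built precisely so that this endpoint swap is absorbed by the support. I will argue by contradiction: an arc of $\mathrm{Rot}_{i,i+1}(\Gamma)$ contained in $G(I)$ would, after undoing the swap of partners of $i$ and $i+1$, yield an arc of $\Gamma$ contained in $I$, violating externality of $\tau_\ell$.

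For $G=\mathrm{Inc}_i$ with $i\notin I$ (the case the hypothesis singles out), the two new external vertices carrying labels $i,i+1$ are inserted \emph{outside} the cyclic position occupied by $I$, so $\mathrm{Inc}_i(I)$ is still cyclically consecutive and is disjoint from the newly-created arc $\{i,i+1\}$. Any other arc of $\mathrm{Inc}_i(\Gamma)$ contained in $\mathrm{Inc}_i(I)$ must therefore be the $\mathrm{Inc}_i$-image of an arc of $\Gamma$ contained in $I$, again contradicting externality. The addendum—that $\{i,i+1\}$ itself is always an external arc of $\mathrm{Inc}_i(\Gamma)$, with support the two-element set $\{i,i+1\}$—is immediate from the definition of the move: the two new external vertices are cyclically adjacent and joined by a single edge, so their support has size two and cannot properly contain any arc.

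I expect the main obstacle to be the bookkeeping for $\mathrm{Inc}_i$, for two related reasons. First, the external arc $\tau_\ell$ has two possible choices of support ($I$ or its cyclic complement $J$), and the condition $i\notin I$ must be read against the chosen side; I will handle this by fixing the convention that $I$ is the support under discussion and verifying that, for the complementary side, one applies the same argument to $J$. Second, one must carefully confirm that $\mathrm{Inc}_i(I)$ is genuinely a cyclic interval in $[2k+2]$ and contains both $\mathrm{Inc}_i(\ell)$ and $\mathrm{Inc}_i(\tau(\ell))$, which is where the exclusion $i\notin I$ is used in an essential way—otherwise the two inserted indices would land strictly inside $\mathrm{Inc}_i(I)$ and the support rule would need to be adjusted to include them.
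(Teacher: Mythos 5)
Your treatment of $\mathrm{Cyc}$, of $\mathrm{Inc}_i$ with $i\notin I$, and of the addendum about $\{i,i+1\}$ is all fine, and correctly identifies the only real content as the ``no other arc lands in the new support'' check. The gap is in the $\mathrm{Rot}_{i,i+1}$ case. Your proposed reduction---``an arc of $\mathrm{Rot}_{i,i+1}(\Gamma)$ contained in $G(I)$ would, after undoing the swap of partners of $i$ and $i+1$, yield an arc of $\Gamma$ contained in $I$''---is simply false when $I\cap\{i,i+1\}\neq\emptyset$. Concretely, say $i\in I$, $i+1\notin I$, so $G(I)=I\cup\{i+1\}$. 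An arc $a'=\{i,m\}$ with $m\in I$ lies in $G(I)$, but undoing the swap sends it to $\{i+1,m\}$, which is \emph{not} contained in $I$ because $i+1\notin I$. So no contradiction is produced, and in fact such an $a'$ can genuinely occur: take a $3$-OG graph with permutation $(1,2)(3,5)(4,6)$, so $\tau_3=\{3,5\}$ is external with support $I=\{3,4,5\}$. After $\mathrm{Rot}_{5,6}$ the permutation becomes $(1,2)(3,6)(4,5)$, and the arc $\{4,5\}\subset\{3,4,5,6\}=\mathrm{Rot}_{5,6}(I)$ strictly contains another arc in the claimed support, so $\mathrm{Rot}_{5,6}(\tau_3)=\{3,6\}$ is \emph{not} external with support $\mathrm{Rot}_{5,6}(I)$.

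What this counterexample shows is that the corollary is not literally true as stated: for the $\mathrm{Rot}_{i,i+1}$ case one needs the additional hypothesis (present in every actual use of this corollary in the paper, e.g.\ in the proofs of Proposition~\ref{prop:sol commute for arcs} and Observation~\ref{obs:reduce arc move}) that $\tau_i$ and $\tau_{i+1}$ do not cross, equivalently that $\mathrm{Rot}_{i,i+1}(\Gamma)$ is still reduced. Under that hypothesis, when $i\in I$, $i+1\notin I$ one has $i=\tau(\ell)$, so $\tau_i=\tau_\ell$, and non-crossing of $\tau_i$ with $\tau_{i+1}$ forces $\tau(i+1)\notin I$; this is exactly what rules out the troublesome arc $\tau'_i=\{i,\tau(i+1)\}$ landing inside $G(I)$. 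So the fix is a short extra case analysis for the two arcs through the swapped indices: $\tau'_{i+1}$ is $G(\tau_\ell)$ itself, and $\tau'_i$ escapes $G(I)$ by non-crossing; all remaining arcs are untouched and your original ``reduce to externality in $\Gamma$'' argument handles them. Without naming and using the non-crossing assumption, the $\mathrm{Rot}$ case does not close.
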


\begin{coro}[\cite{companion}]
\label{reduced move coro}
    Let \(\tau\) be the permutation of a reduced OG graph $\Gamma$, and \(G\) be a \(\mathrm{Inc}\), \(\mathrm{Rot}\), or \(\mathrm{Cyc}\) move. Also assume that \(G \neq \mathrm{Rot}_{i,i+1}\) with \(\tau_i\) and \(\tau_{i+1}\) being the same  or crossing arcs. We have that \(G (\Gamma)\) is reduced.  
\end{coro}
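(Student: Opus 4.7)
The plan is to reduce everything to the characterization of reducedness from Proposition~\ref{prop:reduceability}: a graph is reduced iff no arc crosses itself and no pair of arcs crosses more than once. Since each of the three moves modifies the graph only in a small neighborhood of the boundary, the strategy is to analyze this modification locally and check that no new self-crossing or double-crossing can be created.

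I would first dispose of the two easy cases. The \(\mathrm{Cyc}\) move acts purely on the labels of the external vertices; as an abstract graph embedded in the disk, \(\mathrm{Cyc}(\Gamma)\) is identical to \(\Gamma\), and its arcs are merely the relabelings of the arcs of \(\Gamma\). Hence every crossing pattern of arcs is preserved, and \(\mathrm{Cyc}(\Gamma)\) is reduced. For the \(\mathrm{Inc}_i\) move, the effect on the graph is to insert, in the arc between the external edges previously labelled \(i-1\) and \(i\), two new external vertices joined by an edge that is completely disjoint from the rest of \(\Gamma\). This creates a single new arc \(\{i,i+1\}\) whose associated straight path consists only of that edge, and therefore does not intersect any old straight path. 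The old arcs of \(\Gamma\) correspond under \(\mathrm{Inc}_i\) to the same straight paths as before, so their mutual (and self-) crossing behavior is unchanged. Applying Proposition~\ref{prop:reduceability} then gives that \(\mathrm{Inc}_i(\Gamma)\) is reduced.

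The main case, and the one carrying all the content of the hypothesis, is the \(\mathrm{Rot}_{i,i+1}\) move. Here a single new internal vertex \(v\) is inserted in which the two edges going to the external vertices \(i\) and \(i+1\) are braided. The two complete straight paths passing through \(v\) are precisely the continuations of the old straight paths that ended at \(i\) and \(i+1\); these are the straight paths of \(\tau_i\) and \(\tau_{i+1}\) in \(\Gamma\). Thus all the arcs of \(\mathrm{Rot}_{i,i+1}(\Gamma)\) biject naturally with the arcs of \(\Gamma\), and the only new crossing introduced anywhere in the graph is the one at \(v\), which is a crossing between (the images of) \(\tau_i\) and \(\tau_{i+1}\). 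It follows that a self-crossing appears iff \(\tau_i=\tau_{i+1}\), i.e.~iff these are the same arc of \(\Gamma\); and a double crossing of two distinct arcs appears iff the arcs \(\tau_i\) and \(\tau_{i+1}\) already crossed somewhere in \(\Gamma\). Both of these are excluded by hypothesis, so by Proposition~\ref{prop:reduceability} the rotated graph is reduced.

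The only mildly delicate step, and thus the main obstacle, is verifying rigorously in the Rot case that the two straight paths incident to \(v\) really are the continuations of those of \(\tau_i\) and \(\tau_{i+1}\), and that no other crossings are created elsewhere in the graph; this is essentially the local picture of the braid, but it needs to be spelled out using the definition of a straight path (edges opposite at a vertex) together with the explicit local configuration of the \(\mathrm{Rot}_{i,i+1}\) move at the boundary, so that the bijection of arcs with those of \(\Gamma\) is unambiguous. Once this identification is made, the reducedness conclusion is immediate from Proposition~\ref{prop:reduceability}.
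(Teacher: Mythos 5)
Your proposal is correct, and the route you take — reducing the claim to Proposition~\ref{prop:reduceability} and verifying move-by-move that no self-crossing or double crossing is created — is the natural argument one would expect the companion reference to use. The key observation in each case is sound: $\mathrm{Cyc}$ is a pure relabelling of external vertices, $\mathrm{Inc}_i$ inserts a disjoint edge whose straight path meets nothing, and $\mathrm{Rot}_{i,i+1}$ introduces exactly one new internal vertex, hence exactly one new crossing, between the (images of the) arcs $\tau_i$ and $\tau_{i+1}$, which the hypothesis on $G$ excludes from being a self-crossing or a second crossing.
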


\begin{coro}[\cite{companion}]
\label{moves commute with equiv coro}
    Let \(\Gamma\) and  \(\Gamma^\prime\) be equivalent reduced OG graphs, \(\tau\) their corresponding permutation, and \(G\) be either a \(\mathrm{Inc}\), \(\mathrm{Rot}\), or \(\mathrm{Cyc}\) move. Assume as well that \(G\neq \mathrm{Rot}_{i,i+1}\) for \(\tau_i\) and \(\tau_{i+1}\) equal or crossing arcs. We have that \(G (\Gamma)\) is equivalent to \(G(\Gamma^\prime)\). 
\end{coro}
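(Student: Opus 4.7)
The plan is to reduce the statement to a permutation calculation using the bijection of Proposition~\ref{prop:OG graphs to perm bij} between equivalence classes of reduced OG graphs and the relevant class of involutions. Since $\Gamma$ and $\Gamma'$ are equivalent and reduced, they correspond to the same permutation $\tau$. By Corollary~\ref{reduced move coro}, both $G(\Gamma)$ and $G(\Gamma')$ are reduced. So it suffices to show that they correspond to the same permutation, and then Proposition~\ref{prop:OG graphs to perm bij} immediately gives equivalence.

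The core of the argument is therefore to verify, move by move, that the permutation of the output depends only on the permutation of the input (and on the combinatorial data of the move), not on the particular reduced representative. For $\mathrm{Cyc}$ this is transparent from the definition: $\mathrm{Cyc}$ only relabels external vertices, and the action on $\tau$ is the explicitly prescribed conjugation by a cyclic shift. For $\mathrm{Inc}_i$ the situation is equally direct: the move inserts two new external vertices $i,i+1$ joined by an arc that is disjoint from the rest of the graph, so the new permutation is $\mathrm{Inc}_i(\tau)$ together with the new $2$-cycle $(i,i+1)$, which depends only on $\tau$ and $i$.

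The substantive case is $\mathrm{Rot}_{i,i+1}$. Under the hypothesis that $\tau_i$ and $\tau_{i+1}$ are neither equal nor crossing, the braid produced by $\mathrm{Rot}_{i,i+1}$ introduces exactly one new internal vertex at which the two straight paths ending at $i$ and at $i+1$ meet transversally; by the non-crossing hypothesis they did not already meet at another vertex, so by Proposition~\ref{prop:reduceability} the new graph stays reduced and the two straight paths get swapped at the new vertex. Consequently, writing $\tau(i)=a$ and $\tau(i+1)=b$, the output permutation is obtained from $\tau$ by replacing the pair of arcs $\{i,a\},\{i+1,b\}$ with $\{i,b\},\{i+1,a\}$, and all other arcs are unchanged. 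This transformation rule is purely a function of $\tau$, independent of which reduced representative one started from.

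Combining these three cases, the permutation of $G(\Gamma)$ equals the permutation of $G(\Gamma')$, and since both are reduced by Corollary~\ref{reduced move coro}, Proposition~\ref{prop:OG graphs to perm bij} gives the required equivalence. The only mildly delicate point is verifying the claimed local effect of $\mathrm{Rot}_{i,i+1}$ on straight paths; this is where the non-crossing/non-equal hypothesis is used, to rule out that the rotation merely produces a reducible configuration whose effect on $\tau$ would differ.
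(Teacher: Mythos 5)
Your proof is correct. Note that the paper itself defers the proof to the companion reference and gives no argument, so there is nothing in this manuscript to compare against; your reduction via Proposition~\ref{prop:OG graphs to perm bij} (equivalence of reduced graphs is detected by the permutation), Corollary~\ref{reduced move coro} (reducedness of the output), and a move-by-move verification that the output permutation is a function of the input permutation alone is the natural route.

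One small expository point: in the $\mathrm{Rot}$ case you write that \emph{by the non-crossing hypothesis} the paths did not already meet, \emph{so by Proposition~\ref{prop:reduceability}} the new graph stays reduced \emph{and} the two straight paths get swapped. This phrasing conflates two independent facts. The swap of endpoints $\{i,a\},\{i+1,b\}\mapsto\{i+1,a\},\{i,b\}$ is a purely local consequence of how straight paths traverse the new $4$-valent vertex (opposite edges), and holds regardless of the hypothesis; it is precisely the paper's defined permutation action $\mathrm{Rot}_{i,i+1}(\tau)$, conjugation by the transposition $(i,\,i+1)$. The hypothesis that $\tau_i$ and $\tau_{i+1}$ are neither equal nor crossing is needed only for the reducedness claim, which you already have from Corollary~\ref{reduced move coro} and need not re-derive. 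Separating these two points would tighten the argument, but the mathematics is sound.
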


\begin{coro}[\cite{companion}]
\label{coro:moves dont change angles}
    Let \(\Gamma\) be an OG graph, \(G\) some move, \(v^\omega\) an oriented vertex of \(\Gamma\), and \(C\in \Omega_\Gamma\). If \(G(\Gamma)\) is reduced, \(C(v^\omega,\Gamma) =  (G(C))(G(v^\omega),G(\Gamma))\).
\end{coro}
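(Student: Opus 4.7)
The plan is to use the parameterization of Definition~\ref{def:param} and the bijection of Remark~\ref{rmk:oriented vert bij}. By Remark~\ref{rmk:oriented vert bij}, the value $C(v^\omega,\Gamma)$ depends only on the orientation assigned to the edges incident to $v$, not on the orientation of the rest of the graph. So it suffices to fix a convenient perfect orientation $\omega$ on all of $\Gamma$ whose restriction to the edges of $v$ is the prescribed one, and then to verify the identity of angles for that orientation. In particular, by Proposition~\ref{prop:trigonometric existence}, we may choose $\omega$ to be trigonometric; under the inherited orientations $G(\omega)$ (see the definitions of the inherited orientation for $\mathrm{Cyc}$, $\mathrm{Inc}$, $\mathrm{Rot}$), the oriented vertex $G(v^\omega)$ retains the same (trigonometric) orientation type on its incident edges, as the moves never reorient pre-existing edges adjacent to a pre-existing internal vertex.

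The core step is to establish a commutative diagram at the level of parameterizations: writing $\alpha=(\alpha_u)_{u\in \mathcal V(\Gamma)}$ for the angles, I would show
\[
G\bigl(\Gamma^\omega(\alpha)\bigr) \;=\; G(\Gamma)^{G(\omega)}(\hat{\alpha}),
\]
where $\hat\alpha$ agrees with $\alpha$ on the (images of the) old vertices, and in the $\mathrm{Rot}_{i,i+1}(\beta)$ case carries the additional entry $\beta$ on the newly-added vertex (with $\mathrm{Cyc}$ and $\mathrm{Inc}$ adding no new internal vertices). Granted such an identity, Theorem~\ref{param bij} and Remark~\ref{rmk:oriented vert bij} finish the argument: the left-hand side is $G(C)$ by definition, the right-hand side realizes $G(C)$ as an element of $\Omega_{G(\Gamma)}$ whose angle at $G(v^\omega)$ is $\alpha_v=C(v^\omega,\Gamma)$, hence $(G(C))(G(v^\omega),G(\Gamma))=\alpha_v$. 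Reducedness of $G(\Gamma)$ is used here to invoke Theorem~\ref{param bij} and to guarantee that the parameterization is a bijection on $G(\Gamma)$.

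The remaining work is a case-by-case verification of the commutative diagram, which I would handle by comparing boundary measurements. For $\mathrm{Cyc}$, a cyclic shift of external labels permutes the complete paths of Definition~\ref{def:param} without altering any weighted decision, and the sign $-(-1)^k$ attached to the last column precisely compensates for the one path crossing the cyclic cut; this matches the matrix action $\mathrm{Cyc}(C)$ directly. For $\mathrm{Inc}_i$, the new arc $\{i,i{+}1\}$ is disconnected from the rest of the graph, so boundary measurements among vertices of $\Gamma$ are preserved (up to the sign accounting caused by the newly-inserted source), and the new edge contributes the standard Kronecker-type entries that match the explicit matrix description of $\mathrm{Inc}_i$ given right after Figure~\ref{fig:inc move}. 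For $\mathrm{Rot}_{i,i+1}(\beta)$, the newly-added hyperbolic vertex with angle $\beta$ alters only boundary measurements that end at $i$ or $i{+}1$, and those alterations match precisely the right multiplication by $R_{i,i+1}(\pm\beta)$ described in Definition~\ref{def:rot}, by inspection of the possible decisions at the new vertex.

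The main obstacle I anticipate is purely bookkeeping: the sign conventions around the cyclic cut (the $-(-1)^k$ factors appearing for $\mathrm{Cyc}$ and for $\mathrm{Rot}_{2k,1}$, and in the $i=2k$ case of $\mathrm{Inc}_i$) require some care, and one must track how the set of sources in the inherited orientation relates to the set of sources in the original orientation, because the very definition of $C$ from boundary measurements in Definition~\ref{def:param} uses a source-dependent sign $(-1)^\ell$. Once these signs are reconciled, the identity above is essentially tautological: the weight attached to the vertex $v$ in the boundary-measurement formula is the same function of $\alpha_v$ on both sides, because neither the set of decisions at $v$ nor its orientation is changed by $G$.
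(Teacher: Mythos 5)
This corollary is quoted from \cite{companion}; the present paper contains no proof of it, so there is nothing in-paper to compare against. Judged on its own, your outline is the natural argument and is sound in structure: establish the commutation $G\bigl(\Gamma^\omega(\alpha)\bigr)=G(\Gamma)^{G(\omega)}(\hat\alpha)$ as elements of the Grassmannian, then invoke the bijectivity of the parameterization of the reduced graph $G(\Gamma)$ (Theorem~\ref{param bij}) and the locality of angles (Remark~\ref{rmk:oriented vert bij}, Proposition~\ref{prop:angles repara}) to read off that the angle at $G(v^\omega)$ equals $\alpha_v$.

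Two points need repair or completion. First, your reduction is internally inconsistent: you require a global perfect orientation $\omega$ whose restriction at $v$ is the prescribed one, and then ``in particular'' take $\omega$ trigonometric, which is impossible when $v^\omega$ carries a hyperbolic orientation. Fix this either by extending the prescribed local orientation to a global one (for a hyperbolic local orientation this follows from Proposition~\ref{prop:hyperbolic choice per arc} by orienting the straight paths through $v$ and the remaining arcs arbitrarily), or by proving the equality for a single local orientation at $v$ and transferring it to all others via condition (3) of Remark~\ref{rmk:oriented vert bij}: since no move touches the edges incident to $v$, the change of variables of Proposition~\ref{prop:angles repara} is the identical function on both sides of the claimed equality, so one local orientation suffices. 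Second, the substantive content is precisely the displayed commutation identity, which you only sketch. For $\mathrm{Rot}_{i,i+1}(\beta)$ the boundary-measurement verification is cleanest if you first fix an orientation in which $i$ and $i+1$ are sinks (as in the derivation following Definition~\ref{def:rot}), which pins down the Grassmannian element $C\,R_{i,i+1}(\pm\beta)$, and only then pass to the inherited orientation to read off the old angles; with an arbitrary trigonometric $\omega$ the bookkeeping at the new vertex (sources among $i,i+1$, paths beginning there, and the $-(-1)^k$ factors at the cyclic cut) is exactly where the argument can silently break. Note also that for the Rot case you need $\beta>0$ so that $G(C)\in\Omega_{G(\Gamma)}$ (Corollary~\ref{coro:graph cell com diag coro}) and the right-hand side of the corollary is defined. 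With these adjustments the proof goes through.
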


\section{Calculations}
We present here the proofs of various claims and examples for which we rely on direct calculations. These calculations can be readily verified using the ancillary Wolfram Mathematica notebook. Similar computations, such as the calculation of more involved twistor-solutions, can also be performed using the ancillary Wolfram Mathematica package.

\label{apx:calcs}
\subsection{Proof of Lemma~\ref{triag}}
\label{sec:triag}
We will now prove Lemma~\ref{triag} by finding the unique preimage of an element \(Y\in\mathcal{O}_3(\Lambda)\) in \(\OGnon{3}{6}\) under the amplituhedron map. Recall  Definition~\ref{def:triag}:
\[
\Delta_\pm :=
\begin{pmatrix}
\langle3\,5\rangle&\pm \langle4\,6\rangle&-\langle1\,5\rangle&\mp \langle2\,6\rangle&\langle1\,3\rangle&\pm \langle2\,4\rangle\\
\langle1\,2\rangle&0&-\langle2\,3\rangle&\langle2\,4\rangle&-\langle2\,5\rangle&\langle2\,6\rangle\\
0&-\langle1\,2\rangle&\langle1\,3\rangle&-\langle1\,4\rangle&\langle1\,5\rangle&-\langle1\,6\rangle\\
\end{pmatrix}.
\]
By Lemma~\ref{lem:twist in C}, we have that
\(
\Twist{Y}\,\eta \subset C.
\)
We have \(\mathrm{dim} \, \orth{ \Twist{Y}} = 2k-2 = 4,\,\, \mathrm{dim} \Twist{Y}\,\eta  = 2,\,\, \mathrm{dim} \, C = 3 \). As the twistors form a projective vector we can assume \(\twist{Y}{1}{2} = 1\) without loss of generality. By Proposition~\ref{prop:twistor lambda} we have the following equality of spaces:
\[
\Twist{Y}\ =
\begin{pmatrix}
1&0&-\twist{Y}{2}{3}&-\twist{Y}{2}{4}&-\twist{Y}{2}{5}&-\twist{Y}{2}{6}\\
0&1&\twist{Y}{1}{3}&\twist{Y}{1}{4}&\twist{Y}{1}{5}&\twist{Y}{1}{6}\\
\end{pmatrix}.
\]
Thus, we have that
\[ 
 \orth {\Twist{Y}}
 =
\begin{pmatrix}
\twist{Y}{2}{3} & -\twist{Y}{1}{3} & 1 & 0 & 0 & 0 \\
\twist{Y}{2}{4} & -\twist{Y}{1}{4} & 0 & 1 & 0 & 0 \\
\twist{Y}{2}{5}& -\twist{Y}{1}{5} & 0 & 0 & 1 & 0 \\
\twist{Y}{2}{6} & -\twist{Y}{1}{6} & 0 & 0 & 0 & 1 \\
\end{pmatrix}.
\]
By Lemma~\ref{lem:twist in C} and Proposition~\ref{prop:twistor lambda}, we have that \(C\subset \orth{\Twist{Y}}\). By Proposition~\ref{prop:mom cons}, we have that \(\Twist{Y}\,\eta \subset\orth{\Twist{Y}}\), which is two dimensional by Proposition~\ref{prop:twistor lambda}. We need to find one additional vector from the span of \(\orth {\Twist{Y}}\) to add to \(\Twist{Y} \, \eta\) to find \(C\). This vector needs to be orthogonal to \(\Twist{Y} \, \eta\), and to itself, with respect to the inner product defined by \(\eta\).

Consider the following basis for \(\orth {\Twist{Y}}\):

\[
\begin{pmatrix}
-\twist{Y}{3}{5}&0&\twist{Y}{1}{5}&0&-\twist{Y}{1}{3}&0\\
0&-\twist{Y}{4}{6}&0&\twist{Y}{2}{6}&0&-\twist{Y}{2}{4}\\
1&0&-\twist{Y}{2}{3}&\twist{Y}{2}{4}&-\twist{Y}{2}{5}&\twist{Y}{2}{6}\\
0&-1&\twist{Y}{1}{3}&-\twist{Y}{1}{4}&\twist{Y}{1}{5}&-\twist{Y}{1}{6}\\
\end{pmatrix}
\]

The bottom two rows are \(\Twist{Y} \, \eta\). The top two rows are orthogonal to the bottom two, with respect to the inner product defined by \(\eta\), by the Pl\"ucker relations. The top two are manifestly orthogonal to each-other. The norm (with respect to the inner product defined by \(\eta\)) of the top row is \(-S_{\{1,\,3,\,5\}}(\Lambda,Y)\) and that of the second row is \(S_{\{2,\,4,\,6\}}(\Lambda,Y)\), as complementary Mandelstam variables are equal, in this basis, the form \(\eta\) restricted to \(\orth {\Twist{Y}}\) equals

\[
\begin{pmatrix}
-S_{\{1,\,3,\,5\}}(\Lambda,Y)&0&0&0\\
0&S_{\{1,\,3,\,5\}}(\Lambda,Y)&0&0\\
0&0&0&0\\
0&0&0&0\\
\end{pmatrix}.
\]

Keep in mind that \(-S_{\{1,\,3,\,5\}} (\Lambda,Y)\geq 0\) as it is a sum of squares. It is clear now, that for \(-S_{\{1,\,3,\,5\}}(\Lambda,Y) > 0 \) the only two possible choices for a subspace of dimension \(3\) that is orthogonal to itself with respect to the inner product defined by \(\eta\), is \( \Twist{Y}\,\eta\) together with either the sum, or the difference of the top two rows. In other words, if \(S_{\{1,\,3,\,5\}}(\Lambda,Y) \neq 0 \), given a matrix \(Y\), the two possible candidates for its preimage are:
\[
\Delta_\pm(\Lambda,\,Y) :=
\begin{pmatrix}
\twist{Y}{3}{5}&\pm \twist{Y}{4}{6}&-\twist{Y}{1}{5}&\mp \twist{Y}{2}{6}&\twist{Y}{1}{3}&\pm \twist{Y}{2}{4}\\
\twist{Y}{1}{2}&0&-\twist{Y}{2}{3}&\twist{Y}{2}{4}&-\twist{Y}{2}{5}&\twist{Y}{2}{6}\\
0&-\twist{Y}{1}{2}&\twist{Y}{1}{3}&-\twist{Y}{1}{4}&\twist{Y}{1}{5}&-\twist{Y}{1}{6}\\
\end{pmatrix}
\]

\begin{prop}
\label{prop:135}
For \([C,\Lambda,Y]\in\mathcal U_3^\geq\), we have \(S_{\{1,\,3,\,5\}}(\Lambda,Y) < 0\) for \(C\) in the interior
of the top cell of \(\OGnon{3}{6}\).
\end{prop}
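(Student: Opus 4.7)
The plan is to expand $S_{\{1,3,5\}}$ explicitly using Definition~\ref{def mand}, which directly yields
\[
S_{\{1,3,5\}} = -\twist{Y}{1}{3}^2 - \twist{Y}{1}{5}^2 - \twist{Y}{3}{5}^2,
\]
manifestly $\leq 0$. The proof then reduces to showing strict inequality on the interior of the top cell; equivalently, that $\twist{Y}{1}{3}, \twist{Y}{1}{5}, \twist{Y}{3}{5}$ cannot all simultaneously vanish there. I will argue this by contradiction.

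Assume all three vanish. Proposition~\ref{prop:S orth} (complementary Mandelstams agree) applied to $S_{\{2,4,6\}} = -\twist{Y}{2}{4}^2 - \twist{Y}{2}{6}^2 - \twist{Y}{4}{6}^2 = 0$ forces $\twist{Y}{2}{4} = \twist{Y}{2}{6} = \twist{Y}{4}{6} = 0$ as well; thus every twistor whose two indices have the same parity vanishes. I will then invoke the strict inequalities of Proposition~\ref{prop:Consecutive Twistors} (on the interior, $\twist{Y}{i}{i+1} > 0$ for $i = 1,\ldots,5$, and $\twist{Y}{1}{6} < 0$ since $k=3$), together with the Plücker relations of Proposition~\ref{prop:twist pluckers}, to extract sign information on the remaining (mixed-parity) twistors.

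Concretely, the relation indexed by $(i, j_1, j_2, j_3) = (2, 3, 4, 5)$ becomes $\twist{Y}{2}{3}\twist{Y}{4}{5} + \twist{Y}{2}{5}\twist{Y}{3}{4} = 0$ under the vanishing hypotheses; since the first summand is positive and $\twist{Y}{3}{4} > 0$, I deduce $\twist{Y}{2}{5} < 0$. Finally, the relation indexed by $(1, 2, 5, 6)$ reads $\twist{Y}{1}{2}\twist{Y}{5}{6} + \twist{Y}{1}{6}\twist{Y}{2}{5} = \twist{Y}{1}{5}\twist{Y}{2}{6}$, whose right-hand side vanishes by hypothesis; but both terms on the left are products of quantities of the same sign (positive$\,\cdot\,$positive and negative$\,\cdot\,$negative, respectively), hence strictly positive, giving the required contradiction. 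The only subtlety is that $\twist{Y}{2}{5}$ is a non-consecutive twistor whose sign is not directly controlled by Proposition~\ref{prop:Consecutive Twistors}; pinning it down by the auxiliary Plücker identity is the key step that unlocks the final sign contradiction.
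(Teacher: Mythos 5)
Your proof is correct, but it takes a genuinely different route from the paper's.

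Both arguments begin identically: expand $S_{\{1,3,5\}}$ as a sum of negative squares, assume towards contradiction that it vanishes, and invoke Proposition~\ref{prop:S orth} on the complementary set to deduce that all six same-parity twistors $\twist{Y}{1}{3},\twist{Y}{1}{5},\twist{Y}{3}{5},\twist{Y}{2}{4},\twist{Y}{2}{6},\twist{Y}{4}{6}$ vanish. Where you diverge is in how you extract the final contradiction. The paper applies Proposition~\ref{prop:S orth} once more, to the singleton-complement pair $\{3,5\}$ and $\{1,2,4,6\}$: after substituting the vanishing hypotheses into $S_{\{3,5\}} = S_{\{1,2,4,6\}}$, one obtains $0 = \twist{Y}{1}{2}^2 + \twist{Y}{1}{4}^2 + \twist{Y}{1}{6}^2$, which immediately contradicts the strict positivity of consecutive twistors. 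You instead use two Pl\"ucker relations from Proposition~\ref{prop:twist pluckers}: the first, at $(2,3,4,5)$, propagates the sign $\twist{Y}{2}{5}<0$ (a mixed-parity, non-consecutive twistor not directly controlled by Proposition~\ref{prop:Consecutive Twistors}); the second, at $(1,2,5,6)$, then reads $\twist{Y}{1}{2}\twist{Y}{5}{6} + \twist{Y}{1}{6}\twist{Y}{2}{5} = 0$ with both summands strictly positive. Your sign bookkeeping (including $\twist{Y}{1}{6}<0$ from $(-1)^3\twist{Y}{1}{6}>0$) is accurate, so the argument closes. The trade-off: the paper's finish is a one-liner once you notice the second complement identity (and notably does not use $\twist{Y}{1}{4}$'s sign, only nonvanishing), whereas yours substitutes that insight with a short chain of Pl\"ucker manipulations and needs the precise signs of several consecutive twistors. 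Both are elementary; the paper's is a touch more economical.
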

\begin{proof}
Clearly it is non-positive. Assume towards contradiction that 
\[S_{\{1,\,3,\,5\}}(\Lambda,Y)\defeq -\twist{Y}{1}{3}^2-\twist{Y}{3}{5}^2-\twist{Y}{1}{5}^2 = 0.\] Thus \(\twist{Y}{1}{3},\,\twist{Y}{3}{5},\) and \(\twist{Y}{1}{5}\) are zero. Complimentary Mandelstam variables are equal by Proposition~\ref{prop:S orth}. Meaning that \(S_{\{2,\,4,\,6\}}(\Lambda,Y)\) must also vanish, and thus
\(\twist{Y}{2}{4},\,\twist{Y}{4}{6},\) and \(\twist{Y}{2}{6}\) are zero as well. Furthermore, by Proposition~\ref{prop:S orth},
\begin{align*}
S_{\{3,\,5\}}(\Lambda,Y) =& S_{\{1,\,2,\,4,\,6\}}(\Lambda,Y)\\
    \twist{Y}{3}{5}^2 =&\twist{Y}{1}{2}^2+\twist{Y}{1}{4}^2 +\twist{Y}{1}{6}^2-\twist{Y}{2}{4}^2-\twist{Y}{2}{6}^2-\twist{Y}{4}{6}^2\\
    0 =&\twist{Y}{1}{2}^2+\twist{Y}{1}{4}^2 +\twist{Y}{1}{6}^2
\end{align*}
which is impossible as consecutive twistors are non-zero for \(C\) in the interior the positive orthogonal Grassmannian by Proposition~\ref{prop:Consecutive Twistors}. We have reached a contradiction.
\end{proof}
\begin{proof}[Proof of Lemma~\ref{triag}]
 We need to show  that for a given \([C,\Lambda, Y] \in \mathcal U_k^\geq\) (meaning \(Y\) is in \(\mathcal{O}_3 (\Lambda)\)), with \(C\) in the interior of the top cell of \(OG_{\geq 0 } \left(3,\, 6\right)\), we have  \(C = \Delta_+(\Lambda, Y)\).

    By the previous discussion, \(C\) is either equivalent to \(\Delta_+(\Lambda, Y)\) or to \(\Delta_-(\Lambda, Y)\). We will show that only \(\Delta_+(\Lambda, Y)\) is positive:
    Consider the minor $\{2,4,6\}$ of \(\Delta_\pm(\Lambda, Y)\). By the Pl\"ucker relations, it is precisely \(\mp \twist{Y}{1}{2}\,S_{\{2,4,6\}}(\Lambda,Y) = \mp \twist{Y}{1}{2}\,S_{\{1,3,5\}}(\Lambda,Y)\)
    by Proposition~\ref{prop:S orth}.

    Now consider the minor $\{1,3,5\}$ of \(\Delta_\pm(\Lambda, Y)\), it is precisely \(-\twist{Y}{1}{2}\,S_{\{1,3,5\}}(\Lambda,Y).\) 
    By the previous claim, \(S_{1,3,5}<0\), and recall we have \(\twist{Y}{1}{2}=1\). As 
    \(
    \frac{\mathrm{det} (\Delta_-(\Lambda,Y)^{\{1,\,3,\,5\}})}{\mathrm{det} (\Delta_-(\Lambda,Y)^{\{2,\,4,\,6\}})} = -1
    ,\) 
    we can conclude \(\Delta_-(\Lambda,Y)\notin \OGnon{3}{6}\) by Theorem~\ref{thm:comp pluckers}. Therefore we must have \(C = \Delta_+ (\Lambda, Y)\).
\end{proof}
\subsection{Proof of Proposition~\ref{prop:sep k=3}}
\label{apx:sep k=3}
\begin{proof}[Proof of Proposition~\ref{prop:sep k=3}]
By rotational symmetry and the \(\mathrm{Cyc}\) move, it is enough to prove for one internal vertex. To show \(\evalat{\frac{\partial}{\partial v}S}{\Gamma_0}\) is positive, it is enough to prove that \(\evalat{\frac{\partial}{\partial v^\omega}S}{\Gamma_0}\) is positive on \(\Gamma_0\) for some \(\omega\) by Observation~\ref{obs:vertex derivative orientation}. Recall that by Corollary~\ref{coro: codim 1 boudnary classification} we have exactly one boundary graph of codimension \(1\) for each vertex in a BCFW graph, and that the top cell of \(\OGnon{3}{6}\) is BCFW.

We can write the graph as \(\Gamma = \mathrm{Rot}_{1,6}\mathrm{Rot}_{4,5}\mathrm{Rot}_{2,3}\mathrm{Inc}_{1,2}^3(O)\),
Which gives rise to the following parameterization of cell, with the boundary we are interested in corresponding to \(\gamma=0\):
\begin{align*}
C(\alpha,\beta,\gamma)&=\mathrm{Rot}_{1,6}(\gamma)\mathrm{Rot}_{4,5}(\beta)\mathrm{Rot}_{2,3}(\alpha)\begin{pmatrix}
1&1&0&0&0&0\\ 
0&0&1&1&0&0\\ 
0&0&0&0&1&1\\ 
\end{pmatrix}\\
=&\begin{pmatrix}
\cosh{\gamma}&\cosh{\alpha}&\sinh{\alpha}&0&0&-\sinh{\gamma}\\ 
0&\sinh{\alpha}&\cosh{\alpha}&\cosh{\beta}&\sinh{\beta}&0\\ 
-\sinh{\gamma}&0&0&\sinh{\beta}&\cosh{\beta}&\cosh{\gamma}\\ 
\end{pmatrix}
\end{align*}
By Theorem~\ref{thm:comp pluckers} the Pl\"ucker coordinates of \(C\) corresponding to complimentary sets of indices are equal. It is easy to see that the only Pl\"ucker coordinate of \(C\) which has first order dependence on $\gamma$ near $\gamma=0$ is  \(\Delta_{\{1,2,3\}}(C) = \Delta_{\{4,5,6\}}(C)=\sinh(\gamma)\).
Let \([C,\Lambda,Y]\in\mathcal U_3^\geq \). Note that the vertex-separator equals
\(
S(\Lambda, Y) = S_{\{1,2,3\}}(\Lambda,Y) = \twist{Y}{1}{2}^2-\twist{Y}{1}{3}^2+\twist{Y}{2}{3}^2,
\)
which is trivially positively-radical.
We will consider the first order approximation of $S$ near \(\gamma=0\). By the above, it is equivalent to study the first order approximation of $S$ near \(\Delta_{\{1,2,3\}}(C)=0\).
Expanding the twistors using the Cauchy-Binet formula as seen in the proof of Proposition~\ref{prop:Consecutive Twistors}, and using Theorem~\ref{thm:comp pluckers} we obtain 
\begin{align*}
    \twist{Y}{1}{2} =& \Delta_{\{1,2,6\}}(C)\Delta_{\{1,2,3,4,5\}}(\Lambda)+\Delta_{\{1,2,5\}}(C)\Delta_{\{1,2,3,4,6\}}(\Lambda)\\
    &+\Delta_{\{1,2,4\}}(C)\Delta_{\{1,2,3,5,6\}}(\Lambda)+\Delta_{\{1,2,3\}}(C)\Delta_{\{1,2,4,5,6\}}(\Lambda)\\
    \twist{Y}{1}{3} =& -\Delta_{\{1,3,6\}}(C)\Delta_{\{1,2,3,4,5\}}(\Lambda)-\Delta_{\{1,3,5\}}(C)\Delta_{\{1,2,3,4,6\}}(\Lambda)\\
    &-\Delta_{\{1,3,4\}}(C)\Delta_{\{1,2,3,5,6\}}(\Lambda)+\Delta_{\{1,2,3\}}(C)\Delta_{\{1,3,4,5,6\}}(\Lambda)\\
    \twist{Y}{2}{3} =& \Delta_{\{1,4,5\}}(C)\Delta_{\{1,2,3,4,5\}}(\Lambda)+\Delta_{\{1,4,6\}}(C)\Delta_{\{1,2,3,4,6\}}(\Lambda)\\
    &+\Delta_{\{1,5,6\}}(C)\Delta_{\{1,2,3,5,6\}}(\Lambda)+\Delta_{\{1,2,3\}}(C)\Delta_{\{2,3,4,5,6\}}(\Lambda)
\end{align*}
Thus, the linear coefficient of \(\Delta_{\{1,2,3\}}(C)\) in the expansion of $S_{\{1,2,3\}}(\Lambda,Y)$ is
\begin{align*}
    & 2\Delta_{\{1,2,4,5,6\}}(\Lambda)\bigg(\Delta_{\{1,2,6\}}(C)\Delta_{\{1,2,3,4,5\}}(\Lambda)\\
    &\quad\quad+\Delta_{\{1,2,5\}}(C)\Delta_{\{1,2,3,4,6\}}(\Lambda)+\Delta_{\{1,2,4\}}(C)\Delta_{\{1,2,3,5,6\}}(\Lambda)\bigg)\\
    &+2\Delta_{\{1,3,4,5,6\}}(\Lambda) \bigg(\Delta_{\{1,3,6\}}(C)\Delta_{\{1,2,3,4,5\}}(\Lambda)\\
    &\quad\quad+\Delta_{\{1,3,5\}}(C)\Delta_{\{1,2,3,4,6\}}(\Lambda)+\Delta_{\{1,3,4\}}(C)\Delta_{\{1,2,3,5,6\}}(\Lambda)\bigg)\\
    &+2\Delta_{\{2,3,4,5,6\}}(\Lambda)\bigg( \Delta_{\{1,4,5\}}(C)\Delta_{\{1,2,3,4,5\}}(\Lambda)\\
    &\quad\quad+\Delta_{\{1,4,6\}}(C)\Delta_{\{1,2,3,4,6\}}(\Lambda)+\Delta_{\{1,5,6\}}(C)\Delta_{\{1,2,3,5,6\}}(\Lambda)\bigg)
\end{align*}
which is positive as \(\Lambda\in \mathrm{Mat}^>_{8\times 6}\), and those Pl\"uckers of \(C\in\OGnon{3}{6}\). Some of the Pl\"uckers do not vanish for all \(\gamma\) (for example \(\Delta_{\{1,2,6\}}(C)\)). 
Thus \(\evalat{\frac{\partial}{\partial v^\omega}S}{\Gamma_0}\) is positive on \(\Gamma_0\), and \(S\) is zero on \(\Gamma_0\). By Observation~\ref{obs:vertex derivative orientation}, \(\evalat{\frac{\partial}{\partial v}S}{\Gamma_0}\) is positive.
    \end{proof}

\subsection{Proof of Proposition~\ref{prop:k=4 sep sign}}
\label{apx:k=4 sep sign}
\begin{proof}[Proof of Proposition~\ref{prop:k=4 sep sign}]
    By Observations~\ref{obs:triplet sections} and~\ref{obs:Y connected smooth submanifold}, Lemma~\ref{lem:sign is const} applies. Thus, it is enough to show for one point \(y\in \mathbf Y^4\) and one \(\mathbf u \in T_{y}\mathbf Y^4\setminus T_y \mathbf Y^4_0\). Let us start by fixing \(\Lambda = \{i^{j-1}\}_{i\in[8],j\in[6]}\), which has all positive minors, and \(C(t)\in \OGnon{4}{8}\) by
    \begin{align*}
        C&= \mathrm{Rot}_{8,1}(\alpha)\mathrm{Rot}_{6,1}(\alpha)\mathrm{Rot}_{4,1}(\alpha)\mathrm{Rot}_{2,1}(\alpha)\mathrm{Inc}_{7}\mathrm{Inc}_{3}\mathrm{Rot}_{2,3}(t)\mathrm{Inc}_{3}\mathrm{Inc}_{1}(O)\\
        &=\left(
\begin{array}{cccccccc}
 \frac{5}{3} & \frac{5 \cosh (t)}{3} & \frac{4 \cosh (t)}{3} & -\frac{4 \sinh (t)}{3} & -\frac{5 \sinh (t)}{3} & 0 & 0 & -\frac{4}{3} \\
 0 & \frac{5 \sinh (t)}{3} & \frac{4 \sinh (t)}{3} & -\frac{4 \cosh (t)}{3} & -\frac{5 \cosh (t)}{3} & -\frac{5}{3} & -\frac{4}{3} & 0 \\
 0 & \frac{4}{3} & \frac{5}{3} & \frac{5}{3} & \frac{4}{3} & 0 & 0 & 0 \\
 -\frac{4}{3} & 0 & 0 & 0 & 0 & \frac{4}{3} & \frac{5}{3} & \frac{5}{3} \\
\end{array}
\right),
    \end{align*}
with \(O\) being the empty matrix and \(\alpha=\mathrm{Arcsinh}(\frac{4}{3})\). Notice that \(C(0)\in \Omega_{\Gamma_0}\). Furthermore, \(C(t)\in \Omega_{\Gamma_-}\) for \(t>0\) with \(t\) corresponding to the angle associated to \(v_-\) under the orientation \(\omega\) inherited by the construction of \(C(t)\) on \(\Gamma_-\).
We fix \(y = (\Lambda,\widetilde\Lambda (C(0)))\in Y_0^4\). Let us  also fix \(\mathbf u \in T_{y}\mathbf Y^4\setminus T_y \mathbf Y^4_0\) by the direction defined by increasing \(t\) in \((\Lambda,\widetilde\Lambda (C(t)))\). It is easy to check \(\mathbf u\neq0\). Since \(C(t)\in \Omega_{\Gamma_-}\) for \(t>0\), we have that \(\mathbf u\notin T_y \mathbf Y^4_0\).

    By Observation~\ref{obs:BCFW 4 native} we have that \(v_-\) is \(4\)-native, thus by Proposition~\ref{prop:sep 4-native} we have that \(\evalat{\frac{\partial}{\partial v_-}S_-}{\Gamma_0}\) is positive. By Observation~\ref{obs:vertex derivative orientation}, we have that \(\evalat{\frac{\partial}{\partial v_-^\omega }S_-}{\Gamma_0}\) is positive. By Definition~\ref{def:vertex derivative}, we have that \(\evalat{\frac{\partial}{\partial v_-^\omega }S_-}{\Gamma_0}\) is precisely \(\mathrm{d}(S_-\circ \widetilde\Lambda)(\mathbf u )\). We conclude that \(\mathrm{d}(S_-\circ \widetilde\Lambda)(\mathbf u )>0\).

    Let us now calculate \(\mathrm{d}(S_+\circ \widetilde\Lambda)(\mathbf u )= \evalat{\frac{\partial}{\partial t}S_+(\Lambda,\widetilde \Lambda(C(t))}{t=0}\). Since \(\mathrm{Arc}_{4,4}\) is an ancestry-sequence for \(v_+\) in \(\Gamma_+\), by Definition~\ref{def:native vertex}, we have that \(\mathrm{Arc_{4,4}}(S_{\{2,3,4\}})\) is a vertex-separator for \(v_+\). Let us fix \(S_+ = \mathrm{Arc_{4,4}}(S_{\{2,3,4\}})\). See Example~\ref{exm:s234} for an explicit expression for \(\mathrm{Arc_{4,4}}(S_{\{2,3,4\}})\).
    
    It is easy now to check that \(\cosh(\alpha_{4,4,4,2})(\Lambda, \widetilde \Lambda (C(t)) = \frac{5}{3}+\frac{164 t}{213}+O\left(t^2\right)\), and thus 
    \[S_+(\Lambda, \widetilde \Lambda (C(t)) = -10,617,209,487,360,000 \,t+O\left(t^2\right).\]
    Meaning that \(\mathrm{d}(S_+\circ \widetilde\Lambda)(\mathbf u )<0\). Therefore we have that \(\frac{\mathrm d S_+(\mathbf u)}{\mathrm d S_-(\mathbf u)}<0\), completing the proof.
\end{proof}

\subsection{Examples}
\label{apx:calc exm}
\begin{exm}
\label{exm:s234}
    Let us calculate \(\mathrm{Arc_{4,4}}(S_{\{2,3,4\}})\):\\
    First, recall that by Definition~\ref{def:vec angle exp} we have that 
    \begin{align*}
        &\mathbf{v}_{4,4,4 }^i=
    \begin{cases}
        S_{\{5,6,7\}}&i=4\\
        \langle 4\,7\rangle\langle 5\,7\rangle-\langle 4\,5\rangle\langle 5\,6\rangle+\langle 6\,7\rangle\sqrt{S_{\{4,5,6,7\}}}&i=5\\
        \langle 4\,5\rangle\langle 5\,6\rangle-\langle 4\,7\rangle\langle 6\,7\rangle-\langle 5\,7\rangle\sqrt{S_{\{4,5,6,7\}}}&i=6\\
        \langle 4\,6\rangle\langle 6\,7\rangle-\langle 4\,5\rangle\langle 5\,7\rangle+\langle 5\,6\rangle\sqrt{S_{\{4,5,6,7\}}}&i=7\\
        0&\text{otherwise}
    \end{cases}\\
    &\alpha_{4,4,4,1}= \mathrm{arccosh}\left(\frac{\langle 4\,7\rangle\langle 5\,7\rangle-\langle 4\,5\rangle\langle 5\,6\rangle+\langle 6\,7\rangle\sqrt{S_{\{4,5,6,7\}}}}{S_{\{5,6,7\}}}\right)\\
    &\alpha_{4,4,4,2}= \mathrm{arccosh}\left(\frac{ \langle 4\,5\rangle\langle 5\,6\rangle-\langle 4\,7\rangle\langle 6\,7\rangle-\langle 5\,7\rangle\sqrt{S_{\{4,5,6,7\}}}}{S_{\{5,6,7\}}\sqrt{\left(\frac{\langle 4\,7\rangle\langle 5\,7\rangle-\langle 4\,5\rangle\langle 5\,6\rangle+\langle 6\,7\rangle\sqrt{S_{\{4,5,6,7\}}}}{S_{\{5,6,7\}}}\right)^2-1}}\right).
    \end{align*}

    Recall that \(
    \mathrm{Arc}_{4,4}(S_{\{2,3,4\}})=  \mathrm{Rot}_{6,7}(\alpha_{4,4,4,2})\mathrm{Rot}_{5,6}(\alpha_{4,4,4,4,1})\mathrm{Inc}_{4}(S_{\{2,3,4\}}).
    \)
    By Propositions~\ref{s rot} and~\ref{s inc}, we have that
    \(
    \mathrm{Arc}_{4,4}(S_{\{2,3,4\}})=  \mathrm{Rot}_{6,7}(\alpha_{4,4,4,2})(S_{\{2,3,4,5,6\}}).
    \)
    By Definition~\ref{def mand}, we have that
    \begin{align*}
        S_{\{2,3,4,5,6\}} = &\, \langle2\,3\rangle^2-\langle2\,4\rangle^2+\langle2\,5\rangle^2-\langle2\,6\rangle^2+\langle3\,4\rangle^2\\&-\langle3\,5\rangle^2+\langle3\,6\rangle^2
        +\langle4\,5\rangle^2-\langle4\,6\rangle^2
        +\langle5\,6\rangle^2\\
        = &\,S_{\{2,3,4,5\}}-\langle2\,6\rangle^2+\langle3\,6\rangle^2-\langle4\,6\rangle^2+\langle5\,6\rangle^2.
    \end{align*}
    By Proposition~\ref{prop:prom rules}, we therefore get
    \begin{align*}
        \mathrm{Arc}_{4,4}(S_{\{2,3,4\}}) = &\,S_{\{2,3,4,5\}}+\cosh(\alpha_{4,4,4,2})^2\left(-\langle2\,6\rangle^2+\langle3\,6\rangle^2-\langle4\,6\rangle^2+\langle5\,6\rangle^2\right)\\
        +&\sinh(\alpha_{4,4,4,2})^2\left(-\langle2\,7\rangle^2+\langle3\,7\rangle^2-\langle4\,7\rangle^2+\langle5\,7\rangle^2\right)\\
        +&2\cosh(\alpha_{4,4,4,2})\sinh(\alpha_{4,4,4,2})\left(-\langle2\,6\rangle\langle2\,7\rangle+\langle3\,6\rangle\langle3\,7\rangle-\langle4\,6\rangle\langle4\,7\rangle+\langle5\,6\rangle\langle5\,7\rangle\right).
    \end{align*}
\end{exm}
    \begin{exm}
\label{exm:twistor sol}
    Let us calculate the twistor-solution \(\mathcal F(\mathrm{Arc}_{4,4}\mathrm{Arc}_{4,2}\mathrm{Arc}_{3,2}\mathrm{Arc}_{2,1}(O))\):\\
    By Proposition~\ref{prop:sol induction prop}, we have that \[
    \mathcal F(\mathrm{Arc}_{4,4}\mathrm{Arc}_{4,2}\mathrm{Arc}_{3,2}\mathrm{Arc}_{2,1}(O))=\mathrm{Arc}_{4,4}\mathcal F(\mathrm{Arc}_{4,2}\mathrm{Arc}_{3,2}\mathrm{Arc}_{2,1}(O)).
    \]
    By Lemma~\ref{triag}, we have that
    \[
     \mathcal F(\mathrm{Arc}_{4,2}\mathrm{Arc}_{3,2}\mathrm{Arc}_{2,1}(O)) =\Delta_+=\begin{pmatrix}
\langle3\,5\rangle& \langle4\,6\rangle&-\langle1\,5\rangle&- \langle2\,5\rangle&\langle1\,3\rangle& \langle2\,4\rangle\\
\langle1\,2\rangle&0&-\langle2\,3\rangle&\langle2\,4\rangle&-\langle2\,5\rangle&\langle2\,6\rangle\\
0&-\langle1\,2\rangle&\langle1\,3\rangle&-\langle1\,4\rangle&\langle1\,5\rangle&-\langle1\,6\rangle\\
\end{pmatrix}.
    \]
    By Definitions~\ref{def:arc}  we have that \( \mathrm{Arc}_{4,4}=  \mathrm{Rot}_{6,7}(\alpha_{4,4,4,2})\mathrm{Rot}_{5,6}(\alpha_{4,4,4,1})\mathrm{Inc}_{4}\), where \(\alpha_{4,4,4,i}\) are as we found in Example 
   ~\ref{exm:s234}.

We start by applying the \(\mathrm{Inc}_4\) move to get
\[
\mathrm{Inc}_{4}\Delta_+=\begin{pmatrix}
\langle3\,7\rangle& \langle6\,8\rangle&-\langle1\,7\rangle&0&0&- \langle2\,7\rangle&\langle1\,3\rangle& \langle2\,6\rangle\\
\langle1\,2\rangle&0&-\langle2\,3\rangle&0&0&\langle2\,6\rangle&-\langle2\,7\rangle&\langle2\,8\rangle\\
0&-\langle1\,2\rangle&\langle1\,3\rangle&0&0&-\langle1\,6\rangle&\langle1\,7\rangle&-\langle1\,8\rangle\\
0&0&0&1&1&0&0&0
\end{pmatrix}.
\]
Applying \(\mathrm{Rot}_{5,6}(\alpha_{4,4,4,1})\) we get (we present it transposed and contract the indices of the abstract twistors linearly to better fit in the page)
\begin{align*}
&\left(\mathrm{Rot}_{5,6}(\alpha_{4,4,4,1})\mathrm{Inc}_{4}\Delta_+\right)^\intercal=\\
&\qquad\qquad\left(
\begin{array}{cccc}
 -\langle 3\,7\rangle  & \langle 1\,2\rangle  & 0 & 0 \\
 \left\langle 6 c_1+5 s_1,8\right\rangle  & 0 & -\langle 1\,2\rangle  & 0 \\
 \langle 1\,7\rangle  & -\langle 2\,3\rangle  & \langle 1\,3\rangle  & 0 \\
 0 & 0 & 0 & 1 \\
-\langle 2\,8\rangle s_1  & \left\langle 2,6 c_1 s_1+5 s_1^2\right\rangle  &
   \left\langle 1,-6 c_1 s_1-5 s_1^2\right\rangle  & c_1 \\
 -\langle 2\,8\rangle c_1  & \left\langle 2,5 c_1 s_1+6 c_1^2\right\rangle  &
   \left\langle 1,-5 c_1 s_1-6 c_1^2\right\rangle  & s_1 \\
 -\langle 1\,3\rangle  & -\langle 2\,7\rangle  & \langle 1\,7\rangle  & 0 \\
 \left\langle 2,6 c_1+5 s_1\right\rangle  & \langle 2\,8\rangle  & -\langle
   1\,8\rangle  & 0 \\
\end{array}
\right),
\end{align*}
where \(s_i\defeq \sinh(\alpha_{4,4,4,i)}\) and \(c_i\defeq \cosh(\alpha_{4,4,4,i})\) for \(i=1,2\). Applying \(\mathrm{Rot}_{6,7}(\alpha_{4,4,4,2})\), we get

\begin{align*}
&\left(\mathrm{Rot}_{6,7}(\alpha_{4,4,4,2})\mathrm{Rot}_{5,6}(\alpha_{4,4,4,1})\mathrm{Inc}_{4}\Delta_+\right)^\intercal=\\
&   \left(
\begin{array}{cccc}
 \left\langle 3,-7 c_2-6 s_2\right\rangle  & \langle 1\,2\rangle  & 0 & 0 \\
 \left\langle 7 {\tilde c_1} s_2+6 {\tilde c_1} c_2+5 {\tilde s_1},8\right\rangle  & 0 & -\langle
   1\,2\rangle  & 0 \\
 \left\langle 1,7 c_2+6 s_2\right\rangle  & -\langle 2\,3\rangle  & \langle
   1\,3\rangle  & 0 \\
 0 & 0 & 0 & 1 \\
  -\langle 2\,8\rangle {\tilde s_1} & \left\langle 2,7 {\tilde c_1} s_2+6 {\tilde c_1} c_2+5
   {\tilde s_1}\right\rangle {\tilde s_1}   &  -\left\langle 1,7 {\tilde c_1} s_2+6 {\tilde c_1} c_2+5
   {\tilde s_1}\right\rangle {\tilde s_1} & {\tilde c_1} \\
 -\langle 2\,8\rangle {\tilde c_1} c_2 - \langle 1\,3\rangle s_2  & \left\langle 2,\fracless{7 c_2
   {\tilde c_1}^2 s_2+5 c_2 {\tilde c_1} {\tilde s_1}}{-7 c_2 s_2+6 c_2^2 {\tilde c_1}^2-6 s_2^2}\right\rangle  &
   \left\langle 1,\fracless{-7 c_2 {\tilde c_1}^2 s_2-5 c_2 {\tilde c_1} {\tilde s_1}}{+7 c_2 s_2-6 c_2^2 {\tilde c_1}^2+6
   s_2^2}\right\rangle  & c_2 {\tilde s_1} \\
 -\langle 1\,3\rangle c_2 -{\tilde c_1} s_2 \langle 2\,8\rangle  & \left\langle 2,\fracless{7
   {\tilde c_1}^2 s_2^2+6 c_2 {\tilde c_1}^2 s_2}{+5 {\tilde c_1} {\tilde s_1} s_2-6 c_2 s_2-7 c_2^2}\right\rangle  &
   \left\langle 1,\fracless{-7 {\tilde c_1}^2 s_2^2-6 c_2 {\tilde c_1}^2 s_2}{-5 {\tilde c_1} {\tilde s_1} s_2+6 c_2 s_2+7
   c_2^2}\right\rangle  & {\tilde s_1} s_2 \\
 \left\langle 2,7 {\tilde c_1} s_2+6 {\tilde c_1} c_2+5 {\tilde s_1}\right\rangle  & \langle 2\,8\rangle  &
   -\langle 1\,8\rangle  & 0 \\
\end{array}
\right),
\end{align*}
where \(\tilde c_1 = (\mathrm{Rot}_{6,7}(\alpha_{4,4,4,2}))(c_1)\) and \(\tilde s_1 =( \mathrm{Rot}_{6,7}(\alpha_{4,4,4,2}))(s_1)\), that is,
\[
   \tilde c_1 =  \frac{\langle 4,\, 7c_2 +  6s_2\rangle\langle 5,\, 7c_2 +  6s_2\rangle-\langle 4\,5\rangle\langle 5,\, 6c_2+ 7s_1\rangle+\langle 6\,7\rangle\sqrt{S_{\{4,5,6,7\}}}}{S_{\{5,6,7\}}}\] and \(\tilde s_1 = \sqrt{\tilde c_1^2-1}\), where Proposition~\ref{s rot} justifies not changing the Mandelstam variables.
\end{exm}

\end{appendices}

\end{document}